\newtheorem{remark}{Remark}
\newtheorem{lem}{Lemma}
\newtheorem{result}{Result}
\newtheorem{conjecture}{Conjecture}
\newtheorem{openproblem}{Open problem}
\newtheorem{defn}{Definition} 
\newtheorem{fact}{Fact}
\newtheorem{theorem}{Theorem}
\journal{Theoretical Computer Science}
\begin{document}

\begin{frontmatter}



\title{A Hierarchy of Constant Communication Complexity}


\author[1]{Andris Ambainis }
\ead{andris.ambainis@lu.lv}
\author[2]{Hartmut Klauck}
\ead{hklauck@gmail.com}
\author[1]{Debbie Lim\corref{cor1}}
\ead{limhueychih@gmail.com}

\affiliation[1]{Center for Quantum Computing Science, Faculty of Sciences and Technology, University of Latvia, Latvia}
\affiliation[2]{Centre for Quantum Technologies, Singapore}
\cortext[cor1]{Corresponding author}

\begin{abstract}
Similarly to the Chomsky hierarchy, we offer a classification of communication complexity measures such that these measures are organized into equivalence classes. Different from previous attempts of this endeavor, we consider two communication complexity measures as equivalent, if, when one is constant, then the other is constant as well, and vice versa. Most previous considerations of similar topics have been using polylogarithmic input length as a defining characteristic of equivalence. In this paper, two measures ${\cal C}_1, {\cal C}_2$ are \emph{constant-equivalent}, if and only if for all total Boolean (families of) functions $f:\{0, 1\}^n\times\{0, 1\}^n\rightarrow \{0, 1\}$ we have ${\cal C}_1(f)=O(1)$ if and only if ${\cal C}_2(f)=O(1)$. We identify five equivalence classes according to the above equivalence relation. Interestingly, the classification is counter-intuitive in that powerful models of communication are grouped with weak ones, and seemingly weaker models end up on the top of the hierarchy.
\end{abstract}

\begin{keyword}
Communication complexity 



\end{keyword}

\end{frontmatter}


\section{Introduction}
Communication complexity, as a subfield of theoretical computer science emerged from the seminal works of~\cite{yao1979some, babai1992multiparty, babai1986complexity, yao1983lower} more than four decades ago. Several other early works include Refs.~\cite{orlitsky1988communication, lovasz1989communication, lengauer1990vlsi}. In this field, we are able to show lower bound for explicit problems, with a plethora of applications to other computational models such as finite automata, Turing machines, decision trees, ordered binary decision diagrams, VLSI chips, streaming algorithms, networks of threshold gates and data structures~\cite{halldorsson2012streaming, babai1999superpolynomial, ablayev1996lower,babai1992multiparty, nisan1993communication, groger1991linear, chandra1983multi, alon1988meanders, babai1990lower, roychowdhury1994lower, goldmann1994communication, siu1995discrete, razborov1990applications, miltersen1994lower}. 

In the theory of formal languages, the Chomsky hierarchy~\cite{chomsky1956three} is a central concept. This hierarchy classifies formal languages according to the type of computational model that can recognize them. The levels of the Chomsky hierarchy can be labelled by a machine model that is able to recognize the languages from the corresponding level, i.e., Finite Automata, Context-Free Grammars, Context-Sensitive Grammars, and Turing Machines (acting as the most powerful formal computation model). The Chomsky hierarchy also classifies formal languages into one of its levels. For instance, the language $\{0^n1^n\}$ (understood as a family over all $n$) is in the class of context-free languages, but not the class of regular languages (as defined by DFA's). Nondeterministic and deterministic finite automata on the other hand define the same class of regular languages, even though their state-complexity can be exponentially far apart.

In this paper, we attempt a similar classification for communication complexity. Technically, communication complexity is usually established for a \emph{family} of functions $\{0, 1\}^n \times \{0, 1\}^n \rightarrow \{0, 1\}$  with all possible input lengths. The complexity of such a family is \emph{constant} if the needed communication does not increase with the input length. A trivial example of this is the decision if the first bits of Alice's and Bob's input are both 1, which a deterministic protocol can decide with 1 bit of communication. A less trivial problem is the Equality function, in which Alice and Bob each receive a string of $n$ bits and their task is to decide whether the two strings are the same. While a deterministic protocol cannot do better than communicating the whole length of a string, a public coin protocol can solve this problem with a single bit of communication and satisfactory error probability.

Prior to our work, there have been several studies on establishing hierarchies in communication complexity. In a seminal paper, Babai, Frankl, and Simon~\cite{babai1986complexity} proposed polylogarithmic complexity as the criteria for efficiency and used it to define communication classes analogous to the classical computational complexity classes. Their definition of communication classes provides a structured framework for comparing the powers of various communication models. While this work has undoubtedly led to the development in proving separations between different communication classes, it is also crucial to also consider communication problems that have uniformly bounded complexity. In fact, the idea of constant-cost communication classes began with the works of Hambardzumyan et al.~\cite{hambardzumyan2023dimension} and Harms et al.~\cite{harms2022randomized} studying constant-cost communication problems and the notion of ``classes" was used in prior works such as Refs.~\cite{harms2024randomized} and~\cite{fang2024no}. Other works on constant-cost communication include~\cite{linial2007complexity, linial2009learning, bhrushundi2014property, hatami2020sign, harms2019universal,  hatami2022lower, esperet2022sketching,  cheung2023separation, ahmed2023communication}. The recent survey paper of Hatami and Hatami~\cite{hatami2024structure} studies the constant-cost analogues of the communication classes by Babai, Frankl, and Simon~\cite{babai1986complexity}, where the criterion of effectiveness is a $O(1)$ complexity, independent of the input size $n$. Another related work is that of G{\"o}{\"o}s, Pitassi and Watson~\cite{landscape}. In their work, the authors outlined the relationships between a wide range of models for two-party communication, clarifying which are stronger, weaker, or incomparable. By establishing new separations and inclusions, they resolve several long-standing open questions and give a nearly complete structural picture of the complexity landscape in this area.

\paragraph{Main contribution} We establish a hierarchy of increasingly powerful classes of communication complexity, based on what these computational models can do with constant communication. While our investigation is based on the constant-equivalence of different measures of communication complexity (for instance nondeterministic and deterministic communication are constant-equivalent),
this also leads to a hierarchy of (families of) communication problems. Namely, the problems that can be solved by an equivalence class of communication complexity measure. For instance, the Equality problem is in the class of cheap public coin protocols (see below), while the Greater-Than problem is in the class of low sign-rank protocols~\cite{hatami2020sign}.

We identify the following levels of our hierarchy, named here by the simplest way to define them (more details in Figure~\ref{fig:summary_of-results}):
\begin{itemize}
    \item Class 1: Only a constant number of distinct rows in the communication matrix.
    \item Class 2: The $\gamma_2$ norm of the communication matrix is constant.
    \item Class 3: The approximate $\gamma_2$ norm, $\gamma_2^\alpha$, of the communication matrix is constant which implies that information complexity and also the commonly used randomized/quantum communication complexity lower bounds such as the rectangle bound (with error) and the discrepancy bound are also constant.
    \item Class 4: The sign-rank and the (truly) unbounded error communication complexity are constant. 
    \item Class 5: Measures defined in terms of product-distribution communication complexity are constant.
\end{itemize}

We provide equivalence results for all the measures in the respective classes, and separating functions for the different levels of the hierarchy. While most of this work is collecting existing results, we have the following new result. 
\begin{result}
Let $f$ be a total Boolean function and denote by $R^{\operatorname{pub}}_{\epsilon}(f)$ the public-coin randomized communication complexity with oblivious and exact error (see Section~\ref{sec:class_2}). Then we have $$\log \gamma_2 \leq R^{\operatorname{pub}}_{=\epsilon}(f) \leq O((\gamma_2(f))^2).$$
\end{result}

We summarize our results in Figure~\ref{fig:summary_of-results}. 
\begin{figure*}[t]
\centering
\begin{tikzpicture}[node distance=0pt]
  \node[draw, rectangle, minimum width=6cm, align=center] (r1) {\textbf{\underline{Class 5}} \\ $D^{[]}_\epsilon(f)$,  $D^{A \rightarrow B, []}_\epsilon(f)$, $R^{A \rightarrow B, []}_\epsilon(f)$, $Q^{A \rightarrow B, []}_\epsilon(f)$, $\operatorname{rec}^{A \rightarrow B, []}_\epsilon(f)$, $\operatorname{sub}^{A \rightarrow B, []}_{\mathcal Y}(f, \epsilon)$,  $VC(f)$, \\
  $sq(f)$, $\operatorname{disc}^{[]}(f)$, $IC^{[]}(\pi)$, $IC^{[]}_\epsilon(f)$, $IC^{\operatorname{ext},[]}(\pi)$, $IC^{\operatorname{ext},[]}_\epsilon(f)$, $pIC^{\infty,[]}_\epsilon(f)$, $AC^{[]}_\epsilon(f)$, \\
  $\operatorname{rdisc}^{[]}_\epsilon(f)$, $\operatorname{ardisc}^{[]}_\epsilon(f)$, $\operatorname{prt}^{[]}_\epsilon(f)$, $\operatorname{prt}^{+, []}_\epsilon(f)$, $\bar{\operatorname{prt}}^{[]}_\epsilon(f)$, $\operatorname{wprt}^{[]}_\epsilon(f)$, $\operatorname{wreg}_\epsilon(f)$};

  \node[draw, rectangle, below=of r1, minimum width=6cm, align=center] (r2) {\textbf{\underline{Class 4}} \\ $UPP(f)$, $\operatorname{signrank}(f)$};

  \node[draw, rectangle, below=of r2, minimum width=6cm, align=center] (r3) {\textbf{\underline{Class 3}} \\ $R^{A \rightarrow B, \operatorname{pub}}_\epsilon(f)$, $ \gamma^\infty_2(f)$, $\gamma^\alpha_2(f)$, $\operatorname{disc}^{\mu}(f)$, $\operatorname{disc}(f)$, $mc(f)$, $PP^{\operatorname{pub}}(f)$, 
  $\operatorname{rec}_\epsilon(f)$, $\widetilde{\operatorname{rec}}_\epsilon(f)$,\\ $\operatorname{srec}_\epsilon(f)$, $\widetilde{\operatorname{srec}}_\epsilon(f)$, $\bar{\operatorname{prt}}_\epsilon(f)$,
  $\operatorname{prt}_\epsilon(f)$, $\operatorname{sub^{A \rightarrow B}}_{\mathcal Y, \epsilon}(f)$, 
  $D^\mu_\epsilon(f)$, $MA(f)$, $SBP(f)$, $\operatorname{ment}_\epsilon(f)$, \\
  $\operatorname{rcment}_{\epsilon, \epsilon}(f)$, $Q^*_\epsilon(f)$, $QIC_{D, \epsilon}(f)$, $QIC_\epsilon(f)$, $AQCC_\epsilon(f)$, $ R^{\operatorname{pub}}_\epsilon(f)$, $IC^{\mu}_\epsilon(f)$, $IC_\epsilon(f)$, 
  $IC_{D., \epsilon}(f)$, \\
  $IC^{\operatorname{ext}}_\epsilon(f)$, $IC^{\operatorname{ext}}_{D, \epsilon}(f)$, $AC^{\mu}_\epsilon(f)$,  
  $\operatorname{wreg}^{\mu}(f)$, $\operatorname{wprt}^\mu_\epsilon(f)$, $\operatorname{prt}^{+, \mu}_\epsilon(f)$, $\bar{\operatorname{prt}}^{\mu}_\epsilon(f)$,\\
  $\operatorname{prt}^{\mu}_\epsilon(f)$, $\operatorname{rdisc}^\mu_\epsilon(f)$, 
  $\operatorname{ardisc}^{\mu}_\epsilon(f)$, $\operatorname{pprt}^\mu_\epsilon(f)$, $\operatorname{pprt}_\epsilon(f)$, $pIC^{\infty, \mu}_\epsilon(f)$, 
  $pIC^{\infty}_\epsilon(f)$, 
  $OMA^{A \rightarrow B}_\epsilon(f)$,\\ 
  $OIP^{A \rightarrow B}_\epsilon(f)$, $OIP^{A \rightarrow B}_{+, \epsilon}(f)$};

  \node[draw, rectangle, below=of r3, minimum width=6cm, align=center] (r4) {\textbf{\underline{Class 2}} \\ $\gamma_2(f)$, $R^{\operatorname{pub}}_{=\epsilon}(f)$, $Q^*_{=\epsilon}(f)$};

  \node[draw, rectangle, below=of r4, minimum width=6cm, align=center] (r5) {\textbf{\underline{Class 1}} \\ $D(f)$, $N(f)$, $N^0(f)$, $N^1(f)$, $C^0(f)$, $C^1(f)$, $C(f)$, $C^*_0(f)$, $C^D(f)$, $C^P(f)$, $R^{\operatorname{priv}}_\epsilon(f)$,\\ $R^{\operatorname{pub}}_0(f)$, $\mathring R_0(f)$, $Q_0(f)$, $Q^*_0(f)$, $Q^c_0(f)$, $\mathring Q_0(f)$, $\operatorname{rank}(f)$, $\operatorname{rank}_+(f)$, $\operatorname{signrank}_+(f)$, \\$\operatorname{rank}_\alpha(f)$, $\operatorname{rank}_{\alpha, +}(f)$, $ \oplus P(f)$, $ZAM_{c, s}(f)$, $AM^{\operatorname{priv}}_{c, s}(f)$, $UAM_{c, s}(f)$, $OIP^{[4]}_{c, s}(f)$, $D^{A \rightarrow B}(f)$, \\$Q^{A \rightarrow B}_0(f)$, $Q^{A \rightarrow B, *}_0(f), LV^{A \rightarrow B}(f)$};
\end{tikzpicture}
\caption{Summary of results. We categorize communication complexity measures in to five increasingly powerful classes.}
\label{fig:summary_of-results}
\end{figure*}


\paragraph{Outline} This paper is organized as follows: In Section~\ref{sec:preliminaries}, we define the communication model, the necessary notations and results that will be used to prove our main theorems. We present our main results in Section~\ref{sec:main_results}, which include graphs illustrating the constant equivalence relationship among all complexity measures in every class, followed by separation results between consecutive classes in Section~\ref{sec:separation}. In Section~\ref{sec:conclusion}, we conclude our work, discuss some open problems and make several conjectures. 

\section{Preliminaries and notations}\label{sec:preliminaries}
We consider Yao's two-party communication complexity model \cite{yao1979some}. Let $\mathcal X, \mathcal Y, \mathcal Z$ be arbitrary finite sets and let $f:\mathcal X \times \mathcal Y \rightarrow  \mathcal Z$ be an arbitrary function. Two players, Alice and Bob, who are assumed to have unlimited computational power, are each given inputs $x \in \mathcal X$ and $y \in \mathcal Y$ respectively. Each player knows only their input and the task is to evaluate $z = f(x, y)$ by alternately send messages to each other. When $f\subseteq \mathcal X \times \mathcal Y \times \mathcal Z$ is a relation, the task translates into determining an element $z \in \mathcal Z$ such that $(x,y,z) \in f$.

We let $X, Y$ denote the random variables from $\mathcal X , \mathcal Y$ respectively. In direct sum/product theorems, we define $f^k: \mathcal X ^k \times \mathcal Y^k \rightarrow \mathcal Z^k$ to be the concatenation of the evaluations $$f^k(x_1, \cdots, x_k, y_1, \cdots, y_k) \coloneq (f(x_1, y_1), \cdots, f(x_k, y_k)).$$ We write $\mu^k$ to denote the distribution on $k$ inputs, where each is sampled according to $\mu$ independently. 

All functions considered in this work are total Boolean functions $f:\{0, 1\}^n\times \{0, 1\}^n\rightarrow \{0,1\}$, unless stated otherwise. Sometimes, we use $f: \{-1, 1\}^n \times \{-1, 1\}^n \rightarrow \{-1, 1\}$ instead. For a function  $f$, let $M_f$ denote its corresponding communication matrix, i.e. $M_{x,y}=f(x,y)$ for all $x,y\in\{0, 1\}^n$. We use $f$ instead of $M_f$ as the argument of the complexity measure when there is no ambiguity. For any total Boolean function $f$, we use $D(f), N^1(f), N^0(f), R^{\operatorname{priv}}_\epsilon(f), R^{\operatorname{pub}}_\epsilon(f)$ to denote the deterministic communication complexity, nondeterministic communication complexity, co-nondeterministic communication complexity, private coin $\epsilon$-error randomized communication complexity and public coin $\epsilon$-error randomized communication complexity.  Readers are referred to Ref.~\cite{KNisan96} for background on classical communication complexity. For quantum communication complexity measures, we write $Q_0(f)$, $Q_\epsilon( f )$ to denote the exact (zero-error) quantum communication complexity and the $\epsilon$-error quantum communication complexity respectively, both without prior entanglement. We denote by $*$ in superscript when players share prior entanglement. For all communication complexity measures, we write $A\rightarrow B$ in superscript to denote the one-way communication model.

To distinguish between constant and non-constant complexity in a non-uniform model of computation like communication complexity,
we need to consider \emph {families} of functions $f_1, f_2, \ldots, f_n, \ldots$ with one function $f_n:\{0, 1\}^n\times \{0, 1\}^n\rightarrow \{0,1\}$ for every input size $n$. For example, the Equality problem is defined by taking a family of functions $EQ_1, EQ_2, \ldots, EQ_n, \ldots$ where $EQ_n(x, y)$ is a function from $\{0, 1\}^n\times \{0, 1\}^n$ to $\{0,1\}$ defined by $EQ_n(x, y)=1$ iff $x=y$. Then, the public-coin randomized communication complexity of the Equality problem is $O(1)$, which means that this problem can be solved with constant communication in the described model, \emph{no matter what the length of the input is}. On the other hand, the Disjointness problem cannot be solved in any model of communication complexity using constant communication, unless \emph{all} problems can be (under reasonable assumptions) solved in that model at constant cost.

We give a quick review of some communication complexity measures, starting with nondeterministic communication complexity. Let $\pi$ be a protocol and $v$ be a node of the protocol tree. The set $R_v$ of all inputs $(x, y)$ that lead $\pi$ to $v$ is a \emph{combinatorial rectangle}, where $R_v = A_v \times B_v$ for $A_v\subseteq \mathcal X$ and $B_v\subseteq \mathcal Y$. For every $z\in\{0,, 1\}$, we say that a combinatorial rectangle $R$ is a $z$-monochromatic rectangle if $f(x, y) = b$ for all $(x, y)\in R$. If $v$ is a leaf with the label $b$, then $R_v$ must be a  $z$-monochromatic rectangle. Hence, the leaves of a deterministic protocol partitions its communication matrix into monochromatic rectangles. The protocol partition number of $f$, denoted as $C^P(f)$, is the minimum number of leaves in a protocol for $f$.  The partition number of $f$, denoted as $C^D(f)$, is the minimum number of monochromatic rectangles in a partition of $\mathcal X\times \mathcal Y$. Moreover, the cover number of $f$, denoted as $C(f)$, is the minimum number of monochromatic rectangles required to cover $\mathcal X\times \mathcal Y$. For any $z\in\{0, 1\}$, $C^z(f)$ is the number of monochromatic rectangles required to cover the $z$-inputs of $f$~\cite{KNisan96}. 

\paragraph{Variants of rank and factorization norm.} Let $M\in\mathbb R^{n\times n}$ be a matrix. Denote by $\operatorname{rank}(M)$ the linear \emph{rank} of $M$  over the field of reals. The \emph{$\alpha$-approximate rank} of $M$, denoted as $\operatorname{rank}_\alpha(M)$, is defined as follows{~\cite[Definition~1]{Gal2021},~\cite{Krause1996}} 
\begin{align*}
    \operatorname{rank}_\alpha(f) \coloneqq \min_{M': 1\leq M_{i, j} \cdot M'_{i, j}\leq \alpha} \operatorname{rank}(M'), 
\end{align*}
where $\alpha \geq 1$. The \emph{nonnegative rank} of a matrix $M\in\mathbb{R}_+^{n\times n}$, denoted as $\operatorname{rank}_+(M)$, is the smallest $r$ such that there exist matrices $A\in\mathbb{R_+}^{n\times r}$ and $B\in\mathbb{R}_+^{r\times n}$ such that $M=AB$. 
The \emph{$\alpha$-approximate nonnegative rank} of $M$, denoted as $\operatorname{rank}_{\alpha, +}(M)$ is given by 
\begin{align*}
    & \operatorname{rank}_{\alpha, +}(M) \\
    & \coloneqq \min \left\{\operatorname{rank}_+(M^\prime): M^\prime \text{ is nonnegative, }\lVert M-M^\prime\rVert_\infty\leq\alpha\right\}.
\end{align*}
In certain cases, it makes sense to consider functions with $\pm1$ outputs, i.e. $f:\mathcal \{0, 1\}^n\times \{0, 1\}^n\rightarrow \{-1, 1\}$~\cite{linial2007complexity}. sign-rank is one of the important analytic notions in communication complexity. The \emph{sign-rank} of a matrix $M\in\{-1, 1\}^{n\times n}$, denoted as $\operatorname{signrank(M)}$, is  the minimal rank of a real matrix whose entries have the same sign pattern as $M$. More formally. 
\begin{align*}
    & \operatorname{signrank}(M) \\
    & \coloneqq \min\{\operatorname{rank}(M'): M'\in\mathbb{R}^{n\times n}, M'_{ij}M_{ij}> 0 \text{ for all } i, j\in[n]\}. 
\end{align*}
The \emph{nonnegative sign-rank} of a matrix $M\in\mathbb{R}_+^{m\times n}$, denoted as $signrank_+(M)$, is the smallest $r$ such that there exists $A\in\mathbb{R}_+^{m\times r}$ and $B\in\mathbb{R}_+^{r\times n}$ such that $AB$ has the same sign pattern as $M$.  
 
The $\gamma_2$ norm is an important matrix norm initially developed in Banach Space theory. Linial and Shraibman~\cite{linial2007lower} introduced this norm in the context of communication complexity and subsequently, this complexity measure and its approximate version found applications in communication complexity and other related fields such as discrepancy theory~\cite{matouvsek2020factorization} and differential privacy~\cite{muthukrishnan2012optimal, edmonds2020power, henzinger2022constant}. Formally, the \emph{$\gamma_2$ norm} of a matrix $M\in\mathbb{R}^{m\times n}$, denoted as $\gamma_2(M)$, is defined as  
\begin{align}\label{eqn:gamma_2}
    \gamma_2(M) = \min_{A, B: AB=M}\lVert A\rVert_r\lVert B\rVert_c,
\end{align}
where $\lVert A\rVert_r$ denotes the maximum $\ell_2$ norm of a row of $A$ and  $\lVert B\rVert_c$ denotes the maximum $\ell_2$ norm of a column of $B$. 
Alternatively, $\gamma_2(M)$ can be defined as 
$$\gamma_2(M) = \max_{u, v: \Vert u\Vert_2 = \Vert v\Vert_2 = 1} \Vert uv^T\circ M\Vert_{tr},$$
where $\circ$ and $\lVert \cdot \rVert_{tr}$ denotes the Hadamard product and trace norm respectively. 

The \emph{$\alpha$-approximate $\gamma_2$ norm} of $M\in\{-1, 1\}^{n\times n}$, where $\alpha\geq 1$, denoted as $\gamma_2^\alpha(M)$, is given by
\begin{align*}
    \gamma_2^\alpha(M) = \min_{\{M':1\leq M_{ij}M'_{ij}\leq \alpha\}} \gamma_2(M').
\end{align*}
In particular, 
\begin{align*}
    \gamma_2^\infty(M) = \min_{\{M':1\leq M_{ij}M'_{ij}\}} \gamma_2(M').
\end{align*}

\paragraph{Information complexity.} Before the notion of information complexity was formally defined, information-theoretic tools have already been used to prove communication complexity bounds. For example, Ablayev~\cite{ablayev1996lower} used information theory to prove a lower bound for the Index problem, Nayak~\cite{Nayakthesis} used it in the quantum setting, and Klauck et al.~\cite{klauck2001interaction}
used it for pointer jumping.  Interactive information complexity has been studied by information theorists starting with Orlitsky’s work in the 1990s~\cite{orlitsky2002worst}. A paper from this line of work by Ma and Ishwar~\cite{ma2011some} gives an alternative way of looking at the interactive information complexity of problems and amortized communication.

 From the perspective of information complexity, communication can be thought of as players wanting to reveal the minimum amount of information about their inputs in order to solve some pre-determined task. The internal information complexity measures the amount of information revealed to each player about the other player's input, whereas, the external information complexity measures the amount of information revealed about the inputs $x, y$ to an external observer who only observes the transcript of the protocol. Intuitively, both of these measures should be lower bounds on the communication complexity of the protocol. 

Given a protocol $\pi$, $T(X, Y)$ denotes the random variable of the transcript on inputs $X, Y$, which are also random variables. The \emph{(internal) information cost} of a protocol $\pi$ over all inputs from $\mathcal X\times \mathcal Y$ is given by
\begin{align*}
    IC^{\mu}(\pi)\coloneqq I(T(X, Y);X|Y) + I(T(X, Y);Y|X), 
\end{align*}
where $I(A;B|C)$ denotes the mutual information between random variables $A, B$, conditioned on $C$. A second measure of the information complexity of a communication protocol is its external information cost. The \emph{external information cost} of a protocol $\pi$ over all inputs from $\mathcal X\times \mathcal Y$ is given by
\begin{align*}
     IC^{\operatorname{ext}}_\mu(\pi) \coloneqq I(XY;T(X, Y)).
\end{align*}
For a function $f$, the external information cost of f with respect to $\mu$ is defined to be the infimum of $IC^{\operatorname{ext}}_\mu(\pi)$ over all protocols $\pi$ that compute $f$ on inputs from $\mathcal X \times \mathcal Y$ with probability larger than 2/3~\cite{BarakBCR10}. Clearly, this implies that the external information cost is always smaller than the communication complexity~. Moreover, the external information cost is always greater than or equal to the internal information cost of a protocol~\cite{braverman2012interactive}.

\section{Simulation results}\label{sec:main_results}
In this section, we state our main results. We divide this section into five subsections, one subsection for every class. Each subsection starts with a graph showing the network of simulation results for the corresponding class, followed by a main theorem and a list of facts and lemmas that are used in the main theorem.

The graphs in the subsequent subsections use the following notations. For any two complexity measures $\mathcal C_1, \mathcal C_2$, we use the arrow $\rightarrow$ to denote the ``$\leq$" relation, i.e.,$\mathcal C_1(f)\rightarrow \mathcal C_2(f)$ denotes $\mathcal C_1(f)\leq O(g(\mathcal C_2(f)))$ for some function $g:\mathbb R\rightarrow \mathbb R$.  Moreover, we use $\leftrightarrow$ between $\mathcal C_1, \mathcal C_2$ to denote that both complexity measures are equivalent, i.e. $\mathcal C_1\leftrightarrow \mathcal C_2$ denotes $\mathcal C_1(f) = \Theta(h(\mathcal C_2(f)))$ for some function $h:\mathbb R\rightarrow \mathbb R$. 

\subsection{Class 1}\label{sec:useful_1}
This class relates the constant equivalence between deterministic communication complexity, nondeterministic communication complexity, several variants of randomized communication complexity, quantum communication complexity, variants of Arthur-Merlin complexities, rank, parity complexity and  sampling complexity. The constant equivalence relationship between complexity measures in Class 1 is summarized in Figure~\ref{fig:C1}.
\begin{figure*}[!htbp]
\centering
\begin{tikzpicture}[->,>=Stealth,auto,node distance=3cm, thick]
\node (Rpriv) {$R^{\operatorname{priv}}_{\epsilon}(f)$};
\node (rankalphaPlus) [right of=Rpriv, xshift=1cm] {$\operatorname{rank}_{\alpha, +}(f)$};
\node (rankalpha) [right of=rankalphaPlus] {$\operatorname{rank}_{\alpha}(f)$};
\node (rank) [right of=rankalpha] {$\operatorname{rank}(f)$};
\node (QE) [right of=rank, xshift=1.5cm] {$Q_0(f)$};
\node (D1) [below of=rank] {$D^{A \rightarrow B}(f)$};
\node (QAB0) [below left of=D1] {$Q^{A \rightarrow B}_0(f)$};
\node (LV1) [below of=D1, yshift=-2cm] {$LV^{A \rightarrow B}(f)$};
\node (QAB0star) [below right of=D1] {$Q^{A \rightarrow B, *}_0(f)$};
\node (QEstar) [below of=QE, yshift=-13cm] {$Q^*_0(f)$};
\node (C) [below of=Rpriv, yshift=-13cm] {$C(f)$};
\node (CD) [right of=C] {$C^D(f)$};
\node (CP) [right of=CD] {$C^P(f)$};
\node (D) [right of=CP] {$D(f)$};
\node (N) [above right of=C, xshift=0.5cm, yshift=3cm] {$N(f)$};
\node (Pplus) [below left of=rank] {$\oplus P(f)$};
\node (ringQ) [below right of=rank] {$\mathring{Q}_0(f)$};
\node (Rpub0) [above right of=D] {$R^{\operatorname{pub}}_0(f)$};
\node (ringR) [above left of=D] {$\mathring{R}_0(f)$};
\node (Qc0) [below left of=QE, xshift=1cm] {$Q^c_0(f)$};
\node (C*0) [above left of=QEstar, xshift=1cm] {$C^*_0(f)$};
\node (signrank+) [below of=Rpriv] {$\operatorname{signrank}_+(f)$};
\node (signrank+') [below right of=Rpriv, xshift=3.3cm, yshift=0.5cm] 
{$\operatorname{signrank}_+(\neg f)$};
\node (N1) [below of=signrank+] {$N^1(f)$/$N^0(\neg f)$};
\node (N0) [below of=signrank+'] {$N^0(f)$/$N^1(\neg f)$};
\node (rank+) [below of=N1] {$\operatorname{rank}_+(f)$};
\node (rank+') [below of=N0, yshift=-1cm] {$\operatorname{rank}_+(\neg f)$};
\node (C1) [below of=rank+] {$C^1(f)$};
\node (C0) [below of=rank+'] {$C^0(f)$};
\node (AM) [below right of=Rpriv, xshift=1cm, yshift=-1cm] {$AM^{\operatorname{priv}}_{c, s}(f)$};
\node (UAM) [below of=AM, yshift=-1cm] {$UAM_{c, s}(f)$};
\node (ZAM) [below of=UAM, yshift=1cm] {$ZAM_{c, s}(f)$};
\node (OIP4) [below left of=AM, xshift=0.5cm, yshift=0.2cm] {$OIP^{[4]}_{c, s}(f)$};
\draw[->] (D) -- node[midway, above ] {Facts~~\ref{det_rank_UB},~\ref{fact:Q_E^*-rank(f)}} (QEstar);
\draw[->] (C) -- node[midway, above] {Fact~\ref{fact:prop_2.2}} (CD);
\draw[->] (CD) -- node[midway, above] {Fact~\ref{fact:prop_2.2}} (CP);
\draw[->] (CP) -- node[midway, above] {Fact~\ref{fact:prop_2.2}} (D);
\draw[->] (QEstar) -- node[midway, left] {Fact~\ref{fact:simple_rs}} (QE);
\draw[->] (QE) -- node[midway, above] {Facts~\ref{fact:simple_rs},~\ref{det_rank_UB}} (rank);
\draw[->] (rank) -- node[midway, above  ] {Fact~\ref{apprank_rank}} (rankalpha);
\draw[->] (rankalpha) -- node[midway, above] {Fact~\ref{easy}} (rankalphaPlus);
 \draw[->] (rankalphaPlus) -- node[midway, above] {Fact~\ref{fact:R^priv_epsilon(f)-rank_alpha_+}} (Rpriv);
 \draw [->] (D1) -- node[midway, right] {} (LV1);
 \draw [->] (LV1) -- node[near start, right] {Facts~\ref{fact:D1-LV1},~\ref{fact:LV1-D1}} (D1);
\draw[->] (Pplus) -- node[midway, left] {Fact~\ref{fact:oplusp-rank}} (rank);
\draw[->](rank) -- node[midway, left]{} (Pplus);
\draw [->] (D) -- node[midway, right] {Fact~\ref{fact:Rpub0-D}} (Rpub0);
\draw [->] (Rpub0) -- (D);
\draw [->] (ringQ) -- node[midway, right] {Fact~\ref{fact:mathringQ}}(rank);
\draw [->] (rank) -- node[midway, right]{} (ringQ);
\draw [->] (ringR) -- node[midway, left] {Fact~\ref{fact:ringR}} (D);
\draw [->] (D) -- node[midway, left] {} (ringR);
\draw [->] (QE) -- node[midway, left] {Fact~\ref{fact:simple_rs}}  (Qc0);
\draw [->] (Qc0)  --node[midway, left] {} (QE);
\draw [->] (C*0) -- node[midway, left] {Fact~\ref{fact:simple_rs}} (QEstar);
\draw [->] (QEstar) -- node[midway, left] {} (C*0);
\draw [->] (C) -- node[near end, right] {Fact~\ref{fact:N^Z(f)=logC^z(f)}} (N);
\draw [->] (N) -- node[midway, right] {} (C);
\draw [->] (D1) to node[near start, left] {Fact~\ref{fact:rank-D1}} (rank);
\draw [->] (rank) to node[midway, left] {} (D1);
\draw [->] (D1) to node [midway, left] {Fact~\ref{fact:QAB0-D1}} (QAB0);
\draw [->] (QAB0) to node[midway, right] {} (D1);
\draw [->] (D1) to node [midway, right] {Fact~\ref{fact:QAB*-D1}} (QAB0star);
\draw [->] (QAB0star) to node[midway, right] {} (D1);
\draw [->] (Rpriv) to node[midway, right] {Fact~\ref{fact:Rpriv-AM}} (AM);
\draw [->] (AM) to node[near end, left] {Fact~\ref{fact:AM-UAM-ZAM}} (UAM);
\draw [->] (UAM) to node[midway, left, yshift=0.5cm] {Fact~\ref{fact:AM-UAM-ZAM}} (ZAM);
\draw [->] (Rpriv) to node[midway, right] {Facts~\ref{fact:D(f)-N^0(f)-N^1(f)},~\ref{fact:signrank'}} (signrank+');
    \draw [->] (Rpriv) to node[align=center, near end, right] {Facts\\\ref{fact:D(f)-N^0(f)-N^1(f)},~\ref{fact:signrank+-N(f)}} (signrank+);
    \draw [->] (signrank+') to node[midway, right] {Fact~\ref{fact:signrank'}} (N0);
    \draw [->] (signrank+) to node[midway, left] {Fact~\ref{fact:signrank+-N(f)}} (N1);
    \draw [->] (C1) to node[near start, right] {Fact~\ref{fact:C(f)=C^0(f)+C^1(f)}} (C);
    \draw [->] (C0) to node[midway, right] {Fact~\ref{fact:C(f)=C^0(f)+C^1(f)}} (C);
    \draw [->] (ZAM) to node[near start, right] {Fact~\ref{fact:zam_LB}} (N0);
    \draw [->] (N1) to node[midway, right] {Fact~\ref{fact:N1-rank+}} (rank+);
    \draw [->] (N0) to node[right] {Fact~\ref{cor:N0-rank+'}} (rank+');
    \draw [->] (rank+') to node[align=center, near start, right] {Facts~\ref{cor:rank+'-D'},~\ref{fact:N^Z(f)=logC^z(f)},~\ref{fact:D(f)-N^0(f)-N^1(f)}} (C0);
    \draw [->] (rank+) to node[align=center, near start, right] {Facts~\ref{fact:rank+-D},~\ref{fact:D(f)-N^0(f)-N^1(f)},~\ref{fact:N^Z(f)=logC^z(f)}} (C1);
    \draw [->] (AM) to node[midway, left] {Fact~\ref{fact:AM=OIOP4}} (OIP4);
    \draw [->] (OIP4) to node[midway, left] {} (AM);
\end{tikzpicture}
\caption{Relationship between complexity measure in Class 1.} 
\label{fig:C1}
\end{figure*}

\begin{theorem}\label{thm:1}
Class 1 includes the following complexity measures: 
\begin{multicols}{3}
    \begin{itemize}
        \item $D(f)$
        \item $N(f)$
        \item $N^0(f)$
        \item $N^1(f)$
        \item $C^0(f)$
        \item $C^1(f)$
        \item $C(f)$
        \item $C^*_0(f)$
        \item $C^D(f)$
        \item $C^P(f)$
        \item $R^{\operatorname{priv}}_\epsilon(f)$
        \item $R^{\operatorname{pub}}_0(f)$
        \item $\mathring R_0(f)$
        \item $Q_0(f)$
        \item $Q^*_0(f)$
        \item $Q^c_0(f)$
        \item $\mathring Q_0(f)$
        \item $\operatorname{rank}(f)$
        \item $\operatorname{rank}_+(f)$
        \item $\operatorname{signrank}_+(f)$
        \item $\operatorname{rank}_\alpha(f)$
        \item $\operatorname{rank}_{\alpha, +}(f)$
        \item $ \oplus P(f)$
        \item $ZAM_{c, s}(f)$
        \item $AM^{\operatorname{priv}}_{c, s}(f)$
        \item $UAM_{c, s}(f)$
        \item $OIP^{[4]}_{c, s}(f)$
        \item $D^{A \rightarrow B}(f)$
        \item $LV^{A \rightarrow B}(f)$
        \item $Q^{A \rightarrow B}_0(f)$
        \item $Q^{A \rightarrow B, *}_0(f)$
    \end{itemize}
\end{multicols}
\end{theorem}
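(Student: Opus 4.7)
The plan is to establish the theorem by verifying that the directed graph in Figure~\ref{fig:C1} induces a single constant-equivalence class on the listed measures. Concretely, for each pair $(\mathcal{C}_1,\mathcal{C}_2)$ of measures in the list we need chains of inequalities $\mathcal{C}_1 \le O(g_1(\mathcal{C}_2))$ and $\mathcal{C}_2 \le O(g_2(\mathcal{C}_1))$, which together guarantee that $\mathcal{C}_1(f)=O(1)$ iff $\mathcal{C}_2(f)=O(1)$. Since constant-equivalence is transitive, it suffices to exhibit a spanning set of such chains, i.e.\ to check that the oriented edges of Figure~\ref{fig:C1} make the induced graph strongly connected on the vertex set listed in the theorem.

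My approach is to anchor the argument on the classical chain $C(f)\le C^D(f)\le C^P(f)\le D(f)\le 2^{O(C(f))}$ from Fact~\ref{fact:prop_2.2}, which immediately places $\{C,C^D,C^P,D\}$ in one equivalence class. From $D(f)$ I would then attach the zero-error and rank-based measures: $D(f)\leftrightarrow R^{\operatorname{pub}}_0(f)\leftrightarrow \mathring R_0(f)$ by Facts~\ref{fact:Rpub0-D} and~\ref{fact:ringR}; $D(f)\ge \log\operatorname{rank}(f)$ together with the known deterministic upper bound $D(f)\le O(\operatorname{rank}(f))$ (log-rank being irrelevant in the constant regime), which ties in $\operatorname{rank}(f)$, and then $\operatorname{rank}_\alpha(f)$ and $\operatorname{rank}_{\alpha,+}(f)$ via Facts~\ref{apprank_rank} and~\ref{easy}. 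The cited Fact~\ref{fact:R^priv_epsilon(f)-rank_alpha_+} closes the loop back to $R^{\operatorname{priv}}_\epsilon(f)$, folding the private-coin randomized measure into the class. The quantum branch ($Q_0,Q_0^*,Q_0^c,\mathring Q_0$) attaches via Facts~\ref{fact:simple_rs},~\ref{det_rank_UB},~\ref{fact:Q_E^*-rank(f)},~\ref{fact:mathringQ}, each giving simulation of quantum by rank and vice versa.

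The non-deterministic side attaches via $N^z(f) = \lceil\log C^z(f)\rceil$ (Fact~\ref{fact:N^Z(f)=logC^z(f)}), together with $C(f)\le C^0(f)+C^1(f)$ (Fact~\ref{fact:C(f)=C^0(f)+C^1(f)}) and the classical $D(f)\le O(N^0(f)\cdot N^1(f))$ route used in Fact~\ref{fact:D(f)-N^0(f)-N^1(f)}, giving constant-equivalence of $N,N^0,N^1,C^0,C^1$ with the anchor. The nonnegative-rank and sign-rank$_+$ measures are then clipped in via Facts~\ref{fact:N1-rank+},~\ref{cor:N0-rank+'},~\ref{fact:rank+-D},~\ref{cor:rank+'-D'},~\ref{fact:signrank+-N(f)},~\ref{fact:signrank'}, and $\oplus P(f)$ via Fact~\ref{fact:oplusp-rank}. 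The one-way cluster $\{D^{A\to B}, LV^{A\to B}, Q_0^{A\to B}, Q_0^{A\to B,*}\}$ is tied to $\operatorname{rank}(f)$ via Facts~\ref{fact:rank-D1},~\ref{fact:D1-LV1},~\ref{fact:LV1-D1},~\ref{fact:QAB0-D1},~\ref{fact:QAB*-D1}. Finally, the Arthur-Merlin family $AM^{\operatorname{priv}}_{c,s}, UAM_{c,s}, ZAM_{c,s}, OIP^{[4]}_{c,s}$ is threaded in through $R^{\operatorname{priv}}_\epsilon(f)$ using Facts~\ref{fact:Rpriv-AM},~\ref{fact:AM-UAM-ZAM},~\ref{fact:AM=OIOP4}, with the lower bound $ZAM_{c,s}\ge \Omega(\log N^0(f))$ of Fact~\ref{fact:zam_LB} closing the backward direction.

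The main obstacle is not in any single simulation but in bookkeeping: one must check that every vertex in Figure~\ref{fig:C1} lies on a cycle of cited facts in both directions, because an edge $\mathcal{C}_1\to \mathcal{C}_2$ in the figure only encodes $\mathcal{C}_1 \le O(g(\mathcal{C}_2))$. In particular, I expect the delicate points to be (i) the bidirectional equivalence between $\operatorname{signrank}_+(f)$ (respectively $\operatorname{signrank}_+(\neg f)$) and $N^1,N^0$, which depends on the relevant facts holding in \emph{both} directions rather than only as bounds in the constant-cost regime, and (ii) verifying that the approximate ranks $\operatorname{rank}_\alpha(f),\operatorname{rank}_{\alpha,+}(f)$ being constant really forces $D(f)$ to be constant, which relies on Fact~\ref{fact:R^priv_epsilon(f)-rank_alpha_+} together with the fact that constant $R^{\operatorname{priv}}_\epsilon(f)$ forces constant $D(f)$ (since with $O(1)$ communication the protocol tree has $O(1)$ leaves, so deterministic simulation is immediate). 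Once these are in place, the theorem follows by transitivity of constant-equivalence along the strongly connected graph.
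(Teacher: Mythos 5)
Your strategy is essentially the paper's: Theorem~\ref{thm:1} is proved there by exhibiting a directed cycle $C\rightarrow C^D\rightarrow C^P\rightarrow D\rightarrow Q_0^*\rightarrow Q_0\rightarrow\operatorname{rank}\rightarrow\operatorname{rank}_\alpha\rightarrow\operatorname{rank}_{\alpha,+}\rightarrow R^{\operatorname{priv}}_\epsilon\rightarrow(\operatorname{signrank}_+,N^z,\operatorname{rank}_+,C^z)\rightarrow C$ plus bidirectional attachments for the remaining measures, using exactly the facts you cite, so the decomposition and key lemmas coincide. Two local slips should be fixed, though neither is fatal since the correct chains already appear in your own sketch. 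First, the anchor chain is misquoted: Fact~\ref{fact:prop_2.2} gives $C(f)\le C^D(f)\le C^P(f)\le 2^{D(f)}$, not $C^P(f)\le D(f)$ (the protocol partition number is typically exponential in $D$), and that fact alone does not place $\{C,C^D,C^P,D\}$ in one class; the return direction (constant $C$ forces constant $D$) comes from $D(f)\le O(N^0(f)N^1(f))$ together with $N^z(f)=\log C^z(f)$ and $C(f)=C^0(f)+C^1(f)$ (Facts~\ref{fact:D(f)-N^0(f)-N^1(f)},~\ref{fact:N^Z(f)=logC^z(f)},~\ref{fact:C(f)=C^0(f)+C^1(f)}), which you only invoke later.

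Second, in your point (ii) the direction of Fact~\ref{fact:R^priv_epsilon(f)-rank_alpha_+} is reversed: it gives $\operatorname{rank}_{\alpha,+}(f)\le 2^{R^{\operatorname{priv}}_\epsilon(f)}$, so it yields nothing from the hypothesis $\operatorname{rank}_{\alpha,+}(f)=O(1)$. The step ``constant $\operatorname{rank}_{\alpha,+}$ forces constant $D$'' instead goes through Fact~\ref{easy} and Fact~\ref{apprank_rank} (constant $\operatorname{rank}_{\alpha,+}$ gives constant $\operatorname{rank}_\alpha$, hence $\operatorname{rank}(f)\le 2^{O(1)}$) and then Fact~\ref{det_rank_UB} (or the trivial $D\le\operatorname{rank}+1$) --- exactly the cycle you already set up in your second paragraph, and the route the paper takes. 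Relatedly, the parenthetical claim that constant $R^{\operatorname{priv}}_\epsilon(f)$ forces constant $D(f)$ because an $O(1)$-communication protocol has $O(1)$ leaves and ``deterministic simulation is immediate'' is too quick: the leaf rectangles of a bounded-error protocol are not monochromatic, so one cannot read off a deterministic protocol directly; one needs either the rank route above or an explicit rounding argument (Alice sends her $O(1)$-dimensional vector of leaf probabilities to constant precision). This implication is in any case not needed as a separate step, since it follows by going around the cycle. With these corrections your bookkeeping matches the paper's proof.
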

\begin{proof}  
We prove the cyclic relationship of the form $\mathcal C_1\rightarrow \mathcal C_2\rightarrow \mathcal C_3\rightarrow \cdots \rightarrow \mathcal C_1$ for all complexity measures in this class. 

We have $C(f)\rightarrow C^D(f)$ by Fact~\ref{fact:prop_2.2}; $C^D(f)\rightarrow C^P(f)$ by Fact~\ref{fact:prop_2.2}; $C^P(f)\rightarrow D(f)$ by Fact~\ref{fact:prop_2.2}; $D(f) \rightarrow Q^*_0(f)$ by Facts~~\ref{det_rank_UB},~\ref{fact:Q_E^*-rank(f)}; $Q^*_0(f)\rightarrow Q_0(f)$ by Fact~\ref{fact:simple_rs}; $Q_0(f)\rightarrow \operatorname{rank}(f)$ by Facts~\ref{fact:simple_rs},~\ref{det_rank_UB}; $\operatorname{rank}(f)\rightarrow \operatorname{rank}_\alpha(f)$ by Fact~\ref{apprank_rank}; $\operatorname{rank}_\alpha(f) \rightarrow \operatorname{rank}_{\alpha, +}(f)$ by Fact~\ref{easy}; $\operatorname{rank}_{\alpha, +}(f) \rightarrow R^{\operatorname{priv}}_\epsilon(f)$ by Fact~\ref{fact:R^priv_epsilon(f)-rank_alpha_+}. 
     
The relation $R^{\operatorname{priv}}_\epsilon(f) \rightarrow C(f)$ can be detailed separately in terms of 0- and 1-inputs. First, note that a 0-input for the function $f$ is equivalent to a 1-input for the function $\neg f$. Using the fact that $R^{\operatorname{priv}}_\epsilon(f) \leq D(f)$ and Facts~\ref{fact:D(f)-N^0(f)-N^1(f)},~\ref{fact:signrank+-N(f)},~\ref{fact:signrank'}, we jointly obtain $R^{\operatorname{priv}_\epsilon}(f)\rightarrow \operatorname{signrank}_+(\neg f)$ and $R^{\operatorname{priv}_\epsilon}(f)\rightarrow \operatorname{signrank}_+(f)$. From $\operatorname{signrank}_+(\neg f)$, we have $\operatorname{signrank}_+(\neg f)\rightarrow N^0(f)$ (in other words, $\operatorname{signrank}_+(\neg f)\rightarrow N^1(\neg f)$) by Fact~\ref{fact:signrank'}; $N^0(f)\rightarrow\operatorname{rank}_+(\neg f)$ by Fact~\ref{cor:N0-rank+'}; $\operatorname{rank}_+(f)\rightarrow C^0(f)$ by Facts~\ref{cor:rank+'-D'},~\ref{fact:N^Z(f)=logC^z(f)}~,\ref{fact:D(f)-N^0(f)-N^1(f)} (the direct relation $N^0(f)\rightarrow C^0(f)$ due to Fact~\ref{fact:N^Z(f)=logC^z(f)}). On the other hand, from $\operatorname{signrank}_+(f)$, we have $\operatorname{signrank}_+(f)\rightarrow N^1(f)$ (in other words, $\operatorname{signrank}_+(f)\rightarrow N^0(\neg f)$)by Fact~\ref{fact:signrank+-N(f)}; $N^1(f)\rightarrow \operatorname{rank}_+(f)$ by Fact~\ref{fact:N1-rank+}; $\operatorname{rank}_+(f)\rightarrow C^1(f)$ by Fact~\ref{fact:N^Z(f)=logC^z(f)},~\ref{fact:D(f)-N^0(f)-N^1(f)},~\ref{fact:rank+-D} (the direct relation $N^1(f)\rightarrow C^1(f)$ due to Fact~\ref{fact:N^Z(f)=logC^z(f)}). Finally, $C^0(f)\rightarrow C(f)$ and $C^1(f)\rightarrow C(f)$ due to Fact~\ref{fact:C(f)=C^0(f)+C^1(f)} and also $N(f) \leftrightarrow C(f)$ due to Fact~\ref{fact:N^Z(f)=logC^z(f)}. 

Furthermore, $R^{\operatorname{priv}}_\epsilon(f) \rightarrow AM^{\operatorname{priv}}_{c, s}(f)$ by Fact~\ref{fact:Rpriv-AM}; $AM^{\operatorname{priv}}_{c, s}(f) \rightarrow UAM_{c, s}(f)$ by Fact~\ref{fact:AM-UAM-ZAM}; $UAM_{c, s}(f) \rightarrow ZAM_{c, s}(f)$ by Fact~\ref{fact:AM-UAM-ZAM}; $ZAM_{c, s}(f) \rightarrow N^0(f)$ by Fact~\ref{fact:zam_LB}. 

Additionally , $AM_{c, s}(f) \leftrightarrow OIP^{[4]}_{c, s}(f)$ by Fact~\ref{fact:AM=OIOP4}; $\operatorname{rank}(f) \leftrightarrow \oplus P(f) $ by Fact~\ref{fact:oplusp-rank}; $\operatorname{rank}(f) \leftrightarrow \mathring{Q}_0(f)$ by Fact~\ref{fact:mathringQ}; $\operatorname{rank}(f) \leftrightarrow D^{A \rightarrow B}(f)$ by Fact~\ref{fact:rank-D1}; $D^{A \rightarrow B}(f) \leftrightarrow Q^{A \rightarrow B}_0(f)$ by Fact~\ref{fact:QAB0-D1}; $D^{A \rightarrow B}(f) \leftrightarrow Q^{A \rightarrow B, *}_0(f)$ by Fact~\ref{fact:QAB*-D1}, $Q_0(f) \leftrightarrow Q^c_0(f)$ by Fact~\ref{fact:simple_rs}; $Q^*_0(f) \leftrightarrow C^*_0(f)$ by Fact~\ref{fact:simple_rs}; $D(f)\leftrightarrow R^{\operatorname{pub}}_0(f)$ by Fact~\ref{fact:Rpub0-D}; $D(f)\leftrightarrow 
\mathring{R}_0(f)$ by Fact~\ref{fact:ringR}; $D^{A \rightarrow B}(f) \leftrightarrow LV^{A \rightarrow B}(f)$ by Facts~\ref{fact:D1-LV1},~\ref{fact:LV1-D1}. 
\end{proof}

We now describe the facts used in the proof of Theorem~\ref{thm:1}. We start with an upper bound result on $D(f)$ in terms of $\operatorname{rank}(f)$ by Lovett. 

\begin{fact}[{\cite[Theorem~1.1]{Shachar2016}}]\label{det_rank_UB}
Let $f : \mathcal X \times \mathcal Y \rightarrow \{-1, 1\}$ be a Boolean function. Then, $D(f)\leq O(\sqrt{\operatorname{rank}(f)}\log{\operatorname{rank}(f)})$. 
\end{fact}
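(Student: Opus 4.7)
The plan is to reduce to a structural lemma about low-rank sign matrices and then build a recursive deterministic protocol.

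\emph{Key Lemma (Lovett).} Any $\pm 1$ matrix $M$ of rank $r$ contains a monochromatic combinatorial rectangle of relative density at least $2^{-O(\sqrt r \log r)}$.

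Given the Key Lemma, the protocol is constructed by iteration: Alice and Bob pre-agree on the large monochromatic rectangle $A \times B$ guaranteed by the lemma (it is a fixed object determined by $M_f$, so its identification costs no communication). Each player exchanges one bit indicating whether $x \in A$ and $y \in B$. If both bits are $1$, the output is the known constant value on $A \times B$; otherwise the set of unresolved inputs decomposes into at most two rectangular subproblems on submatrices of $M_f$, each still of rank at most $r$. Recursing, one obtains a protocol tree of depth $O(\sqrt r \log r)$ provided that at each step the monochromatic rectangle strips off a $2^{-O(\sqrt r \log r)}$ fraction of one of the two input universes; this balanced version of the Key Lemma follows from the unbalanced one by a standard splitting argument on the larger side.

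To prove the Key Lemma, one works in two stages. First, a \emph{spectral} stage exploits that $\|M\|_F = \sqrt{|\mathcal X||\mathcal Y|}$ while $\operatorname{rank}(M) \leq r$, so the top singular value satisfies $\sigma_1(M) \geq \sqrt{|\mathcal X||\mathcal Y|/r}$; rounding the corresponding singular vectors to $\pm 1$ via a discrepancy-type argument produces a rectangle $A_0 \times B_0$ of linear size carrying bias $\Omega(1/\sqrt r)$. Second, a \emph{bias-boosting} stage iteratively refines a biased rectangle into a sub-rectangle of higher bias, exploiting that low rank forces the row- and column-biases to cluster tightly around the rectangle's average. A careful accounting shows that the bias deficit $1-\delta$ decays geometrically per step at the cost of a $2^{-O(\sqrt r)}$ density loss; after $O(\log r)$ boosting rounds one reaches bias $1$, i.e.\ a monochromatic rectangle, with total loss $2^{-O(\sqrt r \log r)}$.

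The main obstacle is the bias-boosting stage. A naive ``fix one row/column at a time'' boost loses a $2^{-\Omega(r)}$ density factor, which would only recover the trivial log-rank bound $D(f) \leq O(r)$. The improvement to $\sqrt r$ in the exponent relies on Lovett's insight that under the low-rank constraint most rows of a biased rectangle already have bias close to the rectangle's overall bias, so a pigeonhole/concentration argument yields a large sub-rectangle whose bias is amplified \emph{multiplicatively} rather than additively. Executing this argument while maintaining rank control on the sub-rectangles (so that the inductive invariant $\operatorname{rank}\leq r$ persists across refinements) is the technical heart of the proof.
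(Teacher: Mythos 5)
The paper does not prove this statement at all: Fact~\ref{det_rank_UB} is quoted verbatim from Lovett (\cite[Theorem~1.1]{Shachar2016}), so your attempt has to be judged as a reproof of Lovett's theorem. Your overall architecture (a structural theorem giving a monochromatic rectangle of density $2^{-O(\sqrt{r}\log r)}$ in every rank-$r$ sign matrix, followed by a recursive protocol) is indeed Lovett's, but the reduction from the Key Lemma to the communication bound is broken, and this is a genuine gap, not a presentational one. In your recursion each round costs $O(1)$ bits, removes only a $2^{-O(\sqrt{r}\log r)}$ fraction of the inputs, and—as you state explicitly—leaves submatrices ``still of rank at most $r$''. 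With size as the only progress measure, the protocol needs on the order of $2^{O(\sqrt{r}\log r)}$ rounds before the unresolved set becomes monochromatic, so you get communication far exceeding even the trivial bound, not depth $O(\sqrt{r}\log r)$; the inference ``depth $O(\sqrt r\log r)$ provided each step strips off a $2^{-O(\sqrt r\log r)}$ fraction'' is a non sequitur, and no balanced-splitting variant of the Key Lemma repairs it.

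The missing ingredient is exactly the Nisan--Wigderson reduction that Lovett invokes: because $A\times B$ is monochromatic, one has $\operatorname{rank}(M_{A\times \bar B})+\operatorname{rank}(M_{\bar A\times B})\le \operatorname{rank}(M)+1$, so one of the off-diagonal blocks has rank at most about $r/2$, and the recursion must be organized so that the \emph{rank} halves at each level while the rectangle's density enters only additively, as roughly $\log(1/\delta(r))+O(\log r)$ bits per level (e.g.\ greedily peel off row-blocks of halved rank—using that a rank-$r$ sign matrix has at most $2^{r}$ distinct rows—and let Alice send the index of the block containing her row). Summing over the $O(\log r)$ levels with ranks $r, r/2, r/4,\dots$ then gives $O\bigl(\log^2 r+\sum_i \sqrt{r/2^{i}}\log(r/2^{i})\bigr)=O(\sqrt r\log r)$; without rank reduction no such telescoping is available. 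A secondary gap: your ``bias-boosting'' stage asserts the heart of Lovett's structural theorem rather than proving it, and as stated it cannot terminate—geometric decay of the bias deficit never reaches bias exactly $1$, so passing from an almost-monochromatic rectangle to a truly monochromatic one requires a separate argument, which is where the real work in \cite{Shachar2016} lies.
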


For a function $f:\mathcal X \times \mathcal Y \rightarrow \{0, 1\}$, let $C^*_0(f)$ denote the (worst-case) cost of a protocol that computes $f$ exactly, which communicates classical bits but is allowed to make use of an unlimited (but finite) number of shared EPR-pairs. $Q^c_0(f)$ denotes the (worst-case) cost of a clean qubit protocol without prior entanglement that compute $f$ exactly, i.e. a
protocol that starts with $\ket{0}\ket{0}\ket{0}$ and ends with $\ket{0}\ket{f(x, y)}\ket{0}$~\cite{Buhrman2001}.  Some simple relations that hold between these measures are as follows: 
\begin{fact}[\cite{Buhrman2001},~\cite{bennett1993teleporting}]\label{fact:simple_rs}
For any function $f:\mathcal X \times \mathcal Y \rightarrow \{0, 1\}$, 
\begin{enumerate}
    \item $Q^*_0(f) \leq Q_0(f) \leq D(f) \leq D^{A \rightarrow B}(f)$,
    \item $Q_0(f) \leq Q^c_0(f) \leq 2Q_0(f)$ and
    \item $Q^*_0(f) \leq C^*_0(f) \leq 2 Q^*_0(f)$. 
\end{enumerate}
\end{fact}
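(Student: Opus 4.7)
The plan is to treat each of the six inequalities in the fact separately, splitting them into two groups: those that follow from observing that one model is a restriction of (or an encoding inside) the other, and those that require a standard simulation argument.

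For the trivial inclusions, I would first note that $Q^*_0(f) \le Q_0(f)$, $Q_0(f) \le Q^c_0(f)$ and $Q^*_0(f) \le C^*_0(f)$ each hold because any protocol in the more restrictive model is itself a valid protocol in the more permissive one: pre-shared EPR pairs can always be ignored, a clean protocol is in particular a zero-error quantum protocol, and a classical bit with entanglement is a special case of a qubit with entanglement. Likewise $D(f) \le D^{A\to B}(f)$ is immediate since any one-way protocol is a special two-way protocol, and $Q_0(f) \le D(f)$ follows by sending each classical bit as a computational-basis qubit.

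For the bound $Q^c_0(f) \le 2 Q_0(f)$, I would use the standard uncomputation trick. Given a $Q_0(f)$-cost zero-error quantum protocol, run it forwards on input $|x\rangle|y\rangle|0\rangle$ to obtain a state of the form $|x,y,\mathrm{garbage},f(x,y)\rangle$, then have the party holding the output wire copy $f(x,y)$ into a fresh ancilla via a CNOT (a local, communication-free operation), and finally run the entire protocol in reverse. Unitary reversibility restores every register other than the freshly-copied answer to its initial $|0\rangle$ state, as required by the clean-protocol definition, and the total communication is exactly twice the original, giving $Q^c_0(f) \le 2 Q_0(f)$.

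For $C^*_0(f) \le 2 Q^*_0(f)$, I would invoke quantum teleportation of Bennett, Brassard, Crépeau, Jozsa, Peres, and Wootters~\cite{bennett1993teleporting}: each single qubit of communication in a $Q^*_0(f)$-cost entanglement-assisted quantum protocol can be simulated by transmitting two classical bits and consuming one pre-shared EPR pair. Since $C^*_0$ permits unlimited shared entanglement, this substitution applied message-by-message converts the quantum protocol into a classical entanglement-assisted one with exact same correctness and communication cost at most $2 Q^*_0(f)$. The main point to be careful about is the uncomputation step, where one must verify that inserting the CNOT on the output before reversing the protocol genuinely leaves all intermediate registers and entanglement cleared; but this is a textbook construction and no further obstacle arises.
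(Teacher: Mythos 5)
Your proposal is correct and follows exactly the route the paper relies on: the paper states this fact by citation to Buhrman--de~Wolf and to the teleportation paper, and your arguments (trivial model inclusions, the run-forward--copy--run-backward uncomputation giving $Q^c_0(f)\leq 2Q_0(f)$, and teleporting each qubit with two classical bits and one EPR pair for $C^*_0(f)\leq 2Q^*_0(f)$) are precisely the standard proofs in those references. The one subtlety you flag, that uncomputation cleans all registers, is indeed fine because an exact protocol leaves the answer qubit in an unentangled basis state.
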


As first mentioned by Buhrman, and Wigderson~\cite{buhrman1998quantum} and Ambainis \emph{et al.}~\cite{Ambainis2003}, the methods from~ Refs.\cite{yao1993quantum, Kre95} imply that $Q(f)\in \Omega(\log \operatorname{rank}(f))$. Buhman and de Wolf~\cite{Buhrman2001} then proved that the $\log\operatorname{rank}(f)$ lower bound holds for exact protocols. 


\begin{fact}[{\cite[Theorem~2]{Buhrman2001}}]\label{fact:Q_E^*-rank(f)}
For any function $f:\mathcal X \times \mathcal Y \rightarrow \{0, 1\}$, $Q^*_0(f) \geq \frac{\log \operatorname{rank}(f))}{2}$. 
\end{fact}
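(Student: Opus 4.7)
The plan is to adapt the Yao--Kremer decomposition argument extended to prior entanglement, as in Buhrman and de Wolf. Let $\pi$ be an exact quantum protocol with prior entanglement that computes $f$ using $q = Q^*_0(f)$ qubits of communication, starting from some shared entangled state $|\psi_0\rangle_{AB}$. I would incorporate $|\psi_0\rangle$ into the initial joint register (Alice holding her half, Bob holding his), so that the protocol consists of alternating local unitaries and single-qubit exchanges, with $q$ qubits sent in total.

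The first key step is a Kremer-style decomposition of the final state. By tracking the message register through each round and exploiting the fact that Alice's (resp.\ Bob's) unitaries act only on her (his) register together with the message register, one inductively obtains
$$|\phi_{xy}\rangle \;=\; \sum_{s \in \{0,1\}^q} |\alpha_{x,s}\rangle_A \otimes |s\rangle_M \otimes |\beta_{y,s}\rangle_B,$$
where $|\alpha_{x,s}\rangle$ depends only on $x$ (and the branch $s$) and $|\beta_{y,s}\rangle$ depends only on $y$. The prior entanglement is absorbed into the vectors $|\alpha_{x,s}\rangle, |\beta_{y,s}\rangle$ and does not increase the number of branches.

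The second step is to convert this amplitude decomposition into a rank bound on the acceptance probability matrix. Writing the acceptance probability as $P_{xy} = \langle \phi_{xy}|\Pi|\phi_{xy}\rangle$ and expanding in the basis above gives
$$P_{xy} \;=\; \sum_{s,s' \in \{0,1\}^q} u_{s,s'}(x)\, v_{s,s'}(y),$$
where each $u_{s,s'}$ depends only on $x$ and each $v_{s,s'}$ only on $y$. This exhibits $P$ as a sum of at most $2^{2q}$ rank-one matrices, so $\operatorname{rank}(P) \leq 2^{2q}$. Because $\pi$ is exact, $P = M_f$, and therefore $\operatorname{rank}(M_f) \leq 2^{2q}$, which rearranges to $Q^*_0(f) = q \geq \tfrac{1}{2}\log \operatorname{rank}(M_f)$.

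The main obstacle is carrying out the Kremer decomposition rigorously in the presence of prior entanglement and in the sum-over-pairs step: one must check that a $q$-qubit branch gives $2^q$ terms in the amplitude but $2^{2q}$ (not $2^q$) in the probability, because the projector $\Pi$ can couple branches $s \neq s'$. This is precisely where the factor of $\tfrac{1}{2}$ appears, distinguishing this exact-quantum bound from the $\log \operatorname{rank}$ bound for deterministic complexity.
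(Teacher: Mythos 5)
Your argument is correct and is essentially the proof the paper relies on: the paper does not reprove this fact but cites Buhrman and de Wolf, whose argument is exactly this Kremer--Yao decomposition of the final state into $2^q$ branches (with the prior entanglement absorbed into the $x$- and $y$-dependent vectors, whose dimension is irrelevant), followed by expanding the acceptance probability over pairs of branches to bound $\operatorname{rank}(M_f)\leq 2^{2q}$ and using exactness to identify $P$ with $M_f$. Your remark about the pairs $(s,s')$ being the source of the factor $\tfrac{1}{2}$ is precisely the right point.
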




Other relations relating zero error quantum communication complexity with deterministic communication complexity are listed below. 
\begin{fact}[{\cite[Theorem~5.11]{klauck2007one}}]\label{fact:Q_0-D}
For all total functions $f$, we have $Q_0(f) = D(f)$. 
\end{fact}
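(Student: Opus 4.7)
The plan is to prove both directions of the equality. The easy direction, $Q_0(f) \leq D(f)$, is already recorded in Fact~\ref{fact:simple_rs}: any deterministic protocol can be executed quantumly by sending its classical messages as computational basis states, and the resulting quantum protocol uses exactly the same number of communicated qubits, commits no error, and requires no prior entanglement.

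For the nontrivial direction $D(f) \leq Q_0(f)$, I would extract a deterministic protocol from any zero-error quantum protocol. Fix an exact protocol $\Pi$ with cost $q = Q_0(f)$. The structural reason one expects no quantum advantage in the zero-error regime is that for every input $(x,y)$ the final measurement must return $f(x,y)$ with probability exactly one. Consequently the pre-measurement state must lie entirely in the $f(x,y)$-eigenspace of the output measurement, leaving no room for genuine quantum interference to help.

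The standard way to turn this structural observation into a concrete simulation is via the Yao--Kremer-style decomposition of quantum protocols. Any $q$-qubit protocol admits a decomposition of the final global state into $2^{O(q)}$ branches of product form $|\alpha_i(x)\rangle_A \otimes |\beta_i(y)\rangle_B$ indexed by classical transcripts $i$, where each $\alpha_i$ (resp.\ $\beta_i$) depends only on $x$ (resp.\ $y$). The zero-error constraint, applied branch by branch, forces each nonempty branch to be monochromatic on a rectangular subset of $\mathcal{X}\times\mathcal{Y}$, because otherwise two inputs within the same branch with different $f$-values would produce the same post-measurement statistics and thereby violate exactness. One then replays Alice's and Bob's local state labels as classical messages to obtain a deterministic protocol, which via Fact~\ref{fact:prop_2.2} (the standard equivalence between $C^P(f)$ and $D(f)$) gives $D(f) = O(q)$.

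The main obstacle is sharpening this to an actual equality $Q_0(f) = D(f)$ rather than a mere $D(f) = O(Q_0(f))$. A naive extraction via the Kremer decomposition loses constant factors. To recover equality one has to track the communication round by round and argue that, under the eigenspace constraint above, each communicated qubit effectively carries a full bit of classical information: one cannot simultaneously exploit superposition across a message register and then measure an exact function value at the end. This round-by-round sharpening, rather than the Kremer-style global rewriting, is the crux of the argument in Klauck's Theorem~5.11 of Ref.~\cite{klauck2007one}, and it is where I would expect a clean independent write-up to require the most care.
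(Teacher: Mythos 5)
The easy direction is fine, but the core of your argument for $D(f)\leq Q_0(f)$ does not go through. In the Yao--Kremer decomposition the $2^{O(q)}$ product branches are summed \emph{coherently}: the output statistics on an input $(x,y)$ are determined by the interference of all branches, not by any single one, so the zero-error condition does not constrain branches individually. A branch whose ``value'' disagrees with $f(x,y)$ can perfectly well appear with nonzero amplitude and be cancelled by other branches, and your claim that each nonempty branch must be monochromatic on a rectangle is exactly the step that is not justified. A telling sanity check is that your argument never uses totality of $f$: if it were correct it would equally show $D=O(Q_0)$ for promise problems, which is false (the distributed Deutsch--Jozsa problem has an exact $O(\log n)$-qubit protocol while its deterministic complexity is $\Omega(n)$). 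So any proof along these lines must break for partial functions, and yours does not.

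Even granting the monochromatic-rectangle claim, the extraction you describe only yields a cover/partition of $M_f$ into $2^{O(q)}$ monochromatic rectangles, and turning that back into a deterministic protocol costs a quadratic blow-up: one gets $D(f)\leq O(N^0(f)N^1(f))$ as in Fact~\ref{fact:D(f)-N^0(f)-N^1(f)}, i.e.\ $D(f)=O(Q_0(f)^2)$, not $O(Q_0(f))$; note also that Fact~\ref{fact:prop_2.2}, which you invoke, gives $C^P(f)\leq 2^{D(f)}$, the opposite direction of what you need. Finally, your last paragraph concedes that the ``round-by-round'' argument showing each communicated qubit carries a full classical bit is missing --- but that is not a sharpening of constants, it is the entire content of the theorem, and without it no part of the proposal establishes even the weaker bound $D(f)=O(Q_0(f))$. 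For comparison, the paper itself offers no proof and simply cites Klauck~\cite{klauck2007one}; the known exact equalities of this type (compare the one-way statement $Q^{A\rightarrow B}_0(f)=D^{A\rightarrow B}(f)$, Fact~\ref{fact:QAB0-D1}) are proved by showing that a zero-error quantum message must perfectly distinguish all distinct rows of the communication matrix, a genuinely different route from the Kremer-style global rewriting you sketch.
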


\begin{fact}[{\cite[Theorem~4]{klauck2000quantum}}]\label{fact:QAB0-D1}
Let $f: \mathcal X  \times \mathcal Y \rightarrow \{0, 1\}$. Then, $Q^{A \rightarrow B}_0(f) = D^{A \rightarrow B}(f)$. 
\end{fact}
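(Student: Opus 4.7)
The plan is to establish the equality by proving the two inequalities separately. The direction $Q^{A\rightarrow B}_0(f)\leq D^{A\rightarrow B}(f)$ is immediate by simulation: any deterministic one-way protocol in which Alice sends a classical string $m(x)$ of $D^{A\rightarrow B}(f)$ bits can be mimicked quantumly by Alice sending $\ket{m(x)}$ in the computational basis, from which Bob reads off the message and proceeds classically, giving the same cost.

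For the nontrivial direction $D^{A\rightarrow B}(f)\leq Q^{A\rightarrow B}_0(f)$, I first record a combinatorial identity: $D^{A\rightarrow B}(f)=\lceil\log_2 r(f)\rceil$, where $r(f)$ denotes the number of distinct rows of the communication matrix $M_f$. This is because Alice's message depends only on $x$, so two inputs yielding distinct rows must produce distinct messages (otherwise Bob cannot distinguish them on a witnessing $y$); conversely, indexing the row-equivalence classes yields a protocol attaining this cost.

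It then suffices to show $Q^{A\rightarrow B}_0(f)\geq \log_2 r(f)$. Let $\pi$ be an exact one-way quantum protocol in which Alice sends a pure state $\ket{\phi_x}$ and Bob performs a two-outcome POVM $\{E^0_y,E^1_y\}$ depending on $y$; exactness gives $\bra{\phi_x}E^{f(x,y)}_y\ket{\phi_x}=1$ for every $(x,y)$. Pick $x,x'$ corresponding to distinct rows and choose $y$ with $f(x,y)\neq f(x',y)$. Then $\ket{\phi_x}$ lies in the support of $E^{f(x,y)}_y$ while $\ket{\phi_{x'}}$ lies in the support of the complementary operator $E^{1-f(x,y)}_y=I-E^{f(x,y)}_y$, forcing $\braket{\phi_x|\phi_{x'}}=0$. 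Selecting one representative from each of the $r(f)$ row-classes produces $r(f)$ pairwise orthogonal vectors in Alice's message register, so its dimension must be at least $r(f)$, and therefore at least $\lceil\log_2 r(f)\rceil=D^{A\rightarrow B}(f)$ qubits are communicated.

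The main obstacle I anticipate is making the measurement argument rigorous when Bob's POVM is not projective or uses private ancillas: this is resolved by Naimark's dilation (equivalently, deferring Bob's measurement), which replaces $\{E^0_y,E^1_y\}$ by a projective measurement on the message register together with a fixed-initialization ancilla under Bob's control. The orthogonality argument then runs on the enlarged space without changing the dimension of the communicated register, and the inequality $Q^{A\rightarrow B}_0(f)\geq \lceil\log_2 r(f)\rceil$ follows, completing the proof.
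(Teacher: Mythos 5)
The paper itself gives no proof of this fact—it is imported directly from Klauck's cited Theorem~4—and your blind argument is correct and essentially the standard one behind that citation: the trivial simulation for one direction, the identity $D^{A\rightarrow B}(f)=\lceil\log_2 r(f)\rceil$ with $r(f)$ the number of distinct rows, and the observation that exactness forces message states of row-inequivalent inputs to be orthogonal, so the message register has dimension at least $r(f)$. The only points to tighten are cosmetic: membership in the \emph{supports} of $E^{f(x,y)}_y$ and of $I-E^{f(x,y)}_y$ does not by itself force orthogonality (these supports can overlap for non-projective POVM elements); what exactness actually gives is $E^{f(x,y)}_y\ket{\phi_x}=\ket{\phi_x}$ and $E^{f(x,y)}_y\ket{\phi_{x'}}=0$, i.e.\ membership in the eigenvalue-$1$ eigenspaces, from which $\braket{\phi_x|\phi_{x'}}=0$ follows without any Naimark dilation, and the assumption that Alice's message is pure is harmless because every pure component of a zero-error mixed (or entangled) message must itself be exactly correct, or equivalently the same argument runs on the supports of the reduced states $\rho_x$.
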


\begin{fact}[{\cite[Theorem~5]{klauck2000quantum}}]\label{fact:QAB*-D1}
Let $f: \mathcal X  \times \mathcal Y \rightarrow \{0, 1\}$. Then, $Q^{A\rightarrow B, *}_0(f) = \lceil D^{A \rightarrow B}(f)/2\rceil$. 
\end{fact}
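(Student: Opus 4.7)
The plan is to establish two matching inequalities: $Q^{A\rightarrow B, *}_0(f) \leq \lceil D^{A \rightarrow B}(f)/2 \rceil$ for the upper bound and $Q^{A\rightarrow B, *}_0(f) \geq \lceil D^{A \rightarrow B}(f)/2 \rceil$ for the matching lower bound. The two halves should combine to the claimed exact equality, with both ingredients being essentially superdense coding and its converse.

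For the upper bound, I would invoke superdense coding (Bennett--Wiesner). Starting from an optimal deterministic one-way classical protocol of cost $d = D^{A \rightarrow B}(f)$, Alice only needs to deliver a classical message $m(x) \in \{0,1\}^d$ to Bob. Pre-distribute $\lceil d/2 \rceil$ Bell pairs as shared entanglement. Alice groups the bits of $m(x)$ into consecutive pairs (padding with a $0$ if $d$ is odd), applies one of the four Pauli operators $\{I, X, Z, XZ\}$ to her half of the corresponding Bell pair according to the bit pair, and transmits that single qubit. Bob then performs a Bell measurement on each received qubit together with his retained half, recovering $m(x)$ exactly, and finally runs the classical protocol to output $f(x,y)$. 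This uses exactly $\lceil d/2 \rceil$ qubits of quantum communication with prior entanglement.

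For the lower bound, I would appeal to the converse of superdense coding, a consequence of Holevo's theorem in the entanglement-assisted setting: in any one-way protocol in which Alice sends $q$ qubits with arbitrary prior entanglement, Alice can reliably transmit at most $4^q$ perfectly distinguishable classical messages to Bob. Suppose $Q^{A \rightarrow B, *}_0(f) = q$. Since the protocol computes $f$ exactly, for any two inputs $x, x'$ whose rows of $M_f$ differ, there is some $y$ with $f(x,y) \neq f(x', y)$, and Bob's measurement $M_y$ must perfectly discriminate the corresponding message states. Hence Alice effectively needs to distinguish the $k = 2^{D^{A \rightarrow B}(f)}$ row-equivalence classes of $M_f$, forcing $k \leq 4^q$ and therefore $q \geq D^{A \rightarrow B}(f)/2$. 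Integrality of $q$ upgrades this to $q \geq \lceil D^{A \rightarrow B}(f)/2 \rceil$, matching the upper bound.

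The main obstacle is the clean statement and application of the superdense coding converse in the \emph{exact, worst-case} setting. One must verify that the pairwise perfect distinguishability of the $k$ effective message states genuinely forces $k \leq 4^q$, even though the total Hilbert space on Bob's side (the transmitted qubits together with his half of the entanglement) can have dimension far exceeding $4^q$: naively one only gets $k \leq 2^{q+e}$ from orthogonality, which is useless when the entanglement size $e$ is unbounded. The fix is to combine a Holevo-style accessible-information bound with the fact that the entanglement-assisted classical capacity of a noiseless $q$-qubit channel is exactly $2q$, so that the supposed excess over $4^q$ orthogonal states would contradict the capacity bound. This step is standard but should be spelled out carefully rather than invoked as a black box.
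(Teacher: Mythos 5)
Your proposal is correct and takes essentially the same route as the cited source (the paper states this fact only by citation to Klauck's Theorem~5, with no in-text proof): superdense coding on top of an optimal one-way deterministic protocol gives the upper bound $\lceil D^{A\rightarrow B}(f)/2\rceil$, and the converse of dense coding applied to the distinct rows of $M_f$ gives the matching lower bound. Two small refinements: the number $k$ of distinct rows satisfies $\lceil\log_2 k\rceil = D^{A\rightarrow B}(f)$ rather than $k=2^{D^{A\rightarrow B}(f)}$ (the conclusion $q\geq\lceil D^{A\rightarrow B}(f)/2\rceil$ still follows after taking ceilings), and the converse step you flag has a clean one-shot proof that avoids the asymptotic capacity theorem — since Alice's local operations leave Bob's half of the entanglement with a message-independent marginal, one gets $\log k = I(X{:}QB') = I(X{:}Q\mid B') \leq 2q$ directly, so $k\leq 4^{q}$.
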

    
\begin{fact}[{\cite[Equation~3]{mande2021one}}]\label{fact:rank-D1}
Let $f: \{0, 1\}^n \times \{0, 1\}^n \rightarrow \{0, 1\}$. Then, $\log\operatorname{rank}(f)\leq D^{A \rightarrow B}(f) \leq \operatorname{rank}(f)$. 
\end{fact}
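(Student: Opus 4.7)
The plan is to base both inequalities on the well-known characterization of one-way deterministic complexity in terms of the number of distinct rows of $M_f$. In any one-way protocol, two inputs of Alice that produce the same message must also produce the same row of $M_f$, since Bob's output depends only on the message and on $y$; conversely, Alice can always just send an index naming her row. Letting $N$ denote the number of distinct rows of $M_f$ and setting $r = \operatorname{rank}(f)$, the task thus reduces to establishing the sandwich $r \leq N \leq 2^r$, which yields $\log \operatorname{rank}(f) \leq \lceil \log N \rceil = D^{A \rightarrow B}(f) \leq r$.

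The lower bound $r \leq N$ is immediate: a matrix with only $N$ distinct rows has at most $N$ linearly independent rows over $\mathbb R$, so its rank is at most $N$. The substantive step is the upper bound $N \leq 2^r$. My plan is to fix $r$ linearly independent columns of $M_f$, indexed by some $y_1, \dots, y_r \in \{0, 1\}^n$. By the choice of these columns, every other column is a linear combination of them, so any two rows of $M_f$ that agree on the coordinates $y_1, \dots, y_r$ must agree on every coordinate. Since each row's entries at these $r$ positions lie in $\{0, 1\}$, there are at most $2^r$ possible restrictions and hence at most $2^r$ distinct rows. Alice's one-way protocol then transmits $r$ bits that name her (equivalence class of) row, and Bob decodes $f(x, y)$ from that message together with his input $y$.

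The main conceptual point, and the only place where the proof uses something beyond routine counting, is that Booleanness of $f$ is essential to the bound $N \leq 2^r$: the linear-algebraic claim that a row is determined by its values on $r$ independent columns holds for any real matrix, but without the restriction of entries to a two-element set one cannot conclude finiteness, let alone the $2^r$ estimate. Everything else is a direct application of standard linear algebra and of the one-way communication model.
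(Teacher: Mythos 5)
Your proof is correct and follows essentially the same route as the source the paper cites: identify $D^{A\rightarrow B}(f)$ with $\lceil \log N\rceil$ for $N$ the number of distinct rows of $M_f$, and sandwich $\operatorname{rank}(f)\leq N\leq 2^{\operatorname{rank}(f)}$, the upper bound using Booleanness of the entries on a set of linearly independent columns. No gaps.
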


A result of David \emph{et al.}~\cite{davis2022public} shows that the largest gap between public-coin zero-error randomized communication complexity and $D(f)$ is quadratic. This, together the trivial bound of $R^{\operatorname{pub}}_0(f)\leq D(f)$ gives the fact below. 
\begin{fact}[{\cite[Theorem~2.1]{davis2022public}}]\label{fact:Rpub0-D}
For every $f:\mathcal X \times \mathcal Y \rightarrow \{0, 1\}$, we have $\Omega\left((D(f))^{1/4}\right) \leq R^{\operatorname{pub}}_0(f)\leq D(f)$.  
\end{fact}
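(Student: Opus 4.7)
The upper bound $R^{\operatorname{pub}}_0(f) \le D(f)$ is essentially trivial: any deterministic protocol is itself a zero-error public-coin randomized protocol that simply ignores the public randomness, with the same worst-case communication. So the substance of the statement is the lower bound, equivalently $D(f) = O(R^{\operatorname{pub}}_0(f)^4)$.

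For the lower bound, my plan is to convert the Las Vegas guarantee into structural information about $M_f$ that the existing rank/cover toolbox can exploit. Let $r = R^{\operatorname{pub}}_0(f)$ and let $\pi$ be an optimal Las Vegas protocol of expected cost $r$. The first step is Markov's inequality: for every input $(x,y)$, the probability over the public random string that $\pi$ terminates within $2r$ bits is at least $1/2$. Truncating each run at $2r$ bits yields, for every setting $s$ of the randomness, a deterministic protocol $\pi_s$ of communication at most $2r$ that on termination always outputs the correct value of $f(x,y)$ and otherwise aborts. Because $\pi$ never errs, every terminating transcript of $\pi_s$ defines a combinatorial rectangle that is $f$-monochromatic.

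The second step is to extract nondeterministic covers. For each $1$-input, at least one choice of $s$ produces a terminating transcript of length $\le 2r$ whose rectangle is $1$-monochromatic and contains the input; aggregating over all $s$ yields a cover of the $1$-inputs by $1$-monochromatic rectangles, each specifiable by $O(r)$ bits, giving $N^1(f) = O(r)$, and symmetrically $N^0(f) = O(r)$. The third step is to plug these bounds into the known polynomial relationships: either through an Aho--Ullman--Yannakakis-style inequality $D(f) = O(N^0(f)\cdot N^1(f))$ directly, or by combining $\log \operatorname{rank}(f) = O(N^0(f)+N^1(f))$ with Fact~\ref{det_rank_UB} (Lovett's bound $D(f) = O(\sqrt{\operatorname{rank}(f)}\log\operatorname{rank}(f))$). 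Composing the Markov-style square-root loss with the $\sqrt{\operatorname{rank}}$ loss in Lovett's estimate is what accounts for the quartic exponent in the stated bound.

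The main obstacle, I expect, is pinning down precisely which of these combinations is actually needed to reach $D(f) = O(r^4)$ --- a naïve AUY composition already seems to promise $D(f) = O(r^2)$, so recovering the exponent stated in \cite{davis2022public} requires being careful about what the zero-error public-coin model allows and about the polylogarithmic slack paid when turning ``expected cost'' into ``worst-case transcripts''. A secondary technicality is that rectangles arising from different random strings generally do not form a partition, so the combinatorial accounting must be carried out at the level of covers rather than partitions, which is why the reduction routes through $N^0(f)$ and $N^1(f)$ rather than attempting to directly bound $C^P(f)$.
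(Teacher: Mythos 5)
The upper bound is indeed trivial and matches the paper, but the paper does not reprove the lower bound at all: it imports it verbatim from Davis et al.~\cite{davis2022public}, and your sketch of that lower bound has a genuine gap at its central step. You claim that aggregating, over all random strings $s$, the short terminating leaves of the truncated protocols $\pi_s$ yields covers witnessing $N^1(f)=O(r)$ and $N^0(f)=O(r)$ because each rectangle is ``specifiable by $O(r)$ bits.'' In the \emph{public-coin} model this is unjustified: the monochromatic rectangle attached to a terminating transcript is determined by the pair (random string, transcript), not by the transcript alone, since the same transcript can be consistent with different rectangles (and different outputs) under different strings $s$. A nondeterministic certificate therefore has to identify $s$, and the number of relevant strings is not $2^{O(r)}$. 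You cannot shrink it with Newman's theorem, which introduces error and so is unavailable at zero error; a probabilistic sparsification to $O(n)$ strings only gives $N^z(f)\le O(r)+\log n$, hence $D(f)\le O((r+\log n)^2)$, and that additive $\log n$ is exactly what the statement (and the whole constant-communication framework it serves) cannot absorb. By contrast, in the \emph{private-coin} zero-error setting the transcript alone is a valid certificate, which is why $D(f)=O\bigl((R^{\operatorname{priv}}_0(f))^2\bigr)$ is classical; the entire content of the cited theorem is overcoming this certificate problem for public coins.

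Your own remark that the argument ``already seems to promise $D(f)=O(r^2)$'' is the red flag: if Step 2 were sound you would have proved something stronger than the theorem you are reconstructing, which strongly suggests the step is where the real work lives. What Markov plus fixing a good string $s$ legitimately gives you is only a corruption-type statement (for every distribution on $z$-inputs there is a $z$-monochromatic rectangle of mass $2^{-O(r)}$), and converting that into a cover again costs a $\log(\text{number of inputs})$ factor by the greedy argument, so the $n$-dependence reappears. The proof in \cite{davis2022public} has to route around this obstruction, and the quartic exponent in the stated bound reflects the extra (quadratic-in-$r$) loss incurred in doing so before composing with $D(f)=O(N^0(f)N^1(f))$; your sketch as written does not reach the stated inequality.
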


The Las Vegas communication complexity, denoted by $LV(f)$, is the communication complexity of a public-coin randomized protocol that does not err and either accept, rejects or says ``don't know". The one-way Las Vegas communication complexity $LV^{A\rightarrow B}(f)$ is trivially upper bounded by $D^{A \rightarrow B}(f)$ and lower bounded by $D^{A \rightarrow B}(f)/2$. 

\begin{fact}[{\cite[Theorem~2.1]{hromkovivc2001power}}]\label{fact:D1-LV1}
Let $f: \mathcal X \times \mathcal Y \rightarrow \{0, 1\}$ be a Boolean function. Then, $LV^{A \rightarrow B}(f)\geq D^{A \rightarrow B}(f)/2$.  
\end{fact}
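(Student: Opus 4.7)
The plan is to show that any one-way Las Vegas protocol for $f$ of cost $c$ can be converted into a deterministic one-way protocol of cost at most $2c$; combined with the standard identity $D^{A \rightarrow B}(f) = \lceil \log_2 N \rceil$, where $N$ is the number of distinct rows of $M_f$, this reduces the statement to the row-count bound $N \leq 4^c$.

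Starting from a public-coin LV one-way protocol, I would write Alice's message as $M(x,r) \in \{0,1\}^{\leq c}$ and Bob's output as $B(M(x,r),y,r) \in \{0,1,?\}$. The two defining properties of Las Vegas, namely zero error $B(M(x,r),y,r) \in \{f(x,y),?\}$ and non-trivial success $\Pr_r[B \neq ?] \geq 1/2$ for every $(x,y)$, together yield the key collision bound: for any two inputs $x,x'$ with distinct rows $g_x \neq g_{x'}$, one has $\Pr_r[M(x,r) = M(x',r)] \leq 1/2$. Indeed, fixing $y$ with $f(x,y) \neq f(x',y)$, whenever Alice's messages coincide Bob's output must be $?$ (one value cannot equal both $f(x,y)$ and $f(x',y)$), and a $?$ at $(x,y)$ occurs with probability at most $1/2$.

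Equipped with the collision bound, the next step is to locate two random strings $r_1,r_2$ such that the joint map $x \mapsto (M(x,r_1), M(x,r_2))$ is injective on rows. With such a pair in hand, the deterministic protocol sends $(M(x,r_1), M(x,r_2))$ in $2c$ bits, and Bob recovers $f(x,y)$ by reading any non-$?$ value of $B(M(x,r_i), y, r_i)$, which is guaranteed to exist by LV success (covering every $y$). Since this pair ranges over a space of size at most $(2^c)^2 = 4^c$, one concludes $N \leq 4^c$ and hence $D^{A \rightarrow B}(f) \leq 2c$, giving $LV^{A \rightarrow B}(f) \geq D^{A \rightarrow B}(f)/2$.

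The main obstacle is producing the pair $(r_1,r_2)$: a naive union bound combining the per-pair estimate $\Pr_{r_1,r_2}[\text{both merge } g_x,g_{x'}] \leq 1/4$ over all $\binom{N}{2}$ row pairs is only strong enough when $N \leq 2$. The real argument must extract more structure, for instance by a sequential/greedy choice where $r_1$ is fixed first to separate a large fraction of rows and $r_2$ is then tailored to resolve the remaining collisions, or by first modifying the LV protocol so that Alice's message distribution on each input is effectively supported on two messages while preserving cost. I expect this averaging/structural step to carry the main technical weight, and the factor $2$ in the statement to emerge precisely from pairing two uses of the $c$-bit message space.
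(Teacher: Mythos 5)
The paper does not prove this fact itself; it imports it from Hromkovi\v{c}--Schnitger (Theorem~2.1 of the cited work), so your attempt has to stand on its own. The framework you set up is fine: $D^{A\rightarrow B}(f)=\lceil\log N\rceil$ with $N$ the number of distinct rows, zero error forces Bob to output $?$ whenever two row-distinct inputs produce the same message, and the success condition gives the per-pair bound $\Pr_r[M(x,r)=M(x',r)]\leq 1/2$. But all of that is the easy half. The entire content of the theorem is the step you explicitly leave open: showing that the $1/2$-collision structure actually forces $N\leq 4^c$, e.g.\ by exhibiting an injective two-message fingerprint. You correctly observe that the union bound over $\binom{N}{2}$ pairs is useless, and then gesture at a greedy choice of $r_1,r_2$ or a restructuring of the protocol, without carrying either out. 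That is a genuine gap, not a routine detail: the per-pair bound alone does not suffice (one can write down families of distributions on $2^c$ messages with pairwise overlap at most $1/2$ whose number far exceeds $4^c$), so any correct argument must exploit the finer zero-error/column structure, and nothing in your sketch does so.

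Moreover, the specific target you reduce to --- the existence of two coin outcomes $r_1,r_2$ making $x\mapsto(M(x,r_1),M(x,r_2))$ injective on all distinct rows --- is itself unjustified. The bound $\Pr_r[M(x,r)=M(x',r)]\leq 1/2$ says nothing about correlations across different row pairs, so a priori every fixed pair $(r_1,r_2)$ could merge some pair of distinct rows; you give no existence argument, and the natural probabilistic one is exactly the union bound you already rejected. So as written the proposal establishes only the (weaker, easy) facts and names the hard step without solving it; to complete it you would need either the actual fingerprint/counting argument of Hromkovi\v{c} and Schnitger or some substitute that genuinely uses the per-column success probability together with zero error, not just the pairwise message-collision bound.
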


\begin{fact}\label{fact:LV1-D1}
Let $f: \mathcal X \times \mathcal Y \rightarrow \{0, 1\}$ be a Boolean function. Then, $LV^{A \rightarrow B}(f)\leq D^{A \rightarrow B}(f)$.  
\end{fact}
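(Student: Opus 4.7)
The plan is to observe that this inequality is essentially definitional: any deterministic one-way protocol already qualifies as a Las Vegas one-way protocol of the same communication cost, since a deterministic protocol always produces a correct output (and in particular, never outputs ``don't know''). Concretely, I would first recall the definition of $LV^{A\rightarrow B}(f)$ as the minimum worst-case cost of a public-coin one-way protocol that, on every input and every setting of the random string, outputs $0$, $1$, or ``don't know'', never errs, and satisfies some success guarantee (e.g.\ outputs ``don't know'' with probability at most $1/2$).

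The key step is then to take an optimal deterministic one-way protocol $\Pi$ achieving cost $D^{A\rightarrow B}(f)$ and view it as a (trivial) Las Vegas protocol $\Pi'$ that ignores the public random string entirely: Alice sends the same message she would send in $\Pi$, and Bob outputs the same bit $\Pi$ outputs. Since $\Pi$ computes $f$ correctly on all inputs, $\Pi'$ never errs and never outputs ``don't know'', so it trivially meets the Las Vegas requirement. The communication cost of $\Pi'$ equals that of $\Pi$, giving $LV^{A\rightarrow B}(f)\leq D^{A\rightarrow B}(f)$.

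There is no real obstacle here; the only minor point to be careful about is matching whatever specific definition of $LV^{A\rightarrow B}$ is used in the paper (in particular, ensuring that a protocol that never says ``don't know'' is admissible, which is standard). Thus the statement follows immediately from the definitions, with no quantitative loss.
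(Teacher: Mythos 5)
Your argument is correct and matches the paper's treatment: the paper simply notes that $LV^{A\rightarrow B}(f)$ is trivially upper bounded by $D^{A\rightarrow B}(f)$, precisely because a deterministic one-way protocol is a Las Vegas one-way protocol that never errs and never says ``don't know''. No further comparison is needed.
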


Turning our attention to partition, cover numbers and nondeterministic communication complexity, we have a list of bounds among them. 
\begin{fact}[{\cite[Proposition~2.2] {KNisan96}}]\label{fact:prop_2.2}
Let $f: \mathcal X \times \mathcal Y \rightarrow \{0, 1\}$, then $C(f)\leq C^D(f)\leq C^P(f) \leq 2^{D(f)}$. 
\end{fact}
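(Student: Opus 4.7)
The plan is to establish the three inequalities in the chain separately, each by a short combinatorial observation about how the relevant notions relate to one another. Since everything here is quite standard, I would explicitly invoke the structural properties already recalled in the preliminaries: leaves of a deterministic protocol tree correspond to combinatorial rectangles $R_v$, and when the leaves are labelled consistently with $f$, the rectangles $R_v$ at the leaves are monochromatic.

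First, I would prove $C(f)\le C^D(f)$. The key observation is that any partition of $\mathcal X\times\mathcal Y$ into monochromatic rectangles is in particular a cover (a partition is a cover whose members are pairwise disjoint). Hence an optimal partition witnessing $C^D(f)$ yields a valid collection of monochromatic rectangles covering $\mathcal X\times \mathcal Y$, so the minimum cover size is no larger.

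Next, I would prove $C^D(f)\le C^P(f)$. Fix an optimal deterministic protocol $\pi$ for $f$ with $C^P(f)$ leaves. As recalled in the preliminaries, for each node $v$ of the protocol tree, $R_v = A_v\times B_v$ is a combinatorial rectangle, and for a leaf $v$ with label $b$ one has $f(x,y)=b$ for all $(x,y)\in R_v$, i.e.\ $R_v$ is monochromatic. Moreover, every input $(x,y)\in\mathcal X\times\mathcal Y$ traces a unique root-to-leaf path in $\pi$, so the leaf rectangles form a \emph{partition} of $\mathcal X\times\mathcal Y$. This partition uses exactly $C^P(f)$ monochromatic rectangles, hence $C^D(f)\le C^P(f)$.

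Finally, for $C^P(f)\le 2^{D(f)}$, take an optimal protocol of depth $D(f)$. Since at each internal node exactly one player sends a single bit, the protocol tree is binary, and a binary tree of depth at most $d$ has at most $2^d$ leaves. Applying this to the optimal protocol gives $C^P(f)\le 2^{D(f)}$. I do not anticipate a main obstacle: each step is a one-line observation, and the only small care needed is to state the leaf-rectangle partition property cleanly enough that the inequality $C^D(f)\le C^P(f)$ follows immediately.
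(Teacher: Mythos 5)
Your argument is correct and is exactly the standard one behind the cited result (the paper itself only quotes it from Kushilevitz--Nisan, Proposition~2.2): a partition into monochromatic rectangles is a cover, the leaf rectangles of a deterministic protocol form a monochromatic partition, and a binary protocol tree of depth $D(f)$ has at most $2^{D(f)}$ leaves. No gaps; nothing further is needed.
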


\begin{fact}[{\cite[Proposition~2.2] {KNisan96}}]\label{fact:C(f)=C^0(f)+C^1(f)}
Let $f: \mathcal X \times \mathcal Y \rightarrow \{0, 1\}$, then $C(f) = C^0(f) + C^1(f)$. 
\end{fact}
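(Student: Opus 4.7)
The plan is to prove the equality by showing both inequalities directly from the definition of the cover number. The key structural observation is that for each $z \in \{0,1\}$, a monochromatic rectangle that contains even a single $z$-input of $f$ must itself be $z$-monochromatic; combined with the disjoint decomposition $\mathcal X \times \mathcal Y = f^{-1}(0) \cup f^{-1}(1)$, this is really all that is needed.

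For the upper bound $C(f) \leq C^0(f) + C^1(f)$, I would take an optimal cover of the $0$-inputs by $C^0(f)$ many $0$-monochromatic rectangles and an optimal cover of the $1$-inputs by $C^1(f)$ many $1$-monochromatic rectangles, and form their union. Every point $(x,y) \in \mathcal X \times \mathcal Y$ lies in exactly one of $f^{-1}(0)$ or $f^{-1}(1)$, so it is covered by this union, and every rectangle in the union is monochromatic. Hence the union is a valid monochromatic cover of $\mathcal X \times \mathcal Y$ of size at most $C^0(f) + C^1(f)$.

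For the lower bound $C(f) \geq C^0(f) + C^1(f)$, I would begin from any optimal monochromatic cover of $\mathcal X \times \mathcal Y$ of size $C(f)$ and partition its rectangles by their label. Let $k_0$ and $k_1$ be the sizes of the $0$-monochromatic and $1$-monochromatic subcollections, so that $k_0 + k_1 = C(f)$. The $0$-monochromatic subcollection must cover every $0$-input, because a $1$-monochromatic rectangle cannot contain a $0$-input; hence $k_0 \geq C^0(f)$. By the symmetric argument $k_1 \geq C^1(f)$, and adding these two bounds yields the desired inequality.

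The only thing to watch is the distinction between a cover and a partition: covers are allowed to overlap, so in the first direction the rectangles from the two optimal covers may share points, and in the second direction the rectangles in the partition of the cover by label may also overlap. Neither affects the count, so there is no substantive obstacle; the statement is essentially definitional.
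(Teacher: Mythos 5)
Your proof is correct: the paper states this fact only by citation to Kushilevitz--Nisan and gives no proof of its own, and your two-inequality argument (union of optimal $0$- and $1$-covers for one direction, splitting an optimal cover by label and using that a nonempty monochromatic rectangle containing a $z$-input must be $z$-monochromatic for the other) is exactly the standard definitional proof behind the cited result. No gaps.
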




\begin{fact}[{\cite[Definition~2.3] {KNisan96}}]\label{fact:N^Z(f)=logC^z(f)}
Let $f: \mathcal X \times \mathcal Y \rightarrow \{0, 1\}$. For $z\in\{0,1\}$, $N^z(f) = \log C^z(f)$. Moreover, $N(f) = \log C(f)$. 
\end{fact}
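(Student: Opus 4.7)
The plan is to prove the two identities by the standard cover-versus-nondeterminism correspondence, which is essentially the content of Definition~2.3 and the accompanying discussion in~\cite{KNisan96}. I would split the equality $N^z(f)=\log C^z(f)$ into two matching bounds and then derive the ``moreover" claim from these together with Fact~\ref{fact:C(f)=C^0(f)+C^1(f)}.

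First, for the upper bound $N^z(f)\leq \log C^z(f)$, I would take an optimal cover of the $z$-inputs by $k=C^z(f)$ monochromatic rectangles $R_i=A_i\times B_i$. The nondeterministic protocol guesses an index $i\in[k]$, written out with $\lceil\log k\rceil$ bits; Alice then verifies $x\in A_i$ and Bob verifies $y\in B_i$, accepting iff both hold. By the covering property, some guess leads to acceptance precisely when $f(x,y)=z$, so the total communication is at most $\log C^z(f)$ (with the verification bits absorbed into the guess in the standard accounting).

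Second, for the lower bound $N^z(f)\geq \log C^z(f)$, I would use the usual rectangle property of communication protocols: for any nondeterministic protocol of cost $c=N^z(f)$, there are at most $2^c$ distinct accepting transcripts, and the set of inputs consistent with a fixed accepting transcript forms a combinatorial rectangle which, by correctness, must be $z$-monochromatic. The union of these rectangles covers every $z$-input, yielding $C^z(f)\leq 2^{N^z(f)}$.

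Finally, for $N(f)=\log C(f)$, I would combine the two previous arguments. A nondeterministic protocol for $f$ first guesses a bit indicating whether it intends to certify a $0$-input or a $1$-input, then runs the optimal single-sided certification; its cost is $O(1)+\max\{\log C^0(f),\log C^1(f)\}=\Theta(\log(C^0(f)+C^1(f)))=\Theta(\log C(f))$ by Fact~\ref{fact:C(f)=C^0(f)+C^1(f)}. Conversely, the accepting transcripts of any such protocol give a monochromatic cover of all inputs, so $C(f)\leq 2^{N(f)}$. I do not expect any real obstacle: the main care needed is just to be explicit about how the one-bit ``side" indicator is charged so that the $+O(1)$ additive term disappears inside a $\Theta(\cdot)$, which is consistent with the constant-equivalence convention used throughout the paper.
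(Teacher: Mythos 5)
Your argument is correct and it is the standard protocol--cover correspondence; note, however, that the paper does not actually prove this fact at all: it cites \cite[Definition~2.3]{KNisan96}, where the nondeterministic complexity $N^z(f)$ is \emph{defined} as $\log C^z(f)$ (and $N(f)$ as $\log C(f)$), so in the source the equality is definitional. What you have written is the equivalence between that cover-based definition and the operational ``guess-and-verify'' protocol definition, which is the content behind the definition but strictly a different (and stronger) statement. Two small caveats about your version: the bounds you derive give $N^z(f)=\lceil\log C^z(f)\rceil$ rather than the literal $\log C^z(f)$, and for the ``moreover'' part the extra side-indicator bit only yields $N(f)=\log C(f)\pm O(1)$ rather than exact equality -- as you yourself observe, this slack is irrelevant for the constant-equivalence use the paper makes of the fact, but it means your proof establishes the statement only up to rounding and additive constants, whereas under the cited definition it holds exactly.
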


\begin{fact}[{\cite[Theorem~2.11]{KNisan96}}]\label{fact:D(f)-N^0(f)-N^1(f)}
Let $f: \mathcal X \times \mathcal Y \rightarrow \{0, 1\}$, then $D(f) \leq O(N^0(f)N^1(f))$. 
\end{fact}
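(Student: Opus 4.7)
The plan is to build a deterministic protocol out of optimal nondeterministic protocols for both $f$ and $\neg f$, using the structural fact that 1-monochromatic and 0-monochromatic rectangles are disjoint. Let $s = C^1(f) \le 2^{N^1(f)}$ and $t = C^0(f) \le 2^{N^0(f)}$ (invoking Fact~\ref{fact:N^Z(f)=logC^z(f)}), and fix a 1-cover $\{R_i = A_i\times B_i\}_{i=1}^{s}$ and a 0-cover $\{S_j = C_j\times D_j\}_{j=1}^{t}$. For every $i,j$, since $R_i\cap S_j=\emptyset$, we have $A_i\cap C_j=\emptyset$ \emph{or} $B_i\cap D_j=\emptyset$. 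This disjointness is the single ingredient driving the whole argument.

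The protocol maintains a public set $L\subseteq[t]$ of ``live'' 0-rectangle indices, with the invariant that whenever $f(x,y)=0$ the unique rectangle $S_j\ni(x,y)$ satisfies $j\in L$. Initially $L=[t]$. In each outer round Alice searches for an index $i^\star\in[s]$ with $x\in A_{i^\star}$ such that
\[
\bigl|\{j\in L:\ A_{i^\star}\cap C_j=\emptyset\}\bigr|\ \ge\ |L|/2 .
\]
If she finds one, she sends $i^\star$ in $\lceil\log s\rceil = O(N^1(f))$ bits; Bob then tests $y\in B_{i^\star}$. If yes, they output $1$. If no, the invariant is preserved and both players can prune the same set of ``Alice-killed'' indices from $L$, halving $|L|$ at no additional cost.

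The only subtlety is what to do when no halving $i^\star$ exists. In that case every $i$ with $x\in A_i$ satisfies $A_i\cap C_j\ne\emptyset$ for more than half of $j\in L$; by the disjointness fact above these are exactly the indices with $B_i\cap D_j=\emptyset$, so an averaging argument guarantees the existence of $j^\star\in L$ for which a large fraction of the surviving 1-rectangles through $x$ have $B_i\cap D_{j^\star}=\emptyset$. Bob can therefore take over the symmetric move: maintain a second live set $L'\subseteq[s]$ of 1-rectangle indices, and use $O(N^0(f))$ bits to prune $L'$ by a constant factor. Nesting these two halving processes, the outer loop on $L$ terminates within $O(\log t)=O(N^0(f))$ rounds, and each round contributes $O(N^1(f))$ bits of inner communication (either the direct message $i^\star$, or the symmetric step on $L'$). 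Once $L$ is emptied $f(x,y)=1$ is forced and the companion set of surviving 1-rectangles identifies the output.

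The accounting gives $D(f) = O(N^0(f)\cdot N^1(f))$, as claimed. The main obstacle, and essentially the only nontrivial part, is making the ``halving is always possible on one side or the other'' step airtight: one has to verify that when Alice cannot halve $L$ through a 1-rectangle containing $x$, Bob's symmetric move on $L'$ using the disjointness identity genuinely halves, and that the two-sided invariants are preserved under every branch. The rest is bookkeeping: confirming that each pruning step costs $O(N^1(f)+N^0(f))$ bits and that the total number of pruning steps is at most $N^0(f)+N^1(f)$, which multiplies out to the advertised $O(N^0(f)\cdot N^1(f))$ bound.
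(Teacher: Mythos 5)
Your Alice step is exactly the classical Aho--Ullman--Yannakakis argument behind this fact (Kushilevitz--Nisan, Theorem~2.11): announce a 1-rectangle through $x$ that is row-disjoint from half of the live 0-rectangles, and prune those, at a cost of $O(N^1(f))$ bits per halving of $L$. The gap is in the complementary branch, and it is fatal as written. First, the correct fallback is simply the \emph{symmetric} halving of the \emph{same} set $L$ by Bob: he looks for an index $i$ with $y\in B_i$ such that $B_i\cap D_j=\emptyset$ for at least half of $j\in L$, sends $i$ ($O(N^1(f))$ bits), and those $j$ are pruned since $y\notin D_j$. Then every round costs $O(N^1(f))$ bits, $L$ halves in every round, so there are at most $N^0(f)+O(1)$ rounds, and one still needs the output rule you omitted: if neither player can halve, output $0$, justified because a 1-rectangle containing $(x,y)$ would be disjoint from every live 0-rectangle, hence row-disjoint from at least half of $L$ (Alice would have found it) or column-disjoint from at least half (Bob would have found it); if $L$ empties, output $1$.

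Your replacement for this step --- a second live set $L'$ of 1-rectangles pruned via messages of $O(N^0(f))$ bits --- does not recover the bound. Concretely: (i) your own accounting, ``at most $N^0(f)+N^1(f)$ pruning steps, each costing $O(N^0(f)+N^1(f))$ bits,'' multiplies out to $O\bigl((N^0(f)+N^1(f))^2\bigr)$, which contains an $(N^0(f))^2$ term and is \emph{not} $O(N^0(f)N^1(f))$ when the two measures are far apart (e.g.\ non-equality has $N^0=\Theta(n)$, $N^1=\Theta(\log n)$, so $(N^0)^2=\Theta(n^2)$ versus $N^0N^1=\Theta(n\log n)$); (ii) the pruning of $L'$ is not well-defined: your averaging argument selects $j^\star$ using the 1-rectangles ``through $x$,'' which Bob does not know, and even if Alice sends $j^\star$ (costing $\Theta(N^0(f))$ bits), when $y\notin D_{j^\star}$ only the single index $j^\star$ can be removed from $L$, and when $y\in D_{j^\star}$ the removed indices are only guaranteed to be a large fraction of the 1-rectangles through $x$, not a constant fraction of $L'$; (iii) there is no argument that emptying $L'$ or $L$ yields the correct output in all branches. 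Replacing the whole second branch by Bob's symmetric halving of $L$, with the output-$0$ rule above, repairs the proof and gives $D(f)\leq O(N^0(f)N^1(f))$.
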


Deaett and Srinivasan~\cite{Deaett2012} showed that $N^1(f)$ can be tightly bounded by the nonnegative sign-rank, and this bound is tight up to constant factors. 
\begin{fact}[{\cite[Corollary~3.9] {Deaett2012}}]\label{fact:signrank+-N(f)}
Let $f:\mathcal X \times \mathcal Y \rightarrow \{0, 1\}$. Then, $\lceil\log \operatorname{signrank}_+(f)\rceil\leq N^1(f)\leq \lceil\log \operatorname{signrank}_+(f)\rceil+2$.
\end{fact}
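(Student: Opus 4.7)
The plan is to establish the near-equality $\operatorname{signrank}_+(f) = C^1(f)$ between nonnegative sign-rank and the 1-cover number, and then invoke Fact~\ref{fact:N^Z(f)=logC^z(f)}, which gives $N^1(f) = \log C^1(f)$. Taking logarithms (with ceilings) then yields both sides of the stated bound.

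For the lower bound $\lceil\log\operatorname{signrank}_+(f)\rceil \leq N^1(f)$, I would prove $\operatorname{signrank}_+(f) \leq C^1(f)$ by turning a 1-cover into a nonnegative factorization. Let $\{R_k = A_k \times B_k\}_{k=1}^{c}$ be an optimal cover of the $1$-entries of $M_f$ using $c = C^1(f)$ monochromatic $1$-rectangles. Define indicator matrices $A \in \{0,1\}^{m \times c}$ and $B \in \{0,1\}^{c \times n}$ by $A_{xk} = 1$ iff $x \in A_k$, and $B_{ky} = 1$ iff $y \in B_k$. Then $(AB)_{xy}$ counts the number of rectangles containing $(x,y)$: this is $0$ on every $0$-input (each $R_k$ is $1$-monochromatic and hence avoids every $0$-input) and strictly positive on every $1$-input (by the covering property). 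Hence $AB$ has the same sign pattern as $M_f$, giving $\operatorname{signrank}_+(f) \leq c$. Conversely, for the upper bound I would show $C^1(f) \leq \operatorname{signrank}_+(f)$ by reading off a cover from any factorization achieving $r = \operatorname{signrank}_+(f)$. Given nonnegative $A \in \mathbb{R}_+^{m \times r}$ and $B \in \mathbb{R}_+^{r \times n}$ with $AB$ matching the sign pattern of $M_f$, set $R_k := \{x : A_{xk} > 0\} \times \{y : B_{ky} > 0\}$. Nonnegativity forbids cancellations, so the vanishing of $(AB)_{xy}$ on $0$-inputs forces each $R_k$ to be $1$-monochromatic; meanwhile the positivity of $(AB)_{xy}$ on $1$-inputs ensures the $R_k$'s cover all $1$-entries.

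Combining the two directions yields $\operatorname{signrank}_+(f) = C^1(f)$, and Fact~\ref{fact:N^Z(f)=logC^z(f)} closes the argument. The principal point of friction---and the likely source of the $+2$ slack in the statement---is reconciling definitional conventions: one needs to check the precise sign-pattern convention adopted by Deaett and Srinivasan (e.g., whether $(AB)_{xy} > 0$ suffices on $1$-inputs or a quantitative lower bound is enforced) and whether $N^1(f)$ is taken exactly as $\lceil \log C^1(f)\rceil$ or incurs a small protocol-bookkeeping overhead. Provided these conventions match up as sketched, the identification is exact and the additive $2$ is merely slack; the only real technical care needed is tracking these constants.
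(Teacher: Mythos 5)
Your argument is correct, and it is not the route the paper takes: the paper offers no proof of this fact at all, it simply imports it from Deaett--Srinivasan. Your self-contained argument in fact establishes the exact identity $\operatorname{signrank}_+(f)=C^1(f)$ under the paper's definition of nonnegative sign-rank (where ``same sign pattern'' for the $0/1$ matrix $M_f$ means $(AB)_{xy}>0$ exactly on the $1$-inputs and $(AB)_{xy}=0$ on the $0$-inputs). Both directions check out: from an optimal $1$-cover the $0/1$ indicator factorization has $(AB)_{xy}$ equal to the number of covering rectangles, hence positive precisely on $1$-inputs; conversely, from a nonnegative factorization the support rectangles $R_k=\{x:A_{xk}>0\}\times\{y:B_{ky}>0\}$ are $1$-monochromatic because nonnegativity rules out cancellation (any $(x,y)\in R_k$ forces $(AB)_{xy}\geq A_{xk}B_{ky}>0$), and they cover all $1$-inputs. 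Combined with Fact~\ref{fact:N^Z(f)=logC^z(f)} this gives $N^1(f)=\log C^1(f)=\log\operatorname{signrank}_+(f)$ up to rounding, which is slightly stronger than the stated two-sided bound; the additive $2$ in the cited version indeed only absorbs definitional and rounding conventions (e.g.\ whether $N^1$ is $\lceil\log C^1\rceil$ or carries $O(1)$ protocol overhead), exactly as you flag. What your approach buys is a short, elementary, citation-free proof tailored to the paper's own definitions; what the citation buys the authors is merely not having to verify that their convention for $\operatorname{signrank}_+$ matches Deaett--Srinivasan's. The only degenerate case worth a word is the all-zeros function, where both quantities are trivial.
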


An analogous result naturally holds between $N^0(f)$ and $\operatorname{signrank}_+(\neg f)$. We will be stating the $\neg f$ versions of certain results for cross-referencing purposes in our main results. 

\begin{fact}\label{fact:signrank'}
Let $f:\mathcal X \times \mathcal Y \rightarrow \{0, 1\}$. Then, $\lceil\log \operatorname{signrank}_+(\neg f)\rceil\leq N^0(f)\leq \lceil\log \operatorname{signrank}_+(\neg f)\rceil+2$. 
\end{fact}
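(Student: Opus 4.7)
The plan is to derive Fact~\ref{fact:signrank'} as a direct corollary of Fact~\ref{fact:signrank+-N(f)} applied to the complement function $\neg f$, using the elementary identity $N^{0}(f) = N^{1}(\neg f)$.

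First, I would record the identity $N^{0}(f) = N^{1}(\neg f)$. This follows immediately from the definitions: a $0$-cover of $f$ is a collection of monochromatic rectangles covering every $(x,y)$ with $f(x,y)=0$, which is precisely the same set as the $1$-inputs of $\neg f$. Hence the minimum-size such cover has the same cardinality for $f$ (as a $0$-cover) and for $\neg f$ (as a $1$-cover), and by Fact~\ref{fact:N^Z(f)=logC^z(f)} their logarithms give $N^{0}(f)$ and $N^{1}(\neg f)$ respectively.

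Next, I would apply Fact~\ref{fact:signrank+-N(f)} to the function $g \coloneqq \neg f : \mathcal{X} \times \mathcal{Y} \rightarrow \{0,1\}$. This yields
\[
\bigl\lceil\log \operatorname{signrank}_{+}(g)\bigr\rceil \;\leq\; N^{1}(g) \;\leq\; \bigl\lceil\log \operatorname{signrank}_{+}(g)\bigr\rceil + 2.
\]
Substituting $g = \neg f$ and using $N^{1}(\neg f) = N^{0}(f)$ from the first step gives exactly the desired chain of inequalities,
\[
\bigl\lceil\log \operatorname{signrank}_{+}(\neg f)\bigr\rceil \;\leq\; N^{0}(f) \;\leq\; \bigl\lceil\log \operatorname{signrank}_{+}(\neg f)\bigr\rceil + 2.
\]

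There is no real obstacle here; the statement is by design a dualization of Fact~\ref{fact:signrank+-N(f)} under the involution $f \mapsto \neg f$. The only point requiring any care is to verify that the definition of $\operatorname{signrank}_{+}(\cdot)$ being used is one that is genuinely sensitive to which value is $0$ versus $1$ (which is the case, since nonnegative sign-rank is defined for the $\{0,1\}$-valued matrix rather than its $\{-1,1\}$ transform), so that $\operatorname{signrank}_{+}(\neg f)$ is not trivially equal to $\operatorname{signrank}_{+}(f)$ and the statement is nontrivially distinct from Fact~\ref{fact:signrank+-N(f)}.
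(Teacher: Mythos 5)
Your proposal is correct and matches the paper's intent: the paper states this fact precisely as the $\neg f$ analogue of Fact~\ref{fact:signrank+-N(f)}, obtained by the identity $N^0(f)=N^1(\neg f)$, which is exactly your argument. Nothing is missing.
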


Another variant of rank is the nonnegative rank, $\operatorname{rank}_+(f)$. It is clear that $\operatorname{rank}_+(f)$ is at least as large as $\operatorname{rank}(f)$. The nonnegative rank can be arbitrarily larger than the rank in the context of non-Boolean matrices. It was shown in that for every $k\in\mathbb N$, there exists a matrix $M$ such that $\operatorname{rank}(M) = 3$ and $\operatorname{rank}_+(M)\geq k$~\cite{beasley2009real}. However, no such separation between $\operatorname{rank}(M)$ and $\operatorname{rank}_+(f)$ is known when the matrix is restricted to being Boolean. For a Boolean function $f:\{0, 1\}^n\times \{0, 1\}^n\rightarrow \{0, 1\}$, we have $D(f)\leq O((\log \operatorname{rank}_+(f)+1)(\log \operatorname{rank}_+(\neg f)+1))$ and $D(f)\leq O(\log^2 r\operatorname{ank}_+(f))$~\cite{lovasz1989communication},{\cite[Theorem~2]{kol2014approximate} as upper bounds, while the lower bound is as below. 
\begin{fact}[\cite{lee2009lower}]\label{fact:rank+-D}
Let $f: \mathcal X \times \mathcal Y \rightarrow \{0, 1\}$. Then, $\log \operatorname{rank}_+(f)\leq D(f)$. 
\end{fact}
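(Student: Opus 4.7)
The plan is to exploit the rectangle structure of deterministic protocols and convert it directly into a nonnegative-rank decomposition of the communication matrix $M_f$. Concretely, a deterministic protocol $\pi$ computing $f$ with cost $D(f)$ has a protocol tree with at most $2^{D(f)}$ leaves, and by Fact~\ref{fact:prop_2.2} (and the standard observation that leaves induce a partition into monochromatic rectangles) the inputs on which $\pi$ outputs $1$ are partitioned into at most $2^{D(f)}$ monochromatic $1$-rectangles $R_1,\dots,R_t$ with $t \le 2^{D(f)}$.

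For each rectangle $R_i = A_i \times B_i$ with $A_i \subseteq \mathcal X$, $B_i \subseteq \mathcal Y$, let $\mathbf{1}_{A_i} \in \{0,1\}^{|\mathcal X|}$ and $\mathbf{1}_{B_i} \in \{0,1\}^{|\mathcal Y|}$ be the corresponding characteristic column vectors. Then $\mathbf{1}_{A_i}\mathbf{1}_{B_i}^{T}$ is a nonnegative rank-$1$ matrix whose support is exactly $R_i$. Since the $R_i$'s are disjoint and their union is $f^{-1}(1)$, summing gives
\begin{equation*}
M_f \;=\; \sum_{i=1}^{t} \mathbf{1}_{A_i}\mathbf{1}_{B_i}^{T}.
\end{equation*}
This writes $M_f$ as a sum of at most $2^{D(f)}$ nonnegative rank-$1$ matrices, so $\operatorname{rank}_+(f) \le 2^{D(f)}$, and taking logarithms yields $\log \operatorname{rank}_+(f) \le D(f)$.

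I expect no real obstacle here: the only content is the rectangle partition theorem for deterministic protocols, which is already invoked via Fact~\ref{fact:prop_2.2}, plus the elementary fact that a Boolean all-ones block is a nonnegative outer product. If one wants a tighter constant, one can replace $2^{D(f)}$ by $C^1(f)$ and cite Fact~\ref{fact:N^Z(f)=logC^z(f)}, but this is unnecessary for the stated logarithmic inequality.
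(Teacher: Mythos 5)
Your argument is correct: the leaves of an optimal deterministic protocol partition the inputs into at most $2^{D(f)}$ monochromatic rectangles, the $1$-rectangles give the exact decomposition $M_f=\sum_i \mathbf{1}_{A_i}\mathbf{1}_{B_i}^{T}$ into nonnegative rank-one terms, and hence $\operatorname{rank}_+(f)\le 2^{D(f)}$. The paper states this fact with only a citation to \cite{lee2009lower}, and your proof is exactly the standard argument behind that reference, so there is nothing to add.
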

\begin{fact}\label{cor:rank+'-D'}
Let $f: \mathcal X \times \mathcal Y \rightarrow \{0, 1\}$. Then, $\log \operatorname{rank}_+(\neg f)\leq D(\neg f)$. 
\end{fact}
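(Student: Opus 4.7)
The plan is to observe that this statement is an immediate corollary of the preceding Fact~\ref{fact:rank+-D} applied to the negation $\neg f$ in place of $f$. Specifically, since $f : \mathcal X \times \mathcal Y \rightarrow \{0,1\}$ is a Boolean function, so is $\neg f : \mathcal X \times \mathcal Y \rightarrow \{0,1\}$ (obtained by flipping every output bit, equivalently replacing the communication matrix $M_f$ by $J - M_f$ where $J$ is the all-ones matrix). Thus $\neg f$ is a legitimate input to Fact~\ref{fact:rank+-D}.

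Invoking that fact with $\neg f$ playing the role of $f$ yields $\log \operatorname{rank}_+(\neg f) \leq D(\neg f)$, which is exactly the desired inequality. No further work is required. As noted in the preceding paragraph of the paper, the purpose of stating this $\neg f$-version separately is purely for convenient cross-referencing in the proof of Theorem~\ref{thm:1}, where the cyclic chain of reductions passes through $\operatorname{rank}_+(\neg f) \rightarrow C^0(f)$ via this fact together with Fact~\ref{fact:N^Z(f)=logC^z(f)} and Fact~\ref{fact:D(f)-N^0(f)-N^1(f)}. There is no genuine obstacle to overcome; the only ``step'' is the symmetry observation that the class of Boolean functions on $\mathcal X \times \mathcal Y$ is closed under negation, so any bound stated for arbitrary Boolean $f$ instantiates at $\neg f$ as well.
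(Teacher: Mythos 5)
Your proposal is correct and matches the paper's (implicit) justification: the paper states this $\neg f$-version without separate proof precisely because it is Fact~\ref{fact:rank+-D} instantiated at $\neg f$, which is itself a total Boolean function. Nothing further is needed.
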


\begin{fact}[{\cite[Corollary] {yannakakis1988expressing}}, {\cite[Theorem~3.4] {lee2009lower}}]\label{fact:N1-rank+}
Let $f: \mathcal X \times \mathcal Y \rightarrow \{0, 1\}$. Then, $N^1(f)\leq \log \operatorname{rank}_+(f)$. 
\end{fact}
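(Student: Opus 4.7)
The plan is to unpack the definition of nonnegative rank and extract a cover of the $1$-entries of the communication matrix by $1$-monochromatic combinatorial rectangles, then invoke \cref{fact:N^Z(f)=logC^z(f)}.

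First I would set $r = \operatorname{rank}_+(M_f)$ and fix a nonnegative factorization $M_f = AB$ with $A \in \mathbb{R}_+^{\mathcal X \times [r]}$ and $B \in \mathbb{R}_+^{[r] \times \mathcal Y}$. Writing this entrywise gives the decomposition
\begin{equation*}
M_f(x,y) \;=\; \sum_{k=1}^{r} A(x,k)\,B(k,y),
\end{equation*}
which is a sum of $r$ nonnegative rank-$1$ matrices $u_k v_k^T$ with $u_k(x)=A(x,k)\ge 0$ and $v_k(y)=B(k,y)\ge 0$.

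Next, for each $k\in[r]$ I would define the combinatorial rectangle
\begin{equation*}
R_k \;=\; \{x : u_k(x) > 0\} \,\times\, \{y : v_k(y) > 0\}.
\end{equation*}
The key step is to verify that each $R_k$ is $1$-monochromatic: for any $(x,y)\in R_k$ the summand $u_k(x)v_k(y)$ is strictly positive, so if we had $M_f(x,y)=0$, the remaining summands would have to be negative to cancel it out, contradicting nonnegativity. Hence every $(x,y)\in R_k$ satisfies $f(x,y)=1$. Conversely, if $(x,y)$ is a $1$-input then $\sum_{k=1}^{r} u_k(x) v_k(y) = 1 > 0$, so at least one summand is positive, meaning $(x,y)\in R_k$ for that index. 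Thus $\{R_1,\dots,R_r\}$ is a cover of the $1$-inputs by $1$-monochromatic rectangles.

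Finally I would conclude $C^1(f) \leq r = \operatorname{rank}_+(f)$ and apply \cref{fact:N^Z(f)=logC^z(f)} to get $N^1(f) = \log C^1(f) \leq \log \operatorname{rank}_+(f)$. There is no real obstacle here; the only subtle point is the nonnegativity argument that turns support rectangles into $1$-monochromatic rectangles, and this step is immediate once one writes out the entrywise sum.
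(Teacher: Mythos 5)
Your argument is correct and is exactly the standard proof from the cited sources (Yannakakis; Lee--Shraibman): the supports of the rank-one terms in a nonnegative factorization give a cover of the $1$-inputs by $1$-monochromatic rectangles, so $C^1(f)\leq \operatorname{rank}_+(f)$ and Fact~\ref{fact:N^Z(f)=logC^z(f)} finishes it. The paper itself states this as a cited fact without proof, and your reconstruction fills that in with no gaps.
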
 

\begin{fact}\label{cor:N0-rank+'}
Let $f: \mathcal X \times \mathcal Y \rightarrow \{0, 1\}$. Then, $N^0(f)\leq \log \operatorname{rank}_+(\neg f)$.
\end{fact}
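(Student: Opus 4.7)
The plan is to derive this as a direct corollary of Fact~\ref{fact:N1-rank+} applied to the function $\neg f$. The key observation is a simple duality between $N^0$ and $N^1$ under negation of the function: the $0$-inputs of $f$ are exactly the $1$-inputs of $\neg f$, and a cover of the $0$-set of $f$ by $0$-monochromatic rectangles is exactly a cover of the $1$-set of $\neg f$ by $1$-monochromatic rectangles. Consequently $C^0(f) = C^1(\neg f)$, and by Fact~\ref{fact:N^Z(f)=logC^z(f)} this gives the identity $N^0(f) = N^1(\neg f)$.

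With this identity in hand, the proof is a one-liner: apply Fact~\ref{fact:N1-rank+} to the function $\neg f : \mathcal X \times \mathcal Y \rightarrow \{0,1\}$ (which is again a Boolean function on the same domain, so the hypothesis of Fact~\ref{fact:N1-rank+} is satisfied). This yields
$$N^1(\neg f) \;\leq\; \log \operatorname{rank}_+(\neg f),$$
and combining with $N^0(f) = N^1(\neg f)$ gives the claimed inequality $N^0(f) \leq \log \operatorname{rank}_+(\neg f)$.

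There is no real obstacle here; the statement is essentially a relabeling of Fact~\ref{fact:N1-rank+} under the symmetry $f \leftrightarrow \neg f$. The only thing worth being careful about is making the translation between $N^0(f)$ and $N^1(\neg f)$ explicit, since the paper's convention (as noted in the remark preceding Fact~\ref{fact:signrank'}) is to include such ``$\neg f$ versions'' of results for cross-referencing purposes in the main theorem. Thus the proof will simply cite Fact~\ref{fact:N1-rank+} on input $\neg f$ together with the definitional identity $N^0(f) = N^1(\neg f)$.
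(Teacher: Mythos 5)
Your proposal is correct and matches the paper's intent exactly: the paper states this fact as the ``$\neg f$ version'' of Fact~\ref{fact:N1-rank+} without further proof, relying on precisely the duality $N^0(f)=N^1(\neg f)$ (0-inputs of $f$ being 1-inputs of $\neg f$) that you make explicit. Nothing is missing.
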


In the following lemma, we show that if the nonnegative rank of a matrix is small, then it follows that its nonnegative sign-rank is small.

\begin{lem}
Let $M\in R^{m\times n}_+$ be a nonnegative matrix. Then $\operatorname{rank}_+(M)\geq \operatorname{signrank}_+(M)$. 
\end{lem}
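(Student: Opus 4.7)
The plan is to observe that the desired inequality follows immediately from unpacking the definitions, since a nonnegative factorization certifying $\operatorname{rank}_+(M)$ automatically certifies $\operatorname{signrank}_+(M)$ with the same inner dimension.

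More concretely, I would proceed as follows. Let $r = \operatorname{rank}_+(M)$ and pick an optimal nonnegative factorization $M = AB$ with $A \in \mathbb{R}_+^{m \times r}$ and $B \in \mathbb{R}_+^{r \times n}$, whose existence is guaranteed by the definition of nonnegative rank. Now the candidate matrix $AB$ for the sign-rank definition is literally equal to $M$, so in particular $AB$ has the same sign pattern as $M$: for every entry $(i,j)$, $(AB)_{ij} = M_{ij}$, which forces $(AB)_{ij} > 0$ exactly when $M_{ij} > 0$ and $(AB)_{ij} = 0$ exactly when $M_{ij} = 0$ (the two possible cases since $M$ is nonnegative).

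Since $A$ and $B$ are nonnegative matrices of sizes $m \times r$ and $r \times n$, respectively, with $AB$ matching the sign pattern of $M$, the pair $(A, B)$ is an admissible witness in the definition of $\operatorname{signrank}_+(M)$. It follows that $\operatorname{signrank}_+(M) \leq r = \operatorname{rank}_+(M)$, which is the claimed bound.

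There is no real obstacle here; the only point worth emphasizing in the write-up is that the notion of ``same sign pattern'' for a nonnegative matrix reduces to matching the support (positions of zero versus strictly positive entries), which is automatically preserved when the candidate equals $M$ exactly. So the whole argument is a one-line observation, and the lemma can equivalently be phrased as: any nonnegative factorization is, a fortiori, a nonnegative sign-pattern factorization.
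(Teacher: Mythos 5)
Your proposal is correct and is essentially the same argument as the paper's own proof: take an optimal nonnegative factorization $M=AB$ with inner dimension $\operatorname{rank}_+(M)$ and observe it directly witnesses $\operatorname{signrank}_+(M)\leq\operatorname{rank}_+(M)$ since $AB=M$ trivially matches the sign pattern of $M$. Your write-up is, if anything, slightly more explicit than the paper's about why equality of matrices implies equality of sign patterns (support matching), but there is no substantive difference in approach.
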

\begin{proof}
Suppose that $M\in\mathbb R^{m\times n}$ is a nonnegative matrix. Let $k=\operatorname{rank}_+(M)$. Then, there exist nonnegative matrices $A\in\mathbb R^{m\times p}_+$ and $B\in\mathbb R^{p\times n}_+$ such that $M=AB$. It follows that $AB$ has the same sign pattern as M and hence, $\operatorname{signrank}_+(M)\leq k$. 
\end{proof}

\begin{fact}
Let $f: \mathcal X \times \mathcal Y \rightarrow \{0, 1\}$. Then, $\operatorname{rank}_+(f)\geq \operatorname{signrank}_+(f)$ and $\operatorname{rank}_+(\neg f)\geq \operatorname{signrank}_+(\neg f)$.
\end{fact}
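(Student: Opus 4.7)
The plan is to observe that this is essentially an immediate corollary of the preceding lemma, applied once to the communication matrix $M_f$ and once to $M_{\neg f}$. Since $f:\mathcal X\times \mathcal Y\to\{0,1\}$, the communication matrix $M_f\in\{0,1\}^{|\mathcal X|\times|\mathcal Y|}$ has entries in $\{0,1\}$, and hence it is in particular a nonnegative real matrix lying in $\mathbb R_+^{|\mathcal X|\times |\mathcal Y|}$. The same holds for $M_{\neg f}$, obtained from $M_f$ by flipping $0$'s and $1$'s; its entries are also in $\{0,1\}\subseteq \mathbb R_+$.

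Thus $M_f$ and $M_{\neg f}$ meet the hypothesis of the preceding lemma. Applying the lemma to $M_f$ yields
\[
\operatorname{rank}_+(f)\;=\;\operatorname{rank}_+(M_f)\;\geq\;\operatorname{signrank}_+(M_f)\;=\;\operatorname{signrank}_+(f),
\]
where in the outer equalities I am using the paper's convention of writing the complexity measure with $f$ as an argument in place of $M_f$. Similarly, applying the lemma to $M_{\neg f}$ yields $\operatorname{rank}_+(\neg f)\geq \operatorname{signrank}_+(\neg f)$.

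There is essentially no obstacle in this proof; the only thing to be careful about is the definitional unfolding. Concretely, in the lemma's proof, a nonnegative factorization $M_f=AB$ with $A\in\mathbb R_+^{m\times r}$ and $B\in\mathbb R_+^{r\times n}$ of minimum inner dimension $r=\operatorname{rank}_+(M_f)$ is exhibited, and then one notes that $AB$ trivially has the same sign pattern as $M_f$ (since they are literally equal). This is exactly the defining condition for a nonnegative sign-rank factorization of $M_f$, so $\operatorname{signrank}_+(M_f)\leq r$. The only thing one might worry about is whether the definition of $\operatorname{signrank}_+$ for a $\{0,1\}$-matrix makes sense (the earlier $\operatorname{signrank}$ definition required $M'_{ij}M_{ij}>0$, which is vacuous on $0$-entries of $M$); but the nonnegative sign-rank as given here just requires $M$ to be nonnegative and $AB$ to share the sign pattern, which is well-defined. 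Hence both inequalities follow immediately and the proof is a one-line invocation of the lemma.
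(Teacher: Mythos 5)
Your proposal is correct and matches the paper's intent exactly: the paper states this fact as an immediate corollary of the preceding lemma, applied to the nonnegative $\{0,1\}$-valued communication matrices $M_f$ and $M_{\neg f}$, which is precisely your one-line argument. Nothing further is needed.
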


Recall the definition of nonnegative approximate sign-rank from Section~\ref{sec:preliminaries}. Similarly to the separation between $\operatorname{rank}(f)$ and $\operatorname{rank}_+(f)$ in non-Boolean matrices, one can show that for every $k\in \mathbb N$, there exists a matrix $M$ and $\alpha > 0$ such that $\operatorname{rank}_\alpha(M) \leq 3$ and $\operatorname{rank}_{\alpha, +}(M) \geq k$~\cite{kol2014approximate}. The following shows how this complexity measure relates to different variants of $\operatorname{rank}(f)$ and private-coin randomized communication complexity. 
\begin{fact}[{\cite[Corollary~1]{Gal2021}}]\label{apprank_rank}
Let $f:\mathcal X \times \mathcal Y \rightarrow \{-1, 1\}$ be a Boolean function and let $\alpha > 1$ be a constant. Then, $\operatorname{rank}_\alpha(f)\geq \Omega(\log \operatorname{rank}_+(f))\geq \Omega(\log \operatorname{rank}(f))$
\end{fact}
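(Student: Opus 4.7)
The second inequality $\log \operatorname{rank}_+(f) \geq \log \operatorname{rank}(f)$ is immediate: every nonnegative rank-$k$ factorization $M = AB$ with $A, B \geq 0$ is in particular a real rank-$k$ factorization, so $\operatorname{rank}(M) \leq \operatorname{rank}_+(M)$ whenever $M$ is nonnegative. (For a $\{-1,1\}$-valued $f$, $\operatorname{rank}_+(f)$ is understood as the nonnegative rank of the $\{0,1\}$-valued indicator matrix of the $1$-entries, which is nonnegative by construction.)

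For the substantive first inequality, my plan is to extract, from any approximate rank-$r$ witness, a nonnegative factorization of the $1$-indicator matrix of $f$ of rank $2^{O(r)}$. Let $r = \operatorname{rank}_\alpha(f)$, witnessed by $M' = UV$ of rank $r$ with $1 \leq M_{ij}\,M'_{ij} \leq \alpha$, where $M \in \{-1,1\}^{m \times n}$ is the sign matrix of $f$, $U \in \mathbb{R}^{m \times r}$, and $V \in \mathbb{R}^{r \times n}$. Writing $u_i$ for the $i$-th row of $U$ and $v_j$ for the $j$-th column of $V$, we have $\operatorname{sign}\langle u_i, v_j \rangle = M_{ij}$ with a uniform margin $|\langle u_i, v_j\rangle| \geq 1$.

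The core step is a Forster-style renormalization placing both families $(u_i)$ and $(v_j)$ in a bounded region of $\mathbb{R}^r$ whose radius depends only on $\alpha$, followed by an $\epsilon$-net covering of this region of size $2^{O(r)}$, where $\epsilon$ is chosen depending only on $\alpha$ so that two vectors in the same net ball have inner products within $O(\epsilon) < 1$ of each other. The margin then guarantees that $\operatorname{sign}\langle u_i, v_j\rangle$ is constant on every product of net balls, so $[m] \times [n]$ partitions into at most $2^{O(r)} \cdot 2^{O(r)} = 2^{O(r)}$ combinatorial rectangles $A_t \times B_t$, each monochromatic for $f$. Summing $\mathbf{1}_{A_t} \mathbf{1}_{B_t}^{T}$ over the $1$-cells produces a nonnegative rank-$2^{O(r)}$ factorization of the $1$-indicator matrix of $f$, so $\operatorname{rank}_+(f) \leq 2^{O(r)}$ and hence $\operatorname{rank}_\alpha(f) \geq \Omega(\log \operatorname{rank}_+(f))$.

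The principal obstacle is the Forster-style renormalization itself: a priori the rows of $U$ and columns of $V$ can have unbounded $\ell_2$-norm even when $|M'_{ij}| \in [1,\alpha]$, and bounding them independently of $m$ and $n$ requires an analytic rebalancing (e.g., a change of inner product on $\mathbb{R}^r$ that equalizes the row norms of $U$) that genuinely uses the margin condition; this is the analytic heart of the argument. Once the norm-control step is in place the partition count proceeds by standard volume bounds on nets in $\mathbb{R}^r$, and the hidden constants in $2^{O(r)}$ depend only on $\alpha$.
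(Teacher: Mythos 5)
The paper does not actually prove this fact — it is imported from \cite[Corollary~1]{Gal2021} — so your attempt has to stand on its own. The easy second inequality and the final bookkeeping (net cells give disjoint monochromatic rectangles; summing the $1$-cells gives an exact nonnegative factorization, so $\operatorname{rank}_+(f)$ is at most the number of cells) are fine. The gap is exactly the step you defer as the ``analytic heart'': there is no renormalization placing the $u_i$ and $v_j$ in a ball whose radius depends only on $\alpha$. Such a radius bound would force $\gamma_2(M')\le g(\alpha)$, but already the $r\times r$ Sylvester--Hadamard sign matrix, viewed as its own approximating matrix (unit margin, $\alpha=1$), has $\gamma_2=\sqrt r$, so in \emph{every} factorization the product of the largest row norm and largest column norm is at least $\sqrt r$ and the radius must grow like $r^{1/4}$. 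The correct norm-control statement is $\gamma_2(M')\le\sqrt{\operatorname{rank}(M')}\,\lVert M'\rVert_\infty\le\alpha\sqrt r$ (this, not Forster's theorem, is the relevant tool; Forster's renormalization concerns sign-rank, where there is no upper bound $\alpha$ and no margin to preserve), which after rebalancing gives radius $R\approx\sqrt\alpha\,r^{1/4}$.

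With that radius you must take $\epsilon\lesssim 1/R$ to keep the inner-product variation inside a product cell below the margin gap, and an $\epsilon$-net of the radius-$R$ ball in $\mathbb R^r$ has size roughly $(3R/\epsilon)^r\approx(3\alpha\sqrt r)^r=2^{\Theta(r\log(\alpha r))}$, not $2^{O(r)}$. So your route, once repaired, only yields $\log\operatorname{rank}_+(f)\le O\bigl(\operatorname{rank}_\alpha(f)\log\operatorname{rank}_\alpha(f)\bigr)$, i.e.\ $\operatorname{rank}_\alpha(f)\ge\Omega\bigl(\log\operatorname{rank}_+(f)/\log\log\operatorname{rank}_+(f)\bigr)$ — a logarithmic factor short of the fact as stated (although, for the constant-cost equivalence this paper needs, that weaker bound would still suffice). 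To get the linear bound one needs a different mechanism for converting the approximate factorization into few monochromatic rectangles: the cited source obtains communication upper bounds linear in $\operatorname{rank}_\alpha(f)$ directly from the factorization and then one concludes via $\log\operatorname{rank}_+(f)\le D(f)$ (Fact~\ref{fact:rank+-D}); a volume/net count in $\mathbb R^r$ appears inherently lossy here because the attainable norm bound necessarily grows with $r$.
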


\begin{fact}[\cite{kol2014approximate}]\label{easy}
For any nonegative matrix $M$, we have $\operatorname{rank}_\alpha(M)\leq \operatorname{rank}_{\alpha, +}(M)$.      
\end{fact}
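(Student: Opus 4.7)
The plan is to exhibit, from any near-optimal witness for $\operatorname{rank}_{\alpha,+}(M)$, a corresponding witness for $\operatorname{rank}_\alpha(M)$ with no larger rank, leveraging the trivial fact that the real rank of a nonnegative matrix is at most its nonnegative rank. Concretely, I would let $r = \operatorname{rank}_{\alpha,+}(M)$ and pick a nonnegative matrix $M'$ attaining the minimum in the definition of $\operatorname{rank}_{\alpha,+}(M)$. By definition there exist $A\in\mathbb{R}_+^{m\times r}$ and $B\in\mathbb{R}_+^{r\times n}$ with $M' = AB$ and $M'$ satisfies the approximation constraint relative to $M$ (entrywise closeness up to $\alpha$).

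The next step is to observe that a nonnegative rank-$r$ factorization is, in particular, a real rank-$r$ factorization, so $\operatorname{rank}(M') \leq \operatorname{rank}_+(M') = r$. It then remains to check that $M'$ is an admissible witness for $\operatorname{rank}_\alpha(M)$, in which case we conclude
\[
\operatorname{rank}_\alpha(M) \;\leq\; \operatorname{rank}(M') \;\leq\; r \;=\; \operatorname{rank}_{\alpha,+}(M).
\]
Because $M$ is nonnegative and $M'$ is a nonnegative entrywise-close matrix, its entries have the same sign pattern as $M$ (after the standard reinterpretation of Boolean entries used throughout Section~\ref{sec:preliminaries}), and the multiplicative/additive approximation constraint transfers between the two definitions on the nonnegative-matrix regime in which both are stated.

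The only subtlety — and the ``hard part'', though it is essentially bookkeeping — is the compatibility of the two approximation constraints: $\operatorname{rank}_\alpha$ is written with the multiplicative sign-compatible bound $1 \leq M_{ij} M'_{ij} \leq \alpha$, while $\operatorname{rank}_{\alpha,+}$ is written with the additive bound $\lVert M - M'\rVert_\infty \leq \alpha$. When $M$ is a nonnegative (in particular Boolean) matrix, I would check that any $M'$ valid for one side is valid for the other up to a harmless rescaling or change of constant; since the classification in this paper is only up to constant-equivalence, this suffices to carry the inequality through and yields the claim.
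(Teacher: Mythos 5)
Your core argument is exactly the standard one behind this fact (the paper itself offers no proof beyond citing Kol et al.): take a witness $M'$ for $\operatorname{rank}_{\alpha,+}(M)$, observe $\operatorname{rank}(M')\leq\operatorname{rank}_+(M')$, and note that the same $M'$ witnesses the approximate rank, so the inequality follows. Under the conventions of the cited source, where both approximate measures are defined by the same additive entrywise constraint on a nonnegative (Boolean) matrix, this witness transfer is the entire proof, and your write-up captures it.

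The one place your bookkeeping goes astray is the reconciliation of the two constraints as literally written in this paper. Your claim that the nonnegative witness $M'$ ``has the same sign pattern as $M$ (after the standard reinterpretation of Boolean entries)'' cannot be made to work in the multiplicative direction: if the Boolean matrix is reinterpreted as a $\pm1$ sign matrix, then a nonnegative $M'$ violates $1\leq M_{ij}M'_{ij}$ at every $-1$ entry, and if $M$ is kept as a $0/1$ matrix, the multiplicative condition is infeasible at every $0$ entry regardless of $M'$ — so no ``harmless rescaling'' of the same $M'$ fixes this. The correct reconciliation is the other way around: read both $\operatorname{rank}_\alpha$ and $\operatorname{rank}_{\alpha,+}$ in a common convention (the additive one of Kol et al., or convert via the affine map $M'\mapsto 2M'-J$ with the usual parameter change $\alpha=(1+2\epsilon)/(1-2\epsilon)$, which shifts the rank by at most an additive constant), after which your witness-transfer argument goes through verbatim. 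With that repair the proof is fine; the mismatch is really an artifact of the paper's definitions rather than of your approach.
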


\begin{fact}[\cite{Krause1996}]\label{fact:R^priv_epsilon(f)-rank_alpha_+}
Let $f:\mathcal X \times \mathcal Y \rightarrow \{-1, 1\}$. Then, $R^{\operatorname{priv}}_\epsilon(f)\geq \log \operatorname{rank}_{\alpha, +}(f)$. 
\end{fact}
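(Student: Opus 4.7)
The plan is to adapt Krause's classical lower bound argument by extracting a small nonnegative-rank approximant of the communication matrix directly from any private-coin protocol. First I would fix a private-coin $\epsilon$-error protocol $\pi$ for $f$ using $c = R^{\operatorname{priv}}_\epsilon(f)$ bits of communication, and consider the acceptance-probability matrix $P$ defined by $P(x,y) = \Pr[\pi(x,y) = 1]$, where the probability is taken over Alice's and Bob's independent private coins.

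The next step is to show $\operatorname{rank}_+(P) \leq 2^c$. Because the transcript is determined by Alice's and Bob's independent coin tosses, for each leaf $\ell$ of the protocol tree the probability of reaching $\ell$ factors as $\alpha_\ell(x)\beta_\ell(y)$ with $\alpha_\ell(x), \beta_\ell(y) \geq 0$ (these are the marginals of Alice and Bob's private randomness producing a transcript consistent with $\ell$). Summing over the at most $2^c$ leaves labeled $1$ gives the factorization
$$P(x,y) = \sum_{\ell \in L_1} \alpha_\ell(x)\,\beta_\ell(y),$$
which exhibits $P$ as a sum of at most $2^c$ rank-one nonnegative matrices.

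Then I would use the correctness of $\pi$ to convert $P$ into a valid witness for $\operatorname{rank}_{\alpha, +}(f)$. By the $\epsilon$-error guarantee, $P$ is entrywise close to the $\{0,1\}$-encoding of $f$, and after the standard affine change between $\pm 1$- and $\{0,1\}$-valued matrices (scale by $2$, shift), the image of $P$ is a nonnegative matrix at $\ell_\infty$-distance at most $\alpha$ from $f$, where $\alpha$ is the parameter determined by $\epsilon$ in the statement. This certifies $\operatorname{rank}_{\alpha, +}(f) \leq 2^c$, and taking logarithms yields the claim $R^{\operatorname{priv}}_\epsilon(f) \geq \log \operatorname{rank}_{\alpha, +}(f)$.

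The main obstacle is purely bookkeeping: matching the $\pm 1$/$\{0,1\}$ conventions and checking that the approximation parameter implied by $P$ is no larger than the $\alpha$ appearing in the definition of $\operatorname{rank}_{\alpha, +}$. The rank decomposition itself is a standard leaf-indexed factorization that relies only on the independence of Alice's and Bob's private randomness, so once the distance bookkeeping is handled the argument is immediate.
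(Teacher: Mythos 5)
Your argument is correct and is essentially the standard proof behind this cited fact (the paper itself only refers to Krause and gives no proof): the acceptance-probability matrix $P$ of a private-coin protocol factors leaf-by-leaf into nonnegative rank-one terms because Alice's and Bob's coins are independent, giving $\operatorname{rank}_+(P)\leq 2^{c}$, and the error guarantee makes a suitable image of $P$ an admissible approximant, so $\operatorname{rank}_{\alpha,+}(f)\leq 2^{c}$. One caution on the bookkeeping you defer: a literal shift such as $2P-J$ destroys nonnegativity on the $0$-inputs, so you should instead compare $P$ directly with the $\{0,1\}$ communication matrix, or use the unshifted nonnegative matrix $2P$ against the $\pm1$ matrix and absorb the resulting additive error into $\alpha$.
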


Ambainis, Schulman, Ta-Shma, Vazirani and Wigderson introduced the classical and quantum sampling complexity~\cite{ambainis2003quantum}. Let $f:\mathcal X\times \mathcal Y\rightarrow \{0, 1\}$ and let $\mathcal D$ be any distribution on $\mathcal X\times \mathcal Y$. We say that a protocol samples $f$ according to $\mathcal D$ with error $\epsilon$ if the distribution the protocol induces on $\{(x, y, z)\}$ is $\epsilon$-close in the total variation distance to the distribution $(\mathcal D, f(\mathcal D))$ obtained by first choosing $(x, y)$ according to $\mathcal D$ and then evaluating $f(x, y)$. The number of bits (resp. qubits) needed for a randomized (resp. quantum) protocol to sample $f$ according to $\mathcal D$ with $\epsilon$ error is denoted as $\mathring{R}_\epsilon(f, \mathcal D)$ (resp. $\mathring{Q}_\epsilon(f, \mathcal D)$). When $\mathcal D$ is the uniform distribution, we write $\mathring{R}_\epsilon(f)$ (resp. $\mathring{Q}_\epsilon(f)$)~\cite{ambainis2003quantum}.

\begin{fact}[{\cite[Corollary~8.3] {ambainis2003quantum}}]\label{fact:mathringQ}
Let $f:\mathcal X\times \mathcal Y\rightarrow \{0, 1\}$, Then, $\mathring{Q}_0(f)=\Theta(log \operatorname{rank}(f))$. 
\end{fact}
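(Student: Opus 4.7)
My plan is to prove matching upper and lower bounds on $\mathring{Q}_0(f)$ in terms of $\log \operatorname{rank}(f)$, in both cases by relating the Schmidt rank of the final bipartite pure state driving the sampling protocol to the linear rank of $M_f$.

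For the upper bound $\mathring{Q}_0(f) = O(\log \operatorname{rank}(f))$, Alice prepares locally the state
\[
\ket{\phi_f} \;=\; \frac{1}{\sqrt{|\mathcal X||\mathcal Y|}} \sum_{x,y} \ket{x}_A \otimes \ket{y, f(x,y)}_B,
\]
which when measured in the computational basis yields a uniformly distributed sample $(x, y, f(x, y))$. Viewing $\ket{\phi_f}$ as an $|\mathcal X| \times (|\mathcal Y|\cdot 2)$ matrix indexed by $x$ on rows and $(y, z)$ on columns, and splitting by $z \in \{0, 1\}$, this matrix is, up to uniform scaling, the column-concatenation of $J - M_f$ (for $z=0$) and $M_f$ (for $z=1$), and therefore has rank at most $2\operatorname{rank}(f) + 1$. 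So the Schmidt rank of $\ket{\phi_f}$ across the $A:B$ cut is at most $2\operatorname{rank}(f) + 1$, and Alice compresses Bob's half via the Schmidt decomposition, sends the resulting $\log(2\operatorname{rank}(f)+1) = O(\log \operatorname{rank}(f))$ qubits, and lets Bob decompress; both parties then measure and output their values.

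For the lower bound $\mathring{Q}_0(f) = \Omega(\log \operatorname{rank}(f))$, suppose a $q$-qubit zero-error sampling protocol exists. Deferring all measurements to the end of the protocol, and assuming without loss of generality that Alice outputs $x$ and Bob outputs $(y, z)$, the final state has the form
\[
\ket{\psi} \;=\; \sum_{x,y} c_{x,y} \ket{x}_A \ket{y, f(x,y)}_B, \qquad |c_{x,y}|^2 = \frac{1}{|\mathcal X||\mathcal Y|},
\]
and because each communicated qubit at most doubles the Schmidt rank across the $A:B$ cut, the matrix associated with $\ket{\psi}$ has rank at most $2^q$. Splitting columns by $z$ shows that both $C \circ M_f$ and $C \circ (J - M_f)$ are column sub-matrices of that matrix, where $C = (c_{x,y})$, so each has rank at most $2^q$; by linearity, $\operatorname{rank}(C) \leq 2^{q+1}$. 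The decisive observation is that $|c_{x,y}|$ is a positive constant, forcing the entrywise inverse to satisfy $C^{-\circ} = |\mathcal X||\mathcal Y|\cdot \overline{C}$ and hence $\operatorname{rank}(C^{-\circ}) = \operatorname{rank}(C) \leq 2^{q+1}$. Combining $M_f = (C \circ M_f) \circ C^{-\circ}$ with the standard inequality $\operatorname{rank}(A \circ B) \leq \operatorname{rank}(A)\operatorname{rank}(B)$ gives $\operatorname{rank}(f) \leq 2^{2q+1}$, i.e.\ $q = \Omega(\log \operatorname{rank}(f))$.

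The real work is in the lower bound, in particular ruling out the possibility that the sampler uses cleverly chosen phases in the $c_{x,y}$ to drive $C \circ M_f$ below $\operatorname{rank}(M_f)$ — the Hadamard product can in general lower rank, so one could worry that the combinatorial complexity of $M_f$ might be ``hidden'' by the sampler. The key step that defuses this worry is the identification that constant absolute value of $c_{x,y}$ rigidly ties $\operatorname{rank}(C^{-\circ})$ to $\operatorname{rank}(C)$ through elementwise conjugation; this is the only genuinely non-routine ingredient, and everything else reduces to bookkeeping about Schmidt rank and column operations.
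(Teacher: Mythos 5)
Your upper bound is fine, and your overall strategy (compress the purified sample through its Schmidt rank; turn a small final Schmidt rank into a rank bound on $M_f$ by Hadamard-product manipulations) is essentially the route of the cited Ambainis--Schulman--Ta-Shma--Vazirani--Wigderson argument. The genuine gap is in the lower bound, at the step ``assuming without loss of generality that Alice outputs $x$ and Bob outputs $(y,z)$'' with final state $\sum_{x,y} c_{x,y}\ket{x}_A\ket{y,f(x,y)}_B$. This is not without loss of generality: a zero-error sampling protocol is only required to produce the correct statistics when the designated output registers are measured, so after deferring measurements the final pure state generally has the form $\sum_{x,y,a,b}\alpha_{x,y,a,b}\ket{x,a}_A\ket{y,f(x,y),b}_B$, with workspace (``garbage'') registers $a,b$ whose contents depend arbitrarily on $(x,y)$ and which remain entangled across the cut; there is no generic way to uncompute them. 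Once the amplitude for a given $(x,y)$ is spread over many garbage values, there is no single matrix $C$ of unit-modulus-scaled entries, and your identity $M_f=(C\circ M_f)\circ C^{-\circ}$ has no analogue, so the argument as written does not cover general protocols.

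The gap is repairable in the same spirit, and this is essentially what the source does (at the cost of the factor $2$ you also lose): let $\Psi$ be the amplitude matrix of the full final state, rows indexed by $(x,a)$ and columns by $(y,z,b)$; without prior entanglement its rank is at most $2^{q}$, hence $\Psi\circ\overline{\Psi}$ has rank at most $2^{2q}$. Summing its rows over $a$ and its columns over $b$ amounts to multiplying on both sides by $0/1$ aggregation matrices, which cannot increase rank, and by the zero-error condition it yields exactly the output-distribution matrix $T_{x,(y,z)}=\tfrac{1}{|\mathcal X||\mathcal Y|}[z=f(x,y)]$, whose $z=1$ block is a scalar multiple of $M_f$. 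Hence $\operatorname{rank}(f)\le 2^{2q}$, i.e.\ $q\ge\tfrac12\log\operatorname{rank}(f)$, matching your intended conclusion; note that with this fix the entrywise-inverse device becomes unnecessary, since the uniformity of the distribution is used only to identify the aggregated matrix as a multiple of the support matrix of $f$.
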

  
\begin{fact}[{\cite[Theorem~8.4] {ambainis2003quantum}}]\label{fact:ringR}
Let $f:\mathcal X\times \mathcal Y\rightarrow \{0, 1\}$, Then, $\sqrt{D(f)}\leq \mathring{R}_0(f)\leq D(f)$. 
\end{fact}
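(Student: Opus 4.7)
The plan is to prove each inequality separately. The upper bound $\mathring{R}_0(f) \leq D(f)$ follows from a direct construction: Alice and Bob use their private coins to independently and uniformly sample $x, y \in \{0,1\}^n$ (which yields the correct joint marginal since the uniform distribution on $\mathcal X\times\mathcal Y$ is a product), then execute an optimal $D(f)$-bit deterministic protocol for $f$ on $(x, y)$ and output $z = f(x, y)$. The induced joint distribution on $(x, y, z)$ is exactly $(X, Y, f(X, Y))$ for uniform $(X, Y)$, so this is a zero-error sampler with communication cost $D(f)$.

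For the lower bound $\sqrt{D(f)} \leq \mathring{R}_0(f)$, I would derive a small monochromatic rectangle cover of $\mathcal X \times \mathcal Y$ from any zero-error sampling protocol $\pi$ of cost $c = \mathring{R}_0(f)$, then invoke Fact~\ref{fact:D(f)-N^0(f)-N^1(f)}. By the standard rectangle property of protocol trees, the joint private-coin settings $(r_A, r_B)$ reaching a leaf $v$ of $\pi$ form a combinatorial rectangle $R_A(v) \times R_B(v)$. Writing $A_v = \{x_v(r_A) : r_A \in R_A(v)\}$ and $B_v = \{y_v(r_B) : r_B \in R_B(v)\}$ for the corresponding output sets of Alice and Bob at $v$, and letting $z_v$ be the transcript-determined output label, the independence of $r_A$ and $r_B$ guarantees that every $(x, y) \in A_v \times B_v$ is realized at leaf $v$ with positive probability. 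The zero-error guarantee then forces $z_v = f(x, y)$ throughout $A_v \times B_v$, so $A_v \times B_v$ is $z_v$-monochromatic. Since the uniform distribution has full support, the at most $2^c$ rectangles $\{A_v \times B_v\}_v$ cover $\mathcal X \times \mathcal Y$; splitting by color yields $N^0(f), N^1(f) \leq c$, and Fact~\ref{fact:D(f)-N^0(f)-N^1(f)} then gives $D(f) = O(N^0(f) \cdot N^1(f)) = O(c^2)$, i.e., $\sqrt{D(f)} = O(\mathring{R}_0(f))$.

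The main obstacle I anticipate is the essential use of the zero-error assumption to pin down monochromaticity: with even vanishing sampling error the argument collapses, since then $f$ need agree with $z_v$ only on all but a negligible fraction of $A_v \times B_v$, yielding at best an approximate cover bound that does not feed into Fact~\ref{fact:D(f)-N^0(f)-N^1(f)}. A secondary subtlety is the treatment of public randomness: if shared coins are unrestricted and only communication is charged, the sampler can trivially use the public coin to sample $(x, y)$ directly and have each party locally compute $f(x, y)$, driving $\mathring{R}_0(f)$ to $0$; the fact therefore implicitly presumes a private-coin (or equivalent cost-accounting) model. Once this setup is fixed, the rectangle-cover and nondeterministic bounds combine cleanly to yield the stated $\sqrt{D(f)}$ lower bound.
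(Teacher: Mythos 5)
The paper does not prove this fact at all --- it is imported directly from Ambainis et al.\ (Theorem~8.4 of the cited sampling paper) --- so there is no in-paper argument to compare against; your proof is essentially the standard argument behind that citation and is sound. The upper bound construction is exactly right, and the lower bound via the leaf-rectangle structure of a private-coin, input-free protocol, full support of the uniform distribution, and $D(f)\leq O(N^0(f)\cdot N^1(f))$ (Fact~\ref{fact:D(f)-N^0(f)-N^1(f)}) is the intended route. Two minor caveats: first, $z$ need not be determined by the transcript alone --- if a single player announces $z$ it may also depend on that player's private coins --- but zero error then forces each leaf rectangle $A_v\times B_v$ to split into at most two monochromatic rectangles (one per value of $z$), so $N^0(f),N^1(f)\leq \mathring{R}_0(f)+O(1)$ and the argument survives with only an additive constant; second, your chain yields $\sqrt{D(f)}=O(\mathring{R}_0(f))$ rather than the literal constant-free inequality $\sqrt{D(f)}\leq \mathring{R}_0(f)$ as stated, which is immaterial for the constant-equivalence purposes of this paper but slightly weaker than the quoted form. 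Your observation that the model must be private-coin (else shared randomness trivializes sampling) is correct and is indeed the setting of the cited result.
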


The $\oplus P$ notion was first introduced by Papadimitriou and Zachos~\cite{papadimitriou1982two}. In the context of communication complexity, $\oplus P$ is defined as follows:  Let $f:\mathcal X\times \mathcal Y\rightarrow \{0, 1\}$ be a Boolean function. Given a cost-$k$ nondeterministic protocol for $f$ and its corresponding collection of rectangles $\{R_v:v\in\{0, 1\}^k\}$, where $R_v$ denotes the rectangle of inputs arriving at $v$ in the protocol. The protocol outputs 1 if the  number of $(x,y)$'s such that $(x, y)\in R_v$ is odd. The parity complexity of a function $f$, denoted as $\oplus P(f)$, is the total length $k$  of the accepting path in the protocol tree. A few results are known about the non-inclusion of $\oplus P(f)$ in communication complexity classes~\cite{forster2002linear, landscape}. 

\begin{fact}[{\cite[Observation~B.33] {landscape}}]\label{fact:oplusp-rank}
Let $f:\mathcal X\times \mathcal Y\rightarrow \{0, 1\}$, Then, $\oplus P(f)\in \log rank(f)\pm O(1)$. 
\end{fact}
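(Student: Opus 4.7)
The plan is to prove this as a tight two-sided bound, treating $\operatorname{rank}(f)$ in this context as the rank of $M_f$ over $\mathbb{F}_2$ (which is the natural algebraic object attached to parity protocols; the notation is slightly overloaded relative to the paper's general convention, but is standard in the cited Göös--Pitassi--Watson framework for $\oplus P$). The argument proceeds by setting up a direct correspondence between cost-$k$ parity protocols and rank-$\le 2^k$ decompositions of $M_f$ over $\mathbb{F}_2$.

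For the upper bound $\oplus P(f) \le \log \operatorname{rank}(f) + O(1)$, I would start from a minimum-rank factorization $M_f = \sum_{i=1}^r a_i b_i^{T}$ over $\mathbb{F}_2$, where $r = \operatorname{rank}_{\mathbb{F}_2}(M_f)$ and $a_i \in \{0,1\}^{|\mathcal{X}|}$, $b_i \in \{0,1\}^{|\mathcal{Y}|}$. I would then exhibit a nondeterministic protocol that, on an accepting branch, first has the two parties jointly guess an index $i \in [r]$ using $\lceil \log r\rceil$ guessed bits (split across the players in the standard way), after which Alice sends the single bit $a_i(x)$ and Bob the single bit $b_i(y)$, and the leaf is labeled accepting iff both bits are $1$. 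For any fixed $(x,y)$, the number of accepting leaves is exactly $|\{i : a_i(x) = b_i(y) = 1\}|$, whose parity equals $\sum_i a_i(x) b_i(y) \pmod 2 = M_f(x,y) = f(x,y)$. Hence the $\oplus P$ machine accepts precisely on the $1$-inputs, at cost $\log r + O(1)$.

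For the lower bound $\oplus P(f) \ge \log \operatorname{rank}(f) - O(1)$, I would take any $\oplus P$ protocol of cost $k$, look at its $\le 2^k$ leaves, and recall that the inputs reaching leaf $v$ form a combinatorial rectangle $R_v = A_v \times B_v$. Letting $S$ be the set of accepting leaves, the defining property of $\oplus P$ gives
\[
f(x,y) \;=\; \bigoplus_{v \in S} \mathbf{1}[x \in A_v]\, \mathbf{1}[y \in B_v] \pmod 2,
\]
so over $\mathbb{F}_2$ the matrix $M_f$ decomposes as $\sum_{v \in S} \mathbf{1}_{A_v} \mathbf{1}_{B_v}^{T}$, a sum of $|S| \le 2^k$ rank-one matrices. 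Therefore $\operatorname{rank}(f) \le 2^k$, i.e.\ $k \ge \log \operatorname{rank}(f)$, with a possible $O(1)$ slack arising from the count of accepting versus total leaves.

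The main obstacle is not any single calculation but rather the bookkeeping around the field of the rank: one must confirm that both directions really use $\mathbb{F}_2$-rank, and that, as used in the paper, the fact plugs cleanly into the cyclic chain around $\operatorname{rank}(f)$ in Class~1, since the proof of Theorem~\ref{thm:1} invokes $\operatorname{rank}(f) \leftrightarrow \oplus P(f)$ only in the constant regime, where the distinction between $\mathbb{F}_2$-rank and real rank collapses up to constants (both must be $O(1)$ simultaneously for constant $D(f)$). Modulo this point, the rest of the argument is the standard parity-protocol/rank-factorization dictionary.
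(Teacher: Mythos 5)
Your proof is correct and is exactly the standard parity-protocol/rank-factorization argument behind the cited result (the paper itself gives no proof of this fact, only the citation to G\"o\"os--Pitassi--Watson, whose Observation~B.33 is proved in just this way). Your caveat about the field is also the right reading: the statement holds for $\operatorname{rank}_{\mathbb{F}_2}(M_f)$, and since a Boolean matrix of $\mathbb{F}_2$-rank $r$ has at most $2^r$ distinct rows while $\operatorname{rank}_{\mathbb{F}_2}\leq\operatorname{rank}_{\mathbb{R}}$, constant $\mathbb{F}_2$-rank and constant real rank coincide, which is all the constant-equivalence use in Theorem~\ref{thm:1} requires.
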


Now we shall turn our attention to interactive proofs. In the interactive proof model of communication complexity, there are three computationally unbounded parties: Merlin, Alice and Bob. Both Alice and Bob can see only their input, while Merlin sees both Alice's and Bob's input. Merlin is the prover, who wants to convince the verifier, consisting of Alice and Bob together, that $f (x, y) = 1$. We formally define the Arthur-Merlin complexity. Consider a function $f:\mathcal X\times \mathcal Y \rightarrow Z$. In a private coin Arthur-Merlin (AM) protocol, Alice is given a uniform sample from some finite set $\mathcal R$, Bob is given a uniform sample from some other independent finite set $\mathcal Q$, and there is a collection of proofs $R_1, \cdots,R_m\subseteq (\mathcal X\times \mathcal R)\times (\mathcal Y\times \mathcal Q)$. The acceptance probability of the protocol on input $(x, y)$ is defined to be $\mathbb P_{r\in\mathcal R, q\in\mathcal Q}[\exists i:((x, r), y, q)\in R_i]$. The index $i$ of a rectangle $R_i$ represents a message sent from Merlin to Alice and Bob, who then separately decides whether to accept or not. The output of the protocol is 1 if and only if both Alice and Bob accept. The communication cost of the protocol is the length of Merlin’s proof. The protocol has completeness $c$ and soundness $s$ if it accepts 1-input with probability at least $c$ and accepts 0-inputs with probability at most $s$, We define $AM^{\operatorname{priv}}_{c, s}(f)$ to be the minimum cost over all $AM$ protocols for $f$ with completeness $c$ and soundness $s$. In short, an Arthur-Merlin ($AM$) protocol is a probability distribution over nondeterministic protocols, together with a bounded-error acceptance condition~\cite{goos2015zero}.  


An $AM$ protocol is \emph{unambiguous} if, for every 1-input and every outcome of the randomness, there is at most one proof of Merlin that causes the players to accept. More formally, if $(x, y\in f^{-1}(1))$, $r\in\mathcal R$, $q\in\mathcal Q$ and $i\neq j$, it holds that $((x, r), (y, q))\notin R_i\cap R_j$ and $i\neq j$. We write $UAM_{c, s}(f)$ to denote the minimum cost over all UAM protocols for $f$ with completeness $c$ and soundness $s$.

A $UAM$ protocol is a zero-information $AM$ ($ZAM$) protocol when the distribution of Merlin’s unique proof is identical across all 1-inputs. In other words, an unambiguous $AM$ protocol can be viewed as a mapping from an input $((x, r), (y, q))$ where $(x, y)\in f^{-1}(1)$ to the unique $i\in\{1, \cdots, m\}$ such that $((x, r), (y, q))\in R_i$, or to $\perp$ if no such $i$ exists~\cite{goos2015zero}.  We use $ZAM_{\epsilon}(f)$ to denote the minimum cost over all $ZAM$ protocols with completeness $c$ and soundness $s$ as $ZAM_{c, s}(f)$. 

\begin{fact}[\cite{goos2015zero}]\label{fact:AM-UAM-ZAM}
Let $f: \mathcal X \times \mathcal Y \rightarrow \{0, 1\}$  then $AM^{\operatorname{priv}}_{1, 1/2}(f) \leq UAM_{1, 1/2}(f) \leq ZAM_{1, 1/2}(f)$. 
\end{fact}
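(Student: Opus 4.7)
The plan is to deduce both inequalities purely from the definitional relationships between the three protocol classes, by observing that each measure is a minimum over a nested family of protocols. Specifically, I would argue that every $ZAM$ protocol with completeness $1$ and soundness $1/2$ is also a valid $UAM$ protocol with completeness $1$ and soundness $1/2$, and likewise every such $UAM$ protocol is a valid $AM^{\operatorname{priv}}$ protocol with the same parameters. Then the chain of inequalities follows immediately from the elementary principle that taking an infimum over a larger class of objects can only decrease (or leave unchanged) the resulting value.

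To flesh this out, I would first recall the definition given earlier in the excerpt: an $AM^{\operatorname{priv}}$ protocol is given by a collection of rectangles $R_1,\dots,R_m \subseteq (\mathcal{X} \times \mathcal{R}) \times (\mathcal{Y} \times \mathcal{Q})$ with Alice and Bob each holding uniform private randomness, and its cost is $\lceil \log m \rceil$. A $UAM$ protocol is an $AM^{\operatorname{priv}}$ protocol subject to the additional structural restriction that for every $1$-input $(x,y)$ and every outcome $(r,q)$ of the randomness, the set $\{i : ((x,r),(y,q)) \in R_i\}$ has at most one element. A $ZAM$ protocol is a $UAM$ protocol with the further restriction that the induced distribution over the unique accepting index $i$ is the same for every $(x,y) \in f^{-1}(1)$. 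Crucially, none of these additional restrictions touches the completeness or soundness guarantees, which are purely probabilistic statements about acceptance frequencies; they are inherited automatically by any protocol satisfying the stronger structural constraints.

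With those definitional observations in place, the argument reduces to two one-line inclusions. Let $\mathcal{P}_{AM}$, $\mathcal{P}_{UAM}$, $\mathcal{P}_{ZAM}$ denote the classes of $AM^{\operatorname{priv}}$, $UAM$, and $ZAM$ protocols for $f$ with completeness $1$ and soundness $1/2$, respectively. Then
\[
\mathcal{P}_{ZAM} \subseteq \mathcal{P}_{UAM} \subseteq \mathcal{P}_{AM},
\]
so writing each complexity measure as $\inf\{\text{cost}(\pi) : \pi \in \mathcal{P}_\bullet\}$ gives $AM^{\operatorname{priv}}_{1,1/2}(f) \leq UAM_{1,1/2}(f) \leq ZAM_{1,1/2}(f)$.

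There is no real obstacle here; the fact is essentially an unpacking of definitions, which is why it is stated as a fact rather than a lemma. The only point requiring minimal care is to confirm that the completeness/soundness parameters are preserved under the inclusions, which is immediate since the restrictions from $AM$ to $UAM$ to $ZAM$ constrain only the \emph{structure} of Merlin's proof system (uniqueness and distributional invariance of accepting proofs), not the acceptance probabilities that define the parameters $c$ and $s$.
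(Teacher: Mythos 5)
Your proposal is correct: the paper gives no proof of its own (the fact is cited from G\"o\"os et al.), and the intended argument is precisely the definitional one you give, since by the definitions in Section~3.1 every $ZAM$ protocol is a $UAM$ protocol and every $UAM$ protocol is an $AM^{\operatorname{priv}}$ protocol with the same completeness and soundness, so minimizing cost over the nested classes yields the stated chain of inequalities.
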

\begin{fact}[{\cite[Theorems~2 \& 6]{goos2015zero}}]\label{fact:zam_LB}
Let $f: \mathcal X \times \mathcal Y \rightarrow \{0, 1\}$  then  $\Omega \left(N^0(f)\right)\leq ZAM_{1, 1/2}(f)\leq O\left(2^{N^0(f)}\right)$. 
\end{fact}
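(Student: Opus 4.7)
The plan is to establish the two inequalities separately, since the exponential gap signals that they rest on quite different arguments: the upper bound is a direct construction, while the lower bound requires extracting combinatorial rectangles from a probabilistic condition.

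For the upper bound $ZAM_{1,1/2}(f) \leq O(2^{N^0(f)})$, I would construct a ZAM protocol directly from a minimum cover of the 0-inputs. Let $C = C^0(f) = 2^{N^0(f)}$ and fix 0-monochromatic rectangles $R_1, \ldots, R_C$ covering all 0-inputs. Merlin's message is a string $(b_1, \ldots, b_C) \in \{0,1\}^C$, where $b_j$ designates which of the two verifiers is responsible for certifying $(x,y) \notin R_j$. On a 1-input, for each $j$ either $x \notin \operatorname{row}(R_j)$ or $y \notin \operatorname{col}(R_j)$, so Merlin sets $b_j$ to point to a verifier that can perform the check locally, and both verifiers accept. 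On a 0-input $(x,y) \in R_{j^\star}$, both $x \in \operatorname{row}(R_{j^\star})$ and $y \in \operatorname{col}(R_{j^\star})$, so whichever value of $b_{j^\star}$ Merlin picks, the designated verifier rejects deterministically. To meet the unambiguity and zero-information requirements, the protocol is dressed with Alice's and Bob's private coins so that the unique accepting proof on every 1-input has a fixed distribution on $\{0,1\}^C$ (for instance by XORing Merlin's string with a pattern determined by the private coins, and have Alice and Bob undo the mask locally). The resulting cost is $O(C) = O(2^{N^0(f)})$, and repeating in parallel brings the soundness down to $1/2$ without changing the order.

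For the lower bound $N^0(f) \leq O(ZAM_{1,1/2}(f))$, suppose we have a ZAM protocol of cost $k$ with proofs indexed by $i \in \{1, \ldots, m\}$ where $m = 2^k$, and rectangles $R_i = A_i \times B_i \subseteq (\mathcal X \times \mathcal R) \times (\mathcal Y \times \mathcal Q)$. For each proof $i$ and each outcome $(r,q)$ of the private coins, the set of inputs accepting via $i$ projects to a combinatorial rectangle $S_{i,r,q} \subseteq \mathcal X \times \mathcal Y$. By the zero-information condition, every 1-input $(x,y)$ satisfies $\Pr_{r,q}[\text{proof } i \text{ accepts}] = \mu(i)$ for a fixed distribution $\mu$ independent of $(x,y)$, so the $S_{i,r,q}$ tile the 1-side homogeneously. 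A 0-input, by contrast, is accepted with total probability at most $1/2$, so its acceptance mass must be deficient on at least one $i$ with $\mu(i)>0$. For each such $i$, one extracts a rectangle $T_i \subseteq \mathcal X \times \mathcal Y$ that contains every 1-input yet is disjoint from a definite fraction of the 0-inputs; taking complements and intersecting with appropriate rows/columns yields a cover of the 0-inputs by $2^{O(k)}$ 0-monochromatic rectangles, giving $C^0(f) \leq 2^{O(k)}$ and thus $N^0(f) \leq O(k)$.

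The main obstacle is the lower bound. The hard step is converting the probabilistic deficiency statement (that 0-inputs receive acceptance mass at most $1/2$, while 1-inputs receive mass exactly $\mu$) into a genuinely combinatorial small-cover statement. The zero-information uniformity is what makes this possible: it rigidifies the family $\{S_{i,r,q}\}$ enough that an averaging argument over the private-coin outcomes produces, per proof $i$, a bounded number of rectangles whose complements slice out precisely the 0-inputs with deficient $i$-mass. Executing this carefully, so that the complements decompose into $O(1)$ 0-monochromatic rectangles per proof and the total count stays $2^{O(k)}$, is the technical heart of the argument.
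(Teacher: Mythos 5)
Both halves of your proposal have genuine gaps, and they sit exactly at the technically hard points; note also that the paper itself does not prove this fact but imports it from \cite{goos2015zero}, so the relevant comparison is with that construction and argument. For the upper bound, your skeleton (one check per rectangle of an optimal $0$-cover) is the right one, but the mechanism fails: if Merlin's bit $b_j$ merely designates which player certifies $(x,y)\notin R_j$, then on a $1$-input with $x\notin A_j$ \emph{and} $y\notin B_j$ both designations are accepted, so the accepting proof is not unique; and XOR-masking cannot repair this, since for each fixed outcome of the coins a mask is a bijection of the proof space and preserves the number of accepting proofs, while a mask depending on both players' coins cannot be undone or verified locally (the coins are private and independent). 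What is needed is a per-rectangle randomized gadget in which Alice's accept-set depends only on $(\,[x\in A_j],\,\text{her coin})$ and Bob's only on $(\,[y\in B_j],\,\text{his coin})$, the two sets intersect in \emph{exactly one} uniformly distributed symbol whenever $[x\in A_j]\wedge[y\in B_j]=0$, and are disjoint with probability $1/2$ otherwise; e.g.\ take the symbol in $\{0,1\}^2$ and let each player accept a uniformly random line of the affine plane over $\mathbb{F}_2$ whose direction is fixed by their membership bit, with the two ``inside'' directions equal and the three directions otherwise distinct. Designing such a gadget (an Equality-style protocol, which is what the cited construction generalizes) is the actual content of the upper bound, and it is precisely the step you delegate to an XOR mask that cannot work.

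For the lower bound, the pivotal claim that a $0$-input ``must be deficient on at least one $i$ with $\mu(i)>0$'' is false, not merely unjustified. Writing $p_i(x)=\Pr_r[(x,r)\in A_i]$ and $q_i(y)=\Pr_q[(y,q)\in B_i]$, completeness~$1$, unambiguity and zero-information give $p_i(x)q_i(y)=\mu(i)$ for every $1$-input and every $i$; but on a $0$-input the events ``proof $i$ accepts'' may overlap, so soundness only yields $p_i(x)q_i(y)\le 1/2$ for each $i$ separately and does not force $p_i(x)q_i(y)<\mu(i)$ for any $i$. Indeed, in the very gadget protocol sketched above the $0$-input has $p_c(x)q_c(y)=\mu(c)$ for \emph{every} proof $c$: soundness comes entirely from several proofs accepting on the same coin outcomes, a phenomenon your argument never sees because you never use unambiguity on the $0$-side. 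A correct extraction must combine two families of $0$-monochromatic rectangles: dyadically thresholded rectangles of the form $\{x: p_i(x)\in I\}\times\{y: q_i(y)\ \text{off the matching range}\}$ certifying deviation from the hyperbola $p_i(x)q_i(y)=\mu(i)$ (your single rectangle $T_i$ containing all $1$-inputs and missing many $0$-inputs does not exist in general, since $1$-inputs satisfy a product constraint, not a box constraint), together with ``overlap certificates'' $\{x:\Pr_r[(x,r)\in A_i\cap A_j]>0\}\times\{y:\Pr_q[(y,q)\in B_i\cap B_j]>0\}$, which are $0$-monochromatic by unambiguity and cover the $0$-inputs whose per-proof masses essentially match $\mu$; restricting to proofs with $\mu(i)\ge 2^{-O(k)}$ keeps the total count at $2^{O(k)}$ and gives $N^0(f)\le O(k)$. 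Without the overlap certificates the cover you describe cannot be completed.
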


Proving lower bounds on $AM(f)$ has been notoriously known to be hard~\cite{Klauck2003, klauck2011arthur, lokam2001spectral}. The only known $AM(f)$ lower bounds follow from the observation that $AM(f) \geq \Omega(\log R^{\operatorname{priv}}_\epsilon(f))$ for all $f$~\cite{landscape}. 

\begin{fact}[\cite{landscape}]\label{fact:Rpriv-AM}
For all $f$, we have $AM^{\operatorname{priv}}_{1, 1/2}(f) \geq \Omega(\log R^{\operatorname{priv}}_\epsilon(f))$. 
\end{fact}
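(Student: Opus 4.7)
The plan is to establish the inequality by a direct simulation: converting any private-coin Arthur--Merlin protocol for $f$ with perfect completeness and soundness $1/2$ into a private-coin randomized protocol for $f$ of at most exponentially larger communication cost. Let $k = AM^{\operatorname{priv}}_{1,1/2}(f)$, so that Merlin's message enumerates at most $m = 2^k$ proofs $R_1, \ldots, R_m$, each of which is a combinatorial rectangle $A_i \times B_i$ with $A_i \subseteq \mathcal X \times \mathcal R$ and $B_i \subseteq \mathcal Y \times \mathcal Q$.

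First I would observe that, given input $x$ and a private random string $r \in \mathcal R$, Alice can locally compute the bit $a_i \coloneqq \mathbbm{1}[(x,r) \in A_i]$ for every index $i \in [m]$; similarly, Bob computes $b_i \coloneqq \mathbbm{1}[(y,q) \in B_i]$ from $y$ and his independent private random string $q \in \mathcal Q$. The acceptance event of the AM protocol on $(x,y)$ with randomness $(r,q)$ is precisely that there exists some $i$ with $a_i = b_i = 1$, and by the completeness/soundness guarantees this event has probability $1$ on $1$-inputs and probability at most $1/2$ on $0$-inputs, the probabilities being taken over the independent draws of $r$ and $q$.

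Next I would describe the simulating private-coin randomized protocol: Alice samples $r$ privately, forms the vector $(a_1, \ldots, a_m) \in \{0,1\}^m$, and sends it to Bob using $m = 2^k$ bits; Bob samples $q$ privately, forms $(b_1, \ldots, b_m)$, and outputs $1$ iff there exists an index $i$ with $a_i = b_i = 1$. This yields a one-sided-error private-coin protocol of cost $2^k$ that is perfectly correct on $1$-inputs and errs with probability at most $1/2$ on $0$-inputs. A constant number of independent parallel repetitions (accepting only if every copy accepts) drives the soundness error below any desired constant $\epsilon$ at the cost of only a constant factor in communication, so $R^{\operatorname{priv}}_\epsilon(f) \leq O(2^k)$. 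Taking logarithms gives $\log R^{\operatorname{priv}}_\epsilon(f) \leq O(k) = O(AM^{\operatorname{priv}}_{1,1/2}(f))$, which is the claimed bound.

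I do not anticipate a genuine obstacle; the argument is essentially a "try all of Merlin's messages in parallel" reduction. The only point that requires a bit of care is that in the AM model Alice's and Bob's random strings $r$ and $q$ are already independent, so the simulation's independent private sampling exactly preserves the AM acceptance distribution on $(x,y)$, and no coordination between Alice and Bob beyond the $m$-bit message is needed.
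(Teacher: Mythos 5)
Your proof is correct. The paper does not give its own proof of this fact (it is cited as an observation from the "landscape" paper of G\"o\"os, Pitassi and Watson), and your simulation---Alice privately samples $r$ and sends the $2^k$-bit indicator vector of which proofs she would accept, Bob checks for a common accepting index, and the resulting one-sided error $1/2$ on $0$-inputs is driven down to $\epsilon$ by constantly many independent repetitions---is exactly the standard argument behind that citation, with the rectangle structure $R_i = A_i\times B_i$ you rely on being guaranteed by the paper's definition of a private-coin AM protocol (Alice and Bob accept separately based on $(x,r)$ and $(y,q)$).
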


Chakrabarti et al.~\cite{chakrabarti2015verifiable} gave a hierarchy of  communication models, called Online Interactive Proofs (OIPs). These models consist of $OMA^{[k]}$, $OIP^{[k]}$ and $OIP^{[k]}_+$ for $k \geq 1$, which are described as follows: In each model, Alice and Bob first toss some hidden coins (these coins are shared between Alice and Bob, but are not known to Merlin). Upon receiving the input, 
\begin{enumerate}[(i)]
    \item Merlin and Bob communicate for $k$ rounds, with Merlin being the last to communicate at the end of the $k$ rounds. 
    \item Alice sends Bob a message, randomized using public coins that are hidden from Merlin.
\end{enumerate}
Then, Bob produces an output in $\{0, 1\}$. The difference between the
three models are as follows.
\begin{itemize}
    \item $OMA^{[k]}$: (i) happens before (ii) and Bob can only look at his input after communicating with Merlin. 
    \item $OIP^{[k]}$: (i) happens before (ii) and Bob may choose to look at his input before communicating with Merlin.
    \item $OIP^{[k]}_+$: Similar to $OIP^{[k]}$, with the difference being that (ii) happens before (i). In this case, Bob's message may depend on Alice’s actual message to Bob, not just on his own input and the public coins. 
\end{itemize}

The soundness and completeness for all three models are defined similarly as that of the AM model. The communication cost of a protocol that computes a function $f$ is the sum of the maximum number of bits sent by Alice, the private coin and the number of bits communicated between Bob and Merlin. The corresponding communication complexity is the cost of the cheapest protocol that computes $f$ with soundness $s$ and completeness $c$. We shall denote them as  $OMA_{c, s}^{[k]}(f)$, $OIP_{c, s}^{[k]}(f)$ and $OIP^{[k]}_{+, c, s}(f)$. 

\begin{fact}[{\cite[Corollary~5.14]{chakrabarti2015verifiable}}]\label{fact:AM=OIOP4}
For all $f: \mathcal X \times \mathcal Y \rightarrow \{0, 1\}$, we have $\Omega(AM_{2/3, 1/3}(f)^c) \leq OIP_{2/3, 1/3}^{[4]}(f) \leq O(AM_{2/3, 1/3}(f) \log AM_{2/3, 1/3}(f))$. 
\end{fact}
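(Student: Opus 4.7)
The goal is to establish that the four-round Online Interactive Proof model $OIP^{[4]}$ is equivalent to the Arthur-Merlin model $AM^{\operatorname{priv}}$ up to polynomial (lower bound) and polylogarithmic (upper bound) overhead. My plan is to exhibit two simulations, one in each direction, and combine them.

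For the upper bound $OIP^{[4]}_{2/3,1/3}(f) \leq O(AM_{2/3,1/3}(f)\log AM_{2/3,1/3}(f))$, I would start from an $AM^{\operatorname{priv}}$ protocol $\pi$ of cost $k = AM_{2/3,1/3}(f)$ and transform it into an $OIP^{[4]}$ protocol. The obstruction is that in $AM$, Merlin sends one proof to both parties and each verifies with its own private randomness, whereas in $OIP^{[4]}$, Alice is restricted to a single one-shot message to Bob, and Merlin interacts only with Bob. The plan is: (i) amplify $\pi$ via parallel repetition to reduce its error to $1/\operatorname{poly}(k)$ at the price of an $O(\log k)$ multiplicative factor in cost; (ii) let Alice send Bob a short message encoding her share of the shared coins $r_A$ together with the outcome of her AM-verification routine as a function of the proof $m$ (which she can summarise succinctly because her verification predicate depends on only $O(k)$ relevant bits of $r_A$ once $m$ is fixed); (iii) use the $4$ rounds with Merlin so that Bob extracts from Merlin the candidate proof $m$, tests his own side of the $AM$ verification, and cross-checks Alice's side using her message. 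Soundness is preserved because the shared coins remain hidden from Merlin throughout.

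For the lower bound $OIP^{[4]}_{2/3,1/3}(f) \geq \Omega(AM_{2/3,1/3}(f)^{c})$, I would start from an $OIP^{[4]}$ protocol $\sigma$ of cost $k$ and collapse its four Merlin-Bob rounds into a single Merlin-to-both broadcast using a Babai--Moran-style round reduction. Concretely, Merlin is asked to guess Bob's random queries (sampled from the shared hidden coins) and to supply all the answers to Bob up front; Alice and Bob then independently verify consistency of Merlin's transcript with their inputs and their shared coins, accepting only if every consistency check passes. Each round collapse increases the message length by a polynomial factor (needed to reduce the soundness error enough to survive the union bound over Merlin's guesses) and requires amplification of the completeness/soundness gap; applying this constantly many times (once per round) yields a total polynomial blowup with some exponent $c$.

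The main obstacle is handling the asymmetry of $OIP^{[4]}$: Alice's message is one-way and Merlin cannot see the shared coins, so the direct $AM$-to-$OIP$ translation must route all Alice-side verification through a succinct encoding that Bob can check with Merlin's help. On the reverse side, the round-collapse argument must carefully control the soundness loss when Merlin is asked to commit to responses for all possible Bob queries; getting a polynomial rather than exponential blowup requires either partial guessing with amplification or a Goldwasser--Sipser-style set-size lower bound protocol adapted to the communication setting. Managing these amplification/soundness calculations so that the completeness and soundness parameters land at $2/3$ and $1/3$ respectively at the end is the most delicate bookkeeping in the argument.
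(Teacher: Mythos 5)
This statement is not proved in the paper at all: it is quoted verbatim as Fact~\ref{fact:AM=OIOP4} from Corollary~5.14 of Chakrabarti et al., so the only meaningful comparison is between your sketch and the argument in that reference. Your overall architecture (two simulations, one per direction, with amplification bookkeeping) is the right shape, and your lower-bound direction --- collapsing the four Merlin--Bob messages via Goldwasser--Sipser/Babai--Moran style public-coin and round-reduction steps with polynomial blowup --- is essentially the known route and is fine as a sketch.

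The genuine gap is in the upper bound $OIP^{[4]}_{2/3,1/3}(f) \leq O(AM(f)\log AM(f))$. Your step (ii) asks Alice to send ``the outcome of her AM-verification routine as a function of the proof $m$'' and justifies succinctness by saying that, once $m$ is fixed, her predicate depends on few bits of $r_A$. That does not compress anything: the object Bob needs access to is the map $m \mapsto [\text{Alice accepts } m \text{ given } (x,r_A)]$, a truth table of length $2^{k}$ over all $k$-bit proofs, and locality in $r_A$ for each fixed $m$ gives no short encoding of this table. Moreover, your soundness remark (``the shared coins remain hidden from Merlin throughout'') is in tension with completeness: for a $1$-input the accepting AM proof depends on the realized coins $(r,q)$, so a Merlin who learns nothing about them cannot reliably supply it --- that would only simulate an MA-type protocol. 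The actual simulation resolves exactly this tension: the AM coins $(r,q)$ \emph{are} revealed to Merlin through Bob (AM soundness already tolerates a coin-aware Merlin), while what stays hidden is a random evaluation point $z$; Alice's single message is the evaluation at $z$ of a low-degree extension of her exponentially long accept-vector $(\chi_i)_i$ with $\chi_i = 1$ iff $(x,r)\in A_i$, and the four Merlin--Bob messages implement a line/polynomial-evaluation subprotocol in which Merlin names $i$ and proves the claimed value $\chi_i$ against Alice's fingerprint (Schwartz--Zippel gives soundness because $z$ is hidden). This algebraic fingerprint is precisely where the $\log$ factor in $O(AM(f)\log AM(f))$ comes from, and it is the ingredient your proposal is missing.
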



\subsection{Class 2}\label{sec:class_2}
This class contains three complexity measures. Namely, the $\gamma_2$ norm, the randomized public coin communication complexity with exact error and the quantum communication complexity with prior entanglement and exact error. The equivalence relationship between  these complexity measures is depicted in Figure~\ref{C2}. 

\begin{figure}[H]
\centering
\begin{tikzpicture}[->,>=Stealth,auto,node distance=2.5cm, thick]

    \node (gamma_2) {$\gamma_2(f)$};
    \node (Q^*_epsilon)[below right of=gamma_2] {$Q^*_{=\epsilon}(f)$} ;
    \node (R^pub_epsilon) [above right of=Q^*_epsilon]{$R^{\operatorname{pub}}_{=\epsilon}(f)$};

    \draw [->] (gamma_2) to node[midway, left] {Lemma~\ref{facat:gamma2-Q^*_=epsilon}} (Q^*_epsilon);
    \draw [->] (Q^*_epsilon) to node[midway, right] {Fact~\ref{fact:Q*epsilon-Rprivepsilon}} (R^pub_epsilon);
    \draw [->] (R^pub_epsilon) to node[midway, above] {Lemma~\ref{lem:class2_UB}} (gamma_2);
\end{tikzpicture}
\caption{Relationship between complexity measures in Class 2.}
    \label{C2}
\end{figure}
\begin{theorem}\label{thm:2}
Class 2 includes the following complexity measures: 
\begin{multicols}{3}
    \begin{itemize}
        \item $\gamma_2(f)$
        \item $R^{\operatorname{pub}}_{=\epsilon}(f)$
        \item $Q^*_{=\epsilon}(f)$
    \end{itemize}
\end{multicols}
\end{theorem}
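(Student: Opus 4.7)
The plan is to mirror the structure of the proof of Theorem~\ref{thm:1}: establish a cyclic chain of inequalities
$$\gamma_2(f) \;\longrightarrow\; Q^*_{=\epsilon}(f) \;\longrightarrow\; R^{\operatorname{pub}}_{=\epsilon}(f) \;\longrightarrow\; \gamma_2(f),$$
from which it follows that all three measures are simultaneously constant or simultaneously non-constant. Since only three complexity measures need to be handled, the proof itself reduces to invoking the three facts drawn in Figure~\ref{C2}, together with a short justification of why a cyclic chain suffices for constant-equivalence.

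For the first edge, I would appeal to Fact~\ref{facat:gamma2-Q^*_=epsilon}, which (matching the arrow in Figure~\ref{C2}) gives an upper bound of the form $Q^*_{=\epsilon}(f) \leq g(\gamma_2(f))$ for some function $g$. For the second edge, $Q^*_{=\epsilon}(f) \rightarrow R^{\operatorname{pub}}_{=\epsilon}(f)$, the natural route is Fact~\ref{fact:Q*epsilon-Rprivepsilon}: one simulates an exact-error entanglement-assisted quantum protocol by a public-coin randomized protocol with the same oblivious error, at a polynomial blow-up in communication. Finally, for the closing edge $R^{\operatorname{pub}}_{=\epsilon}(f) \rightarrow \gamma_2(f)$, I would invoke Fact~\ref{lem:class2_UB}, which packages precisely the bound highlighted in the paper's first ``Result'', namely $\log \gamma_2(f) \leq R^{\operatorname{pub}}_{=\epsilon}(f) \leq O((\gamma_2(f))^2)$; in particular, $R^{\operatorname{pub}}_{=\epsilon}(f) = O(1)$ implies $\gamma_2(f) = O(1)$ (via the $\log \gamma_2 \leq R^{\operatorname{pub}}_{=\epsilon}$ direction), and $\gamma_2(f) = O(1)$ implies $R^{\operatorname{pub}}_{=\epsilon}(f) = O(1)$ (via the quadratic upper bound).

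The main obstacle is not in the writing of Theorem~\ref{thm:2} itself, which is a one-line cyclic invocation, but in the fact that is being cited for the third edge, namely Fact~\ref{lem:class2_UB} / the new Result~1 of the paper. The hard direction there is the upper bound $R^{\operatorname{pub}}_{=\epsilon}(f) \leq O((\gamma_2(f))^2)$: this is what prevents one from dropping the ``exact error'' condition and collapsing Class~2 into Class~3. Once that bound is in hand, combined with the folklore $\log \gamma_2(f) \leq R^{\operatorname{pub}}_{=\epsilon}(f)$ lower bound and the standard simulation of entanglement-assisted quantum protocols by public-coin protocols (with identical distribution on outputs, so that exact error is preserved), the cyclic chain closes immediately and Theorem~\ref{thm:2} follows by the same argument template used to prove Theorem~\ref{thm:1}.
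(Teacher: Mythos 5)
Your high-level template is the same as the paper's: a three-edge cyclic chain among $\gamma_2(f)$, $Q^*_{=\epsilon}(f)$ and $R^{\operatorname{pub}}_{=\epsilon}(f)$ citing the same three statements, and the $\gamma_2$ versus $R^{\operatorname{pub}}_{=\epsilon}$ part of your argument is sound: $R^{\operatorname{pub}}_{=\epsilon}(f)\leq O((\gamma_2(f))^2)$ is Lemma~\ref{lem:class2_UB} and $\log\gamma_2(f)\leq R^{\operatorname{pub}}_{=\epsilon}(f)$ is Lemma~\ref{lem:class2_LB} (you fold both into one citation, which is harmless). The gap is that you state both quantum-related edges with the cited results inverted. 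Lemma~\ref{facat:gamma2-Q^*_=epsilon} gives $\gamma_2(f)\leq 2^{Q^*_{=\epsilon}(f)}$, i.e.\ a lower bound on $Q^*_{=\epsilon}(f)$, not the upper bound $Q^*_{=\epsilon}(f)\leq g(\gamma_2(f))$ you attribute to it. More seriously, Fact~\ref{fact:Q*epsilon-Rprivepsilon} states $Q^*_{=\epsilon}(f)\leq R^{\operatorname{pub}}_{=\epsilon}(f)$, proved by letting the entanglement-assisted quantum protocol simulate the public-coin classical one; it does not provide, and no known result provides, the ``standard simulation of entanglement-assisted quantum protocols by public-coin protocols'' with polynomial blow-up on which your second edge and your closing paragraph rely. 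As written, the implication ``$Q^*_{=\epsilon}(f)=O(1)\Rightarrow R^{\operatorname{pub}}_{=\epsilon}(f)=O(1)$'' rests on an unsupported simulation claim, so this is a genuine gap and not merely a slip in the arrow convention.

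The repair is simply to traverse the cycle in the orientation the cited results actually give, which is what the paper does: $Q^*_{=\epsilon}(f)=O(1)$ implies $\gamma_2(f)=O(1)$ by Lemma~\ref{facat:gamma2-Q^*_=epsilon}; $\gamma_2(f)=O(1)$ implies $R^{\operatorname{pub}}_{=\epsilon}(f)=O(1)$ by Lemma~\ref{lem:class2_UB}; and $R^{\operatorname{pub}}_{=\epsilon}(f)=O(1)$ implies $Q^*_{=\epsilon}(f)=O(1)$ by Fact~\ref{fact:Q*epsilon-Rprivepsilon}. If you insist on bounding $R^{\operatorname{pub}}_{=\epsilon}$ in terms of $Q^*_{=\epsilon}$, it is available only indirectly and exponentially, via $R^{\operatorname{pub}}_{=\epsilon}(f)\leq O((\gamma_2(f))^2)\leq O\bigl(4^{Q^*_{=\epsilon}(f)}\bigr)$, which still suffices for constant-equivalence but is not the polynomial-overhead classical simulation you assert.
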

\begin{proof}
    We have $Q^*_{=\epsilon}(f)\rightarrow R^{\operatorname{pub}}_{=\epsilon}(f)$ by Fact~\ref{fact:Q*epsilon-Rprivepsilon}; $R^{\operatorname{pub}}_{=\epsilon}(f)\rightarrow \gamma_2(f)$ by Lemma~\ref{lem:class2_UB}; $\gamma_2(f)\rightarrow Q^*_{=\epsilon}(f)$ by Lemma~\ref{facat:gamma2-Q^*_=epsilon}. 
\end{proof}

We now describe the facts used in the proof of Theorem~\ref{thm:2}. Define oblivious-error public-coin randomized communication complexity as the usual public-coin randomized  communication complexity with the additional requirement that the error is exactly the same for all inputs $x,y$. In other words, let $\epsilon>0$ be the global error such that all 1-inputs are accepted with probability exactly $1-\epsilon$ and all 0-inputs are accepted with probability exactly $\epsilon$. For more flexibility, one can also define different errors for 1- and 0-inputs\footnote{More specifically, there exists $\epsilon_0, \epsilon_1>0$ such that 1-inputs are accepted with probability exactly $1 - \epsilon_1$ and 0-inputs are accepted with probability exactly $\epsilon_0$. Nevertheless, both measures are equivalent.}. For instance, the standard public coin protocol for the Equality problem has error 0 on 1-inputs and error $1/2$ on 0-inputs. We denote the oblivious-error randomized public coin communication complexity of a function $f$ as $R^{\operatorname{pub}}_{=\epsilon}(f)$. 

We first state an obvious fact.  This fact holds since we can replace classical transformations with their corresponding quantum counterpart. The public coin feature in randomized communication complexity corresponds to both players measuring the quantum state $\sum_i \sqrt{p(i)}\ket{i}$ in the same basis. 
\begin{fact}\label{fact:Q*epsilon-Rprivepsilon}
Let $f:\mathcal X \times \mathcal Y \rightarrow \{0, 1\}$ . Then,    $Q^*_{=\epsilon}(f)\leq R^{\operatorname{pub}}_{=\epsilon}(f)$. 
\end{fact}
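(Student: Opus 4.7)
The plan is to give a direct simulation: take any optimal exact-error public-coin randomized protocol and mechanically convert it into a quantum protocol with prior shared entanglement that has the same cost and the same acceptance probabilities on every input, so the ``exactly $\epsilon$'' property is preserved verbatim.

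More concretely, let $\Pi$ be an exact-error public-coin randomized protocol for $f$ that achieves $R^{\operatorname{pub}}_{=\epsilon}(f) = c$. Say $\Pi$ uses a shared random string $r$ drawn from some distribution $\{p(r)\}_{r \in \mathcal{R}}$, after which Alice and Bob run a deterministic protocol $\Pi_r$ of cost at most $c$. I would build a quantum protocol $\Pi'$ as follows. Before the inputs arrive, Alice and Bob share the entangled state
\[
|\psi\rangle \;=\; \sum_{r \in \mathcal{R}} \sqrt{p(r)}\,|r\rangle_A \otimes |r\rangle_B,
\]
which is producible from EPR pairs and local unitaries (this is exactly the correspondence hinted at in the paragraph preceding the fact). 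Upon receiving their inputs, Alice and Bob each measure their register in the computational basis, obtaining the same outcome $r$ with probability exactly $p(r)$. They then execute $\Pi_r$ qubit-for-bit: each classical message bit $b$ is replaced by the basis state $|b\rangle$ sent over the quantum channel, and every classical local computation is implemented as the same permutation on basis states (which is unitary). Bob's final deterministic output bit becomes his measurement outcome in the computational basis.

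The key observation is that all intermediate registers remain in computational basis states throughout the simulation, so $\Pi'$ is, register by register, a classical reversible embedding of $\Pi_r$ conditioned on the shared randomness $r$. Consequently, for every fixed input $(x,y)$, the probability that $\Pi'$ outputs $1$ equals $\sum_r p(r) \Pr[\Pi_r(x,y)=1]$, which is exactly the acceptance probability of $\Pi$ on $(x,y)$. Therefore $\Pi'$ accepts each $1$-input with probability exactly $1-\epsilon$ and each $0$-input with probability exactly $\epsilon$. The quantum communication cost equals the classical communication cost of $\Pi$, giving $Q^*_{=\epsilon}(f) \leq c = R^{\operatorname{pub}}_{=\epsilon}(f)$.

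I do not expect any real obstacle here; the proof is essentially a bookkeeping exercise. The only mild subtlety worth spelling out is that the simulation must preserve the acceptance probabilities exactly (not just up to $\epsilon$), and this is guaranteed because (i) measuring $|\psi\rangle$ reproduces the distribution $p(r)$ exactly, and (ii) implementing classical reversible circuits as unitaries on basis states introduces no extra error. The only step requiring a momentary check is that the classical protocol $\Pi_r$ may use irreversible local operations; these are handled in the standard way by appending ancillas so that the local computations become reversible, which inflates neither the communication cost nor the acceptance probability.
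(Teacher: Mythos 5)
Your proposal is correct and follows exactly the paper's (much terser) argument: measure the shared state $\sum_r \sqrt{p(r)}\ket{r}\ket{r}$ in a common basis to obtain the public randomness, then replace classical transformations by their quantum (reversible) counterparts, which preserves the acceptance probabilities exactly. Your write-up merely fills in the bookkeeping that the paper leaves implicit, so there is nothing further to reconcile.
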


In the following, we prove that $R_{=\epsilon}^{\operatorname{pub}} (f)$ is at least as large as the log of $\gamma_2(f)$. 
\begin{lem}\label{lem:class2_LB}
    $R_{=\epsilon}^{\operatorname{pub}} (f) \geq \log \gamma_2(f)$. 
\end{lem}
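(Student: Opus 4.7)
The plan is to analyze the acceptance-probability matrix of an optimal oblivious-error public-coin protocol and upper-bound its $\gamma_2$ norm directly in terms of the communication cost; a short algebraic rearrangement then translates this bound into one on $\gamma_2(M_f)$. Let $c=R^{\operatorname{pub}}_{=\epsilon}(f)$ and fix a protocol achieving this cost. Denote by $P \in [0,1]^{\mathcal X\times\mathcal Y}$ the matrix of acceptance probabilities; by the oblivious-error hypothesis one has the exact identity
\[
P \;=\; \epsilon\, J \;+\; (1-2\epsilon)\, M_f,
\]
where $J$ is the all-ones matrix.

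The first step is to condition on the public coin and write $P = \mathbb{E}_r[A_r]$, where $A_r \in \{0,1\}^{\mathcal X\times\mathcal Y}$ is the acceptance matrix of the deterministic protocol obtained by fixing the shared randomness to $r$. Each such deterministic protocol has at most $2^c$ leaves, and its accepting leaves index rectangles $A_\ell\times B_\ell$ whose indicator $\mathbf{1}_{A_\ell}\mathbf{1}_{B_\ell}^T$ is rank one. Using the factorization into this outer product directly in the definition of $\gamma_2$ shows that every rectangle indicator has $\gamma_2\leq 1$ (both the maximum row $\ell_2$-norm of the column factor and the maximum column $\ell_2$-norm of the row factor are $\leq 1$). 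Since $\gamma_2$ is a matrix norm, the triangle inequality applied over the at most $2^c$ accepting rectangles yields $\gamma_2(A_r)\leq 2^c$, and convexity (subadditivity over the convex combination) then gives $\gamma_2(P)\leq \mathbb{E}_r \gamma_2(A_r)\leq 2^c$.

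The second step rearranges the identity above to obtain a bound on $\gamma_2(M_f)$. Using $\gamma_2(J)=1$ and the triangle inequality,
\[
\gamma_2(M_f) \;=\; \frac{1}{1-2\epsilon}\,\gamma_2\!\bigl(P-\epsilon J\bigr) \;\leq\; \frac{\gamma_2(P)+\epsilon}{1-2\epsilon} \;\leq\; \frac{2^c+\epsilon}{1-2\epsilon}.
\]
Taking logarithms yields $R^{\operatorname{pub}}_{=\epsilon}(f)\geq \log \gamma_2(f) - O_\epsilon(1)$, which is the stated lemma in the constant-equivalence framework that this section is formulated in.

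The only non-trivial point is really bookkeeping: the derivation naturally produces $\gamma_2(M_f)\leq O_\epsilon(2^c)$ rather than a clean $\gamma_2(M_f)\leq 2^c$, so the stated inequality is best read as holding up to an additive $\epsilon$-dependent constant (which is absorbed in the constant-equivalence equivalence relation). If an additive-constant-free statement is desired, one can absorb the slack by observing that any non-trivial protocol uses at least a constant number of bits, or by carrying out the same argument with the matrix $P$ replaced by $2P-J$ so that the factor $1-2\epsilon$ multiplies the $\pm 1$-valued target. Beyond that, the argument is essentially the observation that $\gamma_2$ is a norm assigning value at most $1$ to every rectangle indicator and every deterministic protocol is a sum of $\leq 2^c$ such indicators.
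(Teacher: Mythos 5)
Your proof is correct, and it takes a genuinely different route from the paper's. The paper argues via quantum protocols: it uses $Q^*_{=\epsilon}(f)\le R^{\operatorname{pub}}_{=\epsilon}(f)$ (Fact~\ref{fact:Q*epsilon-Rprivepsilon}) to regard the cost-$c$ oblivious-error randomized protocol as a cost-$\le c$ entanglement-assisted quantum protocol, and then invokes Lemma~12 of Linial and Shraibman to bound the $\gamma_2$ norm of the acceptance-probability matrix of such a protocol by $2^c$. You instead re-derive the classical special case of that lemma from first principles: condition on the public coin, write each deterministic acceptance matrix as a sum of at most $2^c$ rectangle indicators, observe each indicator has $\gamma_2\le 1$, and use that $\gamma_2$ is a norm (triangle inequality plus convexity over the coin). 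Your argument is more elementary and self-contained, avoiding the quantum detour and the external citation; the paper's route is shorter given the cited lemma and reuses the same lemma for the quantum statement in Lemma~\ref{facat:gamma2-Q^*_=epsilon}. One further point: both arguments really bound $\gamma_2(P)$ for the acceptance-probability matrix $P=\epsilon J+(1-2\epsilon)M_f$, and passing to $\gamma_2(M_f)$ costs the additive $O_\epsilon(1)$ term you make explicit; the paper's write-up silently identifies $\gamma_2(f)$ with $\gamma_2$ of the $\{\epsilon,1-\epsilon\}$-valued matrix, so your bookkeeping caveat flags a looseness that is present but unacknowledged in the paper's own proof, and is immaterial in the constant-equivalence framework.
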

\begin{proof}
   Let $\epsilon>0$ be a small constant. Consider a $c$-round public-coin randomized communication protocol that accepts 0-inputs with probability $\epsilon$ and accepts 1-inputs with probability $1 - \epsilon$. Let $M\in\{\epsilon, 1 - \epsilon\}^{\vert \mathcal X\vert \times \vert \mathcal Y\vert}$ be the matrix whose entries $M_{x,y}$ correspond to the acceptance probabilities of the protocol on input $(x, y)$ for all $(x, y)\in \mathcal X\times \mathcal Y$. On the other hand, consider an optimal quantum protocol with shared entanglement that has cost $T$, such that the protocol accepts 0-inputs with probability at most $\epsilon$ and accepts 1-inputs with probability at least $1 - \epsilon$. Let $M'\in[0, 1]^{\vert \mathcal X\vert \times \vert \mathcal Y\vert}$ be the corresponding matrix of acceptance probabilities. Notice that for two sets of vectors $U\in\{\epsilon, 1 - \epsilon\}^n$ and $V = \{v\in[0, 1]^n\vert v(i)\in [0, \epsilon]\cup [1 - \epsilon, 1], \forall i\in[n]\}$, the following is true: 
   \begin{align*}
       \max_{u\in U}\Vert u\Vert_2 \leq \max_{v\in V} \Vert v\Vert_2. 
   \end{align*}
   Therefore, 
   \begin{align*}
       \gamma_2(M) 
       & = \min_{A, B: AB = M} \Vert A\Vert_r\Vert B\Vert_c \\
       & \leq \min_{A', B': A'B' = M'} \Vert A'\Vert_r\Vert B'\Vert_c \\
       & \leq 2^T \\
       & \leq 2^c,
   \end{align*}
   where the first inequality is due to {\cite[Lemma~12]{linial2007lower}} and the second inequality is due to Fact~\ref{fact:Q*epsilon-Rprivepsilon}.   
\end{proof}

We also prove that $R^{\operatorname{pub}}_{=\epsilon}(f)$ is polynomially  bounded from above by $\gamma_2(f)$. 

\begin{lem}\label{lem:class2_UB}
Let $f:\mathcal X \times \mathcal Y \rightarrow \{0, 1\}$. Then, $R_{=\epsilon}^{\operatorname{pub}} (f) \leq O((\gamma_2(f))^2)$. 
\end{lem}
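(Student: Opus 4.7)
The plan is to build an $O(\gamma^2)$-bit public-coin protocol whose acceptance probability is an \emph{input-independent} function of $f(x,y)$ alone, via Goemans--Williamson random-hyperplane rounding, and then calibrate it with a single public coin flip to achieve exact error $\epsilon$ on both sides.

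First, I would fix an optimal factorization $M_f = AB$ with $\lVert A\rVert_r = \lVert B\rVert_c = \sqrt{\gamma}$, where $\gamma = \gamma_2(f)$. Writing the rows of $A$ as $a_x$ and the columns of $B$ as $b_y$, we have $\langle a_x, b_y\rangle = f(x,y)\in\{0,1\}$ and $\lVert a_x\rVert,\lVert b_y\rVert \le \sqrt{\gamma}$. The critical preprocessing step is to equalize all vector norms exactly to $\sqrt{\gamma}$ by padding: enlarge the ambient space by two disjoint blocks of coordinates indexed by $\mathcal X$ and $\mathcal Y$, place $\sqrt{\gamma - \lVert a_x\rVert^2}$ in the $x$-coordinate of the first block for $a_x$ (zeros elsewhere), and symmetrically for each $b_y$ using the second block. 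The disjointness of the blocks preserves $\langle a_x, b_y\rangle = f(x,y)$, while now $\lVert a_x\rVert = \lVert b_y\rVert = \sqrt{\gamma}$, so that $\cos\theta_{x,y} = f(x,y)/\gamma \in \{0,1/\gamma\}$ independently of the particular $(x,y)$.

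Next, using shared randomness to draw a standard Gaussian $g$, Alice sends $\sigma_A = \operatorname{sgn}\langle a_x, g\rangle$ in one bit and Bob forms $\sigma_B = \operatorname{sgn}\langle b_y, g\rangle$. The Goemans--Williamson identity gives $\Pr[\sigma_A = \sigma_B] = 1 - \theta_{x,y}/\pi$, which, thanks to the norm equalization, depends \emph{only} on $f(x,y)$: it is exactly $1/2$ when $f=0$ and exactly $1/2 + \eta$ when $f=1$, where $\eta = 1/2 - \arccos(1/\gamma)/\pi = \Omega(1/\gamma)$. Repeating independently with $k = \Theta(\gamma^2)$ Gaussians, Alice sends $k$ bits total, and Bob outputs $1$ iff the number of agreements exceeds the midpoint threshold $k(1/2 + \eta/2)$. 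Since each round's agreement indicator is a Bernoulli whose parameter depends only on $f$, Hoeffding's inequality yields acceptance probabilities $\pi_0 \le \epsilon/2$ on $0$-inputs and $\pi_1 \ge 1-\epsilon/2$ on $1$-inputs, each of which is an \emph{exact} constant as a function of $f(x,y)$.

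Finally, to upgrade the exact but unequal probabilities $(\pi_0, \pi_1)$ to exactly $(\epsilon, 1-\epsilon)$, I would apply a one-public-coin calibration: with probability $q = (1-2\epsilon)/(\pi_1-\pi_0)$ output the raw answer, and with probability $1-q$ output an independent biased bit of bias $r = (\epsilon - q\pi_0)/(1-q)$. The bounds $\pi_0\le\epsilon/2$ and $\pi_1\ge 1-\epsilon/2$ ensure $q,r\in [0,1]$, and a direct computation gives the desired exact error pattern. The total communication is $k + O(1) = O(\gamma_2(f)^2)$, as claimed. The main obstacle is the norm-equalization padding: it is conceptually simple but essential, because without it the hyperplane agreement probability would depend on $\lVert a_x\rVert\lVert b_y\rVert$ individually and would thus vary with $(x,y)$, violating the obliviousness required by $R^{\operatorname{pub}}_{=\epsilon}$; a secondary concern is checking the calibration constants $q,r$ lie in $[0,1]$, which follows cleanly from the Hoeffding bounds produced in the amplification step.
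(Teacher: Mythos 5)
Your proof is correct, and it takes a noticeably different (and in one respect more careful) route than the paper's. The paper also reduces to estimating the inner product of the $\gamma_2$-factorization vectors with a public-coin protocol of cost $O(\gamma_2(f)^2)$, but it does so by normalizing $u_x,v_y$ and invoking the Kremer--Nisan--Ron inner-product estimation protocol with additive error $O(1/\gamma_2(f))$, then asserting that the resulting error ``only depends on the absolute value of the inner product'' and is hence the same for all inputs. That assertion is delicate, because after normalization the inner product equals $f(x,y)/(\lVert u_x\rVert\,\lVert v_y\rVert)$ and still varies across inputs with the same value of $f$; your norm-equalization padding (disjoint coordinate blocks carrying $\sqrt{\gamma-\lVert a_x\rVert^2}$ and $\sqrt{\gamma-\lVert b_y\rVert^2}$) is exactly the device that removes this dependence, so that the Goemans--Williamson agreement probability, and hence the acceptance probability of the amplified threshold protocol, is literally a function of $f(x,y)$ alone. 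Your final mixing step then pins the acceptance probabilities to exactly $(\epsilon,1-\epsilon)$, which matches the strict definition of $R^{\operatorname{pub}}_{=\epsilon}$, whereas the paper settles for obliviousness and the footnoted equivalence between the one-error and two-error variants; your check that $q=(1-2\epsilon)/(\pi_1-\pi_0)$ and $r=(\epsilon-q\pi_0)/(1-q)$ lie in $[0,1]$ indeed follows from $\pi_0\le\epsilon/2$, $\pi_1\ge 1-\epsilon/2$. In short: same $O(\gamma_2^2)$ cost and the same underlying random-projection idea, but you replace the black-box KNR estimator with explicit hyperplane rounding plus Hoeffding amplification, and you supply the padding and calibration steps that make the exact-error requirement airtight. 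The only cosmetic mismatch is the $\{0,1\}$ versus $\{-1,1\}$ convention for $M_f$ (the paper's proof speaks of $\pm1$ entries); this changes nothing beyond constant factors, since with $\pm1$ entries your padded vectors give $\cos\theta_{x,y}=\pm 1/\gamma$ and the same $\Omega(1/\gamma)$ gap in agreement probabilities.
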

\begin{proof}
We consider the definition of $\gamma_2(f)$ as in Equation (\ref{eqn:gamma_2}). Let $u_x, v_y$ be vectors of the optimal $\gamma_2$ decomposition of the communication matrix $M_f$ such that $\lVert u_x\rVert_2 , \lVert v_y\rVert_2 \leq \gamma_2(M_f)$. Given $x\in\mathcal X, y\in\mathcal Y$, Alice and Bob normalize their inputs to obtain $u_x,v_y$ and use a randomized public coin protocol~{\cite[Theorem~4.4]{kremer1999randomized}} with error at most $1/(10\gamma_2(f))$ to compute the inner product $\braket{u_x, v_y}$. This inner product is either -1 or 1 before normalization\footnote{This is due to the definition of $\gamma_2(f)$ and the fact that $M_f$ has entries $-1$ and $1$.}, so Alice and Bob can decide which is the case. Note that in this protocol, the error only depends on the absolute value of the inner product, since 
\begin{align*}
    \lvert \langle u_x, v_y\rangle  - \widetilde{\langle u_x, v_y\rangle}\vert\leq \frac{1}{10\gamma_2(f)} \leq \frac{1}{\lVert u_x\Vert_2\lVert v_y\Vert_2}\leq \frac{1}{\vert \langle u_x,  v_y\rangle \vert^2}
\end{align*}
by Cauchy-Schwartz inequality. Hence, the error is the same for all inputs.
\end{proof}

Lastly, we draw the connection between $\gamma_2(f)$ and $Q^*_{\epsilon}(f)$. 

\begin{lem}\label{facat:gamma2-Q^*_=epsilon}
Fix an $\epsilon > 0$. Let $f:\mathcal X \times \mathcal Y \rightarrow \{\epsilon, 1 - \epsilon\}$. Then, $\gamma_2(f) \leq 2^{Q^*_{=\epsilon}(f)}$. 
\end{lem}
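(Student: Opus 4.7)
The plan is to read off a $\gamma_2$-factorization of $M_f$ directly from an optimal entanglement-assisted quantum protocol. Let $T = Q^*_{=\epsilon}(f)$ and fix an optimal protocol $\pi$ that, treating $f$ as a Boolean function whose acceptance matrix is the $\{\epsilon,1-\epsilon\}$-valued matrix $M_f$, uses $T$ qubits of communication and accepts on input $(x,y)$ with probability exactly $f(x,y)$. Thus the acceptance-probability matrix of $\pi$ is $M_f$ itself, and it suffices to show that the acceptance-probability matrix of any $T$-qubit entanglement-assisted protocol admits a factorization $M_f = AB$ with $\|A\|_r \cdot \|B\|_c \leq 2^T$.

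To extract such a factorization I would invoke the Yao--Kremer-style decomposition used in Linial--Shraibman~\cite{linial2007lower}, adapted to the entanglement-assisted setting. Tracking the joint state of Alice and Bob through the $T$ qubits of two-way communication, the final pure state on input $(x,y)$ admits an expansion over the computational basis of the $2^T$-dimensional communication register $M$ of the form
\[
|\psi_{x,y}\rangle \;=\; \sum_{c \in \{0,1\}^T} |c\rangle_M \otimes |\alpha^x_c\rangle_A \otimes |\beta^y_c\rangle_B ,
\]
where $A$ and $B$ are Alice's and Bob's local registers (absorbing their halves of the initial entangled state). Applying the acceptance projector and regrouping yields an expression of the acceptance probability as a Euclidean inner product $\langle u_x, v_y\rangle$, where $u_x$ and $v_y$ are built by stacking the $|\alpha^x_c\rangle$ and $|\beta^y_c\rangle$ with the coefficients coming from the acceptance operator, and one verifies $\max_x \|u_x\|_2 \cdot \max_y \|v_y\|_2 \leq 2^T$.

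Packaging the $u_x$'s as rows of a matrix $\widetilde{A}$ and the $v_y$'s as columns of a matrix $\widetilde{B}$ gives $\widetilde{A}\widetilde{B} = M_f$ with $\|\widetilde{A}\|_r \cdot \|\widetilde{B}\|_c \leq 2^T$, and hence $\gamma_2(M_f) \leq 2^T = 2^{Q^*_{=\epsilon}(f)}$, as required. The main obstacle is keeping the norm bound clean in the entanglement-assisted setting: the local registers $A, B$ can have unbounded dimension, so the nontrivial content of the argument is that only the $2^T$-dimensional communication register contributes to the factorization norm, while the entanglement dimensions are contracted harmlessly inside the Euclidean inner product. Once this is in place, the bound is essentially the exact-error analogue of the approximate-$\gamma_2$ lower bound for bounded-error quantum communication used in Lemma~\ref{lem:class2_LB}, and the cyclic chain of inequalities needed for Theorem~\ref{thm:2} closes.
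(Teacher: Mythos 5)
Your proposal is correct and follows essentially the same route as the paper: the exact-error condition makes the acceptance-probability matrix of the optimal protocol equal to $M_f$ itself, after which the bound $\gamma_2(P)\leq 2^T$ for $T$-qubit entanglement-assisted protocols from Linial--Shraibman~\cite{linial2007lower} gives the claim. The only difference is that you sketch the internals of that factorization lemma (the Yao--Kremer-style decomposition) rather than citing it as a black box, as the paper does.
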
  
\begin{proof}
Consider a quantum protocol that solves $f$ with shared entanglement, that accepts $\epsilon$-inputs with probability $\epsilon$ and accepts $(1 - \epsilon)$-inputs with probability $1 - \epsilon$. Let $M_f$ be the communication matrix of $f$ and $P$ be its corresponding matrix of acceptance probabilities. Clearly, $P = M_f$. Then, 
\begin{align*}
    \gamma_2(M_f) = \gamma_2(P) \leq 2^{Q^*_{=\epsilon}(f)},
\end{align*}
where the last inequality is due to~{\cite[Lemmas~12]{linial2007lower}}. 
\end{proof}

\subsection{Class 3}\label{sec:useful_lemma_class_3}
This class consists of variants of the approximate $\gamma_2$ norm, various lower bounds measures, distributional complexity, weakly unbounded error communication complexity, quantum communication complexity with prior entanglement, public-coin randomized communication complexity, Merlin-Arthur communication complexity, variants of internal and external information complexity, pseudotranscript complexity and amortized communication complexity. The constant equivalence relationship between complexity measures in Class 3 is summarized in Figures~\ref{fig:C3a} and~\ref{fig:C3b}.
\begin{figure*}[h]
\centering 
\begin{subfigure}
\centering
\begin{tikzpicture}[->,>=Stealth,auto,node distance=2.5cm, thick]
    \node (RpubAB) {$R^{\operatorname{pub}, 
    A\rightarrow B}_\epsilon(f)$};
     \node (sub1) [above right of=RpubAB, xshift=1cm, yshift=1cm] {$\operatorname{sub}^{A\rightarrow B}_{\mathcal Y}(f, \epsilon)$};
     \node (ment) [right of=RpubAB, xshift=1.5cm] {$\operatorname{ment}_\epsilon(f)$};
      \node (rcment) [below right of=RpubAB, xshift=1cm, yshift=-1cm] {$\operatorname{rcment}_{\epsilon, \epsilon}(f)$};
      \node (OMAAB) [left of=RpubAB, xshift=-1.5cm] {$OMA^{A \rightarrow B}_\epsilon(f)$};
      \node (OIPAB) [below left of=RpubAB, xshift=-1cm, yshift=-1cm] {$OIP^{A \rightarrow B}_\epsilon(f)$};
      \node (OIPAB+) [above left of=RpubAB, xshift=-1cm, yshift=1cm] {$OIP^{A \rightarrow B}_{+, \epsilon}(f)$};
    \draw [->] (RpubAB) to node[midway, right] {Fact~\ref{fact:ment}} (rcment);
    \draw [->] (rcment) to node[near end, below] {}  (RpubAB);
    \draw [->] (RpubAB) to node[midway, above] {Fact~\ref{fact:ment}} (ment);
    \draw [->] (ment) to node[near end, below] {}  (RpubAB);
    \draw [->] (RpubAB) to node[midway, right] {Fact~\ref{fact:sub1}} (sub1);
    \draw [->] (sub1) to node[midway, left] {} (RpubAB);
    \draw [->] (RpubAB) to node[midway, right]  {} (OIPAB);
    \draw [->] (OIPAB) to node[midway, left]  {Fact~\ref{OIP}} (RpubAB);
    \draw [->] (RpubAB) to node[midway, right]  {} (OMAAB);
    \draw [->] (OMAAB) to node[midway, above]  {Fact~\ref{OIP}} (RpubAB);
    \draw [->] (RpubAB) to node[midway, right]  {} (OIPAB+);
    \draw [->] (OIPAB+) to node[midway, left]  {Fact~\ref{OIP}} (RpubAB);
\end{tikzpicture}
\caption{Constant equivalence relationship between complexity measures with $R^{A\rightarrow B, \operatorname{pub}}_\epsilon(f)$.}
\label{fig:C3a}
\end{subfigure}
\end{figure*}
\begin{figure*}[!h]
\centering 
\begin{subfigure}
\centering
\begin{tikzpicture}[->,>=Stealth,auto,node distance=2.5cm, thick]
    \node (gammaInf) {$\gamma^\infty_2(f)$};    
    \node (disc) [right of=gammaInf, xshift=2cm] {$\operatorname{disc}(f)$};
    \node (mc) [below left of=disc, yshift=0.5cm] {$mc(f)$};
    \node (PP) [below right of=disc, yshift=0.5cm] {$PP^{\operatorname{pub}}(f)$};
    \node (Dmu) [below right of=disc, xshift=1.4cm, yshift=-9.5cm]  {$D^\mu_\epsilon(f)$}; 
    \node (ardisc) [right of=Dmu, xshift=1cm] {$\operatorname{ardisc}^{\mu}_\epsilon(f)$};
    \node (rec) [right of=disc, xshift=4.5cm] {$\text{rec}_\epsilon(f)$};
    \node (MA) [right of=rec, xshift=0cm] {$MA_\epsilon(f)$};
    \node (RpubAB) [below of=gammaInf, yshift=-19cm] {$R^{A \rightarrow B, \operatorname{pub}}_\epsilon(f)$};
    \node (gamma2) [below right  of=gammaInf, xshift=0cm, yshift=-4.5cm] {$\gamma^\alpha_2(f)$};
    \node (Qstar) [below of=gamma2, yshift=-6cm] {$Q^*_\epsilon(f)$};
    \node (QIC) [below of=Qstar] {$QIC_\epsilon(f)$};
    \node (AQCC) [right of=QIC, xshift=2cm] {$AQCC_\epsilon(f)$}; 
    \node (Rpub) [right of=RpubAB, xshift=11.7cm]  {$R^{\operatorname{pub}}_\epsilon(f)$};
    \node (pprt) [above left of=Rpub, xshift=-2.8cm, yshift=2cm] {$\operatorname{pprt}_\epsilon(f)$}; 
    \node (srec) [below of=rec, yshift=0cm] {$\operatorname{srec}_\epsilon(f)$};
    \node (tildesrec) [below right of=srec] {$\widetilde{\operatorname{srec}}_\epsilon(f)$};
    \node (rprt) [below of=srec, xshift=0.05cm, yshift=0.8cm]{$\bar{\operatorname{prt}}_\epsilon(f)$};
    \node (prt) [below of=rprt, xshift=0.1cm, yshift=-1.7cm] {$\operatorname{prt}_\epsilon(f)$};
    \node (SBP) [below right of=rec, yshift=0.5cm] {$SBP(f)$};
    \node (tilderec) [below left of=rec, yshift=0.5cm] {$\widetilde{\operatorname{rec}}_\epsilon(f)$};
   \node (discmu) [below of=disc, xshift=0.8cm, yshift=0cm] {$\operatorname{disc}^{\mu}(f)$};
   \node (sdisc) [left of=discmu, xshift=0.5cm, yshift=-0.3cm] {$\operatorname{sdisc}_\epsilon(f)$};
   \node (wregmu) [right of=discmu, xshift=0.5cm] {$\operatorname{wreg}^{\mu}(f)$};
   \node (ICmu) [below of=discmu, yshift=-2cm] {$IC^{\mu}_\epsilon(f)$};
    \node (IC) [below of=ICmu, yshift=-5cm] {$IC_\epsilon(f)$};
    \node (ICD) [right of=IC, xshift=0.8cm] {$IC_{D, \epsilon}(f)$};
    \node (ICext) [below of=IC] {$IC^{\operatorname{ext}}_\epsilon(f)$};
    \node (ICDext) [right of=ICext, xshift=0.8cm] {$IC^{\operatorname{ext}}_{D, \epsilon}(f)$};
    \node(wprtmu) [below of=wregmu, yshift=1.2cm] {$\operatorname{wprt}^{\mu}_\epsilon(f)$};
   \node (prt+mu) [below left of=wprtmu, xshift=0.4cm, yshift=-0.5cm] {$\operatorname{prt}^{+, \mu}_\epsilon(f)$};
   \node (rdisc) [above of=prt+mu, yshift=0.3cm] {$\operatorname{rdisc}^\mu_\epsilon(f)$};
   \node(barprtmu) [below right of=wprtmu, xshift=-0.4cm, yshift=-0.5cm] {$\bar{\operatorname{prt}}^\mu_\epsilon(f)$};
   \node(prtmu) [below of=wprtmu, yshift=-2.1cm] {$\operatorname{prt}^{\mu}_\epsilon(f)$};
   \node (pprtmu) [below of=ardisc, xshift=0.7cm] {$\operatorname{pprt}^\mu_\epsilon(f)$};
   \node (pICmu) [below right of=prtmu, yshift=0.7cm] {$pIC^{\infty, \mu}_\epsilon(f)$};
   \node (pIC) [above right of=prt, xshift=-0.5cm, yshift=0.2cm] {$pIC_\epsilon(f)$};
   \node (QICD) [below of =sdisc, yshift=-10cm] {$QIC_{D, \epsilon}(f)$};
   \node (ACmu) [below left of=ICmu, xshift=1cm, yshift=0.5cm] {$AC^{\mu}_\epsilon(f)$};
    \draw [->] (RpubAB) to node[near start, left] {Fact~\ref{fact:RpubAB-gamma2inf}} (gammaInf);
    \draw [-> ](gammaInf) to node[midway, right] {Fact~\ref{fact:gamma2inf-gamma2alpha}} (gamma2);
    \draw [->] (gamma2) to node[midway, left] {Fact~\ref{fact:gamma2alpha-Qstar}} (Qstar);
    \draw [->] (mc) to node[midway, left] {Fact~\ref{fact:mc}} (disc);
    \draw [->] (disc) to node[midway, above] {} (mc);
    \draw [->] (disc) to node[midway, right] {Fact~\ref{fact:PP-disc}} (PP);
    \draw [->] (PP) to node[midway, right] {} (disc);
    \draw [->] (gammaInf) to node[midway, above] {Fact~\ref{fact:gammaInf-disc}} (disc);
    \draw [->] (rec)to node [midway, above] {Fact~\ref{fact:MApub-Rpub}} (MA);
    \draw [->] (MA)to node [near start, left] {Fact~\ref{fact:MApub-Rpub}} (Rpub);
    \draw [->] (Qstar) to node[midway, left] {Fact~\ref{fact:QCC-QIC}} (QIC); 
    \draw [->] (QIC) to node[midway, above] {Fact~\ref{fact:AQCC-QCC}} (AQCC);
    \draw [->] (AQCC) to node[midway, left] {} (QIC);
    \draw [->] (QIC) to node[midway, above] {Facts~\ref{fact:q-r},~\ref{fact:QIC-QCC}} (Rpub);
    \draw [->] (disc) to node[midway, above] {Fact~\ref{fact:recz-disc}, Defs.~\ref{def:disc},~\ref{def:rec_bound_conventional}} (rec);
    \draw [->] (disc) to node[near end, left] {Fact~\ref{fact:disc-sdisc}} (sdisc);
    \draw [->] (Rpub) to node[midway, above] {Fact~\ref{fact:Rpub-RpubAB}} (RpubAB);
    \draw [->] (sdisc) to node[midway, left] {Fact~\ref{fact:QICD-sdisc}} (QICD);
    \draw [->] (rec) to node[near end, right] {Fact~\ref{fact:partition}} (srec);
    \draw [->] (srec) to node[midway, left]{Fact~\ref{fact:rprt}} (rprt);
    \draw [->] (rprt) to node[near start, left]{Fact~\ref{fact:rprt}}(prt);
    \draw [->] (prt) to node[midway, left] {Fact~\ref{fact:partition}} (Rpub);
    \draw [->] (rec) to node[midway, right]  {Fact~\ref{fact:SBP}} (SBP);
    \draw [->] (SBP) to node[midway, right]  {} (rec);
    \draw [->] (ICmu) to node[align=center, near end, right] {Fact~\ref{fact:IC-ICD}, \\Eq.(\ref{def:3.3})} (IC);
    \draw[->] (IC) to node[midway, above] {Facts~\ref{fact:IC-ICD},~\ref{fact:3.5}} (ICD);
    \draw [->] (ICD) to node[,midway, above]{} (IC);
    \draw [->] (IC) to node[midway, right] {Fact~\ref{fact:ext-int}} (ICext);
    \draw [->] (ICD) to node[midway, left] {Fact~\ref{fact:ext-int}} (ICDext);
    \draw [->] (ICext) to node[midway, above] {Facts~\ref{fact:ICext-ICDext},~\ref{fact:3.16}} (ICDext);
    \draw [->] (ICDext) to node[midway, below] {} (ICext);
    \draw [->] (ICext) to node[near start, left] {Fact~\ref{fact:ICext-CC}} (Rpub);
    \draw [->] (rec) to node[midway, left] {Fact~\ref{fact:rec_bound_defs}} (tilderec);
    \draw [->] (tilderec) to node[midway, right] {} (rec);
    \draw [->] (srec) to node[midway, right] {Fact~\ref{fact:srec_bound_defs}} (tildesrec);
    \draw [->] (tildesrec) to node[midway, above] {} (srec);
    \draw [->] (Dmu) to node[midway, right] {Fact~\ref{fact:Dmu-Rpub}}  (Rpub);
    \draw [->] (discmu) to node[midway, left] {Fact~\ref{fact:discmu-ICmu}} (ICmu);
    \draw [->] (disc) to node[near end, right] {Eq.(\ref{eqn:disc})} (discmu);
    \draw [->] (ICmu) to node[align=center, near end, right] {Eq.~\ref{def:3.2}, \\ Fact~\ref{fact:ICpi_Dmu}} (Dmu);
    \draw [->] (QICD) to node[near end, right] {Fact~\ref{fact:QIC-QICD}} (QIC);
    \draw [->] (ICmu) to node[midway, left] {} (ACmu);
    \draw [->] (ACmu) to node[midway, left] {Fact~\ref{fact:amortized}} (ICmu);
    \draw [->] (pprt) to node[align=center, near start, right] {Fact~\ref{fact:pprt-Rpub},\\~\ref{fact:Rpub-pprt}}(Rpub);
    \draw [->] (Rpub) to node[midway, left] {} (pprt);
    \draw [->] (discmu) to node[midway, above] {Fact~\ref{fact:discmu-wregmu}} (wregmu);
    \draw [->] (wregmu) to node[midway, right] {Fact~\ref{fact:wreg_wprt}} (wprtmu);
    \draw [->] (wprtmu) to node[near end, right] {Fact~\ref{fact:prt_rs}} (prt+mu);
    \draw [->] (wprtmu) to node[midway, right] {Fact~\ref{fact:prt_rs}} (barprtmu);
     \draw [->] (barprtmu) to node[midway, right] {Fact~\ref{fact:prt_rs}} (prtmu);
     \draw [->] (prt+mu) to node[near start, right] {Fact~\ref{fact:prt_rs}} (prtmu);
     \draw [->] (prtmu) to node[midway, above] {Def.~\ref{def:partition_bound}} (prt);
     \draw [->] (Dmu) to node[midway, above] {Fact~\ref{facgt:Dmu_ardisc}} (ardisc);
     \draw [->] (ardisc) to node[midway, above] {} (Dmu);
     \draw [->]  (rdisc) to node[midway, right ] {} (prt+mu);
     \draw [->] (prt+mu) to node[midway, left] {Fact~\ref{fact:prt+-rdisc}}   (rdisc); 
     \draw [->] (ardisc) to node[near start, left] {Fact~\ref{def:pprt-ardisc}} (pprtmu);
     \draw [->] (pprtmu) to node[near start, left] {} (ardisc);
     \draw [->] (prtmu) to node[midway, left] {Fact~\ref{fact:pICmu-prtmu}} (pICmu);
     \draw [->] (pICmu) to node[midway, left] {} (prtmu);
     \draw [->] (prt) to node[midway, right] {Fact~\ref{fact:pIC-prt}} (pIC);
     \draw [->] (pIC) to node[midway, left] {} (prt); 
\end{tikzpicture}
\caption{Relationship between complexity measure in Class 3.}
\label{fig:C3b}
\end{subfigure}
\end{figure*}

\begin{theorem}\label{thm:3}
Class 3 includes the following complexity measures: 
\begin{multicols}{3}
    \begin{itemize}
        \item $R^{A \rightarrow B, \operatorname{pub}}_\epsilon(f)$
        \item $ \gamma^\infty_2(f)$
        \item $\gamma^\alpha_2(f)$
        \item $\operatorname{disc}^{\mu}(f)$
        \item $\operatorname{disc}(f)$
        \item $\operatorname{sdisc}_\epsilon(f)$
        \item $mc(f)$
        \item $PP^{\operatorname{pub}}(f)$
        \item $\operatorname{rec}_\epsilon(f)$
        \item $\widetilde{\operatorname{rec}}_\epsilon(f)$
        \item $\operatorname{srec}_\epsilon(f)$
        \item $\widetilde{\operatorname{srec}}_\epsilon(f)$
        \item $\bar{\operatorname{prt}}_\epsilon(f)$
        \item $\operatorname{prt}_\epsilon(f)$
        \item $\operatorname{sub^{A \rightarrow B}}_{\mathcal Y, \epsilon}(f)$
        \item $D^\mu_\epsilon(f)$
        \item $MA_\epsilon(f)$
        \item $SBP(f)$
        \item $\operatorname{ment}_\epsilon(f)$
        \item $\operatorname{rcment}_{\epsilon, \epsilon}(f)$
        \item $Q^*_\epsilon(f)$
        \item $QIC_{D, \epsilon}(f)$
        \item $QIC_\epsilon(f)$
        \item $AQCC_\epsilon(f)$
        \item $ R^{\operatorname{pub}}_\epsilon(f)$
        \item $IC^{\mu}_\epsilon(f)$
        \item $IC_\epsilon(f)$
        \item $IC_{D, \epsilon}(f)$
        \item $IC^{\operatorname{ext}}_\epsilon(f)$
        \item $IC^{\operatorname{ext}}_{D, \epsilon}(f)$
        \item $AC^{\mu}_\epsilon(f)$ 
        \item $\operatorname{wreg}^{\mu}(f)$
        \item $\operatorname{wprt}^\mu_\epsilon(f)$
        \item $\operatorname{prt}^{+, \mu}_\epsilon(f)$
        \item $\bar{\operatorname{prt}}^{\mu}_\epsilon(f)$
        \item $\operatorname{prt}^{\mu}_\epsilon(f)$
        \item $\operatorname{rdisc}^\mu_\epsilon(f)$
        \item $\operatorname{ardisc}^{\mu}_\epsilon(f)$
        \item $\operatorname{pprt}^\mu_\epsilon(f)$
        \item $\operatorname{pprt}_\epsilon(f)$
        \item $pIC^{\infty, \mu}_\epsilon(f)$
        \item  $pIC^{\infty}_\epsilon(f)$
        \item $OMA^{A \rightarrow B}_\epsilon(f)$
        \item $OIP^{A \rightarrow B}_\epsilon(f)$
        \item $OIP^{A \rightarrow B}_{+, \epsilon}(f)$
    \end{itemize}
\end{multicols}
\end{theorem}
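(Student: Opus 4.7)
The plan is to replicate the cyclic-reduction strategy used in Theorem~\ref{thm:1}: identify a backbone cycle passing through a hub complexity measure, then attach all remaining measures by bidirectional ($\leftrightarrow$) equivalences or by short two-edge detours through the backbone. The graph in Figures~\ref{fig:C3a} and~\ref{fig:C3b} already dictates which facts should be invoked; my job in the proof is to serialize them into a closed chain. The natural hubs are $R^{\operatorname{pub}}_\epsilon(f)$ for the two-way measures and $R^{A\rightarrow B,\operatorname{pub}}_\epsilon(f)$ for the one-way measures, with Fact~\ref{fact:Rpub-RpubAB} bridging the two.

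Concretely, I would first establish the central backbone cycle $R^{A\rightarrow B,\operatorname{pub}}_\epsilon(f)\rightarrow\gamma_2^\infty(f)\rightarrow\gamma_2^\alpha(f)\rightarrow Q^*_\epsilon(f)\rightarrow QIC_\epsilon(f)\rightarrow R^{\operatorname{pub}}_\epsilon(f)\rightarrow R^{A\rightarrow B,\operatorname{pub}}_\epsilon(f)$, using Facts~\ref{fact:RpubAB-gamma2inf},~\ref{fact:gamma2inf-gamma2alpha},~\ref{fact:gamma2alpha-Qstar},~\ref{fact:QCC-QIC},~\ref{fact:q-r}/\ref{fact:QIC-QCC}, and~\ref{fact:Rpub-RpubAB}. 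This cycle already places $\gamma_2^\infty, \gamma_2^\alpha, Q^*_\epsilon, QIC_\epsilon, R^{\operatorname{pub}}_\epsilon$, and $R^{A\rightarrow B,\operatorname{pub}}_\epsilon$ into a single equivalence class. I would then attach the quantum-side measures $QIC_{D,\epsilon}(f)$ and $AQCC_\epsilon(f)$ via Facts~\ref{fact:QIC-QICD} and~\ref{fact:AQCC-QCC}, which give bidirectional relations to $QIC_\epsilon$.

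For the discrepancy and lower-bound cluster, I would route them into the backbone via $\gamma_2^\infty \rightarrow \operatorname{disc}(f)$ (Fact~\ref{fact:gammaInf-disc}), bidirectionally connect $\operatorname{disc}\leftrightarrow mc$ and $\operatorname{disc}\leftrightarrow PP^{\operatorname{pub}}$ (Facts~\ref{fact:mc},~\ref{fact:PP-disc}), push forward through $\operatorname{disc}\rightarrow\operatorname{rec}_\epsilon$ (Fact~\ref{fact:recz-disc}) and then through the partition-bound chain $\operatorname{rec}_\epsilon\rightarrow\operatorname{srec}_\epsilon\rightarrow\bar{\operatorname{prt}}_\epsilon\rightarrow\operatorname{prt}_\epsilon\rightarrow R^{\operatorname{pub}}_\epsilon$ (Facts~\ref{fact:partition},~\ref{fact:rprt}). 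The ``tilded'' variants $\widetilde{\operatorname{rec}}_\epsilon, \widetilde{\operatorname{srec}}_\epsilon$ and the MA/SBP measures attach by bidirectional facts (Facts~\ref{fact:rec_bound_defs},~\ref{fact:srec_bound_defs},~\ref{fact:MApub-Rpub},~\ref{fact:SBP}). The distributional cluster around $D^\mu_\epsilon$ hooks in via $\operatorname{disc}\rightarrow\operatorname{disc}^\mu$, $\operatorname{disc}^\mu\rightarrow\operatorname{wreg}^\mu$ (Fact~\ref{fact:discmu-wregmu}), and the chain $\operatorname{wreg}^\mu\rightarrow\operatorname{wprt}^\mu\rightarrow\operatorname{prt}^{+,\mu},\bar{\operatorname{prt}}^\mu\rightarrow\operatorname{prt}^\mu\rightarrow\operatorname{prt}_\epsilon$ (Facts~\ref{fact:wreg_wprt},~\ref{fact:prt_rs} and Def.~\ref{def:partition_bound}); the relative/approximate discrepancy and pseudopartition measures attach through Facts~\ref{facgt:Dmu_ardisc},~\ref{fact:prt+-rdisc},~\ref{def:pprt-ardisc},~\ref{fact:pICmu-prtmu},~\ref{fact:pIC-prt}. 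The information-complexity cluster is absorbed through $IC^\mu_\epsilon\leftrightarrow D^\mu_\epsilon$, $IC^\mu_\epsilon\leftrightarrow AC^\mu_\epsilon$ (Fact~\ref{fact:amortized}), $IC_\epsilon\leftrightarrow IC_{D,\epsilon}$, $IC^{\operatorname{ext}}_\epsilon\leftrightarrow IC^{\operatorname{ext}}_{D,\epsilon}$ (Facts~\ref{fact:IC-ICD},~\ref{fact:3.5},~\ref{fact:ICext-ICDext},~\ref{fact:3.16}), $IC_\epsilon\rightarrow IC^{\operatorname{ext}}_\epsilon$ (Fact~\ref{fact:ext-int}), and $IC^{\operatorname{ext}}_\epsilon\rightarrow R^{\operatorname{pub}}_\epsilon$ (Fact~\ref{fact:ICext-CC}). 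Finally, the one-way side-measures $\operatorname{sub}^{A\rightarrow B}_{\mathcal Y}, \operatorname{ment}_\epsilon, \operatorname{rcment}_{\epsilon,\epsilon}, OMA^{A\rightarrow B}_\epsilon, OIP^{A\rightarrow B}_\epsilon, OIP^{A\rightarrow B}_{+,\epsilon}$ all attach bidirectionally to $R^{A\rightarrow B,\operatorname{pub}}_\epsilon$ via Facts~\ref{fact:sub1},~\ref{fact:ment},~\ref{OIP}.

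The main obstacle is not any single hard implication but rather bookkeeping: with roughly forty measures and more than thirty distinct facts in play, the main risk is accidentally leaving a measure reachable only in one direction from the backbone, which would fail to prove constant-equivalence. To mitigate this, I would structure the proof by listing, for each measure outside the central cycle, both the ``into-hub'' fact and the ``out-of-hub'' fact (or a single $\leftrightarrow$ fact) witnessing its equivalence. The only genuinely delicate step is closing the quantum portion: $QIC_\epsilon\rightarrow R^{\operatorname{pub}}_\epsilon$ requires composing Fact~\ref{fact:q-r} with Fact~\ref{fact:QIC-QCC}, since a direct simulation of quantum information cost by classical public-coin communication is not a one-line appeal. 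The rest of the argument is essentially a structured traversal of Figure~\ref{fig:C3b}.
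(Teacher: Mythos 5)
Your overall strategy is the same as the paper's: the paper's proof of Theorem~\ref{thm:3} is exactly a serialization of Figures~\ref{fig:C3a} and~\ref{fig:C3b}, chaining the same facts you cite; choosing the $\gamma_2^\infty\rightarrow\gamma_2^\alpha\rightarrow Q^*_\epsilon\rightarrow QIC_\epsilon\rightarrow R^{\operatorname{pub}}_\epsilon$ loop as the backbone instead of the paper's $\operatorname{disc}\rightarrow\operatorname{rec}_\epsilon\rightarrow MA_\epsilon\rightarrow R^{\operatorname{pub}}_\epsilon$ loop makes no substantive difference, since both cycles appear in the paper's argument anyway. However, there are two concrete gaps, and they are precisely of the bookkeeping type you flagged as the main risk. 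First, $\operatorname{sdisc}_\epsilon(f)$ is in the theorem's list but never appears in your traversal; the paper handles it via $\operatorname{disc}(f)\rightarrow\operatorname{sdisc}_\epsilon(f)$ (Fact~\ref{fact:disc-sdisc}) and $\operatorname{sdisc}_\epsilon(f)\rightarrow QIC_{D,\epsilon}(f)$ (Fact~\ref{fact:QICD-sdisc}), which places it on a closed cycle. As written, your proof does not establish membership of $\operatorname{sdisc}_\epsilon(f)$ in the class at all.

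Second, your attachment of $QIC_{D,\epsilon}(f)$ is not actually bidirectional with the facts you invoke. Fact~\ref{fact:QIC-QICD} only gives $QIC_{D,\epsilon}(f)\leq QIC_\epsilon(f)$, i.e.\ the ``into-hub'' direction; it does not bound any class member by a function of $QIC_{D,\epsilon}(f)$, so constancy of $QIC_{D,\epsilon}$ would not yet imply constancy of the class. You need either Fact~\ref{fact:QIC-QICD-alpha} (giving $QIC_{\epsilon/\alpha}(f)\leq QIC_{D,\epsilon}(f)/(1-\alpha)$, which is what the paper cites alongside Fact~\ref{fact:QIC-QICD}), or the route through $\operatorname{sdisc}$ above, which simultaneously repairs the first gap. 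Both missing ingredients are available in the paper, so the gaps are fillable, but for a theorem whose entire content is exhaustive coverage of the listed measures they must be stated explicitly. (Similarly, when you write $IC^\mu_\epsilon\leftrightarrow D^\mu_\epsilon$ you should spell out that this is the composite $\operatorname{disc}^\mu\rightarrow IC^\mu_\epsilon$ (Fact~\ref{fact:discmu-ICmu}), $IC^\mu_\epsilon\rightarrow D^\mu_\epsilon$ (Fact~\ref{fact:ICpi_Dmu} with Eq.~(\ref{def:3.2})), and $D^\mu_\epsilon\rightarrow R^{\operatorname{pub}}_\epsilon$ (Fact~\ref{fact:Dmu-Rpub}); no single bidirectional fact is claimed in the paper.)
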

\begin{proof}
First, note that $R^{A \rightarrow B, \operatorname{pub}}_\epsilon(f)\rightarrow \gamma^\infty_2(f)$ by Fact~\ref{fact:RpubAB-gamma2inf}; $\gamma_2^\infty(f) \rightarrow \operatorname{disc}(f)$ by Fact~\ref{fact:gammaInf-disc}; $\operatorname{disc}(f) \rightarrow \operatorname{rec}_\epsilon(f)$ by Fact~\ref{fact:recz-disc}, Definitions~\ref{def:disc} and~\ref{def:rec_bound_conventional}; $\operatorname{rec}_\epsilon(f) \rightarrow MA_\epsilon(f)$ by Fact~\ref{fact:MApub-Rpub}; $MA_\epsilon(f) \rightarrow R^{\operatorname{pub}}_\epsilon(f)$ by Fact~\ref{fact:MApub-Rpub}; $R^{\operatorname{pub}}_\epsilon(f) \rightarrow R^{A \rightarrow B, \operatorname{pub}}_\epsilon(f)$ by Fact~\ref{fact:Rpub-RpubAB}. 

From $\gamma_2^\infty(f)$, we have $\gamma_2^\infty(f) \rightarrow \gamma_2^\alpha(f)$ by Fact~\ref{fact:gamma2inf-gamma2alpha}; $\gamma_2^\alpha(f) \rightarrow Q^*_\epsilon(f)$ by Fact~\ref{fact:gamma2alpha-Qstar}; $Q^*_\epsilon(f) \rightarrow QIC_\epsilon(f)$ by Fact~\ref{fact:QCC-QIC}; $QIC_\epsilon(f) \rightarrow R^{\operatorname{pub}_\epsilon(f)}$ by Facts~\ref{fact:q-r} and~\ref{fact:QIC-QCC}. 

From $\operatorname{disc}(f)$, we have $\operatorname{disc}(f) \rightarrow \operatorname{sdisc}_\epsilon(f)$ by Fact~\ref{fact:disc-sdisc}; $\operatorname{sdisc}_\epsilon(f) \rightarrow QIC_{D, \epsilon}(f)$ by Fact~\ref{fact:QICD-sdisc}; $QIC_{D, \epsilon}(f) \rightarrow QIC_\epsilon(f)$ by Facts~\ref{fact:QIC-QICD} and~\ref{fact:QIC-QICD-alpha}. Furthermore, $\operatorname{disc}(f) \rightarrow \operatorname{disc}^{\mu}(f)$ by Equation~\ref{eqn:disc}; $\operatorname{disc}^{\mu}(f) \rightarrow IC^{\mu}_\epsilon(f)$ by Fact~\ref{fact:discmu-ICmu}; $IC^{\mu}_\epsilon(f) \rightarrow IC_\epsilon(f)$ by Fact~\ref{fact:IC-ICD} and Equation~\ref{def:3.3}; $IC_\epsilon(f) \rightarrow IC^{\operatorname{ext}}_\epsilon(f)$ by Fact~\ref{fact:ext-int}; $IC^{\operatorname{ext}}_\epsilon(f) \rightarrow R^{\operatorname{pub}}_\epsilon(f)$ by Fact~\ref{fact:ICext-CC}. At the same time, $IC_\epsilon(f) \leftrightarrow IC_{D, \epsilon}(f)$ by Facts~\ref{fact:IC-ICD},~\ref{fact:3.5}; $IC_{D, \epsilon}(f) \rightarrow IC^{\operatorname{ext}}_{D, \epsilon}(f)$ by Fact~\ref{fact:ext-int}; $ IC^{\operatorname{ext}}_\epsilon(f) \rightarrow IC^{\operatorname{ext}}_{D, \epsilon}(f)$ by Facts~\ref{fact:ICext-ICDext},~\ref{fact:3.16}.
From $IC^{\mu}_\epsilon(f)$, we have  $IC^{\mu}_\epsilon(f) \rightarrow D^{\mu}_\epsilon(f)$ by Equation~\ref{def:3.2} and Fact~\ref{fact:ICpi_Dmu}; $D^{\mu}_\epsilon(f) \rightarrow R^{\operatorname{pub}}_\epsilon(f)$ by Fact~\ref{fact:Dmu-Rpub}. 

From $\operatorname{disc}^{\mu}(f)$, we have $\operatorname{disc}^{\mu}(f) \rightarrow \operatorname{wreg}^{\mu}(f)$ by Fact~\ref{fact:discmu-wregmu}; $\operatorname{wreg}^{\mu}(f) \rightarrow \operatorname{wprt}^{\mu}(f)$ by Fact~\ref{fact:wreg_wprt}; $\operatorname{wprt}^{\mu}(f) \rightarrow \operatorname{prt}^{+, \mu}_\epsilon(f), \bar{\operatorname{prt}}^{\mu}_\epsilon(f)$ by Fact~\ref{fact:prt_rs}; $\operatorname{prt}^{+, \mu}_\epsilon(f), \bar{\operatorname{prt}}^{\mu}_\epsilon(f) \rightarrow \operatorname{prt}^{\mu}_\epsilon(f)$ by Fact~\ref{fact:prt_rs}. 

Moreover, $\operatorname{rec}_\epsilon(f) \rightarrow \operatorname{srec}_\epsilon(f)$ by Fact~\ref{fact:partition}; $\operatorname{srec}_\epsilon(f) \rightarrow \bar{\operatorname{prt}}_\epsilon(f)$ by Fact~\ref{fact:rprt}; $\bar{\operatorname{prt}}_\epsilon(f) \rightarrow \operatorname{prt}_\epsilon(f)$ by Fact~\ref{fact:rprt}; $\operatorname{prt}_\epsilon(f) \rightarrow R^{\operatorname{pub}}_\epsilon(f)$ by Fact~\ref{fact:partition}. 

Lastly, $\operatorname{prt}^{\mu}_\epsilon (f) \rightarrow \operatorname{prt}_\epsilon(f)$ by Definition~\ref{def:partition_bound}. For $\leftrightarrow$ relations, we have $QIC_\epsilon(f) \leftrightarrow AQCC_\epsilon(f)$ by Fact~\ref{fact:AQCC-QCC}; $\operatorname{mc}(f)\leftrightarrow \operatorname{disc}(f)$ by Fact~\ref{fact:mc}; $PP^{\operatorname{pub}}(f) \leftrightarrow \operatorname{disc}(f)$ by Fact~\ref{fact:PP-disc}; $AC^{\mu}_\epsilon(f) \leftrightarrow IC^{\mu}_\epsilon(f)$ by Fact~\ref{fact:amortized}; $\operatorname{prt}^{+, \mu}_\epsilon(f) \leftrightarrow \operatorname{rdisc}^{\mu}_\epsilon(f)$ by Fact~\ref{fact:prt+-rdisc}; $\operatorname{prt}^{\mu}_\epsilon(f) \leftrightarrow  pIC^{\infty, \mu}_\epsilon(f)$ by Fact~\ref{fact:pICmu-prtmu}; $\operatorname{prt}_\epsilon(f) \leftrightarrow  pIC^{\infty}_\epsilon(f)$ by Fact~\ref{fact:pIC-prt}; 
$\operatorname{rec}_\epsilon(f) \leftrightarrow \widetilde{\operatorname{rec}}_\epsilon(f)$ by Fact~\ref{fact:rec_bound_defs}; 
$\operatorname{rec}_\epsilon(f) \leftrightarrow SBP(f)$ by Fact~\ref{fact:SBP};
$\operatorname{srec}_\epsilon(f) \leftrightarrow \widetilde{\operatorname{srec}}_\epsilon(f)$ by Fact~\ref{fact:srec_bound_defs};
$R^{\operatorname{pub}}_\epsilon(f) \leftrightarrow \operatorname{pprt}_\epsilon(f)$ by Facts~\ref{fact:pprt-Rpub} and~\ref{fact:Rpub-pprt}.

For $\leftrightarrow$ relations with $R^{A \rightarrow B, \operatorname{pub}}_\epsilon(f)$, we have $R^{A \rightarrow B, \operatorname{pub}}_\epsilon(f)\leftrightarrow \operatorname{ment}_\epsilon(f)$ by Fact~\ref{fact:ment};
$R^{A \rightarrow B, \operatorname{pub}}_\epsilon(f)\leftrightarrow \operatorname{rcment}_{\epsilon, \epsilon}(f)$ by Fact~\ref{fact:ment};
$R^{A \rightarrow B, \operatorname{pub}}_\epsilon(f)\leftrightarrow OIP^{A \rightarrow B}_\epsilon(f)$ by  Fact~\ref{OIP};
$R^{A \rightarrow B, \operatorname{pub}}_\epsilon(f)\leftrightarrow OMA^{A \rightarrow B}_\epsilon(f)$ by Fact~\ref{OIP};
$R^{A \rightarrow B, \operatorname{pub}}_\epsilon(f)\leftrightarrow OIP^{A \rightarrow B}_{+, \epsilon}(f)$ by Fact~\ref{OIP};
$R^{A \rightarrow B, \operatorname{pub}}_\epsilon(f)\leftrightarrow \operatorname{sub}^{A\rightarrow B}_{\mathcal Y, \epsilon} (f)$ by Fact~\ref{fact:sub1}.  
\end{proof}

Below, we state the results used in the proof of Theorem~\ref{thm:3}. We first state the fact that one-way public-coin randomized communication complexity is at least exponentially larger than its two-way counterpart and quantum communication complexity with prior entanglement is smaller than public-coin randomized communication complexity. 
\begin{fact}[{\cite[Exercise~4.2.1]{KNisan96}}]\label{fact:Rpub-RpubAB}
For $f: \mathcal X \times \mathcal Y \rightarrow \{0, 1\}$, $R^{\operatorname{pub}}_\epsilon(f)\geq \log R^{A \rightarrow B, \operatorname{pub}}_\epsilon(f)$. 
\end{fact}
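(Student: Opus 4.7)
The plan is to prove the contrapositive formulation of the statement, namely that a $c$-bit two-way public-coin protocol can be simulated by a one-way public-coin protocol of cost at most $2^c$. Taking logarithms then yields the desired inequality. The key insight is that Alice's complete strategy in a two-way protocol, conditioned on her input and the shared randomness, is a compact object: it is a function from the set of histories at her nodes in the protocol tree to a single bit, and there are at most $2^c$ such histories since the protocol tree has depth at most $c$.

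First I would fix an optimal two-way public-coin randomized protocol $\pi$ for $f$ of cost $c := R^{\operatorname{pub}}_\epsilon(f)$, with error at most $\epsilon$ on every input. I would then observe that $\pi$ corresponds to a protocol tree $T$ of depth at most $c$, where each internal node is owned by either Alice or Bob, and each leaf is labeled with an output bit. For any fixed shared random string $r$ and any fixed input $x$ for Alice, Alice's behavior in the protocol is fully determined by specifying, for each node $v$ owned by Alice, which bit she would send if the history leading to $v$ were realized. The number of Alice-owned nodes is at most the total number of internal nodes of $T$, hence at most $2^c - 1$.

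The one-way simulation then proceeds as follows: Alice and Bob share the random string $r$; Alice writes down her entire strategy table (a string $s_x^r \in \{0,1\}^{\le 2^c}$ encoding her would-be message at every Alice-owned node) and sends $s_x^r$ to Bob. Bob, knowing $y$, $r$, and $s_x^r$, now has enough information to walk down $T$ by himself: whenever the protocol reaches an Alice node, he looks up her response in $s_x^r$, and whenever it reaches a Bob node, he uses his own input and randomness. At the leaf reached, Bob outputs the label. Since $(x, y, r)$ determines exactly the same leaf as in the original execution of $\pi$, the acceptance probability on every input is unchanged, and hence the error remains at most $\epsilon$.

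The main (and essentially only) subtle point is ensuring that Alice can encode her strategy table using at most $2^c$ bits under the worst-case cost measure that $R^{A \rightarrow B,\operatorname{pub}}_\epsilon$ uses; this follows from the depth bound on $T$, which caps the number of Alice-owned internal nodes. Putting everything together gives $R^{A \rightarrow B, \operatorname{pub}}_\epsilon(f) \le 2^{R^{\operatorname{pub}}_\epsilon(f)}$, and taking logarithms completes the proof. I expect no technical obstacle here; this is the textbook reduction, and the proof is essentially a direct unpacking of the protocol-tree structure.
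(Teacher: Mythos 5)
Your simulation is correct and is exactly the standard argument behind the cited textbook exercise (the paper itself gives no proof, deferring to Kushilevitz--Nisan): for each fixed shared random string the two-way protocol tree has depth at most $c=R^{\operatorname{pub}}_\epsilon(f)$, so Alice's strategy table has at most $2^c-1$ bits, and sending it lets Bob reproduce the exact leaf, preserving the error and giving $R^{A\rightarrow B,\operatorname{pub}}_\epsilon(f)\leq 2^{R^{\operatorname{pub}}_\epsilon(f)}$. No gaps; your handling of the worst-case cost bound on the number of Alice-owned nodes is the right (and only) subtle point.
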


\begin{fact}\label{fact:q-r}
For $f: \mathcal X \times \mathcal Y \rightarrow \{0, 1\}$, $Q^*_\epsilon(f)\leq R^{\operatorname{pub}}_\epsilon(f)$.     
\end{fact}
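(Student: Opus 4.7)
The plan is to show that any public-coin randomized protocol can be simulated by an entanglement-assisted quantum protocol of the same communication cost without increasing the error. This is a standard simulation argument and requires no quantum speedup—merely that quantum protocols with shared entanglement are at least as powerful as classical protocols with shared randomness.

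First I would fix an optimal public-coin randomized protocol $\pi$ for $f$ with cost $c = R^{\operatorname{pub}}_\epsilon(f)$ and error at most $\epsilon$ on every input. The shared public random string $r$, which we may assume is of finite length $\ell$, can be generated from $\ell$ shared EPR pairs: both Alice and Bob measure their halves of these pairs in the computational basis, which produces a perfectly correlated uniformly random string $r \in \{0,1\}^\ell$ in each player's hands. This step consumes only pre-shared entanglement and no communication.

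Next, I would have Alice and Bob execute $\pi$ as a quantum protocol: each classical message bit prescribed by $\pi$ is sent as a single qubit in the computational basis encoding that bit. Since the entire protocol is a sequence of classical computations and classical-basis message transmissions, the joint state of the quantum protocol remains diagonal in the computational basis throughout, and the distribution of Bob's final output is identical to the distribution of the output of $\pi$ on the same input. Hence the error probability is unchanged and the number of qubits communicated equals the number of bits communicated in $\pi$, namely $c$. This yields $Q^*_\epsilon(f) \leq c = R^{\operatorname{pub}}_\epsilon(f)$.

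There is no real obstacle here; the only thing to be slightly careful about is the standard observation that shared entanglement can produce arbitrary shared randomness (which is folklore, going back to e.g.~\cite{bennett1993teleporting} as cited in Fact~\ref{fact:simple_rs}), and that a classical bit-communication protocol embeds verbatim into the quantum model when each bit is sent as a basis-state qubit. The statement is therefore essentially immediate from the definitions, and the proof can be compressed into a one-line remark; I would still spell out the two-step simulation above for clarity.
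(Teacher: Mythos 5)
Your proposal is correct and follows essentially the same route as the paper: use the shared entangled state (the paper's $\sum_i \sqrt{p(i)}\ket{i}$, your EPR pairs) to generate common randomness by local measurement, then run the classical protocol verbatim with each bit sent as a computational-basis qubit, preserving cost and error. The only cosmetic difference is that you fix uniform shared coins from EPR pairs while the paper allows an arbitrary distribution $p$, which changes nothing in substance.
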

\begin{proof}
    In order to simulate a public-coin randomized protocol using a quantum protocol with shared entanglement, Alice and Bob share the state $\sum_i \sqrt{p(i)}\ket{i}$, where $p(i)\in [0, 1]$ and $\sum_i \sqrt{p(i)} = 1$. Each of them then measures their sides of the entangled state to get the randomness. The quantum protocol then proceeds as per the randomized protocol. 
\end{proof}

The next three results from Linial and Shraibman~\cite{linial2007lower} show that  $\gamma_2^\infty(f) \leq \gamma_2^\alpha(f)$ and both $R^{\operatorname{pub}}_\epsilon(M_f)$ and $Q^*_\epsilon(M_f)$ can be lower bounded by $\gamma_2^\alpha(f)$ when $\alpha = \frac{1}{1 - 2\epsilon}$. Furthermore, Lee and Shraibman showed that $R^{A\to B, \operatorname{pub}}_\epsilon(f)$ is upper bounded by $\gamma_2^\infty(f)$~\cite{Lee2007a}. 

\begin{fact}[{\cite[Proposition~3]{linial2007lower}}]\label{fact:gamma2inf-gamma2alpha}
For every $m\times n$ sign matrix  $M_f$ and for every $\alpha\geq 1$, $\gamma_2^\infty(M_f)\leq \gamma_2^\alpha(M_f)\leq \gamma_2(M_f)\leq \sqrt{\operatorname{rank}(M_f)}$ for all $1\leq\alpha<\infty$. 
\end{fact}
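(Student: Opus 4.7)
The statement is a chain of three inequalities, which I would prove separately; the first two are essentially definitional, and the third is the classical rank-based upper bound on the $\gamma_2$ norm from Linial--Shraibman.

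For the leftmost inequality $\gamma_2^\infty(M_f)\leq\gamma_2^\alpha(M_f)$, I would simply observe that any $M'$ feasible for the $\gamma_2^\alpha$ minimization, that is, satisfying $1\leq M_{ij}M'_{ij}\leq\alpha$, is automatically feasible for the $\gamma_2^\infty$ minimization, which only requires $1\leq M_{ij}M'_{ij}$. Consequently the $\gamma_2^\infty$ infimum is taken over a larger set and can only be smaller, giving the claimed inequality without any further work.

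For the middle inequality $\gamma_2^\alpha(M_f)\leq\gamma_2(M_f)$, I would note that the trivial candidate $M'=M_f$ is itself feasible: since $M_f$ is a $\pm 1$ sign matrix and $\alpha\geq 1$, we have $M_{ij}M'_{ij}=M_{ij}^2=1\in[1,\alpha]$ at every entry. Hence $M_f$ lies in the feasible set of the $\gamma_2^\alpha$ minimization and witnesses the value $\gamma_2(M_f)$ itself, so the infimum cannot exceed $\gamma_2(M_f)$.

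For the rightmost and most substantive inequality $\gamma_2(M_f)\leq\sqrt{\operatorname{rank}(M_f)}$, I would exhibit an explicit factorization $M_f=AB$ with $\|A\|_r\|B\|_c\leq\sqrt{r}$, where $r=\operatorname{rank}(M_f)$. The natural starting point is the thin SVD $M_f=U\Sigma V^T$ with $U\in\mathbb{R}^{m\times r}$ and $V\in\mathbb{R}^{n\times r}$ having orthonormal columns and $\Sigma=\mathrm{diag}(\sigma_1,\dots,\sigma_r)$, symmetrized as $A=U\Sigma^{1/2}$ and $B=\Sigma^{1/2}V^T$, so that $AB=M_f$ automatically. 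The main obstacle, and the step requiring the most care, is the simultaneous control of $\|A\|_r$ and $\|B\|_c$: each is individually bounded in terms of the singular-value contributions of $M_f$, but to combine these into the desired $\sqrt{r}$ one must exploit the sign-matrix property $|M_{ij}|=1$ together with the rank constraint limiting the number of nonzero singular values. This is the standard Linial--Shraibman argument, and once the row/column norm product is shown to be at most $\sqrt{r}$ the conclusion follows from the factorization definition of $\gamma_2$ in~\eqref{eqn:gamma_2}.
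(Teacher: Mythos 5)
Your first two inequalities are correct and essentially definitional, exactly as you say: any $M'$ with $1\leq M_{ij}M'_{ij}\leq\alpha$ is feasible for the $\gamma_2^\infty$ program, and $M'=M_f$ itself is feasible for the $\gamma_2^\alpha$ program because $M_{ij}^2=1\in[1,\alpha]$. (The paper does not reprove any of this; it simply cites Proposition~3 of Linial--Shraibman.)

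The gap is in the third inequality, which is the only substantive one. The symmetric SVD split $A=U\Sigma^{1/2}$, $B=\Sigma^{1/2}V^{T}$ is \emph{not} the Linial--Shraibman argument, and it does not in general satisfy $\lVert A\rVert_r\lVert B\rVert_c\leq\sqrt{\operatorname{rank}(M_f)}$, so the step you defer to ``the standard argument'' cannot be completed from your starting point. Concretely, take the $n\times n$ sign matrix that is $-1$ everywhere except for a single $+1$ in the $(1,1)$ entry: it has rank $2$, singular values $\sigma_1\approx n$ and $\sigma_2\approx 2$, and the first row of $U\Sigma^{1/2}$ has squared norm tending to $\sigma_1U_{11}^2+\sigma_2U_{12}^2\approx 1+2=3$; by symmetry the same holds for the first column of $\Sigma^{1/2}V^{T}$, so $\lVert A\rVert_r\lVert B\rVert_c\to 3>\sqrt{2}$. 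The sign-matrix property and the rank bound do not rescue this particular factorization, because the SVD split balances singular values between the two factors but does nothing to balance norms across rows and columns. The actual proof (Linial--Mendelson--Schechtman--Shraibman, used in Linial--Shraibman) starts from an \emph{arbitrary} rank-$r$ factorization $M_f=AB$ and rebalances it by an invertible change of basis $M_f=(AT)(T^{-1}B)$ chosen via John's theorem applied to the symmetric convex hull $K$ of the columns of $B$: after normalizing the John ellipsoid of $K$ to be the unit ball $E$, the containment $K\subseteq\sqrt{r}\,E$ gives $\lVert T^{-1}b_j\rVert_2\leq\sqrt{r}$ for every column, while $E\subseteq K$ gives $\lVert T^{T}a_i\rVert_2\leq\max_j\lvert\langle a_i,b_j\rangle\rvert=\max_j\lvert M_{ij}\rvert=1$ for every row. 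This balancing step is the entire content of the inequality $\gamma_2(M_f)\leq\sqrt{\operatorname{rank}(M_f)}$, and it is missing from your sketch.
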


\begin{fact}[{\cite[Theorem~1]{linial2007lower}}]
For every sign matrix $M_f$ and every $\epsilon >0$, $R^{\operatorname{pub}}_\epsilon(M_f)\geq 2\log \gamma_2^{\frac{1}{1-2\epsilon}}(M_f) - 2\log\left( \frac{1}{1-2\epsilon}\right)$. 
\end{fact}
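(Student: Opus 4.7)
The plan is to convert an optimal randomized protocol into an explicit $\alpha$-approximator of $M_f$ and then bound its $\gamma_2$ norm via the protocol's combinatorial structure. Fix an optimal public-coin randomized protocol with cost $c = R^{\operatorname{pub}}_\epsilon(f)$ and two-sided error $\epsilon$, and let $P$ be its matrix of acceptance probabilities. By definition $P_{xy} \geq 1-\epsilon$ when $M_{f,xy} = +1$ and $P_{xy} \leq \epsilon$ when $M_{f,xy} = -1$, where we take the convention $f:\mathcal X \times \mathcal Y \to \{-1,+1\}$.

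Next I would form the candidate $\alpha$-approximator
\[
N \;:=\; \alpha\,(2P - J), \qquad \alpha := \tfrac{1}{1-2\epsilon},
\]
where $J$ is the all-ones matrix. A one-line case check on the two possible sign patterns shows $M_{f,xy} \cdot N_{xy} \in [1,\alpha]$ for every entry, so $N$ is a legitimate $\alpha$-approximator and hence $\gamma_2^\alpha(M_f) \leq \gamma_2(N)$. Since $\gamma_2(J)=1$ and $\gamma_2$ obeys a triangle inequality, $\gamma_2(N) \leq \alpha\bigl(2\gamma_2(P) + 1\bigr)$.

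The technical heart of the argument, and the main obstacle, is bounding $\gamma_2(P)$ in terms of $c$. Averaging over the public randomness expresses $P$ as a convex combination $P = \sum_r p_r P_r$, where each $P_r$ is the acceptance matrix of a deterministic $c$-bit protocol; $P_r$ is the $0/1$ indicator of a disjoint union of at most $2^c$ combinatorial rectangles, one per accepting leaf. The key step is to exhibit, for each $P_r$, a factorization $P_r = U_r V_r^{\mathsf T}$ built from indicator vectors of the Alice-sides and Bob-sides of the leaf-rectangles, normalized along the protocol tree so that the row-$\ell_2$ and column-$\ell_2$ norms are controlled. This is precisely the content of Linial–Shraibman's Lemma~12 (already invoked in Lemma~\ref{lem:class2_LB} and Lemma~\ref{facat:gamma2-Q^*_=epsilon}), and it yields the sharp bound $\gamma_2(P_r) \leq 2^{c/2}$. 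Convexity of $\gamma_2$ then gives $\gamma_2(P) \leq 2^{c/2}$ as well.

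Assembling the three pieces, $\gamma_2^\alpha(M_f) \leq \gamma_2(N) \leq \alpha\bigl(2 \cdot 2^{c/2} + 1\bigr) = O\bigl(\alpha \cdot 2^{c/2}\bigr)$, and taking logarithms yields $c \geq 2\log \gamma_2^\alpha(M_f) - 2\log \alpha - O(1)$, which is the claimed inequality up to the absorbed additive constant. The only part that is not a routine check is the $2^{c/2}$ bound on $\gamma_2(P)$; the doubling-trick and tree-depth normalization underlying Lemma~12 is what enables the exponent $c/2$ (as opposed to the weaker $c$ one would obtain from a naive indicator-vector factorization) and is therefore the step to execute most carefully.
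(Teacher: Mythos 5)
Your argument is, in substance, the proof of the cited Linial--Shraibman theorem itself (the paper offers no proof of this Fact beyond the citation): turn the acceptance-probability matrix $P$ into an $\alpha$-approximator of $M_f$, bound $\gamma_2$ of each deterministic cost-$c$ protocol matrix by $2^{c/2}$ via the leaf-rectangle factorization with the balanced tree normalization, and finish by convexity of $\gamma_2$; all of these steps are sound. Two small points. First, your route through $N=\alpha(2P-J)$ and the triangle inequality with $J$ gives $\gamma_2^\alpha(M_f)\leq \alpha\bigl(2\cdot 2^{c/2}+1\bigr)$ and hence only $c\geq 2\log\gamma_2^\alpha(M_f)-2\log\alpha-O(1)$, whereas the Fact as stated has no additive constant; to get it exactly, observe that each $2P_r-J$ is itself the $\pm1$ leaf-value matrix of a cost-$c$ deterministic protocol, so $\gamma_2(2P_r-J)\leq 2^{c/2}$ directly, and convexity then yields $\gamma_2^\alpha(M_f)\leq\alpha\,2^{c/2}$ with no slack. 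Second, the ``Lemma~12'' invoked elsewhere in this paper is used there for quantum protocols and gives $\gamma_2(P)\leq 2^{T}$ in the cost $T$; the exponent $c/2$ you need here comes from the classical protocol-tree argument you sketch, so the attribution should be to that classical lemma of Linial--Shraibman rather than to the quantum one.
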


\begin{fact}[{\cite[Theorem~1]{linial2007lower}}]\label{fact:gamma2alpha-Qstar}
For every sign matrix $M_f$ and every $\epsilon >0$,, $Q^*_\epsilon(M_f) \geq \log \gamma_2^{\frac{1}{1-2\epsilon}}(M_f) - \log\left( \frac{1}{1-2\epsilon}\right) - 2$.
\end{fact}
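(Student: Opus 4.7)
My plan is to prove this via the classical Yao--Kremer--Razborov style decomposition of quantum protocols, adapted to the entangled setting as in the paragraph surrounding $\gamma_2^\alpha$ in Linial--Shraibman. The scheme is: (i) take an optimal $\epsilon$-error entangled quantum protocol of cost $c = Q^*_\epsilon(M_f)$ and let $P \in [0,1]^{\mathcal X \times \mathcal Y}$ be its acceptance-probability matrix; (ii) show that $P$ admits a factorization witnessing $\gamma_2(P) \leq 2^{c}$ (up to a small constant); (iii) convert this bound on $\gamma_2(P)$ into a bound on $\gamma_2^\alpha(M_f)$ for $\alpha = \tfrac{1}{1-2\epsilon}$ by renormalizing $2P-J$.

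For step (ii), the key is to track the joint state round by round. Write the protocol registers as $A E_A \otimes C \otimes B E_B$, where $E_A \otimes E_B$ holds the prior entanglement and $C$ is the $c$-qubit communication register. After all $c$ qubits have been exchanged, unitarily expand the state in the computational basis of the classical-looking ``transcript'' obtained by inserting a virtual measurement on the $c$ communicated qubits (the standard trick of ``classicalizing'' the communication register without changing acceptance probability), so that
\begin{equation*}
|\psi_{xy}\rangle \;=\; \sum_{s \in \{0,1\}^c} |\alpha_s(x)\rangle_{A E_A} \otimes |s\rangle_C \otimes |\beta_s(y)\rangle_{B E_B},
\end{equation*}
with $\sum_s \||\alpha_s(x)\rangle\|^2 \cdot \||\beta_s(y)\rangle\|^2 = 1$ for every $(x,y)$. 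The prior entanglement is absorbed harmlessly into $|\alpha_s(x)\rangle$ and $|\beta_s(y)\rangle$ because it is $(x,y)$-independent and couples only to the local registers. Applying the accept projector and expanding gives $P(x,y) = \sum_{s,s'} u_{s,s'}(x)\, v_{s,s'}(y)$ for vectors whose $\ell_2$-norms over $(s,s')$ are at most $1$, so assembling these into a matrix factorization yields $\gamma_2(P) \leq 2^{c}$ directly from the definition via Equation~(\ref{eqn:gamma_2}).

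For step (iii), define $M' \coloneqq (2P - J)/(1-2\epsilon)$ where $J$ is the all-ones matrix. The $\epsilon$-correctness of the protocol gives $(M_f)_{xy}\cdot(2P_{xy}-1) \geq 1-2\epsilon$ and trivially $|2P_{xy}-1|\leq 1$, so $1 \leq (M_f)_{xy}(M')_{xy} \leq \tfrac{1}{1-2\epsilon} = \alpha$. Thus $M'$ is an admissible witness for $\gamma_2^\alpha(M_f)$. Using that $\gamma_2$ is a norm, hence obeys the triangle inequality and positive homogeneity, together with $\gamma_2(J) = 1$ (since $J = \mathbf{1}\mathbf{1}^\top$),
\begin{equation*}
\gamma_2^\alpha(M_f) \;\leq\; \gamma_2(M') \;\leq\; \tfrac{1}{1-2\epsilon}\bigl(2\gamma_2(P) + 1\bigr) \;\leq\; \tfrac{1}{1-2\epsilon}\cdot 2^{c+2}.
\end{equation*}
Taking logarithms yields $Q^*_\epsilon(M_f) = c \geq \log\gamma_2^{1/(1-2\epsilon)}(M_f) - \log\tfrac{1}{1-2\epsilon} - 2$, as claimed.

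The step I expect to cost the most effort is the decomposition in (ii), specifically verifying that prior shared entanglement does not inflate the $\gamma_2$ bound. The subtle point is that $|\alpha_s(x)\rangle$ lives in a potentially large space $A E_A$ whose dimension is not controlled by $c$; one must check that the relevant norms used in the $\gamma_2$ factorization are bounded by properties that are preserved round by round (namely, that the sum over histories of the squared local norms is at most $1$). Once that invariant is established inductively over the rounds --- with Alice's and Bob's unitaries only acting on one side of the partition $(A E_A, B E_B)$ and the communication register being tracked classically --- the rest of the proof is bookkeeping.
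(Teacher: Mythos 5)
The paper offers no proof of its own for this fact---it simply cites Theorem~1 of Linial--Shraibman---and your architecture is exactly the one used there: decompose the final state of a $c$-qubit entangled protocol \`a la Yao--Kremer--Razborov, deduce $\gamma_2(P)\leq 2^{c}$ for the acceptance-probability matrix $P$ (their Lemma~12), then pass to $M'=(2P-J)/(1-2\epsilon)$, which is an admissible witness for $\gamma_2^{\alpha}$ with $\alpha=\tfrac{1}{1-2\epsilon}$. Your step~(iii) is entirely correct and even reproduces the exact constants: $\gamma_2^{\alpha}(M_f)\leq \tfrac{1}{1-2\epsilon}\bigl(2\cdot 2^{c}+1\bigr)\leq \tfrac{1}{1-2\epsilon}2^{c+2}$ gives precisely the $-\log\tfrac{1}{1-2\epsilon}-2$ in the statement.

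The gap is in step~(ii), which you yourself flag as the crux. First, the ``virtual measurement / classicalizing the communication register'' device is not available: measuring the communicated qubits can change the acceptance probability of a quantum protocol, and writing the final state with an explicit transcript register $\ket{s}_C$ is inconsistent with your own subsequent use of cross terms $(s,s')$ --- if such a register existed, the histories would be orthogonal and the cross terms would vanish. The correct decomposition is purely algebraic, with no transcript register: $\ket{\psi_{xy}}=\sum_{s\in\{0,1\}^{c}}\ket{\alpha_s(x)}_{AE_A}\otimes\ket{\beta_s(y)}_{BE_B}$, where the $2^{c}$ histories interfere. Second, your normalization is wrong, and wrong in a way that would break the bound: if the $(s,s')$-indexed vectors $u(x),v(y)$ really had $\ell_2$-norm at most $1$, you would conclude $\gamma_2(P)\leq 1$, which is absurd; likewise the proposed inductive invariant ``the sum over histories of the squared local norms is at most $1$'' is false. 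The invariant that actually survives the induction is asymmetric: if Alice sends $c_A$ qubits and Bob sends $c_B$ (with $c_A+c_B=c$), then each of Alice's transmissions splits her squared norms ($\sum_i\lVert\alpha_{si}\rVert^2=\lVert\alpha_s\rVert^2$) while each of Bob's duplicates them, so at the end $\sum_s\lVert\alpha_s(x)\rVert^2\leq 2^{c_B}$ and $\sum_s\lVert\beta_s(y)\rVert^2\leq 2^{c_A}$ (prior entanglement is handled by folding the Schmidt index of the shared state into the history index, which is where its $(x,y)$-independence is used). Then $u_{s,s'}(x)=\braket{\alpha_s(x)|\alpha_{s'}(x)}$ and $v_{s,s'}(y)=\braket{\beta_s(y)|\Pi|\beta_{s'}(y)}$ satisfy $\lVert u(x)\rVert_2\leq 2^{c_B}$ and $\lVert v(y)\rVert_2\leq 2^{c_A}$, and only the product of the two bounds is $2^{c}$, which is what Equation~(\ref{eqn:gamma_2}) needs. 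With that correction your proof goes through and coincides with the cited one.
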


 \begin{fact}[{\cite[Claim~2]{linial2007lower}}]\label{fact:RpubAB-gamma2inf}
For any function $f$, we have $R^{A\to B, \operatorname{pub}}_\epsilon(f)\leq O((\gamma_2^\infty(f))^2)$. 
\end{fact}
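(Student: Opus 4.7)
My plan is to prove the bound by constructing a one-way public-coin randomized protocol of cost $O((\gamma_2^\infty(f))^2)$ via random hyperplane rounding. Writing $\gamma = \gamma_2^\infty(f)$, I would begin by fixing an optimal decomposition attaining the $\gamma_2^\infty$ value: a real matrix $M'$ whose sign pattern agrees with $M_f$ and satisfies $|M'_{xy}| \geq 1$, factored as $M' = AB$ with $\|A\|_r \|B\|_c = \gamma$. Letting $u_x$ denote row $x$ of $A$ and $v_y$ denote column $y$ of $B$, we have $\|u_x\|_2 \|v_y\|_2 \leq \gamma$ and $\langle u_x, v_y\rangle = M'_{xy}$ has the same sign as $f(x,y)$ with magnitude at least $1$. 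Consequently, the angle $\theta_{xy}$ between $u_x$ and $v_y$ satisfies $|\cos\theta_{xy}| \geq 1/\gamma$, with $\cos\theta_{xy}$ and $f(x,y)$ sharing the same sign.

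Next, I would use the public coins to sample $t = O(\gamma^2)$ i.i.d.\ standard Gaussian vectors $g_1, \ldots, g_t$ of the appropriate dimension. Alice computes $\sigma_i = \operatorname{sign}\langle u_x, g_i\rangle$ for each $i$ and sends the bit string $(\sigma_1, \ldots, \sigma_t)$ to Bob in a single message, for a total of $t = O(\gamma^2)$ bits transmitted one-way. Bob computes $\tau_i = \operatorname{sign}\langle v_y, g_i\rangle$ and outputs the bit corresponding to the sign of $\sum_{i=1}^t \sigma_i\tau_i$. Correctness follows from the Goemans-Williamson random hyperplane identity $\Pr[\sigma_i\tau_i = 1] = 1 - \theta_{xy}/\pi$, which gives $\mathbb{E}[\sigma_i\tau_i] = 1 - 2\theta_{xy}/\pi = (2/\pi)\arcsin(\cos\theta_{xy})$. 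Since $|\cos\theta_{xy}| \geq 1/\gamma$ and $|\arcsin(x)| \geq |x|$ on $[-1,1]$, this yields $|\mathbb{E}[\sigma_i\tau_i]| \geq 2/(\pi\gamma)$ with the correct sign. A standard Hoeffding bound on the average $(1/t)\sum_i \sigma_i\tau_i$ then shows that $t = O(\gamma^2)$ samples suffice to recover the correct sign with error at most $\epsilon$.

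The main technical step is converting the purely geometric guarantee $|\cos\theta_{xy}| \geq 1/\gamma$ into the concentration-friendly quantity $|\mathbb{E}[\sigma_i\tau_i]| = \Omega(1/\gamma)$; this requires care because the linear-in-$\theta$ expression does not factor through $\cos\theta$ directly, but the identity $1 - 2\theta/\pi = (2/\pi)\arcsin(\cos\theta)$ combined with the $|\arcsin(x)| \geq |x|$ estimate handles it cleanly. A minor secondary issue is discretizing the Gaussian distribution (or alternatively working with $\pm 1$ random vectors, which satisfy an analogous inner product identity up to constants) so that the shared randomness is describable by public coin bits; this is handled in the standard way by truncation and does not affect the asymptotic cost.
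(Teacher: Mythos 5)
Your proposal is correct: the paper states this as a cited result (Claim~2 of Linial--Shraibman), and the standard proof behind that citation is exactly your argument --- extract from an optimal $\gamma_2^\infty$ factorization vectors $u_x, v_y$ whose inner product has the sign of $f(x,y)$ and normalized margin at least $1/\gamma_2^\infty(f)$, then run a public-coin one-way protocol of cost $O((\gamma_2^\infty(f))^2)$ that estimates the sign of the correlation. Your instantiation via Goemans--Williamson hyperplane rounding (with the identity $\mathbb{E}[\sigma_i\tau_i]=(2/\pi)\arcsin(\cos\theta_{xy})$ and a Hoeffding bound) is interchangeable with the inner-product-estimation protocol used in the cited source (and in the spirit of Lemma~\ref{lem:class2_UB}), so there is no gap.
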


Consider a probability distribution $\mu$ over the set of inputs $\mathcal X\times\mathcal Y $. The \emph{$(\mu, \epsilon)$-distributional complexity} of $f:\mathcal X\times\mathcal Y \rightarrow \mathcal Z$, denoted as $D^\mu_\epsilon(f)$, is the cost of the best deterministic protocol that errs on at most a $\epsilon$ fraction of the inputs, weighted according to the distribution $\mu$~\cite{KNisan96}. Distributional complexity is a useful complexity measure when obtaining lower bounds on randomized communication complexity, in both the one-way and two-way models. 

\begin{fact}[\cite{yao1983lower}, {\cite[Theorem~3.20] {KNisan96}}]\label{fact:Dmu-Rpub}
Let $f:\mathcal X \times \mathcal Y \rightarrow\{0, 1\}$,. Then, $R^{\operatorname{pub}}_\epsilon(f) = \max_\mu D_{\epsilon}^\mu(f)$ over distributions $\mu$. 
\end{fact}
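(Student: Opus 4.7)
The plan is to establish both inequalities in Yao's minimax principle. First I would handle the easy direction $\max_\mu D^\mu_\epsilon(f) \le R^{\operatorname{pub}}_\epsilon(f)$. Fix any distribution $\mu$ on $\mathcal X \times \mathcal Y$ and any public-coin randomized protocol $\Pi$ of cost $c = R^{\operatorname{pub}}_\epsilon(f)$ that errs with probability at most $\epsilon$ on every input. Let $R$ denote the public random string. Then
\begin{equation*}
\mathbb{E}_{R}\bigl[\Pr_{(x,y)\sim \mu}[\Pi_R(x,y)\ne f(x,y)]\bigr] \;=\; \mathbb{E}_{(x,y)\sim \mu}\bigl[\Pr_R[\Pi_R(x,y)\ne f(x,y)]\bigr] \;\le\; \epsilon,
\end{equation*}
so by averaging there exists a fixing $R = r^\ast$ for which the resulting deterministic protocol $\Pi_{r^\ast}$ has cost $\le c$ and error at most $\epsilon$ under $\mu$. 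This gives $D^\mu_\epsilon(f) \le c$.

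For the nontrivial direction $R^{\operatorname{pub}}_\epsilon(f) \le \max_\mu D^\mu_\epsilon(f)$, I would set up a two-player zero-sum game. Let $c^\ast = \max_\mu D^\mu_\epsilon(f)$ and let $\mathcal P_{c^\ast}$ be the (finite) set of deterministic protocols of cost at most $c^\ast$. In the game, the protocol designer picks $\Pi \in \mathcal P_{c^\ast}$ (minimizer) and the adversary picks an input $(x,y) \in \mathcal X \times \mathcal Y$ (maximizer), with payoff equal to the indicator $\mathbf 1[\Pi(x,y)\ne f(x,y)]$. Mixed strategies for the minimizer are distributions over deterministic protocols (i.e.\ public-coin randomized protocols of cost $\le c^\ast$); mixed strategies for the maximizer are distributions $\mu$ on inputs. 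By von Neumann's minimax theorem applied to this finite bilinear game,
\begin{equation*}
\min_{\text{distributions over }\mathcal P_{c^\ast}} \; \max_{(x,y)} \; \Pr[\text{error}] \;=\; \max_{\mu}\; \min_{\Pi\in\mathcal P_{c^\ast}} \; \Pr_{(x,y)\sim\mu}[\text{error}].
\end{equation*}

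The right-hand side is at most $\epsilon$ by the definition of $c^\ast$, because for every $\mu$ there is some $\Pi\in\mathcal P_{c^\ast}$ achieving error $\le \epsilon$ under $\mu$. Therefore the left-hand side is also at most $\epsilon$, meaning there is a distribution over protocols in $\mathcal P_{c^\ast}$ whose worst-case error over inputs is at most $\epsilon$. This distribution is exactly a public-coin randomized protocol of cost $\le c^\ast$ and worst-case error $\le \epsilon$, so $R^{\operatorname{pub}}_\epsilon(f) \le c^\ast$.

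The main subtlety to be careful about is ensuring the game is finite (or compact) so that von Neumann's theorem applies cleanly: for fixed input length $n$ and cost bound $c^\ast$, both $\mathcal P_{c^\ast}$ and $\mathcal X \times \mathcal Y$ are finite, so standard LP duality / minimax suffices. A secondary technicality is verifying that a mixed strategy over bounded-cost deterministic protocols really corresponds to a public-coin randomized protocol of the same cost; this is immediate since shared randomness can be used to sample the index of the deterministic protocol to run. Once these points are settled, the equality $R^{\operatorname{pub}}_\epsilon(f) = \max_\mu D^\mu_\epsilon(f)$ follows from the two inequalities above.
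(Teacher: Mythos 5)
Your proof is correct and is exactly the standard argument behind the cited result (Yao's minimax principle, as in Kushilevitz--Nisan, Theorem 3.20): the easy direction by averaging over the public randomness and fixing a good coin string, and the hard direction by von Neumann's minimax theorem / LP duality applied to the finite zero-sum game between deterministic protocols of cost at most $\max_\mu D^\mu_\epsilon(f)$ and inputs. The paper simply cites this fact rather than reproving it, and your write-up, including the finiteness caveat and the identification of mixed protocol strategies with public-coin protocols, matches that standard proof.
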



Introduced by Chor and Goldreich~\cite{chor1988unbiased}, discrepancy has become one of the most commonly used measures in communication complexity to prove lower bounds for randomized protocols. Moreover, the inverse of this measure is rightly characterized by $\gamma_2^\infty(f)$ up to constant factors. For a Boolean matrix, $M_f$, the discrepancy of $M_f$ is the minimum over all input distributions of the maximum correlation that $M_f$ has with a rectangle. 
\begin{defn}[{\cite[Definition~3.27]{KNisan96}}]\label{def:disc}
    Let $f:\mathcal X\times \mathcal Y \rightarrow \{0,l\}$ be a function, $R$ be any rectangle, and $\mu$ be a probability distribution on $\mathcal X\times \mathcal Y$. Denote
    \begin{align*}
        & \operatorname{disc}_{\mu}(R,f) \\
        & = \Bigg\vert \Pr_\mu [f(x, y)  = 0 \text{ and } (x, y)\in R] \\
        & - \Pr_\mu [f(x, y) = 1 \text{ and } (x, y)\in R]\Bigg\vert. 
    \end{align*}
    The discrepancy of $f$ with respect to $\mu$ is given by 
    \begin{align*}
        \operatorname{disc}_\mu(f) = \max_R \operatorname{disc}_\mu(R, f),
    \end{align*}
    where the maximum is taken over all rectangles. Finally, the discrepancy of $f$ is 
    \begin{align}\label{eqn:disc}
        \operatorname{disc}(f) = \min_\mu \operatorname{disc}_\mu(f). 
    \end{align}
\end{defn}


\begin{fact}[{\cite[Theorem~7]{Anshu2017}}]\label{fact:Qstar-disc}
Let $f:\mathcal X \times \mathcal Y \rightarrow \{0, 1\}$ be a two-party function and let $\epsilon \in [0, 1/2)$. Then, $Q^*_\epsilon(f) = \Omega\left(\log\frac{1-2\epsilon}{\operatorname{disc}(f)}\right)$. 
\end{fact}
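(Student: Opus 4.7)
The plan is to deduce this directly by chaining together two facts already stated in the excerpt, using the approximate $\gamma_2$ norm as the intermediary. The key observation is that the quantity $\log\tfrac{1-2\epsilon}{\operatorname{disc}(f)}$ decomposes naturally into a $\gamma_2^{\infty}$ contribution (which matches $1/\operatorname{disc}$ up to constants) plus a $\log(1-2\epsilon)$ correction that exactly absorbs the $\log\alpha$ loss appearing in the $\gamma_2^{\alpha}$-based lower bound on $Q^*_{\epsilon}$.

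Concretely, I would set $\alpha = 1/(1-2\epsilon)$ and apply Fact~\ref{fact:gamma2alpha-Qstar}, which gives
\[
Q^*_{\epsilon}(M_f) \;\geq\; \log \gamma_2^{\alpha}(M_f) \;-\; \log\!\tfrac{1}{1-2\epsilon} \;-\; 2.
\]
By the monotonicity of approximate $\gamma_2$ norms in Fact~\ref{fact:gamma2inf-gamma2alpha}, we have $\gamma_2^{\alpha}(M_f) \geq \gamma_2^{\infty}(M_f)$. Invoking Fact~\ref{fact:gammaInf-disc}, which embodies the Linial--Shraibman duality $\gamma_2^{\infty}(M_f) = \Theta(1/\operatorname{disc}(f))$, then yields $\gamma_2^{\alpha}(M_f) = \Omega(1/\operatorname{disc}(f))$. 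Substituting back gives
\[
Q^*_{\epsilon}(M_f) \;\geq\; \log\!\tfrac{1}{\operatorname{disc}(f)} \;+\; \log(1-2\epsilon) \;-\; O(1) \;=\; \log\!\tfrac{1-2\epsilon}{\operatorname{disc}(f)} \;-\; O(1),
\]
which is $\Omega\!\bigl(\log\tfrac{1-2\epsilon}{\operatorname{disc}(f)}\bigr)$ in the regime where the right-hand side is non-trivial (the only regime in which the statement has content).

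The main obstacle is essentially bookkeeping rather than conceptual: one needs to verify that the hidden constants in the Linial--Shraibman equivalence $\gamma_2^{\infty} \asymp 1/\operatorname{disc}$ are absolute (i.e.\ independent of $\epsilon$), so that the additive $O(1)$ slack does not secretly hide an $\epsilon$-dependent factor that would degrade the bound as $\epsilon \to 1/2$. Since that equivalence is a purely combinatorial matrix-norm identity proved via LP duality for $\gamma_2$, the constants are indeed absolute and the three-line chain above goes through. An alternative self-contained route would be to re-derive the $\gamma_2^{\alpha}$ lower bound on $Q^*_{\epsilon}$ inline and insert a direct $\gamma_2$--$\operatorname{disc}$ duality argument (fixing a witnessing distribution $\mu$ and rectangle $R$, and exploiting $\gamma_2^*(\mu\circ \chi_R)\leq \|u\|_2\|v\|_2$ for the indicator factorization of $R$), but the modular route through Facts~\ref{fact:gamma2alpha-Qstar},~\ref{fact:gamma2inf-gamma2alpha}, and~\ref{fact:gammaInf-disc} is much shorter given the infrastructure already in place.
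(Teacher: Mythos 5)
Your derivation is correct: setting $\alpha=\tfrac{1}{1-2\epsilon}$ in Fact~\ref{fact:gamma2alpha-Qstar}, using $\gamma_2^{\alpha}\geq\gamma_2^{\infty}$ from Fact~\ref{fact:gamma2inf-gamma2alpha}, and then $\gamma_2^{\infty}(M_f)\geq\tfrac{1}{8\operatorname{disc}(f)}$ from Fact~\ref{fact:gammaInf-disc} gives $Q^*_\epsilon(f)\geq\log\tfrac{1-2\epsilon}{\operatorname{disc}(f)}-5$, and the constants in the $\gamma_2^{\infty}$--discrepancy equivalence are indeed absolute, so the additive loss is independent of $\epsilon$ as you claim (the $\epsilon=0$ endpoint and the $0/1$ versus $\pm1$ encoding of $M_f$ are handled by the usual trivial adjustments). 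Note, however, that the paper does not prove this statement at all: it is imported verbatim as a citation to Anshu et al.\ (Theorem~7 of the cited work), where the result is obtained by a different and stronger route, namely a discrepancy lower bound on quantum information complexity (which then lower bounds $Q^*_\epsilon$), rather than through the Linial--Shraibman factorization-norm machinery. Your chain is therefore a genuinely different, more elementary matrix-analytic argument that is entirely sufficient for the fact as stated here, and it has the attraction of using only ingredients already present in the paper; what it does not give is the finer information-theoretic content of the cited theorem (a bound on $QIC$ rather than just on communication), which is what the original source establishes.
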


\begin{fact}[{\cite[Proposition~3.28]{KNisan96}}]\label{fact:disc-Dmu}
For every function $f:\mathcal X \times \mathcal Y \rightarrow \{0, 1\}$, every probability distribution $\mu$ on $\mathcal X \times \mathcal Y$ and every $\epsilon \geq 0$, $D^\mu_{1/2 - \epsilon}(f) \geq \log \left(\frac{2\epsilon}{\operatorname{disc}_\mu(f)}\right)$. 
\end{fact}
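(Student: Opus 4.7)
The plan is to use the standard rectangle-counting argument: a deterministic protocol of cost $c$ partitions $\mathcal X \times \mathcal Y$ into at most $2^c$ monochromatic rectangles, and the total ``advantage'' over $1/2$ achieved by the protocol under $\mu$ can be charged to the discrepancy of these rectangles.

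Concretely, I would first fix an optimal deterministic protocol $\Pi$ of cost $c = D^\mu_{1/2-\epsilon}(f)$ and let $R_1,\dots,R_t$ (with $t\le 2^c$) be the leaf rectangles, each carrying an output label $b_i\in\{0,1\}$. Since $\Pi$ errs with probability at most $1/2-\epsilon$ under $\mu$, its advantage satisfies
\[
\sum_{i=1}^{t}\bigl(\Pr_\mu[(x,y)\in R_i,\,f(x,y)=b_i]-\Pr_\mu[(x,y)\in R_i,\,f(x,y)\neq b_i]\bigr)\;\ge\;2\epsilon.
\]

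Next I would bound each summand by $\operatorname{disc}_\mu(f)$. For any single rectangle $R_i$ with label $b_i$, the quantity inside the sum equals $\pm\bigl(\Pr_\mu[f=0,(x,y)\in R_i]-\Pr_\mu[f=1,(x,y)\in R_i]\bigr)$, whose absolute value is exactly $\operatorname{disc}_\mu(R_i,f)$ from Definition~\ref{def:disc} and therefore at most $\operatorname{disc}_\mu(f)$. Summing over the $t\le 2^c$ rectangles yields
\[
2\epsilon \;\le\; \sum_{i=1}^{t}\operatorname{disc}_\mu(R_i,f)\;\le\; 2^c\cdot\operatorname{disc}_\mu(f),
\]
which rearranges to $c\ge \log\bigl(2\epsilon/\operatorname{disc}_\mu(f)\bigr)$, proving the claim.

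The argument is almost entirely bookkeeping and there is no real obstacle; the only subtle point to get right is the sign manipulation when the rectangle is labeled $0$ versus $1$, ensuring that the pointwise bound by $\operatorname{disc}_\mu(R_i,f)$ is genuinely valid for both labels. One also has to verify the edge cases $\epsilon=0$ (where the inequality is trivial since $\log 0=-\infty$) and the case $\operatorname{disc}_\mu(f)=0$ (which forces $f$ to be constant on each rectangle, in which case the inequality is vacuous in the conventional sense), but neither poses real difficulty.
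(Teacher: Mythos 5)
Your proof is correct and is essentially the standard argument behind the cited result (Kushilevitz--Nisan, Proposition~3.28), which the paper invokes without reproving: charge the $2\epsilon$ advantage of a cost-$c$ protocol to its at most $2^c$ leaf rectangles, each contributing at most $\operatorname{disc}_\mu(f)$. The sign bookkeeping and edge cases you flag are handled correctly, so nothing further is needed.
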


\begin{fact}[{\cite[Theorem~18]{linial2007lower}}]\label{fact:gammaInf-disc}
For every sign matrix $M_f$, $\frac{1}{8}\gamma_2^\infty(M_f)\leq \frac{1}{\operatorname{disc}(M_f)}\leq 8\gamma_2^\infty(M_f)$, 
\end{fact}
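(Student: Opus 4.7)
The plan is to follow Linial and Shraibman's approach by combining a dual characterization of $\gamma_2^\infty$ with Grothendieck's inequality. For a sign matrix $M$, applying convex duality to the definition $\gamma_2^\infty(M) = \min\{\gamma_2(M') : M' \circ M \geq 1\}$ yields
\begin{align*}
\gamma_2^\infty(M) = \max\left\{\frac{\langle Y, M\rangle}{\gamma_2^*(Y)} : Y \circ M \geq 0\right\},
\end{align*}
where $\gamma_2^*$ is the norm dual to $\gamma_2$. Since $M$ is $\pm 1$-valued, every feasible $Y$ factors uniquely as $Y = \|Y\|_1 \cdot \mu \circ M$, where $\mu := |Y|/\|Y\|_1$ is a probability distribution, and satisfies $\langle Y, M\rangle = \|Y\|_1$.

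The key analytic ingredient is Grothendieck's inequality, which states
\begin{align*}
\|A\|_{\infty \to 1} \leq \gamma_2^*(A) \leq K_G \cdot \|A\|_{\infty \to 1}
\end{align*}
for every real matrix $A$, where $K_G < 1.783$ and $\|A\|_{\infty \to 1} = \max_{u,v \in \{\pm 1\}^n} u^T A v$. Combined with the elementary comparison $\|A\|_C \leq \|A\|_{\infty\to 1} \leq 4 \|A\|_C$ between the cut norm and the $\infty \to 1$ norm, and the direct identification $\|\mu \circ M\|_C = \operatorname{disc}_\mu(M)$ drawn straight from Definition~\ref{def:disc}, these facts set up a precise dictionary between the dual side of $\gamma_2^\infty$ and discrepancy.

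With these pieces in hand, both inequalities follow by short chains of substitution. For $\gamma_2^\infty(M) \leq 8/\operatorname{disc}(M)$, any feasible $Y$ satisfies $\gamma_2^*(Y) \geq \|Y\|_{\infty \to 1} \geq \|Y\|_C = \|Y\|_1 \operatorname{disc}_\mu(M) \geq \|Y\|_1 \operatorname{disc}(M)$ using only the easy direction of Grothendieck, and taking the ratio and maximizing over $Y$ actually yields $\gamma_2^\infty(M) \leq 1/\operatorname{disc}(M)$. For $1/\operatorname{disc}(M) \leq 8 \gamma_2^\infty(M)$, I would exhibit the dual witness $Y = \mu^* \circ M$, where $\mu^*$ is the optimal distribution for discrepancy; then $\langle Y, M\rangle = 1$, feasibility $Y \circ M = \mu^* \geq 0$ follows from $M_{ij}^2 = 1$, and the hard direction of Grothendieck together with the $\infty\to 1$-to-cut-norm comparison give $\gamma_2^*(Y) \leq 4 K_G \operatorname{disc}(M) < 8 \operatorname{disc}(M)$, so the dual formula delivers the claim.

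The main obstacle is justifying the dual characterization of $\gamma_2^\infty$ rigorously (essentially an SDP duality computation) and tracking the constants from Grothendieck's inequality and the cut-norm/$\infty\to 1$-norm conversion; once these are in place, the chain of inequalities is mechanical, and the constant $8$ comes out with some slack from $4 K_G < 7.13$.
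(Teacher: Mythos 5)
The paper does not prove this fact at all---it imports it verbatim from Linial and Shraibman (Theorem~18)---and your argument is essentially their proof: the dual characterization of $\gamma_2^\infty$, Grothendieck's inequality relating $\gamma_2^*$ to the $\infty\to 1$ norm, and the factor-$4$ comparison with the cut norm, which for a weighted sign matrix is exactly $\operatorname{disc}_\mu(M)=\lVert \mu\circ M\rVert_C$. Your inequality chains and constants check out ($4K_G<8$, and the easy direction even yields the sharper constant $1$), and the dual formula you flag as the main obstacle is a routine Lagrangian duality computation with Slater's condition satisfied (e.g.\ $M'=2M$ is strictly feasible), so there is no genuine gap.
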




Define the weakly unbounded error (public-coin) communication complexity as follows {\cite[Definition~2.3.5] {mande2018communication}}:
$PP^{\operatorname{pub}}(f) = \displaystyle\inf_{0\leq\epsilon < 1/2}\left\{R^{\operatorname{pub}}_\epsilon(f) - \log\left(\frac{1}{1/2 - \epsilon}\right)\right\}$. 
The result below shows the equivalence relation between $PP^{\operatorname{pub}}(f)$ and $\operatorname{disc}(f)$. 
\begin{fact}[{\cite[Theorem~2.3.8]{mande2018communication}, \cite[Corollary~1.4]{Klauck2007}}]\label{fact:PP-disc}
For all $f:\{0, 1\}^n \times \{0, 1\}^n \rightarrow \{0, 1\}$, we have $PP^{\operatorname{pub}}(f) = \Theta\left(\log\frac{1}{\operatorname{disc}(f)}\right)$. 
\end{fact}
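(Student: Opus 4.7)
The plan is to prove $PP^{\operatorname{pub}}(f) = \Theta(\log(1/\operatorname{disc}(f)))$ by establishing matching upper and lower bounds, leveraging facts already stated in this section.

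For the lower bound $PP^{\operatorname{pub}}(f) \geq \Omega(\log(1/\operatorname{disc}(f)))$, I would combine Yao's minimax principle (Fact~\ref{fact:Dmu-Rpub}) with the discrepancy bound on distributional complexity (Fact~\ref{fact:disc-Dmu}). Fix any $\epsilon \in [0, 1/2)$ and let $\mu^\ast$ be a distribution realizing $\operatorname{disc}_{\mu^\ast}(f) = \operatorname{disc}(f)$. Applying Fact~\ref{fact:disc-Dmu} with $\alpha = 1/2 - \epsilon$ gives $D^{\mu^\ast}_{\epsilon}(f) \geq \log((1 - 2\epsilon)/\operatorname{disc}(f))$, and Fact~\ref{fact:Dmu-Rpub} transfers this to $R^{\operatorname{pub}}_{\epsilon}(f)$. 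Since $1 - 2\epsilon = 2(1/2 - \epsilon)$, a one-line rearrangement gives $R^{\operatorname{pub}}_{\epsilon}(f) + \log(1/(1/2 - \epsilon)) \geq \log(2/\operatorname{disc}(f))$. The right-hand side is independent of $\epsilon$, so taking the infimum finishes this direction.

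For the upper bound $PP^{\operatorname{pub}}(f) \leq O(\log(1/\operatorname{disc}(f)))$, the plan is to exhibit a single-bit public-coin protocol whose accuracy margin is $\Omega(\operatorname{disc}(f))$, via Goemans--Williamson-style random-hyperplane rounding of a $\gamma_2^\infty$ factorization. By Fact~\ref{fact:gammaInf-disc}, $\gamma_2^\infty(M_f) \leq 8/\operatorname{disc}(f)$; the defining factorization yields vectors $u_x, v_y$ in a common Hilbert space whose inner products share the sign of $M_f(x,y)$, satisfy $|\langle u_x, v_y\rangle| \geq 1$, and have $\|u_x\| \cdot \|v_y\| \leq \gamma_2^\infty(M_f)$. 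After normalization, $|\langle \tilde u_x, \tilde v_y \rangle| \geq 1/\gamma_2^\infty(M_f) = \Omega(\operatorname{disc}(f))$. Alice and Bob now share a public Gaussian vector $g$; Alice transmits one bit $b_A = \operatorname{sign}(\langle \tilde u_x, g\rangle)$ and Bob computes $b_B = \operatorname{sign}(\langle \tilde v_y, g\rangle)$ and outputs $b_A b_B$ in $\pm 1$ convention. By the Sheppard/Goemans--Williamson identity, the expected value of $b_A b_B$ equals $1 - 2\theta_{xy}/\pi$ where $\theta_{xy} = \arccos(\langle \tilde u_x, \tilde v_y\rangle)$; using $\arcsin(t) \geq t$ near the origin gives $|\pi/2 - \theta_{xy}| = \Omega(1/\gamma_2^\infty(M_f))$, so the correlation with $M_f(x,y)$ is at least $\Omega(\operatorname{disc}(f))$. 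Hence $R^{\operatorname{pub}}_{1/2 - \Omega(\operatorname{disc}(f))}(f) \leq 1$, and the resulting PP cost is $1 + \log(1/\Omega(\operatorname{disc}(f))) = O(\log(1/\operatorname{disc}(f)))$.

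The main obstacle is the upper bound, where one must convert a purely analytic quantity (the factorization norm) into a genuine protocol. The critical subtlety is ensuring that the margin scales \emph{linearly} rather than quadratically in $1/\gamma_2^\infty(M_f)$; this is why Goemans--Williamson rounding is the right tool, since the smoothness of $\arccos$ near $\pi/2$ yields a first-order (not second-order) dependence on $|\langle \tilde u_x, \tilde v_y \rangle|$. The lower bound, by contrast, is a direct two-fact calculation requiring no new machinery.
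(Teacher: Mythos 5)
Your proposal is correct, and it is worth noting that the paper itself gives no argument for this statement: it is imported as a fact from Klauck (Corollary~1.4 of the cited work) and Mande's thesis, so there is no internal proof to match. Your lower bound is exactly the standard two-fact calculation: Yao's principle (Fact~\ref{fact:Dmu-Rpub}) together with the discrepancy bound on distributional complexity (Fact~\ref{fact:disc-Dmu}), with the reparametrization $\epsilon' = 1/2-\epsilon$ handled correctly; one remark is that you implicitly use the standard definition $PP^{\operatorname{pub}}(f)=\inf_{\epsilon}\{R^{\operatorname{pub}}_\epsilon(f)+\log\frac{1}{1/2-\epsilon}\}$, whereas the paper's displayed definition has a minus sign --- evidently a typo, since the fact would be false under the literal reading, so your reading is the right one. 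Your upper bound takes a different route from the cited source: you pass through $\gamma_2^\infty(M_f)\le 8/\operatorname{disc}(f)$ (Fact~\ref{fact:gammaInf-disc}), normalize the factorization to get unit vectors with $|\langle \tilde u_x,\tilde v_y\rangle|\ge 1/\gamma_2^\infty(M_f)$ of the correct sign, and apply Sheppard/Goemans--Williamson rounding to obtain a one-bit public-coin protocol whose advantage is $\tfrac{2}{\pi}|\arcsin\langle \tilde u_x,\tilde v_y\rangle|\ge \tfrac{2}{\pi\gamma_2^\infty(M_f)}=\Omega(\operatorname{disc}(f))$ uniformly over inputs, hence $PP^{\operatorname{pub}}(f)\le O(1)+\log\frac{1}{\Omega(\operatorname{disc}(f))}$. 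This is in effect the margin-complexity route (compare Fact~\ref{fact:mc}), and your point about the linear (rather than quadratic) dependence of the margin on $1/\gamma_2^\infty$ is exactly the place where a cruder rounding would lose; Klauck's original argument instead extracts the constant-cost protocol directly from discrepancy by an LP-duality/minimax argument over near-monochromatic rectangles (answer according to a randomly chosen rectangle), avoiding the factorization norm altogether. Both approaches yield $R^{\operatorname{pub}}_{1/2-\Omega(\operatorname{disc}(f))}(f)=O(1)$ and hence the claimed bound; the only cosmetic caveat is the degenerate regime where $\operatorname{disc}(f)$ is a constant, so that $\log(1/\operatorname{disc}(f))=O(1)$ while any protocol costs at least one bit, which is absorbed into the $\Theta$ as usual.
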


Given a sign matrix $M\in\{-1, 1\}^{n\times n}$, a (Euclidean) embedding of $M_f$ is a collection of vectors $u_1, \cdots, u_n, v_1, \cdots, v_n\in\mathbb R^k$ such that $\langle u_i, \cdot v_j\rangle \cdot M_{i, j} > 0$ for all $i, j\in[n]$. The margin of the embedding is defined as 
\begin{align*}
    \zeta = \min_{i, j} \frac{\vert \langle u_i, v_j\rangle \vert}{\Vert u_i\Vert_2 \cdot \Vert v_j\Vert_2}.
\end{align*}
The margin complexity of $M$, denoted as $\operatorname{mc}(M)$ is the minimum $1/\zeta$ over all embeddings of $M$~\cite{Sherstov2010}. The fact below shows a tight characterization of $\frac{1}{\operatorname{disc(f)}}$ in terms of $\operatorname{mc}(f)$ up to some constant factor. 
\begin{fact}[{\cite[Theorem~2.7]{Sherstov2010}, \cite[Theorem~3.1] {linial2009learning}}]\label{fact:mc}
Let $M_f$ be a sign matrix. Then, $\frac{1}{8}\operatorname{mc}(M_f)\leq \frac{1}{\operatorname{disc}(M_f)}\leq 8\operatorname{nc}(M_f)$.
\end{fact}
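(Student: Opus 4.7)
The statement reads $\tfrac{1}{8}\operatorname{mc}(M_f)\le \tfrac{1}{\operatorname{disc}(M_f)}\le 8\operatorname{nc}(M_f)$, where I will assume the symbol $\operatorname{nc}$ on the right is a typographical rendering of $\operatorname{mc}$ (the margin complexity just defined), so the target is the two-sided sandwich $\tfrac{1}{8}\operatorname{mc}(M_f)\le \tfrac{1}{\operatorname{disc}(M_f)}\le 8\operatorname{mc}(M_f)$. My plan is to reduce the statement to the analogous sandwich for $\gamma_2^\infty$, which is exactly Fact~\ref{fact:gammaInf-disc}, by establishing the identity $\operatorname{mc}(M_f)=\gamma_2^\infty(M_f)$. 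Given this identity, substituting into Fact~\ref{fact:gammaInf-disc} finishes the proof immediately.

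To prove $\gamma_2^\infty(M_f)\le \operatorname{mc}(M_f)$, I will start from an optimal Euclidean embedding of $M_f$: vectors $u_1,\ldots,u_n,v_1,\ldots,v_n$ with $\langle u_i,v_j\rangle\cdot (M_f)_{ij}>0$ and margin $\zeta=1/\operatorname{mc}(M_f)$. After normalizing the vectors to be unit length (which does not change the margin) and uniformly rescaling by $1/\zeta$, I obtain vectors $\tilde u_i,\tilde v_j$ with $\|\tilde u_i\|_2,\|\tilde v_j\|_2\le \operatorname{mc}(M_f)^{1/2}\cdot \operatorname{mc}(M_f)^{1/2}=\operatorname{mc}(M_f)$ after an appropriate symmetric split, and with $\langle \tilde u_i,\tilde v_j\rangle\cdot (M_f)_{ij}\ge 1$. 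Let $A$ be the matrix whose rows are the $\tilde u_i$ and $B$ the matrix whose columns are the $\tilde v_j$, and set $M'=AB$. Then $(M_f)_{ij}M'_{ij}\ge 1$, so $M'$ is an admissible matrix in the definition of $\gamma_2^\infty$, and $\|A\|_r\|B\|_c\le \operatorname{mc}(M_f)$; thus $\gamma_2^\infty(M_f)\le \gamma_2(M')\le \operatorname{mc}(M_f)$.

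For the reverse inequality $\operatorname{mc}(M_f)\le \gamma_2^\infty(M_f)$, I will take a matrix $M'$ achieving the infimum in the definition of $\gamma_2^\infty$: $(M_f)_{ij}M'_{ij}\ge 1$ and $\gamma_2(M')=\gamma_2^\infty(M_f)$. Factor $M'=AB$ with $\|A\|_r\|B\|_c=\gamma_2(M')$, let $u_i$ be the $i$-th row of $A$ and $v_j$ the $j$-th column of $B$. Then $\langle u_i,v_j\rangle=M'_{ij}$ has the same sign as $(M_f)_{ij}$, so the $u_i,v_j$ form a valid embedding, and its margin satisfies
\[
\zeta \;=\; \min_{i,j}\frac{|\langle u_i,v_j\rangle|}{\|u_i\|_2\|v_j\|_2} \;\ge\; \frac{1}{\|A\|_r\|B\|_c} \;=\; \frac{1}{\gamma_2^\infty(M_f)},
\]
because the numerator is at least $1$. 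Hence $\operatorname{mc}(M_f)\le 1/\zeta\le \gamma_2^\infty(M_f)$.

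Combining the two directions yields $\operatorname{mc}(M_f)=\gamma_2^\infty(M_f)$, at which point Fact~\ref{fact:gammaInf-disc} gives the claimed double inequality with constants $1/8$ and $8$. The only mildly delicate step is the first direction: one must be careful to split the rescaling symmetrically between the row and column factors so that the resulting $\|A\|_r\|B\|_c$ bound reads $\operatorname{mc}(M_f)$ rather than $\operatorname{mc}(M_f)^2$; this is the place where the usual convention of normalizing to unit vectors before scaling by $1/\sqrt{\zeta}$ on each side is essential. All other steps are routine manipulations of the two equivalent definitions of $\gamma_2$ provided in the preliminaries.
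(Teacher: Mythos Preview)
Your approach is correct: the identity $\operatorname{mc}(M_f)=\gamma_2^\infty(M_f)$ holds exactly as you argue, and plugging it into Fact~\ref{fact:gammaInf-disc} immediately yields the stated sandwich. The paper itself does not supply a proof of Fact~\ref{fact:mc} at all --- it is quoted as a black-box citation to \cite{Sherstov2010,linial2009learning} --- so there is no in-paper argument to compare against; your derivation is essentially the content of those references (Linial--Shraibman observe $\operatorname{mc}=\gamma_2^\infty$, and the discrepancy/$\gamma_2^\infty$ relation is their Theorem~18, here Fact~\ref{fact:gammaInf-disc}).

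One expository slip worth tightening: in your first direction you write ``uniformly rescaling by $1/\zeta$'' and then state $\lVert\tilde u_i\rVert_2,\lVert\tilde v_j\rVert_2\le \operatorname{mc}(M_f)^{1/2}\cdot\operatorname{mc}(M_f)^{1/2}$, which is not what you want. The correct scaling, which you yourself identify in the last paragraph, is by $1/\sqrt{\zeta}$ on each side, giving $\lVert\tilde u_i\rVert_2=\lVert\tilde v_j\rVert_2=1/\sqrt{\zeta}$ and hence $\lVert A\rVert_r\lVert B\rVert_c\le 1/\zeta=\operatorname{mc}(M_f)$. Just replace the garbled sentence with this clean version and the proof is complete.
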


We now review a few other lower bound methods in communication complexity. In the beginning of this section, we briefly introduced the concept of rectangles.  We shall proceed with a more detailed description here.

\begin{defn}[Rectangle bound,~{\cite[Definition~5]{JK10}}]\label{def:rec}
Let $f:\mathcal X\times \mathcal Y\rightarrow \mathcal Z$. The $\epsilon$-rectangle bound of $f$, denoted as $\widetilde{\operatorname{rec}}_\epsilon(f)$ is defined as $\widetilde{\operatorname{rec}}_\epsilon(f)\coloneqq \max\{\widetilde{\operatorname{rec}}^z_\epsilon(f): z\in \mathcal Z\}$, where $\widetilde{\operatorname{rec}}^z_\epsilon(f)$ is the optimal value of the following linear program: 

$\displaystyle
\begin{aligned}
    & \underline{\text{Primal}}\\
    & \text{Min} \quad \sum_R w_R  \\
    & \text{st} \quad \forall (x, y) \in f^{-1}(z): \sum_{R:(x, y)\in R} w_R \geq 1 - \epsilon, \\ 
    & \quad \quad \forall (x, y) \in f^{-1} - f^{-1}(z): \sum_{R:(x, y)\in R} w_R \leq \epsilon, \\
    & \quad \forall R: w_R \geq 0.\\
    \\[1em]
    & \underline{\text{Dual}} \\
    & \text{Max} \quad \sum_{(x, y)\in f^{-1}(z)} (1 - \epsilon)\cdot \mu_{x, y} 
    - \sum_{(x, y)\in f^{-1} - f^{-1}(z)} \epsilon \cdot \mu_{x, y}\\
    & \text{st} \quad \forall R: \sum_{(x, y)\in f^{-1}(z) \cap R} \mu_{x, y} 
    - \sum_{(x, y)\in (R \cap f^{-1}) - f^{-1}(z)} \mu_{x, y} \leq 1, \\
    & \quad \quad \forall (x, y): \mu_{x,y}\geq 0.
\end{aligned}
$%
\end{defn}

\begin{defn}[Rectangle bound: conventional  definition~{\cite[Definition~3]{Klauck2003}},~{\cite[Definition~6]{JK10}}]\label{def:rec_bound_conventional}
Let $f:\mathcal X\times \mathcal Y\rightarrow \mathcal Z$ be a function and let $\mathcal R$ be the set of all rectangles in $\mathcal X\times \mathcal Y$. The $\epsilon$-rectangle bound of $f$ i, denoted as $\operatorname{rec}_\epsilon(f)$, is defined as: 
\begin{align*}
    & \operatorname{rec}_\epsilon(f) \coloneqq \max\{\operatorname{rec}^z_\epsilon(f): z\in\mathcal Z\}\\
    & \operatorname{rec}^z_\epsilon(f) \coloneqq \max\{\operatorname{rec}^{z, \lambda}_\epsilon(f):\lambda \text{ is a distribution on }\\
    & \quad \quad \quad \quad \mathcal X\times \mathcal Y \cap f^{-1}  \text{ with } \lambda(f^{-1}(z))\geq 0.5\}\\
    & \operatorname{rec}^{z, \lambda}_\epsilon(f) \coloneqq \min\Bigg\{\frac{1}{\lambda(f^{-1}(z)\cap R)}: R\in\mathcal R   \\
    & \quad \quad \quad \quad \quad \text{ with } \epsilon\cdot \lambda(f^{-1}(z)\cap R) > \lambda(R - f^{-1}(z))\Bigg\}, 
    \end{align*} 
\end{defn}

The two definitions of the rectangle bound has been shown to be equivalent.

\begin{fact}[{\cite[Lemma~1]{JK10}}]\label{fact:rec_bound_defs}
Let $f: \mathcal X \times \mathcal Y \rightarrow \mathcal Z$ be a function and let $\epsilon > 0$. Then, 
\begin{enumerate}
    \item $\widetilde{\operatorname{rec}}^z_\epsilon(f) \leq \operatorname{rec}^z_{\epsilon/2}(f)$.
    \item $\widetilde{\operatorname{rec}}^z_\epsilon(f) \geq \frac{1}{2}\cdot \left(\frac{1}{2} - \epsilon\right)\cdot \operatorname{rect}^z_{2\epsilon}(f)$. 
\end{enumerate}
\end{fact}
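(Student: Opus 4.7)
\begin{proofs}
The plan is to prove each inequality by LP duality, converting the optimal witness for one formulation of the rectangle bound into a feasible point for the other.

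For part~1, I would start from an optimal dual-feasible $\mu$ for $\widetilde{\operatorname{rec}}^z_\epsilon(f)$ and form the distribution $\lambda_{x,y} = \mu_{x,y}/\mu(f^{-1})$. A positive dual objective forces $\mu(f^{-1}(z)) \geq \mu(f^{-1})/2$, so $\lambda(f^{-1}(z)) \geq 1/2$ and $\lambda$ is admissible in the conventional definition. For any $\epsilon/2$-good rectangle $R$, the dual feasibility $\mu(R\cap f^{-1}(z)) - \mu(R \cap (f^{-1}-f^{-1}(z))) \leq 1$ combined with $(\epsilon/2)\lambda(f^{-1}(z) \cap R) > \lambda(R - f^{-1}(z))$ gives $\mu(R\cap f^{-1}(z))(1 - \epsilon/2) \leq 1$. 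Rearranging yields $\operatorname{rec}^{z,\lambda}_{\epsilon/2}(f) \geq (1-\epsilon/2)\mu(f^{-1})$, and since the dual objective $(1-\epsilon)\mu(f^{-1}(z)) - \epsilon\mu(f^{-1}-f^{-1}(z))$ is at most $\mu(f^{-1})$, we obtain $\widetilde{\operatorname{rec}}^z_\epsilon(f) \leq \operatorname{rec}^z_{\epsilon/2}(f)$ after absorbing the $(1-\epsilon/2)$ factor into the shift in the error parameter.

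For part~2, let $k = \operatorname{rec}^z_{2\epsilon}(f)$ and fix an optimal distribution $\lambda$ on $f^{-1}$ with $\lambda(f^{-1}(z)) \geq 1/2$ such that every $2\epsilon$-good rectangle $R$ satisfies $\lambda(f^{-1}(z) \cap R) \leq 1/k$. I would define a \emph{non-uniform} scaling $\mu_{x,y} = k\lambda_{x,y}$ for $(x,y) \in f^{-1}(z)$ and $\mu_{x,y} = (k/(2\epsilon))\lambda_{x,y}$ for $(x,y) \in f^{-1} - f^{-1}(z)$. Dual feasibility splits into two cases: on $2\epsilon$-good rectangles, $\mu(R \cap f^{-1}(z)) \leq k\cdot(1/k) = 1$ suffices; on the remaining rectangles, the extra $1/(2\epsilon)$ factor on $f^{-1} - f^{-1}(z)$ exactly cancels the $2\epsilon$ lower bound on $\lambda(R - f^{-1}(z))$, yielding $\mu(R\cap f^{-1}(z)) - \mu(R\cap(f^{-1}-f^{-1}(z))) \leq 0 \leq 1$. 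Evaluating the dual objective using $\lambda(f^{-1}-f^{-1}(z)) = 1 - \lambda(f^{-1}(z))$ and $\lambda(f^{-1}(z)) \geq 1/2$ gives $k\lambda(f^{-1}(z))(3/2 - \epsilon) - k/2 \geq k(1/4 - \epsilon/2) = \tfrac{1}{2}(\tfrac{1}{2} - \epsilon)k$, which is the claim.

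The main obstacle is recognizing that a \emph{uniform} scaling $\mu = c\lambda$ fails in part~2 when $k$ is large: the non-$2\epsilon$-good rectangles force $c \leq 1/(1-2\epsilon) = O(1)$, which would make the objective $O(1)$ rather than $\Omega(k)$. The fix is precisely the two-scale construction above, where boosting the weight on $f^{-1}-f^{-1}(z)$ by the factor $1/(2\epsilon)$ cancels the non-good rectangles' contribution, and the resulting penalty $-\epsilon \cdot (k/(2\epsilon)) = -k/2$ in the objective costs exactly a factor of $\tfrac{1}{2}$ relative to the gain $k \cdot \lambda(f^{-1}(z))$.
\end{proofs}
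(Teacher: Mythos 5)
The paper itself gives no proof of this fact; it is imported verbatim from Ref.~\cite{JK10} (Lemma~1), so your argument can only be judged on its own terms. Your overall route---weak/strong LP duality plus converting a witness of one formulation into a witness of the other---is the natural one, and your part~2 is correct: the two-scale assignment $\mu_{x,y}=k\lambda_{x,y}$ on $f^{-1}(z)$ and $\mu_{x,y}=\frac{k}{2\epsilon}\lambda_{x,y}$ on $f^{-1}\setminus f^{-1}(z)$ is exactly what is needed, the case split over $2\epsilon$-good versus not-good rectangles does verify dual feasibility, and the objective computation using $\lambda(f^{-1}(z))\geq 1/2$ gives $\frac12\left(\frac12-\epsilon\right)k$ by weak duality. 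Your closing remark about why a uniform scaling cannot work there is also accurate.

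Part~1, however, has a genuine gap. You assert that positivity of the dual objective forces $\mu(f^{-1}(z))\geq \mu(f^{-1})/2$. It does not: positivity of $(1-\epsilon)\mu(f^{-1}(z))-\epsilon\,\mu(f^{-1}\setminus f^{-1}(z))$ only yields $\mu(f^{-1}(z))>\frac{\epsilon}{1-\epsilon}\,\mu(f^{-1}\setminus f^{-1}(z))$, and for $\epsilon<1/2$ the factor $\frac{\epsilon}{1-\epsilon}$ is strictly below $1$ (for $\epsilon=0.1$ you only get a ratio of $1/9$). Nothing pushes the optimal dual solution toward $\mu(f^{-1}(z))\geq\mu(f^{-1}\setminus f^{-1}(z))$: mass on $f^{-1}\setminus f^{-1}(z)$ relaxes the rectangle constraints and can be profitable, so your $\lambda=\mu/\mu(f^{-1})$ may violate the admissibility condition $\lambda(f^{-1}(z))\geq 0.5$ in Definition~\ref{def:rec_bound_conventional}, and the argument collapses at exactly the point where the pitfall you identified in part~2 reappears in mirror image. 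The repair is the analogous two-block renormalization: writing $A=\mu(f^{-1}(z))$ and $B=\mu(f^{-1}\setminus f^{-1}(z))$, take $\lambda=\frac{1}{2A}\,\mu|_{f^{-1}(z)}+\frac{1}{2B}\,\mu|_{f^{-1}\setminus f^{-1}(z)}$ (treating $B=0$ or $A=0$ separately), so $\lambda(f^{-1}(z))=1/2$. An $(\epsilon/2)$-good rectangle for this $\lambda$ satisfies $\mu\bigl(R\cap(f^{-1}\setminus f^{-1}(z))\bigr)<\frac{\epsilon B}{2A}\,\mu\bigl(R\cap f^{-1}(z)\bigr)$, and since positivity of the objective does give $\epsilon B<(1-\epsilon)A<2A$, the dual constraint yields $\lambda\bigl(R\cap f^{-1}(z)\bigr)<\frac{1}{2A-\epsilon B}$, hence $\operatorname{rec}^{z,\lambda}_{\epsilon/2}(f)\geq 2A-\epsilon B\geq(1-\epsilon)A-\epsilon B=\widetilde{\operatorname{rec}}^z_\epsilon(f)$, which closes part~1 without even needing your $(1-\epsilon/2)$ bookkeeping.
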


In the one-way communication model, a \emph{one-way rectangle} $R^{A\rightarrow B}$ is a set $S \times \mathcal Y$, where $S\subseteq \mathcal X$. For a distribution $\mu$ over $\mathcal X\times \mathcal Y$, let $\mu_{R^{A\rightarrow B}}$ be the distribution that arises from $\mu$ conditioned on the event $R^{A\rightarrow B}$ and let $\mu(R^{A\rightarrow B})$ represent the probability, under $\mu$, of event $R^{A\rightarrow B}$~{\cite[Definition~4]{Jain2009}}. The \emph{one-way rectangle bound}, denoted as $\operatorname{rec}^{A\rightarrow B}_\epsilon(f)$, is defined as~{\cite[Definition~6]{Jain2009}}
\begin{align*}
   \operatorname{rec}^{A\rightarrow B}_\epsilon(f) \coloneqq \displaystyle\max_\mu \min_{R^{A\rightarrow B}} \log \frac{1}{\mu(R^{A\rightarrow B})}. 
\end{align*}



\begin{defn}[Smooth-rectangle bound {\cite[Definition~2]{JK10}}]\label{def:smooth_rec_LP}
Let $f: \mathcal X \times \mathcal Y \rightarrow \mathcal Z$ be a function. The $\epsilon$-smooth rectangle bound of $f$, denoted as $\widetilde{\operatorname{srec}}_\epsilon(f)$, is defined to be $\max\{\widetilde{\operatorname{srec}}^z_\epsilon(f):z\in\mathcal Z\}$, where $\widetilde{\operatorname{srec}}^z_\epsilon(f)$ is given by the optimal value of the linear program below: 

\begin{align*}
    & \underline{\text{Primal}} \\
    & \text{Min} \quad \sum_{R \in \mathcal R} w_R \\
    & \text{st} \quad \forall (x, y)\in f^{-1}(z): \sum_{R: (x, y)\in R} w_R\geq 1 - \epsilon, \\
    & \quad \forall (x, y) \in f^{-1}(z): \sum_{R: (x, y)\in R} w_R \leq 1,\\
    & \quad \forall (x, y)\in f^{-1} - f^{-1}(z): \sum_{R:(x, y)\in R} w_R\leq\epsilon,  \\
    & \quad \forall R: w_R\geq 0. 
    \\[1em]
    & \underline{\text{Dual}} \\
    & \text{Max} \sum_{(x, y)\in f^{-1}(z)} ((1 - \epsilon) \mu_{x, y} - \phi_{x, y})\\
    & \quad \quad \quad \quad
    - \sum_{(x, y)\in f^{-1} - f^{-1}(z)} \epsilon \cdot \mu_{x, y} \\
    & \text{st} \quad \forall R: \sum_{(x, y)\in f^{-1}(z)\cap R} (\mu_{x, y} - \phi_{x, y}) \\
    & \quad \quad \quad \quad - \sum_{(x, y)\in (R\cap f^{-1}) - f^{-1}(z)} \mu_{x, y}\leq 1,\\
    & \quad \quad \forall (x, y): \mu_{x,y}\geq 0, \phi_{x, y}\geq 0.
\end{align*}
\end{defn}

\begin{defn}[Smooth-rectangle bound: natural definition~{\cite[Definition~3]{JK10}}]\label{def:smooth_rec_natural}
Let $f: \mathcal X \times \mathcal Y \rightarrow \mathcal Z$ be a function. The $(\epsilon, \delta)$-smooth-rectangle bound of $f$, denoted by $\operatorname{srec}_{\epsilon, \delta}(f)$, is defined as (refer to the definition of $\operatorname{rec}_\epsilon^{z, \lambda}$ in Definition~\ref{def:rec_bound_conventional})
\begin{align*}
    & \operatorname{srec}_{\epsilon, \delta}(f) \coloneqq \max\{\operatorname{srec}^z_{\epsilon, \delta}(f): z\in\mathcal Z\}\\
    & \operatorname{srec}^z_{\epsilon, \delta}(f) \coloneqq \max \{\operatorname{srec}^{z, \lambda}_{\epsilon, \delta}(f): \lambda \text{ is a probability }\\
    & \quad \quad \quad \quad \quad \text{distribution on }(\mathcal X \times \mathcal Y) \cap f^{-1}\}\\
    & \operatorname{srec}^{z, \lambda}_{\epsilon, \lambda}(f)\coloneqq \max \Bigg\{\operatorname{rec}^{z, \lambda}_\epsilon (g): g: \mathcal X \times \mathcal Y \rightarrow \mathcal Z ; \\
    & \quad \quad \quad \Pr_{(x, y)\leftarrow \lambda} [f(x, y) \neq g(x, y)] \leq \delta; \lambda(g^{-1}(z))\geq 0.5\Bigg\}. 
\end{align*}
\end{defn}

Definitions~\ref{def:smooth_rec_LP}  and~\ref{def:smooth_rec_natural} are also equivalent. 
\begin{fact}[{\cite[Lemma~2]{JK10}}]\label{fact:srec_bound_defs}
Let $f: \mathcal X \times \mathcal Y \rightarrow \mathcal Z$ be a function and let $\epsilon > 0$. Then, for all $z\in \mathcal Z$, 
\begin{enumerate}
    \item $\widetilde{\operatorname{srec}}^z_\epsilon(f) \leq \operatorname{srec}^z_{\epsilon/2, (1 - \epsilon)/2}(f)$. 
    \item $\widetilde{\operatorname{srec}}^z_{\epsilon}(f)\ge \frac{1}{2} \cdot \left(\frac{1}{4} - \epsilon\right) \cdot \operatorname{srec}^z_{2\epsilon, \epsilon/2}(f)$. 
\end{enumerate}
\end{fact}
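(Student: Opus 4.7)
The plan is to establish both parts by invoking strong LP duality on the linear program in Definition~\ref{def:smooth_rec_LP}, in direct analogy with the proof of Fact~\ref{fact:rec_bound_defs} for the (non-smooth) rectangle bound. Strong duality equates the primal optimum $\widetilde{\operatorname{srec}}^z_\epsilon(f)$ with the dual optimum, so in each direction it suffices to convert between a dual solution $(\mu,\phi)$ and a triple $(\lambda, g, R)$ as in Definition~\ref{def:smooth_rec_natural}.

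For part 1, I would start from an optimal dual solution $(\mu^*,\phi^*)$. The key interpretation is that $\phi^*_{x,y}$ on an input $(x,y)\in f^{-1}(z)$ represents a ``budget'' to treat this point as if it were removed from $f^{-1}(z)$, that is, to modify $f$ at this point. I would define the distribution $\lambda$ to be proportional to $\mu^*$ restricted to $(\mathcal X\times \mathcal Y)\cap f^{-1}$, and define $g$ to agree with $f$ everywhere except at points where $\phi^*_{x,y}$ essentially saturates $\mu^*_{x,y}$, where $g$ takes a non-$z$ value. The three conditions that need checking are (a) $\Pr_{(x,y)\leftarrow \lambda}[f(x,y)\neq g(x,y)]\le (1-\epsilon)/2$, which follows from the third primal constraint $\sum_R w_R\le 1$ on $f^{-1}(z)$-rows together with the objective coefficient $(1-\epsilon)$ attached to $\mu$ on these rows; (b) $\lambda(g^{-1}(z))\ge 1/2$, which follows from balancing the ``flipped'' mass against the remaining $z$-mass; and (c) for the rectangle $R$ achieving the minimum in $\operatorname{rec}^{z,\lambda}_{\epsilon/2}(g)$, the dual inequality for $R$ translates into the desired bound on $1/\lambda(g^{-1}(z)\cap R)$.

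For part 2, I would take an optimal triple $(\lambda, g)$ realizing $\operatorname{srec}^z_{2\epsilon,\epsilon/2}(f)$ and construct a dual-feasible $(\mu,\phi)$ whose objective value is at least $\tfrac{1}{2}(\tfrac{1}{4}-\epsilon)\cdot \operatorname{srec}^z_{2\epsilon,\epsilon/2}(f)$. The natural construction is to set $\mu_{x,y}$ proportional to $\lambda_{x,y}$ with a normalizing scalar equal to the $\operatorname{srec}$-value (so that per-rectangle mass is at most $1$), and to set $\phi_{x,y}=\mu_{x,y}$ on precisely the at-most-$\epsilon/2$-fraction of points where $g(x,y)\neq f(x,y)$ and $f(x,y)=z$. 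Dual feasibility for each rectangle $R$ then comes from the defining property of $\operatorname{rec}^{z,\lambda}_{2\epsilon}(g)$: either $\lambda(g^{-1}(z)\cap R)$ is already below $1/\operatorname{srec}$, or the rectangle satisfies $2\epsilon\cdot \lambda(g^{-1}(z)\cap R)\le \lambda(R-g^{-1}(z))$, and in both cases the contribution to the dual constraint is bounded by $1$. The objective is lower-bounded by using $\lambda(g^{-1}(z))\ge 1/2$ to absorb a factor of $1/2$ and the $\epsilon/2$ perturbation to absorb a further $(1/4-\epsilon)$ factor.

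The main obstacle I anticipate is part 2: carefully balancing three competing losses, namely the factor $1/2$ from the requirement $\lambda(g^{-1}(z))\ge 1/2$, the replacement of threshold $\epsilon$ in $\widetilde{\operatorname{srec}}^z_\epsilon$ by $2\epsilon$ in the natural bound (which halves the ``margin'' available in the dual inequality, giving the $\tfrac{1}{4}-\epsilon$ factor), and the $\epsilon/2$ allowed discrepancy between $f$ and $g$ that affects both the objective and the per-rectangle constraint. Matching these losses to the stated $\tfrac{1}{2}\cdot(\tfrac{1}{4}-\epsilon)$ factor will require a careful accounting of mass on $f^{-1}(z)\triangle g^{-1}(z)$ versus $g^{-1}(z)$ under $\lambda$.
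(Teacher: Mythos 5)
This fact is imported verbatim from \cite{JK10} (Lemma~2 there); the paper contains no proof of it, so there is nothing in-paper to match your argument against. Your high-level plan --- equate $\widetilde{\operatorname{srec}}^z_\epsilon(f)$ with the dual optimum of the program in Definition~\ref{def:smooth_rec_LP} and translate back and forth between dual solutions $(\mu,\phi)$ and pairs $(\lambda,g)$ of Definition~\ref{def:smooth_rec_natural} --- is indeed the right strategy and is how Jain and Klauck argue. However, as written the proposal is a plan rather than a proof (steps (a)--(c) of part~1 are asserted, not derived), and the concrete construction you give for part~2 does not work.

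The gap in part~2: you set $\mu$ proportional to $\lambda$ with a \emph{single} normalizing scalar equal to $s:=\operatorname{srec}^z_{2\epsilon,\epsilon/2}(f)$, claiming this makes the per-rectangle dual constraint at most $1$. The dual constraint is not a bound on the total mass of $R$ but on the \emph{difference} of $z$-mass and non-$z$-mass in $R$, and for a ``corrupted'' rectangle (one with $\lambda(R\setminus g^{-1}(z))\ge 2\epsilon\,\lambda(R\cap g^{-1}(z))$) the quantity $\lambda(R\cap g^{-1}(z))$ can be a constant; with $\mu=s\lambda$ the left-hand side is then of order $s(1-2\epsilon)\gg 1$, so the solution is infeasible. (The same objection already kills a single-scalar construction for the non-smooth Fact~\ref{fact:rec_bound_defs}.) The standard remedy is an \emph{asymmetric} scaling: weight the inputs outside $f^{-1}(z)$ by an extra factor of order $1/\epsilon$, and on the flipped points ($f=z$, $g\neq z$) take $\phi$ strictly larger than $\mu$ so that these points contribute \emph{negatively} to the constraint, mimicking non-$z$ points; this is exactly where the factors $\tfrac12$ and $\tfrac14-\epsilon$ originate. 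Your choice $\phi_{x,y}=\mu_{x,y}$ on the flipped points merely zeroes them out in the constraint, while the corruption property of $g$ that you invoke counts their $\lambda$-mass on the negative side; hence the claim that ``in both cases the contribution to the dual constraint is bounded by $1$'' does not follow. Moreover, once the $1/\epsilon$-boosted weights are introduced, the objective loses terms of order $s\cdot\epsilon/2\cdot(1/\epsilon)\approx s/4$ from the flipped mass and from the non-$z$ mass, and balancing these losses against the gain $(1-\epsilon)s\,\lambda(\{f=z,g=z\})$ using $\lambda(g^{-1}(z))\ge 1/2$ and $\Pr_\lambda[f\neq g]\le\epsilon/2$ is precisely the delicate accounting the lemma's constants encode; this bookkeeping is absent from the sketch and is the part that actually needs to be done (following \cite{JK10}) before the statement can be considered proved.
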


\begin{defn}[Partition bound~{\cite[Definition~1]{JK10}}]\label{def:partition_bound}
Let $f: \mathcal X \times \mathcal Y \rightarrow \mathcal Z$ be a function. The $\epsilon$-partition bound of $f$, denoted $\operatorname{prt}_\epsilon(f)$, is given by the optimal value of the following linear program:

$\displaystyle
\begin{aligned}
    & \underline{\text{Primal}} \\
    & \text{Min} \quad \sum_z\sum_R w_{z, R}\\
    & \text{st} \quad \forall (x, y) \in f^{-1}: \sum_{R:(x, y)\in R} w_{f(x, y), R} \geq 1 - \epsilon,  \\
    & \forall (x, y):\sum_{R: (x, y)\in R} \sum_z w_{z, R} = 1,\\
    & \forall z, \forall R: w_{z, R}\geq 0.
    \\[1em]
    & \underline{\text{Dual}}\\
    & \text{Max} \quad \sum_{(x, y)\in f^{-1}} (1 - \epsilon) \mu_{x, y} + \sum_{(x, y)} \phi_{x, y}\\
    & \text{st} \quad \forall z, \forall R: \sum_{(x, y)\in f^{-1}(z) \cap R} \mu_{x, y} 
    + \sum_{(x, y)\in R} \phi_{x, y} \leq 1,\\
    & \quad \quad \forall (x, y): \mu_{x,y}\geq 0, \phi_{x, y}\in\mathbb R.
\end{aligned}
$%
\end{defn}

\begin{defn}[Relaxed partition bound {\cite[Definition~3.2]{kerenidis2015lower}}]\label{def:realaxed_prt}
    Let $f: \mathcal I \rightarrow \mathcal Z$, where $\mathcal I \subseteq \mathcal X \times \mathcal Y$ The distributional relaxed partition bound, denoted as  $\bar{\operatorname{prt}}^\mu_\epsilon(f)$, is the value of the following linear program. (The value of $z$ ranges over $\mathcal Z$ and $R$ over all rectangles, including the empty rectangle.)
\begin{align*}
    & \min_{\eta, p, R, z\geq 0} \quad \frac{1}{\eta}\\
    & \text{st } \sum_{(x, y)\in\mathcal I} \mu_{x, y} \sum_{R: (x, y)\in R} p_{R, f(x, y)} \\
    & \quad \quad + \sum_{(x, y)\notin\mathcal I} \mu_{x, y} \sum_{z, R:(x, y)\in R} p_{R, z} \geq (1 - \epsilon)\eta, \\
    & \quad \forall (x, y) \in \mathcal X \times \mathcal Y, \sum_{z, R: (x, y)\in R} p_{R, z} \leq \eta,\\
    & \quad \sum_{R, z} p_{R, z} = 1. 
\end{align*}
The relaxed partition bound is given by $\bar{\operatorname{prt}}_\epsilon(f) = \max_\mu \bar{\operatorname{prt}}^\mu_\epsilon(f)$. 
\end{defn}

The relaxed partition bound subsumes the norm-based methods (for example, the $\gamma_2$ method) and the rectangle-based methods (for example, the rectangle/corruption bound, the smooth rectangle
bound and the discrepancy bound). 
\begin{fact}[{\cite[Lemma~3.3]{kerenidis2015lower}}]\label{fact:rprt}
For all functions $f:\mathcal X\times \mathcal Y \rightarrow \mathcal Z$, $\epsilon$ and $z\in\mathcal Z$, we have $\operatorname{srec}^z_\epsilon(f)\leq \bar {\operatorname{prt}}_\epsilon(f) \leq \operatorname{prt}_\epsilon(f)$. 
\end{fact}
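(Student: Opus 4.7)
The plan is to establish each of the two inequalities by exhibiting a feasible solution of the LP on the right that induces a feasible solution of the LP on the left with matching objective value. Since all three programs are weight assignments on rectangles (paired with an output $z$ in the case of $\operatorname{prt}_\epsilon$ and $\bar{\operatorname{prt}}_\epsilon$), the constructions should amount to normalization and projection.

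For $\bar{\operatorname{prt}}_\epsilon(f) \leq \operatorname{prt}_\epsilon(f)$, I would start from an optimal primal $\{w_{z, R}\}$ of the partition bound LP (Definition~\ref{def:partition_bound}) with objective $V := \sum_{z, R} w_{z, R} = \operatorname{prt}_\epsilon(f)$ and normalize by setting $\eta := 1/V$ and $p_{R, z} := w_{z, R}/V$. I then claim that $(\eta, p)$ is feasible for $\bar{\operatorname{prt}}^\mu_\epsilon(f)$ (Definition~\ref{def:realaxed_prt}) for \emph{every} distribution $\mu$: the equality $\sum_{R, z} p_{R, z} = 1$ holds by construction; the per-input cap $\sum_{z, R \ni (x, y)} p_{R, z} = 1/V = \eta$ follows by dividing the partition equality $\sum_{z, R \ni (x, y)} w_{z, R} = 1$ by $V$; and the averaged coverage constraint follows by taking the $\mu$-expectation of the pointwise covering $\sum_{R \ni (x, y)} w_{f(x, y), R} \geq 1 - \epsilon$. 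The objective evaluates to $1/\eta = V$, so $\bar{\operatorname{prt}}^\mu_\epsilon(f) \leq V$ for every $\mu$, and maximizing over $\mu$ yields the claim.

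For $\operatorname{srec}^z_\epsilon(f) \leq \bar{\operatorname{prt}}_\epsilon(f)$, I would work with the LP form of the smooth rectangle bound (Definition~\ref{def:smooth_rec_LP}) and use LP duality. Writing $\bar{\operatorname{prt}}_\epsilon(f) = \max_\mu \bar{\operatorname{prt}}^\mu_\epsilon(f)$ and dualizing the inner minimization produces a single maximization over the distribution $\mu$ together with the Lagrange multipliers on the per-input cap and the coverage constraint. The idea is to match these dual variables with those appearing in the dual of $\operatorname{srec}^z_\epsilon$ (namely $\mu_{x, y}$ and $\phi_{x, y}$ in Definition~\ref{def:smooth_rec_LP}): the covering-constraint multiplier plays the role of the rectangle-capacity bound on the right-hand side of the dual constraint, while the per-input slack is interpreted as the $\phi_{x, y}$ variables on $f^{-1}(z)$. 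Verifying that any feasible point of the smooth rectangle dual for target output $z$ lifts to a feasible point of the dualized $\bar{\operatorname{prt}}$ program with the same objective then yields the inequality.

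The main obstacle is in this second step: the coverage guarantee in $\bar{\operatorname{prt}}^\mu_\epsilon$ is \emph{averaged} against $\mu$, whereas the smooth rectangle bound requires pointwise covering on $f^{-1}(z)$ together with pointwise upper bounds ($\leq 1$ on $f^{-1}(z)$ and $\leq \epsilon$ on its complement). The freedom to maximize over $\mu$ in $\bar{\operatorname{prt}}_\epsilon$ is precisely what, through the duality/minmax argument, promotes these averaged constraints to the pointwise ones demanded by the smooth rectangle LP. I expect the technical effort to sit almost entirely in the bookkeeping of this correspondence between dual variables and in checking the sign and normalization conventions across the two LPs.
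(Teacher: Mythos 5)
The paper itself gives no proof of this fact --- it is imported verbatim from Kerenidis et al.\ \cite{kerenidis2015lower} --- so the benchmark is that source's LP argument, which your plan follows in spirit. Your proof of the second inequality $\bar{\operatorname{prt}}_\epsilon(f)\le\operatorname{prt}_\epsilon(f)$ is complete and correct: normalizing an optimal solution of Definition~\ref{def:partition_bound} by its value $V$ gives, for \emph{every} $\mu$, a feasible point of Definition~\ref{def:realaxed_prt} with objective $V$ (for a total $f$ all inputs lie in $\mathcal I$, so averaging the pointwise covering constraint against $\mu$ is all that is needed), and maximizing over $\mu$ finishes it.

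The first inequality is where the proposal stops short of a proof: you announce the dualization strategy but never exhibit the variable correspondence, and the correspondence you gesture at (coverage multiplier $\leftrightarrow$ rectangle capacity, per-input slack $\leftrightarrow$ $\phi$ on $f^{-1}(z)$) is not quite the right one. The matching that makes your plan work is the following. Substituting $q_{R,z'}=p_{R,z'}/\eta$ in Definition~\ref{def:realaxed_prt} turns $\bar{\operatorname{prt}}^\mu_\epsilon(f)$ into $\min\sum_{R,z'}q_{R,z'}$ subject to $\sum_{x,y}\mu_{xy}\sum_{R\ni(x,y)}q_{R,f(x,y)}\ge 1-\epsilon$ and $\sum_{z',R\ni(x,y)}q_{R,z'}\le 1$; its LP dual is $\max\,(1-\epsilon)a-\sum_{x,y}b_{xy}$ over $a,b\ge 0$ subject to $a\,\mu(R\cap f^{-1}(z'))-b(R)\le 1$ for all $R,z'$, where $b(R)=\sum_{(x,y)\in R}b_{xy}$. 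Now take any feasible $(\mu',\phi)$ for the dual in Definition~\ref{def:smooth_rec_LP} with target output $z$, and set $a=\sum_{x,y}\mu'_{xy}$, $\mu=\mu'/a$, $b_{xy}=\phi_{xy}$ on $f^{-1}(z)$ and $b_{xy}=\mu'_{xy}$ off $f^{-1}(z)$. The constraint for $z'=z$ is then literally the smooth-rectangle dual constraint; for $z'\ne z$ it holds trivially since $b(R)\ge\mu'(R\cap f^{-1}(z'))$; and the objectives coincide, giving $\widetilde{\operatorname{srec}}^z_\epsilon(f)\le\bar{\operatorname{prt}}^\mu_\epsilon(f)\le\bar{\operatorname{prt}}_\epsilon(f)$. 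So the idea is sound and the bookkeeping is genuinely routine, but as written the first inequality is a program, not a proof, and it should also be said explicitly that you are proving the statement for the LP form $\widetilde{\operatorname{srec}}^z_\epsilon$ (as the cited lemma does), which transfers to $\operatorname{srec}^z_\epsilon$ only up to the $\epsilon$-rescaling of Fact~\ref{fact:srec_bound_defs}.
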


 Furthermore, the same authors showed that the distributional
complexity is an upper bound on the logarithm of the relaxed partition bound. 
\begin{fact}[{\cite[Corollary~1.4]{kerenidis2015lower}}]\label{fact:Dmu-prt}
    For every $\epsilon, \mu$ and every $f:\mathcal X\times \mathcal Y \rightarrow \{0, 1\}$,  $D^\mu_\epsilon(f) \geq \log \bar{\operatorname{prt}}^\mu_\epsilon(f)$. 
\end{fact}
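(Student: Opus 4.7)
The plan is to construct an explicit feasible solution to the linear program defining $\bar{\operatorname{prt}}^\mu_\epsilon(f)$ directly from an optimal distributional protocol. Let $\pi$ be a deterministic protocol of cost $c = D^\mu_\epsilon(f)$ that errs on at most an $\epsilon$-measure of inputs under $\mu$. The leaves of $\pi$ induce a partition of $\mathcal X \times \mathcal Y$ into at most $m \leq 2^c$ combinatorial rectangles $R_1,\ldots,R_m$, each labeled with the protocol's output $z_\ell \in \{0,1\}$ at leaf $\ell$. This partition is the standard object to feed into the LP.

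Next, I would set $\eta = 1/m$ and define $p_{R_\ell, z_\ell} = 1/m$ for each leaf $\ell$, with $p_{R,z} = 0$ for all other pairs $(R,z)$. The normalization constraint $\sum_{R,z} p_{R,z} = m \cdot (1/m) = 1$ holds by construction. Because the rectangles $R_1,\ldots,R_m$ partition $\mathcal X \times \mathcal Y$, every input $(x,y)$ lies in exactly one $R_{\ell(x,y)}$, so $\sum_{z,R:(x,y)\in R} p_{R,z} = 1/m = \eta$, which meets the per-input cap with equality.

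It remains to verify the correctness constraint. Since $f$ is total, $\mathcal I = \mathcal X \times \mathcal Y$ and the second sum in the first constraint vanishes. Hence the left-hand side equals
\[
\sum_{(x,y)} \mu_{x,y}\, p_{R_{\ell(x,y)},\, f(x,y)} \;=\; \frac{1}{m}\cdot \mu\!\bigl(\{(x,y):\, z_{\ell(x,y)} = f(x,y)\}\bigr) \;\geq\; \frac{1}{m}(1-\epsilon) \;=\; (1-\epsilon)\eta,
\]
where the inequality uses that $\pi$ errs on a set of $\mu$-measure at most $\epsilon$. Thus the constructed $(\eta, p)$ is LP-feasible, giving $\bar{\operatorname{prt}}^\mu_\epsilon(f) \leq 1/\eta = m \leq 2^c$, and taking logarithms yields the claim.

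The proof is essentially a direct translation from a protocol partition to an LP feasible point, so I do not expect a genuine obstacle. The only detail worth double-checking is the alignment of the LP's three constraints with properties of the partition (normalization, monochromaticity of leaves, and the error bound under $\mu$); once $\eta$ is chosen as $1/m$ rather than $2^{-c}$, all three are simultaneously satisfied without slack loss.
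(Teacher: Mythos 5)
Your construction is correct: the leaf rectangles of a cost-$c$ deterministic protocol that errs with $\mu$-probability at most $\epsilon$ form a partition into $m\leq 2^c$ labeled rectangles, and setting $\eta=1/m$ with $p_{R_\ell,z_\ell}=1/m$ satisfies all three constraints of the relaxed-partition LP (normalization exactly $1$, per-input mass exactly $\eta$, and correctness mass at least $(1-\epsilon)\eta$ since the second sum vanishes for a total $f$), giving $\bar{\operatorname{prt}}^\mu_\epsilon(f)\leq 2^{D^\mu_\epsilon(f)}$. The paper does not prove this fact but cites it (Corollary~1.4 of Kerenidis et al.), and your LP-feasibility argument is essentially the standard one underlying that reference; your choice of $\eta=1/m$ rather than $2^{-c}$ is exactly the right normalization to make the equality constraint $\sum_{R,z}p_{R,z}=1$ hold without padding.
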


\begin{defn}[Smooth discrepancy~{\cite[Definition~4]{JK10}}]\label{def:sdisc_LP}
Let $f: \mathcal X \times \mathcal Y \rightarrow \{0, 1\}$ be a Boolean function. The smooth discrepancy of $f$, denoted by $\widetilde{\operatorname{sdisc}}_\epsilon(f)$, is given by the optimal value of the linear program below

\begin{align*}
    & \underline{\text{Primal}} \\
    & \text{Min} \quad \sum_{R\in\mathcal R} w_R + v_R  \\
    & \text{st} \quad \quad 
    \forall (x, y) \in f^{-1}(1): 1 + \epsilon \geq \sum_{R:(x, y)\in R} w_{R} - v_R \geq 1,  \\
    & \quad \quad \forall (x, y) \in f^{-1}(0): 1 + \epsilon \geq \sum_{R: (x, y)\in R} v_R - w_{R}\geq 1,   \\
    & \quad \quad \forall R: w_F, v_R \geq 0. \\
    \\[1em]
    & \underline{\text{Dual}}\\
    & \text{Max} \sum_{(x, y)\in f^{-1}} \mu_{x, y} - (1 + \epsilon) \phi_{x, y},\\
    & \text{st} \quad \forall R: \sum_{(x, y)\in f^{-1}(1) \cap R} (\mu_{x, y} - \phi_{x, y)} \\
    & \quad \quad \quad \quad - \sum_{(x, y)\in R \cap f^{-1}(0)} (\mu{x, y} - \phi_{x, y}) \leq 1, \\
    & \quad \quad \forall R: \sum_{(x, y) \in f^{-1}(0)\cap R} (\mu_{x, y} - \phi_{x, y}) \\
    & \quad \quad \quad \quad - \sum_{(x, y)\in R\cap f^{-1}(1)} (\mu_{x, y} - \phi_{x, y}) \leq 1,\\
    & \quad \quad \forall (x, y): \mu_{x, y} \geq 0; \phi_{x,y}\geq 0.
\end{align*}
\end{defn}

\begin{defn}[Smooth discrepancy: natural definition~{\cite[Definition~9]{JK10}}]\label{def:sdisc_natural}
Let $f: \mathcal X \times \mathcal Y \rightarrow \{0, 1\}$ be a Boolean function. The $\delta$-smooth discrepancy of $f$, denoted as $\operatorname{sdisc}_\delta(f)$, is defined as follows
\begin{align*}
    & \operatorname{sdisc}_\delta(f) \coloneqq \max\{\operatorname{sdisc}^\lambda_\delta(f): \lambda \text{ is a distribution on }\mathcal X \times \mathcal Y\\
    & \quad \quad \quad \quad \quad \cap f^{-1}\},\\ 
    & \operatorname{sdisc}^\lambda_\delta(f) \coloneqq: \max\Bigg\{\operatorname{disc}^\lambda(g):g: \mathcal X \times \mathcal Y \rightarrow \mathcal Z;  \\
    & \quad \quad \quad \quad \quad \quad \Pr_{(x, y)\leftarrow \lambda} [f(x, y) \neq g(x, y)] < \delta\Bigg\}. 
\end{align*}
\end{defn}

Definitions~\ref{def:sdisc_LP} and~\ref{def:sdisc_natural} are equivalent. 
\begin{fact}[{\cite[Lemma~3]{JK10}}]
Let $f: \mathcal X \times \mathcal Y \rightarrow \{0, 1\}$ be a function and let $\epsilon > 0$. Then, 
\begin{enumerate}
    \item $\operatorname{sdisc}_{1/2 - \epsilon/8}(f) \geq 
    \widetilde{\operatorname{sdisc}}_\epsilon(f)$,
    \item $\frac{1}{2}\cdot \operatorname{sdisc}_{\frac{1}{4 + 2\epsilon}}(f) \leq \widetilde{\operatorname{sdisc}}_\epsilon(f)$. 
\end{enumerate}
\end{fact}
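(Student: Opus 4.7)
The plan is to prove both inequalities by LP duality, following the template of Facts~\ref{fact:rec_bound_defs} and~\ref{fact:srec_bound_defs} for the rectangle bound and the smooth rectangle bound. The two definitions, $\widetilde{\operatorname{sdisc}}_\epsilon(f)$ (an LP) and $\operatorname{sdisc}_\delta(f)$ (maximum discrepancy over a nearby function $g$), encode the same object up to a change in the error parameter; writing down the dual of the LP in Definition~\ref{def:sdisc_LP} makes this transparent. I will treat the primal variables $(w_R, v_R)$ as a signed measure on rectangles representing a $(1\pm\epsilon)$-approximation of $f$, and the dual variables $(\mu_{x,y}, \phi_{x,y})$ as (respectively) a distribution on inputs and a smoothing perturbation that flips $f$ to some nearby $g$.

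For the first inequality $\operatorname{sdisc}_{1/2-\epsilon/8}(f)\geq \widetilde{\operatorname{sdisc}}_\epsilon(f)$, I would start from an optimal dual solution $(\mu,\phi)$ of value $V=\widetilde{\operatorname{sdisc}}_\epsilon(f)$. Normalizing $\mu$ yields a probability distribution $\lambda$ on $f^{-1}$. The support pattern of $\phi$ (together with the two dual rectangle constraints, which bound the signed correlation of any rectangle with $\mu-\phi$ on each side) defines a function $g$ obtained by flipping $f$ at those inputs $(x,y)$ for which the perturbation $\phi_{x,y}$ exceeds $\mu_{x,y}$. The dual constraints then say precisely that for every rectangle $R$, the $\lambda$-discrepancy of $g$ on $R$ is at most $1/V$, while the total perturbation $\sum \phi_{x,y}$ controls $\Pr_\lambda[f\neq g]$. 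Setting $\delta=1/2-\epsilon/8$ and tracking the $1+\epsilon$ slack in the dual objective gives the claim.

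For the second inequality $\tfrac12\operatorname{sdisc}_{1/(4+2\epsilon)}(f)\leq \widetilde{\operatorname{sdisc}}_\epsilon(f)$, I would start from the natural definition: fix $\lambda$, a function $g$ with $\Pr_\lambda[f\neq g]<1/(4+2\epsilon)$, and a rectangle $R$ witnessing $\operatorname{disc}^\lambda(g)$. From these data I construct a feasible dual $(\mu',\phi')$ by placing $\mu'$ proportional to $\lambda$ restricted to inputs where $g=f$, supported inside $R$ with the sign pattern of $g$, and placing $\phi'$ on the mismatch set to absorb the error contribution from inputs where $f\neq g$. The choice of $\delta=1/(4+2\epsilon)$ is calibrated so that the mismatch mass is small enough for the dual rectangle constraint $\sum_{(x,y)\in f^{-1}(z)\cap R}(\mu'_{x,y}-\phi'_{x,y})-\sum_{(x,y)\in R\cap f^{-1}(1-z)}(\mu'_{x,y}-\phi'_{x,y})\leq 1$ to hold for every rectangle after scaling, and the factor $\tfrac12$ emerges because the dual constraints are stated separately for $f^{-1}(0)$ and $f^{-1}(1)$, so half of the objective is contributed by each side.

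The main obstacle is the careful bookkeeping of how perturbation mass in $\phi$ converts to an effective error parameter $\delta$ in the natural definition; the somewhat unusual constants $1/2-\epsilon/8$ and $1/(4+2\epsilon)$ reflect this conversion and leave little room for sloppy estimates. A secondary subtlety is the sign split: because discrepancy is $|\Pr[g=1,R]-\Pr[g=0,R]|$ and the LP constraints treat 1-inputs and 0-inputs separately, one must track both sides in parallel and take the worse of the two when passing between formulations, which is where the factor of $1/2$ appears in part~(2). Both steps are essentially LP-duality manipulations and should follow the structure of \cite[Lemma~3]{JK10} once the dictionary between dual variables and $(\lambda,g)$ has been fixed.
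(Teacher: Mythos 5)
The paper itself gives no proof of this fact---it is imported verbatim from {\cite[Lemma~3]{JK10}}---and your general plan, translating between dual solutions $(\mu,\phi)$ of the LP in Definition~\ref{def:sdisc_LP} and pairs $(\lambda,g)$ of Definition~\ref{def:sdisc_natural} in the style of Facts~\ref{fact:rec_bound_defs} and~\ref{fact:srec_bound_defs}, is indeed the route of the cited source. The execution, however, has genuine gaps, most seriously in part~(2): you build the dual solution ``supported inside $R$'', where $R$ is a rectangle witnessing the discrepancy of $g$. That step fails. If $(\mu',\phi')$ live only on $R$, then feasibility for the rectangle $R$ itself forces the scaling $c$ to satisfy $c\cdot\operatorname{disc}_\lambda(g)\le 1$, while the objective is at most $c\,\lambda(R)$, and $\lambda(R)$ can be as small as $\operatorname{disc}_\lambda(g)$; so you certify only a value of order $1$, not $\tfrac12\operatorname{sdisc}_{1/(4+2\epsilon)}(f)$. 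The witnessing rectangle plays no role in the correct argument: one spreads mass over the whole domain, taking $\mu'=c\lambda$ on $\{g=f\}$ and $\phi'=c\lambda$ on $\{g\neq f\}$ with $c$ the reciprocal of the discrepancy of $g$ under $\lambda$. Then for \emph{every} rectangle the signed correlation of $\mu'-\phi'$ with $f$ equals $c$ times the $\lambda$-correlation of that rectangle with $g$, so both constraint families hold irrespective of $\delta$, and the objective equals $c\bigl(1-(2+\epsilon)\Pr_\lambda[f\neq g]\bigr)\ge c/2$ precisely because $\Pr_\lambda[f\neq g]<1/(4+2\epsilon)$. This also shows your accounting of the constants is backwards: $\delta=1/(4+2\epsilon)$ and the factor $\tfrac12$ come from the $(1+\epsilon)$-weighted penalty on $\phi'$ in the objective, not from the $0$-side and $1$-side constraints ``splitting'' the objective, and the constraints are insensitive to $\delta$.

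In part~(1) your idea (flip $f$ where $\phi$ exceeds $\mu$, normalize $|\mu-\phi|$ to obtain $\lambda$) is correct, but the step on which the constant $1/2-\epsilon/8$ actually rests is missing. One must first normalize the dual optimum so that $\mu$ and $\phi$ have disjoint supports, replacing $(\mu_{x,y},\phi_{x,y})$ by $(\mu_{x,y}-m_{x,y},\phi_{x,y}-m_{x,y})$ with $m_{x,y}=\min(\mu_{x,y},\phi_{x,y})$; this leaves $\mu-\phi$ (hence all constraints) unchanged and does not decrease the objective. Only then is the total mass $\sum\mu+\sum\phi$ with flipped mass $\sum\phi$, and positivity of the objective, $\sum\mu\ge(1+\epsilon)\sum\phi$, gives $\Pr_\lambda[f\neq g]\le 1/(2+\epsilon)\le 1/2-\epsilon/8$, while $1/\operatorname{disc}_\lambda(g)\ge\sum|\mu-\phi|\ge\sum\mu-(1+\epsilon)\sum\phi=\widetilde{\operatorname{sdisc}}_\epsilon(f)$. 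Without that normalization, ``tracking the $(1+\epsilon)$ slack'' does not control the flipped mass relative to $\sum|\mu-\phi|$ (the ratio can approach $1/\epsilon$). So the template is right, but part~(2) as written would fail and part~(1) omits its decisive step; note also that for the statement to be non-vacuous, $\operatorname{disc}^\lambda(g)$ in Definition~\ref{def:sdisc_natural} must be read as the reciprocal of the discrepancy of Definition~\ref{def:disc}, as in the cited source.
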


\begin{defn}[One-way subdistribution bounds,~{\cite[Dfinitino~3.2]{Jain2008}}]
For a distribution $\mu$ over $\mathcal X \times \mathcal Y$, define
$\operatorname{sub}^{A\rightarrow B, \mu}_{\mathcal Y, \epsilon}(f) \coloneqq \min_{\lambda} S_{\infty}(\lambda \Vert \mu)$\footnote{For distributions $\mathcal D_1, \mathcal D_2$, with respect to set $\mathcal X$, the relative co-min-entropy of $\mathcal D_1$ with respect to $\mathcal D_2$ is defined as $S_\infty (\mathcal D_1 \Vert \mathcal D_2) = \inf \{c: \mathcal D_2 \geq \mathcal D_1/2^c\}$. When $\mathcal X$ is finite, we have $S_\infty (\mathcal D_1 \Vert \mathcal D_2) = \max_{x\in\mathcal X} \log \frac{\mathcal D_1(x)}{\mathcal D_2(x)}$~\cite{Jain2008}.}, where $\lambda$ is taken over all distributions that are one-message-like\footnote{We say that $\lambda$ is one-message-like for $\mu$ with respect to $\mathcal X$ if for all $(x, y) \in \mathcal X \times \mathcal Y$, whenever $\lambda_{\mathcal X}(x) > 0$, we have $\mu_{\mathcal X} (x) > 0$ and $\lambda_{\mathcal Y}(y|x)= \mu_{\mathcal Y}(y|x)$. Here, $P_{\mathcal X}(x) = \sum_{y \in \mathcal Y} P(x, y)$ and $P_{\mathcal Y} (y|x) = \frac{P(x, y)}{P_{\mathcal X}(x)}$ for some distribution $P$. \label{foot:one-message}} for $\mu$ (with respect to $\mathcal X$) and one-way $\epsilon$-monochromatic for $f$. The one-way subdistribution bound is defined as such $\operatorname{sub}^{A \rightarrow B}_{\mathcal Y, \epsilon}(f) = \max_\mu \operatorname{sub}^{A \rightarrow B, \mu}_{\mathcal Y, \epsilon}(f)$, where $\mu$ is taken over all distributions on $\mathcal X \times \mathcal Y$. When the maximization is restricted to product distributions µ, we refer to the quantity as the one-way product subdistribution bound $\operatorname{sub}^{A\rightarrow B,[]}_{\mathcal Y, \epsilon}(f)$. 
\end{defn}

In the definition above, we say that a distribution $\lambda$ is $\epsilon$-monochromatic for $f$ if there exists a function $g: \mathcal Y \rightarrow \mathcal Z$ such that $\Pr_{XY \sim \lambda}[(X, Y, g(Y))\in f]\geq 1 - \epsilon$~{\cite[Definition~3.1]{Jain2008}}. 

The subsequent results show how different lower bound measures relate to each other. Since we only consider total Boolean functions in this work, results that hold for partial functions are useful for us in drawing the equivalence relation between complexity measures. Since the set of total functions is a subset of partial functions, lower bound results for partial functions also hold true for total functions. 

\begin{fact}[{\cite[Theorem~1]{JK10}}]\label{fact:partition}
    Let $f:\mathcal X\times\mathcal Y\rightarrow \mathcal Z$ be a (partial) function. Then, 
    \begin{enumerate}
        \item $R^{\operatorname{pub}}_\epsilon(f) \geq \log \operatorname{prt}_\epsilon(f)$. 
        \item $\operatorname{prt}_\epsilon(f)\geq \operatorname{srec}_\epsilon(f)$. 
        \item $\operatorname{srec}_\epsilon(f) \geq \operatorname{rec}_\epsilon(f)$. 
    \end{enumerate}
\end{fact}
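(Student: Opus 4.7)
The three inequalities form a chain of LP relaxations of the actual communication task: the partition LP captures randomized protocols, and the smooth-rectangle and rectangle LPs are further relaxations. The plan is to handle them in order, reading off the constraints directly from Definitions~\ref{def:rec}, \ref{def:smooth_rec_LP}, and \ref{def:partition_bound}.

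For part (1), my plan is to extract a feasible partition-LP solution from an optimal randomized protocol. I would write an $\epsilon$-error public-coin protocol of cost $c = R^{\operatorname{pub}}_\epsilon(f)$ as a distribution over deterministic protocols $\pi_r$, each of which partitions $\mathcal X \times \mathcal Y$ into at most $2^c$ monochromatic rectangles at its leaves. For each rectangle $R$ and each label $z$, set $w_{z,R} := \Pr_r[\pi_r \text{ has a leaf with rectangle } R \text{ labeled } z]$. Correctness of the protocol gives $\sum_{R \ni (x,y)} w_{f(x,y), R} \geq 1-\epsilon$; the fact that each input reaches exactly one leaf in each $\pi_r$ gives the normalization $\sum_{z} \sum_{R \ni (x,y)} w_{z,R} = 1$; and linearity of expectation combined with the leaf-count bound for each $\pi_r$ yields $\sum_{z,R} w_{z,R} = \mathbb{E}_r[\#\text{leaves of }\pi_r] \leq 2^c$. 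Hence $\log \operatorname{prt}_\epsilon(f) \leq c$.

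For part (2), I would fix an optimal partition-LP solution $\{w_{z,R}\}$ and an arbitrary $z^\star \in \mathcal Z$, and argue that the restriction $\{w_{z^\star,R}\}_R$ is feasible for the smooth-rectangle LP at $z^\star$. The lower bound on $f^{-1}(z^\star)$ is inherited directly from the partition correctness constraint. The upper bound of $1$ on $f^{-1}(z^\star)$ comes from the partition LP's equality $\sum_{z} \sum_{R \ni (x,y)} w_{z,R} = 1$. For $(x,y) \in f^{-1} - f^{-1}(z^\star)$, writing $z' := f(x,y) \neq z^\star$, the correctness constraint forces $\sum_{R \ni (x,y)} w_{z',R} \geq 1-\epsilon$, so by the same equality only $\epsilon$ mass can live under the remaining labels, in particular under $z^\star$. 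The objective of the restriction is at most that of the full solution, giving $\operatorname{srec}^{z^\star}_\epsilon(f) \leq \operatorname{prt}_\epsilon(f)$; maximizing over $z^\star$ concludes.

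For part (3), the smooth-rectangle LP at $z$ has precisely the constraints of the rectangle LP at $z$ plus the additional inequality $\sum_{R \ni (x,y)} w_R \leq 1$ on $f^{-1}(z)$, so its feasible region is contained in that of the rectangle LP and therefore its minimum is at least as large. The main (mild) obstacle throughout is part (1), namely verifying that the aggregated weights $w_{z,R}$, defined per rectangle-label pair rather than per leaf path, simultaneously satisfy correctness, normalization, and the leaf-count bound; the remaining two parts are direct LP-monotonicity arguments.
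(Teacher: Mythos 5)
Your proof is correct, and it is essentially the argument behind the cited result: the paper does not prove Fact~\ref{fact:partition} itself but imports it from~{\cite[Theorem~1]{JK10}}, whose proof is exactly your chain --- the leaves of the deterministic protocols obtained by fixing the public coins give a feasible partition-LP solution of total weight at most $2^{R^{\operatorname{pub}}_\epsilon(f)}$, restricting an optimal partition solution to a single output label $z^\star$ is feasible for the smooth-rectangle LP at $z^\star$ (with the $\leq\epsilon$ constraint on inputs of other labels following from correctness plus the normalization equality, as you argue), and the rectangle LP is the smooth-rectangle LP with one constraint removed. Two cosmetic remarks: the leaf rectangles of the protocols $\pi_r$ are labeled but not monochromatic (your argument never needs monochromaticity, only the labels and the fact that distinct leaves of a fixed $\pi_r$ have disjoint rectangles, which is also why your probability-based $w_{z,R}$ agrees with the expected leaf count on nonempty rectangles), and your reading of $\operatorname{srec}_\epsilon$ and $\operatorname{rec}_\epsilon$ as the LP quantities $\widetilde{\operatorname{srec}}_\epsilon$ and $\widetilde{\operatorname{rec}}_\epsilon$ of Definitions~\ref{def:smooth_rec_LP} and~\ref{def:rec} is the intended one, the natural variants being equivalent up to constants by Facts~\ref{fact:rec_bound_defs} and~\ref{fact:srec_bound_defs}.
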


\begin{fact}[{\cite[Lemma~4]{JK10}}]\label{fact:recz-disc}
 Let $f:\mathcal X\times \mathcal Y \rightarrow \{0, 1\}$ be a function; let $z\in \{0, 1\}$ and let $\lambda$ be a distribution on $\mathcal X\times \mathcal Y \cap f^{-1}$. Let $\epsilon, \delta > 0$, then $\operatorname{rec}^z_\epsilon(f)\geq \left(\frac{1}{2} - \epsilon\right)\operatorname{disc}^\lambda(f) - \frac{1}{2}$. 
\end{fact}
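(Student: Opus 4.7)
The plan is to work directly from the conventional definition (Definition~\ref{def:rec_bound_conventional}) of $\operatorname{rec}^{z,\lambda}_\epsilon(f)$, which expresses the corruption bound as the inverse of the largest $\lambda$-mass a rectangle can place on $f^{-1}(z)$ subject to a one-sided error constraint. First, I would let $R$ be a rectangle achieving this minimum, so that by feasibility $\lambda(R-f^{-1}(z)) < \epsilon\cdot \lambda(f^{-1}(z)\cap R)$. For total Boolean $f$, combining this with the definition of $\operatorname{disc}^\lambda(R, f)$ from Definition~\ref{def:disc} immediately gives
\[
\operatorname{disc}^\lambda(R,f) = \bigl|\lambda(f^{-1}(z)\cap R)-\lambda(R-f^{-1}(z))\bigr| \geq (1-\epsilon)\,\lambda(f^{-1}(z)\cap R),
\]
so that $\lambda(f^{-1}(z)\cap R)\leq \operatorname{disc}^\lambda(f)/(1-\epsilon)$, and hence $\operatorname{rec}^{z,\lambda}_\epsilon(f) = 1/\lambda(f^{-1}(z)\cap R) \geq (1-\epsilon)/\operatorname{disc}^\lambda(f)$. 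This is the core ``inner'' rearrangement and is essentially a one-line computation once the feasibility constraint of the corruption LP is unpacked.

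Next, I would promote this inner bound to the outer quantity $\operatorname{rec}^z_\epsilon(f)$, which maximises over distributions $\lambda$ satisfying the side condition $\lambda(f^{-1}(z))\geq 1/2$ coming from the definition. An arbitrary $\lambda$ witnessing $\operatorname{disc}^\lambda(f)$ need not meet this constraint, but it can be renormalised (for instance by mixing with the distribution conditioned on $f^{-1}(z)$) at the cost of at most a factor of two in the relevant masses. Propagating this factor-of-two loss through the inner inequality and absorbing the renormalisation error into an additive slack produces the $(\tfrac{1}{2}-\epsilon)$ coefficient and the $-\tfrac{1}{2}$ additive term appearing in the stated bound, at which point the claim follows by combining with $\operatorname{rec}^z_\epsilon(f)\geq \operatorname{rec}^{z,\lambda'}_\epsilon(f)$ for the renormalised $\lambda'$.

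The main obstacle will be constant bookkeeping: the corruption-to-discrepancy rearrangement itself is routine, but threading it cleanly through the two-level maximum in Definition~\ref{def:rec_bound_conventional} while respecting the side constraint $\lambda(f^{-1}(z))\geq 1/2$ requires care in order to land on exactly the constants $\tfrac{1}{2}-\epsilon$ and $-\tfrac{1}{2}$ rather than weaker versions such as $(1-\epsilon)/2$. A secondary check is that the bound remains informative in the intended regime $\epsilon<1/2$ with $\operatorname{disc}^\lambda(f)$ small (so that $1/\operatorname{disc}^\lambda(f)$ is large), where this inequality is invoked to convert discrepancy-type lower bounds into rectangle/corruption lower bounds for randomized communication.
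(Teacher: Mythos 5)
Your inner computation is correct, and you are right that the printed inequality only makes sense with $1/\operatorname{disc}^\lambda(f)$ on the right-hand side (as written, with $\operatorname{disc}^\lambda(f)\le 1$ from Definition~\ref{def:disc} and $\operatorname{rec}^z_\epsilon(f)\ge 1$, it is vacuous): for every rectangle $R$ feasible in Definition~\ref{def:rec_bound_conventional} one gets $\lambda(R\cap f^{-1}(z))\le \operatorname{disc}^\lambda(f)/(1-\epsilon)$, hence $\operatorname{rec}^{z,\lambda}_\epsilon(f)\ge (1-\epsilon)/\operatorname{disc}^\lambda(f)$ whenever $\lambda$ is admissible in the outer maximum. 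Note this is already stronger than the target for such $\lambda$, so the entire content of the lemma is the case $\lambda(f^{-1}(z))<1/2$, which is exactly the part you defer to ``bookkeeping''. That outer step is a genuine gap: mixing $\lambda$ with $\lambda(\cdot\mid f^{-1}(z))$ does \emph{not} cost ``at most a factor of two in the relevant masses''. Under $\lambda'=\tfrac12\lambda+\tfrac12\lambda(\cdot\mid f^{-1}(z))$ the $z$-mass of a rectangle is multiplied by $\tfrac12\bigl(1+1/\lambda(f^{-1}(z))\bigr)$, which is unbounded as $\lambda(f^{-1}(z))\to 0$; in that regime $\lambda'$ admits feasible rectangles of $\lambda'$-mass close to $1/2$, so the inner bound transfers nothing. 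The fact that rescues this case, and the exact source of both constants, is an observation your sketch never uses: $\mathcal X\times\mathcal Y$ is itself a rectangle, so $|\lambda(f^{-1}(0))-\lambda(f^{-1}(1))|\le\operatorname{disc}^\lambda(f)$ and therefore $\lambda(f^{-1}(z))\ge \bigl(1-\operatorname{disc}^\lambda(f)\bigr)/2$ for both $z$. This ties a small $\lambda(f^{-1}(z))$ to a large discrepancy and is precisely where the additive $-\tfrac12$ comes from; without it the case split cannot be closed, and with it the constants do not come out of a factor-two renormalisation anyway.

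For comparison: the paper gives no proof, importing the statement from~\cite{JK10}, and the derivation that yields exactly $(\tfrac12-\epsilon)\cdot\frac{1}{\operatorname{disc}^\lambda(f)}-\tfrac12$ goes through the LP formulation of Definition~\ref{def:rec} rather than through your conditional-mixing step. One checks that $\mu=\lambda/\operatorname{disc}^\lambda(f)$ is feasible for the dual (feasibility is literally the definition of discrepancy), computes the objective value $\bigl(\lambda(f^{-1}(z))-\epsilon\bigr)/\operatorname{disc}^\lambda(f)$, and then applies the full-rectangle inequality $\lambda(f^{-1}(z))\ge(1-\operatorname{disc}^\lambda(f))/2$ to obtain the stated bound; if the conventional $\operatorname{rec}$ is wanted one passes between the two formulations via Fact~\ref{fact:rec_bound_defs}. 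So either adopt that duality route, or repair your outer step by invoking the full-rectangle inequality explicitly; as proposed, the passage from an arbitrary $\lambda$ to an admissible $\lambda'$ does not go through.
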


\begin{fact}[{\cite[Equation~29]{Braverman15}}]\label{fact:disc-sdisc}
Let $f: \mathcal X \times \mathcal Y\rightarrow \{0, 1\}$. Then, $\left(\frac{1}{\operatorname{disc}(f)}\right)^{O(1)} \leq 2^{O(\operatorname{sdisc}_{1/5}(f))}$. 
\end{fact}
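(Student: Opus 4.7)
The plan is to bridge discrepancy and smooth discrepancy through the approximate factorization norm, exploiting the tight characterization of $1/\operatorname{disc}$ by $\gamma_2^\infty$ already established in Fact~\ref{fact:gammaInf-disc}. Specifically, I would show that smooth discrepancy polynomially dominates $\gamma_2^\alpha$ for an appropriate constant $\alpha$ depending on $1/5$, and then invoke the chain $\gamma_2^\infty \leq \gamma_2^\alpha$ from Fact~\ref{fact:gamma2inf-gamma2alpha} to close the loop.

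First, I would unpack the linear-programming formulation of smooth discrepancy given in Definition~\ref{def:sdisc_LP}: the primal minimizes $\sum_R (w_R + v_R)$ subject to rectangle covering constraints that allow a multiplicative slack of $(1+\epsilon)$. By scaling, the value of this LP is within a constant factor of the value of the corresponding min-cost rectangle program used in the margin-complexity/$\gamma_2^\infty$ characterization of discrepancy. Hence, any dual certificate for $\operatorname{disc}(f)$ immediately yields a feasible dual solution for the smooth discrepancy LP at $\epsilon = 0$, giving $\widetilde{\operatorname{sdisc}}_0(f) \geq \Omega(1/\operatorname{disc}(f))$.

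Second, I would pass from $\epsilon = 0$ to $\epsilon = 1/5$ by using the equivalence between the LP value $\widetilde{\operatorname{sdisc}}$ and the combinatorial $\operatorname{sdisc}$ from the natural definition (Definition~\ref{def:sdisc_natural}). Concretely, any function $g$ witnessing $\operatorname{sdisc}^\lambda_{1/5}(f)$ — i.e., $\Pr_\lambda[f \neq g] < 1/5$ and $\operatorname{disc}^\lambda(g)$ close to the smooth value — gives rise to a sign matrix that agrees with $M_f$ on all but a $1/5$-fraction of mass; setting $M'$ to match $M_f$ on the exception set and $M_g$ elsewhere produces a valid $\alpha$-approximator of $f$ in the $\gamma_2^\alpha$ sense with $\alpha = 5/3$ (i.e., $\alpha = 1/(1-2\cdot 1/5)$). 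Applying Fact~\ref{fact:mc} and Fact~\ref{fact:gammaInf-disc} to convert back between margin complexity, $\gamma_2^\infty$ and $1/\operatorname{disc}$, and raising everything to a suitable constant power, yields the claimed inequality $(1/\operatorname{disc}(f))^{O(1)} \leq 2^{O(\operatorname{sdisc}_{1/5}(f))}$.

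The main obstacle is the second step: the combinatorial closeness $\Pr_\lambda[f \neq g] < 1/5$ does not immediately translate into the multiplicative sign-approximation condition $1 \leq M_{xy}M'_{xy} \leq \alpha$ required by $\gamma_2^\alpha$, because a small measure of disagreement can contribute entries of the wrong sign. Handling this requires a careful rounding/weighting argument in which the distribution $\lambda$ is folded into the factorization $M' = AB$, so that rows and columns are reweighted by $\sqrt{\lambda}$ before bounding their $\ell_2$ norms. This is precisely the manipulation performed in~\cite{Braverman15} to derive Equation~29, and pinning down the exponent on the discrepancy side to an absolute constant $O(1)$ amounts to tracking all these conversion factors through the chain $\operatorname{disc} \leftrightarrow \gamma_2^\infty \leftrightarrow \gamma_2^\alpha \leftrightarrow \operatorname{sdisc}$.
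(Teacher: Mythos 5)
The paper does not actually prove this statement -- it is imported wholesale from the cited reference (Equation~29 of that work) -- so your sketch has to stand on its own, and as written it does not. The critical step is your second one, and it fails in two ways. First, the construction is vacuous: the exception set is precisely where $f \neq g$, so a matrix defined to ``match $M_f$ on the exception set and $M_g$ elsewhere'' is just $M_f$ itself, and nothing has been gained; the formula $\alpha = 1/(1-2\cdot\tfrac15)$ is the error-to-approximation-factor conversion for a bounded-error \emph{protocol}, but here you only have $\lambda$-measure-$1/5$ agreement, which (as you yourself concede) does not yield the entrywise condition $1 \leq M_{xy}M'_{xy} \leq \alpha$ needed for $\gamma_2^\alpha$. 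Second, the logic runs in the wrong direction: the sdisc witness tells you $\operatorname{disc}^\lambda(g)$ is small, which \emph{lower}-bounds quantities like $\gamma_2^\infty(g)$; it cannot be used to \emph{upper}-bound $\gamma_2^\alpha(f)$, which is exactly what your chain $1/\operatorname{disc}(f) = \Theta(\gamma_2^\infty(f)) \leq \gamma_2^\alpha(f) \leq 2^{O(\operatorname{sdisc}_{1/5}(f))}$ requires. You then defer precisely this missing step to ``the manipulation performed in~\cite{Braverman15} to derive Equation~29,'' i.e.\ to the statement being proved, so the proposal contains no independent argument for the only nontrivial claim. Your first step is also not usable as stated: $\widetilde{\operatorname{sdisc}}_\epsilon$ is a minimization LP whose value only decreases as the slack $\epsilon$ grows, so a dual-feasibility bound at $\epsilon = 0$ does not transfer to the error-$1/5$ regime in the direction you need.

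The statement is in fact much closer to a definitional monotonicity than your route suggests. With the natural (distributional) formulation of smooth discrepancy (Definition~\ref{def:sdisc_natural}), for a total function $f$ the distributions $\lambda$ range over all of $\mathcal X \times \mathcal Y$, and $g = f$ is itself an admissible witness since $\Pr_\lambda[f \neq g] = 0 < 1/5$. Hence $\operatorname{sdisc}_{1/5}(f)$ dominates the $\lambda$-discrepancy bound of $f$ for every $\lambda$, and taking $\lambda$ to be the minimizer in $\operatorname{disc}(f) = \min_\mu \operatorname{disc}_\mu(f)$ gives $\operatorname{sdisc}_{1/5}(f) \geq \Omega\bigl(\log(1/\operatorname{disc}(f))\bigr)$ on the logarithmic scale on which the Fact is stated (the same scale used in Fact~\ref{fact:QICD-sdisc}); the $O(1)$'s absorb the normalization differences between Definition~\ref{def:sdisc_LP} and Definition~\ref{def:sdisc_natural}. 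If you insist on a $\gamma_2$-based route, the correct shape is a dual argument bounding $\gamma_2^\alpha(M_f)$ by the smooth discrepancy directly, via the dual norm $\gamma_2^*$ and the relation between $\gamma_2^*(M_g \circ P_\lambda)$ and $\operatorname{disc}^\lambda(g)$, combined with Facts~\ref{fact:gammaInf-disc} and~\ref{fact:gamma2inf-gamma2alpha} -- not a primal construction of an approximating matrix out of the sdisc witness.
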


We shall see how these lower bound measures associates with different variants of public-coin randomized communication complexity. In the one-way communication model, Jain, Klauck and Nayak~\cite{Jain2008} showed that for any relation $f\subseteq \mathcal X \times \mathcal Y \times \mathcal Z$, the subdistribution bound characterizes the public-coin randomized one-way communication complexity. 

\begin{fact}[{\cite[Theorem~4.4]{Jain2008}}]\label{fact:sub1}
Let $f \subseteq \mathcal X \times \mathcal Y \times \mathcal Z$ be a relation and let $0 \leq \epsilon \leq 1/6$. There exist universal constants $c_1,c_2$ such that $\operatorname{sub}^{A \rightarrow B}_{\mathcal Y, \epsilon}(f) - 1 \leq c_1 \cdot R^{A\rightarrow B, \operatorname{pub}}_\epsilon(f)\leq c_2 \left(\operatorname{sub}^{A \rightarrow B}_{\mathcal Y, \epsilon}(f) + \log \frac{1}{\epsilon} + 2 \right)$. 
\end{fact}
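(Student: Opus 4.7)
The plan is to prove the two inequalities separately, both via the one-way version of Yao's minimax principle that characterizes $R^{A \rightarrow B,\operatorname{pub}}_\epsilon(f)$ as the maximum over input distributions $\mu$ of the distributional complexity $D^{A \rightarrow B,\mu}_\epsilon(f)$. This lets me work one $\mu$ at a time with deterministic protocols, which are combinatorially easier to manipulate than public-coin randomized ones.

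For the lower bound $\operatorname{sub}^{A \rightarrow B}_{\mathcal Y,\epsilon}(f)-1 \le c_1\,R^{A \rightarrow B,\operatorname{pub}}_\epsilon(f)$, I fix a $\mu$ achieving the supremum in $\operatorname{sub}^{A \rightarrow B}_{\mathcal Y,\epsilon}(f)$ and consider an optimal deterministic one-way protocol of cost $c = D^{A \rightarrow B,\mu}_{\epsilon/2}(f)$. This protocol partitions $\mathcal X$ into at most $2^c$ classes $\mathcal X_1,\dots,\mathcal X_{2^c}$ according to Alice's message, each equipped with a Bob-side decoder $g_j\colon \mathcal Y\to \mathcal Z$. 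A Markov-style averaging argument identifies a class $j^*$ with $\Pr_\mu[X\in\mathcal X_{j^*}]\ge 2^{-c-1}$ on which the conditional error is at most $\epsilon$ under $\mu$. Setting $\lambda := \mu \mid X\in\mathcal X_{j^*}$, the one-way structure guarantees that conditioning on Alice's message only restricts the $X$-marginal and preserves $\mu_Y(\cdot|x)$, so $\lambda$ is one-message-like for $\mu$; the decoder $g_{j^*}$ exhibits $\epsilon$-monochromaticity; and $S_\infty(\lambda\|\mu)=\log(1/\Pr_\mu[X\in\mathcal X_{j^*}])\le c+1$. Standard one-way error amplification converts between $\epsilon$ and $\epsilon/2$ at the cost of a universal constant, absorbed into $c_1$.

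For the upper bound $R^{A \rightarrow B,\operatorname{pub}}_\epsilon(f)\le c_2\bigl(\operatorname{sub}^{A \rightarrow B}_{\mathcal Y,\epsilon}(f)+\log(1/\epsilon)+2\bigr)$, I again fix $\mu$ and construct a deterministic one-way protocol iteratively. Let $s:=\operatorname{sub}^{A \rightarrow B}_{\mathcal Y,\epsilon/2}(f)$. Starting from $\mu_0:=\mu$, at step $i$ I apply the subdistribution definition to the residual distribution $\mu_i$ on the not-yet-covered inputs (its subdistribution bound is still at most $s$ because $s$ is a supremum over \emph{all} distributions) to extract a one-message-like, $(\epsilon/2)$-monochromatic distribution $\lambda_i$ with decoder $g_i$ and support $\mathcal X^*_i$. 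The constraint $\lambda_i\le 2^s\mu_i$ forces $\Pr_{\mu_i}[X\in\mathcal X^*_i]\ge 2^{-s}$, so each iteration covers a $2^{-s}$ fraction of the remaining $X$-mass; after $N=O(2^s\log(1/\epsilon))$ iterations the uncovered tail has mass at most $\epsilon/2$. In the protocol Alice sends the index of the first class $i$ with $x\in\mathcal X^*_i$ and Bob outputs $g_i(y)$; one-message-likeness at each step ensures Bob's conditional error under $\mu$ matches the monochromatic bound, and the tail contributes at most $\epsilon/2$, summing to total error $\epsilon$. The index has length $\log N = s+\log\log(1/\epsilon)+O(1)$, comfortably dominated by $s+\log(1/\epsilon)+2$.

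The main obstacle is the upper bound's iterative construction: one must verify that the one-message-like property is preserved after conditioning on the residual $X$-support, and carefully budget the two error sources so that they sum to $\epsilon$. The whole argument hinges on $\operatorname{sub}^{A \rightarrow B}_{\mathcal Y,\epsilon}(f)$ being defined as a supremum over \emph{all} input distributions $\mu$, which is precisely what allows the subdistribution bound to be reapplied to each new residual distribution $\mu_i$ during the iteration.
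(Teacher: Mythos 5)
Your lower-bound argument is essentially correct: the easy direction of Yao's principle, constant-factor amplification from error $\epsilon$ to $\epsilon/2$ (valid since $\epsilon\le 1/6$), the Markov selection of a message class of $\mu_{\mathcal X}$-mass at least $2^{-c-1}$ whose conditional error is at most $\epsilon$, and the observation that conditioning on an event depending only on $x$ preserves the conditionals $\mu_{\mathcal Y}(\cdot|x)$ together give $\operatorname{sub}^{A\rightarrow B}_{\mathcal Y,\epsilon}(f)-1\le D^{A\rightarrow B,\mu}_{\epsilon/2}(f)\le 3R^{A\rightarrow B,\operatorname{pub}}_\epsilon(f)$. (Note the paper does not prove this fact; it cites Jain--Klauck--Nayak, so your argument is being judged on its own.)

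The upper bound, however, has a genuine gap at the sentence ``one-message-likeness at each step ensures Bob's conditional error under $\mu$ matches the monochromatic bound.'' One-message-likeness only forces $\lambda_{i,\mathcal Y}(y|x)=\mu_{\mathcal Y}(y|x)$; it places no constraint on the $X$-marginal of $\lambda_i$, which may be concentrated on a negligible-$\mu_i$-mass part of its support $\mathcal X^*_i$. The $(\epsilon/2)$-monochromaticity is an average of the pointwise errors $e(x,g_i)=\Pr_{y\sim\mu(\cdot|x)}[(x,y,g_i(y))\notin f]$ weighted by $\lambda_{i,\mathcal X}$, whereas your protocol charges decoder $g_i$ to all of $\mathcal X^*_i$ (minus previously covered inputs) weighted by $\mu$; these averages can differ drastically. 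Concretely, with a single $y$, $f(x_1,y)=0$, $f(x_2,y)=1$, $\mu_i$ uniform, the distribution $\lambda_i$ putting mass $1-\epsilon$ on $(x_1,y)$ and $\epsilon$ on $(x_2,y)$ is one-message-like, $\epsilon$-monochromatic with $g\equiv 0$, and satisfies $S_\infty(\lambda_i\Vert\mu_i)\le 1$, yet the class $\mathcal X^*_i=\{x_1,x_2\}$ has conditional error $1/2$ under $\mu_i$. The standard repairs are: (a) cover only those $x\in\mathcal X^*_i$ with $e(x,g_i)\le 2\epsilon$ --- by Markov they carry $\lambda_{i,\mathcal X}$-mass at least $1/2$, hence $\mu_i$-mass at least $2^{-s-1}$ --- accept a deterministic protocol with error around $3\epsilon$, and amplify back with constantly many repetitions (possible because $\epsilon\le 1/6$ and $c_2$ is free); or (b) decompose $\mu$ as a mixture $\sum_i w_i\lambda_i$ plus a small residual by repeatedly subtracting $2^{-s}\lambda_i$ (subtraction preserves the conditionals), and let Alice sample her message $i$ with probability proportional to $w_i\lambda_{i,\mathcal X}(x)$, so that conditioned on the message the input pair is distributed exactly as $\lambda_i$. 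A secondary issue: you set $s=\operatorname{sub}^{A\rightarrow B}_{\mathcal Y,\epsilon/2}(f)$, which is at least, and possibly much larger than, $\operatorname{sub}^{A\rightarrow B}_{\mathcal Y,\epsilon}(f)$, and there is no generic error-reduction for the subdistribution bound to convert back; the witness must be taken at error $\epsilon$, with the error bookkeeping (and amplification) done on the protocol side.
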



The tight characterization of $R^{A \rightarrow B, \operatorname{pub}}_\epsilon(f)$ in terms of the one-way subdistribution bound by Jain, Klauck and Nayak~\cite{Jain2008} was subsequently strengthened by Jain~\cite{jain2010strong} via the robust conditional relative min-entropy and relative min-entropy bounds. 

\begin{defn}[Robust conditional relative min-entropy bound~{\cite[Definition~3.4]{jain2010strong}}]
Let $f\subseteq \mathcal X \times \mathcal Y \times \mathcal Z$ be a relation. The $\epsilon$-error $\delta$-robust conditional relative min-entropy bound of $f$ with respect to distribution $\mu$, denoted as $\operatorname{rcment}^{\mu}_{\epsilon, \delta}(f)$, is defined as 
\begin{align*}
    & \operatorname{rcment}^{\mu}_{\epsilon, \delta}(f) \coloneqq \min\{\operatorname{rcment}^{\mu}_{\delta}(\lambda) \vert \lambda \text{ is one-way for } \mu \text{ and } \\
    & \quad \quad \quad \quad\quad \quad \quad \operatorname{err}_f(\lambda) \leq \epsilon\}, 
\end{align*}
where $\operatorname{err}_f(\lambda) \coloneqq \min\left\{\Pr_{(x, y) \leftarrow \lambda} [(x, y, h(y))\notin f \vert h: \mathcal Y \rightarrow \mathcal Z]\right\}$. The $\epsilon$-error $\delta$-robust conditional relative min-entropy bound of $f$, denoted as $\operatorname{rcment}_{\epsilon, \delta}(f)$, is 
\begin{align*}
    & \operatorname{rcment}_{\epsilon, \delta}(f) \coloneqq \max\{\operatorname{rcment}^{\mu}_{\epsilon, \delta}(f) \vert \mu \text{ is a distribution over }\\
    & \quad \quad  \quad \quad \quad \quad \quad \mathcal X \times \mathcal Y\}. 
\end{align*}
\end{defn}

\begin{defn}[Relative min-entropy bound~{\cite[Definition~3.5]{jain2010strong}}]
Let $f\subseteq \mathcal X \times \mathcal Y \times \mathcal Z$ be a relation. The $\epsilon$-error relative min-entropy bound of $f$ with respect to distribution $\mu$ is 
\begin{align*}
    & \operatorname{ment}^{\mu}_{\epsilon}(f) \\
    & \coloneqq \min\{S_{\infty}(\lambda \Vert \mu) \vert \lambda \text{ is one-way for } \mu \text{ and } \operatorname{err}_f(\lambda) \leq \epsilon\}, 
\end{align*}
where $S_{\infty}(\lambda \Vert \mu) = \max_{x \in\mathcal X} \log \frac{\lambda(x)}{\mu(x)}$\footnote{We use $\lambda(x)$ and $\mu(x)$ to denote de probability of $x$ under $\lambda$ and $\mu$ respectively.}. The $\epsilon$-error relative min-entropy bound of $f$, denoted as $\operatorname{ment}_\epsilon(f)$, is defined as
\begin{align*}
    \operatorname{ment}_\epsilon(f) \coloneqq \max \{\operatorname{ment}^{\mu}_{\epsilon}(f) \vert \mu \text{ is a distribution over }\mathcal X \times \mathcal Y\}. 
\end{align*}
\end{defn}

\begin{fact}[{\cite[Theorem~4.3]{jain2010strong}}]\label{fact:ment}
Let $f \subseteq \mathcal X \times \mathcal Y \times \mathcal Z$ be a relation and let $\epsilon > 0$. Then, $\operatorname{ment}_{2\epsilon}(f) - 1 \leq R^{A \rightarrow B, \operatorname{pub}}_\epsilon(f) \leq \operatorname{rcment}_{\epsilon/5, \epsilon/5}(f) + O \left(\log \frac{1}{\epsilon}\right)$. Therefore, $R^{A \rightarrow B, \operatorname{pub}}_\epsilon(f) = \Theta(\operatorname{ment}_\epsilon(f)) = \Theta(\operatorname{rcment}_{\epsilon, \epsilon}(f))$. 
    
\end{fact}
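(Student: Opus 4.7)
The plan is to establish the two inequalities separately and then deduce the $\Theta$ characterization. Both sides of the sandwich relate one-way public-coin communication to the smallest $S_\infty$-deviation from $\mu$ of a joint distribution that is (approximately) one-way for $\mu$ and approximately consistent with $f$. The lower bound is obtained by extracting such a $\lambda$ from the message distribution of an optimal protocol, and the upper bound is obtained by simulating a witness $\lambda$ via a public-coin rejection-sampling protocol.

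For the lower bound $\operatorname{ment}_{2\epsilon}(f) - 1 \leq R^{A\to B,\operatorname{pub}}_\epsilon(f)$, fix a worst-case $\mu$ witnessing $\operatorname{ment}_{2\epsilon}(f)$ and let $\Pi$ be an optimal public-coin one-way protocol of cost $c$ with error $\epsilon$. The expected conditional error of $\Pi$ over a random public string $R$ is at most $\epsilon$, so by Markov there exists an $r$ with conditional error at most $2\epsilon$; fix this $r$ and treat the remainder as deterministic, so Alice's message $M = M(X)$ takes at most $2^c$ values. For each message $m$, let $\lambda_m$ be the conditional distribution of $(X,Y)$ given $M = m$, let $h_m:\mathcal Y\to\mathcal Z$ be Bob's decoding rule, and $P_m = \Pr_\mu[M = m]$. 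The set $S = \{m : \Pr_{\lambda_m}[h_m(Y) \neq f(X,Y)] \leq 2\epsilon\}$ satisfies $\sum_{m \in S} P_m \geq 1/2$ by another Markov step, and since $|S| \leq 2^c$, some $m^\star \in S$ has $P_{m^\star} \geq 2^{-c-1}$. Because $M$ depends only on $X$, the ratio $\lambda_{m^\star}(x,y)/\mu(x,y)$ equals $1/P_{m^\star}$ when $M(x) = m^\star$ and vanishes otherwise; moreover $\lambda_{m^\star}$ is one-message-like for $\mu$ since $\lambda_{m^\star}(y\mid x) = \mu(y\mid x)$. Hence $\lambda_{m^\star}$ is a valid witness and $S_\infty(\lambda_{m^\star}\|\mu) \leq c + 1$.

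For the upper bound $R^{A\to B,\operatorname{pub}}_\epsilon(f) \leq \operatorname{rcment}_{\epsilon/5,\epsilon/5}(f) + O(\log(1/\epsilon))$, fix $\mu$ and let $\lambda$ together with a one-way decoder $h:\mathcal Y\to\mathcal Z$ witness $\operatorname{rcment}^{\mu}_{\epsilon/5,\epsilon/5}(f) = k$. Alice and Bob use public randomness to sample an i.i.d.\ pool of $N = \Theta(2^k/\epsilon)$ pairs $(x_i, y_i) \sim \mu$. Alice scans the pool and, for each $i$ with $x_i = x$, independently accepts with probability $\lambda_{\mathcal X}(x)/(2^k \mu_{\mathcal X}(x)) \leq 1$, which is bounded thanks to the min-entropy condition on $\lambda$. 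She sends the $k + O(\log(1/\epsilon))$-bit index $i^\star$ of her first accept (or a default symbol on failure), and Bob outputs $h(y_{i^\star})$. Conditional on success, $(x_{i^\star}, y_{i^\star})$ is distributed according to some $\lambda'$ which, by the robustness parameter $\epsilon/5$, lies within total variation $\epsilon/5$ of $\lambda$; failure has probability $\leq \epsilon/5$ by the choice of $N$; and $\operatorname{err}_f(\lambda) \leq \epsilon/5$ by assumption, so the total error is at most $3\epsilon/5 \leq \epsilon$.

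The main obstacle is the rejection-sampling step of the upper bound: one must verify that the accept/reject procedure uses only shared randomness and Alice's input, so Bob can reconstruct $(x_{i^\star}, y_{i^\star})$ from the transmitted index alone; that the robustness parameter exactly absorbs the discrepancy between the sampled distribution and $\lambda$ (this is the whole point of working with $\operatorname{rcment}$ rather than $\operatorname{ment}$ on the upper-bound side); and that the pool size $N$ suffices to make failure rare while keeping the index short. With both inequalities in hand, the final claim $R^{A\to B,\operatorname{pub}}_\epsilon(f) = \Theta(\operatorname{ment}_\epsilon(f)) = \Theta(\operatorname{rcment}_{\epsilon,\epsilon}(f))$ follows for constant $\epsilon$ because the parameters $2\epsilon$, $\epsilon/5$, and $\epsilon$ differ only by constant factors and the $O(\log(1/\epsilon))$ additive slack is itself constant.
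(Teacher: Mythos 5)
The paper does not actually prove this statement: Fact~\ref{fact:ment} is imported verbatim from Theorem~4.3 of \cite{jain2010strong}, so the only meaningful comparison is between your reconstruction and the known argument. Your lower-bound half is correct and is essentially the standard proof: fix the distribution attaining $\operatorname{ment}_{2\epsilon}(f)$, average over the public coins to fix a good coin string, and then condition on a message $m^\star$ of probability at least $2^{-c-1}$ whose conditional error is at most $2\epsilon$; since the message is a function of $x$ alone, the conditioned distribution satisfies $\lambda(y\mid x)=\mu(y\mid x)$ and $S_\infty(\lambda\Vert\mu)\le c+1$, giving $\operatorname{ment}_{2\epsilon}(f)-1\le R^{A\to B,\operatorname{pub}}_\epsilon(f)$.

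The upper-bound half, however, has a genuine gap. As written, your protocol never consults Bob's input: Bob outputs $h(y_{i^\star})$ for a pair $(x_{i^\star},y_{i^\star})$ drawn from the public pool, which carries no information about $f(x,y)$ on the players' actual inputs. Even with the natural repair (Bob outputs $h(y)$ on his own $y$), a single witness $\lambda$ for one distribution $\mu$ cannot yield error $\epsilon$ under $\mu$: the condition $S_\infty(\lambda\Vert\mu)\le k$ only says $\lambda\le 2^k\mu$, so $\lambda$ may sit on a $2^{-k}$-fraction of $\mu$'s mass, and $\operatorname{err}_f(\lambda)\le\epsilon/5$ gives no guarantee whatsoever on the remaining inputs; Alice's acceptance test does not tilt her (fixed) input toward $\lambda_{\mathcal X}$, it simply fails there. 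The known proofs (\cite{Jain2008,jain2010strong}) need two ingredients that your sketch omits: an iterative covering of $\mu$ by one-message-like, nearly monochromatic pieces, which crucially uses that the bound holds for \emph{all} distributions (in particular the residual conditional distributions left after extracting each piece) and is exactly where the maximization over $\mu$ and the robustness parameter of $\operatorname{rcment}$ do their work; and Yao's minimax principle to convert a distributional protocol for every $\mu$ into a single worst-case public-coin protocol. Rejection sampling against a shared pool of $\mu$-samples is not a substitute for either step. The concluding $\Theta$ claims also require monotonicity in the error parameter together with standard error amplification of $R^{A\to B,\operatorname{pub}}$, which you assert without detail, but that is minor next to the missing covering and minimax argument.
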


We formally define the Merlin-Arthur complexity below. 
\begin{defn}[Merlin-Arthur complexity~{\cite[Definition~8]{Klauck2003}},~{\cite[Definition~1]{klauck2011arthur}}]
In a Merlin-Arthur ($MA$) protocol for a Boolean function $f:\{0, 1\}^n \times \{0, 1\}^n \rightarrow \{0, 1\}$,  Alice receives a message (also known as the proof) from Merlin. Then, Alice and Bob communicate using public key randomness until they compute an output (the proof cannot depend on the randomness). The cost of an $MA$ protocol is the sum of the length $a$ of the proof, and the length $c$ of the overall communication between Alice and Bob. 

We say that the protocol computes $f$ if for all inputs $x, y$ with $f(x, y) = 1$, there exists a proof such that $x, y$ is accepted with probability $1 - \epsilon$ and for all inputs $x, y$ with $f(x, y) = 0$ and all proofs, the probability that $x, y$ is accepted is at most $\epsilon$, for some $\epsilon < 1/2$. The Merlin-Arthur complexity of $f$, denoted $MA_\epsilon(f)$, is the smallest cost of an $MA$ protocol for $f$. 
\end{defn}
Klauck~\cite{Klauck2003} drew the relation between public coin interactive proofs, the rectangle bound and randomized communication complexity. 

\begin{fact}[{\cite[Corollary~2]{Klauck2003}}]\label{fact:MApub-Rpub}
Let $f:\{0, 1\}^n \times \{0, 1\}^n \rightarrow \{0, 1\}$. Then, 
\begin{enumerate}
    \item $R^{\operatorname{pub}}_{1/4}(f)\geq\Omega(\operatorname{rec}_{1/4}(f))$,
    \item $R^{\operatorname{pub}}_{1/4}(f)\geq \max\{MA_{1/4}(f), MA_{1/4}(\neg f)\}\geq \Omega(\sqrt{\operatorname{rec}_{1/4}(f)})$,
    \item $\operatorname{rec}_{1/4}(f)\geq \Omega(\max\{AM_{1/4}(f), AM_{1/4}(\neg f)\})$. 
\end{enumerate} 
\end{fact}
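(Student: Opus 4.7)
The plan is to prove the three parts separately, establishing each as an instance of a rectangle-based lower-bound extraction or a covering-based upper-bound construction. Throughout I will interpret $\operatorname{rec}_{1/4}(f)$ in the paper's log-form convention (consistent with Parts 1 and 2, and with the fact that $R^{\operatorname{pub}}_{\epsilon}(f) \geq \log \operatorname{prt}_{\epsilon}(f)$ in Fact~\ref{fact:partition}), so that all three inequalities relate quantities of communication type.

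For Part 1, I apply Yao's minimax principle (Fact~\ref{fact:Dmu-Rpub}) to reduce $R^{\operatorname{pub}}_{1/4}(f)$ to $\max_{\lambda} D^{\lambda}_{1/4}(f)$. Fix any distribution $\lambda$ with $\lambda(f^{-1}(z)) \geq 1/2$ for some $z$, and consider a deterministic protocol of cost $c$ with $\lambda$-error at most $1/4$. Its $\leq 2^c$ leaf-rectangles partition $\mathcal{X}\times\mathcal{Y}$; the leaves labeled $z$ carry $z$-mass $\geq (3/4)\lambda(f^{-1}(z)) \geq 3/8$ while introducing $(1-z)$-mass at most $(1/4)\lambda(f^{-1}(1-z))$. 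Averaging over the $\leq 2^c$ $z$-labeled leaves produces a leaf $R$ with $\lambda(R \cap f^{-1}(z)) \geq \Omega(2^{-c})$, and after a pigeonhole refinement, $\lambda(R\setminus f^{-1}(z)) \leq \tfrac14 \lambda(R\cap f^{-1}(z))$. This is exactly the certificate needed for $\operatorname{rec}^{z,\lambda}_{1/4}(f) \leq O(2^c)$, which after taking the log and the sup over $\lambda$ gives $\operatorname{rec}_{1/4}(f) \leq O(R^{\operatorname{pub}}_{1/4}(f))$.

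Part 2 splits into two halves. The first inequality $R^{\operatorname{pub}}_{1/4}(f) \geq \max\{MA_{1/4}(f), MA_{1/4}(\neg f)\}$ is immediate: a public-coin randomized protocol is an MA protocol with empty Merlin-proof, applied to $f$ and to $\neg f$. For the second inequality, given an MA protocol with proof length $a$ and verifier communication $b$ (total cost $c=a+b$), amplify the verifier's soundness via $\Theta(a)$ parallel repetitions, pushing the per-proof soundness down to $2^{-\Omega(a)}$ at the cost of verifier communication $O(ab)$. A union bound over the $2^a$ possible proofs then shows that on any $0$-input, the probability some proof is accepted is at most $2^a\cdot 2^{-\Omega(a)} \leq 1/4$. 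Defining the randomized procedure as the max-over-proofs of the amplified verifier, we obtain (after viewing Merlin's choice as bounded nondeterminism absorbed by the union bound) a randomized protocol of cost $O(ab)$ with two-sided error $1/4$ for $f$. Applying Part 1 yields $\operatorname{rec}_{1/4}(f) \leq O(ab) \leq O(c^2)$, so $c \geq \Omega(\sqrt{\operatorname{rec}_{1/4}(f)})$; the same argument on $\neg f$ gives the maximum.

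For Part 3, the stated inequality $\operatorname{rec}_{1/4}(f) \geq \Omega(\max\{AM_{1/4}(f), AM_{1/4}(\neg f)\})$ is equivalent to $AM_{1/4}(f) \leq O(\operatorname{rec}_{1/4}(f))$ (and likewise for $\neg f$), so the plan is to turn a rectangle-bound certificate into a short AM protocol. Given $\operatorname{rec}_{1/4}(f) \leq r$ (log form), I would use LP duality on Definition~\ref{def:rec} to extract, for each $z\in\{0,1\}$, a distribution over approximately $z$-monochromatic rectangles that covers every $z$-input with probability $\geq 2^{-O(r)}$; a standard multiplicative-weights / greedy-covering argument then sparsifies this distribution to a family of at most $2^{O(r)}$ rectangles. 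The AM protocol is then: Merlin, seeing $(x,y)$, sends the index (of length $O(r)$) of a rectangle $R$ from the family containing $(x,y)$ and consistent with $f(x,y)=1$; Alice and Bob use their private coins together with a verification scheme drawn from the rectangle family (for the other side, $\neg f$) to check $R$ is indeed approximately $1$-monochromatic and $(x,y)\in R$. Completeness follows from the cover property; soundness follows from the approximate monochromaticity across the family combined with the private-coin randomized verification. The main obstacle is controlling the support of the LP-dual cover: naively the support is polynomially large in $|\mathcal{X}||\mathcal{Y}|$, and obtaining a support of size $2^{O(r)}$ requires a careful halving/rounding argument on the LP solution, which I expect to mirror the technique in Klauck's original threshold-cover analysis~\cite{Klauck2003}.
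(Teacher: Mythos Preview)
The paper does not prove this statement; it is quoted as a Fact with a citation to Klauck~\cite{Klauck2003}, so there is no in-paper proof to compare against. Your Part~1 is the standard corruption argument and is fine; your Part~3 correctly identifies the cover-based AM construction and honestly flags the support-size control as the nontrivial step deferred to Klauck's threshold-cover analysis.

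Part~2, however, has a genuine gap. After amplifying the verifier, you correctly have: on $0$-inputs the probability that \emph{some} proof is accepted is $\leq 1/4$; on $1$-inputs \emph{some} proof is accepted with probability $\geq 3/4$. But this does not yield ``a randomized protocol of cost $O(ab)$ with two-sided error $1/4$ for $f$.'' The acceptance predicate ``$\exists\, w$ such that the amplified verifier accepts on $w$'' is a nondeterministic quantifier over $2^{a}$ objects; Alice and Bob cannot evaluate it without either being handed $w$ (that is Merlin's role, not theirs) or enumerating all proofs (cost $2^{a}\cdot O(ab)$, not $O(ab)$). Consequently you cannot invoke Part~1 on such a protocol.

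The fix, as in Klauck's original argument, is to bypass the randomized protocol entirely and work directly against the hard distribution $\lambda$ realising $\operatorname{rec}^{z,\lambda}_{1/4}(f)$. By pigeonhole over the $2^{a}$ proofs, some fixed $w^{\ast}$ is accepted with probability $\geq 3/4$ on at least a $2^{-a}$-fraction (under $\lambda$) of the $z$-inputs, while on every $(1{-}z)$-input the amplified verifier with proof $w^{\ast}$ accepts with probability $\leq 2^{-\Omega(a)}$. Derandomise the verifier for $w^{\ast}$ against $\lambda$ and run the same averaging as in Part~1 over its $\leq 2^{O(ab)}$ accepting leaves to extract a large, nearly $z$-monochromatic rectangle. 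This gives $\operatorname{rec}_{1/4}(f)\leq O(ab)\leq O(c^{2})$ directly.
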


Recall the definition of online interactive proofs from Section~\ref{sec:useful_1}. Similarly to how communication complexity classes are defined in the work of Babai et al.~\cite{babai1986complexity}, Chakrabarti et al. showed that the following communication complexity classes are equivalent:  $\mathsf{OMA}^{[1]} = \mathsf{OIP}^{[1]} = \mathsf{OIP}_{+}^{[1]} = \mathsf{MA}^{[1, A]} = \mathsf{R}^{[1, A]}$\footnote{$\mathsf{MA}^{[1, A]}$  denotes the one-round (or one-way) MA complexity where Alice is the first to send a message, whereas $\mathsf{R}^{[1, A]}$ denotes the one-round (or one-way) public-coin randomized communication complexity with Alice being the first to send a message.}~{\cite[Theorem~5.1]{chakrabarti2015verifiable}}. We note here that by changing the criteria for efficiency from polylogarithmic to constant, the result of Chakrabarti et al. remains almost the same in our setting. 

\begin{fact}[{\cite[Theorem~5.1]{chakrabarti2015verifiable}}]\label{OIP}
For all $f$, and $\epsilon > 0$, we have  $OMA_\epsilon^{A \rightarrow B}(f) = \Theta(OIP_\epsilon^{A \rightarrow B}(f)) = \Theta(OIP_{+, \epsilon}^{A \rightarrow B}(f)) = \Theta(MA_\epsilon^{A \rightarrow B}(f))$. Moreover, $ MA_\epsilon^{A \rightarrow B}(f) \leq R^{A \rightarrow B, \operatorname{pub}}_\epsilon(f) \leq (MA_\epsilon^{A \rightarrow B}(f))^2$. 
\end{fact}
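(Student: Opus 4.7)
The statement bundles two separate claims: an equivalence, up to constants, among the four proof-based one-way measures, and a two-sided polynomial bound between $MA^{A\to B}_\epsilon$ and $R^{A\to B,\operatorname{pub}}_\epsilon$. I would handle these in sequence, first establishing the cycle of simulations among the proof models and then closing the loop to randomized complexity via amplification plus elimination of Merlin.

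\noindent\textbf{Equivalence of the four proof models.} The inequalities $OIP^{A\to B}_{+,\epsilon}(f)\le OIP^{A\to B}_\epsilon(f)\le OMA^{A\to B}_\epsilon(f)$ follow directly from the definitions preceding Fact~\ref{fact:AM=OIOP4}: each successive model is a strict relaxation of the previous, so its minimum cost can only be smaller. To close the cycle I would show $OMA^{A\to B}_\epsilon(f)=\Theta(MA^{A\to B}_\epsilon(f))$ by direct translation of protocols, routing Merlin's proof either to Alice (who relays it as a prefix of her one-way message) or to Bob (who reads it off the front of the received string); each direction changes the total cost by at most an additive $a$. The reverse direction $OIP^{A\to B}_{+,\epsilon}(f)\ge \Omega(MA^{A\to B}_\epsilon(f))$ uses that Merlin sees every input and every public coin, so in the single-round setting he can pre-compute Bob's anticipated message to him and pre-commit to his reply; concatenating that reply with Alice's one-way message yields an $MA^{[1,A]}$ protocol of comparable cost.

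\noindent\textbf{Two-sided bound with randomized complexity.} The easy inequality $MA^{A\to B}_\epsilon(f)\le R^{A\to B,\operatorname{pub}}_\epsilon(f)$ is immediate: a public-coin one-way randomized protocol is an MA protocol in which Merlin sends the empty proof. For the converse $R^{A\to B,\operatorname{pub}}_\epsilon(f)\le O((MA^{A\to B}_\epsilon(f))^2)$, the strategy is error-reduction followed by Merlin elimination. Writing $m=a+c=MA^{A\to B}_\epsilon(f)$, I would first amplify the MA protocol by $t=\Theta(m)$ parallel repetitions that share the single Merlin proof but use independent public coins, and take a majority vote; this drives the soundness error below $2^{-(a+2)}$ while raising the communication to $a+tc=O(m^2)$. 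A union bound over all $2^a$ candidate proofs then shows that, on any $0$-input, the amplified verifier rejects every proof except with probability at most $1/4$, while on a $1$-input the honest proof is accepted with probability at least $3/4$. In the randomized simulation, the public coins take the role of Merlin's random search: Alice uses them to index into the enumeration of proofs consistent with her view and sends the corresponding amplified message, while Bob runs his side of the amplified verifier over the remaining proofs; the exponentially small per-proof error ensures that this sweep is reliable.

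\noindent\textbf{Main obstacle.} The technical heart is arranging Merlin elimination so that the simulation does not incur $2^a$ communication overhead, and the entire reduction fits inside the $O(m^2)$ budget. The key tension is between amplifying enough to allow a union bound over all $2^a$ proofs (which forces $t=\Omega(a)$) and keeping the communication polynomial in $m$. I expect the delicate step to be showing that the proof-search can be carried out purely on Bob's side, once Alice has sent the amplified message, without paying for a separate copy per proof; this is where the single shared proof across the $t$ copies is essential, and where completeness under the joint amplification-and-elimination needs to be checked carefully.
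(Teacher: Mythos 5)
The paper offers no proof of this statement at all: Fact~\ref{OIP} is imported directly from Chakrabarti et al.\ (Theorem~5.1), with only the remark that the result survives the change from polylogarithmic to constant cost. So your attempt has to be judged against the cited argument, whose skeleton (amplification with $t=\Theta(a)$ independent coin blocks, a union bound over all $2^{a}$ proofs, then elimination of Merlin) you do reproduce. The first genuine gap is the model. The statement is only true when Merlin's message is delivered to Bob and is invisible to Alice, with the public coins hidden from Merlin --- exactly the convention of the $OMA/OIP$ definitions preceding Fact~\ref{fact:AM=OIOP4} and of $MA^{[1,A]}$ in the cited paper. Your ``routing'' equivalence $OMA^{A\to B}_\epsilon(f)=\Theta(MA^{A\to B}_\epsilon(f))$ is only one-directional: relaying the proof through Alice as a prefix of her message is fine, but the converse translation (hand the proof to Bob instead) fails whenever Alice's one-way message depends on the proof, which it may in the paper's own $MA$ definition where Alice receives the proof. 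Indeed, with an Alice-visible proof the final inequality $R^{A\to B,\operatorname{pub}}_\epsilon(f)\leq (MA^{A\to B}_\epsilon(f))^{2}$ is simply false: for the Index function Merlin can announce Bob's index to Alice, giving an $O(\log n)$ protocol with perfect soundness, while $R^{A\to B,\operatorname{pub}}_\epsilon(\operatorname{INDEX})=\Theta(n)$. Your premise that ``Merlin sees every input and every public coin'' likewise contradicts the definitions (the coins are hidden from Merlin; in an MA-type protocol the proof may not depend on them). The chain $OIP^{A\to B}_{+,\epsilon}\le OIP^{A\to B}_\epsilon\le OMA^{A\to B}_\epsilon$ is acceptable, but note it holds at the one-way level because Bob sends nothing to Merlin when $k=1$, not because $OIP_+$ is literally a relaxation of $OIP$.

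The second gap is the Merlin-elimination step itself. As written --- Alice uses the public coins to ``index into the enumeration of proofs consistent with her view'' and sends the amplified message for that proof, while Bob checks ``the remaining proofs'' --- this is not a valid simulation: picking a proof at random destroys completeness (only an exponentially small fraction of proofs are honest), and Bob cannot evaluate other proofs against a message tailored to one of them. The correct assembly, which is also how the cited source argues, uses precisely the feature your model confusion obscures: in the online model Alice's message does not depend on the proof at all, so after amplification she simply sends $t=\Theta(a)$ independent copies of her message; Bob then enumerates all $2^{a}$ proofs locally, at zero communication cost, and accepts iff some proof passes the majority test. Your own union bound then gives soundness $1/4$ and completeness $3/4$ at cost $O(a\cdot c)=O(m^{2})$. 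You gesture at this in the closing ``obstacle'' paragraph but leave it unresolved; once the model is pinned down it is not a delicate side issue but the entire content of the hard inequality, and the version of the protocol you actually describe does not work.
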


Introduced by G\"{o}\"{o}s and Watson~\cite{goos2016communication}, the small bounded-error probabilities communication complexity, denoted as $SBP(f)$, is given by $SBP(f) \coloneqq \min_{\alpha(n)>0} R^{\operatorname{pub}}_{\alpha, \alpha/2}(f) + \log (1/\alpha)$, where $R^{\operatorname{pub}}_{\alpha, \beta}(f)$ is the public-coin randomized communication complexity of a protocol that computes $f$ with probability of acceptance of at least $\alpha(n)$ on all 1-inputs and at most $\beta(n)$ on all 0-inputs of length $n$. The authors showed that this complexity measure is exactly characterized by the rectangle bound.
\begin{fact}[{\cite[Theorem~1]{goos2016communication}}]\label{fact:SBP}
Let $f: \{0, 1\}^n \times \{0, 1\}^n \rightarrow \{0, 1\}$ and $0 < \epsilon\leq 1/8$. Then, $SBP(f) = \Theta(\operatorname{rec}_\epsilon(f))$\footnote{Although it is written in~\cite{goos2016communication} that $SBP(f) = \Theta(\operatorname{rec}_{1/8}(f))$, one can replace the constant factor of 1/8 by any positive constant at most 1/8 while affecting the rectangle bound by only a constant factor~\cite{Klauck2003}.} for all $f$.
\end{fact}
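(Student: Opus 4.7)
The plan is to prove the two directions $SBP(f)\leq O(\log \operatorname{rec}_\epsilon(f))$ and $SBP(f)\geq \Omega(\log \operatorname{rec}_\epsilon(f))$ separately via the LP formulation of the rectangle bound in Definition~\ref{def:rec}, reading the statement as a logarithmic equivalence (with $\operatorname{rec}$ normalized as a reciprocal probability as in Definition~\ref{def:rec_bound_conventional}). The common object in both directions is a weighted family of rectangles $\{w_R\}$ interpreted simultaneously as a feasible LP solution and as a public-coin randomized sampling procedure.

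For the upper bound, I would take an optimal feasible primal solution $\{w_R\}_{R\in\mathcal R}$ for $\widetilde{\operatorname{rec}}^1_\epsilon(f)$ with total weight $W=\widetilde{\operatorname{rec}}^1_\epsilon(f)$. Alice and Bob use public coins to draw a rectangle $R$ with probability $w_R/W$; Alice sends one bit indicating whether $x\in A_R$, Bob checks $y\in B_R$, and both accept iff the two conditions hold. The primal constraints force the 1-input acceptance probability to be at least $(1-\epsilon)/W$ and the 0-input acceptance probability to be at most $\epsilon/W$, and for $\epsilon\leq 1/3$ the ratio is at least $2$. Setting $\alpha=(1-\epsilon)/W$ therefore fits the SBP definition with $O(1)$ communication, giving $SBP(f)\leq O(1)+\log W$; Fact~\ref{fact:rec_bound_defs} then converts $\widetilde{\operatorname{rec}}^1_\epsilon$ into $\operatorname{rec}^1_{\epsilon/2}$. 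For the lower bound, I would invert the construction. Given an SBP protocol with parameters $\alpha$ and communication $c$, decompose it as a distribution over deterministic sub-protocols, each partitioning $\mathcal X\times \mathcal Y$ into at most $2^c$ rectangular leaves. For each rectangle $R$, let $w_R$ be the total probability mass assigned to sub-protocols in which $R$ is an accepting leaf; then $\sum_R w_R\leq 2^c$ and $\sum_{R\ni(x,y)}w_R$ equals the overall acceptance probability, so this sum is $\geq \alpha$ on 1-inputs and $\leq \alpha/2$ on 0-inputs. Rescaling the weights by $(1-\epsilon)/\alpha$ with $\epsilon=1/3$ yields a feasible solution to $\widetilde{\operatorname{rec}}^1_{1/3}(f)$ of total weight $O(2^c/\alpha)$, so $\log \widetilde{\operatorname{rec}}^1_{1/3}(f) \leq SBP(f)+O(1)$.

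The hard part will be bridging the error-parameter gap. The direct rescaling outputs a solution only at $\epsilon=1/3$ (because the SBP gap $\alpha$ versus $\alpha/2$ is fixed), while the target statement requires $\epsilon\leq 1/8$, for which $\operatorname{rec}^1_{1/8}(f)\geq \operatorname{rec}^1_{1/3}(f)$ in the wrong direction. Closing this gap requires an error-reduction step for the rectangle bound itself: taking intersections of rectangles from independently sampled LP solutions amplifies the corruption ratio and shows that $\operatorname{rec}^1_\epsilon$ for different constant $\epsilon\in(0,1/8]$ are polynomially related and hence logarithmically equivalent up to constants. Combined with Fact~\ref{fact:rec_bound_defs} to interconvert $\widetilde{\operatorname{rec}}$ and $\operatorname{rec}$, this closes the loop and delivers $SBP(f)=\Theta(\log \operatorname{rec}_\epsilon(f))$ uniformly for all constant $\epsilon\in(0,1/8]$, which is the technical heart of the Göös–Watson argument.
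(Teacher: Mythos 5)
The paper gives no proof of this fact---it is quoted directly from G\"o\"os--Watson---and your reconstruction is essentially their argument: an LP solution for the rectangle/corruption bound is turned into an SBP protocol by publicly sampling a rectangle with probability proportional to its weight, an SBP protocol is turned into an LP solution by decomposing it into accepting rectangular leaves and rescaling, and the mismatch of constant error parameters is repaired by amplification (intersecting independently sampled rectangles, or equivalently running the protocol several times and accepting only if all runs accept), which is exactly the adjustment the footnote attributes to Klauck. Your proof is correct; note that you rightly work throughout with the one-sided quantity $\operatorname{rec}^1_\epsilon(f)$, which is what the cited theorem actually characterizes, since $SBP$ is asymmetric and the two-sided maximum $\max\{\operatorname{rec}^0_\epsilon,\operatorname{rec}^1_\epsilon\}$ of Definition~\ref{def:rec_bound_conventional} can exceed $SBP(f)$ (e.g.\ for the complement of Disjointness).
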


Other lower bound techniques such as the relative discrepancy bound, the positive partition bound, the weak partition bound, the public-coin partition bound, the adaptive relative discrepancy and the weak regularity are also shown to be related to each other as well as to the public-coin randomized communication co,complexity. We note that although the partition bound was mentioned earlier in Definition~\ref{def:partition_bound}, we shall restate it below as a fixed-distribution version, equivalent to the partition bound in Definition~\ref{def:partition_bound} in the worst case distribution. 

\begin{defn}[Partition bound (fixed-distribution variant)~{\cite[Definition~5]{fontes2016relative}}]\label{def:partition_bound_2}
Let $f: \mathcal X \times \mathcal Y \rightarrow \{0, 1\}$ and let $\epsilon > 0$. For a distribution, $\mu$, the partition bound of $f$ with respect to $\mu$, denoted as $\operatorname{prt}^\mu_\epsilon(f)$ ,is the optimal value of the following problem: 
\begin{align*}
    & \max_{\kappa, \phi_{x, y}} \quad \phi - \epsilon\kappa\\
    & \text{st} \quad \quad \quad \quad \phi(R) - \kappa\mu(R \cap f^{-1}(z)) \leq 1 & \forall R, z\\
    & \quad \quad \quad \quad \kappa \geq 0. 
\end{align*}
Furthermore, $\operatorname{prt}_\epsilon(f) = \max_\mu \operatorname{prt}^\mu_\epsilon(f)$. 
\end{defn}

\begin{defn}[Relative discrepancy bound~{\cite[Definition~3]{fontes2016relative}},~\cite{ganor2016exponential}]\label{def:rdisc}
Let $\mu$ be a distribution over $\mathcal X \times \mathcal  Y $ and let $f: \operatorname{supp}(\mu) \rightarrow \{0, 1\}$ be a function\footnote{As in~\cite{fontes2016relative}, $\operatorname{supp}(\mu) = \operatorname{supp}(f)$ is assumed.}. The distributional relative discrepancy bound of $f$, denoted as $\operatorname{rdisc}^{\mu}_\epsilon(f)$, is the optimal value of the following problem:

\begin{align*}
    & \sup_{\alpha, \delta, \phi_{xy}}  \quad \frac{1}{\delta} \left(\frac{1}{2} - \alpha - \epsilon \right) \\
    & \text{st} \quad \left(\frac{1}{2} - \alpha\right) \cdot \phi(R) \leq \mu(R \cap f^{-1}(z)) \quad \forall R, z \text{ s.t. } \phi(R) \geq \delta\\
    & \quad \quad \sum_{xy} \phi_{xy} = 1\\
    & \quad \quad 0\leq \alpha < \frac{1}{2}, 0 < \delta < 1, \phi_{xy} \geq 0 \quad \forall (x, y). 
\end{align*}
For the non-distributional case, the relative discrepancy bound $\operatorname{rdisc}_\epsilon(f) = \max_\mu \operatorname{rdisc}_\epsilon^{\mu}(f)$, where the maximum is taken over all distribution $\mu$ over $\mathcal X \times \mathcal Y$, implicitly adding nonnegativity and normalization constraints on $\mu$. 
\end{defn}

\begin{defn}[Positive partition bound~{\cite[Definition~6]{fontes2016relative}}]
Let $f: \mathcal X \times \mathcal Y \rightarrow \{0, 1\}$ be a (possibly partial) Boolean function and let $\epsilon > 0$. For a distribution $\mu$, the positive partition bound of $f$, denoted as $\operatorname{prt}^{+, \mu}_\epsilon(f)$, is the optimal value of the following maximization problem:
\begin{align*}
    & \max_{\kappa, \phi_{xy}} \quad \quad  \phi - \epsilon\kappa\\
    & \text{st} \quad \quad \quad \quad  \phi(R) - \kappa \mu(R \cap f^{-1}(z)) \leq 1 & \forall R, z\\
    & \quad \quad \quad \quad  \kappa \geq 0, \quad \phi_{xy} \geq 0 & \forall (x, y). 
\end{align*}
The positive partition bound is defined as $\operatorname{prt}^+_\epsilon(f) = \max_{\mu} \operatorname{prt}^{+, \mu}_\epsilon(f)$. 
\end{defn}

\begin{defn}[Weak partition bound~{\cite[Definition~7]{fontes2016relative}}]
Let $f: \mathcal X \times \mathcal Y \rightarrow \{0, 1\}$ denote a (possibly partial) Boolean function and let $\epsilon > 0$. For a distribution $\mu$, the weak partition bound, denoted as $\operatorname{wprt}^{\mu}_\epsilon(f)$, is the optimal value of the problem below:
\begin{align*}
    & \max_{\kappa, \phi_{xy}} \quad \quad \phi - \epsilon\kappa\\
    & \text{st}  \quad \quad \quad \quad \phi(R) - \kappa \mu(R \cap f^{-1}(z)) \leq 1 & \forall R, z\\
    & \quad \quad \quad \quad  \kappa \geq 0, \quad \phi_{xy}\geq 0, \quad \kappa\mu_{xy} - \phi_{xy} \geq 0 & \forall (x, y). 
\end{align*}
The weak partition bound is defined as $\operatorname{wprt}_\epsilon(f) = \max_\mu \operatorname{wprt}^{\mu}_\epsilon(f)$. 
\end{defn}

\begin{defn}[Public-coin partition bound~{\cite[Definition~2]{jain2014quadratically}}]
Let $\subseteq \mathcal X \times \mathcal Y \times \mathcal Z$ be a relation and let $\epsilon > 0$. The $\epsilon$-public-coin partition bound pf $f$, denoted as $\operatorname{pprt}_\epsilon(f)$, is the optimal value of the linear program below. 

\begin{align*}
    & \underline{\text{Primal}} \\
    & \text{Min} \quad \sum_z\sum_R w_{z, R}  \\
    & \text{st} \quad \forall (x, y): \sum_{z: (z, y, z)\in f} \sum_{R: (x, y)\in R} w_{z, R} \geq 1 - \epsilon,  \\
    & \quad \quad \forall (x, y): \sum_{R: (x, y)\in R} \sum_{z} w_{z, R} = 1 \\
    & \quad \quad \forall (z, R): w_{z, R} = \sum_{P:(z, R)\in P} a_P, \\
    & \quad \quad\sum_P a_P = 1,\\
    & \quad \quad \forall (z, R): w_{z, R} \geq 0; \quad \forall P: a_P \geq 0.\\
    \\[1em]
    & \underline{\text{Dual}}\\
    & \text{Max} \quad (1 - \epsilon)\sum_{(x, y)} \mu_{x, y} + \sum_{(x, y)} \phi_{x, y} + \lambda\\
    & \text{st} \quad \forall (z, R): \sum_{(x, y)\in R: (x, y, z)\in f} \mu_{x, y} + \sum_{(x, y)\in R} \phi_{x, y} + v_{z, R }\leq 1,\\
     & \quad \quad\forall P: \sum_{(z, R) \in P} v_{z, R} \geq \lambda,\\
     & \quad \quad \forall (x, y): \mu_{x, y} \geq 0, \phi_{x, y}\in \mathbb R;  \quad \forall (z, R): v_{z, R} \in \mathbb R\\
    & \quad \quad \lambda\in \mathbb R.
\end{align*}

Here, $R$ represents a rectangle in $\mathcal X \times \mathcal Y$ and $P$ is a partition along with outputs $z \in \mathcal Z$, i.e. $P = \{(z_1, R_1), \cdots, (z_m, R_m)\}$ such that $\{R_1, \cdots , R_m\}$ form a partition of $\mathcal X \times \mathcal Y$ into rectangles and $z_i \in\mathcal Z$ for all $i\in[m]$. 
\end{defn}

\begin{fact}[Adaptive relative discrepancy~{\cite[Definition~9]{fontes2016relative}},~\cite{ganor2016exponential}]
Let $f: \mathcal X \times \mathcal Y \rightarrow \{0, 1\}$ be a (possibly partial) Boolean function and let $\epsilon > 0$. For a distribution $\mu$, the adaptive relative discrepancy of $f$ with respect to $\mu$, denoted as $\operatorname{ardisc}^\mu_\epsilon(f)$, is the optimal value of the optimization problem below:
\begin{align*}
    & \sup_{\alpha, \delta, \phi^P_{x, y}}  \frac{1}{\delta}\left(\frac{1}{2} - \alpha - \epsilon \right) \\
    & \text{st} \quad \left(\frac{1}{2} - \alpha \right) \phi^P(R) \leq \mu(R \cap f^{-1}(z))\\
    & \quad \quad \quad \quad \quad \quad \quad \quad \quad \quad \forall P \forall (z, R) \in P \text{ st } \phi^P(R) \geq \delta \\
    &  \quad \quad \phi^P = 1 \quad  \quad  \quad  \quad  \quad  \quad  \quad \quad  \quad  \quad  \quad   \forall P\\
    & \quad \quad  0 \leq \alpha < \frac{1}{2}, 0 < \delta < 1, \phi^P_{x,y}\geq 0 \quad \quad   \forall P, \forall (x, y)
\end{align*}
Furthermore, $\operatorname{ardisc}_\epsilon(f) = \max_\mu \operatorname{ardisc}^\mu_\epsilon(f)$. 
\end{fact}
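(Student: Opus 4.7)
The plan is to verify that the linear program above characterizes the adaptive relative discrepancy as introduced in~\cite{ganor2016exponential, fontes2016relative}. Since the statement is essentially definitional, the natural proof strategy is to establish equivalence between this LP and the more operational ``adversarial adaptive partitioning'' definition used in those papers. First, I would recall that operational definition: one considers an adversary that, for each allowed partition $P = \{(z_1, R_1), \ldots, (z_m, R_m)\}$ of $\mathcal X \times \mathcal Y$ into rectangles with assigned outputs, is permitted to choose a distinct distribution $\phi^P$ to maximize the worst-case ``relative correlation'' against $\mu$ restricted to the appropriate $f^{-1}(z)$ cells.

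Next, I would show that the LP value lower bounds the adaptive relative discrepancy: given any feasible triple $(\alpha, \delta, \{\phi^P\}_P)$, the constraint $(\tfrac{1}{2} - \alpha)\,\phi^P(R) \leq \mu(R \cap f^{-1}(z))$ is precisely the rectangle-by-rectangle discrepancy condition which, applied to every $(z, R)\in P$ with $\phi^P(R) \geq \delta$, yields an adversary achieving value $\frac{1}{\delta}(\tfrac{1}{2} - \alpha - \epsilon)$. Conversely, for the upper bound, given an adversary witnessing a certain adaptive relative discrepancy one reads off $\alpha$ and $\delta$ from the guaranteed discrepancy slack, takes $\phi^P$ to be the adversarial distribution assigned to $P$, and verifies that all LP constraints are satisfied.

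The main subtlety, and the step I expect to be the principal obstacle, is the quantifier pattern: the LP fixes a single pair $(\alpha, \delta)$ but allows $\phi^P$ to vary freely across partitions, precisely matching the ``adversary chooses second'' structure of the adaptive definition. Particular care is required for rectangles whose weight $\phi^P(R)$ falls below $\delta$; these carry no discrepancy obligation in either formulation, but one must check they can be safely ignored in both translation directions without shifting the optimum. Finally, the outer identity $\operatorname{ardisc}_\epsilon(f) = \max_\mu \operatorname{ardisc}^\mu_\epsilon(f)$ is a standard Yao-style maxmin/minmax duality over the simplex of distributions $\mu$ on $\mathcal X \times \mathcal Y$, justified by compactness of the simplex and continuity of the inner LP value as a function of $\mu$.
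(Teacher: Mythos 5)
This statement is not a theorem in the paper at all: it is a definition, reproduced verbatim (as ``Fact'' by what appears to be a mislabeled environment) from~\cite[Definition~9]{fontes2016relative} and~\cite{ganor2016exponential}. The paper supplies no proof and none is needed, so your proposal is attempting to prove something the paper never claims. The target you set up --- an equivalence between this optimization program and an ``operational adversarial adaptive partitioning'' definition --- does not exist in the paper or in the cited source in the form you assume: in~\cite{fontes2016relative} the adaptive relative discrepancy \emph{is} this program (it is the relative discrepancy bound of Definition~\ref{def:rdisc} strengthened by letting the distribution $\phi^P$ depend on the partition $P$, with the constraint imposed for every labeled rectangle $(z,R)\in P$ of mass at least $\delta$), so there is no second, independent characterization against which an equivalence must be established. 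Your discussion of the quantifier pattern is a reasonable reading of why the definition is called ``adaptive,'' but as a proof plan it manufactures an obstacle rather than addressing one.

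A second, concrete error: you treat the final identity $\operatorname{ardisc}_\epsilon(f) = \max_\mu \operatorname{ardisc}^\mu_\epsilon(f)$ as a Yao-style minimax duality requiring compactness of the simplex and continuity of the inner value in $\mu$. It is nothing of the sort --- the non-distributional quantity is simply \emph{defined} as the maximum of the distributional one over $\mu$, exactly as with $\operatorname{rdisc}_\epsilon(f)$, $\operatorname{prt}_\epsilon(f)$, and the other fixed-distribution bounds in Section~\ref{sec:useful_lemma_class_3}. Invoking a duality argument here signals a misunderstanding of which statements in this part of the paper carry mathematical content (e.g.\ Fact~\ref{facgt:Dmu_ardisc}, relating $\operatorname{ardisc}^\mu_\epsilon$ to $D^\mu_\epsilon$) and which are merely definitional scaffolding.
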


\begin{defn}[Weak regularity~{\cite[Definition~2.14]{nolin2020communication}}~\cite{chattopadhyay2017lower}]
Let $f:\mathcal X \times \mathcal Y \rightarrow Z$ be a total function and $\mu$ be a distribution over $\mathcal X \times \mathcal Y$. The weak regularity of $f$ with respect to $\mu$, denoted by $\operatorname{wreg^{\mu}_\epsilon(f)}$, is the optimal value of the following optimization problem: 
\begin{align*}
    & \min \quad \delta \\
    & \text{st} \quad \quad \mu(R \cap f^{-1}(z)) \geq \frac{1}{|\mathcal Z|}(\mu(R) - \delta) & \forall R, z.
\end{align*}
Alternatively, 
\begin{align*}
    \operatorname{wreg}^{\mu}_\epsilon(f) = \max_{R, z} \mu(R) - |\mathcal Z| \mu(R \cap f^{-1}(z)). 
\end{align*}
We say that $f$ is $\delta$-weakly regular with respect to $\mu$ for any $\delta \geq \operatorname{wreg}^\mu(f)$. 
\end{defn}

\begin{fact}[{\cite[Corollary~2]{fontes2016relative}}]\label{facgt:Dmu_ardisc}
For any $\mu$, $f: \operatorname{supp(\mu)} \rightarrow \{0, 1\}$ and $\epsilon \in (0, 1/8)$, we have $\log \left(\operatorname{ardisc}^\mu_\epsilon(f) \right) \leq D^{\mu}_\epsilon(f) \leq \left(\log \operatorname{ardisc}^\mu_{\epsilon/8}(f) + 2\log \frac{1}{\epsilon} + 6\right)^2$. 
\end{fact}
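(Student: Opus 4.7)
The stated fact consists of two inequalities; I would prove each separately.

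For the lower bound $\log \operatorname{ardisc}^\mu_\epsilon(f) \leq D^\mu_\epsilon(f)$, let $\pi$ be a deterministic protocol of cost $c = D^\mu_\epsilon(f)$; its transcript tree induces a labeled partition $P^* = \{(z_R, R)\}_R$ of $\mathcal{X}\times\mathcal{Y}$ with $|P^*| \leq 2^c$ and $\sum_R \mu(R\cap f^{-1}(\overline{z_R})) \leq \epsilon$ (the protocol error). The key trick is to feed the ardisc LP the \emph{label-flipped} partition $P' = \{(\overline{z_R}, R)\}_R$. For any feasible $(\alpha, \delta, \{\phi^P\}_P)$, the LP constraint on each heavy rectangle of $P'$ (those with $\phi^{P'}(R) \geq \delta$) gives $(1/2 - \alpha)\phi^{P'}(R) \leq \mu(R \cap f^{-1}(\overline{z_R}))$, and summing yields $\sum_{\text{heavy}} \phi^{P'}(R) \leq \epsilon/(1/2 - \alpha)$. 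Combined with $1 = \sum_R \phi^{P'}(R) \leq |P^*|\,\delta + \sum_{\text{heavy}} \phi^{P'}(R)$, this produces
\[
1 \;\leq\; 2^c \delta \;+\; \frac{\epsilon}{1/2 - \alpha},
\]
so $(1/\delta)(1/2 - \alpha - \epsilon) \leq 2^c(1/2 - \alpha) < 2^c$, and taking the supremum over feasible $(\alpha, \delta, \{\phi^P\})$ establishes the inequality.

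For the upper bound $D^\mu_\epsilon(f) \leq (\log\operatorname{ardisc}^\mu_{\epsilon/8}(f) + 2\log(1/\epsilon) + 6)^2$, I would follow the peeling approach of Fontes, Laplante, Laurière and Roland~\cite{fontes2016relative}. If $k = \operatorname{ardisc}^\mu_{\epsilon/8}(f)$ is small, then by LP duality no feasible $\phi^P$ can force the ardisc constraint on too many rectangles; equivalently, for every labeled partition $P$ one can find an almost-monochromatic labeled rectangle $(z, R) \in P$, i.e.\ $\mu(R \cap f^{-1}(z)) \geq (1 - O(\epsilon))\,\mu(R)$. This yields a protocol that proceeds in phases: in each phase Alice and Bob locate such a near-monochromatic rectangle using $O(\log k + \log(1/\epsilon))$ bits via a standard rectangle-bound style sub-protocol, output $z$ on $R$, and recurse on the complementary region. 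Each phase removes a constant fraction of the remaining $\mu$-mass, so there are $O(\log k + \log(1/\epsilon))$ phases; multiplying per-phase cost by number of phases gives the stated squared bound.

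The main obstacle is the upper bound. Two points need care: (i) the $\epsilon$ error budget must split across the $\Omega(\log k)$ recursive phases without compounding to more than $\epsilon$, which is precisely why the hypothesis is stated at error level $\epsilon/8$ — this provides the multiplicative slack needed to absorb composition losses while still outputting a protocol with total error at most $\epsilon$; and (ii) conditioning $\mu$ on the complement of a peeled rectangle and renormalizing must preserve an $\operatorname{ardisc}$ bound of comparable order, so that the recursion continues to apply. Point (ii) is where the adaptive quantification over \emph{all} labeled partitions $P$ in the definition of $\operatorname{ardisc}$ is essential: it guarantees a near-monochromatic rectangle can always be found within the restricted subspace without degrading the LP value.
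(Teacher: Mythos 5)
Your lower-bound half is correct and self-contained: taking the leaf partition of an optimal $\epsilon$-error protocol under $\mu$, flipping its labels, and splitting its at most $2^c$ rectangles into $\delta$-light and $\delta$-heavy ones does give $1 \leq 2^c\delta + \epsilon/(1/2-\alpha)$ for every feasible $(\alpha,\delta,\{\phi^P\})$, hence $\frac{1}{\delta}\left(\frac{1}{2}-\alpha-\epsilon\right) \leq 2^c\left(\frac{1}{2}-\alpha\right) < 2^c$. This is if anything more direct than the cited source, which obtains this direction by chaining $\operatorname{ardisc}^\mu_\epsilon(f) \leq \operatorname{pprt}^\mu_\epsilon(f)$ with $\log\operatorname{pprt}^\mu_\epsilon(f) \leq D^\mu_\epsilon(f)$. (Note the paper itself offers no proof: the statement is imported verbatim as Corollary~2 of Fontes, Laplante, Lauri\`ere and Roland.)

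The upper-bound half, however, has a genuine gap. The crucial step ``Alice and Bob locate such a near-monochromatic rectangle using $O(\log k + \log(1/\epsilon))$ bits via a standard rectangle-bound style sub-protocol'' is unsupported: a small value of $\operatorname{ardisc}^\mu_{\epsilon/8}(f)$ does not hand the players any fixed small family of candidate rectangles to index into, and extracting such a structure is precisely the content of the theorem; moreover, the near-monochromatic rectangle you claim exists in every labeled partition carries no guarantee of constant $\mu$-mass, so the ``constant fraction removed per phase'' accounting is also unjustified. Indeed, if both claims held, your protocol would cost $O\bigl((\log k + \log(1/\epsilon))\log(1/\epsilon)\bigr)$, strictly better than the stated quadratic bound -- a sign the sketch proves too much, since no sub-quadratic simulation of partition-type bounds is known. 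The actual proof composes two results the paper also quotes: the equivalence $\frac{\epsilon}{2}\operatorname{pprt}^\mu_{4\epsilon}(f) \leq \operatorname{ardisc}^\mu_\epsilon(f) \leq \operatorname{pprt}^\mu_\epsilon(f)$ (Fact~\ref{def:pprt-ardisc}) with the distributional form of the Jain--Lee--Vishnoi quadratically tight simulation $D^\mu_{2\epsilon}(f) \leq \bigl(\log\operatorname{pprt}^\mu_\epsilon(f) + \log(1/\epsilon) + 1\bigr)^2$ (cf.\ Facts~\ref{fact:pprt-Rpub} and~\ref{fact:Rpub-pprt}). There the square arises from an Aho--Ullman--Yannakakis-style protocol that identifies the label of the rectangle of a publicly sampled labeled partition containing the input, not from multiplying a phase count by a per-phase cost; and the $\epsilon/8$ together with the constant $2\log(1/\epsilon)+6$ falls out of composing the factor-$4$ error loss in the ardisc--pprt relation with the factor-$2$ loss in the compression ($4+1+1=6$), not from splitting an error budget across recursive phases as you suggest. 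Your point (ii), that conditioning $\mu$ on the unpeeled region preserves an ardisc guarantee, is likewise asserted but never established and is not needed on the standard route.
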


\begin{fact}[{\cite[Proposition~1]{fontes2016relative}}]\label{fact:prt_rs}
For all $f, \mu, \epsilon$, we have $\operatorname{wprt}^\mu_\epsilon(f) \leq \operatorname{prt}^{+, \mu}_\epsilon(f) \leq \operatorname{prt}^\mu_\epsilon(f)$ and $\operatorname{wprt}^\mu_\epsilon(f) \leq \bar{\operatorname{prt}}^\mu_\epsilon(f) \leq \operatorname{prt}^\mu_\epsilon(f)$. 
\end{fact}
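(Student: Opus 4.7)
\textbf{Proof plan for \cref{fact:prt_rs}.} My plan is to prove the chain $\operatorname{wprt}^\mu_\epsilon(f) \leq \operatorname{prt}^{+, \mu}_\epsilon(f) \leq \operatorname{prt}^\mu_\epsilon(f)$ by direct comparison of feasible regions, since all three quantities are defined as the optima of maximization LPs sharing the common objective $\phi - \epsilon\kappa$ and the common rectangle constraints $\phi(R) - \kappa\mu(R\cap f^{-1}(z)) \leq 1$ for every rectangle $R$ and output $z$. The only differences are the variable constraints on $\phi_{xy}$: the weak partition bound requires both $\phi_{xy}\geq 0$ and $\phi_{xy}\leq \kappa\mu_{xy}$; the positive partition bound requires only $\phi_{xy}\geq 0$; and the partition bound in the form of \cref{def:partition_bound_2} imposes no sign constraint on $\phi_{xy}$. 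These restrictions are clearly nested, so the feasible region of $\operatorname{wprt}^\mu_\epsilon$ is contained in that of $\operatorname{prt}^{+,\mu}_\epsilon$, which is contained in that of $\operatorname{prt}^\mu_\epsilon$. Maximizing the same objective over nested feasible regions yields the chain of inequalities immediately.

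For the second chain, I would first establish $\bar{\operatorname{prt}}^\mu_\epsilon(f) \leq \operatorname{prt}^\mu_\epsilon(f)$ by showing that every feasible solution of the (minimization) partition-bound LP from \cref{def:partition_bound} yields a feasible solution of the relaxed partition bound LP of \cref{def:realaxed_prt} with equal objective value. Given a feasible $\{w_{z,R}\}$ with total mass $W = \sum_{z,R} w_{z,R}$, set $p_{R,z} = w_{z,R}/W$ and $\eta = 1/W$. Then $\sum_{R,z} p_{R,z} = 1$ is automatic, the equality $\sum_{z,R:(x,y)\in R} w_{z,R} = 1$ of the partition bound becomes the relaxed inequality $\sum_{z,R:(x,y)\in R} p_{R,z} \leq \eta$, and the coverage constraint $\sum_{R:(x,y)\in R} w_{f(x,y),R} \geq 1 - \epsilon$ rescales exactly to the required $(1-\epsilon)\eta$ lower bound. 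The relaxed objective $1/\eta$ equals $W$, so $\bar{\operatorname{prt}}^\mu_\epsilon(f) \leq \operatorname{prt}^\mu_\epsilon(f)$.

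The remaining inequality $\operatorname{wprt}^\mu_\epsilon(f) \leq \bar{\operatorname{prt}}^\mu_\epsilon(f)$ is the most delicate step, as it relates a maximization LP to a minimization LP. The plan is to pass to the LP dual of the relaxed partition bound and verify that any weak-partition feasible pair $(\kappa, \phi)$ can be lifted to a feasible point of that dual with the same objective. The key structural observation is that the per-input upper bound $\sum_{z,R:(x,y)\in R} p_{R,z} \leq \eta$ in the relaxed primal (as opposed to the equality in the unrelaxed partition bound) introduces, after dualization, a new nonnegative dual variable at each input $(x,y)$; this dual variable is exactly what plays the role of the slack $\kappa\mu_{xy} - \phi_{xy}$ in the weak partition bound. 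The main obstacle I anticipate is the bookkeeping around the normalization $\sum_{R,z} p_{R,z} = 1$ and the nonlinear objective $1/\eta$, which must be handled by a homogenization step before the dual LP aligns cleanly with the weak-partition LP. Once this scaling is carried out and strong LP duality is invoked, the containment of feasible regions yields $\operatorname{wprt}^\mu_\epsilon(f) \leq \bar{\operatorname{prt}}^\mu_\epsilon(f)$, completing the proof.
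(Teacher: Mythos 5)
Your first chain is fine: with the fixed-distribution formulation of \cref{def:partition_bound_2}, the three bounds $\operatorname{wprt}^\mu_\epsilon$, $\operatorname{prt}^{+,\mu}_\epsilon$, $\operatorname{prt}^\mu_\epsilon$ are maximizations of the same objective over nested feasible sets, and this is in fact how the cited source (Fontes et al., Proposition~1) argues — the paper itself offers no proof and simply imports that result. The second chain, however, has a genuine gap. For $\bar{\operatorname{prt}}^\mu_\epsilon(f)\leq\operatorname{prt}^\mu_\epsilon(f)$ you rescale a feasible solution of the covering LP of \cref{def:partition_bound}; but that LP is the \emph{distribution-free} partition bound, so your construction only shows $\bar{\operatorname{prt}}^\mu_\epsilon(f)\leq\operatorname{prt}_\epsilon(f)=\max_\nu\operatorname{prt}^\nu_\epsilon(f)$, which is strictly weaker than the per-$\mu$ inequality claimed in the Fact (and the per-$\mu$ form is what the paper actually uses, e.g.\ for the product-distribution chain in Class~5, where a comparison against the worst-case $\operatorname{prt}_\epsilon(f)$ would not suffice).

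The remaining step $\operatorname{wprt}^\mu_\epsilon(f)\leq\bar{\operatorname{prt}}^\mu_\epsilon(f)$ is only a plan: you correctly identify that the per-input constraint $\sum_{z,R:(x,y)\in R}p_{R,z}\leq\eta$ of \cref{def:realaxed_prt} dualizes into the slack $\kappa\mu_{xy}-\phi_{xy}\geq 0$, but the homogenization of the $1/\eta$ objective and the normalization constraint, followed by strong duality, is exactly the technical content you would need to write out, and it is asserted rather than carried out. The cleaner route — and the one underlying the cited proposition — is to establish once that $\bar{\operatorname{prt}}^\mu_\epsilon(f)$ equals the optimum of the same maximization LP ($\max\ \phi-\epsilon\kappa$ subject to $\phi(R)-\kappa\mu(R\cap f^{-1}(z))\leq 1$, $\kappa\geq 0$) with the single additional family of constraints $\kappa\mu_{xy}-\phi_{xy}\geq 0$ and no sign constraint on $\phi_{xy}$; then \emph{both} inequalities of the second chain follow from the same feasible-region nesting as your first chain ($\operatorname{wprt}^\mu$ adds $\phi_{xy}\geq 0$ on top, $\operatorname{prt}^\mu$ drops $\kappa\mu_{xy}-\phi_{xy}\geq 0$), and the detour through \cref{def:partition_bound} becomes unnecessary.
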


\begin{fact}[{\cite[Lemma~1]{jain2014quadratically}}]\label{fact:pprt-Rpub}
Let $f\subseteq \mathcal X \times \mathcal Y \times \mathcal Z$ be a relation and let $\epsilon > 0$. Then, $\log \left(\operatorname{pprt}_\epsilon(f)\right) \leq R^{\operatorname{pub}}_\epsilon(f)$. 
\end{fact}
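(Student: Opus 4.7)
\begin{proofs}
The plan is to construct, from an optimal public-coin randomized protocol, a feasible solution to the primal LP defining $\operatorname{pprt}_\epsilon(f)$ whose objective is at most $2^{R^{\operatorname{pub}}_\epsilon(f)}$. Fix an optimal $\epsilon$-error public-coin protocol $\pi$ for $f$ with communication cost $c \coloneqq R^{\operatorname{pub}}_\epsilon(f)$. For every fixing $r$ of the public randomness, the protocol $\pi_r$ is a deterministic protocol and its leaves partition $\mathcal X \times \mathcal Y$ into at most $2^c$ combinatorial rectangles, each labeled with an output in $\mathcal Z$. Let $P_r$ denote this partition-with-outputs (a set of pairs $(z, R)$), and set $a_{P_r} \coloneqq \Pr[\text{public coins} = r]$; when several $r$'s produce the same $P$, sum the corresponding probabilities. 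Then define
\[
w_{z, R} \;\coloneqq\; \sum_{P \,:\, (z, R) \in P} a_P
\qquad \text{for every } (z, R).
\]

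I would then verify the three primal constraints in turn. Non-negativity of $a_P$ and $w_{z,R}$ is immediate, and $\sum_P a_P = 1$ because the $a_P$'s are probabilities that cover all coin outcomes; the decomposition $w_{z, R} = \sum_{P: (z,R)\in P} a_P$ is built into the definition. For each $(x, y)$, exactly one leaf of $\pi_r$ contains $(x, y)$, so $\sum_{R: (x,y)\in R}\sum_z w_{z,R} = \sum_P a_P = 1$, giving the second constraint. For the correctness constraint, note that $\sum_{z: (x,y,z) \in f}\sum_{R: (x,y)\in R} w_{z,R}$ is exactly the probability that $\pi$ outputs a valid answer on $(x, y)$, which is at least $1 - \epsilon$ by the $\epsilon$-error guarantee.

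Finally, I would bound the objective. Since each deterministic protocol $\pi_r$ has at most $2^c$ leaves, each partition $P$ appearing with positive weight contains at most $2^c$ pairs $(z, R)$. Therefore
\[
\sum_z \sum_R w_{z,R} \;=\; \sum_P a_P \cdot |P| \;\leq\; 2^c \sum_P a_P \;=\; 2^c,
\]
and hence $\operatorname{pprt}_\epsilon(f) \leq 2^c$, which gives $\log \operatorname{pprt}_\epsilon(f) \leq R^{\operatorname{pub}}_\epsilon(f)$.

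The only mild subtlety, and the place I expect to be careful, is the normalization constraint $\sum_{R:(x,y)\in R}\sum_z w_{z,R} = 1$: the LP demands equality, not an inequality. This holds only because in any deterministic protocol every input reaches exactly one leaf, so the total weight assigned to $(x,y)$ across all partitions is exactly the total mass of the coin distribution. If this equality were ever slack one could shrink the $w_{z,R}$'s and decrease the objective, so conceptually nothing is lost; the construction above achieves equality on the nose.
\end{proofs}
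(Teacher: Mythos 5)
Your proof is correct: the paper does not prove this statement itself but quotes it as a Fact from the cited reference (Lemma~1 of the public-coin partition bound paper), and your argument is exactly the standard one behind that lemma --- each fixing of the public coins yields a deterministic protocol whose at most $2^{c}$ leaf rectangles, together with their output labels, form a partition $P$, and weighting these partitions by the coin probabilities gives a feasible primal solution with objective $\sum_P a_P\lvert P\rvert\le 2^{c}$. Your handling of the equality constraint (each input lies in exactly one leaf of each deterministic protocol) is also the right observation, so nothing is missing.
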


\begin{fact}[{\cite[Theorem~1]{jain2014quadratically}}]\label{fact:Rpub-pprt}
    Let $f\subseteq \mathcal X \times \mathcal Y \times \mathcal Z$ be a relation and let $\epsilon > 0$. Then $R^{\operatorname{pub}}_{2\epsilon}(f) \leq \left(\log \operatorname{pprt}_\epsilon(f) + \log\frac{1}{\epsilon} + 1\right)^2$. 
\end{fact}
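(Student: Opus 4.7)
\textit{Proof proposal.} The plan is to construct a public-coin randomized protocol directly from an optimal primal solution to the linear program defining $\operatorname{pprt}_\epsilon(f)$. By the primal of $\operatorname{pprt}_\epsilon(f)$, there exist nonnegative weights $\{a_P\}$ over labeled rectangle partitions $P = \{(z_i, R_i)\}$ of $\mathcal{X} \times \mathcal{Y}$ with $\sum_P a_P = 1$, so that $\{a_P\}$ is a probability distribution over partitions. Unpacking the relation $w_{z,R} = \sum_{P : (z,R)\in P} a_P$, we get the identity
\begin{equation*}
\operatorname{pprt}_\epsilon(f) \;=\; \sum_{z,R} w_{z,R} \;=\; \sum_P a_P \, |P| \;=\; \mathbb{E}_{P \sim a}[\,|P|\,].
\end{equation*}
Moreover, the first primal constraint says that for every $(x,y)$, if we sample $P$ from $\{a_P\}$ and output the label $z$ of the unique rectangle of $P$ containing $(x,y)$, the output is valid for $(x,y)$ (i.e., $(x,y,z) \in f$) with probability at least $1 - \epsilon$.

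The protocol is as follows. Alice and Bob use the public coins to sample a partition $P$ from the distribution $\{a_P\}$. Let $K \coloneqq \operatorname{pprt}_\epsilon(f)/\epsilon$. By Markov's inequality applied to $|P|$, the event $|P| > K$ happens with probability at most $\epsilon$; in that case the players abort and output an arbitrary answer. Otherwise, the partition $P$ induces a function $g_P : \mathcal{X} \times \mathcal{Y} \to \mathcal{Z}$ (sending each input to the label of its rectangle), whose partition number satisfies $C^D(g_P) \leq |P| \leq K$. Alice and Bob then run a deterministic protocol for $g_P$ of depth $O((\log |P|)^2)$. The total error is at most $\epsilon + \epsilon = 2\epsilon$ (from the LP guarantee plus the abort event), and the communication cost is $O((\log K)^2) = O\bigl((\log \operatorname{pprt}_\epsilon(f) + \log(1/\epsilon))^2\bigr)$, matching the claimed bound.

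The main technical obstacle is the deterministic rectangle-identification step, which requires simulating a known labeled partition with small depth. For this I would invoke the Aho--Ullman--Yannakakis style upper bound $D(g) = O((\log C^D(g))^2)$: since $g_P$ has at most $|P|$ monochromatic rectangles in its partition, its deterministic communication complexity is at most $O((\log |P|)^2)$. Equivalently, one can cite the bound $D(g) \leq O(\log^2 \operatorname{rank}_+(g))$ already referenced as Fact~\ref{fact:rank+-D}'s companion upper bound in the text, combined with $\operatorname{rank}_+(M_{g_P}) \leq |P|$. The remaining bookkeeping (handling relations rather than functions, and absorbing the constant $\log(1/\epsilon) + 1$ slack) is routine once the two ingredients above---Markov truncation and a quadratic partition-to-depth simulation---are in place.
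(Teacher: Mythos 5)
Your proposal is correct and follows essentially the same route as the proof in the cited source~\cite{jain2014quadratically} (the paper itself only quotes this bound as a fact without reproving it): use the public coins to sample a labeled partition from the primal distribution $\{a_P\}$, note $\operatorname{pprt}_\epsilon(f)=\mathbb{E}_P[|P|]$ and truncate at size $\operatorname{pprt}_\epsilon(f)/\epsilon$ by Markov, then identify the rectangle containing the input deterministically in $O(\log^2 |P|)$ bits. The only point to state carefully is the last step: for a relation with general $\mathcal Z$ you should invoke the rectangle-identification protocol for labeled partitions directly (any two disjoint rectangles are row- or column-disjoint, so the Aho--Ullman--Yannakakis halving argument applies), rather than the $\operatorname{rank}_+$ or $C^D$ formulations, which are stated for Boolean functions; with that protocol's exact cost the stated constant-free bound $\left(\log \operatorname{pprt}_\epsilon(f)+\log\frac{1}{\epsilon}+1\right)^2$ also falls out.
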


\begin{fact}[{\cite[Theorem~5]{fontes2016relative}}]\label{fact:prt+-rdisc}
Let $\mu$ be a distribution on $\mathcal X \times \mathcal Y$ and let $f$ be a Boolean function on the support of $\mu$ such that  either $\operatorname{rdisc}^{\mu}_\epsilon(f) \geq 1$ or $\operatorname{prt}^{+, \mu}_{4\epsilon}(f) > 2$ . Then, for any $\epsilon\in (0, 1/4)$, we have $\frac{\epsilon}{2} \operatorname{prt}^{+, \mu}_{4\epsilon}(f) \leq \operatorname{rdisc}^{\mu}_\epsilon(f)\leq \operatorname{prt}^{+, \mu}_\epsilon(f)$. 
\end{fact}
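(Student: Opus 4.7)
The plan is to pass between feasible solutions of the two linear programs. Both $\operatorname{prt}^{+, \mu}_\epsilon(f)$ and $\operatorname{rdisc}^{\mu}_\epsilon(f)$ are built from the same ingredients---a nonnegative weight function $\phi$ on $\mathcal X \times \mathcal Y$ together with the correlations $\mu(R \cap f^{-1}(z))$ of rectangles with monochromatic target sets---and they differ only in how $\phi$ is normalized and in how these correlations enter the objective. Given a near-optimal solution of one program, I will rescale $\phi$ and choose the remaining parameters so as to produce a feasible solution of the other whose objective value is comparable.

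For the upper bound $\operatorname{rdisc}^{\mu}_\epsilon(f) \leq \operatorname{prt}^{+, \mu}_\epsilon(f)$, I take a near-optimal triple $(\alpha, \delta, \phi)$ for relative discrepancy and set $\phi' := \phi/\delta$ and $\kappa := 1/(\delta(1/2 - \alpha))$. To verify the positive partition constraint $\phi'(R) - \kappa \mu(R \cap f^{-1}(z)) \leq 1$, I split on whether $\phi(R) < \delta$ (then $\phi'(R) < 1$ and the constraint is immediate) or $\phi(R) \geq \delta$ (then the rdisc constraint $(1/2 - \alpha)\phi(R) \leq \mu(R \cap f^{-1}(z))$ makes the difference nonpositive). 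The resulting positive partition value is $(1/\delta)(1/2 - \alpha - \epsilon)/(1/2 - \alpha)$, which is at least $\operatorname{rdisc}^{\mu}_\epsilon(f) = (1/\delta)(1/2 - \alpha - \epsilon)$ because $1/(1/2 - \alpha) \geq 2$.

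For the harder lower bound $(\epsilon/2)\operatorname{prt}^{+, \mu}_{4\epsilon}(f) \leq \operatorname{rdisc}^{\mu}_\epsilon(f)$, the hypothesis either handles a trivial regime---if $\operatorname{rdisc}^{\mu}_\epsilon(f) \geq 1$ and $\operatorname{prt}^{+, \mu}_{4\epsilon}(f) \leq 2$ then $(\epsilon/2)\operatorname{prt}^{+, \mu}_{4\epsilon}(f) \leq \epsilon < 1 \leq \operatorname{rdisc}^{\mu}_\epsilon(f)$---or it provides $\phi(\mathcal X \times \mathcal Y) > 2$ for a near-optimal $(\phi, \kappa)$ of $\operatorname{prt}^{+, \mu}_{4\epsilon}(f)$. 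In the latter case I normalize, setting $\phi' := \phi/\phi(\mathcal X \times \mathcal Y)$, $\delta := 2/\phi(\mathcal X \times \mathcal Y) < 1$, and $\alpha := \max\{0, 1/2 - \phi(\mathcal X \times \mathcal Y)/(2\kappa)\}$. For rectangles with $\phi'(R) \geq \delta$ (equivalently $\phi(R) \geq 2$), the prt$^+$ constraint gives $\kappa\mu(R \cap f^{-1}(z)) \geq \phi(R) - 1 \geq \phi(R)/2$, which combined with the choice of $\alpha$ yields the rdisc constraint. Evaluating the rdisc objective in the two cases $\phi(\mathcal X \times \mathcal Y) \geq \kappa$ and $\phi(\mathcal X \times \mathcal Y) < \kappa$, the target inequality reduces (using $\epsilon < 1/4$ in the first case) to the nonnegativity of the quadratic $\phi(\mathcal X \times \mathcal Y)^2 - 4\epsilon\kappa\phi(\mathcal X \times \mathcal Y) + 8\epsilon^2\kappa^2$, which is immediate from its negative discriminant.

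The main obstacle will be the lower bound direction, since the two programs treat $\phi$ in different ways---unnormalized versus a probability distribution---and they combine $\phi$ and $\kappa$ into a single objective inequivalently. The factor $\epsilon/2$ on the left and the shift from $\epsilon$ to $4\epsilon$ on the right are precisely the slack needed to make the case analysis close; the split on $\phi(\mathcal X \times \mathcal Y) \geq \kappa$ versus $\phi(\mathcal X \times \mathcal Y) < \kappa$ reflects the two regimes in which the optimal $\alpha$ is either $0$ or strictly positive.
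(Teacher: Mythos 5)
Your proposal is correct. Note that the paper itself does not prove this statement: it imports it as a fact, citing Theorem~5 of Fontes et al.\ (\emph{Relative discrepancy does not separate information and communication complexity}), and your argument---transferring feasible solutions between the two programs by rescaling $\phi$ (by $\delta$ for the upper bound, by its total mass for the lower bound), choosing $\kappa=1/(\delta(\tfrac12-\alpha))$ resp.\ $\alpha=\max\{0,\tfrac12-\phi(\mathcal X\times\mathcal Y)/(2\kappa)\}$, handling the hypothesis via the trivial regime $\operatorname{prt}^{+,\mu}_{4\epsilon}(f)\leq 2$, and closing the positive-$\alpha$ case with the negative-discriminant quadratic---is essentially the argument of that reference; all the key verifications (feasibility in both branches $\phi(R)<\delta$ and $\phi(R)\geq\delta$, the use of $\epsilon<1/4$ when $\alpha=0$, and the inequality $\Phi^2-4\epsilon\kappa\Phi+8\epsilon^2\kappa^2\geq 0$) check out.
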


\begin{fact}[{\cite[Theorem~8]{fontes2016relative}}]\label{def:pprt-ardisc}
For any distribution $\mu$, for any function $f: \operatorname{supp}(f) \rightarrow \{0, 1\}$ and $\epsilon \in (0 ,1/4)$ such that either $\operatorname{ardisc}^\mu_\epsilon(f) \geq 1$ or $\operatorname{pprt}^\mu_{4\epsilon}(f) > 2$, we have $\frac{\epsilon}{2}\operatorname{pprt}^\mu_{4\epsilon} \leq \operatorname{ardisc}^\mu_\epsilon(f)\leq \operatorname{pprt}^\mu_\epsilon(f)$. 
\end{fact}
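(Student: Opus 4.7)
The plan is to mirror the proof of Fact~\ref{fact:prt+-rdisc}, which establishes the non-adaptive analogue $\frac{\epsilon}{2}\operatorname{prt}^{+,\mu}_{4\epsilon}(f)\le\operatorname{rdisc}^\mu_\epsilon(f)\le\operatorname{prt}^{+,\mu}_\epsilon(f)$, while carrying the ``partition index'' $P$ through the entire argument. The structural observation driving the proof is that the adaptive relative discrepancy permits a separate witness distribution $\phi^P$ for \emph{each} partition $P$, which is exactly the degree of freedom that the public-coin partition bound spends on its distribution $a_P$ over partitions. So the two programs are dual perspectives on the same object, and LP duality should convert one to the other up to a small loss in the error parameter.

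For the upper bound $\operatorname{ardisc}^\mu_\epsilon(f)\le\operatorname{pprt}^\mu_\epsilon(f)$, I would start from the dual of $\operatorname{pprt}^\mu_\epsilon(f)$ written out in the definition: it produces, for every partition $P$, nonnegative slacks $v_{z,R}$ (with $\sum_{(z,R)\in P}v_{z,R}\ge\lambda$) and a weighting $\phi_{x,y}$ such that the per-rectangle inequality $\sum_{(x,y)\in R,(x,y,z)\in f}\mu_{x,y}+\sum_{(x,y)\in R}\phi_{x,y}+v_{z,R}\le 1$ holds. Reading these slacks as the ``defect'' of a rectangle against $\mu(R\cap f^{-1}(z))$, I would, for each $P$, normalize $\phi$ restricted to the heavy rectangles of $P$ into a distribution $\phi^P$, and read off $\alpha,\delta$ from the dual values: the $(1/2-\alpha)\phi^P(R)\le\mu(R\cap f^{-1}(z))$ constraint of $\operatorname{ardisc}$ is exactly the per-rectangle dual inequality of $\operatorname{pprt}$ after rearrangement.

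For the harder direction $\frac{\epsilon}{2}\operatorname{pprt}^\mu_{4\epsilon}(f)\le\operatorname{ardisc}^\mu_\epsilon(f)$, I would begin with a feasible $\operatorname{ardisc}^\mu_\epsilon$ tuple $(\alpha,\delta,\{\phi^P\})$ achieving value $\frac{1}{\delta}(\tfrac{1}{2}-\alpha-\epsilon)$. For each partition $P$, the constraint $(1/2-\alpha)\phi^P(R)\le\mu(R\cap f^{-1}(z))$ for every $(z,R)\in P$ with $\phi^P(R)\ge\delta$ says exactly that the ``light'' rectangles carry mass at most $\delta$ under $\phi^P$, while the heavy ones are dominated by the $\mu$-mass of $f^{-1}(z)$. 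Summing this over $(z,R)\in P$, combined with $\sum_{x,y}\phi^P_{x,y}=1$, yields a lower bound of $\tfrac{1}{2}-\alpha-|P|\delta$ on $\sum_{(z,R)\in P}\mu(R\cap f^{-1}(z))$-type quantity; this is then plugged into the primal of $\operatorname{pprt}^\mu_{4\epsilon}$ after Markov-thresholding away the light rectangles. The boost from error $\epsilon$ to $4\epsilon$ (losing a factor $\epsilon/2$) is exactly what is needed to absorb the light-rectangle mass when converting a ``soft'' discrepancy-style inequality into a ``hard'' covering guarantee; this step is the standard Markov trick that already appears in Fact~\ref{fact:prt+-rdisc}.

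The main obstacle will be bookkeeping: the $\operatorname{pprt}$ program constrains a joint distribution $(a_P, w_{z,R})$ with the consistency relation $w_{z,R}=\sum_{P\ni(z,R)}a_P$, so the passage between per-partition witnesses $\phi^P$ and a single mixing distribution $a_P$ must be done carefully to preserve feasibility. The two auxiliary hypotheses ``$\operatorname{ardisc}^\mu_\epsilon(f)\ge 1$ or $\operatorname{pprt}^\mu_{4\epsilon}(f)>2$'' are there precisely to rule out degenerate regimes where the threshold $\delta$ or the rounding in the Markov step would swallow the bound; I would handle these by a case split, invoking the respective hypothesis to ensure the extracted $\phi^P$ puts enough mass on heavy rectangles for the per-partition inequality to survive averaging.
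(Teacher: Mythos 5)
First, a point of comparison: the paper does not prove this statement at all — it is imported verbatim as a Fact with a citation to {\cite[Theorem~8]{fontes2016relative}} — so there is no in-paper proof to match against, and your proposal has to be judged against the cited source. Your high-level plan is the right one and is indeed how the cited result is obtained: Theorem~8 there is the partition-indexed analogue of Theorem~5 (our Fact~\ref{fact:prt+-rdisc}), the per-partition witnesses $\phi^P$ of $\operatorname{ardisc}$ correspond to the public-coin/partition structure of $\operatorname{pprt}$, the $\delta$-threshold into heavy and light rectangles accounts for the $4\epsilon$ versus $\epsilon$ and the $\epsilon/2$ loss, and the side hypotheses only exclude degenerate parameter regimes.

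However, as written your two constructions are attached to the wrong inequalities, and one of them cannot work. Both $\operatorname{ardisc}^\mu_\epsilon$ and the distributional $\operatorname{pprt}^\mu_\epsilon$ are maximization-type ``hardness certificate'' programs (the latter is, with $\mu$ fixed, the dual side of the covering LP displayed in the paper, exactly as $\operatorname{prt}^\mu_\epsilon$ in Definition~\ref{def:partition_bound_2} is for $\operatorname{prt}_\epsilon$). Hence $\operatorname{ardisc}^\mu_\epsilon(f)\leq \operatorname{pprt}^\mu_\epsilon(f)$ must be proved by converting an \emph{arbitrary} feasible $\operatorname{ardisc}$ tuple $(\alpha,\delta,\{\phi^P\})$ into a feasible $\operatorname{pprt}^\mu_\epsilon$ solution of no smaller value (rescale $\phi^P/\delta$, take $\kappa\approx 1/(\delta(\tfrac12-\alpha))$, let light rectangles be absorbed by the additive slack), whereas $\frac{\epsilon}{2}\operatorname{pprt}^\mu_{4\epsilon}(f)\leq \operatorname{ardisc}^\mu_\epsilon(f)$ must be proved by converting an optimal $\operatorname{pprt}^\mu_{4\epsilon}$ (dual) solution into an $\operatorname{ardisc}^\mu_\epsilon$ witness. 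You do the opposite in both cases: mapping a $\operatorname{pprt}$ dual solution to an $\operatorname{ardisc}$ witness only \emph{lower}-bounds $\operatorname{ardisc}$, so it cannot yield the upper bound; and ``plugging'' an $\operatorname{ardisc}$ tuple ``into the primal of $\operatorname{pprt}^\mu_{4\epsilon}$'' would require manufacturing a distribution over correctly labelled partitions — essentially a zero-communication protocol succeeding with probability $1-4\epsilon$ — out of a hardness certificate $\{\phi^P\}$ that carries no correctness information about $f$; there is no such construction, and the cited proof does not attempt one. Finally, the intermediate estimate you propose in that direction, a bound of the shape $\tfrac12-\alpha-|P|\delta$ obtained by summing the heavy-rectangle constraint over $(z,R)\in P$, is vacuous because $|P|$ is unbounded; the genuine argument never sums over a partition but works constraint by constraint with the rescaled variables, the heavy/light split entering only to absorb the additive $1$ (heavy case) or to make the constraint trivial (light case). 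So the ingredients are recognizably those of the cited proof, but the direction of each transformation is inverted and the harder direction, as described, does not go through.
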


\begin{fact}[{\cite[Proposition~2.16]{nolin2020communication}}]\label{fact:discmu-wregmu}
Let $f:\mathcal X \times \mathcal Y \rightarrow \{0, 1\}$. Then, $\operatorname{wereg}^\mu(f) = \operatorname{disc}^\mu(f)$. 
\end{fact}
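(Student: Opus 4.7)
The plan is to reduce the equality $\operatorname{wreg}^\mu(f) = \operatorname{disc}^\mu(f)$ to a per-rectangle identity, since both measures are defined as a maximum over the same family of rectangles $R$. Concretely, the alternative formulation of weak regularity reads $\operatorname{wreg}^\mu(f) = \max_{R,z}[\mu(R) - |\mathcal Z|\,\mu(R\cap f^{-1}(z))]$, and since $f$ is total with $\mathcal Z=\{0,1\}$ we have $|\mathcal Z|=2$ and $\mu(R) = \mu(R\cap f^{-1}(0)) + \mu(R\cap f^{-1}(1))$ for every rectangle $R$. So the strategy is to substitute this into the weak-regularity expression and then perform the inner maximum over $z$ analytically.

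I would then plug the completeness identity into the per-rectangle quantity to get, for each fixed $R$ and $z\in\{0,1\}$,
\begin{equation*}
\mu(R) - 2\,\mu(R\cap f^{-1}(z)) \;=\; \mu(R\cap f^{-1}(1-z)) - \mu(R\cap f^{-1}(z)).
\end{equation*}
These two expressions (for $z=0$ and $z=1$) are negatives of each other, so taking the inner maximum over $z\in\{0,1\}$ yields
\begin{equation*}
\max_{z\in\{0,1\}}\bigl[\mu(R) - 2\,\mu(R\cap f^{-1}(z))\bigr] \;=\; \bigl|\mu(R\cap f^{-1}(0)) - \mu(R\cap f^{-1}(1))\bigr|,
\end{equation*}
which is precisely $\operatorname{disc}_\mu(R,f)$ per Definition~\ref{def:disc}. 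Taking the max over rectangles $R$ on both sides then gives $\operatorname{wreg}^\mu(f) = \max_R \operatorname{disc}_\mu(R,f) = \operatorname{disc}^\mu(f)$, as required.

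There is essentially no obstacle to overcome: the whole argument is a one-line algebraic rewriting that uses only (i) the total Boolean nature of $f$, which makes $f^{-1}(0)$ and $f^{-1}(1)$ complementary in $\mathcal X\times\mathcal Y$, and (ii) the coincidence that $|\mathcal Z|=2$ turns $\mu(R)-|\mathcal Z|\mu(R\cap f^{-1}(z))$ into the signed difference defining discrepancy. The one subtlety worth flagging is that the Boolean assumption is genuinely used here: for $|\mathcal Z|>2$ the identity fails in general, so this equivalence is specific to the Boolean setting considered throughout the paper.
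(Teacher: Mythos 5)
Your argument is correct: with $|\mathcal Z|=2$ and $\mu(R)=\mu(R\cap f^{-1}(0))+\mu(R\cap f^{-1}(1))$, maximizing $\mu(R)-2\mu(R\cap f^{-1}(z))$ over $z\in\{0,1\}$ gives exactly $\lvert\mu(R\cap f^{-1}(0))-\mu(R\cap f^{-1}(1))\rvert=\operatorname{disc}_\mu(R,f)$ for each rectangle, and taking the maximum over $R$ yields the claimed equality. The paper itself states this fact only by citation (Proposition~2.16 of the referenced work) without reproducing a proof, and your one-line algebraic verification is precisely the standard argument behind it, including the correct observation that the Boolean (two-output) assumption is what makes the identity work.
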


\begin{fact}[{\cite[Proposition~2.18]{nolin2020communication}}]\label{fact:wreg-wprt}\label{fact:wreg_wprt}
Let $f: \mathcal X \times \mathcal Y \rightarrow \mathcal Z$ be a total function. Then, $\operatorname{wprt}^{\mu}_\epsilon(f) \geq \frac{1 - \epsilon\cdot \frac{\mathcal Z|}{|\mathcal Z| - 1}}{\operatorname{wreg}^\mu(f)}$. 
\end{fact}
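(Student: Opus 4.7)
The plan is to lower-bound $\operatorname{wprt}^\mu_\epsilon(f)$ by exhibiting an explicit feasible solution to the linear program in its definition. Write $w = \operatorname{wreg}^\mu(f)$ and observe that, via the alternative form given in the definition of weak regularity, the hypothesis is equivalent to the pointwise inequality $\mu(R \cap f^{-1}(z)) \ge (\mu(R) - w)/|\mathcal Z|$ holding for every rectangle $R$ and every output $z \in \mathcal Z$. Intuitively this says that no rectangle can concentrate its $\mu$-mass too heavily away from any single output class, and it is precisely this slack that I will convert into a large LP value by a careful choice of $(\phi, \kappa)$.

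For the candidate primal, I would start with the proportional ansatz $\phi_{xy} = \mu_{xy}/w$ together with a tunable constant $\kappa$. The pointwise constraint $\phi_{xy} \le \kappa \mu_{xy}$ then reduces to $\kappa \ge 1/w$, and after substituting weak regularity the main rectangle constraint becomes $\mu(R)\bigl(1/w - \kappa/|\mathcal Z|\bigr) + \kappa w/|\mathcal Z| \le 1$. Forcing the coefficient of $\mu(R)$ to be nonpositive yields the simple choice $\kappa = |\mathcal Z|/w$ and the clean but weaker objective $(1 - \epsilon|\mathcal Z|)/w$. To sharpen this to the claimed $(1 - \epsilon|\mathcal Z|/(|\mathcal Z|-1))/w$ I would refine by asymmetrizing $\phi$ against a minimizing output: fix $z^\ast$ achieving $\min_z \mu(f^{-1}(z))$, set $\phi_{xy} = (1/w)\mu_{xy}\,\mathbf{1}[f(x,y) \neq z^\ast]$, and pair it with a correspondingly smaller $\kappa$. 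The global inequality $\mu(f^{-1}(z^\ast)) \ge (1-w)/|\mathcal Z|$, itself a consequence of weak regularity applied to $R = \mathcal X \times \mathcal Y$, then controls $\phi = \sum_{xy}\phi_{xy}$ from below and generates the extra $|\mathcal Z|-1$ factor in the denominator of the $\epsilon$-term.

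The main obstacle will be the simultaneous verification of $\phi(R) - \kappa\,\mu(R\cap f^{-1}(z)) \le 1$ across all $(R,z)$ pairs: because weak regularity is useful only for the particular $z$ being constrained, after substitution one is left with an affine function of $\mu(R)$ whose value must stay below $1$ at both endpoints $\mu(R)=0$ and $\mu(R)=1$, and this two-sided requirement is what pins down the constants. In the asymmetric construction one must further split into the subcases $z = z^\ast$ and $z \neq z^\ast$, applying the weak-regularity lower bound once to $\mu(R \cap f^{-1}(z^\ast))$ and once to $\mu(R \cap f^{-1}(z))$; the two cases give the same upper bound on the constraint slack, which is what makes the construction close. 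Once these feasibility checks are in place, evaluating $\phi - \epsilon\kappa$ and collecting terms yields the stated inequality directly.
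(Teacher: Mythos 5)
The paper itself offers no proof of this fact (it is quoted from Nolin, Proposition~2.18), so your proposal stands on its own. The first half is fine: $\phi_{xy}=\mu_{xy}/w$, $\kappa=|\mathcal Z|/w$ is indeed feasible (the rectangle constraint becomes exactly $1$ after substituting $\mu(R\cap f^{-1}(z))\ge(\mu(R)-w)/|\mathcal Z|$) and gives $(1-\epsilon|\mathcal Z|)/w$; since $\tfrac{|\mathcal Z|}{|\mathcal Z|-1}=2=|\mathcal Z|$ when $|\mathcal Z|=2$, this already proves the stated bound in the Boolean case. The genuine gap is the sharpening step for $|\mathcal Z|\ge 3$, and it is not a matter of unfinished feasibility checks: the asymmetric ansatz $\phi_{xy}=\tfrac1w\mu_{xy}\,\mathbf{1}[f(x,y)\neq z^\ast]$ has total mass $\phi=\tfrac{1-\mu(f^{-1}(z^\ast))}{w}$, so when the output classes are balanced ($\mu(f^{-1}(z))=1/|\mathcal Z|$ for every $z$) the objective is at most $\tfrac{|\mathcal Z|-1}{|\mathcal Z|\,w}$, no matter how you choose $\kappa\ge 0$ (which only subtracts). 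This is strictly below the target $\tfrac{1-\epsilon|\mathcal Z|/(|\mathcal Z|-1)}{w}$ for every $\epsilon<\tfrac{|\mathcal Z|-1}{|\mathcal Z|^2}$, so the construction cannot reach the claimed value even before any constraint is examined.

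Nor can the shortfall be repaired by keeping $\phi_{xy}=\mu_{xy}/w$ and merely lowering $\kappa$ to $\tfrac{|\mathcal Z|}{(|\mathcal Z|-1)w}$: the constraint at $R=\mathcal X\times\mathcal Y$ together with a class of measure $\tfrac{1-w}{|\mathcal Z|}$ evaluates to $\tfrac{|\mathcal Z|-2+w}{(|\mathcal Z|-1)w}$, which exceeds $1$ whenever $|\mathcal Z|\ge 3$ and $w<1$. The factor $\tfrac{|\mathcal Z|}{|\mathcal Z|-1}$ in the statement comes from the \emph{complementary} use of weak regularity — applying it to the $|\mathcal Z|-1$ other classes to get the upper bound $\mu(R\cap f^{-1}(z))\le\tfrac{\mu(R)+(|\mathcal Z|-1)w}{|\mathcal Z|}$, which is what drives the rectangle-counting argument yielding $\tfrac{1-\epsilon|\mathcal Z|/(|\mathcal Z|-1)}{w}$ — but in the $\operatorname{wprt}$ program the term $\mu(R\cap f^{-1}(z))$ enters the constraints with a negative sign, so this inequality cannot be exploited by any rescaling of $\mu$ of the kind you propose, and your write-up never bridges that mismatch. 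Note also how delicate the constant is under the definitions as transcribed here: the single constraint at the full rectangle with the lightest class forces $\phi-\epsilon\kappa\le 1$ whenever $\min_z\mu(f^{-1}(z))\le\epsilon$, so any correct proof of the stated inequality for $|\mathcal Z|\ge 3$ must hew closely to the exact formulation in the cited source rather than to the simple scaling argument sketched here.
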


In the model of zero-communication protocols and inspired by the physics setup of Bell experiments, Laplante, Lerays and Roland~\cite{laplante2012classical} introduced a new lower bound on quantum communication complexity that subsumes the factorization norm~{\cite[Theorem~1]{linial2007lower}} when the players share a quantum sate. 
With slight modifications from~{\cite[Definition~4]{JK21}}, we define the quantum partition bound below. 


\begin{defn}[Quantum partition bound~{\cite[Definition~4]{JK21}}]
Let $f: \mathcal X \times \mathcal Y \rightarrow \{0, 1\}$ and let $\epsilon > 0$. Let $\perp$ be a special character. The quantum partition bound of $f$ with error $\epsilon$, denoted by $\operatorname{eff}^*_\epsilon(f)$, is given by the optimal value of the following non-linear program: 
\begin{align*}
    & \max \quad \quad \frac{1}{\eta} \\
    &  \text{st} \quad \sum_{(x, y): f(x, y)=1} q(f(x, y) | x, y)\geq (1 - \epsilon)\eta, & \forall x, y \in \mathcal X \times \mathcal Y\\
    & \quad \quad  \sum_{f(x, y)\in \{0, 1\}} q(f(x, y)| x, y) = \eta & \forall x, y\in \mathcal X \times \mathcal Y\\
    & \quad \quad \quad q(f(x,y)| x, y) \in \mathcal Q(\{0, 1\} \cup \perp, \mathcal X, \mathcal Y)
\end{align*}
\end{defn}


Based on the fact that the communication complexity of predicates generalizes the communication complexity of (total or partial) functions, the next result, originally stated for predicates, holds true. 
\begin{lem}\label{fact:eff*-gamma2alpha}
Let $f: \mathcal X \times \mathcal Y \rightarrow \{-1, 1\}$ and let $\epsilon > 0$. When $\alpha = \frac{1 + 2\epsilon}{1 - 2\epsilon}$, we have $\operatorname{eff}^*_\epsilon(f)\geq \gamma^\alpha_2(f)$.  
\end{lem}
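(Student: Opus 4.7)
The plan is to take an optimal quantum zero-communication protocol achieving $\operatorname{eff}^*_\epsilon(f) = 1/\eta$ and extract from it a real matrix $M'$ that sign-approximates $f$ within factor $\alpha = \frac{1+2\epsilon}{1-2\epsilon}$ and satisfies $\gamma_2(M') \leq 1/\eta$. This directly yields $\gamma_2^\alpha(f) \leq \gamma_2(M') \leq 1/\eta = \operatorname{eff}^*_\epsilon(f)$.

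First, I would fix a protocol attaining the value $\eta$ in the defining program: Alice and Bob share an entangled state $|\psi\rangle$ and apply POVM measurements $\{A_{x,+1}, A_{x,-1}, A_{x,\perp}\}$ and $\{B_{y,+1}, B_{y,-1}, B_{y,\perp}\}$ respectively, with the convention that the joint output is non-$\perp$ with probability $\eta$, and conditioned on a non-$\perp$ output, the protocol errs with probability at most $\epsilon$. Let $q(z \mid x,y)$ be the marginal probability of joint output $z \in \{-1,+1\}$ on input $(x,y)$, so that $q(+1 \mid x,y) + q(-1 \mid x,y) = \eta$, and if $f(x,y) = b$ then $q(b \mid x,y) \geq (1-\epsilon)\eta$ and $q(-b \mid x,y) \leq \epsilon\eta$.

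Next, define the signed bias matrix $N_{x,y} := q(+1 \mid x,y) - q(-1 \mid x,y)$ and set
\begin{equation*}
M'_{x,y} := \frac{N_{x,y}}{(1-2\epsilon)\,\eta}.
\end{equation*}
Then for every input $(x,y)$ with $f(x,y) = b \in \{-1,+1\}$, the bounds above give $b \cdot N_{x,y} \in [(1-2\epsilon)\eta, \eta]$, hence
\begin{equation*}
1 \;\leq\; f(x,y)\, M'_{x,y} \;\leq\; \frac{1}{1-2\epsilon} \;\leq\; \frac{1+2\epsilon}{1-2\epsilon} = \alpha,
\end{equation*}
so $M'$ is a feasible $\alpha$-approximant of $f$ and thus $\gamma_2^\alpha(f) \leq \gamma_2(M')$.

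The remaining and main step is to show $\gamma_2(N) \leq 1 - 2\epsilon \leq 1$, which gives $\gamma_2(M') \leq 1/\eta$. For this I would invoke a Tsirelson-type decomposition for quantum correlations: writing $C_x := A_{x,+1} - A_{x,-1}$ and $D_y := B_{y,+1} - B_{y,-1}$, these are Hermitian contractions (since $A_{x,+1} + A_{x,-1} \leq I$ and similarly for Bob), and $N_{x,y} = \langle \psi | \, C_x \otimes D_y \, | \psi \rangle$. Standard vectorization of the state $|\psi\rangle$ together with the contraction property yields unit vectors $u_x, v_y$ in a real Hilbert space with $N_{x,y} = \langle u_x, v_y \rangle$ and $\|u_x\|_2, \|v_y\|_2 \leq 1$, which by the definition of $\gamma_2$ in Equation~(\ref{eqn:gamma_2}) gives $\gamma_2(N) \leq 1$. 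Dividing by $(1-2\epsilon)\eta$ establishes $\gamma_2(M') \leq 1/((1-2\epsilon)\eta) \cdot (1-2\epsilon) = 1/\eta$, completing the proof. Strictly speaking one may be forced to use the slacker bound $\gamma_2(M') \leq 1/\eta$, which still delivers the desired inequality.

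The main obstacle is the Tsirelson-style step: producing the inner-product decomposition of the signed-bias matrix with unit-norm factors, and verifying that the bias, rather than the full probability matrix, inherits the $\gamma_2 \leq 1$ bound. Once this representation is in hand, the scaling bookkeeping that turns $\alpha = \frac{1+2\epsilon}{1-2\epsilon}$ into the approximation guarantee on $M'$ is straightforward.
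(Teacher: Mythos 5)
Your proposal is a correct construction up to its final bookkeeping, and it takes a genuinely different route from the paper. The paper's proof is a two-line reduction: it passes from $f$ to the predicate $\vee_f$, switches to a distributional version $\gamma_2^{\alpha,\mu}$, and invokes the cited result ({\cite[Theorem~3]{JK21}}) to conclude $\operatorname{eff}^*_\epsilon(f)\geq(1-2\epsilon)\,\gamma_2^\alpha(f)$. You instead unpack an optimal quantum zero-communication protocol and reprove that subsumption from scratch: the scaling $M'=N/((1-2\epsilon)\eta)$ of the bias matrix, the verification $1\leq f(x,y)M'_{x,y}\leq \frac{1}{1-2\epsilon}\leq\alpha$, and the Tsirelson-style factorization $N_{x,y}=\langle\psi|\,C_x\otimes D_y\,|\psi\rangle$ with Hermitian contractions (take $u_x=(C_x\otimes I)|\psi\rangle$, $v_y=(I\otimes D_y)|\psi\rangle$, then pass to real and imaginary parts) giving $\gamma_2(N)\leq 1$ are all sound. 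What your route buys is a self-contained argument that does not rely on the black-box citation; what the paper's route buys is brevity and freedom from having to fix a precise model of the $\operatorname{eff}^*$ protocol (your identity for $N$ implicitly assumes the joint output is the product of the two local $\pm1$ outputs with abort if either player aborts, which is the intended reading but should be said).

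The one genuine slip is in your last step. Your Tsirelson argument yields $\gamma_2(N)\leq 1$, not the $\gamma_2(N)\leq 1-2\epsilon$ you assert, so what you actually get is $\gamma_2(M')\leq \frac{1}{(1-2\epsilon)\eta}$, i.e.\ $\operatorname{eff}^*_\epsilon(f)\geq(1-2\epsilon)\,\gamma_2^\alpha(f)$; the bound $\gamma_2(M')\leq 1/\eta$ that you call ``slacker'' is in fact the stronger one and is not established. So as written you prove the lemma only up to the multiplicative factor $1-2\epsilon$. It is worth noting, however, that the paper's own proof derives exactly the same $(1-2\epsilon)\,\gamma_2^\alpha(f)$ bound, so your argument matches what the paper actually proves (and is entirely sufficient for the constant-equivalence purpose of the hierarchy); neither your write-up nor the paper's closes the cosmetic factor-of-$(1-2\epsilon)$ gap to the inequality as literally stated.
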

\begin{proof}
For a total function $f: \mathcal X \times \mathcal Y \rightarrow \{-1, 1\}$, let $\vee_f$ denote the predicate on $(\{-1, 1\})^2 \times (\mathcal X \times \mathcal Y)$ given by $\vee(ab, xy) = 1 \Leftrightarrow a\cdot b = f(x, y)$. Note that for a function $f$, one can express $\gamma^\alpha_2(f)$ as  $\max_\mu \gamma^{\alpha, \mu}_2(f)$, where $\gamma^{\alpha, \mu}_2(f)$ is a distributional version of $\gamma^\alpha_2(f)$. Then, we have 
\begin{align*}
    & \operatorname{eff}^*_\epsilon(f) \\
    & \geq \max_\mu \operatorname{eff}^{*, \mu}_\epsilon(\vee_f) \geq (1 - 2\epsilon) \max_\mu \gamma^{\alpha, \mu}_2(f)  = (1 - 2\epsilon) \gamma^\alpha_2(f),
\end{align*} 
where the second inequality is due to~{\cite[Theorem~3]{JK21}}. 
\end{proof}


Given a function $f:\mathcal X\times \mathcal Y\rightarrow \{0, 1\}$, and error parameter $\epsilon$, the \emph{distributional information complexity} of $f$ with error $\epsilon$, denoted as $IC^\mu_\epsilon(f)$, is defined to be the infimum of the information cost (see Section~\ref{sec:preliminaries}) over all (randomized) protocols $\pi$ that achieve an error of at most $\epsilon$ with respect to the distribution $\mu$~{\cite[Definition~3.2]{braverman2012interactive}}. Mathematically speaking, 
\begin{align}\label{def:3.2}
IC^{\mu}_\epsilon(f) \coloneqq \displaystyle \inf_{\pi: \Pr_{(x, y)\sim \mu}[\pi(x, y) \neq f(x, y)]\leq \epsilon} IC^{\mu}(\pi). 
\end{align} 
For a function $f$ with error $\epsilon$, its \emph{max-distributional information complexity} is given by~{\cite[Definition~3.3]{braverman2012interactive} 
\begin{align}\label{def:3.3}
    IC_{D, \epsilon}(f) \coloneqq \max_{\mu \text{ a distribution on } \mathcal X \times \mathcal Y} IC^{\mu}_\epsilon(f),
\end{align}
whereas its \emph{information complexity} is defined as~{\cite[Definition~3.4]{braverman2012interactive}}]
\begin{align}\label{def:3.4}
    IC_\epsilon(f) \coloneqq \inf_{\substack{\pi \text{ is a protocol with } \\ \Pr[\pi(x, y) \neq f(x, y)]\leq \epsilon \text{ for all } (x, y)}} \max_{\mu} IC^\mu (\pi). 
\end{align}

It is easy to see that the information complexity of a function lower bounds its max-distributional information complexity. 

\begin{fact}[\cite{braverman2012interactive}]\label{fact:IC-ICD}
Let $f: \mathcal X \times \mathcal Y \rightarrow \{0, 1\}$ be a Boolean function and let $\epsilon > 0$. Then, $IC_\epsilon(f) \geq IC_{D, \epsilon}(f)$. 
\end{fact}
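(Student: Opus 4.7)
The plan is to observe that the feasibility set of protocols used in the definition of $IC_\epsilon(f)$ is contained in the feasibility set used by $IC^\mu_\epsilon(f)$ for every distribution $\mu$, so that swapping the order of $\inf_\pi$ and $\max_\mu$ goes through in the easy direction. No minimax theorem is needed; only set inclusion and monotonicity of infima.

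Concretely, I would start by fixing an arbitrary protocol $\pi$ that computes $f$ with worst-case error at most $\epsilon$, i.e.\ $\Pr[\pi(x,y)\neq f(x,y)]\leq\epsilon$ for every $(x,y)\in\mathcal X\times\mathcal Y$. For any distribution $\mu$ on $\mathcal X\times\mathcal Y$, averaging gives $\Pr_{(x,y)\sim\mu}[\pi(x,y)\neq f(x,y)]\leq\epsilon$. Hence $\pi$ is a feasible candidate in the infimum defining $IC^\mu_\epsilon(f)$ in Equation~(\ref{def:3.2}), so $IC^\mu_\epsilon(f)\leq IC^\mu(\pi)$. Taking the maximum over $\mu$ on both sides yields
\begin{equation*}
IC_{D,\epsilon}(f)=\max_{\mu} IC^\mu_\epsilon(f)\;\leq\;\max_{\mu} IC^\mu(\pi).
\end{equation*}

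Since $\pi$ was an arbitrary worst-case $\epsilon$-error protocol, taking the infimum over all such $\pi$ on the right-hand side, using Equation~(\ref{def:3.4}), gives
\begin{equation*}
IC_{D,\epsilon}(f)\;\leq\;\inf_{\pi}\,\max_{\mu} IC^\mu(\pi)\;=\;IC_\epsilon(f),
\end{equation*}
which is the claimed inequality. There is no substantive obstacle here: the only subtlety is verifying that worst-case correctness implies distributional correctness against \emph{every} $\mu$ simultaneously, which lets a single protocol $\pi$ serve as a witness in all the infima defining $IC^\mu_\epsilon(f)$ at once. I would briefly note for the reader that the reverse inequality is generally false or at least nontrivial, and is precisely the content of the harder direction established by Braverman~\cite{braverman2012interactive}, captured in Fact~\ref{fact:3.5}.
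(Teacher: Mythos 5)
Your proposal is correct and follows essentially the same route as the paper's proof: both rest on the observation that a worst-case $\epsilon$-error protocol is feasible for the distributional criterion under every $\mu$, so $IC^{\mu}(\pi)\geq IC^{\mu}_\epsilon(f)$, after which one maximizes over $\mu$ and takes the infimum over $\pi$. If anything, your ordering of the quantifiers (fix $\pi$, bound $\max_\mu IC^\mu_\epsilon(f)$, then infimize over $\pi$) is stated more cleanly than the paper's own manipulation with the maximizing distributions $\mu_1,\mu_2$.
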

\begin{proof}
First, observe that for any distribution $\mu$, we have $IC^\mu(\pi) \geq IC^\mu_\epsilon(f)$. Then, taking the maximum over $\mu$ on both sides gives $\max_\mu IC^\mu(\pi) \geq \max_{\mu} IC^\mu_\epsilon(f)$. By Equations (\ref{def:3.2}) and (\ref{def:3.4}),we have
\begin{align*}
    IC_\epsilon(f) = \inf_\pi \max_\mu IC^\mu(\pi) = \inf_\pi IC^{\mu_1}(\pi) = IC^{\mu_1}_\epsilon(f),
\end{align*}
where $\mu_1$ is the distribution that maximizes $IC^{\mu}(\pi)$. From the observation made earlier, we can write 
\begin{align*}
    IC^{\mu_1}_\epsilon(f) \geq IC^{\mu_2}_\epsilon(f), 
\end{align*}
where $\mu_2$ is the distribution that maximizes $IC^{\mu}_\epsilon(f)$. By Equation (\ref{def:3.3}), we get
\begin{align*}
    C_{\mu_2}(f, \epsilon)  = \max_{\mu}  C_{\mu}(f, \epsilon) = IC_D(f, \epsilon). 
\end{align*}
\end{proof}

Together with the result below, max-distributional information complexity characterizes the information complexity.

\begin{fact}[{\cite[Theorem~3.5]{braverman2012interactive}}]\label{fact:3.5}
Let $f:\mathcal X\times \mathcal Y \rightarrow \{0, 1\}$ be any function, and let $\epsilon \geq 0$ be an error parameter. For all $0 < \alpha < 1$, we have $IC_{\epsilon/\alpha}\left(f\right)\leq \frac{IC_{D, \epsilon}(f)}{1 - \alpha}$. 
\end{fact}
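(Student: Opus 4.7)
The plan is to construct a single randomized protocol whose worst-case (per-input) error is within a factor $1/\alpha$ of the distributional error and whose worst-case information cost is within a factor $1/(1-\alpha)$ of $IC_{D,\epsilon}(f)$. The natural candidate is a convex combination of near-optimal distributional protocols, with the mixing weights produced by a minimax/LP-duality argument.

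First I would fix a slack $\delta>0$ and, for each distribution $\mu$ on $\mathcal X\times\mathcal Y$, invoke the definition of $IC_{D,\epsilon}(f)$ to obtain a protocol $\pi_\mu$ with distributional error at most $\epsilon$ under $\mu$ and $IC^\mu(\pi_\mu)\le IC_{D,\epsilon}(f)+\delta$. The key preliminary observation is that if the players share public coins $M$ used to sample $\mu\sim\lambda$ and then run $\pi_\mu$, the resulting mixture $\pi^{\lambda}$ satisfies $IC^\nu(\pi^{\lambda})\le \mathbb{E}_{\mu\sim\lambda}[IC^\nu(\pi_\mu)]$ for every distribution $\nu$. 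This follows because $M$ is independent of $(X,Y)$ under $\nu$, so by the chain rule $I(T;X\mid Y)\le I(T,M;X\mid Y)=I(M;X\mid Y)+I(T;X\mid Y,M)=\mathbb{E}_{\mu\sim\lambda}[I^\nu_{\pi_\mu}(T;X\mid Y)]$, and the analogous identity holds for $I(T;Y\mid X)$.

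The core step is a minimax argument on the zero-sum game in which the designer picks $\lambda$ (equivalently, the mixture $\pi^{\lambda}$) and the adversary picks either a test distribution $\nu$ or an input $(x,y)$. The designer's cost combines $\mathbb{E}_{\mu\sim\lambda}[IC^\nu(\pi_\mu)]$ with a Lagrange penalty for $\mathbb{E}_{\mu\sim\lambda}[\mathrm{err}(\pi_\mu\mid x,y)]$ exceeding $\epsilon/\alpha$. Restricting attention to protocols of bounded communication length renders the designer's strategy space effectively compact and convex, so von Neumann's theorem applies. Against a pure $\nu$ the designer plays $\lambda=\delta_\nu$ and obtains value at most $IC_{D,\epsilon}(f)+\delta$; against a pure input $(x,y)$ a Markov-type argument applied to the distributional error bound shows that for any $\mu$ the set of inputs on which $\pi_\mu$ has per-input error exceeding $\epsilon/\alpha$ has $\mu$-mass at most $\alpha$. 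Conditioning the mixture on the complement of this ``bad'' event and applying the chain rule is what ultimately produces the $1/(1-\alpha)$ factor in the information-cost bound; letting $\delta\to 0$ recovers the stated inequality.

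The main obstacle is this final packaging step: combining the two constraints into a single game that admits a clean minimax analysis, and tracking the $1/(1-\alpha)$ blowup through the conditioning argument. Justifying strong duality requires some care because the space of all protocols is not finite-dimensional, but restricting to protocols of length at most $L=L(\delta)$ and letting $L\to\infty$ after the bound is established sidesteps this cleanly.
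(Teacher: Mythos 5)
The paper does not prove this fact itself (it is quoted from Braverman), so I am comparing your sketch to the cited argument, whose overall shape you have correctly guessed: take near-optimal distributional protocols $\pi_\mu$, mix them with public coins, and invoke a minimax theorem; your observation that $IC^{\nu}(\pi^{\lambda})\le\mathbb{E}_{\mu\sim\lambda}[IC^{\nu}(\pi_\mu)]$ is correct. The genuine gap is in the minimax step: you only verify the designer's response to \emph{pure} adversary strategies (a single test distribution $\nu$, or a single input $(x,y)$), but to bound the value of the game -- which is what gives you one mixture $\lambda^*$ that is simultaneously good against all pure challenges -- you must exhibit good responses to \emph{mixed} adversary strategies. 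That is exactly where the difficulty of this theorem sits: a mixed challenge couples an information test against some $\nu$ with an error test under some other distribution, and the definition of $IC_{D,\epsilon}(f)$ only supplies a protocol that is good for error and information against the \emph{same} prior. The cited proof resolves this by making the payoff a single convex combination against one prior $\mu$, namely $(1-\alpha)\,IC^{\mu}(\pi)/(IC_{D,\epsilon}(f)+\delta)+\alpha\,\operatorname{err}_{\mu}(\pi)/\epsilon$, responding to any $\mu$ with $\pi_\mu$ (value $\le 1$), and using the key lemma that $\mu\mapsto IC^{\mu}(\pi)$ is concave for a fixed protocol (chain rule with a selector variable $S$ indicating which component of the mixture was drawn: $I(T;X\mid Y,S)=I(T;X\mid Y)-I(T;S\mid Y)\le I(T;X\mid Y)$ since $I(T;S\mid X,Y)=0$). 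Concavity in $\mu$ plus linearity of the error in $\mu$ is what lets Sion/von Neumann go through; your sketch never states or proves this, and without it the exchange of quantifiers is unjustified.

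Relatedly, your account of where the $1/(1-\alpha)$ comes from -- Markov to identify a bad set of mass $\le\alpha$, then ``conditioning the mixture on the complement'' -- does not assemble into a proof: the final protocol must have information cost at most $IC_{D,\epsilon}(f)/(1-\alpha)$ against \emph{every} distribution $\nu$, and conditioning a particular $\mu$ on a good event says nothing about arbitrary $\nu$ (nor do you specify what the protocol does on the discarded inputs). In the correct argument no conditioning is needed: once the minimax gives $\lambda^*$ with payoff at most $1$ against every prior, you take $\mu$ to be a point mass $\delta_{(x,y)}$ (where $IC^{\delta_{(x,y)}}(\pi)=0$) to read off the per-input error bound $\epsilon/\alpha$, and you take $\mu=\nu$ arbitrary, dropping the nonnegative error term, to read off $IC^{\nu}(\pi^{\lambda^*})\le (IC_{D,\epsilon}(f)+\delta)/(1-\alpha)$; the factors $1/\alpha$ and $1/(1-\alpha)$ are just the weights in the payoff, and $\delta\to 0$ finishes. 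Your compactness remark (bounded-length protocols, then a limit) is fine and matches the standard treatment.
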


Kerenidis et al. proved that the information complexity of a function $f$ is bounded from below by the relaxed partition bound~\cite{kerenidis2015lower}. 
\begin{fact}[{\cite[Theorem~1]{kerenidis2015lower}}]\label{fact:IC-prt}
    There exists a positive constant $C$ such that for all functions $f:\{0, 1\}^n \times \{0, 1\}^n \rightarrow \{0, 1\}$. all $\epsilon, \delta\in (0, \frac{1}{2}]$ and all distributions $\mu$, we have $IC^{\mu}_\epsilon(f) \geq \frac{\delta^2}{C}\cdot \left(\log \bar{\operatorname{prt}}^{\mu}_{\epsilon + 3\delta}(f) - \log 2 \right) - \delta$. 
\end{fact}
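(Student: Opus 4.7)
The plan is to prove the inequality by converting a protocol of low information cost into a feasible solution of the linear program defining $\bar{\operatorname{prt}}^\mu_{\epsilon+3\delta}(f)$, thus upper bounding the relaxed partition bound by an expression in $IC^\mu_\epsilon(f)$.

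First, I would pick a protocol $\pi$ computing $f$ with distributional error at most $\epsilon$ under $\mu$ whose internal information cost $IC^\mu(\pi)$ lies within $\delta$ of $IC^\mu_\epsilon(f)$, which exists since $IC^\mu_\epsilon(f)$ is defined as an infimum. Next, I would invoke an internal protocol compression theorem in the spirit of Braverman and Rao: any protocol with internal information cost $I$ under $\mu$ can be simulated (up to total variation $\delta$ on $\mu$) by a protocol $\pi'$ of fixed communication depth $c = O(I/\delta^2)$ using public and private randomness. Hence $\pi'$ computes $f$ with error at most $\epsilon + 2\delta$ under $\mu$.

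Second, I would build an explicit feasible LP solution from $\pi'$. Enumerating all triples $(r, s, \tau)$ consisting of public randomness $r$, private randomness $s$, and a transcript $\tau$ of depth $c$, each triple determines a combinatorial rectangle $R_{r,s,\tau}$ of inputs reaching $\tau$ under the fixed randomness and an associated output $z_{r,s,\tau}$. Define $p_{R,z}$ by aggregating the weights $2^{-|r|-|s|-c}$ by $(R,z)$; these weights sum to $1$. For each input $(x,y)$, exactly one transcript is reached for each fixed $(r,s)$, so the total mass of rectangles covering $(x,y)$ equals $\eta := 2^{-c}$, saturating the capacity constraint. For the coverage constraint, the $\mu$-weighted probability of the protocol producing the correct label is at least $1 - \epsilon - 2\delta$, yielding correct-covered mass at least $(1 - \epsilon - 2\delta)\eta \geq (1 - (\epsilon+3\delta))\eta$, so the triple $(\eta, p, R, z)$ is feasible at error $\epsilon + 3\delta$.

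Third, this feasibility gives $\bar{\operatorname{prt}}^\mu_{\epsilon+3\delta}(f) \leq 1/\eta = 2^c$, hence $\log \bar{\operatorname{prt}}^\mu_{\epsilon+3\delta}(f) \leq c \leq C' \cdot (IC^\mu_\epsilon(f) + \delta)/\delta^2 + O(1)$. Rearranging and absorbing the $+\delta$ slack together with a constant into the $-\log 2$ term delivers the stated inequality $IC^\mu_\epsilon(f) \geq (\delta^2/C)\bigl(\log \bar{\operatorname{prt}}^\mu_{\epsilon+3\delta}(f) - \log 2\bigr) - \delta$ for a suitable universal $C$. The main obstacle is invoking the appropriate compression theorem: the factor $\delta^{-2}$ in the target inequality requires a compression with quadratic-in-$1/\delta$ blowup, and the compressed protocol must retain the fixed-depth and randomness-averaging structure needed for the LP solution. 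A secondary subtlety is that inputs outside $\mathcal I$ contribute to the LP through $\sum_z p_{R,z}$ rather than $p_{R, f(x,y)}$, but since this only increases the left-hand side of the coverage constraint, the argument goes through unchanged.
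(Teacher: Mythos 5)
There is a genuine gap at the heart of your argument: the compression theorem you invoke in step one does not exist. No result ``in the spirit of Braverman and Rao'' lets you simulate an arbitrary protocol of internal information cost $I$ by a protocol of communication depth $c=O(I/\delta^2)$ with only a $\delta$ loss in distributional error. Braverman--Rao-type compression either works round-by-round (so the communication depends on the number of rounds, not just on $I$), or, in the Barak--Braverman--Chen--Rao form, gives communication roughly $\sqrt{I\cdot CC(\pi)}$ up to logarithmic factors; a bound of the form $D^{\mu}_{\epsilon+O(\delta)}(f)\leq O\bigl(IC^{\mu}_{\epsilon}(f)/\delta^{2}\bigr)$, which is exactly what your step one asserts, is in fact false in general, since Ganor, Kol and Raz exhibit an exponential separation between information complexity and distributional communication complexity. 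Once that step is removed, what remains of your proof (steps two and three) is the standard conversion of a $c$-bit public/private-coin protocol into a feasible point of the relaxed-partition LP with $\eta=2^{-c}$, i.e.\ essentially a re-derivation of $D^{\mu}_{\epsilon'}(f)\geq\log\bar{\operatorname{prt}}^{\mu}_{\epsilon'}(f)$ (Fact~\ref{fact:Dmu-prt}); this cannot bridge from information cost to the relaxed partition bound on its own.

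The actual proof of Kerenidis et al.\ avoids communication compression entirely: they compress a low-information protocol into a \emph{zero-communication protocol with abort}. Using shared randomness, the two players perform correlated sampling of candidate transcripts and each locally tests consistency of the messages they would have sent, aborting otherwise; the analysis shows the non-abort probability can be kept at least roughly $2^{-O(I/\delta^{2})}$ on every input, while conditioned on non-abort the output is $O(\delta)$-close (under $\mu$) to that of the original protocol. For each outcome of the shared randomness, the set of inputs on which neither player aborts is a rectangle, so this zero-communication protocol is precisely a feasible solution of the relaxed-partition LP with $\eta\approx 2^{-O(I/\delta^{2})}$, giving $\log\bar{\operatorname{prt}}^{\mu}_{\epsilon+3\delta}(f)\leq O\bigl(IC^{\mu}_{\epsilon}(f)/\delta^{2}\bigr)$ without ever producing a low-communication protocol (repeating the zero-communication protocol until non-abort would cost $2^{\Theta(I/\delta^2)}$, which is why there is no conflict with the information-versus-communication separation). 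If you want to repair your write-up, you should replace your step one by this zero-communication simulation lemma; your LP bookkeeping in steps two and three then needs to be adapted so that $\eta$ is the guaranteed non-abort probability rather than $2^{-c}$ for a fixed depth $c$.
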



On the other hand, Braverman and Weinstein showed that information complexity can also be lower bounded by the discrepancy bound~\cite{braverman2016discrepancy}. A direct product theorem implicit in the work of ~\cite{bar2004information} gives an upper bound on information complexity in terms of the distributional communication complexity. 
\begin{fact}[{\cite[Theorem~1.1]{braverman2016discrepancy}}]\label{fact:discmu-ICmu}
Let $f: \mathcal X \times \mathcal Y \rightarrow \{0, 1\}$ be a Boolean function and let $\mu$ be any distribution on $\mathcal X \times \mathcal Y$. Then, $IC^{\mu}_{1/10}(f) \geq \Omega\left(\log \left(\frac{1}{\operatorname{disc}^{\mu}(f)}\right)\right)$. 
\end{fact}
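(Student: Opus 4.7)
The plan is to exploit the fundamental \emph{rectangle property} of communication protocols: for any randomized protocol $\pi$ on inputs $(X,Y) \sim \mu$, the conditional distribution of $(X,Y)$ given the transcript $T=\tau$ factorizes as a product distribution $p_\tau(x) q_\tau(y)$, i.e., it is a ``rectangular'' distribution on $\mathcal{X} \times \mathcal{Y}$. I would combine this with the defining property of $\operatorname{disc}^\mu(f)$, which caps the signed bias of $f$ over any rectangle with respect to $\mu$.

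First, fix an (essentially optimal) protocol $\pi$ computing $f$ with error at most $1/10$ under $\mu$ and internal information cost $I = IC^\mu(\pi) = I(T; X \mid Y) + I(T; Y \mid X)$. Correctness implies that there is an output bit $z \in \{0,1\}$ (say $z=1$) such that the event $E = \{\mathrm{out}(T)=1\}$ separates $1$-inputs from $0$-inputs with constant bias:
\[
\Pr_\mu[E,\, f=1] - \Pr_\mu[E,\, f=0] \;\geq\; \Omega(1).
\]
Decomposing across transcripts and using the rectangle property yields
\[
\Pr_\mu[E,f=1] - \Pr_\mu[E,f=0] \;=\; \sum_{\tau:\,\mathrm{out}(\tau)=1} \Pr_\mu[T=\tau]\cdot\bigl(\Pr_{\mu_\tau}[f=1] - \Pr_{\mu_\tau}[f=0]\bigr),
\]
where each weighted summand is the bias of $f$ against a nonnegative rank-one (rectangular) function on $\mathcal{X}\times\mathcal{Y}$. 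Invoking the definition of $\operatorname{disc}^\mu(f)$ for each such ``soft rectangle'' bounds the right-hand side by $\operatorname{disc}^\mu(f)$ times the total effective rectangle mass contributed by the transcript distribution.

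The main obstacle, and the heart of the proof, is converting the internal information budget $I$ into a bound on this total rectangle mass. A naive bound that sums discrepancy contributions transcript by transcript loses a factor equal to the support size of $T$, which can be $2^{C}$ for communication cost $C$ rather than $2^{O(I)}$. The correct argument exploits the fact that low internal information means the family $\{(\Pr_\mu[T=\tau],\, \mu_\tau)\}_\tau$ behaves like a ``soft partition'' of effective complexity $2^{O(I)}$ rather than $2^{O(C)}$. Concretely, one uses the chain rule to write $I(T;X\mid Y) + I(T;Y\mid X)$ as an expectation over $\tau$ of KL-divergences comparing $\mu_\tau$ to the marginal-preserving product distribution, then applies a smoothing step (a.k.a. ``rectangle rounding'') that discards a small fraction of the mass so that the remaining ensemble can be handled via a dual formulation of discrepancy. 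This is precisely the Braverman--Weinstein machinery, and it yields the clean bound
\[
\Omega(1) \;\leq\; 2^{O(I)} \cdot \operatorname{disc}^\mu(f),
\]
which rearranges to $I \geq \Omega(\log(1/\operatorname{disc}^\mu(f)))$ as claimed.

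The reason this step is delicate is that replacing internal with external information in the argument would give a trivial bound (external information is typically much larger and effectively counts bits of communication); the gain comes from the interplay between the \emph{product} structure of each $\mu_\tau$ and the \emph{bilinear} structure encoded by discrepancy, so the information-theoretic manipulation has to respect both. I would expect to handle small technicalities (such as rounding/truncating low-probability transcripts and managing the $1/10$ error rate vs.\ the $\Omega(1)$ bias rate) with standard averaging arguments.
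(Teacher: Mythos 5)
The paper does not prove this statement at all: it is imported verbatim as Theorem~1.1 of Braverman and Weinstein~\cite{braverman2016discrepancy}, so the only ``proof'' in the paper is the citation. Your sketch correctly identifies the right framework from that source --- conditioned on a transcript $\tau$ the input distribution $\mu_\tau$ is a product (soft-rectangle) distribution, the per-transcript bias of $f$ against such a rank-one weight is at most $\operatorname{disc}^{\mu}(f)$ after an extreme-point argument, and the whole difficulty is to pay only $2^{O(I)}$ rather than $2^{O(C)}$ for the number of effective rectangles. You also correctly explain why the naive transcript-by-transcript summation only reproduces the classical bound $R^{\operatorname{pub}}_\epsilon(f)\geq \Omega(\log(1/\operatorname{disc}^{\mu}(f)))$ and why internal (not external) information is essential.

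However, as a self-contained argument there is a genuine gap exactly at the step you call the heart of the proof. The passage from ``internal information cost $I$'' to ``total effective rectangle mass $2^{O(I)}$'' is asserted by appeal to ``chain rule, smoothing, rectangle rounding, the Braverman--Weinstein machinery,'' which is precisely the content of the theorem being proved, not a derivation of it. The actual argument requires real work: one must write the advantage as an expectation over transcripts of biases against soft rectangles, truncate the transcripts whose pointwise information content (log-likelihood ratios of $p_\tau,q_\tau$ against the marginals) exceeds a multiple of $I$, show by a Markov-type argument that the discarded transcripts carry only a small constant fraction of the advantage, and then bound the surviving contribution by $2^{O(I)}\cdot\operatorname{disc}^{\mu}(f)$ using the bounded likelihood ratios; the delicacy is that internal information conditions on the other player's input, so the truncation and the bilinear discrepancy bound have to be set up compatibly. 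None of this is carried out in your proposal, and no elementary averaging argument substitutes for it. So the proposal is a faithful road map to the cited proof, but it does not constitute an independent proof: if you intend it as one, you must either reproduce the Braverman--Weinstein simulation/truncation lemma in full or explicitly cite it as an external ingredient, which is what the paper itself does.
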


\begin{fact}[{\cite[Theorem~2.4]{BarakBCR10}},~\cite{bar2004information}]\label{fact:ICpi_Dmu}
For every $f, \epsilon, \mu$, there exists a protocol $\pi$ that computes $f$ on inputs drawn from $\mu$ with probability of error at most $\epsilon$ and communication at most $D^{\mu^k}_\rho(f^k)$ such that $IC^{\mu}(\pi) \leq \frac{2D^{\mu^k}_\epsilon(f^k)}{k}$. 
\end{fact}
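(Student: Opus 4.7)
The plan is to prove this direct-sum statement by an embedding argument in the style of Bar-Yossef--Jayram--Kumar--Sivakumar~\cite{bar2004information} and Barak--Braverman--Chen--Rao~\cite{BarakBCR10}. Let $c := D^{\mu^k}_\epsilon(f^k)$ and let $\tau$ be a deterministic protocol attaining this bound on $\mu^k$, with transcript $T(X_1\ldots X_k, Y_1\ldots Y_k)$ of length at most $c$. I would build a randomized protocol $\pi$ for a single copy of $f$ on $\mu$ by embedding the real input $(x,y)$ at a uniformly random coordinate of an otherwise-simulated $k$-fold instance, and then argue via the chain rule that $\pi$ inherits a $1/k$-fraction of $\tau$'s information content on each side.

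The construction is as follows. On input $(x,y)\sim\mu$, Alice and Bob first use public randomness to pick $J\in[k]$ uniformly. For each $i\neq J$ they jointly simulate an independent copy $(X_i,Y_i)\sim\mu$ via a public/private split: a public coin $B_i\in\{A,B\}$ is flipped; if $B_i=A$, the value of $X_i$ is drawn publicly from the $\mathcal X$-marginal and Bob draws $Y_i$ from the conditional $\mu(\cdot \mid X_i)$ with his private randomness, and symmetrically if $B_i=B$. Alice sets her $J$-th input coordinate to $x$ and Bob sets his to $y$. They then run $\tau$ on the assembled length-$k$ inputs and output the $J$-th coordinate of its output. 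Correctness follows because the assembled input has distribution $\mu^k$ and $\tau$ errs with probability at most $\epsilon$, so the average per-coordinate error is at most $\epsilon$; averaging over the uniform choice of $J$ makes the error of $\pi$ at most $\epsilon$, and the communication of $\pi$ equals that of $\tau$, which is $c$.

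For the information cost bound, let $\Pi$ denote the transcript of $\pi$ together with the public randomness. Applied to $k$ independent copies, the chain rule gives
\[
c \;\ge\; I\!\left(T;\,X_1\ldots X_k \,\middle|\, Y_1\ldots Y_k\right) \;=\; \sum_{j=1}^{k} I\!\left(T;\,X_j \,\middle|\, Y_1\ldots Y_k,\,X_{<j}\right),
\]
and symmetrically with the roles of $X$ and $Y$ interchanged. The public/private sampling scheme is designed precisely so that conditioning on $(Y_{\neq j}, X_{<j}, \text{public coins})$ is equivalent, from Alice's vantage point at the embedded coordinate, to conditioning on $Y_j=y$ together with the public randomness of $\pi$: whatever piece of $(X_i,Y_i)$ Alice lacks for $i\neq j$ can be regenerated from $Y_i$ and the public coin $B_i$, and the missing part of the $X$-coordinates is exactly what the conditioning in the chain rule supplies. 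Averaging the displayed inequality over the uniform choice of $J$ therefore yields $I(\Pi; X\mid Y)\le c/k$, and symmetrically $I(\Pi; Y\mid X)\le c/k$, so $IC^\mu(\pi)\le 2c/k$ as required.

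The main obstacle I anticipate is this last reduction, which identifies the $j$-th chain-rule summand with the leakage of $\pi$ at the embedded coordinate. The subtlety is that $\mu$ need not factor as a product, so one cannot simply have each party sample its own half of each non-special coordinate with its private randomness; the public coin $B_i$ above is exactly what repairs this, ensuring that the piece of $(X_i,Y_i)$ a player does not sample can nonetheless be treated as part of the conditioning that the other player already has. Once this equivalence is set up cleanly, the remainder is bookkeeping on conditional mutual information and the averaging over $J$.
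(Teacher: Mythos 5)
You are reconstructing the right argument in outline --- the paper itself gives no proof of this fact, it simply cites it as~\cite[Theorem~2.4]{BarakBCR10}, and the cited proof is indeed the ``embed one copy at a random coordinate, simulate the rest by a public/private split, charge the leakage to a chain-rule decomposition of the $k$-fold transcript'' argument you describe. However, your specific sampling scheme (an independent coin $B_i$ per coordinate deciding which half is public) breaks the step that makes the accounting work. The $j$-th chain-rule summand $I(T;X_j\mid X_{<j},Y_1\ldots Y_k)$ is conditioned on \emph{all} of $Y$ and on the initial segment $X_{<j}$, whereas in your protocol the public information besides $Y_j$ at the embedded coordinate is a uniformly random half of the other coordinates' $X$-parts together with the \emph{complementary} half of their $Y$-parts. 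These conditioning sets do not coincide, and the only available monotonicity runs the wrong way: since the other coordinates are independent of $X_j$, adding more of their variables to the conditioning can only increase $I(T;X_j\mid\cdot)$, so your terms are not dominated by the chain-rule terms. In fact the per-side bound you assert, $I(\Pi;X\mid Y)\le c/k$, is false for your scheme: take $k=3$, $\mu$ a product distribution with $X_i$ uniform bits and $Y_i$ irrelevant, and let $T$ be the one-bit transcript $T=X_2$ if $X_1=0$ and $T=X_3$ if $X_1=1$ (Alice just sends this bit, so $c=1$). A direct computation gives $\sum_{j}\mathbb{E}_{b}\,I\bigl(T;X_j\mid Y_j,U_{-j}(b)\bigr)=\tfrac{13-3\log_2 3}{8}\approx 1.03>1=c$, so the X-side leakage of your $\pi$ exceeds $c/k$. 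Your final claim $IC^\mu(\pi)\le 2c/k$ is therefore not established by the argument given (it is unclear whether it even holds for this sampling scheme, and you offer no route to it other than the per-side claim).

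The fix is the position-based split used in the cited works, not per-coordinate coins: publicly sample $J$, $X_{<J}$ and $Y_{>J}$; Bob privately samples $Y_i\sim\mu(\cdot\mid X_i)$ for $i<J$ and Alice privately samples $X_i\sim\mu(\cdot\mid Y_i)$ for $i>J$ (this already handles non-product $\mu$, which is the issue you introduced the coins $B_i$ to address). Then the $j$-th X-side leakage term is $I(T;X_j\mid X_{<j},Y_{\ge j})$, which is at most $I(T;X_j\mid X_{<j},Y)$ because $Y_{<j}$ is independent of $X_j$ given $(X_{<j},Y_{\ge j})$, so the sum telescopes to $I(T;X\mid Y)\le c$; symmetrically $\sum_j I(T;Y_j\mid X_{\le j},Y_{>j})\le I(T;Y\mid X)\le c$, giving $IC^{\mu}(\pi)\le 2c/k$ after averaging over $J$. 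Your correctness and communication analysis is fine and carries over verbatim to that split.
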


Intuitively,  the communication cost $CC(\pi)$ of a protocol $\pi$ is an upper bound on its information cost over any distribution $\mu$, since there cannot be more information leaked than the length of the messages exchanged. In other words, $IC^{\mu}(\pi) \leq CC(\pi)$. Another way of seeing this is by the observation that the information cost of a protocol $\pi$ is always bounded by its length $|\pi|$, and therefore communication complexity is always an upper bound on information complexity. 
\begin{fact}\label{fact:IC-CC}
For any function $f$ and error parameter $\epsilon > 0$, $IC_\epsilon(f) \leq R^{\operatorname{pub}}_\epsilon(f)$. 
\end{fact}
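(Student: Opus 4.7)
The plan is to take an optimal $\epsilon$-error public-coin randomized protocol $\pi$ for $f$, of length $|\pi| = R^{\operatorname{pub}}_\epsilon(f)$, and show that for every input distribution $\mu$ the internal information cost satisfies $IC^\mu(\pi) \leq |\pi|$. Together with the definition $IC_\epsilon(f) = \inf_{\pi} \max_\mu IC^\mu(\pi)$ (infimum over protocols computing $f$ with error at most $\epsilon$ on every input), plugging in this particular $\pi$ gives $IC_\epsilon(f) \leq \max_\mu IC^\mu(\pi) \leq |\pi| = R^{\operatorname{pub}}_\epsilon(f)$, as desired.

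To establish the per-protocol bound $IC^\mu(\pi) \leq |\pi|$, I would argue round by round. Let $T(X,Y) = (M_1, M_2, \ldots, M_r)$ be the transcript, where $M_i$ is the $i$-th message and $T_{<i}$ denotes the messages before round $i$. By the chain rule for conditional mutual information,
\begin{align*}
I(T;X|Y) + I(T;Y|X) = \sum_{i=1}^{r} \Bigl( I(M_i; X \mid Y, T_{<i}) + I(M_i; Y \mid X, T_{<i}) \Bigr).
\end{align*}
Condition on the public randomness $R$ throughout (which leaves both conditional mutual informations unchanged on average when $R$ is independent of $(X,Y)$). If Alice sends $M_i$, then $M_i$ is a deterministic function of $X$, $T_{<i}$, and $R$, so $I(M_i; Y \mid X, T_{<i}, R) = 0$, while $I(M_i; X \mid Y, T_{<i}, R) \leq H(M_i \mid Y, T_{<i}, R) \leq |M_i|$. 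Symmetrically, when Bob sends $M_i$, the Alice-term vanishes and the Bob-term is at most $|M_i|$. Summing over rounds, $IC^\mu(\pi) \leq \sum_i |M_i| \leq |\pi|$.

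The main subtlety, and the step I would be most careful with, is the handling of public randomness: one must verify that conditioning on $R$ does not inflate the information quantities (it does not, because $R$ is independent of $(X,Y)$, so $I(T; X \mid Y) = I(T; X \mid Y, R)$ and likewise for the symmetric term), while simultaneously using $R$ to justify the vanishing of one of the two terms in each round via the determinism of messages given the sender's input, the prior transcript, and $R$. Once this bookkeeping is done, the bound $IC^\mu(\pi) \le |\pi|$ holds uniformly in $\mu$, which is exactly what is needed to upper bound the outer maximum in the definition of $IC_\epsilon(f)$.
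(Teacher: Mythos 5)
Your proposal is correct and takes essentially the same route as the paper: the paper also compares the min--max definitions of $R^{\operatorname{pub}}_\epsilon(f)$ and $IC_\epsilon(f)$ and invokes the per-protocol bound $IC^{\mu}(\pi)\leq CC(\pi)$, which it states without proof and which your chain-rule, round-by-round argument simply fills in. One cosmetic caveat: the equality $I(T;X\mid Y)=I(T;X\mid Y,R)$ holds when the transcript is taken to include the public coins (the usual convention, as in the cited reference); if $R$ is not part of $T$ one only gets $I(T;X\mid Y)\leq I(T;X\mid Y,R)$, but that is exactly the direction your bound needs, so the argument stands either way.
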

\begin{proof}
The public-coin randomized communication complexity of a function $f$ with error $\epsilon$  is defined as \begin{align}\label{eqn:Rpub}
R^{\operatorname{pub}}_\epsilon(f) = \displaystyle\min_\pi \max_{(x, y)\sim \mu} CC(\pi(x, y)).
\end{align}
According to Equation~(\ref{def:3.3}), its corresponding information complexity is 
\begin{align}\label{eqn:IC}
IC_\epsilon(f) = \displaystyle\inf_\pi\max_\mu IC^{\mu}(\pi)).
\end{align} 
Comparing Equations~(\ref{eqn:Rpub}) and~(\ref{eqn:IC}), we get $IC_\epsilon(f) \leq R^{\operatorname{pub}}_\epsilon(f)$ using the fact that information cost of a protocol lower bounds its communication cost.  
\end{proof}

Chakrabarti et al.~\cite{chakrabarti2001informational} were the first to define the external information cost explicitly for the purpose of studying communication complexity. The external information cost is the amount of information an observer necessarily has to learn when the parties perform the computation, and hence is always smaller than the communication complexity. The natural analogues of Equations (\ref{def:3.2}), (\ref{def:3.3}) and (\ref{def:3.4}) for external information complexity are as follows: given a function $f:\mathcal X\times \mathcal Y\rightarrow \{0, 1\}$, and error parameter $\epsilon$, the \emph{distributional external information complexity} is given by~{\cite[Definition~3.13]{braverman2012interactive}}
\begin{align}\label{def:3.13}
    IC^{\operatorname{ext}, \mu}_\epsilon(f) \coloneqq \displaystyle \inf_{\pi: \Pr_{(x, y)\sim \mu}[\pi(x, y)
    \neq f(x, y)]\leq \epsilon} IC^{\operatorname{ext}, \mu}(\pi), 
\end{align}  
while its \emph{max-distributional external information complexity} is defined as~{\cite[Definition~3.14]{braverman2012interactive}}
\begin{align}\label{def:3.14}
    IC^{\operatorname{ext}}_{D, \epsilon} (f) \coloneqq \max_{\mu \text{ a distribution on } \mathcal X \times \mathcal Y} IC^{\operatorname{ext}, \mu}_\epsilon(f). 
\end{align}
Finally, the \emph{external information complexity} of $f$ with error $\epsilon$ is~{\cite[Definition~3.15]{braverman2012interactive}}
\begin{align}\label{def:3.15}
    IC^{\operatorname{etc}}_\epsilon(f) \coloneqq \inf_{\substack{\pi \pi \text{ is a protocol with } \\ \Pr[\pi(x, y) \neq f(x, y)]\leq \epsilon \text{ for all } (x, y)}} \max_{\mu} IC^{\operatorname{ext}, \mu} (\pi). 
\end{align}

Braverman~\cite{braverman2012interactive} showed that the (internal) information complexity is always upper bounded by the external information complexity.  
\begin{fact}[{\cite[Proposition~3.12]{braverman2012interactive}}]\label{fact:3.12}
For every protocol $\pi$ and distribution $\mu$, $IC^{\operatorname{ext}, \mu}(\pi) \geq IC^{\mu}(\pi)$. 
\end{fact}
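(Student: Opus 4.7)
My plan is to decompose the transcript $T$ into its individual messages $M_1, M_2, \ldots, M_k$ and analyze each message's contribution separately using the chain rule for mutual information. By the chain rule, $I(XY; T) = \sum_i I(XY; M_i \mid M_{<i})$, and symmetrically $I(X; T \mid Y) = \sum_i I(X; M_i \mid Y, M_{<i})$ and $I(Y; T \mid X) = \sum_i I(Y; M_i \mid X, M_{<i})$, where $M_{<i} = (M_1,\ldots,M_{i-1})$. Thus it suffices to establish the inequality round-by-round: for each $i$, $I(XY; M_i \mid M_{<i}) \geq I(X; M_i \mid Y, M_{<i}) + I(Y; M_i \mid X, M_{<i})$.

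The key tool is the structural ``rectangle property'' of communication protocols: at each round, only one party speaks, and that party's message is a function of its own input, the transcript so far, and its own private randomness. Without loss of generality, suppose Alice speaks at round $i$; then $M_i$ is conditionally independent of $Y$ given $(X, M_{<i})$, so $I(Y; M_i \mid X, M_{<i}) = 0$ and $H(M_i \mid X, Y, M_{<i}) = H(M_i \mid X, M_{<i})$. Expanding the remaining two mutual informations into entropies, Alice's round-$i$ contribution to the external cost is $H(M_i \mid M_{<i}) - H(M_i \mid X, M_{<i})$, while her contribution to the internal cost is $H(M_i \mid Y, M_{<i}) - H(M_i \mid X, M_{<i})$. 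The common subtracted term cancels, and the difference collapses to $H(M_i \mid M_{<i}) - H(M_i \mid Y, M_{<i}) = I(M_i; Y \mid M_{<i}) \geq 0$. A symmetric calculation handles Bob's rounds, yielding $I(M_i; X \mid M_{<i}) \geq 0$ as the per-round gap.

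Summing across all rounds gives $IC^{\operatorname{ext},\mu}(\pi) - IC^\mu(\pi) = \sum_{i:\,\text{Alice speaks}} I(M_i; Y \mid M_{<i}) + \sum_{i:\,\text{Bob speaks}} I(M_i; X \mid M_{<i})$, a sum of non-negative terms. The only delicate point I anticipate is cleanly handling private and public randomness so that the conditional-independence statement above holds verbatim---either by conditioning throughout on the shared/private random strings or by absorbing them into the message definitions. Once that bookkeeping is in place, the argument is a routine chain-rule computation and, crucially, invokes no structural property of $\mu$, so the bound holds for arbitrary (not necessarily product) distributions exactly as stated.
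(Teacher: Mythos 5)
The paper does not prove this statement itself --- it is imported as a cited fact from Braverman's interactive information complexity paper --- so there is no in-paper argument to compare against; your proof is correct and is essentially the standard argument from that source. The chain-rule decomposition $I(XY;T)=\sum_i I(XY;M_i\mid M_{<i})$, the observation that when Alice speaks $I(Y;M_i\mid X,M_{<i})=0$ and hence $H(M_i\mid X,Y,M_{<i})=H(M_i\mid X,M_{<i})$, and the resulting per-round gap $I(M_i;Y\mid M_{<i})\ge 0$ (symmetrically $I(M_i;X\mid M_{<i})$ for Bob's rounds) go through verbatim. The bookkeeping issue you flag is genuine but resolved exactly as you suggest: condition on the public coins (or append them to the transcript) to reduce to the private-coin case, where the factorization $p(x,y,r_A,r_B,m)=\mu(x,y)\,p(r_A)\,p(r_B)\,q_A(m;x,r_A)\,q_B(m;y,r_B)$ yields $R_A\perp Y\mid (X,M_{<i})$ for arbitrary, not necessarily product, $\mu$.
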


Fact~\ref{fact:3.12} implies the following relationship between internal and external information complexity. 

\begin{fact}[\cite{braverman2012interactive}]\label{fact:ext-int}
Let $f$ be a function. For all distributions $\mu$ ,  $IC^{\operatorname{ext}, \mu}_\epsilon(f) \geq IC^{\mu}_\epsilon(f), \quad  IC^{\operatorname{ext}}_{D, \epsilon}(f) \geq IC_{D, \epsilon}(f), \quad IC^{\operatorname{ext}}_\epsilon(f) \geq IC_\epsilon (f)$.
\end{fact}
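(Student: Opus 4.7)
The plan is to derive all three inequalities as direct consequences of Fact~\ref{fact:3.12}, i.e., the protocol-level inequality $IC^{\operatorname{ext},\mu}(\pi) \geq IC^{\mu}(\pi)$, by carefully matching the infima and maxima in the definitions of the function-level complexity measures. The key observation is that each of the three inequalities has the same protocol-level pointwise inequality at its core; the only thing that changes is the order in which one takes $\inf_\pi$ and $\max_\mu$. Because both $\inf$ and $\max$ are monotone operations, the inequality is preserved when applied to either side of Fact~\ref{fact:3.12}.

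First, for the distributional case, fix a distribution $\mu$, and consider the set $\mathcal P_{\mu, \epsilon}$ of protocols $\pi$ that compute $f$ on inputs drawn from $\mu$ with error at most $\epsilon$. By Fact~\ref{fact:3.12}, for every $\pi \in \mathcal P_{\mu,\epsilon}$ we have $IC^{\operatorname{ext},\mu}(\pi) \geq IC^{\mu}(\pi)$. Taking the infimum over $\pi \in \mathcal P_{\mu,\epsilon}$ on both sides and using Equations~(\ref{def:3.2}) and (\ref{def:3.13}) gives $IC^{\operatorname{ext},\mu}_\epsilon(f) \geq IC^{\mu}_\epsilon(f)$, which is the first inequality.

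Next, for the max-distributional case, take the maximum over $\mu$ on both sides of the first inequality. Since the max of a larger function dominates the max of a smaller one, Equations~(\ref{def:3.3}) and (\ref{def:3.14}) immediately yield $IC^{\operatorname{ext}}_{D,\epsilon}(f) \geq IC_{D,\epsilon}(f)$, which is the second inequality.

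Finally, for the worst-case case, consider any protocol $\pi$ that computes $f$ with error at most $\epsilon$ on \emph{every} input $(x,y)$, and apply Fact~\ref{fact:3.12} to get $IC^{\operatorname{ext},\mu}(\pi) \geq IC^{\mu}(\pi)$ for each $\mu$. Taking the maximum over $\mu$ preserves the inequality, and then taking the infimum over such $\pi$ and invoking Equations~(\ref{def:3.4}) and (\ref{def:3.15}) yields $IC^{\operatorname{ext}}_\epsilon(f) \geq IC_\epsilon(f)$. There is no real obstacle here; the only thing to be careful about is bookkeeping, namely verifying that the protocol classes in the $\inf$'s on each side are the same (they are: both are indexed by the same error constraint, the only difference being \emph{which} information-theoretic quantity is being measured). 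Hence the three inequalities follow uniformly from Fact~\ref{fact:3.12}.
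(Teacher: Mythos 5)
Your proposal is correct and matches the paper's own route: the paper derives Fact~\ref{fact:ext-int} directly from the protocol-level inequality of Fact~\ref{fact:3.12}, exactly the monotonicity-of-$\inf$-and-$\max$ bookkeeping you spell out. Nothing is missing.
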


The external information analogues of Facts~\ref{fact:IC-ICD} and~\ref{fact:3.5} are as below.

\begin{fact}[\cite{braverman2012interactive}]\label{fact:ICext-ICDext}
Let $f:\mathcal X \times \mathcal Y \rightarrow \{0, 1\}$ and let $\epsilon > 0$. Then, $IC^{\operatorname{ext}}_\epsilon(f) \geq IC^{\operatorname{ext}}_{D, \epsilon}(f)$. 
\end{fact}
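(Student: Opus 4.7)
The plan is to mimic the proof of \Cref{fact:IC-ICD} verbatim, replacing every occurrence of the internal information cost $IC^\mu(\pi)$ by the external information cost $IC^{\operatorname{ext},\mu}(\pi)$. The structural reason the argument transfers is that nothing in the proof of \Cref{fact:IC-ICD} used any property of internal information cost beyond (i) its definition as an expectation-style quantity with respect to a distribution $\mu$, and (ii) the observation that a worst-case protocol for $f$ is, in particular, a protocol that errs with probability at most $\epsilon$ under every distribution $\mu$. Both of these carry over to the external setting, since $IC^{\operatorname{ext},\mu}_\epsilon(f)$ is defined in \eqref{def:3.13} by the same infimum-over-protocols-correct-under-$\mu$ template as $IC^{\mu}_\epsilon(f)$, and $IC^{\operatorname{ext}}_\epsilon(f)$ is defined in \eqref{def:3.15} by the same $\inf_\pi \max_\mu$ template as $IC_\epsilon(f)$ in \eqref{def:3.4}.

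Concretely, I would proceed in three steps. First, fix any protocol $\pi$ that computes $f$ with worst-case error at most $\epsilon$, i.e., $\Pr[\pi(x,y)\neq f(x,y)]\leq\epsilon$ for all $(x,y)$. Then for every distribution $\mu$ on $\mathcal X\times\mathcal Y$, $\pi$ also has $\mu$-average error at most $\epsilon$, so $\pi$ is feasible in the infimum defining $IC^{\operatorname{ext},\mu}_\epsilon(f)$, yielding
\begin{equation*}
IC^{\operatorname{ext},\mu}(\pi)\ \geq\ IC^{\operatorname{ext},\mu}_\epsilon(f).
\end{equation*}
Second, take the maximum over $\mu$ on both sides and apply \eqref{def:3.14} on the right to get
\begin{equation*}
\max_\mu IC^{\operatorname{ext},\mu}(\pi)\ \geq\ \max_\mu IC^{\operatorname{ext},\mu}_\epsilon(f)\ =\ IC^{\operatorname{ext}}_{D,\epsilon}(f).
\end{equation*}
Third, take the infimum over all worst-case-$\epsilon$ protocols $\pi$ and apply \eqref{def:3.15} on the left to conclude
\begin{equation*}
IC^{\operatorname{ext}}_\epsilon(f)\ =\ \inf_\pi \max_\mu IC^{\operatorname{ext},\mu}(\pi)\ \geq\ IC^{\operatorname{ext}}_{D,\epsilon}(f).
\end{equation*}

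There is no real obstacle; the only thing one has to be mildly careful about is that the infimum in $IC^{\operatorname{ext}}_\epsilon(f)$ ranges over protocols with worst-case (pointwise) error $\leq\epsilon$, whereas the infimum in $IC^{\operatorname{ext},\mu}_\epsilon(f)$ ranges over protocols with $\mu$-average error $\leq\epsilon$, and the inequality relies on the fact that the former class is contained in the latter for every $\mu$. This containment is used in Step 1 above and is the only substantive point in the argument; everything else is bookkeeping against the definitions \eqref{def:3.13}--\eqref{def:3.15}.
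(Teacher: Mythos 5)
Your proposal is correct and follows essentially the same route the paper takes: the paper states this fact as the external analogue of Fact~\ref{fact:IC-ICD} and relies on the same $\inf_\pi\max_\mu$ versus $\max_\mu\inf_\pi$ argument, which you reproduce with the external cost substituted for the internal one. Your write-up is in fact slightly more careful than the paper's internal-case proof, since you make explicit the key feasibility point that a worst-case-$\epsilon$ protocol is admissible in the infimum defining $IC^{\operatorname{ext},\mu}_\epsilon(f)$ for every $\mu$.
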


\begin{fact}[{\cite[Theorem~3.16]{braverman2012interactive}}]\label{fact:3.16}
Let $f:\mathcal X\times \mathcal Y \rightarrow \{0, 1\}$ be any function, and let $\epsilon \geq 0$ be an error parameter. For each value of the parameter $0 < \alpha < 1$, we have $IC^{\operatorname{ext}}_{\epsilon/\alpha}\left(f\right)\leq \frac{IC^{\operatorname{ext}}_{D, \epsilon}(f)}{1 - \alpha}$. 
\end{fact}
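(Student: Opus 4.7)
The plan is to adapt the proof of the internal information analogue (Fact~\ref{fact:3.5}, Theorem~3.5 of~\cite{braverman2012interactive}) to the external setting, since the structural arguments transfer cleanly.  Let $I = IC^{\operatorname{ext}}_{D,\epsilon}(f)$ and fix $\alpha \in (0,1)$.  The objective is to construct a single protocol $\pi^*$ whose error on every input is at most $\epsilon/\alpha$ and whose external information cost on every distribution $\mu$ is at most $I/(1-\alpha)$.

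The first step is a minimax argument.  For each distribution $\mu$ over $\mathcal{X}\times\mathcal{Y}$ there exists, by the definition of $I$, a protocol $\pi_\mu$ with error at most $\epsilon$ under $\mu$ and $IC^{\operatorname{ext},\mu}(\pi_\mu)\leq I$.  After restricting attention to a compact convex subfamily of protocols (for instance, public-coin randomized protocols of bounded communication depth, which suffice up to arbitrary approximation), consider the two-player zero-sum game in which one player selects $\mu$ and the other selects $\pi$, with payoff given by a suitable linear combination of $IC^{\operatorname{ext},\mu}(\pi)$ and $\operatorname{err}_\mu(\pi)$.  Since both quantities are linear (respectively, affine) in $\mu$ for fixed $\pi$ and continuous on the protocol side, Sion's minimax theorem permits swapping $\sup_\mu$ and $\inf_\pi$, producing a single randomized protocol $\pi^*$, realizable via public coins, that controls both the expected error and the external information cost uniformly against every $\mu$.

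The second step converts these averaged guarantees into the claimed form.  A Markov-style truncation handles the error: applying the uniform-in-$\mu$ bound to point-mass distributions on each $(x,y)$ and absorbing an $\alpha$-fraction of bad executions yields a pointwise error bound of $\epsilon/\alpha$.  For the external information cost, since $\pi^*$ is a public-coin mixture with average external information cost at most $I$, the chain-rule identity (using $I(XY;R)=0$ when the public coin $R$ is independent of the inputs) gives $IC^{\operatorname{ext},\mu}(\pi^*)\leq \mathbb{E}_\pi[IC^{\operatorname{ext},\mu}(\pi)]\leq I$, and the $1/(1-\alpha)$ factor emerges from rescaling over the surviving $(1-\alpha)$-fraction of executions after the error truncation.

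The main obstacle is the compactness step underlying Sion's theorem: the space of all communication protocols is not compact a priori, so one must justify restricting to a bounded-communication subfamily without loss of generality.  This is standard in the information-complexity literature, but it requires an argument that protocols of very large communication can be truncated or approximated while keeping external information cost essentially unchanged.  A secondary concern is that $IC^{\operatorname{ext},\mu}(\pi)$ is not jointly convex in $(\mu,\pi)$, but the proof only relies on its linearity in $\mu$ for fixed $\pi$ together with sub-additivity under public-coin mixing, both of which hold, so the argument of~\cite{braverman2012interactive} goes through \emph{mutatis mutandis}.
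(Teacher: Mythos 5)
The paper does not prove this fact; it is imported verbatim from Braverman's paper, and the proof there of the internal analogue (the paper's Fact~\ref{fact:3.5}) is exactly the minimax argument you describe, carried over to external information cost \emph{mutatis mutandis}. So your overall route — for each $\mu$ take a witness protocol $\pi_\mu$, pass to public-coin mixtures of protocols, apply a minimax theorem to a payoff combining error and external information cost, and then extract the worst-case error bound by evaluating the uniform guarantee at point masses — is the same as in the cited source, and the observation that external information cost is exactly additive under public-coin mixing (since $I(XY;R)=0$ for input-independent coins $R$) is the right ingredient on the protocol side.

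Two points need repair, one cosmetic and one substantive. Cosmetically, $IC^{\operatorname{ext},\mu}(\pi)=I(XY;\Pi)$ is \emph{not} linear in $\mu$ for fixed $\pi$; it is concave in $\mu$ (mutual information is concave in the input distribution for a fixed channel). This is harmless, since concavity together with upper semicontinuity on the compact simplex of distributions is exactly what the minimax theorem needs on the maximizing side, but your closing sentence asserting linearity should be corrected. Substantively, the way you obtain the constants $\epsilon/\alpha$ and $1/(1-\alpha)$ — ``absorbing an $\alpha$-fraction of bad executions'' and ``rescaling over the surviving $(1-\alpha)$-fraction of executions after the error truncation'' — is not a valid step as written: no executions are discarded in the actual argument, and conditioning a protocol on a ``good-execution'' event (which is correlated with the inputs) does not yield a communication protocol, nor does it control the information cost. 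The correct mechanism is to fix the weights in the payoff in advance: with $I=IC^{\operatorname{ext}}_{D,\epsilon}(f)$, take $g(\mu,\pi)=\frac{1-\alpha}{I}\,IC^{\operatorname{ext},\mu}(\pi)+\frac{\alpha}{\epsilon}\operatorname{err}_\mu(\pi)$; for every $\mu$ the witness $\pi_\mu$ gives $g(\mu,\pi_\mu)\leq 1$, so after the minimax swap there is a single public-coin protocol $\pi^*$ with $g(\mu,\pi^*)\leq 1$ for all $\mu$. Evaluating at point masses $\delta_{(x,y)}$ and dropping the nonnegative information term gives pointwise error at most $\epsilon/\alpha$, and evaluating at an arbitrary $\mu$ and dropping the nonnegative error term gives $IC^{\operatorname{ext},\mu}(\pi^*)\leq I/(1-\alpha)$; no truncation or rescaling of executions is involved. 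Your compactness worry is real but standard and is handled as in the cited proof (restrict to protocols of bounded communication, or apply a minimax theorem needing compactness only of the distribution simplex), so the only genuine gap is the derivation of the constants just described.
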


Analogously to how the internal information cost of a protocol is upper bounded by its communication cost over any distribution on the inputs, the external information cost of a protocol is also upper bounded by its information cost, i.e.,  $IC^{\operatorname{ext}}_\mu(\pi) \leq CC(\pi)$. It follows that external information complexity is always upper bounded by communication complexity~\cite{BarakBCR10}. The proof is similar to that of Fact~\ref{fact:IC-CC}, where one replaces the definition of internal information complexity with its external analogue. 

\begin{fact}\label{fact:ICext-CC}
For any function $f$ and error parameter $\epsilon > 0$, $IC^{\operatorname{ext}}_\epsilon(f) \leq R^{\operatorname{pub}}_\epsilon(f)$. 
\end{fact}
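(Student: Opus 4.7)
The plan is to mirror the proof of Fact~\ref{fact:IC-CC} (the internal analogue) almost verbatim, simply substituting the external information quantities for the internal ones. The key ingredient is the elementary observation that the external information cost $IC^{\operatorname{ext}, \mu}(\pi) = I(XY; T(X,Y))$ of a protocol $\pi$ cannot exceed the number of bits communicated, i.e., $IC^{\operatorname{ext}, \mu}(\pi) \leq CC(\pi)$ for every distribution $\mu$. This holds because the transcript $T(X,Y)$ is a string whose length is at most the communication cost, so $H(T(X,Y)) \leq CC(\pi)$, and $I(XY; T(X,Y)) \leq H(T(X,Y))$.

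Next, I would unfold the two definitions side by side. By the definition of public-coin randomized communication complexity,
\begin{equation*}
R^{\operatorname{pub}}_\epsilon(f) \;=\; \min_{\pi} \max_{(x,y)} CC(\pi(x,y)),
\end{equation*}
where the minimum is taken over protocols $\pi$ computing $f$ with error at most $\epsilon$ on every input. By Equation~\eqref{def:3.15},
\begin{equation*}
IC^{\operatorname{ext}}_\epsilon(f) \;=\; \inf_{\pi} \max_{\mu} IC^{\operatorname{ext}, \mu}(\pi),
\end{equation*}
where the infimum ranges over the same class of protocols. Fix any optimal (or near-optimal) protocol $\pi^\star$ for $R^{\operatorname{pub}}_\epsilon(f)$. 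For every distribution $\mu$, the observation above gives $IC^{\operatorname{ext}, \mu}(\pi^\star) \leq CC(\pi^\star) \leq R^{\operatorname{pub}}_\epsilon(f)$. Taking the max over $\mu$ and then the infimum over all valid $\pi$ yields $IC^{\operatorname{ext}}_\epsilon(f) \leq R^{\operatorname{pub}}_\epsilon(f)$.

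I do not expect any genuine obstacle here: the bound $IC^{\operatorname{ext}, \mu}(\pi) \leq CC(\pi)$ is essentially by definition of mutual information and entropy, and the rest is just swapping the order of the quantifiers in the same way as in the proof of Fact~\ref{fact:IC-CC}. The only thing to be a bit careful about is that the protocol class is the same on both sides of the inequality (worst-case error $\leq \epsilon$ for every input), which is already the convention in Equations~\eqref{eqn:Rpub} and~\eqref{def:3.15}, so no subtlety arises.
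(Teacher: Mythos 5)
Your proposal is correct and follows essentially the same route as the paper: the paper also proves this by repeating the argument of Fact~\ref{fact:IC-CC} with the external quantities, using the bound $IC^{\operatorname{ext},\mu}(\pi) \leq CC(\pi)$ and comparing the definitions in Equations~(\ref{eqn:Rpub}) and~(\ref{def:3.15}). Your explicit justification via $I(XY;T(X,Y)) \leq H(T(X,Y)) \leq CC(\pi)$ simply spells out a step the paper leaves implicit.
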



The amortized communication complexity of a function $f$ with respect to distribution $\mu$, denoted as $AC^{\mu}_\epsilon(f)$, is defined as $AC^{\mu}_\epsilon(f) \coloneqq \lim_{k \rightarrow \infty} \frac{D^{\mu^k}_\epsilon(f^k)}{k}$ when the limit exists. In a recent paper, Braverman and Roo showed that information complexity exactly equals to amortized communication complexity~\cite{braverman2014information}.  
\begin{fact}[{\cite[Theorem~6.3]{braverman2014information}}]\label{fact:amortized}
For any function $f$, distribution $\mu$ and error $\epsilon$, $AC^{\mu}_\epsilon(f) = IC^{\mu}_\epsilon(f)$. 
\end{fact}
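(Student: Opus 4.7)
The plan is to establish the two directions of the equality separately, following the blueprint of Braverman and Rao~\cite{braverman2014information}.

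For the easy direction $AC^\mu_\epsilon(f) \geq IC^\mu_\epsilon(f)$, I would reuse the direct-sum argument underlying Fact~\ref{fact:ICpi_Dmu}. Fix $k$ and let $\tau$ be any deterministic protocol computing $f^k$ on $\mu^k$ with error at most $\epsilon$, so that $CC(\tau) = D^{\mu^k}_\epsilon(f^k)$. Since communication upper bounds internal information, $IC^{\mu^k}(\tau) \leq CC(\tau)$. Because the $k$ coordinates of $\mu^k$ are independent, the chain rule lets us break $IC^{\mu^k}(\tau)$ into a sum of per-coordinate internal information terms, each of which corresponds to a single-instance protocol $\tau_i$ obtained by having the players embed their true input at coordinate $i$ and sample the remaining coordinates from $\mu$ using public randomness. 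An averaging argument then yields an index $i$ with $IC^\mu(\tau_i) \leq CC(\tau)/k$, and the fact that $\tau$ has error $\leq \epsilon$ on every coordinate makes $\tau_i$ a valid $\epsilon$-error single-shot protocol. Hence $IC^\mu_\epsilon(f) \leq D^{\mu^k}_\epsilon(f^k)/k$, and taking $k \to \infty$ gives the bound.

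For the harder direction $AC^\mu_\epsilon(f) \leq IC^\mu_\epsilon(f)$, I would compress a near-optimal single-shot protocol across $k$ independent instances. Fix $\delta > 0$ and choose $\pi$ computing $f$ on $\mu$ with error at most $\epsilon$ and internal information cost at most $IC^\mu_\epsilon(f) + \delta$. Running $\pi$ independently in each of the $k$ coordinates yields a (hypothetical) protocol $\pi^k$ whose internal information cost on $\mu^k$ is exactly $k \cdot IC^\mu(\pi)$ by additivity over independent inputs. The heart of the proof is to replace this with a genuine protocol whose actual communication is $k \cdot IC^\mu(\pi) + o(k)$. I would apply the Braverman--Rao internal-compression scheme: use public randomness to sample each next message from the appropriate posterior via a correlated-sampling procedure, which on product distributions simulates $\pi^k$ with amortized cost equal to the internal information per instance plus a sub-linear additive overhead. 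Dividing by $k$, sending $k \to \infty$, and then $\delta \to 0$ yields the inequality.

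The main obstacle is the compression step. External information is relatively easy to compress with only a logarithmic per-round overhead, but internal information is fundamentally harder because the parties do not share a common view of the transcript distribution. The amortized setting is what rescues us here: across many independent copies, the correlated-sampling overhead per round becomes negligible relative to the aggregated information cost, and it is precisely this amortization, built into the definition of $AC^\mu_\epsilon(f)$, that makes the equality sharp rather than merely up to additive slack.
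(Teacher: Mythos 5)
First, a point of comparison: the paper does not prove this statement at all; it imports it verbatim as Theorem~6.3 of Braverman and Rao~\cite{braverman2014information}. So your proposal is in effect a reconstruction of the cited proof, and in outline it does follow that proof: a direct-sum/embedding argument for $AC^{\mu}_\epsilon(f)\ge IC^{\mu}_\epsilon(f)$, and round-by-round compression via correlated sampling, amortized over many independent copies, for $AC^{\mu}_\epsilon(f)\le IC^{\mu}_\epsilon(f)$ (with the per-copy error convention for $f^k$ that both the definition of $AC$ and \cite{braverman2014information} use). The hard direction as you sketch it is faithful to Braverman--Rao at this level of detail.

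There is, however, a genuine misstep in your easy direction. If the players embed their real input at coordinate $i$ and sample \emph{all} remaining coordinates with public randomness, the internal information cost of the embedded protocol is $I(\Pi;X_i\mid Y_i,X_{-i},Y_{-i})+I(\Pi;Y_i\mid X_i,X_{-i},Y_{-i})$, and these terms condition on strictly more than the chain-rule terms $I(\Pi;X_i\mid Y,X_{<i})$. Because the copies are independent, one has $I(\Pi;X_i\mid Y,X_{-i})=I(\Pi;X_i\mid Y,X_{<i})+I(X_{>i};X_i\mid Y,X_{<i},\Pi)$, so the per-coordinate costs can sum to strictly more than $IC^{\mu^k}(\tau)$ and the averaging step breaks. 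A toy illustration: Alice holds two independent uniform bits $X_1,X_2$, Bob holds nothing, and $\tau$ sends the single bit $X_1\oplus X_2$; the total internal information is $1$, yet under the all-public embedding each coordinate's cost is $1$, so averaging gives $1$ rather than the true $1/2=CC(\tau)/k$. The correct embedding (as in \cite{BarakBCR10,braverman2014information}) is the crossed one: publicly sample $X_{<i}$ and $Y_{>i}$, let Bob privately complete $Y_{<i}$ conditioned on the public $X_{<i}$ and Alice privately complete $X_{>i}$ conditioned on the public $Y_{>i}$; with this choice the Alice-side and Bob-side per-coordinate terms telescope exactly against $I(\Pi;X\mid Y)+I(\Pi;Y\mid X)$, which is what yields the tight constant. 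Relatedly, you cannot literally ``reuse'' Fact~\ref{fact:ICpi_Dmu} as stated in this paper, since it carries a factor of $2$ and would only give $IC^{\mu}_\epsilon(f)\le 2\,AC^{\mu}_\epsilon(f)$, not the equality claimed here.
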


The partition bound $\operatorname{prt}_\epsilon(f)$ was shown to have an information-theoretic interpretation in terms of R\'{e}nyi entropy of order $\infty$~\cite{prabhakaran2015r}. Let $f$ be a relation, $\mu$ be a distribution over $\mathcal X \times \mathcal Y$ and $\epsilon > 0$. The R\'{e}nyi information cost of $f$ over $\mu$ with error $\epsilon$, denoted as $IC^{\infty, \mu}_\epsilon(f)$, is the minimum external information cost of a protocol $\pi$ that computes $f$ over $\mu$ with error at most $\epsilon$~{\cite[Definition~2.32]{nolin2020communication}}, i.e. 
\begin{align*}
    IC^{\infty, \mu}_\epsilon(f) = \min_{\pi \text{ computing } f \text{ with error } \epsilon} I_{\infty}(X, Y: \pi, R).   
\end{align*}
Here, $I_{\infty}(U: V) = \log \left(\sum_{v \in \mathcal V} \sum_{u\in \mathcal U} p_{V|U} (v|u)\right)$ is the R\'{e}nyi mutual information, where $U, V$ are random variables over unitaries $\mathcal U, \mathcal V$ and $p_{V|U} (v|u) = \frac{\Pr[U = u \wedge V = v]}{\Pr[U = u]}$~{\cite[Definition~2.31]{nolin2020communication}}.  

The pseudotranscript  complexity introduced by~\cite{prabhakaran2015r} is a complexity measure based on  combinatorial properties of a function’s truth table. Consider a function $f : \mathcal X \times \mathcal Y \rightarrow \mathcal Z$. Define a random variable $Q$ over a space $\mathcal Q$ to be a \emph{pseudotranscript} for $f$ if there exist two functions $g: \mathcal Q \times \mathcal X \rightarrow \mathbb R^+$ and $h: \mathcal Q \times \mathcal Y \rightarrow \mathbb R^+$ such that $\Pr[Q = q|X = x,Y = y] = g(q,x)h(q,y)$, for all $q \in \mathcal Q, x \in \mathcal X, y \in \mathcal Y$. Specifically, the transcript of a deterministic or private coin protocol is a pseudotranscript, and the transcript of a public coin protocol concatenated with the public coins is also a pseudotranscript. We say that a pseudotranscript computes a relation $f$ if there exists a mapping $\mathcal M: \mathcal Q \rightarrow \mathcal Z$ such that $\Pr[\mathcal M(\mathcal Q) \in f(X, Y)] \geq 1 - \epsilon$. The R\'{e}nyi information cost of a pseudotranscript is defined as a relaxation of $IC^{\infty, \mu}_\epsilon(f)$ by replacing the communication protocol with a pseudotranscript in its definition. 
\begin{defn}[R\'{e}nyi pseudoinformation cost of a relation~{\cite[Definition~2.36]{nolin2020communication}}]
Let $f$ be a relation, $\mu$ be a distribution and $\epsilon > 0$. Let inputs $X, Y$ be random variables and let $Q$ be a pseudotranscript relative to $X, Y$. The R\'{e}nyi pseudoinformation cost of $f$, denoted by $pIC^\infty_\epsilon(f, \mu)$, is given by 
\begin{align*}
    pIC^{\infty, \mu}_\epsilon(f) = \min_{Q \text{ computing } f \text {with error } \epsilon} IC^{\infty, \mu}_\epsilon(Q). 
\end{align*}
\end{defn}

Prabhakaran and Prabhakaran showed the equivalence between $pIC^\infty(f)$ and the partition bound~\cite{prabhakaran2015r}. 
\begin{fact}[{\cite[Theorem~2.37]{nolin2020communication}},~{\cite[Theorem~3]{prabhakaran2015r}}]\label{fact:pICmu-prtmu}
Let $f: \mathcal X \times \mathcal Y \rightarrow 2^{\mathcal Z}$ be any relation, $\mu$ be a distribution and $\epsilon\in [0, 1/2)$. Then, $pIC^{\infty, \mu}_\epsilon(f) = \log \operatorname{prt}^\mu_\epsilon(f)$. 
\end{fact}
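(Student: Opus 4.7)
The plan is to prove both inequalities by exhibiting explicit constructions that realize the correspondence between pseudotranscripts (with their product structure $g(q, x) h(q, y)$) and the rectangles appearing in the partition LP of Definition~\ref{def:partition_bound}.

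For the direction $pIC^{\infty, \mu}_\epsilon(f) \leq \log \operatorname{prt}^\mu_\epsilon(f)$, I would take an optimal feasible solution $\{w_{z, R}\}$ of the primal LP for the partition bound. Define the pseudotranscript $Q$ to range over pairs $(z, R)$, and set $\Pr[Q = (z, R) \mid x, y] = w_{z, R}\, \mathbf{1}_{A_R}(x)\, \mathbf{1}_{B_R}(y)$, where $R = A_R \times B_R$. This is manifestly in the required product form $g(q,x)h(q,y)$, and the LP's normalization constraint $\sum_{R:(x, y) \in R} \sum_z w_{z, R} = 1$ ensures it is a genuine conditional distribution. The decoding map $\mathcal M(z, R) = z$ satisfies the error requirement because of the feasibility constraint $\sum_{R:(x, y) \in R} w_{f(x, y), R} \geq 1 - \epsilon$, averaged against $\mu$. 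Finally, the R\'enyi information of this pseudotranscript collapses to $\log \sum_{z, R} w_{z, R} = \log \operatorname{prt}^\mu_\epsilon(f)$, because $\max_{x, y}[w_{z, R}\, \mathbf{1}_R(x, y)] = w_{z, R}$, attained on the rectangle $R$ itself.

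For the reverse inequality, I would take any pseudotranscript $Q$ over $\mathcal Q$ with factorization $\Pr[Q = q \mid x, y] = g(q, x) h(q, y)$ and decoding map $\mathcal M$ computing $f$ with error at most $\epsilon$. For each $q$, set $\alpha_q = \max_x g(q, x)$, $\beta_q = \max_y h(q, y)$, and consider the rectangle $R_q = \operatorname{supp}(g(q, \cdot)) \times \operatorname{supp}(h(q, \cdot))$. The key identity $\alpha_q \beta_q = \max_{x, y} g(q, x) h(q, y)$ (since the two maxima decouple) means $\sum_q \alpha_q \beta_q$ equals $2^{I_\infty(XY:Q)}$. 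Assigning weight $w_{\mathcal M(q), R_q} = \alpha_q \beta_q$ therefore yields a candidate LP solution whose objective matches the R\'enyi information up to a factor, and the normalization $\sum_q g(q, x) h(q, y) = 1$ translates into the LP's normalization constraint.

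The main obstacle I anticipate is the exact matching of LP constraints in the converse direction: the partition LP enforces $\sum_{R, z} w_{z, R}\, \mathbf{1}_R(x, y) = 1$ as an equality, whereas $\alpha_q \beta_q \geq g(q, x) h(q, y)$ gives only an inequality (in the wrong direction for tightness). I would resolve this by introducing a dummy output symbol and distributing the residual mass $\alpha_q \beta_q - g(q, x) h(q, y)$ into slack weights of the form $w_{\bot, R_q}$, which preserves feasibility without inflating the objective. A related subtlety is converting the pseudotranscript's distributional error under $\mu$ into the LP's per-input error constraint: I would handle this by weighting the feasibility constraints by $\mu$ and summing, which is exactly why the distribution-specific $\operatorname{prt}^\mu_\epsilon$ (rather than the worst-case $\operatorname{prt}_\epsilon$) appears in the statement.
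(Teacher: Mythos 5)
First, note that the paper does not prove this statement at all: it is imported as a Fact with citations to Nolin's thesis and to Prabhakaran--Prabhakaran, so the relevant comparison is with the argument in those works, which your plan broadly mirrors. Your first direction (partition-LP solution $\Rightarrow$ pseudotranscript over pairs $(z,R)$ with $\Pr[Q=(z,R)\mid x,y]=w_{z,R}\mathbf{1}_{A_R}(x)\mathbf{1}_{B_R}(y)$) is essentially the standard argument and is sound, modulo the bookkeeping you already flag: the paper states $\operatorname{prt}^{\mu}_\epsilon$ only in its dual (maximization) form, so you must first write down, via LP duality, the primal with per-input normalization equality and a $\mu$-averaged correctness constraint before you can ``take an optimal feasible $\{w_{z,R}\}$''.

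The converse direction, however, has a genuine gap. Assigning the single rectangle $R_q=\operatorname{supp}(g(q,\cdot))\times\operatorname{supp}(h(q,\cdot))$ the weight $\alpha_q\beta_q$ gives coverage $\sum_{q:(x,y)\in R_q}\alpha_q\beta_q\geq\sum_q g(q,x)h(q,y)=1$ at every input, typically with strict inequality, whereas the LP demands equality; and your proposed repair cannot work. Adding slack weights $w_{\bot,R_q}$ only increases the coverage, which is already too large (you would need to remove mass, not add it); the residual $\alpha_q\beta_q-g(q,x)h(q,y)$ depends on the particular $(x,y)$ inside $R_q$, so it cannot be absorbed by any weight that is constant on a rectangle; and $\bot$ is not an element of $\mathcal Z$, so it is not even a legal label in the program. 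The missing idea, which is exactly how the cited proof proceeds, is to decompose each product function into rectangle indicators by thresholding both factors: writing $g(q,x)h(q,y)=\int_0^{\infty}\int_0^{\infty}\mathbf{1}[g(q,x)\geq s]\,\mathbf{1}[h(q,y)\geq t]\,ds\,dt$ expresses $\Pr[Q=q\mid x,y]$ as a nonnegative combination of indicators of the level-set rectangles $\{x:g(q,x)\geq s\}\times\{y:h(q,y)\geq t\}$, all labeled $\mathcal M(q)$. With this ``staircase'' assignment the coverage of every input is exactly $\sum_q g(q,x)h(q,y)=1$, the $\mu$-averaged correctness mass is exactly $\Pr[\mathcal M(Q)\in f(X,Y)]\geq 1-\epsilon$, and the total weight is $\sum_q\alpha_q\beta_q=2^{I_\infty}$, which is what gives the clean equality $\log\operatorname{prt}^{\mu}_\epsilon(f)\leq pIC^{\infty,\mu}_\epsilon(f)$ rather than a bound ``up to a factor''. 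Replacing your one-rectangle-per-$q$ step by this decomposition repairs the proof.
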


\begin{fact}[{\cite[Theorem~2]{prabhakaran2015r}}]\label{fact:pIC-prt}
Let $f: \mathcal X \times \mathcal Y \rightarrow 2^{\mathcal Z}$ be any relation and let $\epsilon\in [0, 1/2)$. Then, $pIC^{\infty}_\epsilon(f) = \log \operatorname{prt}_\epsilon(f)$. 
\end{fact}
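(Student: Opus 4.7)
The plan is to lift the distributional identity of Fact~\ref{fact:pICmu-prtmu} to the worst-case setting via a max-over-distributions argument. By Definition~\ref{def:partition_bound_2} the partition bound already satisfies $\operatorname{prt}_\epsilon(f) = \max_\mu \operatorname{prt}^\mu_\epsilon(f)$, so it suffices to establish the analogous characterization $pIC^\infty_\epsilon(f) = \sup_\mu pIC^{\infty,\mu}_\epsilon(f)$ for the R\'{e}nyi pseudoinformation cost; combined with Fact~\ref{fact:pICmu-prtmu} this yields the identity.

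For the $\geq$ direction, I would invoke monotonicity of feasibility: a pseudotranscript $Q$ computing $f$ with worst-case error at most $\epsilon$ also computes $f$ with average error at most $\epsilon$ under every distribution $\mu$, so it is feasible for each distributional problem. Hence $pIC^\infty_\epsilon(f) \geq \sup_\mu pIC^{\infty,\mu}_\epsilon(f)$, and Fact~\ref{fact:pICmu-prtmu} together with Definition~\ref{def:partition_bound_2} gives the claimed lower bound.

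For the $\leq$ direction, I would construct a single pseudotranscript achieving worst-case error $\epsilon$ directly from an optimal feasible solution $\{w_{z, R}\}$ of the partition-bound LP (Definition~\ref{def:partition_bound}). Setting $\eta = \sum_{z, R} w_{z, R} = \operatorname{prt}_\epsilon(f)$, I would declare $\Pr[Q = (z, R) \mid X = x, Y = y] = w_{z, R}$ whenever $(x, y) \in R$ and $0$ otherwise. The LP equality constraint makes this a valid conditional distribution, and the rectangle structure $R = A_R \times B_R$ supplies the factorization $\Pr[Q = q \mid X, Y] = g(q, X)\, h(q, Y)$ with $g(q, x) = \mathbb{1}[x \in A_R]$ and $h(q, y) = w_{z, R}\, \mathbb{1}[y \in B_R]$, certifying that $Q$ is a pseudotranscript. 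The per-input LP correctness constraint then lets the decoder $\mathcal{M}(z, R) = z$ succeed with probability at least $1 - \epsilon$ on \emph{every} input.

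The main delicate point is bounding $I_\infty(X, Y; Q)$ uniformly across marginals and obtaining the tight value $\log \eta$. A direct computation from the definition of R\'{e}nyi mutual information reduces this to $\log \sum_{z, R} (\max_x g((z,R), x))(\max_y h((z,R), y)) = \log \sum_{z, R} w_{z, R} = \log \operatorname{prt}_\epsilon(f)$, with the essential ingredient being that the LP normalization controls the ratio $\Pr[Q = q \mid X, Y]/\Pr[Q = q]$ uniformly in $(x, y)$. The only remaining subtlety is to verify that the error convention implicit in defining $pIC^\infty_\epsilon$ matches the worst-case guarantee of the construction; once this is aligned, the identity follows cleanly.
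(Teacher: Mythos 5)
The paper does not prove this statement at all: it is imported verbatim as a citation to Prabhakaran--Prabhakaran (Theorem~2 there), so there is no internal proof to compare against. Judged on its own, your argument is essentially a sound reconstruction of how the cited result is obtained, and the two halves fit together correctly: the easy direction follows because a worst-case-error-$\epsilon$ pseudotranscript is feasible for every distributional problem, so $pIC^{\infty}_\epsilon(f)\geq \sup_\mu pIC^{\infty,\mu}_\epsilon(f)=\log\max_\mu \operatorname{prt}^\mu_\epsilon(f)=\log\operatorname{prt}_\epsilon(f)$ by Fact~\ref{fact:pICmu-prtmu} and Definition~\ref{def:partition_bound_2}; and your LP-to-pseudotranscript construction is exactly right -- the equality constraint of Definition~\ref{def:partition_bound} makes $\Pr[Q=(z,R)\mid x,y]=w_{z,R}\mathbb{1}[(x,y)\in R]$ a valid kernel, the rectangle structure gives the $g\cdot h$ factorization, the covering constraint gives worst-case correctness, and $\sum_{z,R}(\max_x g)(\max_y h)=\sum_{z,R}w_{z,R}=\operatorname{prt}_\epsilon(f)$ bounds the information cost uniformly in $\mu$. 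Two caveats deserve explicit attention rather than the passing mention you give them. First, the paper never defines the prior-free quantity $pIC^{\infty}_\epsilon(f)$; if it is read as $\max_\mu pIC^{\infty,\mu}_\epsilon(f)$ the statement is immediate from Fact~\ref{fact:pICmu-prtmu} and $\operatorname{prt}_\epsilon(f)=\max_\mu\operatorname{prt}^\mu_\epsilon(f)$, while if it is the min over worst-case-error pseudotranscripts of the max over $\mu$ (the reading your proof needs), your construction supplies exactly the minimax matching piece -- so you should state which definition you are proving the identity for. Second, your computation of $I_\infty$ uses the max-based order-$\infty$ mutual information ($\log\sum_q\max_{x,y}p(q\mid x,y)$), which is the correct notion from the source; the formula transcribed in this paper (a double sum over $u$ and $v$) is a typo, and with that literal formula your sum would become $\sum_{z,R}w_{z,R}\lvert R\rvert$ and the bound would fail, so it is worth saying explicitly that you are working with the maximal-leakage form.
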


Touchette~\cite{touchette2015quantum} introduced quantum information complexity as an extension of (internal) information complexity to the quantum setting. This is inspired by quantum state-redistribution protocols~\cite{devetak2008exact, yard2009optimal}. Quantum information complexity has been shown to satisfy many of the natural properties possessed by information complexity. In particular, it is equal to amortized quantum communication complexity. We first review the definitions of a few measures relating to quantum information complexity before stating the results. 

We follow the quantum communication model defined in the work of Anshu et al.~\cite{anshu2017exponential}  , which is close to the one defined by Cleve and Buhrman~\cite{cleve1997substituting}, where players share prior entanglement and are allowed to send quantum messages to each other. In particular, a $k$-round protocol $\pi$ for a classical problem from input registers $A _{in} = X, B_{in} = Y$ to output registers $A_{out}, B_{out}$ is defined by a sequence of isometries $U_1, \cdots, U_{k+1}$ with a pure state $\psi$ shared between Alice and Bob. After the application of $U_i$, Alice holds register $A_i$, Bob holds register $B_i$ and the communication register is $C_i$. Readers are referred to~\cite{anshu2017exponential} for more details on the model. 


Let $\pi$ be a protocol and let $\mu$ be a distribution on $\mathcal X \times \mathcal Y$. The \emph{quantum communication cost} and \emph{quantum information complexity} of $\pi$ on input $\mu$, denoted by $QCC^\mu(\pi)$ and $QIC^\mu(\pi)$ respectively, are given by 
\begin{align*}
    & QCC^{\mu}(\pi) = \sum_i |C_i|\\
    & QIC^{\mu}(\pi) = \sum_{i \geq 1, \text{ odd}} I(C_i: R_X, R_Y | B_i) \\
    & \quad \quad \quad \quad \quad + \sum_{i \geq 1, \text{ even}} I(C_i:: R_X, R_Y | A_i),  
\end{align*}
where $R_X, R_Y$ are the canonical purification of input $X, Y$, $I(X:Z|Y) \coloneq \mathbb E_{y \leftarrow Y} [I(X:Z|Y = y)] = S(X|Y) + S(Z|Y) - S(XZ|Y)$ is the condition mutual information~\cite{anshu2017exponential}.  For any function $f$, any distribution $\mu$ and any error $\epsilon > 0$, 
\begin{align*}
    & QCC^{\mu}_\epsilon(f) = \min_\pi QCC^{\mu}(\pi)\\
    & QIC^{\mu}_\epsilon(f) = \min_\pi  QIC^{\mu}(\pi). 
\end{align*}
The \emph{max-distributional quantum information complexity} of $f$ error $\epsilon$ is~{\cite[Definition~4.1]{Braverman15}}
\begin{align}\label{def:QICD}
    QIC_{D, \epsilon}(f) = \max_{\mu \text{ is a distribution on } \mathcal X \times \mathcal Y} QIC^{\mu}_\epsilon(f). 
\end{align}

The quantum analogue of Fact~\ref{fact:IC-CC} is as below. 
\begin{fact}\label{fact:QIC-QCC}
For any function $f$, $QIC_\epsilon(f) \leq Q^*_\epsilon(f)$. 
\end{fact}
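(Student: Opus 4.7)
The plan is to fix any error-$\epsilon$ quantum protocol $\pi$ with prior entanglement whose quantum communication cost equals $Q^*_\epsilon(f)$, and to argue that for every input distribution $\mu$ one has $QIC^{\mu}(\pi) \leq O(QCC^{\mu}(\pi)) = O(Q^*_\epsilon(f))$. Since the present fact is used only inside a class where all measures are constant-equivalent, an implicit universal constant does no harm; after the per-distribution bound is in place, taking the maximum over $\mu$ (or, for the non-distributional variant of $QIC_\epsilon$, noting that the chosen $\pi$ already has bounded error on every input) yields $QIC_\epsilon(f) \leq O(Q^*_\epsilon(f))$.

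The only non-trivial step will be a per-round bound on the conditional mutual information appearing in the definition of $QIC^\mu(\pi)$. For an odd round $i$, in which Alice sends the communication register $C_i$, I would write
\begin{align*}
I(C_i : R_X R_Y \mid B_i) = I(C_i : R_X R_Y B_i) - I(C_i : B_i) \leq I(C_i : R_X R_Y B_i) \leq 2\,S(C_i) \leq 2\,|C_i|,
\end{align*}
using the quantum chain rule for mutual information, strong subadditivity (to drop the nonnegative term $I(C_i : B_i)$), and the standard Araki--Lieb-based bound $I(A:B) \leq 2\min\{S(A), S(B)\}$. The symmetric estimate holds at even rounds with $A_i$ in place of $B_i$.

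Summing the per-round bound over all rounds gives
\begin{align*}
QIC^{\mu}(\pi) = \sum_{i\text{ odd}} I(C_i : R_X R_Y \mid B_i) + \sum_{i\text{ even}} I(C_i : R_X R_Y \mid A_i) \leq 2\sum_i |C_i| = 2\, QCC^{\mu}(\pi) \leq 2\, Q^*_\epsilon(f),
\end{align*}
which is what is needed up to the harmless factor of $2$. The main obstacle I expect is bookkeeping rather than any new quantum information-theoretic ingredient: one must verify that after the $i$-th message has been sent the global state still has the product-purification structure that makes the per-round inequality apply, so that the conditioning registers $A_i$ and $B_i$ line up exactly with those in Touchette's definition, and that the analogous estimate for even rounds goes through with the roles of Alice and Bob swapped.
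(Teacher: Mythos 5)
Your argument is correct, and it is essentially the standard reasoning the paper implicitly relies on: the paper gives no proof of this fact at all, simply asserting it as the quantum analogue of Fact~\ref{fact:IC-CC} (information cost of a protocol is at most its length), which is exactly what your per-round bound $I(C_i:R_XR_Y\mid B_i)\leq I(C_i:R_XR_YB_i)\leq 2S(C_i)\leq 2|C_i|$ makes precise; note also that your worry about the conditioning registers is unnecessary, since this chain of inequalities holds for whatever registers $A_i,B_i$ appear in the definition. The one discrepancy is the constant: with the paper's definition of $QIC^{\mu}(\pi)$ (which, unlike Touchette's, has no factor $\tfrac12$ per round), your argument gives $QIC_\epsilon(f)\leq 2\,Q^*_\epsilon(f)$ rather than the literal inequality stated; the factor disappears if one uses Touchette's $\tfrac12$-normalized definition, and in any case it is immaterial for the constant-equivalence classification, as you correctly point out.
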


Let $f$ be a function, $\mu$ be a distribution on the inputs $\mathcal X \times \mathcal Y$ and $\epsilon > 0$. The \emph{amortized quantum communication complexity} of a function $f$ under the input distribution $\mu$ and with error $\epsilon > 0$, denoted by $AQCC^{\mu}_e\space(f)$ is given by
\begin{align*}
AQCC^{\mu}_\epsilon(f) \coloneqq \lim_{k \rightarrow \infty} QCC^{\mu^k}_\epsilon(f^k).  
\end{align*}

\begin{fact}[\cite{Braverman15}]\label{fact:QIC-QICD}
For any $f$ and $\epsilon > 0$, we have $QIC_{D, \epsilon}(f) \leq QIC_\epsilon(f)$. 
\end{fact}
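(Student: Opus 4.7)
The plan is to mimic the argument of Fact~\ref{fact:IC-ICD} verbatim, replacing classical information cost with quantum information cost. The key observation is purely definitional: a protocol that errs with probability at most $\epsilon$ on \emph{every} input $(x,y)$ also errs with probability at most $\epsilon$ under \emph{every} distribution $\mu$ over $\mathcal X\times\mathcal Y$. This containment of admissible protocols is what drives the inequality, and nothing genuinely quantum enters the argument.

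Concretely, I would proceed in three short steps. First, fix any protocol $\pi$ that computes $f$ with worst-case error at most $\epsilon$; for an arbitrary distribution $\mu$ on the inputs, the protocol also satisfies $\Pr_{(x,y)\sim\mu}[\pi(x,y)\neq f(x,y)]\leq \epsilon$, so $\pi$ is feasible in the minimization that defines $QIC^{\mu}_\epsilon(f)$. Consequently
\begin{equation*}
QIC^{\mu}(\pi)\;\geq\; \min_{\pi'\,:\,\Pr_\mu[\pi'\neq f]\leq\epsilon} QIC^{\mu}(\pi')\;=\;QIC^{\mu}_\epsilon(f).
\end{equation*}
Second, take the maximum over $\mu$ on both sides to get $\max_{\mu} QIC^{\mu}(\pi)\geq \max_{\mu} QIC^{\mu}_\epsilon(f) = QIC_{D,\epsilon}(f)$, where the last equality is just the definition~\eqref{def:QICD}. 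Third, take the infimum over all $\pi$ with worst-case error at most $\epsilon$; the right-hand side does not depend on $\pi$, so
\begin{equation*}
QIC_\epsilon(f)\;=\;\inf_{\pi}\,\max_{\mu} QIC^{\mu}(\pi)\;\geq\;QIC_{D,\epsilon}(f),
\end{equation*}
which is the desired bound.

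There is essentially no technical obstacle here; the mild subtlety is simply bookkeeping the two different error conventions (worst-case error over inputs, used in the definition of $QIC_\epsilon(f)$ that mirrors~\eqref{def:3.4}, versus distributional error with respect to $\mu$, used in the definition of $QIC^{\mu}_\epsilon(f)$). Once the feasibility inclusion between the two protocol classes is noted, the rest is a one-line min-max manipulation identical to the classical case in Fact~\ref{fact:IC-ICD}.
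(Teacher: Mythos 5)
Your proof is correct, and it is essentially the argument the paper relies on: the paper itself gives no proof of this quantum statement (it cites~\cite{Braverman15}), but your feasibility-inclusion plus min--max manipulation is exactly the approach used for the classical counterpart, Fact~\ref{fact:IC-ICD}. If anything, your three-step write-up is cleaner than the paper's classical proof, which obscures the same idea by introducing maximizing distributions that implicitly depend on the protocol.
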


\begin{fact}[{\cite[Theorem~4.13]{Braverman15}}]\label{fact:QIC-QICD-alpha}
Let $f$  be a relation and $\alpha, \epsilon\in(0,1)$. Then, $QIC_{\epsilon/\alpha}\left(f\right) \leq \frac{QIC_{D, \epsilon}(f)}{1- \alpha}$. 
\end{fact}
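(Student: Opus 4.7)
The plan is to adapt the proof of the classical analogue (Fact~\ref{fact:3.5}) to the quantum setting. Both statements express the same trade-off: restricting to distributional error $\epsilon$ on each distribution $\mu$ and then universally ``lifting'' to a single protocol costs a factor $1/(1-\alpha)$ in information cost and a factor $1/\alpha$ in error. First, by the definition of $QIC_{D,\epsilon}(f)$ (Equation~\ref{def:QICD}), for every distribution $\mu$ on $\mathcal X \times \mathcal Y$ and every $\delta > 0$ there is a quantum protocol $\pi_\mu$ with shared entanglement that computes $f$ on $\mu$ with error at most $\epsilon$ and satisfies $QIC^\mu(\pi_\mu) \leq QIC_{D,\epsilon}(f) + \delta$. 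The goal is to combine these distribution-dependent protocols into a single universal protocol $\pi^*$ whose information cost is bounded on every distribution, at the price of slightly larger error $\epsilon/\alpha$.

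The key step is a minimax argument on the zero-sum game in which one player selects a quantum protocol and the other selects an input distribution, with payoff given by a suitable weighted combination of the error $\Pr_\mu[\pi \text{ errs}]$ and the quantum information cost $QIC^\mu(\pi)$. The simplex of distributions is convex and compact, and the set of quantum protocols (with bounded communication) is convex under public-coin mixing. Since both $\Pr_\mu[\pi \text{ errs}]$ and $QIC^\mu(\pi)$ are linear in $\mu$ for fixed $\pi$, a standard minimax theorem applies, and the min over protocols and max over distributions can be exchanged. Equating the two sides yields a single universal protocol $\pi^*$ whose worst-case error across all distributions and whose worst-case quantum information cost are simultaneously bounded.

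The main technical obstacle is showing that $QIC^\mu$ behaves well under mixing of quantum protocols, so that the minimax value is actually attained by a single protocol rather than merely by a probability distribution over protocols. For this one appeals to standard properties of quantum conditional mutual information --- in particular its convexity under mixing of shared states, which follows from joint convexity of quantum relative entropy. This yields $QIC^\mu(\bar\pi) \leq \mathbb{E}_{\pi \sim \rho}[QIC^\mu(\pi)]$ for a public-coin mixture $\bar\pi$ of protocols drawn from a distribution $\rho$, mirroring the analogous classical fact used in the proof of Fact~\ref{fact:3.5}.

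Once the minimax step and the convexity-under-mixing bound are in place, a short balancing calculation closes the argument. One considers the mixture of protocols with weights $(1-\alpha)$ on an ``information-efficient'' branch and $\alpha$ on an ``error-tolerant'' branch and observes that: (i) the worst-case information cost scales as $QIC_{D,\epsilon}(f)/(1-\alpha)$ because the useful branch is executed only with probability $1-\alpha$; and (ii) the worst-case error scales as $\epsilon/\alpha$ by a Markov-type argument applied to the event that the protocol errs. Taking $\delta \to 0$ gives the claimed inequality $QIC_{\epsilon/\alpha}(f) \leq QIC_{D,\epsilon}(f)/(1-\alpha)$.
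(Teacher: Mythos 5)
You should first note that the paper does not prove this statement at all: it is imported directly from \cite[Theorem~4.13]{Braverman15}, so your sketch can only be measured against the argument in that source, which is indeed the minimax argument you outline, adapted from the classical Fact~\ref{fact:3.5}. Your high-level plan (combine the distribution-dependent protocols witnessing $QIC_{D,\epsilon}(f)$ via a zero-sum game whose payoff is a weighted combination of error and information cost, then exchange min and max) is the right one, but the two steps that make the minimax exchange legitimate are wrong or unsupported as you state them. The claim that $QIC^\mu(\pi)$ is linear in $\mu$ for fixed $\pi$ is false already in the classical case: $IC^\mu(\pi)$ contains terms like $H(T\mid Y)$ that are strictly concave in $\mu$. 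What Sion/von Neumann actually needs is concavity of $QIC^\mu(\pi)$ in $\mu$ (the error term is linear, which is fine), and in the quantum setting this is a genuinely nontrivial property, because $QIC^\mu(\pi)$ is built from conditional mutual informations $I(C_i;R_XR_Y\mid B_i)$ with respect to purifying registers $R_X,R_Y$, and the purification of a mixture of input distributions is not the mixture of the purifications; this concavity has to be proved, and doing so is one of the main technical points of the cited development. Likewise, your bound $QIC^\mu(\bar\pi)\le \mathbb{E}_{\pi\sim\rho}[QIC^\mu(\pi)]$ does not follow from ``joint convexity of quantum relative entropy'': conditional mutual information is not convex in the state in general. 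The correct route is to model the shared classical randomness as a register held (and conditioned on) by both players, after which the chain rule gives the needed relation directly.

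The closing step is also not the actual mechanism. In the correct proof, $\alpha$ and $1-\alpha$ are weights in the payoff $F(\pi,\mu)=\frac{1-\alpha}{I}\,QIC^\mu(\pi)+\frac{\alpha}{\epsilon}\Pr_\mu[\pi\text{ errs}]$ with $I=QIC_{D,\epsilon}(f)+\delta$; each $\pi_\mu$ certifies $\inf_\pi F(\pi,\mu)\le 1$, the minimax exchange yields a single $\pi^*$ with $F(\pi^*,\mu)\le 1$ for all $\mu$, and the two conclusions are read off by dropping one nonnegative term at a time: $QIC^\mu(\pi^*)\le I/(1-\alpha)$ for every $\mu$, and $\Pr_\mu[\pi^*\text{ errs}]\le \epsilon/\alpha$ for every $\mu$, which applied to point masses gives the worst-case error condition required in the definition of $QIC_{\epsilon/\alpha}(f)$. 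Your ``two branches executed with probability $\alpha$ and $1-\alpha$'' picture does not reproduce this and, as stated, is not a valid argument. Finally, your parenthetical restriction to protocols ``with bounded communication'' hides a real issue: the protocols $\pi_\mu$ need not admit a uniform communication bound as $\mu$ ranges over the simplex, so one must either work with an appropriate convex superset of strategies or use a net/continuity argument before invoking the minimax theorem; this needs to be addressed rather than assumed.
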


\begin{fact}[{\cite[Theorem~2]{touchette2015quantum}}]\label{fact:AQCC-QCC}
For any function $f$, distribution $\mu$ and error $\epsilon > 0$, we have $AQCC^{\mu}_\epsilon(f) = QIC^{\mu}_\epsilon(f)$. 
\end{fact}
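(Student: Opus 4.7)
The plan is to establish both inequalities $QIC^{\mu}_\epsilon(f) \leq AQCC^{\mu}_\epsilon(f)$ and $AQCC^{\mu}_\epsilon(f) \leq QIC^{\mu}_\epsilon(f)$, following the two-sided strategy that is standard for amortization/information equivalences (as in Braverman--Rao in the classical case).

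For the ``easy'' direction $QIC^{\mu}_\epsilon(f) \leq AQCC^{\mu}_\epsilon(f)$, I would first show that $QIC$ is super-additive over product distributions: for any protocol $\pi_k$ computing $f^k$ with per-coordinate error at most $\epsilon$ on inputs drawn from $\mu^k$, one has $QIC^{\mu^k}(\pi_k) \geq k \cdot QIC^{\mu}_\epsilon(f)$. This would follow from the chain rule for quantum (conditional) mutual information applied to the $k$ i.i.d.\ input registers, together with the observation that the marginal of $\pi_k$ on the $i$-th coordinate is a valid (possibly high-communication) protocol for $f$ on $\mu$. Combined with the trivial per-round bound $I(C_i : R_X R_Y \mid B_i) \leq |C_i|$, which gives $QIC^{\mu^k}(\pi_k) \leq QCC^{\mu^k}(\pi_k)$ (this is Fact~\ref{fact:QIC-QCC} applied to $f^k$), we obtain
\[
k \cdot QIC^{\mu}_\epsilon(f) \;\leq\; QCC^{\mu^k}_\epsilon(f^k),
\]
and dividing by $k$ and taking $k \to \infty$ yields the claim.

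For the compression direction $AQCC^{\mu}_\epsilon(f) \leq QIC^{\mu}_\epsilon(f)$, I would fix a near-optimal protocol $\pi$ with $QIC^{\mu}(\pi) \leq QIC^{\mu}_\epsilon(f) + \delta$ and construct, for each $k$, a protocol $\pi_{(k)}$ that simulates $k$ parallel copies of $\pi$ on $\mu^{\otimes k}$ with total communication close to $k \cdot QIC^{\mu}(\pi)$. The key tool is \emph{quantum state redistribution} (QSR) of Devetak and Yard, which in the i.i.d.\ asymptotic setting transmits a register $C$ from sender to receiver at a rate of $\tfrac{1}{2}\,I(C : R \mid B)$ qubits per copy, where $B$ is the receiver's side information and $R$ purifies the global state. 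I would apply QSR round-by-round to the $k$-fold parallel execution: in round $i$, the players use QSR to communicate the round-$i$ message register at rate matching the per-round conditional mutual information of $\pi$. Summing over the $k+1$ rounds of $\pi$ gives an average per-copy communication that matches exactly the information-cost expression $\sum_i I(C_i : R_X R_Y \mid B_i)$ defining $QIC^{\mu}(\pi)$.

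The principal obstacle is making the round-by-round QSR simulation rigorous. Even though the inputs across the $k$ copies are i.i.d., the \emph{joint} state of the $k$ copies after the first round need not be a product state, so one must carefully identify, at each round, the correct reference system whose purification makes the QSR rate equal the desired conditional mutual information; this essentially requires showing that the relevant marginal is asymptotically typical in the appropriate sense. In parallel, one must control error accumulation: QSR is only approximate, and errors from each round and each coordinate must be combined (via, e.g., gentle-measurement and triangle inequalities for trace distance) so that $\pi_{(k)}$ still computes $f^k$ within error $\epsilon' \to \epsilon$ as $k \to \infty$. Finally, one has to check that the $\delta$ slack and the sub-leading QSR overheads both vanish in the per-copy $k \to \infty$ limit, so that $\limsup_k QCC^{\mu^k}(\pi_{(k)})/k \leq QIC^{\mu}_\epsilon(f)$ as required by the definition of $AQCC^{\mu}_\epsilon(f)$.
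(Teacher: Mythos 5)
The paper does not prove this statement at all: it is imported verbatim as Fact~\ref{fact:AQCC-QCC} from Touchette's Theorem~2, so there is no in-paper argument to compare against. Your two-sided plan is essentially a reconstruction of Touchette's own proof — superadditivity of quantum information cost over independent copies plus the per-round ``information at most communication'' bound for the direction $QIC^{\mu}_\epsilon(f)\le AQCC^{\mu}_\epsilon(f)$, and a round-by-round quantum state redistribution (Devetak--Yard) simulation of the $k$-fold tensor power for the achievability direction — so as a blueprint it is the right route, and it matches the cited source rather than anything done in this paper.

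Two caveats on the details. First, normalization: the rate of state redistribution is $\tfrac12 I(C:R\mid B)$ and the trivial per-round bound is $I(C_i:R_XR_Y\mid B_i)\le 2|C_i|$ (quantum mutual information of a $|C_i|$-qubit register can be $2|C_i|$, not $|C_i|$ as you wrote). These two factors of $2$ cancel only under Touchette's definition of $QIC$, which carries a factor $\tfrac12$ in front of each conditional mutual information term; with the unnormalized expression displayed in this paper your two directions would each be off by a factor $2$, so the exact equality you are proving requires the $\tfrac12$-normalized definition. Second, your stated ``principal obstacle'' is slightly misplaced: in the ideal parallel execution the inputs are i.i.d.\ and the protocol acts on each copy independently, so the joint state at every round \emph{is} a tensor power of the single-copy state; the genuine technical work is (a) controlling the accumulation of the (vanishing, per-round) state-redistribution errors over the constant number of rounds so that per-copy error stays at $\epsilon$ in the limit, and (b) the embedding/conditioning argument behind superadditivity of $QIC$, which in the quantum setting is substantially more delicate than the classical Braverman--Rao reduction you allude to with ``the marginal of $\pi_k$ on the $i$-th coordinate is a valid protocol.'' Both points are handled in Touchette's paper, but they are where the real proof lives, not in the step you flagged.
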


\begin{fact}[{\cite[Corollary~5.8]{braverman2018near}}]\label{fact:QCC-QIC}
For all Boolean functions $f$, $QCC_{1/3}(f) \leq 2^{O(QIC_{1/3}(f)+1)}$. 
\end{fact}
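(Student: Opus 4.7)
The plan is to compress a near-optimal quantum protocol for $f$ round-by-round via one-shot quantum message simulation, and then bound the resulting communication by an exponential in the total information cost. Let $k = QIC_{1/3}(f)$ and, invoking the definition in Equation~(\ref{def:QICD}), pick a quantum protocol $\pi$ computing $f$ with error at most $1/3$ whose quantum information cost $QIC^{\mu}(\pi)$ on the worst-case distribution $\mu$ is at most $k + \tau$ for arbitrarily small $\tau > 0$.

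First, I would prune $\pi$ so that its number of rounds is bounded. Writing $QIC^{\mu}(\pi) = \sum_i I_i$ with $I_i = I(C_i : R_X R_Y | B_i)$ (or its symmetric analogue when Bob speaks), any round whose contribution $I_i$ falls below a threshold $\delta = \Theta(1/k^2)$ can be replaced by a trivial round at a controlled cost in error, leaving an effective round count of at most $O(k^3)$ while keeping the total information cost bounded by $2k$.

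Next, I would invoke a one-shot quantum message compression theorem, such as quantum state redistribution via the convex-split lemma, or the quantum substate theorem of Jain, Radhakrishnan and Sen. Such tools show that a single quantum message of information cost $I_i$ admits a faithful simulation using $2^{O(I_i)}$ qubits of communication together with shared entanglement, up to simulation error $\epsilon_i$. Applying this round by round and union-bounding the errors, the total communication is bounded by $\sum_i 2^{O(I_i)} \leq O(k^3) \cdot 2^{O(k)} = 2^{O(k)}$, giving the stated inequality.

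The main obstacle is controlling error propagation across rounds: after each one-shot compression the joint state is only $\epsilon_i$-close to the ideal state, and in the quantum setting these errors can compound quadratically via the gentle measurement lemma. A careful hybrid argument stepping from $\pi$ to the fully compressed protocol one round at a time, with $\epsilon_i$ chosen so that $\sum_i \sqrt{\epsilon_i}$ stays well below a constant, resolves this. The approach parallels the classical bound $R(f) \leq 2^{O(IC(f))}$ of Braverman, with convex-split / position-based decoding playing the role of rejection sampling.
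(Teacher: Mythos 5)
Your plan hinges on a round-pruning step that is not available: you assume that every round whose information contribution $I_i$ is below $\delta=\Theta(1/k^2)$ can be ``replaced by a trivial round at a controlled cost in error,'' leaving $O(k^3)$ rounds. The number of low-information rounds is not bounded by anything in terms of $QIC$, and the damage from deleting them does not telescope: by Pinsker/Fawzi--Renner-type recovery arguments, dropping (or locally regenerating) a round with conditional mutual information $I_i$ costs trace distance on the order of $\sqrt{I_i}$, so over $r$ rounds the accumulated error is of order $\sqrt{r\cdot QIC}$, which is unbounded when $r$ is (e.g.\ $r$ rounds each carrying $k/r$ bits of information). No round-elimination or round-reduction theorem of the kind you invoke is known; indeed, simulating an unbounded-round quantum protocol with communication that is any function of its quantum information cost alone is an open problem, and your argument, if it worked, would prove the bound for partial functions and relations as well, which is not known. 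The analogy with Braverman's classical $R(f)\le 2^{O(IC(f))}$ is also misleading on this point: that proof is not a round-by-round compression but a single correlated-sampling of the entire transcript, a step with no known quantum analogue (there is no transcript to sample). Once the round count is bounded, your per-round convex-split/state-redistribution compression and the error-accumulation bookkeeping are fine, but they never get to act because the first step fails.

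For comparison, the source this paper cites (\cite[Corollary~5.8]{braverman2018near}) does not compress anything; it exploits totality. One first lower-bounds quantum information complexity by the smooth discrepancy, $QIC_{1/3}(f)\geq QIC_{D,\delta}(f)\geq \Omega(\operatorname{sdisc}_{1/5}(f))\geq\Omega\bigl(\log\tfrac{1}{\operatorname{disc}(f)}\bigr)$ (cf.\ Facts~\ref{fact:QICD-sdisc} and~\ref{fact:disc-sdisc}), and then uses the Linial--Shraibman upper bound that a total sign matrix admits even a one-way public-coin protocol of cost $O\bigl((\gamma_2^\infty(f))^2\bigr)=(1/\operatorname{disc}(f))^{O(1)}$ (Facts~\ref{fact:RpubAB-gamma2inf} and~\ref{fact:gammaInf-disc}), so that $QCC_{1/3}(f)\leq R^{\operatorname{pub}}_{1/3}(f)\leq 2^{O(QIC_{1/3}(f)+1)}$. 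The restriction to total Boolean functions is exactly what makes the discrepancy-based upper bound available, and it is the reason the exponential relation can be established without solving the quantum compression problem your proposal implicitly assumes.
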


\begin{fact}[\cite{anshu2017exponential}]\label{fact:AQCC-AC}
For any function $f$,  distribution $\mu$ and error $\epsilon > 0$, we have $AQCC^{\mu}_\epsilon(f) \leq AC^{\mu}_\epsilon(f)$. 
\end{fact}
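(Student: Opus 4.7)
The plan is to establish $AQCC^{\mu}_\epsilon(f) \leq AC^{\mu}_\epsilon(f)$ by first proving the non-amortized simulation inequality $QCC^{\mu^k}_\epsilon(f^k) \leq D^{\mu^k}_\epsilon(f^k)$ for every integer $k \geq 1$, and then dividing through by $k$ and passing to the limit. This matches the two one-shot definitions on either side of the target inequality: $AC^{\mu}_\epsilon$ is defined via $D^{\mu^k}_\epsilon(f^k)/k$, and $AQCC^{\mu}_\epsilon$ via (the correctly normalized) $QCC^{\mu^k}_\epsilon(f^k)/k$.

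First I would fix $k$ and take an optimal deterministic protocol $\pi_k$ of cost $c_k = D^{\mu^k}_\epsilon(f^k)$ that computes $f^k$ with error at most $\epsilon$ under inputs drawn from $\mu^k$. I would then construct a quantum protocol $\tilde\pi_k$ in the Anshu et al.\ model that simulates $\pi_k$ bit-by-bit: each classical message bit of $\pi_k$ is embedded as a qubit in the computational basis inside a communication register $C_i$, and the local transitions of $\pi_k$ are implemented by reversible extensions as the isometries $U_i$. The shared entanglement the quantum model allows is simply left unused. The resulting protocol satisfies $\sum_i |C_i| = c_k$, hence $QCC^{\mu^k}(\tilde\pi_k) = c_k$, and its output distribution on every input $(x,y)$ coincides with that of $\pi_k$, so its error under $\mu^k$ is at most $\epsilon$. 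This gives $QCC^{\mu^k}_\epsilon(f^k) \leq D^{\mu^k}_\epsilon(f^k)$.

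Dividing by $k$ and letting $k \to \infty$,
\begin{align*}
AQCC^{\mu}_\epsilon(f) \;=\; \lim_{k \to \infty} \frac{QCC^{\mu^k}_\epsilon(f^k)}{k} \;\leq\; \lim_{k \to \infty} \frac{D^{\mu^k}_\epsilon(f^k)}{k} \;=\; AC^{\mu}_\epsilon(f),
\end{align*}
where existence of the limits is inherited from the amortized definitions (if one only knows subadditivity of the sequences under concatenation of protocols on independent blocks, Fekete's lemma supplies the limits; otherwise the same argument yields the inequality with $\limsup$ on the left and $\liminf$ on the right, which is all we need for the constant-equivalence purposes of the hierarchy).

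The main obstacle is not conceptual but a careful matching against the bookkeeping of the Anshu et al.\ quantum communication model used in the excerpt: one must verify that classical bit messages embedded in the computational basis contribute exactly their length to $\sum_i |C_i|$, and that the input registers $X,Y$ in the quantum simulation can be initialized from the classical inputs without additional communication. As a sanity check on the approach, one can alternatively combine Fact~\ref{fact:AQCC-QCC} and Fact~\ref{fact:amortized} to reduce the claim to $QIC^{\mu}_\epsilon(f) \leq IC^{\mu}_\epsilon(f)$, which follows from the same bit-by-bit simulation since the conditional mutual informations appearing in $QIC$ reduce to their classical analogues once every communication register is kept diagonal in the computational basis throughout the protocol.
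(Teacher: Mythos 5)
Your argument is correct, and it is worth noting that the paper itself offers no proof of this statement: it is imported verbatim from Anshu et al.\ \cite{anshu2017exponential}, so there is no internal derivation to compare against. Your route — the one-shot simulation $QCC^{\mu^k}_\epsilon(f^k) \leq D^{\mu^k}_\epsilon(f^k)$ obtained by sending each classical message bit as a computational-basis qubit (leaving the shared entanglement untouched), followed by dividing by $k$ and passing to the limit — is exactly the natural self-contained argument, and you correctly observed that the paper's displayed definition of $AQCC^{\mu}_\epsilon(f)$ is missing the $1/k$ normalization that the inequality presupposes. Two small caveats. First, your fallback remark about limits is imprecise: a termwise inequality only yields $\limsup_k QCC^{\mu^k}_\epsilon(f^k)/k \leq \limsup_k D^{\mu^k}_\epsilon(f^k)/k$ (and likewise for $\liminf$), not ``$\limsup$ on the left $\leq \liminf$ on the right''; since both amortized quantities are defined as limits assumed to exist, this does not affect the main line, but the fallback as stated is not a consequence of the termwise bound. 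Second, the alternative sanity check via Facts~\ref{fact:AQCC-QCC} and~\ref{fact:amortized}, reducing to $QIC^{\mu}_\epsilon(f) \leq IC^{\mu}_\epsilon(f)$, is in the right spirit but glosses over genuine bookkeeping in Touchette's framework (purifying registers $R_X, R_Y$, whether message copies are retained, and a possible constant factor between the quantum and classical information costs of an embedded classical protocol); it should be treated as a pointer to \cite{touchette2015quantum} rather than as an immediate corollary of the basis-diagonal simulation. Neither caveat undermines the primary proof, which stands on its own.
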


\begin{fact}[{\cite[Theorem~5.7]{Braverman15}}]\label{fact:QICD-sdisc}
There exists an absolute constant $\delta > 0$ such that for any Boolean function $f: \mathcal X\times \mathcal Y \rightarrow \{0, 1\}$, we have $QIC_{D, \delta}(f) \geq \Omega(\operatorname{sdisc}_{1/5}(f) - O(1))$. 
\end{fact}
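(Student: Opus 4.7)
The plan is to fix a distribution $\mu$ attaining $QIC_{D,\delta}(f) = QIC^{\mu}_{\delta}(f)$ and an optimal quantum protocol $\pi$ with shared entanglement, $\delta$-error under $\mu$, and $QIC^{\mu}(\pi) = c \coloneqq QIC^{\mu}_\delta(f)$. The goal is to exhibit a Boolean function $g$ that agrees with $f$ on all but a $1/5$-fraction of inputs under $\mu$, together with a rectangle $R$, such that the bias of $g$ on $R$ under $\mu$ is at least $2^{-O(c)-O(1)}$; passing to Definition~\ref{def:sdisc_natural} and maximizing over $\mu$ then gives the claim.

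The heart of the argument is a quantum ``correlation in a rectangle'' lemma: if $\pi$ has quantum information cost $c$ under $\mu$, then the acceptance-probability matrix $P$ of $\pi$ has a rectangle $R$ on which $P$ is biased away from $\tfrac{1}{2}$ by at least $2^{-O(c)}$ on average with respect to $\mu$. Following Braverman et al., I would derive this directly from the quantum state-redistribution picture that underlies $QIC$: each round's conditional mutual information term $I(C_i:R_X R_Y\mid B_i)$ (respectively $I(C_i:R_X R_Y\mid A_i)$) is read as a KL-divergence between the true message register and a ``typical'' one that is product across Alice/Bob, and a quantum substate theorem (in the spirit of Jain--Radhakrishnan--Sen) converts this into a product distribution on a rectangle that $\pi$ distinguishes from uniform with total cost $O(c)$. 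Applying the quantum discrepancy lower bound (Fact~\ref{fact:Qstar-disc}) to this ``effective'' protocol restricted to $R$ then yields $\log(1/\operatorname{disc}^{\mu}(P|_R)) \leq O(c)+O(1)$, which is the desired bias bound.

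The final step is to replace $P$ by a Boolean function $g$: round $P_{xy}$ to $0$ or $1$ according to whether $P_{xy}<\tfrac{1}{2}$ or $P_{xy}\geq\tfrac{1}{2}$. By the $\delta$-error guarantee of $\pi$ and Markov, the resulting $g$ differs from $f$ on at most a $\delta$-fraction of inputs under $\mu$, so choosing the absolute constant $\delta$ small enough (below $1/5$) keeps $g$ in the feasible set of $\operatorname{sdisc}^{\mu}_{1/5}(f)$. Rounding can only change the rectangle bias by $O(\delta)$, which is absorbed into the $O(1)$ additive slack. Maximizing over $\mu$ converts the pointwise inequality into $\operatorname{sdisc}_{1/5}(f) \leq O(QIC_{D,\delta}(f)) + O(1)$, i.e. $QIC_{D,\delta}(f) \geq \Omega(\operatorname{sdisc}_{1/5}(f)) - O(1)$.

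The main obstacle is the rectangle lemma in the middle paragraph. Naive interactive quantum compression would incur an $\exp(c)$ blow-up in communication, so one cannot just compress $\pi$ to an $O(c)$-qubit protocol and invoke Fact~\ref{fact:Qstar-disc} directly. The substate-theorem route avoids this by extracting a biased rectangle straight from the internal conditional mutual information terms; reproducing that machinery carefully --- handling the shared entanglement, the alternation between Alice's and Bob's rounds, and the quantum-to-classical rounding of the acceptance matrix --- is where essentially all of the technical weight of Braverman et al.'s proof is concentrated.
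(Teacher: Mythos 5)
The paper does not prove this statement at all --- it is imported verbatim from \cite{Braverman15} --- so your sketch has to stand on its own, and as written it does not. The outer shell is the right shape (fix a distribution, take a protocol of information cost $c=QIC^{\mu}_{\delta}(f)$, produce a $g$ close to $f$ and a biased rectangle, conclude via Definition~\ref{def:sdisc_natural}), but the entire content is delegated to your ``quantum correlation in a rectangle'' lemma, and the route you give for it does not go through. The substate-theorem/typical-message argument is a \emph{classical} information-cost technique: it hinges on the transcript being classical and on the rectangle property $p(t\mid x,y)=q_A(t,x)\,q_B(t,y)$, which is exactly what quantum protocols lack; ``conditioning'' on the quantum registers $A_i,B_i$ in the definition of $QIC$ does not produce distributions over inputs, let alone rectangles. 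Worse, your appeal to Fact~\ref{fact:Qstar-disc} presupposes an ``effective protocol'' of \emph{communication} $O(c)$, but you only control information; constructing such a protocol is precisely the one-shot compression problem you yourself note is unavailable (the only generic compression is the exponential one behind Fact~\ref{fact:QCC-QIC}, which would yield merely $QIC\geq\Omega(\log\log(1/\operatorname{disc}))$). So the step carrying all the weight is asserted, and the tools named for it are the ones that are known not to transfer.

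Two further steps are quantitatively wrong even granting the lemma. First, ``$P$ is biased away from $1/2$ by $2^{-O(c)}$ on average over $R$'' is not the statement you need: the acceptance matrix of any correct protocol is bounded away from $1/2$ pointwise even for functions of exponentially small discrepancy; what is required is a rectangle of $\mu$-mass at least $2^{-O(c)}$ on which $g$ is \emph{unbalanced} (a signed, corruption-type statement). Second, your rounding step loses an additive $O(\delta)$ in the bias, and this cannot be ``absorbed into the $O(1)$ slack'': the bias being protected is $2^{-O(c)}$, exponentially smaller than the constant $\delta$, so an additive constant loss annihilates it --- the smoothing has to be handled so that the loss is multiplicative relative to $\mu(R)$. (Also, fixing $\mu$ as the maximizer of $QIC$ and then ``maximizing over $\mu$'' is inconsistent; you need the inequality for an arbitrary $\mu$, in particular for the distribution attaining $\operatorname{sdisc}_{1/5}(f)$, using $QIC^{\mu}_{\delta}(f)\leq QIC_{D,\delta}(f)$.) A workable route --- and the one the statement's form (smoothing parameter $1/5$, additive $O(1)$) points to --- avoids one-shot extraction altogether: use the amortization characterization $QIC^{\mu}_{\epsilon}(f)=AQCC^{\mu}_{\epsilon}(f)$ (Fact~\ref{fact:AQCC-QCC}), apply the quantum discrepancy bound to $f^k$, and exploit the direct-product/XOR behaviour of discrepancy (multiplicativity of $\gamma_2^{\infty}$), dividing by $k$ in the limit; the per-copy error is what forces the passage from $\operatorname{disc}$ to the smooth version $\operatorname{sdisc}_{1/5}$. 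Your proposal, by contrast, locates all of the difficulty in a one-shot rectangle lemma that it neither states correctly nor proves.
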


\subsection{Class 4}
This class consists of only two complexity measures. Namely, the unbounded error probabilistic communication complexity and the sign-rank.  This constant equivalence relationship is shown in Figure~\ref{fig:C4}. 
\begin{figure}[H]
\centering
\begin{tikzpicture}[->,>=Stealth,auto,node distance=2.5cm, thick]
    \node (UPP) {$UPP(f)$};
    \node (signrank) at (4, 0) {$\operatorname{signrank}(f)$};

    \draw [->] (UPP) to node[midway, above] {Fact~\ref{UPP_signrank}}  (signrank);
    \draw [->] (signrank) to (UPP);
\end{tikzpicture}
\caption{Relationship between complexity measures in Class 4.}
    \label{fig:C4}
\end{figure}

\begin{theorem}\label{thm:4}
Class 4 includes the following complexity measures: 
\begin{multicols}{2}
    \begin{itemize}
        \item $UPP(f)$
        \item $\operatorname{signrank}(f)$
    \end{itemize}
\end{multicols}
\end{theorem}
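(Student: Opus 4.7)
The plan is to establish the equivalence $UPP(f) \leftrightarrow \operatorname{signrank}(f)$ in the constant-equivalence sense, which is the only relation required since Class 4 contains just two measures. I would proceed by invoking Fact~\ref{UPP_signrank} (the classical Paturi--Simon characterization), which tightly relates unbounded-error probabilistic communication complexity to the sign-rank of the communication matrix, specifically $UPP(f) = \Theta(\log \operatorname{signrank}(f))$. Since both directions of this tight relationship are captured in that single fact, the theorem follows immediately.

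More concretely, first I would recall that a $UPP$ protocol of cost $c$ on input $(x,y)$ accepts with probability strictly greater than $1/2$ iff $f(x,y) = 1$, and the link to sign-rank arises by interpreting Alice's and Bob's private randomness and messages as vectors whose inner product determines the acceptance bias, hence inducing a real matrix of rank roughly $2^c$ whose sign pattern agrees with $M_f$. Conversely, given a low sign-rank factorization of $M_f$, one can normalize the factors and use them in a private-coin protocol that samples indices with appropriate probabilities so that the expected sign of the inner-product estimator matches $f(x,y)$, yielding a $UPP$ protocol of cost $O(\log \operatorname{signrank}(f))$. Both halves are precisely what Fact~\ref{UPP_signrank} asserts.

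Given the tight $\Theta$-relation, the cyclic implication chain reduces to the two arrows $UPP(f) \rightarrow \operatorname{signrank}(f)$ and $\operatorname{signrank}(f) \rightarrow UPP(f)$, each justified by Fact~\ref{UPP_signrank}. I would not expect any serious obstacle here, since the proof is a direct citation of the equivalence; the only drafting care is to make sure the inequalities are stated in the ``constant equivalence" direction, i.e., $UPP(f) = O(1) \Leftrightarrow \operatorname{signrank}(f) = O(1)$, which follows at once from the $\Theta(\log \cdot)$ relationship. Thus the full proof can be written as a two-line argument mirroring the structure of the proofs of Theorems~\ref{thm:1},~\ref{thm:2}, and~\ref{thm:3}.
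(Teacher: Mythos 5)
Your proposal is correct and matches the paper's proof, which likewise establishes $UPP(f)\leftrightarrow \operatorname{signrank}(f)$ by a direct citation of Fact~\ref{UPP_signrank} (Paturi--Simon); the additional sketch of why that fact holds is fine but not needed. The only relation required is this single equivalence, and your observation that it immediately yields $UPP(f)=O(1)\Leftrightarrow \operatorname{signrank}(f)=O(1)$ is exactly the intended argument.
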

\begin{proof}
    We have $UPP(f)\leftrightarrow \operatorname{signrank}(f)$ by Fact~\ref{UPP_signrank}. 
\end{proof}

Before stating the fact used in the proof of Theorem~\ref{thm:4}, we give definitions of the sign-rank and the unbounded error probabilistic communication complexity. 

Sign-rank is one of the important analytic notions in communication complexity. The sign-rank of a matrix $M$ is the minimal rank of a real matrix whose entries have the same sign pattern as $M$. 
\begin{defn}[sign-rank~{\cite[Definition~1]{alon2016sign}}]\label{def:signrank}
For a real matrix $M$ with nonzero entries, let $\operatorname{sign}(M)$ denote its sign matrix such that $(\operatorname{sign}(M))_{i, j} = \operatorname{sign}(M_{i, j})$ for all $i, j$. The sign-rank of a matrix $S$, denoted by $\operatorname{signrank}(S)$, is defined as $\operatorname{signrank}(S) \coloneqq \min \{\operatorname{rank}(M): \operatorname{sign}(M)  = S\}$.
\end{defn}

The unbounded error probabilistic communication complexity was introduced by Paturi and Simon~\cite{Paturi1984}. Given a function $f:\{0, 1\}^n\times\{0, 1\}^n\rightarrow \{0, 1\}$, an unbounded error protocol for $f$ is a protocol $\pi$ such that $\Pr\left[\pi(x, y) = f(x, y)\right]>1/2$, where the probability is over the coin tosses of the players. The worst-case cost of $\pi$ is the maximum number of bits transmitted  during an execution of $\pi$, taken over all choices of $x$ and $y$. The unbounded error communication complexity of $f$, denoted as $UPP(f)$, is the minimum cost of an unbounded error protocol for $f$. In their work, Paturi and Simon showed the equivalence relation between $UPP(f)$ and the sign-rank of a matrix.

\begin{fact}[{\cite[Theorem~3]{Paturi1984}}]\label{UPP_signrank}
Let $f: \mathcal X \times \mathcal U \rightarrow \{0, 1\}$. Then, we have $\lceil \log \operatorname{signrank}(f)\rceil\leq UPP(f)\leq \lceil \log \operatorname{signrank}(f)\rceil + 1$
\end{fact}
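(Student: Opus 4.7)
The plan is to prove the two inequalities by standard matrix/protocol arguments relating a protocol's acceptance probabilities to a rank decomposition of a real matrix whose sign pattern agrees with $2f - J$.

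For the lower bound $\lceil \log \operatorname{signrank}(f)\rceil \leq UPP(f)$, I would start with an arbitrary unbounded-error protocol $\pi$ of communication cost $c = UPP(f)$. The key observation is that for each leaf $\ell$ of the protocol tree, the probability that input $(x,y)$ reaches $\ell$ factors as $a_\ell(x)\,b_\ell(y)$, because Alice's and Bob's private coins are independent and each player's routing depends only on their own input and received messages. Label each leaf with its output $z_\ell \in \{0,1\}$ and define the real matrix
\[
  M(x,y) \;=\; \sum_{\ell} (2 z_\ell - 1)\, a_\ell(x)\, b_\ell(y) \;=\; 2\Pr[\pi \text{ accepts } (x,y)] - 1.
\]
This expresses $M$ as a sum of at most $2^c$ rank-one terms, so $\operatorname{rank}(M) \leq 2^c$. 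By the unbounded-error condition, $M(x,y) > 0$ iff $f(x,y)=1$, so $\operatorname{sign}(M) = 2f - J$, yielding $\operatorname{signrank}(f) \leq 2^c$ and thus $\lceil \log \operatorname{signrank}(f)\rceil \leq c$.

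For the upper bound $UPP(f) \leq \lceil \log \operatorname{signrank}(f)\rceil + 1$, let $r = \operatorname{signrank}(f)$ and fix a real matrix $N$ of rank $r$ with $\operatorname{sign}(N) = 2f-J$. Writing $N = U V^\top$ with $U,V \in \mathbb{R}^{n\times r}$, Alice holds the row $u(x) \in \mathbb{R}^r$ and Bob holds $v(y) \in \mathbb{R}^r$; rescaling rows (which does not change sign patterns of inner products) I can assume $\|u(x)\|_\infty, \|v(y)\|_\infty \leq 1$. The protocol is:
\begin{enumerate}
\item Alice samples $i \in [r]$ uniformly from her private coins, and a bit $a \in \{-1,+1\}$ with $\Pr[a = +1] = (1+u_i(x))/2$. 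She sends the pair $(i,a)$, costing $\lceil \log r\rceil + 1$ bits.
\item Bob samples $b \in \{-1,+1\}$ with $\Pr[b=+1] = (1+v_i(y))/2$, and accepts iff $a = b$.
\end{enumerate}
A direct computation gives $\Pr[a=b \mid i] = (1 + u_i(x)v_i(y))/2$, so
\[
\Pr[\text{accept}] \;=\; \tfrac{1}{2} + \tfrac{1}{2r}\langle u(x), v(y)\rangle,
\]
which exceeds $1/2$ precisely when $\langle u(x),v(y)\rangle > 0$, i.e., iff $f(x,y)=1$. This realizes $f$ at cost $\lceil \log r\rceil + 1$, completing the bound.

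The main obstacle is handling the upper bound cleanly: we must sample a coordinate $i \in [r]$ and transmit it, which forces the $\lceil \log r\rceil$ term and accounts for the $+1$ in the theorem. Care is also needed to argue that rescaling the rows of $U$ and $V$ to bring entries into $[-1,1]$ preserves the sign of every inner product $\langle u(x),v(y)\rangle$; this follows because the rescaling is by strictly positive row-wise constants, which only changes magnitudes. Everything else (the factorization of leaf probabilities and the rank bound on $M$ in the lower bound) is routine.
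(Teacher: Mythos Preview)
The paper does not supply its own proof of this fact; it is simply cited from Paturi and Simon. Your argument is correct and is essentially the classical Paturi--Simon proof: the lower bound via the rectangle factorization of leaf-reaching probabilities giving a rank-$2^c$ decomposition of the bias matrix, and the upper bound via a one-way protocol that sends a uniformly random coordinate together with an unbiased-estimator bit for that coordinate.
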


 \subsection{Class 5}
This class consists of measures from learning complexity and complexity measures under the product distribution. A product distribution $\mu$ on $\mathcal X\times \mathcal Y$ is a distribution that can be expressed as $\mu(x, y) =\mu_{\mathcal X}(\mathcal X)\times \mu_{\mathcal Y}(\mathcal Y)$, where $\mu_{\mathcal X}$ and $\mu_{\mathcal Y}$ are distributions over $\mathcal X$ and $\mathcal Y$, respectively. The elements in this class consist of complexity measures under product distribution. We use $[]$ at the superscript to denote the corresponding complexity measure under product distributions. For instance, $\mathcal{C}^{[]}(f)$ denotes the complexity measure $\mathcal{C}$ under product distributions. The constant equivalence relationship between these complexity measures are presented in  Figure~\ref{fig:C5}. 

\begin{figure*}[t]
\centering
\begin{tikzpicture}[->,>=Stealth,auto,node distance=2.5cm, thick]
     \node (DABprod) {$D^{A\rightarrow B, []}_\epsilon(f)$};
     \node (RABprod) [right of=DABprod, xshift=1cm] {$R^{A\rightarrow B, []}_\epsilon(f)$};
     \node (recABprod) [right of=RABprod, xshift=3cm] {$\operatorname{rec}^{A \rightarrow B, []}_\epsilon(f)$};
     \node (QABprod) [right of=recABprod, xshift=2cm]  {$Q^{A\rightarrow B, []}_\epsilon(f)$};
     \node (VC) [below of=QABprod] {$VC(f)$};
     \node (sq) [below of=VC, yshift=-10cm] {$sq(f)$};
     \node (discprod) [left of=sq, xshift=-2cm] {$\operatorname{disc}^{[]}(f)$};
     \node (ICprod) [left of=discprod, xshift=-3cm] {$IC^{[]}_\epsilon(f)$};
     \node (ACprod) [above of=ICprod, yshift=2cm] {$AC^{[]}_\epsilon(f)$};
     \node (ICprodpi) [left of=ICprod, xshift=-1cm] {$IC^{[]}(\pi)$};
     \node (Dprod) [above of=ICprodpi, yshift=7cm] {$D^{[]}_\epsilon(f)$}; 
     \node (ICextprodpi) [above right of=ICprodpi, yshift=0.5cm, xshift=-0.5cm] {$IC^{\operatorname{ext},[]}(\pi)$};
     \node (ICextprod) [above right of=ICprod, yshift=0.5cm]
     {$IC^{\operatorname{ext},[]}_\epsilon(f)$};
     \node (subABprod) [below right of=RABprod, xshift=1cm, yshift=-0.5cm]  {$\operatorname{sub}^{A \rightarrow B, []}_{\mathcal Y, \epsilon}(f)$};
     \node (wregprod) [above of=discprod] {$\operatorname{wreg}^{[]}(f)$};
     \node (wprtprod) [above of=wregprod] {$\operatorname{wprt}^{[]}_\epsilon(f)$};
     \node (prt+prod) [above left of=wprtprod] {$\operatorname{prt}^{+,[]}_\epsilon(f)$};
     \node (rdiscprod) [left of=prt+prod,  xshift=-1cm] {$\operatorname{rdisc}^{[]}_\epsilon(f)$};
     \node (barprtprod) [above right of=wprtprod] {$\bar{\operatorname{prt}}^{[]}_\epsilon(f)$};
     \node (prtprod) [above of=wprtprod, yshift=2cm] {$\operatorname{prt}^{[]}_\epsilon(f)$};
     \node (pICprod) [above of=prtprod] {$pIC^{\infty, []}_\epsilon(f)$};
     \node (ardiscprod) [below right of=Dprod, yshift=-0.5cm] {$\operatorname{ardisc}^{[]}_\epsilon(f)$};
     \draw [->] (DABprod) to node[midway, above] {Fact~\ref{fact:RAB-DAB}} (RABprod);
     \draw [->] (RABprod) to node[midway, above] {Fact~\ref{fact:RAB-rec}} (recABprod);
      \draw  [->] (recABprod) to node[midway, above] {Fact~\ref{fact:rec-QAB}} (QABprod);
      \draw [->] (QABprod) to node[midway, left] {Facts~\ref{fact:QAB-RAB},~\ref{fact:vc-RAB}} (VC);
      \draw [->] (VC)to node[midway, left] {Fact~\ref{fact:vc-sq}} (sq);
      \draw [->] (sq) to node[midway, above] {Fact~\ref{fact:sq-disc}} (discprod);
      \draw [->] (Dprod) to node[near end, right] {Fact~\ref{fact:Dmu-DmuAB_prod}} (DABprod);
      \draw [->] (ICprodpi) to node[midway, right] {Fact~\ref{fact:ICpi_Dmu_prod}} (Dprod);
      \draw [->] (ICprod) to node[midway, above] {Eq.(\ref{def:3.2})} (ICprodpi);
      \draw [->] (discprod) to node[midway, above] {Fact~\ref{fact:disc_IC_prod}} (ICprod);
      \draw [->] (ICprodpi) to node[midway, right ] {Fact~\ref{fact:pi_int_ext_prod}} (ICextprodpi);
      \draw [->] (ICextprodpi)  to node[near end] {} (ICprodpi);
      \draw [->] (ICextprod) to node[near start, right] {Fact~\ref{fact:IC_int_ext} } (ICprod);
      \draw [->] (ACprod) to node[near start, right] {Fact~\ref{fact:amortized_prod}} (ICprod);
      \draw [->] (ICprod) to node[midway, left] {} (ACprod);
      \draw [->] (ICprod) to node[near end, left] {} (ICextprod);
      \draw [->] (RABprod) to node[midway, left] {Fact~\ref{fact:RAB-subAB_prod}} (subABprod);
      \draw [->] (subABprod) to node[midway, left] {} (RABprod);
      \draw [->] (discprod) to node[midway, right] {Fact~\ref{fact:discmu-wregmu_prod}} (wregprod);
      \draw [->] (wregprod) to node[midway, right] {Fact~\ref{fact:wreg_wprt_prod}} (wprtprod);
      \draw [->] (wprtprod) to node[midway, left] {Fact~\ref{fact:prt_rs_prod}} (prt+prod);
      \draw [->]  (wprtprod) to node[midway, right] {Fact~\ref{fact:prt_rs_prod}} (barprtprod);
      \draw [->] (prt+prod) to node[midway, left] {Fact~\ref{fact:prt_rs_prod}}  (prtprod);
      \draw [->] (barprtprod) to node[midway, right] {Fact~\ref{fact:prt_rs_prod}}  (prtprod);
      \draw [->] (prtprod)  to node[midway, above] {Fact~\ref{fact:Dmu-prt_prod}} (Dprod);
      \draw [->] (prtprod) to node[midway, right] {Fact~\ref{fact:pICmu-prtmu_prod}} (pICprod);
      \draw [->] (pICprod) to node[midway, right] {} (prtprod);
      \draw [->] (Dprod) to node[midway, right] {Fact~\ref{facgt:Dmu_ardisc-_prod}} (ardiscprod);
      \draw [->] (ardiscprod) to node[midway, left] {} (Dprod);
      \draw [->] (rdiscprod) to node[midway, above] {Fact~\ref{fact:prt+-rdisc_prod}} (prt+prod);
      \draw [->] (prt+prod) to node[midway, above] {} (rdiscprod);
\end{tikzpicture}
\caption{Relationship between complexity measure in Class 5.}
\label{fig:C5}
\end{figure*}

\begin{theorem}\label{thm:5}
Class 5 includes the following complexity measures: 
\begin{multicols}{3}
    \begin{itemize}
        \item $D^{[]}_\epsilon(f)$
        \item $D^{A \rightarrow B, []}_\epsilon(f)$
        \item $R^{A \rightarrow B, []}_\epsilon(f)$
        \item $Q^{A \rightarrow B, []}_\epsilon(f)$
        \item $\operatorname{rec}^{A \rightarrow B, []}_\epsilon(f)$
        \item $\operatorname{sub}^{A \rightarrow B, []}_{\mathcal Y}(f, \epsilon)$
        \item $VC(f)$
        \item $sq(f)$
        \item $\operatorname{disc}^{[]}(f)$
        \item $IC^{[]}(\pi)$
        \item $IC^{[]}_\epsilon(f)$
        \item  $IC^{\operatorname{ext},[]}(\pi)$
        \item  $IC^{\operatorname{ext},[]}_\epsilon(f)$
        \item $pIC^{\infty,[]}_\epsilon(f)$
        \item $AC^{[]}_\epsilon(f)$
        \item $\operatorname{rdisc}^{[]}_\epsilon(f)$
        \item $\operatorname{ardisc}^{[]}_\epsilon(f)$
        \item $\operatorname{prt}^{[]}_\epsilon(f)$
        \item $\operatorname{prt}^{+, []}_\epsilon(f)$
        \item $\bar{\operatorname{prt}}^{[]}_\epsilon(f)$
        \item $\operatorname{wprt}^{[]}_\epsilon(f)$
        \item $\operatorname{wreg}_\epsilon(f)$
    \end{itemize}
\end{multicols}
\end{theorem}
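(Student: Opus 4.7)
The plan is to mirror the cyclic-implication strategy used in the proofs of Theorems~\ref{thm:1}--\ref{thm:4}: exhibit a single closed chain $\mathcal C_1 \rightarrow \mathcal C_2 \rightarrow \cdots \rightarrow \mathcal C_k \rightarrow \mathcal C_1$ passing through a backbone of representative measures, and then attach every remaining measure by a pair of $\leftrightarrow$ edges already drawn in Figure~\ref{fig:C5}. For the backbone I intend to use
\begin{align*}
D^{[]}_\epsilon(f) &\rightarrow D^{A\rightarrow B,[]}_\epsilon(f) \rightarrow R^{A\rightarrow B,[]}_\epsilon(f) \rightarrow \operatorname{rec}^{A\rightarrow B,[]}_\epsilon(f) \rightarrow Q^{A\rightarrow B,[]}_\epsilon(f) \\
&\rightarrow VC(f) \rightarrow sq(f) \rightarrow \operatorname{disc}^{[]}(f) \rightarrow IC^{[]}_\epsilon(f) \rightarrow IC^{[]}(\pi) \rightarrow D^{[]}_\epsilon(f),
\end{align*}
with each arrow supplied by the corresponding edge label in Figure~\ref{fig:C5}: for instance $D^{[]}_\epsilon \rightarrow D^{A\rightarrow B,[]}_\epsilon$ by Fact~\ref{fact:Dmu-DmuAB_prod}, the learning-theoretic bridge $Q^{A\rightarrow B,[]}_\epsilon \rightarrow VC \rightarrow sq \rightarrow \operatorname{disc}^{[]}$ by Facts~\ref{fact:QAB-RAB},~\ref{fact:vc-RAB},~\ref{fact:vc-sq},~\ref{fact:sq-disc}, the discrepancy-to-information step $\operatorname{disc}^{[]} \rightarrow IC^{[]}_\epsilon$ by Fact~\ref{fact:disc_IC_prod}, the peeling step $IC^{[]}_\epsilon \rightarrow IC^{[]}(\pi)$ by Equation~(\ref{def:3.2}), and the cycle-closing step $IC^{[]}(\pi) \rightarrow D^{[]}_\epsilon$ by Fact~\ref{fact:ICpi_Dmu_prod}.

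With the backbone in place, I will attach the remaining measures in four clusters. The one-way subdistribution bound attaches via $\operatorname{sub}^{A\rightarrow B,[]}_{\mathcal Y,\epsilon} \leftrightarrow R^{A\rightarrow B,[]}_\epsilon$ by Fact~\ref{fact:RAB-subAB_prod}. The external information complexities $IC^{\operatorname{ext},[]}(\pi)$, $IC^{\operatorname{ext},[]}_\epsilon$ and the amortized complexity $AC^{[]}_\epsilon$ attach to $IC^{[]}(\pi)$ and $IC^{[]}_\epsilon$ via Facts~\ref{fact:pi_int_ext_prod},~\ref{fact:IC_int_ext},~\ref{fact:amortized_prod}. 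The entire partition-bound family attaches to $\operatorname{disc}^{[]}$ and $D^{[]}_\epsilon$ through the chain $\operatorname{disc}^{[]} \leftrightarrow \operatorname{wreg}^{[]} \rightarrow \operatorname{wprt}^{[]} \rightarrow \{\operatorname{prt}^{+,[]},\bar{\operatorname{prt}}^{[]}\} \rightarrow \operatorname{prt}^{[]} \rightarrow D^{[]}_\epsilon$ via Facts~\ref{fact:discmu-wregmu_prod},~\ref{fact:wreg_wprt_prod},~\ref{fact:prt_rs_prod},~\ref{fact:Dmu-prt_prod}. Finally, the three lone equivalences $\operatorname{rdisc}^{[]}_\epsilon \leftrightarrow \operatorname{prt}^{+,[]}_\epsilon$, $\operatorname{ardisc}^{[]}_\epsilon \leftrightarrow D^{[]}_\epsilon$ and $\operatorname{prt}^{[]}_\epsilon \leftrightarrow pIC^{\infty,[]}_\epsilon$ are handled by Facts~\ref{fact:prt+-rdisc_prod},~\ref{facgt:Dmu_ardisc-_prod},~\ref{fact:pICmu-prtmu_prod}, respectively.

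The main obstacle is not any single implication but the uniformity of the ``$[]$''-decoration: every fact borrowed from Class~3 must still be valid when the distribution $\mu$ is restricted to a product distribution. Fortunately, most of the Class~3 inequalities were already formulated for an arbitrary fixed $\mu$, so their product-distribution specializations are immediate. The genuinely product-specific ingredients are the learning-theoretic bridges $Q^{A\rightarrow B,[]}_\epsilon \rightarrow VC(f) \rightarrow sq(f) \rightarrow \operatorname{disc}^{[]}(f)$, which exploit the well-known phenomenon that under a product distribution the one-way randomized communication complexity is tightly controlled by combinatorial VC and square-loss dimensions rather than by the matrix-analytic measures that govern the non-product setting; I will invoke these as the cited facts rather than re-prove them. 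Once all these ingredients are in place, traversing the cycle in either direction and composing with the attached equivalences yields the claimed constant-equivalence of every measure listed in the theorem.
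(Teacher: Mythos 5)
Your proposal is correct and follows essentially the same route as the paper: the identical backbone cycle $D^{[]}_\epsilon \rightarrow D^{A\rightarrow B,[]}_\epsilon \rightarrow R^{A\rightarrow B,[]}_\epsilon \rightarrow \operatorname{rec}^{A\rightarrow B,[]}_\epsilon \rightarrow Q^{A\rightarrow B,[]}_\epsilon \rightarrow VC \rightarrow sq \rightarrow \operatorname{disc}^{[]} \rightarrow IC^{[]}_\epsilon \rightarrow IC^{[]}(\pi) \rightarrow D^{[]}_\epsilon$, the same partition-bound branch closed back into $D^{[]}_\epsilon$, and the same $\leftrightarrow$ attachments, all justified by the same facts (\ref{fact:Dmu-DmuAB_prod}--\ref{fact:pICmu-prtmu_prod} as labelled in Figure~\ref{fig:C5}). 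The only nit is terminological: $sq(f)$ is the statistical query complexity, not a ``square-loss'' dimension, but this does not affect any step of the argument.
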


\begin{proof}
First, we have $D^{A \rightarrow B, []}_\epsilon(f) \rightarrow R^{A \rightarrow B, []}_\epsilon(f)$ by Fact~\ref{fact:RAB-DAB}; $R^{A \rightarrow B, []}_\epsilon(f) \rightarrow \operatorname{rec}^{A \rightarrow B, []}_\epsilon(f)$ by Fact~\ref{fact:RAB-rec};  $\operatorname{rec}^{A \rightarrow B, []}_\epsilon(f) \rightarrow Q^{A \rightarrow B, []}_\epsilon(f)$ by Fact~\ref{fact:rec-QAB}; $Q^{A \rightarrow B, []}_\epsilon(f) \rightarrow VC(f)$ by Facts~\ref{fact:QAB-RAB} and~\ref{fact:vc-RAB}; $VC(f) \rightarrow sq(f)$ by Fact~\ref{fact:vc-sq}; $sq(f) \rightarrow \operatorname{disc}^{[]}(f)$ by Fact~\ref{fact:sq-disc}; $\operatorname{disc}^{[]}(f) \rightarrow IC^{[]}_\epsilon(f)$ by Fact~\ref{fact:disc_IC_prod}; $IC^{[]}_\epsilon(f) \rightarrow IC^{[]}(\pi)$ by Equation~\ref{def:3.2}; $IC^{[]}(\pi) \rightarrow D^{[]}_\epsilon(f)$ by Fact~\ref{fact:ICpi_Dmu_prod}. 

From $\operatorname{disc}^{[]}(f)$
, we have $\operatorname{disc}^{[]}(f) \rightarrow \operatorname{wreg}^{[]}(f)$ by Fact~\ref{fact:discmu-wregmu_prod}; 
$\operatorname{wreg}^{[]}(f) \rightarrow \operatorname{wprt}^{[]}_\epsilon(f)$ by Fact~\ref{fact:wreg_wprt_prod};
$\operatorname{wprt}^{[]}_\epsilon(f) \rightarrow \operatorname{prt}^{+, []}_\epsilon(f), \bar{\operatorname{prt}}^{[]}_\epsilon(f)$ by Fact~\ref{fact:prt_rs_prod};
$\operatorname{prt}^{+, []}_\epsilon(f), \bar{\operatorname{prt}}^{[]}_\epsilon(f) \rightarrow \operatorname{prt}^{[]}_\epsilon(f)$ by Fact~\ref{fact:prt_rs_prod};
$\operatorname{prt}^{[]}_\epsilon(f) \rightarrow D^{[]}_\epsilon(f)$ by Fact~\ref{fact:Dmu-prt_prod}. 

In addition, $R^{A \rightarrow B, []}_\epsilon(f) \leftrightarrow \operatorname{sub}^{A \rightarrow B, []}_{\mathcal Y, \epsilon}(f)$ by Fact~\ref{fact:RAB-subAB_prod}; 
$D^{[]}_\epsilon(f) \leftrightarrow \operatorname{ardisc}^{[]}_\epsilon(f)$ by Fact~\ref{facgt:Dmu_ardisc-_prod};
$\operatorname{prt}^{[]}_\epsilon(f) \leftrightarrow pIC^{\infty, []}_\epsilon(f)$ by Fact~\ref{fact:pICmu-prtmu_prod};
$\operatorname{prt}^{+, []}_\epsilon(f) \leftrightarrow \operatorname{rdisc}^{[]}_\epsilon(f)$ by Fact~\ref{fact:prt+-rdisc_prod};
$IC^{[]}(\pi) \leftrightarrow IC^{\operatorname{ext},[]}(\pi)$ by Fact~\ref{fact:pi_int_ext_prod};
$IC^{[]}_\epsilon(f) \leftrightarrow IC^{\operatorname{ext},[]}_\epsilon(f)$ by Fact~\ref{fact:IC_int_ext};
$IC^{[]}_\epsilon(f) \leftrightarrow AC^{[]}_\epsilon(f)$ by Fact~\ref{fact:amortized_prod}. 
\end{proof}

Now, we state facts used in the proof of Theorem~\ref{thm:5}. We begin by listing results from Section~\ref{sec:useful_lemma_class_3} that hold for product distribution. 

\begin{fact}[{\cite[Proposition~2.1] {Sherstov2010}, \cite[Proposition~3.28]{KNisan96}}]\label{fact:disc-Dmu_prod}
For every function $f:\mathcal X \times \mathcal Y \rightarrow \{0, 1\}$ and every $\epsilon \geq 0$, $D^{[]}_{1/2 - \epsilon}(f) \geq \log \left(\frac{2\epsilon}{\operatorname{disc}^{[]}(f)}\right)$. 
\end{fact}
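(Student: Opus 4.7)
The plan is to obtain the product-distribution statement as a direct corollary of the non-product Fact~\ref{fact:disc-Dmu}, since the proof of the non-product version goes through without any modification when $\mu$ is restricted to be a product distribution. The key observation is that $\operatorname{disc}^{[]}(f) = \min_{\mu \text{ product}} \operatorname{disc}_\mu(f)$ and $D^{[]}_{\epsilon}(f) = \max_{\mu \text{ product}} D^\mu_\epsilon(f)$, so once we have the pointwise inequality $D^\mu_{1/2 - \epsilon}(f) \geq \log(2\epsilon/\operatorname{disc}_\mu(f))$ for each product distribution $\mu$, choosing the right $\mu$ finishes the proof.

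First I would fix an arbitrary product distribution $\mu = \mu_{\mathcal X} \times \mu_{\mathcal Y}$ on $\mathcal X \times \mathcal Y$. Applying Fact~\ref{fact:disc-Dmu} to this particular $\mu$ yields
\[
D^\mu_{1/2 - \epsilon}(f) \;\geq\; \log\!\left(\frac{2\epsilon}{\operatorname{disc}_\mu(f)}\right).
\]
Next, I would let $\mu^\star$ be a product distribution that (essentially) attains the infimum in the definition of $\operatorname{disc}^{[]}(f)$, so that $\operatorname{disc}_{\mu^\star}(f) = \operatorname{disc}^{[]}(f)$. Plugging $\mu^\star$ into the displayed inequality and using $D^{[]}_{1/2 - \epsilon}(f) \geq D^{\mu^\star}_{1/2 - \epsilon}(f)$ (by definition of $D^{[]}$ as a maximum over product distributions) gives the desired bound.

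The only subtlety worth flagging is that one should verify the proof of Fact~\ref{fact:disc-Dmu} does not invoke any property of $\mu$ beyond being a probability distribution on $\mathcal X \times \mathcal Y$. A quick inspection of the standard argument (partition the communication matrix of a cost-$c$ deterministic protocol into at most $2^c$ monochromatic rectangles, bound the advantage on each rectangle by $\operatorname{disc}_\mu(f)$, sum to get total advantage at most $2^c \operatorname{disc}_\mu(f)$, and compare with $2\epsilon$) confirms that the derivation is completely agnostic to the structure of $\mu$. Hence restricting $\mu$ to product distributions causes no loss, and the inequality is inherited verbatim. There is no real obstacle here; the content of the statement is simply that the non-product bound specializes cleanly to the product-distribution setting.
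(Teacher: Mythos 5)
Your proposal is correct and matches the paper's treatment: the paper states this fact by citation as the product-distribution specialization of the general discrepancy bound (Fact~\ref{fact:disc-Dmu}), which is exactly what you do by applying that bound to a (near-)minimizing product distribution $\mu^\star$ and using $D^{[]}_{1/2-\epsilon}(f)\geq D^{\mu^\star}_{1/2-\epsilon}(f)$. Your check that the underlying rectangle argument uses nothing beyond $\mu$ being a distribution is the right observation, so no further work is needed.
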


\begin{fact}[{\cite[Corollary~1.4]{kerenidis2015lower}}]\label{fact:Dmu-prt_prod}
For every $\epsilon > 0$ and every $f:\mathcal X\times \mathcal Y \rightarrow \{0, 1\}$,  $D^{[]}_\epsilon(f) \geq \log \bar{\operatorname{prt}}^{[]}_\epsilon(f)$. 
\end{fact}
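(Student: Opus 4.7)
The plan is to reduce this to the fixed-distribution inequality $D^\mu_\epsilon(f) \geq \log \bar{\operatorname{prt}}^\mu_\epsilon(f)$ that already appears as (the proof of) Fact~\ref{fact:Dmu-prt}, and to observe that the argument specializes to product distributions without modification. Recall that by the conventions established for Class~5, the superscript $[]$ means ``worst case over product distributions'', so
\[
D^{[]}_\epsilon(f) \;=\; \max_{\mu \text{ product}} D^\mu_\epsilon(f)
\qquad\text{and}\qquad
\bar{\operatorname{prt}}^{[]}_\epsilon(f) \;=\; \max_{\mu \text{ product}} \bar{\operatorname{prt}}^\mu_\epsilon(f).
\]

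First, I would verify the key fixed-distribution inequality for an arbitrary (not necessarily product) distribution $\mu$: given a deterministic protocol $\Pi$ of cost $c = D^\mu_\epsilon(f)$ that errs with probability at most $\epsilon$ under $\mu$, its leaves induce a partition of $\mathcal X \times \mathcal Y$ into at most $2^c$ rectangles, each labelled by the protocol's output. Interpreting this labelled partition as a feasible point of the primal LP defining $\bar{\operatorname{prt}}^\mu_\epsilon(f)$ (set $w_{R,z}=1$ if $R$ is a leaf labelled $z$ and $0$ otherwise, noting the correctness condition gives $\eta$ with $1/\eta \leq 2^c$), we obtain $\bar{\operatorname{prt}}^\mu_\epsilon(f) \leq 2^c$, i.e., $D^\mu_\epsilon(f) \geq \log \bar{\operatorname{prt}}^\mu_\epsilon(f)$. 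This is exactly the content of Fact~\ref{fact:Dmu-prt} and nothing in the argument uses any property of $\mu$ beyond being a probability distribution on $\mathcal X \times \mathcal Y$.

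Next, I would take the maximum over product distributions. Let $\mu^\star$ be a product distribution achieving $\bar{\operatorname{prt}}^{[]}_\epsilon(f) = \bar{\operatorname{prt}}^{\mu^\star}_\epsilon(f)$ (if no maximizer exists, take a sequence approaching the supremum and pass to the limit). Applying the fixed-distribution bound at $\mu^\star$ gives
\[
D^{[]}_\epsilon(f) \;\geq\; D^{\mu^\star}_\epsilon(f) \;\geq\; \log \bar{\operatorname{prt}}^{\mu^\star}_\epsilon(f) \;=\; \log \bar{\operatorname{prt}}^{[]}_\epsilon(f),
\]
which is the claimed inequality.

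There is essentially no obstacle: the whole content of the fact is that restricting both sides of Kerenidis~et~al.'s Corollary~1.4 to product distributions preserves the inequality, and this follows because the distributional argument is ``pointwise'' in $\mu$. The only thing I would be a little careful about is to ensure that the same product distribution witnesses (or nearly witnesses) the right-hand side and is then used to lower-bound the left-hand side, so that no mismatch between maximizers arises; the displayed chain above makes this explicit.
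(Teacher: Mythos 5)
Your proposal is correct and takes essentially the route the paper intends: Fact~\ref{fact:Dmu-prt_prod} is just the restriction of Kerenidis et al.'s distribution-wise inequality (Fact~\ref{fact:Dmu-prt}, whose protocol-leaves-to-LP argument you re-derive) to product distributions, followed by evaluating both sides at a (near-)maximizing product distribution, exactly as your displayed chain does. The only cosmetic slip is in the feasibility step: in the LP of Definition~\ref{def:realaxed_prt} the weights must satisfy $\sum_{R,z} p_{R,z}=1$, so one sets $p_{R,z}=1/L$ over the $L\leq 2^{c}$ labelled leaves and $\eta=1/L$ rather than $w_{R,z}=1$, which still yields $1/\eta\leq 2^{c}$ and does not affect the conclusion.
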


\begin{fact}[{\cite[Corollary~2]{fontes2016relative}}]\label{facgt:Dmu_ardisc-_prod}
For any $f: \operatorname{supp(\mu)} \rightarrow \{0, 1\}$ and $\epsilon \in (0, 1/8)$, we have $\log \left(\operatorname{ardisc}^{[]}_\epsilon(f) \right) \leq D^{[]}_\epsilon(f) \leq \left(\log \operatorname{ardisc}^{[]}_{\epsilon/8}(f) + 2\log \frac{1}{\epsilon} + 6\right)^2$. 
\end{fact}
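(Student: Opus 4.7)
The plan is to obtain this product-distribution statement as an immediate consequence of Fact~\ref{facgt:Dmu_ardisc}, the general-distribution analogue (Corollary 2 of~\cite{fontes2016relative}), by applying that pointwise bound at every product distribution and then taking the appropriate maximum. Since the $[]$ superscript is defined to restrict the supremum over $\mu$ to product distributions (as indicated in the opening paragraph of Section~\ref{sec:useful_lemma_class_3}'s counterpart for Class 5), we have $D^{[]}_\epsilon(f) = \max_{\mu \text{ product}} D^{\mu}_\epsilon(f)$ and $\operatorname{ardisc}^{[]}_\epsilon(f) = \max_{\mu \text{ product}} \operatorname{ardisc}^{\mu}_\epsilon(f)$, and product distributions are a subclass of all distributions over $\mathcal X\times\mathcal Y$, so Fact~\ref{facgt:Dmu_ardisc} applies to each such $\mu$.

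For the lower bound, I would first note that for every product distribution $\mu$, Fact~\ref{facgt:Dmu_ardisc} yields $\log \operatorname{ardisc}^{\mu}_\epsilon(f) \leq D^{\mu}_\epsilon(f)$. Taking the maximum over product $\mu$ on both sides (using that $\max_\mu A(\mu) \leq \max_\mu B(\mu)$ whenever $A(\mu)\leq B(\mu)$ pointwise) gives $\log \operatorname{ardisc}^{[]}_\epsilon(f) \leq D^{[]}_\epsilon(f)$, as required.

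For the upper bound, I would let $\mu^{\star}$ be a product distribution realizing the maximum in the definition of $D^{[]}_\epsilon(f)$. By Fact~\ref{facgt:Dmu_ardisc} applied to $\mu^{\star}$,
\[
D^{[]}_\epsilon(f) \;=\; D^{\mu^{\star}}_\epsilon(f) \;\leq\; \left(\log \operatorname{ardisc}^{\mu^{\star}}_{\epsilon/8}(f) + 2\log \tfrac{1}{\epsilon} + 6\right)^2.
\]
Since $\operatorname{ardisc}^{\mu^{\star}}_{\epsilon/8}(f) \leq \max_{\mu \text{ product}} \operatorname{ardisc}^{\mu}_{\epsilon/8}(f) = \operatorname{ardisc}^{[]}_{\epsilon/8}(f)$ and the function $x \mapsto (x + c)^2$ is monotone nondecreasing for $x + c \geq 0$ (which holds here since all the terms in the parenthesis are nonnegative for $\epsilon \in (0,1/8)$), the right-hand side is bounded by $(\log \operatorname{ardisc}^{[]}_{\epsilon/8}(f) + 2\log(1/\epsilon) + 6)^2$, completing the proof.

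There is no serious obstacle: the content is entirely absorbed into Fact~\ref{facgt:Dmu_ardisc}. The only subtlety is that the upper bound involves a nonlinear (quadratic) envelope, so one has to argue via monotonicity rather than by simply taking maxima termwise. A supremum (rather than maximum) should be used if the set of product distributions does not attain its optimum, but this does not affect the inequality.
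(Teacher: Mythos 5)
Your proposal is correct and matches the paper's treatment: the paper states this fact by directly citing Corollary~2 of~\cite{fontes2016relative} (i.e.\ Fact~\ref{facgt:Dmu_ardisc}) and specializing it to product distributions, which is exactly what you do, with the maximization over product distributions and the monotonicity of $x\mapsto(x+c)^2$ spelled out explicitly. The only remark is that the paper's $[]$ notation can also be read as fixing an arbitrary product distribution $\mu$ (note the statement still refers to $\operatorname{supp}(\mu)$), in which case the fact is an even more immediate substitution into Fact~\ref{facgt:Dmu_ardisc}; your argument covers either reading.
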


\begin{fact}[{\cite[Proposition~1]{fontes2016relative}}]\label{fact:prt_rs_prod}
For all $f, \epsilon$, we have $\operatorname{wprt}^{[]}_\epsilon(f) \leq \operatorname{prt}^{+, []}_\epsilon(f) \leq \operatorname{prt}^{[]}_\epsilon(f)$ and $\operatorname{wprt}^{[]}_\epsilon(f) \leq \bar{\operatorname{prt}}^{[]}_\epsilon(f) \leq \operatorname{prt}^{[]}_\epsilon(f)$. 
\end{fact}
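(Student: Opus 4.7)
My plan is to reduce the claim to its non-product analogue, Fact~\ref{fact:prt_rs}, which already establishes both inequality chains $\operatorname{wprt}^\mu_\epsilon(f) \leq \operatorname{prt}^{+,\mu}_\epsilon(f) \leq \operatorname{prt}^\mu_\epsilon(f)$ and $\operatorname{wprt}^\mu_\epsilon(f) \leq \bar{\operatorname{prt}}^\mu_\epsilon(f) \leq \operatorname{prt}^\mu_\epsilon(f)$ for \emph{every} distribution $\mu$ on $\mathcal X \times \mathcal Y$. These hold in particular whenever $\mu = \mu_{\mathcal X} \times \mu_{\mathcal Y}$ is a product distribution. Since the notational convention of this section sets $\mathcal C^{[]}_\epsilon(f) = \max_{\mu\text{ product}} \mathcal C^\mu_\epsilon(f)$, taking the supremum over product $\mu$ on both sides of each pointwise inequality preserves the ordering (if $A^\mu \leq B^\mu$ for all product $\mu$, then $\sup_\mu A^\mu \leq \sup_\mu B^\mu$), which delivers $\operatorname{wprt}^{[]}_\epsilon(f) \leq \operatorname{prt}^{+,[]}_\epsilon(f) \leq \operatorname{prt}^{[]}_\epsilon(f)$ and $\operatorname{wprt}^{[]}_\epsilon(f) \leq \bar{\operatorname{prt}}^{[]}_\epsilon(f) \leq \operatorname{prt}^{[]}_\epsilon(f)$ as stated.

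For the reader's benefit, I would briefly sketch why Fact~\ref{fact:prt_rs} itself is true, so that the argument is not merely a citation. All four quantities are optima of maximization LPs that share the same objective $\phi - \epsilon\kappa$ and the same family of rectangle constraints $\phi(R) - \kappa\,\mu(R \cap f^{-1}(z)) \leq 1$; they differ only in the side constraints imposed on the dual variables. Moving from $\operatorname{prt}^\mu$ to $\operatorname{prt}^{+,\mu}$ adds the nonnegativity requirement $\phi_{xy} \geq 0$, and moving from $\operatorname{prt}^{+,\mu}$ to $\operatorname{wprt}^\mu$ further imposes $\phi_{xy} \leq \kappa\mu_{xy}$. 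Each extra constraint shrinks the feasible region of a maximization program and hence can only decrease the optimum, yielding the first chain. The chain through $\bar{\operatorname{prt}}^\mu$ follows by a parallel comparison argument using the formulation in Definition~\ref{def:realaxed_prt}: the relaxed partition LP is a tightening of the partition LP but a relaxation of the weak partition LP.

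The only real obstacle is making the notational reduction clean: verifying that each of $\operatorname{prt}^{[]}$, $\operatorname{prt}^{+,[]}$, $\bar{\operatorname{prt}}^{[]}$, and $\operatorname{wprt}^{[]}$ is indeed defined as the maximum of its fixed-$\mu$ version over product distributions, rather than by re-solving an LP with a product-distribution constraint baked in. The convention is spelled out at the start of Section~5, so once that is cited, the proof reduces to a one-line application of Fact~\ref{fact:prt_rs} together with monotonicity of $\sup$ under pointwise inequalities, and no non-trivial technical work is required.
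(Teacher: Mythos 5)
Your proposal is correct and matches the paper's treatment: the paper simply cites \cite[Proposition~1]{fontes2016relative} and notes that the general-distribution inequalities of Fact~\ref{fact:prt_rs} specialize to product distributions, which is exactly your reduction (pointwise inequalities for every product $\mu$, then monotonicity under taking the maximum over product distributions). Your added sketch of the underlying LP-constraint comparison is consistent with the definitions given (each of $\operatorname{prt}^{+,\mu}$ and $\operatorname{wprt}^{\mu}$ only adds constraints to the same maximization program), so no gap remains.
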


\begin{fact}[{\cite[Theorem~5]{fontes2016relative}}]\label{fact:prt+-rdisc_prod}
Let $[]$ be a product distribution on $\mathcal X \times \mathcal Y$ and let $f$ be a Boolean on the support of $\mu$ such that  either $\operatorname{rdisc}^{[]}_\epsilon(f) \geq 1$ or $\operatorname{prt}^{+, []}_{4\epsilon}(f) > 2$ . Then, for any $\epsilon\in (0, 1/4)$, we have $\frac{\epsilon}{2} \operatorname{prt}^{+, []}_{4\epsilon}(f) \leq \operatorname{rdisc}^{[]}_\epsilon(f)\leq \operatorname{prt}^{+, []}_\epsilon(f)$. 
\end{fact}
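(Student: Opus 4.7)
The plan is to observe that this statement is simply the specialization, to product distributions, of the more general inequality already recorded as Fact~\ref{fact:prt+-rdisc}. That fact is stated for an arbitrary distribution $\mu$ on $\mathcal X\times\mathcal Y$; nothing in the linear programs defining $\operatorname{prt}^{+,\mu}_\epsilon(f)$ and $\operatorname{rdisc}^{\mu}_\epsilon(f)$ uses any structural property of $\mu$ beyond being a probability distribution, so restricting attention to $\mu=\mu_{\mathcal X}\otimes\mu_{\mathcal Y}$ leaves every manipulation in the argument intact. Thus, once Fact~\ref{fact:prt+-rdisc} is in hand, the claim follows by substituting a product distribution for $\mu$ and relabelling with the $[]$ superscript.

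To make this concrete, I would first verify the easy direction $\operatorname{rdisc}^{[]}_\epsilon(f)\le\operatorname{prt}^{+,[]}_\epsilon(f)$. Given any feasible triple $(\alpha,\delta,\phi)$ for the relative discrepancy LP of Definition~\ref{def:rdisc} on a product $\mu$, I would use $\phi$ as the ``positive'' measure in the positive partition LP and set $\kappa=1/\delta$. The constraint $(\tfrac12-\alpha)\phi(R)\le\mu(R\cap f^{-1}(z))$ whenever $\phi(R)\ge\delta$ rescales exactly to $\phi(R)-\kappa\mu(R\cap f^{-1}(z))\le \kappa\cdot 0 + 1$ after absorbing $(\tfrac12-\alpha)$ into the scaling, yielding an objective value at least $\tfrac{1}{\delta}(\tfrac12-\alpha-\epsilon)$; taking the supremum gives the inequality.

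The harder direction, $\tfrac{\epsilon}{2}\operatorname{prt}^{+,[]}_{4\epsilon}(f)\le\operatorname{rdisc}^{[]}_\epsilon(f)$, is where the main technical work lives. Starting from a near-optimal $(\kappa,\phi)$ for the positive partition LP at error $4\epsilon$, I would renormalize $\phi$ to a probability measure $\tilde\phi=\phi/\phi(\mathcal X\times\mathcal Y)$ and take $\delta$ proportional to $1/\kappa$, and choose $\alpha$ so that $\tfrac12-\alpha$ absorbs the rescaling factor. The constraint $\phi(R)\le 1+\kappa\mu(R\cap f^{-1}(z))$ translates into the rectangular constraint of the relative discrepancy LP on those $R$ with $\tilde\phi(R)\ge\delta$. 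The hypothesis $\operatorname{prt}^{+,[]}_{4\epsilon}(f)>2$ (or the alternative $\operatorname{rdisc}^{[]}_\epsilon(f)\ge 1$) is exactly what is needed to guarantee that the resulting $\tfrac12-\alpha-\epsilon$ is positive, so that dividing by $\delta$ produces a nontrivial lower bound on $\operatorname{rdisc}^{[]}_\epsilon(f)$. The main obstacle is bookkeeping the constants so that error $4\epsilon$ on one side matches $\epsilon$ on the other while losing only a factor of $\epsilon/2$ in the objective; no part of this argument uses that $\mu$ factorizes, so the proof from the non-product case transfers verbatim to product $\mu$, yielding the claim.
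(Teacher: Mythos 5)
Your proposal is correct and matches the paper's treatment: the paper gives no separate argument for this fact, treating it exactly as you do, namely as the specialization of the arbitrary-distribution statement (Fact~\ref{fact:prt+-rdisc}, cited from Theorem~5 of the same reference) to a product distribution, which is immediate since nothing in the definitions of $\operatorname{prt}^{+,\mu}_\epsilon(f)$ and $\operatorname{rdisc}^{\mu}_\epsilon(f)$ requires $\mu$ to be non-product. Your additional sketch of the two LP directions is not needed for this step and is not carried out in the paper either, so the core of your argument coincides with the paper's.
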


\begin{fact}[{\cite[Theorem~6]{fontes2016relative}}]\label{def:pprt-ardisc_prod}
For any distribution $\mu$, for any function $f: \operatorname{supp}(f) \rightarrow \{0, 1\}$ and $\epsilon \in (0 ,1/4)$ such that either $\operatorname{ardisc}^{[]}_\epsilon(f) \geq 1$ or $\operatorname{pprt}^{[]}_{4\epsilon}(f) \geq 2$, we have $\frac{\epsilon}{2}\operatorname{pprt}^{[]}_{4\epsilon} \leq \operatorname{ardisc}^{[]}_\epsilon(f)\leq \operatorname{pprt}^{[]}_\epsilon(f)$. 
\end{fact}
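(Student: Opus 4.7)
The plan is to observe that the analogous statement over arbitrary distributions appears earlier as Fact~\ref{def:pprt-ardisc}, which is Theorem~8 of~\cite{fontes2016relative}. That proof proceeds by fixing a distribution $\mu$ and relating the two optimization problems $\operatorname{pprt}^{\mu}$ and $\operatorname{ardisc}^{\mu}$ \emph{pointwise} in $\mu$; the outer maximization over $\mu$ is never used inside the argument. So the strategy is simply to replay the same proof, restricting $\mu$ to the family of product distributions from the outset, and observe that every step respects this restriction.

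First I would establish the easier upper bound $\operatorname{ardisc}^{[]}_\epsilon(f)\le \operatorname{pprt}^{[]}_\epsilon(f)$. Starting from an optimal feasible solution to the public-coin partition LP under a product distribution $[]$ --- a probability distribution over labelled partitions $P=\{(z_i,R_i)\}$ together with a common value $\eta$ --- I would extract, for each partition $P$ in the support, the rectangle weights and renormalize them into a distribution $\phi^P$ on rectangles. Setting $\delta$ to be the reciprocal of $\operatorname{pprt}^{[]}_\epsilon(f)$ and $\alpha=\epsilon$, the correctness constraint $(1-\epsilon)\eta$ of the partition bound translates directly into the monochromaticity guarantee $\bigl(\tfrac12-\alpha\bigr)\phi^P(R)\le [](R\cap f^{-1}(z))$ for each rectangle with $\phi^P(R)\ge\delta$, since these rectangles are the ones that carry nontrivial probability of producing a correct answer. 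This gives a feasible point in the $\operatorname{ardisc}^{[]}$ program whose objective is $\operatorname{pprt}^{[]}_\epsilon(f)$.

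Second, the harder direction $\tfrac{\epsilon}{2}\operatorname{pprt}^{[]}_{4\epsilon}(f)\le \operatorname{ardisc}^{[]}_\epsilon(f)$ requires going the other way: given witnesses $(\alpha,\delta,\{\phi^P\})$ for $\operatorname{ardisc}^{[]}_\epsilon(f)\le \beta$, I would build a distribution over labelled partitions with objective roughly $2\beta/\epsilon$ in the $\operatorname{pprt}^{[]}_{4\epsilon}$ LP. The natural construction is to sample a partition $P$ uniformly from those appearing in the witness, then sample a rectangle from $\phi^P$, and assign to $R$ the label $z$ that maximizes $[](R\cap f^{-1}(z))$. To convert the ``large-rectangle'' correctness guarantee into a global $1-4\epsilon$ acceptance bound, I would apply a thresholding argument: rectangles with $\phi^P(R)<\delta$ contribute total mass at most $|\mathcal{R}|\delta$ per partition, which the side hypothesis ($\operatorname{ardisc}^{[]}_\epsilon(f)\ge1$ or $\operatorname{pprt}^{[]}_{4\epsilon}(f)\ge2$) forces to be negligible relative to the true acceptance mass; rectangles with $\phi^P(R)\ge\delta$ inherit the $(\tfrac12-\alpha-\epsilon)$-monochromaticity from the $\operatorname{ardisc}^{[]}$ constraint and therefore contribute the desired acceptance probability. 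The two factors $\tfrac{\epsilon}{2}$ and the loosening from $\epsilon$ to $4\epsilon$ come precisely from this thresholding accounting plus a union bound over both labels $z\in\{0,1\}$.

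The main obstacle, as in the general-distribution case, will be the lower bound direction and specifically the careful bookkeeping that separates the rectangles into the ``below-$\delta$'' and ``above-$\delta$'' regimes without losing more than a factor of $\epsilon/2$; the side conditions are there precisely to rule out the degenerate regime where $\operatorname{pprt}^{[]}_{4\epsilon}(f)$ is close to $1$ and thresholding would swallow all the signal. Because the entire argument is carried out for a single fixed $\mu=[]$ that happens to be product, the product-distribution version follows with no additional ideas beyond those in~\cite{fontes2016relative}.
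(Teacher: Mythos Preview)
Your proposal is correct and aligned with the paper's treatment: the paper does not give its own proof of this fact but simply cites it (as Theorem~6 of \cite{fontes2016relative}) among a list of Class~3 results that specialize verbatim to product distributions. Your observation that the argument from \cite{fontes2016relative} is carried out pointwise in $\mu$, and hence restricts to the product case without change, is exactly the implicit justification the paper relies on when restating Fact~\ref{def:pprt-ardisc} as Fact~\ref{def:pprt-ardisc_prod}.
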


\begin{fact}[{\cite[Proposition~2.16]{nolin2020communication}}]\label{fact:discmu-wregmu_prod}
Let $f:\mathcal X \times \mathcal Y \rightarrow \{0, 1\}$. Then, $\operatorname{wereg}^{[]}(f) = \operatorname{disc}^{[]}(f)$. 
\end{fact}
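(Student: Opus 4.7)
The plan is to observe that the claimed identity is the product-distribution restriction of Fact~\ref{fact:discmu-wregmu}, and crucially that the equality $\operatorname{wreg}^\mu(f) = \operatorname{disc}^\mu(f)$ is actually a \emph{pointwise} identity in $\mu$: it holds for every fixed distribution on $\mathcal X \times \mathcal Y$, and therefore in particular for every product distribution. So the only real content is a one-line unpacking of the definitions, followed by an appeal to whatever convention is being used to go from fixed-distribution quantities to the $[\,]$ superscript.

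First, I would recall that for a Boolean function we have $|\mathcal Z| = 2$, so for any distribution $\mu$ the weak regularity (using the second form from the definition) is
\[
\operatorname{wreg}^\mu(f) \;=\; \max_{R,\,z \in \{0,1\}} \bigl(\mu(R) - 2\,\mu(R \cap f^{-1}(z))\bigr).
\]
Using $\mu(R) = \mu(R \cap f^{-1}(0)) + \mu(R \cap f^{-1}(1))$, the inner expression equals $\mu(R \cap f^{-1}(1-z)) - \mu(R \cap f^{-1}(z))$, and maximising over $z \in \{0,1\}$ gives the absolute value
\[
\bigl|\,\mu(R \cap f^{-1}(0)) - \mu(R \cap f^{-1}(1))\,\bigr| \;=\; \operatorname{disc}_\mu(R, f),
\]
which is precisely the per-rectangle discrepancy of Definition~\ref{def:disc}. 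Taking the maximum over all rectangles $R$ yields the pointwise identity $\operatorname{wreg}^\mu(f) = \operatorname{disc}_\mu(f)$ for every distribution $\mu$.

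Second, I would restrict attention to product distributions $\mu = \mu_{\mathcal X} \otimes \mu_{\mathcal Y}$. Since the identity holds for every single $\mu$, it holds in particular whenever $\mu$ is a product distribution, and the quantities $\operatorname{wreg}^{[]}(f)$ and $\operatorname{disc}^{[]}(f)$ (each obtained from $\operatorname{wreg}^\mu(f)$ and $\operatorname{disc}_\mu(f)$ respectively by the same specialisation to product $\mu$, e.g.\ by maximising over product distributions) must therefore be equal.

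The main obstacle is essentially nonexistent: this is pure bookkeeping of the definitions. The only minor point to be careful about is that the $[\,]$-superscript convention applied to $\operatorname{wreg}$ and to $\operatorname{disc}$ must match (both are specialisations over the same class of product distributions); once that is checked, the pointwise identity transfers automatically. One could equivalently cite Proposition~2.16 of~\cite{nolin2020communication} verbatim, since it is proved for arbitrary $\mu$ and therefore subsumes the product-distribution case.
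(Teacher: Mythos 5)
Your proposal is correct and matches how the paper handles this fact: the paper gives no separate argument but simply cites the same Proposition~2.16 of~\cite{nolin2020communication} used for Fact~\ref{fact:discmu-wregmu}, since the identity $\operatorname{wreg}^\mu(f)=\operatorname{disc}_\mu(f)$ is pointwise in $\mu$ and hence specializes to product distributions. Your explicit unpacking of the Boolean case ($|\mathcal Z|=2$, so $\mu(R)-2\mu(R\cap f^{-1}(z))=\mu(R\cap f^{-1}(1-z))-\mu(R\cap f^{-1}(z))$, maximized to the per-rectangle discrepancy) is a correct verification of that pointwise identity and adds nothing inconsistent with the paper's treatment.
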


\begin{fact}[{\cite[Proposition~2.18]{nolin2020communication}}]\label{fact:wreg_wprt_prod}
Let $f: \mathcal X \times \mathcal Y \rightarrow \mathcal Z$ be a total function. Then, $\operatorname{wprt}^{[]}_\epsilon(f) \geq \frac{1 - \epsilon\cdot \frac{\mathcal Z|}{|\mathcal Z| - 1}}{\operatorname{wreg}^{[]}(f)}$. 
\end{fact}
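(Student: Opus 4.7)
The plan is to obtain this inequality as an immediate specialization of the general-distribution version already recorded as Fact~\ref{fact:wreg_wprt}. Recall that Proposition~2.18 of Nolin~\cite{nolin2020communication} establishes
\[
\operatorname{wprt}^{\mu}_\epsilon(f) \;\geq\; \frac{1 - \epsilon\cdot \frac{|\mathcal Z|}{|\mathcal Z| - 1}}{\operatorname{wreg}^\mu(f)}
\]
for \emph{every} distribution $\mu$ on $\mathcal X \times \mathcal Y$. Since the inequality is distribution-wise (no restriction on $\mu$ is invoked anywhere in the LP-duality proof sketched in~\cite{nolin2020communication}), we may simply apply it with $\mu$ chosen to be a product distribution.

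Concretely, first I would fix an arbitrary product distribution $\mu = \mu_{\mathcal X}\otimes\mu_{\mathcal Y}$ on $\mathcal X \times \mathcal Y$ and expand the two relevant quantities at $\mu$. Using the alternative form of weak regularity, $\operatorname{wreg}^\mu(f) = \max_{R,z}\bigl(\mu(R) - |\mathcal Z|\cdot \mu(R\cap f^{-1}(z))\bigr)$, and the definition of $\operatorname{wprt}^{\mu}_\epsilon(f)$ as the value of the distributional weak partition LP, one verifies that the pointwise bound above holds at this particular product $\mu$. Then, taking the supremum (respectively optimum) over product distributions on both sides — exactly the operation hidden in the superscript $[]$ — preserves the inequality, giving
\[
\operatorname{wprt}^{[]}_\epsilon(f) \;\geq\; \frac{1 - \epsilon\cdot \frac{|\mathcal Z|}{|\mathcal Z| - 1}}{\operatorname{wreg}^{[]}(f)}.
\]

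Because the bound is purely algebraic once $\operatorname{wreg}^\mu(f)$ is identified with a single rectangle-distribution correlation and $\operatorname{wprt}^\mu_\epsilon(f)$ is given by its LP, there is no substantial technical obstacle arising from the restriction to product distributions: the proof in~\cite{nolin2020communication} transfers verbatim. The only minor care needed is to confirm that the convention for $\operatorname{wreg}^{[]}$ and $\operatorname{wprt}^{[]}_\epsilon$ (i.e., whether the product restriction is applied uniformly on both sides of the inequality) is consistent, which is indeed the case in the notation set up earlier in this subsection. Hence the claim follows without any new ingredients beyond Fact~\ref{fact:wreg_wprt}.
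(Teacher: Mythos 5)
Your proposal is correct and matches the paper's treatment: the paper offers no separate argument for the product-distribution version, it simply cites Proposition~2.18 of Nolin (the same distribution-wise inequality underlying Fact~\ref{fact:wreg_wprt}) and reads the superscript $[]$ as restricting the distribution to a product one, which is exactly your specialization. Your extra remark that passing to the optimum over product distributions on both sides preserves the inequality is the only point needing (minor) care, and your handling of it is fine.
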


\begin{defn}[Two-way product subdistribution bounds~{\cite[Definition~3.3]{jain2008direct}}]
For a distribution$\mu$µ over $\mathcal X\times \mathcal Y$, define $\operatorname{sub}^{\mu}_\epsilon(f) \coloneqq \min_{\lambda} S^{\infty}(\lambda \Vert \mu)$, where $\lambda$ is taken over all distributions which are both SM-like (simultaneous-message-like)\footnote{We say that $\lambda$ is SM-like for $\mu$ if there exists a distribution $\chi$ on $\mathcal X \times \mathcal Y$ such that $\chi$ is one-message-like (recall definition from Footnote~\ref{foot:one-message})for $\mu$ with respect to $\mathcal X$ and $\lambda$ is one-message-like for $\chi$ with respect to  $\mathcal Y$.} for $\mu$ and $\epsilon$-monochromatic for $f$. When the minimization is restricted to distributions that are SM-like
and $(\epsilon, z)$-monochromatic for some fixed value $z \in \mathcal Z$, we denote the resulting measure as $\operatorname{sub}^{\mu, z}_\epsilon(f)$.
The two-way product subdistribution bound, denoted as $\operatorname{sub}^{[]}_\epsilon(f) \coloneqq \max_{\mu} \operatorname{sub}^{\mu}_\epsilon(f)$, where $\mu$ ranges over all product distributions on $\mathcal X \times \mathcal Y$.
\end{defn}

In the definition above, we say that a distribution $\lambda$ is $(\epsilon, z)$-monochromatic for $f$ if $\Pr_{X,Y\sim \lambda}[(X, Y, z) \in f]\geq 1 - \epsilon$. The distribution $\lambda$ is $\epsilon$-monochromatic for $f$ if it is $(\epsilon, z)$-monochromatic for $f$ for some $z\in \mathcal Z$. 

\begin{fact}[{\cite[Lemma~3.1]{Jain2008}}]
Let $f \subseteq \mathcal X \times \mathcal Y \times \mathcal Z$ be a relation and 
let $\delta \in (0, 1)$. Then, $\operatorname{rec}^{[]}_\epsilon(f) \geq \operatorname{sub}^{[]}_\epsilon(f) \geq \operatorname{rec}^{[]}_{\epsilon/\delta^2}(f) - \log \frac{1}{(1 - \delta)^2}$. 
\end{fact}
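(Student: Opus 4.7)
My plan is to split the statement into its two inequalities. A key preliminary observation is that when $\mu$ is a product distribution, a distribution $\lambda$ is SM-like for $\mu$ if and only if $\lambda$ is itself a product distribution $\lambda_X\lambda_Y$ whose marginals are supported inside those of $\mu$. This follows by unrolling the two one-message-like conditions from the definition: when $\mu$ is product, $\mu_{\mathcal Y|\mathcal X}(y|x)=\mu_{\mathcal Y}(y)$, which forces the intermediate distribution $\chi$ in the SM-like chain to be product and hence $\lambda$ as well. This reduction will be used on both sides.

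For the upper bound $\operatorname{rec}^{[]}_{\epsilon}(f)\geq\operatorname{sub}^{[]}_{\epsilon}(f)$, I would fix a product $\mu$ and output $z$ witnessing the rectangle bound, together with an optimal good rectangle $R=A\times B$. Define $\lambda:=\mu|_{R}$; since $\mu$ is product and $R$ is a rectangle, $\lambda$ is product and hence SM-like by the observation above. The good-rectangle condition $\epsilon\cdot\mu(f^{-1}(z)\cap R)>\mu(R-f^{-1}(z))$ rescales directly to $\lambda(R-f^{-1}(z))<\epsilon\cdot\lambda(f^{-1}(z)\cap R)$, so $\lambda$ is $\epsilon$-monochromatic for $z$. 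A direct calculation gives $S_{\infty}(\lambda\,\|\,\mu)=\log(1/\mu(R))\leq\log(1/\mu(f^{-1}(z)\cap R))=\log\operatorname{rec}^{z,\mu}_{\epsilon}(f)$; taking the supremum over product $\mu$ and $z$ closes this direction.

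For the lower bound $\operatorname{sub}^{[]}_{\epsilon}(f)\geq\operatorname{rec}^{[]}_{\epsilon/\delta^{2}}(f)-\log\frac{1}{(1-\delta)^{2}}$, I would pick the product $\mu$ maximizing the right-hand side and, starting from an optimal SM-like product $\lambda=\lambda_X\lambda_Y$ for $\operatorname{sub}^{\mu}_{\epsilon}(f)=s$, construct a rectangle $R=A\times B$ witnessing the desired rectangle-bound inequality for $\mu$. The support of $\lambda$ is already a rectangle $A_0\times B_0$ on which $\lambda/\mu\leq 2^{s}$, but this alone does not control $\mu(R-f^{-1}(z))$, since $\mu$ can place arbitrary mass on regions where the likelihood ratio is much smaller than $2^{s}$. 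The plan is to truncate: in each coordinate, retain only those points where the marginal likelihood ratio $\lambda_X/\mu_X$ (respectively $\lambda_Y/\mu_Y$) sits within a factor $(1-\delta)$ of its maximum value $2^{S_{\infty}(\lambda_X\|\mu_X)}$ (respectively $2^{S_{\infty}(\lambda_Y\|\mu_Y)}$). A careful Markov-type argument, applied independently to each coordinate using the product structure, shows that the resulting $R$ still carries $\lambda$-mass at least $(1-\delta)^{2}$. On $R$ the pointwise ratio $\lambda/\mu$ now lies in a window of width $(1-\delta)^{2}$ beneath $2^{s}$, simultaneously giving $\mu(R-f^{-1}(z))\leq \epsilon/((1-\delta)^{2}2^{s})$ and $\mu(f^{-1}(z)\cap R)\geq (1-\epsilon-2\delta)/2^{s}$. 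This shows that $R$ is $(\epsilon/\delta^{2})$-good for $\mu$ with $\log(1/\mu(f^{-1}(z)\cap R))\leq s+\log\frac{1}{(1-\delta)^{2}}+O(1)$, yielding the bound after taking the supremum.

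The main obstacle will be the two-sided truncation step. A naive Markov inequality on the likelihood ratio easily localizes $\lambda$-mass above some threshold, but pinching the ratio near its maximum on both coordinates simultaneously (so that the bad $\mu$-mass on $R-f^{-1}(z)$ is truly small) requires exploiting the product structure of both $\lambda$ and $\mu$ to argue independently in each coordinate. This is what produces both the square $\delta^{2}$ in the error parameter and the square $(1-\delta)^{2}$ in the mass loss; aligning the constants exactly with the statement, and handling the case where the $\lambda$-mass on the pinched rectangle is not automatically close to $1$, is the step where the argument requires genuine care.
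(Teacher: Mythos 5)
This statement is not proved in the paper at all; it is imported verbatim from Jain--Klauck--Nayak (the cited Lemma~3.1), so your attempt has to be measured against that standard argument. Your reduction of ``SM-like'' to ``product with marginals supported in those of $\mu$'' is correct, and your first inequality is essentially the right (and standard) argument: for product $\mu$, conditioning on a good rectangle $R$ gives a product, hence SM-like, distribution $\lambda=\mu|_R$ which is $\epsilon$-monochromatic by the corruption condition and satisfies $S_\infty(\lambda\Vert\mu)=\log(1/\mu(R))\le\log(1/\mu(f^{-1}(z)\cap R))$, and this works for every product $\mu$, which is what one needs since $\operatorname{sub}^{[]}$ is a supremum over $\mu$.

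The second inequality, however, has a genuine gap exactly at the step you flagged. Your central claim --- that keeping, in each coordinate, the points whose ratio $\lambda_X/\mu_X$ (resp.\ $\lambda_Y/\mu_Y$) lies within a factor $(1-\delta)$ of its maximum $2^{s_X}$ retains $\lambda$-mass $(1-\delta)^2$ --- is false. Take $\lambda_X$ uniform on $[n]$ and $\mu_X$ giving mass $\eta\ll 1/n$ to the point $1$ and near-uniform mass elsewhere: the maximal ratio $2^{s_X}=1/(n\eta)$ is attained only at $1$, all other ratios are about $1$, so the pinched set is $\{1\}$ with $\lambda_X$-mass $1/n$. Since $\operatorname{sub}^{\mu}_\epsilon$ is a minimum over all feasible $\lambda$, you must handle such lopsided $\lambda$, and no threshold choice repairs the mechanism: deriving $\mu(E\cap R)\le(\epsilon/\delta^2)\,\mu(R)$ from a pointwise lower bound $\lambda\ge c\,\mu$ on $R=A_t\times B_u$ forces $t\lambda_X(A_t)\cdot u\lambda_Y(B_u)\gtrsim\delta^2 2^{s}$, while in the example $\sup_t t\lambda_X(A_t)$ falls short of $\delta 2^{s_X}$ by an arbitrarily large factor. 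The actual proof does not pinch the likelihood ratio at all. Writing $E$ for the error set of the output $z$, one uses $\lambda\le 2^{s}\mu$ only to transfer mass, and controls the error by pruning rows and columns according to conditional error rates: set $A=\{x:\Pr_{y\sim\lambda_Y}[(x,y)\in E]\le\epsilon/\delta\}$, so that Markov with respect to $\lambda_X$ gives $\lambda_X(A)\ge 1-\delta$ and hence $\mu_X(A)\ge(1-\delta)2^{-s_X}$; then set $B=\{y:\Pr_{x\sim\mu_X|_A}[(x,y)\in E]\le\epsilon/\delta^2\}$, and since $\mathbb{E}_{y\sim\lambda_Y}\Pr_{x\sim\mu_X|_A}[(x,y)\in E]=\mathbb{E}_{x\sim\mu_X|_A}\Pr_{y\sim\lambda_Y}[(x,y)\in E]\le\epsilon/\delta$, Markov with respect to $\lambda_Y$ gives $\lambda_Y(B)\ge 1-\delta$ and $\mu_Y(B)\ge(1-\delta)2^{-s_Y}$. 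Because the bound defining $B$ holds for every $y\in B$ and $\mu$ restricted to $A\times B$ is the product $\mu_X|_A\times\mu_Y|_B$, the error of $\mu$ on $A\times B$ is at most $\epsilon/\delta^2$ while $\mu(A\times B)\ge(1-\delta)^2 2^{-s}$; this is where the $\delta^2$ and the $(1-\delta)^2$ in the statement actually come from.
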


\begin{fact}[{\cite[Theorem~1.1]{braverman2016discrepancy}}]\label{fact:disc_IC_prod}
Let $f: \mathcal X \times \mathcal Y \rightarrow \{0, 1\}$ be a Boolean function. Then, $IC^{[]}_{1/10}(f) \geq \log \left(\frac{1}{\operatorname{disc}^{[]}(f)}\right)$. 
\end{fact}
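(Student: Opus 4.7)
The plan is to deduce the bound by reducing to the known lower bound of discrepancy on distributional communication complexity, using a compression theorem for internal information under product distributions. Concretely, fix a product distribution $\mu = \mu_{\mathcal X} \times \mu_{\mathcal Y}$ and let $\pi$ be an arbitrary randomized protocol computing $f$ with error at most $1/10$ on inputs drawn from $\mu$, with internal information cost $I \coloneqq IC^{\mu}(\pi)$. The goal is to show $I \gtrsim \log(1/\operatorname{disc}^{\mu}(f))$, then take the infimum over $\pi$ and a worst-case product distribution to obtain the theorem.

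The first main step is to invoke a one-shot compression theorem for internal information cost under product distributions (e.g.\ in the style of Barak–Braverman–Chen–Rao): there is a protocol $\pi'$ computing $f$ on $\mu$ with error at most $1/10 + \delta$ whose communication cost satisfies $CC(\pi') \leq O(I/\delta^{2})$. The key feature being exploited here is that under a product distribution the two inputs remain (conditionally) independent given the transcript, which lets one compress each round's message using the marginal posterior on Alice's side or Bob's side separately.

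The second main step is to lower bound the communication cost of any such $\pi'$ by discrepancy. By Fact~\ref{fact:disc-Dmu_prod}, for any $\epsilon' < 1/2$,
\begin{equation*}
D^{[]}_{\epsilon'}(f) \;\geq\; \log\!\left(\frac{2(1/2 - \epsilon')}{\operatorname{disc}^{[]}(f)}\right).
\end{equation*}
Taking $\epsilon' = 1/10 + \delta$ with $\delta$ a small positive constant (say $\delta = 1/10$), we get $D^{[]}_{1/5}(f) \geq \log(1/\operatorname{disc}^{[]}(f)) - O(1)$. Since $CC(\pi') \geq D^{\mu}_{\epsilon'}(f)$, chaining with the compression bound yields $I \geq \Omega(\delta^{2} \log(1/\operatorname{disc}^{\mu}(f))) = \Omega(\log(1/\operatorname{disc}^{\mu}(f)))$. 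Taking the infimum over $\pi$ gives $IC^{\mu}_{1/10}(f) \geq \Omega(\log(1/\operatorname{disc}^{\mu}(f)))$, and maximizing over product $\mu$ finishes the argument up to the constant factor; for the constant-equivalence framework of this paper only the order $\Omega$ matters.

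The main obstacle is the compression step: the one-shot product-distribution compression of internal information is itself a nontrivial theorem, and one must be careful that the additional error $\delta$ introduced by compression does not push the total error past $1/2$ (so that Fact~\ref{fact:disc-Dmu_prod} still yields a nontrivial lower bound). A cleaner alternative, which avoids invoking the full compression machinery, is to argue directly: if $IC^{\mu}(\pi) = I$, then a random transcript $T$ of $\pi$ has ``typical'' elements $\tau$ with $\Pr[T = \tau \mid X = x, Y = y] \leq 2^{O(I)} \Pr[T = \tau]$ for most $(x,y)$, so the associated rectangles partition $\mathcal X \times \mathcal Y$ into at most $2^{O(I)}$ pieces whose weighted contribution to $f$ must beat random guessing; this correlation is in turn bounded by $\operatorname{disc}^{\mu}(f)$, giving the desired inequality. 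Either route should work, and both routes rely crucially on the product structure of $\mu$ to keep $X$ and $Y$ conditionally close to independent given the transcript.
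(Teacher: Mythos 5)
The statement you are proving is not proved in the paper at all: it is imported verbatim from Braverman--Weinstein (their Theorem~1.1), which holds for \emph{every} distribution $\mu$ --- that is exactly Fact~\ref{fact:discmu-ICmu} --- and the product-distribution version here is just its specialization, so the product structure you lean on is not actually needed. Measured against that, both of your routes have genuine gaps. In the compression route, the theorem you invoke (communication $O(I/\delta^{2})$ after compressing a protocol of internal information cost $I$ over a product distribution) is stronger than what Barak--Braverman--Chen--Rao prove: their product-case compression costs $O\!\left(I\cdot\operatorname{polylog}(C)/\delta\right)$ bits, where $C$ is the communication of the \emph{original} protocol. Since $IC^{[]}_{1/10}(f)$ is an infimum over protocols whose communication is not controlled by $I$, the $\operatorname{polylog}(C)$ factor cannot be absorbed, and the chain $D^{[]}_{\epsilon'}(f)\leq O(I/\delta^{2})$ does not follow. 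Removing the dependence on $C$ is itself a hard later theorem (Kol's interactive compression for product distributions), and even that gives communication polynomial in $I$, which would only yield $\log\bigl(1/\operatorname{disc}^{[]}(f)\bigr)\leq\operatorname{poly}\bigl(IC^{[]}_{1/10}(f)\bigr)$ --- enough for the constant-equivalence framework, but not the inequality as stated and not via the lemma as you quoted it.

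The ``cleaner alternative'' is also not yet a proof: the assertion that typical transcripts with likelihood ratio $2^{O(I)}$ induce a partition of $\mathcal X\times\mathcal Y$ into at most $2^{O(I)}$ rectangles is false in general --- a protocol can have enormously many transcripts (huge communication) while leaking only $I$ bits of information, and a pointwise bound on the ratio $\Pr[T=\tau\mid x,y]/\Pr[T=\tau]$ does not shrink the number of rectangles. What Braverman--Weinstein actually prove, and what is the real content of the theorem, is a simulation lemma: any protocol with internal information cost $I$ can be converted into a protocol communicating only $O(I)$ bits that still predicts $f$ with advantage $2^{-O(I)}$ over random guessing; combining this with the standard fact that a $c$-bit protocol has correlation at most $2^{c}\cdot\operatorname{disc}^{\mu}(f)$ with $f$ gives $I\geq\Omega\bigl(\log(1/\operatorname{disc}^{\mu}(f))\bigr)$, with no product assumption. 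That sampling/simulation step is missing from your sketch. If you want a route that stays inside this paper's toolkit rather than re-deriving Braverman--Weinstein, you could instead chain Fact~\ref{fact:IC-prt} (information complexity versus the relaxed partition bound) with Fact~\ref{fact:rprt} and Fact~\ref{fact:recz-disc}, which lower-bound the relaxed partition bound by the rectangle/discrepancy quantities; this gives $IC^{[]}_{1/10}(f)\geq\Omega\bigl(\log(1/\operatorname{disc}^{[]}(f))\bigr)$ up to constants.
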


\begin{fact}[{\cite[Theorem~2.4]{BarakBCR10}}]\label{fact:ICpi_Dmu_prod}
For every $f, \epsilon$, there exists a protocol $\pi$ that computes $f$ on inputs drawn from a product distribution $\mu$ with probability of error at most $\epsilon$ and communication at most $D^{\mu^k}_\rho(f^k)$ such that $IC^{[]}(\pi) \leq \frac{2D^{[]^k}_\epsilon(f^k)}{k}$. 
\end{fact}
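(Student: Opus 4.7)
The plan is to prove this via the standard direct sum argument for internal information complexity, specialized to product distributions (where it is considerably cleaner than for arbitrary distributions). The statement is essentially a quantitative form of ``information complexity is at most amortized distributional communication complexity'', and under product distributions the embedding trick works without any correlated sampling or protocol compression machinery.

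First I would fix a deterministic protocol $\Pi$ computing $f^k$ on inputs drawn from $\mu^k$ with error at most $\epsilon$ and communication exactly $D^{\mu^k}_\epsilon(f^k)$. Then I construct the single-copy protocol $\pi$ as follows: on input $(x,y) \sim \mu$, Alice and Bob use public randomness to pick a coordinate $i \in [k]$ uniformly. Crucially, because $\mu = \mu_X \times \mu_Y$ is a product distribution, Alice can privately sample $X_{-i} = (X_1,\dots,X_{i-1},X_{i+1},\dots,X_k)$ from $\mu_X^{k-1}$ on her own, and Bob can privately sample $Y_{-i}$ from $\mu_Y^{k-1}$ on his own. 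They then set $X_i := x$, $Y_i := y$, simulate $\Pi$ on the padded $k$-tuple inputs, and output the $i$-th coordinate of $\Pi$'s output.

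The error analysis is immediate: the expected error of $\Pi$ across the $k$ coordinates under $\mu^k$ is at most $\epsilon$, and by picking $i$ uniformly we get a single-copy protocol whose expected error under $\mu$ is at most $\epsilon$. For the information-cost bound, let $T$ denote the transcript of $\Pi$ and let $X = (X_1,\dots,X_k)$, $Y = (Y_1,\dots,Y_k)$ with $(X,Y) \sim \mu^k$. I use the chain rule
\begin{align*}
I(T;X_i \mid Y) &= I(T;X_i \mid Y_i, Y_{-i}), \\
I(T;X \mid Y) &= \sum_{i=1}^{k} I(T;X_i \mid X_{<i}, Y),
\end{align*}
together with the fact that under the product distribution $\mu^k$, conditioning on $X_{<i}$ in addition to $Y$ only increases the mutual information, so $\sum_{i} I(T;X_i \mid Y) \leq I(T;X \mid Y)$ (here the product structure is essential; for correlated distributions one would incur additional corrections). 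The symmetric inequality holds for Bob. Summing and averaging over $i$ yields
\begin{align*}
\mathbb{E}_{i}\bigl[I(T;X_i \mid Y_i) + I(T;Y_i \mid X_i)\bigr] \leq \tfrac{1}{k}\bigl(I(T;X\mid Y) + I(T;Y\mid X)\bigr) = \tfrac{1}{k}\, IC^{\mu^k}(\Pi).
\end{align*}
Since the inner randomness used to sample $X_{-i}, Y_{-i}$ is private to each player, the information cost of $\pi$ on the single input $(x,y)\sim \mu$ equals the expectation on the left, so $IC^{[]}(\pi) \leq \tfrac{1}{k} IC^{\mu^k}(\Pi)$. Finally, bounding $IC^{\mu^k}(\Pi) \leq 2\, CC(\Pi) = 2\, D^{[]^k}_\epsilon(f^k)$ using $I(T;X\mid Y) \leq H(T) \leq CC(\Pi)$ (and the symmetric bound for the other term) yields the stated inequality.

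The only delicate step is the direct-sum inequality for internal information under product distributions; this is where the product-distribution hypothesis is actually used, since in the fully general case one would need the correlated-sampling / compression technology of Barak, Braverman, Chen, and Rao. Everything else (the embedding, the error analysis, and bounding $IC$ by $CC$) is routine. So the main technical point to get right is a careful chain-rule expansion showing that on product inputs the per-coordinate internal information contributions add up to at most the total internal information of $\Pi$.
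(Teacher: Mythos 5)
Your proof is correct. The paper gives no argument of its own here — it states this as a fact imported from Barak–Braverman–Chen–Rao (Theorem~2.4) — and your embedding-plus-chain-rule argument (publicly pick a coordinate $i$, privately pad the remaining coordinates using the product structure, run the $k$-fold protocol, and bound $\mathbb{E}_i[I(T;X_i\mid Y_i)+I(T;Y_i\mid X_i)]$ by $\tfrac{1}{k}\,IC^{\mu^k}(\Pi)\leq \tfrac{2}{k}\,CC(\Pi)$) is exactly the standard proof underlying that cited theorem, specialized to product $\mu$; the needed conditional-independence facts ($X_i$ independent of $(Y_{-i},X_{<i})$ given $Y_i$ under $\mu^k$) indeed hold, so the slightly compressed step from $I(T;X_i\mid Y_i)$ to $I(T;X_i\mid Y,X_{<i})$ goes through. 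One small inaccuracy in your commentary only: the general-distribution version (the paper's Fact~\ref{fact:ICpi_Dmu}) does not require correlated sampling or compression either — BBCR handle arbitrary $\mu$ by publicly sampling one "half" of each missing coordinate and privately completing the other, with the information bound then conditioned on that public part; the product hypothesis merely lets you skip this split, as your construction does.
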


\begin{fact}[{\cite[Theorem~1.1]{kerenidis2015lower}}]\label{fact:IC-prt_prod}
    There exists a positive constant $C$ such that for all functions $f:\{0, 1\}^n \times \{0, 1\}^n \rightarrow \{0, 1\}$. all $\epsilon, \delta\in (0, \frac{1}{2}]$, we have $IC^{[]}_\epsilon(f) \geq \frac{\delta^2}{C}\cdot \left(\log \bar{\operatorname{prt}}^{[]}_{\epsilon + 3\delta}(f) - \log 2 \right) - \delta$. 
\end{fact}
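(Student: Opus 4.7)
The plan is to derive this statement as a direct specialization of Fact~\ref{fact:IC-prt} (the general-distribution version). Recall that $IC^{[]}_\epsilon(f)$ and $\bar{\operatorname{prt}}^{[]}_\epsilon(f)$ are defined as the relevant quantities maximized over all \emph{product} distributions on $\mathcal{X} \times \mathcal{Y}$ (in the same way that $D^{[]}_\epsilon$, $\operatorname{disc}^{[]}$, etc.\ are defined via product distributions in this paper). Since any product distribution is in particular a distribution, Fact~\ref{fact:IC-prt} applies to each product distribution $\mu$, yielding
\[
IC^{\mu}_\epsilon(f) \;\geq\; \frac{\delta^2}{C}\cdot\bigl(\log \bar{\operatorname{prt}}^{\mu}_{\epsilon+3\delta}(f) - \log 2\bigr) \;-\; \delta
\]
for every product $\mu$ and every $\epsilon,\delta\in(0,1/2]$, with the same absolute constant $C$.

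First I would fix a product distribution $\mu^{\star}$ that attains (or comes arbitrarily close to attaining, after which one passes to a limit) the maximum in the definition of $\bar{\operatorname{prt}}^{[]}_{\epsilon+3\delta}(f)$, so that $\bar{\operatorname{prt}}^{\mu^{\star}}_{\epsilon+3\delta}(f) = \bar{\operatorname{prt}}^{[]}_{\epsilon+3\delta}(f)$. Next, by the definition of $IC^{[]}_\epsilon(f)$ as a maximum over product distributions, we have $IC^{[]}_\epsilon(f) \geq IC^{\mu^{\star}}_\epsilon(f)$. Chaining these two observations with Fact~\ref{fact:IC-prt} applied to $\mu^{\star}$ gives
\[
IC^{[]}_\epsilon(f) \;\geq\; IC^{\mu^{\star}}_\epsilon(f) \;\geq\; \frac{\delta^2}{C}\cdot\bigl(\log \bar{\operatorname{prt}}^{\mu^{\star}}_{\epsilon+3\delta}(f) - \log 2\bigr) - \delta \;=\; \frac{\delta^2}{C}\cdot\bigl(\log \bar{\operatorname{prt}}^{[]}_{\epsilon+3\delta}(f) - \log 2\bigr) - \delta,
\]
which is the desired inequality.

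There is essentially no hard step: the entire content lies in invoking Fact~\ref{fact:IC-prt}, and the only care needed is to ensure that the product-distribution maximizer on the right-hand side is the same distribution whose internal information complexity lower bounds $IC^{[]}_\epsilon(f)$. The one mild subtlety to address is whether a maximizing product distribution $\mu^{\star}$ exists; if not (e.g.\ if the supremum is not attained), the argument goes through verbatim by choosing, for any $\eta>0$, a product $\mu^{\star}_\eta$ with $\bar{\operatorname{prt}}^{\mu^{\star}_\eta}_{\epsilon+3\delta}(f) \geq \bar{\operatorname{prt}}^{[]}_{\epsilon+3\delta}(f) - \eta$ and then letting $\eta\to 0$, exploiting the continuity of $\log$ on the relevant range.
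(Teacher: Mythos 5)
Your derivation is correct: the paper gives no proof of this fact but imports it from the cited work as the product-distribution restriction of the general bound (Fact~\ref{fact:IC-prt}), which is exactly the specialization you carry out, choosing a (near-)maximizing product distribution for $\bar{\operatorname{prt}}^{[]}_{\epsilon+3\delta}(f)$ and using that $IC^{[]}_\epsilon(f)$ dominates the information complexity at that same distribution. Your handling of the case where the supremum over product distributions is not attained is also sound, so nothing is missing.
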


\begin{fact}[{\cite[Theorem~6.3]{braverman2014information}}]\label{fact:amortized_prod}
For any function $f$, distribution $\mu$ and error $\epsilon$, $AC^{[]}_\epsilon(f) = IC^{[]}_\epsilon(f)$. 
\end{fact}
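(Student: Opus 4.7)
The plan is to prove the equality by establishing both inequalities $IC^{[]}_\epsilon(f)\le AC^{[]}_\epsilon(f)$ and $AC^{[]}_\epsilon(f)\le IC^{[]}_\epsilon(f)$ separately, using a direct-sum argument for the first direction and a protocol-compression argument for the second. Fixing a product distribution $\mu = \mu_{\mathcal X}\times\mu_{\mathcal Y}$, the first is the ``easy'' direction and essentially repackages Fact~\ref{fact:ICpi_Dmu_prod}; the second is the content of the theorem and requires a one-shot simulation of any low-information protocol by a communication protocol whose cost, amortized over $k$ parallel copies, approaches the information cost from above.

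For $IC^{[]}_\epsilon(f)\le AC^{[]}_\epsilon(f)$ I would take an optimal deterministic protocol $\tau_k$ for $f^k$ under $\mu^k$ of cost $c_k = D^{\mu^k}_\epsilon(f^k)$. Using shared public randomness and, crucially, the fact that $\mu$ is a product distribution (so each player can sample their own marginals independently), Alice and Bob embed their single input $(X,Y)\sim\mu$ into a uniformly random coordinate $J\in[k]$, fill in the remaining coordinates from $\mu_{\mathcal X}$ and $\mu_{\mathcal Y}$, and simulate $\tau_k$. By the chain rule for mutual information, the internal information revealed in the $J$-th coordinate is on average $\le c_k/k$. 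Taking $k\to\infty$ along the sequence defining $AC^{[]}_\epsilon(f)$ yields $IC^{[]}_\epsilon(f)\le AC^{[]}_\epsilon(f)$.

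For the reverse direction $AC^{[]}_\epsilon(f)\le IC^{[]}_\epsilon(f)$, fix a protocol $\pi$ that computes $f$ with error $\epsilon$ and whose internal information cost under $\mu$ is $I = IC^{[]}_\epsilon(f) + \delta$ for arbitrarily small $\delta>0$. The plan is to show that $k$ independent copies of $\pi$ can be simulated, round by round, using roughly $kI + o(k)$ bits of communication. The key technical engine is interactive message compression: in each round, the speaker holds a distribution over their next message that differs from the listener's estimate by a divergence whose expectation equals the per-round information revealed, and a correlated-sampling / rejection-sampling scheme of Braverman--Rao type lets the speaker transmit a sample from their distribution using expected communication close to that divergence plus an additive overhead. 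Running the compression in parallel across the $k$ independent copies (which are i.i.d.\ because $\mu$ is a product distribution and the protocol is run independently in each coordinate) lets concentration absorb the per-round additive overheads, so the total communication is $kI + O(\sqrt{k}\,\mathrm{polylog}(k))$. Dividing by $k$ and sending $k\to\infty$, followed by $\delta\to0$, gives $AC^{[]}_\epsilon(f)\le IC^{[]}_\epsilon(f)$.

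The main obstacle is the second direction: designing and analyzing the parallel compression so that the additive per-round slack really is $o(k)$ in the aggregate rather than scaling with the round count times $k$. A careful argument using correlated sampling and a martingale / Azuma-style concentration of the divergences over rounds, combined with the fact that under a product distribution the $k$ copies contribute independent information summands, is what makes the amortized overhead vanish. Combining the two directions yields $AC^{[]}_\epsilon(f) = IC^{[]}_\epsilon(f)$, as claimed.
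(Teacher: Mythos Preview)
Your sketch is a faithful outline of the Braverman--Rao argument, and that is exactly what the paper does here: it does not give an independent proof but simply cites \cite[Theorem~6.3]{braverman2014information} (already recorded as Fact~\ref{fact:amortized} for general $\mu$) and specializes to product $\mu$. So the approach matches.

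One remark worth making: your repeated appeals to the product structure of $\mu$ are unnecessary and slightly misleading. In the easy direction, the embedding of a single input into a random coordinate of $\tau_k$ works for \emph{any} $\mu$, because the remaining $k-1$ input pairs can be sampled jointly from $\mu$ using public randomness; you do not need each player to sample their marginal privately. In the hard direction, the i.i.d.\ structure across the $k$ copies that makes concentration work comes from running $\pi$ independently on $k$ independent draws from $\mu$, which is again true for arbitrary $\mu$; whether $\mu$ itself factorizes over $\mathcal X\times\mathcal Y$ is irrelevant. The product hypothesis buys you nothing here, and the paper correspondingly just invokes the general theorem rather than a product-specific argument.
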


\begin{fact}[{\cite[Theorem~2]{touchette2015quantum}}]\label{fact:AQCC-QCC_prod}
For any function $f$, distribution $\mu$ and error $\epsilon > 0$, we have $AQCC^{[]}_\epsilon(f) = QIC^{[]}_\epsilon(f)$. 
\end{fact}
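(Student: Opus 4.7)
The plan is to prove the two directions $AQCC^{[]}_\epsilon(f) \ge QIC^{[]}_\epsilon(f)$ and $AQCC^{[]}_\epsilon(f) \le QIC^{[]}_\epsilon(f)$ separately, following the blueprint of Touchette's argument but keeping careful track that every distribution appearing in the argument is a product distribution, so that the proof stays inside Class 5.

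For the lower bound direction, I would first establish that for every protocol $\pi$ and every input distribution, $QIC(\pi) \le QCC(\pi)$; this is the quantum analogue of the classical bound and follows from the fact that each message register $C_i$ has $\log|C_i|$ qubits, which upper bounds the conditional mutual information it can carry. Next I would establish the key \emph{additivity on product inputs}: if $\mu$ and $\nu$ are product distributions and $\pi_k$ is any protocol for $f^k$ under $\mu^k$ (or more generally under a product of product distributions), then $QIC^{\mu^k}(\pi_k) \ge \sum_{j=1}^k QIC^{\mu}(\pi_k^{(j)})$ where $\pi_k^{(j)}$ is the protocol obtained by fixing all but the $j$-th coordinate according to the remaining marginals. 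This is a standard chain-rule computation for conditional mutual information together with the fact that product distributions tensorize cleanly. Combining these, for any $\epsilon$-error protocol $\pi_k$ computing $f^k$ on inputs from $[\,]^k$ we get $QCC^{[\,]^k}(\pi_k) \ge QIC^{[\,]^k}(\pi_k) \ge k \cdot QIC^{[\,]}_\epsilon(f)$, so dividing by $k$ and taking the limit gives $AQCC^{[\,]}_\epsilon(f) \ge QIC^{[\,]}_\epsilon(f)$.

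For the upper bound direction (the compression direction), I would start from a protocol $\pi$ that nearly achieves the infimum defining $QIC^{[\,]}_\epsilon(f)$. Running $k$ independent copies of $\pi$ on $k$ independent product inputs produces a protocol $\pi^{\otimes k}$ whose per-round quantum information cost equals $k$ times that of $\pi$ but whose per-round communication is still the naive $k \cdot |C_i|$ qubits. The goal is to compress this round-by-round: in round $i$, instead of transmitting the full register $C_i^{(1)} \cdots C_i^{(k)}$, invoke the quantum state redistribution (QSR) protocol of Devetak--Yard (and its refinements), which asymptotically transmits a register $C$ conditioned on side information at the receiver using communication equal to $\tfrac{1}{2} I(C:\mathrm{ref}\,|\,\mathrm{receiver})$ qubits per copy, up to sublinear terms. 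Applied to each round $i$, this costs roughly $k$ times the $i$-th round's contribution to $QIC^{[\,]}(\pi)$ in the odd/even bookkeeping of the $QIC$ definition. Summing over rounds, the total communication on the $k$ parallel copies is at most $k \cdot QIC^{[\,]}(\pi) + o(k)$, and the accumulated error per copy can be made smaller than any fixed $\epsilon' > \epsilon$ by choosing the QSR parameters carefully and using a triangle inequality on trace distance. Dividing by $k$ and letting $k \to \infty$, then taking $\pi$ to approach the infimum, yields $AQCC^{[\,]}_\epsilon(f) \le QIC^{[\,]}_\epsilon(f)$.

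The main obstacle is the compression step: quantum state redistribution is a one-shot primitive with nontrivial error guarantees, and when stitched together across $k$ parallel instances and across many rounds, one must control (i) the accumulation of trace-distance errors across rounds so that the global simulation stays within $\epsilon'$ of the true protocol, and (ii) the entanglement consumption / recycling between rounds, since the side information at the receiver in round $i$ depends on the compressed output of round $i-1$ rather than the uncompressed one. The standard fix is to use a smooth, one-shot version of QSR (e.g.\ via convex-split and position-based decoding) with parameters scaling sublinearly in $k$, and to invoke a telescoping argument to bound the total simulation error. Once those bookkeeping details are in place, the product-distribution restriction does not cost anything, because the product structure is exactly what makes the per-round conditional mutual informations additive across the $k$ copies, so the Class 5 version follows from the same argument as the general version.
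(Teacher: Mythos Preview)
The paper does not give its own proof of this statement: it is recorded as a Fact with a citation to Touchette's theorem, and the product-distribution version (Fact~\ref{fact:AQCC-QCC_prod}) is simply the specialization of the general statement $AQCC^{\mu}_\epsilon(f)=QIC^{\mu}_\epsilon(f)$ already recorded earlier as Fact~\ref{fact:AQCC-QCC}. So there is nothing to compare against beyond noting that the paper treats this as a black box.

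Your sketch is a faithful outline of Touchette's actual argument (superadditivity of $QIC$ for the lower bound, round-by-round quantum state redistribution for the compression/upper bound), and you correctly flag the genuine technical issues (error accumulation across rounds, the fact that the receiver's side information in round $i$ is the compressed rather than ideal state from round $i-1$). One minor point: you frame the additivity step as something special to product distributions, but in fact the superadditivity $QIC^{\mu^k}(\pi_k)\ge k\cdot QIC^{\mu}_\epsilon(f)$ holds for arbitrary $\mu$, so the product restriction buys nothing and costs nothing---the Class~5 statement is literally the Class~3 statement restricted to product $\mu$, which is why the paper just re-cites the same theorem. If your goal were only to establish the fact as used in the paper, the one-line argument ``apply Fact~\ref{fact:AQCC-QCC} with $\mu$ ranging over product distributions'' would suffice.
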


\begin{fact}[\cite{anshu2017exponential}]\label{fact:AQCC-AC_prod}
For any function $f$ and error $\epsilon > 0$, we have $AQCC^{[]}_\epsilon(f) \leq AC^{[]}_\epsilon(f)$. 
\end{fact}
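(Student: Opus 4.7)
The plan is to lift the trivial pointwise inequality $QCC \leq D$ through the amortization limit. For any distribution $\nu$ on $\mathcal X^k \times \mathcal Y^k$ and any error $\epsilon > 0$, I would first establish $QCC^{\nu}_\epsilon(f^k) \leq D^{\nu}_\epsilon(f^k)$ by observing that any classical deterministic protocol can be simulated by a quantum protocol of the same cost: Alice and Bob represent their classical communication registers as qubit registers prepared and transmitted in the computational basis, so each classical bit sent corresponds to a single qubit sent, and the input-output map of the protocol is left unchanged. Consequently, the distributional error under $\nu$ is identical for the original and simulated protocols, and taking the minimum over all valid protocols on each side gives the stated per-copy inequality.

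Next, I would apply this at each $k \geq 1$ with $\nu = \mu^k$ for a product distribution $\mu$ on $\mathcal X \times \mathcal Y$, noting that $\mu^k$ is again a product distribution on $\mathcal X^k \times \mathcal Y^k$, so the product-distribution regime is closed under $k$-fold products. Dividing by $k$ and passing to the limit yields
$$AQCC^{\mu}_\epsilon(f) = \lim_{k \to \infty} \frac{QCC^{\mu^k}_\epsilon(f^k)}{k} \leq \lim_{k \to \infty} \frac{D^{\mu^k}_\epsilon(f^k)}{k} = AC^{\mu}_\epsilon(f).$$
Taking the supremum over product distributions $\mu$ on both sides (implicit in the $[]$ superscript notation introduced in this subsection) then gives $AQCC^{[]}_\epsilon(f) \leq AC^{[]}_\epsilon(f)$, as desired.

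There is no substantial obstacle in this argument; the whole claim is essentially a formal verification that the classical model sits inside the quantum model with no overhead, now lifted through the product-distribution amortization. The only points of care are (i) checking that the simulation preserves distributional error exactly, which it does because the simulation induces an input-by-input identity on outputs, and (ii) confirming that both amortized limits exist, which follows from the standard subadditivity of distributional complexity under tensor products of $f$ and $\mu$ and hence need not be re-derived here.
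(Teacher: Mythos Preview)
Your argument is correct. The paper does not supply its own proof of this fact; it simply records it as a citation to \cite{anshu2017exponential}, so there is no in-paper argument to compare against. Your route---simulate each deterministic protocol as a quantum protocol of identical cost to obtain $QCC^{\mu^k}_\epsilon(f^k)\le D^{\mu^k}_\epsilon(f^k)$, divide by $k$, pass to the limit, and then take the maximum over product $\mu$---is exactly the elementary justification one would expect, and it already yields the stronger per-distribution inequality $AQCC^{\mu}_\epsilon(f)\le AC^{\mu}_\epsilon(f)$ (the paper's Fact~\ref{fact:AQCC-AC}) of which the product-distribution version is an immediate specialization. One minor remark: the observation that $\mu^k$ is again a product distribution is true but unnecessary, since the pointwise inequality $QCC^{\nu}_\epsilon\le D^{\nu}_\epsilon$ holds for \emph{every} $\nu$, not only product ones; the ``product'' restriction only enters at the final step when you maximize over $\mu$.
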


\begin{fact}[{\cite[Theorem~2.37]{nolin2020communication}}]\label{fact:pICmu-prtmu_prod}
Let $f: \mathcal X \times \mathcal Y \rightarrow 2^{\mathcal Z}$ be any relation and let $\epsilon\in [0, 1/2)$. Then, $pIC^{\infty, []}_\epsilon(f) = \log \operatorname{prt}^{[]}_\epsilon(f)$. 
\end{fact}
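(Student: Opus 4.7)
The plan is to adapt the argument for the general distributional case (Fact~\ref{fact:pICmu-prtmu}), due to Prabhakaran and Prabhakaran~\cite{prabhakaran2015r} and presented as Theorem~2.37 in~\cite{nolin2020communication}, restricted to product distributions $\mu = \mu_{\mathcal X} \otimes \mu_{\mathcal Y}$. The identity $pIC^{\infty, []}_\epsilon(f) = \log \operatorname{prt}^{[]}_\epsilon(f)$ will follow by showing that the two optimizations range over essentially the same family of objects: pseudotranscripts that compute $f$ with error at most $\epsilon$ on one side, and rectangle-weighted feasible solutions to the partition LP on the other. Since $pIC^{\infty, \mu}$ and $\log \operatorname{prt}^{\mu}_\epsilon$ coincide for every $\mu$ by Fact~\ref{fact:pICmu-prtmu}, one approach is simply to invoke that identity at each product $\mu$; nevertheless, to make the equivalence self-contained I would redo the two directions with the product structure built in from the start.

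For the upper bound $pIC^{\infty, []}_\epsilon(f) \leq \log \operatorname{prt}^{[]}_\epsilon(f)$, I would fix an optimal feasible $\{w_{z, R}\}$ to the LP defining $\operatorname{prt}^{[]}_\epsilon(f)$ with objective value $K$, and construct a pseudotranscript $Q$ taking values $(z, R)$ with $\Pr[Q = (z, R) \mid x, y] \propto w_{z, R} \cdot \mathbb{1}[(x, y) \in R]$. Because each $R = A \times B$ is a rectangle, the indicator factorizes as $\mathbb{1}[x \in A] \cdot \mathbb{1}[y \in B]$, giving precisely the product form $g(q, x) h(q, y)$ required of a pseudotranscript. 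The LP constraints translate to correctness of the induced output map $(z, R) \mapsto z$ with error at most $\epsilon$, and a direct computation of the Rényi mutual information under $\mu_{\mathcal X} \otimes \mu_{\mathcal Y}$ gives $\log K$.

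For the lower bound, starting from any pseudotranscript $Q$ with factorization $\Pr[Q = q \mid x, y] = g(q, x) h(q, y)$ and an output map $M$ computing $f$ with error $\epsilon$, I would apply a level-set (layer-cake) decomposition of $g(q, \cdot)$ and $h(q, \cdot)$ into nonnegative combinations of indicators of subsets of $\mathcal X$ and $\mathcal Y$ respectively. This exhibits $\Pr[Q = q \mid x, y]$ as a nonnegative combination of rectangle indicators, so assigning $w_{M(q), R}$ to the corresponding coefficients yields a feasible solution of the partition LP whose objective value equals $2^{I_\infty(XY : Q, R)}$ under the product distribution.

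The principal obstacle is exactly this lower-bound step: $I_\infty$ is defined via a logarithm of a sum, whereas the LP objective is a linear sum of rectangle weights, so matching them requires that the layer-cake decomposition preserve the $\infty$-norm structure, i.e.\ that the worst-case ratio witnessing $I_\infty(XY : Q, R)$ be realized exactly by the reconstructed LP solution. The product hypothesis $\mu = \mu_{\mathcal X} \otimes \mu_{\mathcal Y}$ is what makes this work cleanly: the two marginals contribute independently, so the factorization of $\Pr[Q = q \mid x, y]$ translates directly into a rectangle decomposition without any loss, mirroring~\cite[Theorem~3]{prabhakaran2015r} with the product constraint imposed throughout.
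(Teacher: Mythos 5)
Your proposal is correct and, in its primary route, matches the paper exactly: the paper offers no separate argument for the product case, it simply specializes the per-distribution identity $pIC^{\infty,\mu}_\epsilon(f)=\log\operatorname{prt}^{\mu}_\epsilon(f)$ (Fact~\ref{fact:pICmu-prtmu}, cited from Prabhakaran--Prabhakaran / Nolin) to product distributions, which is precisely your ``invoke the identity at each product $\mu$'' step. One caveat on your optional self-contained sketch: the closing claim that the product hypothesis on $\mu$ is what makes the layer-cake lower bound work is misplaced --- the rectangle decomposition comes from the pseudotranscript factorization $g(q,x)h(q,y)$ and the argument goes through for arbitrary $\mu$ (as the general fact already asserts) --- but since your first route is complete, this does not affect the correctness of the proposal.
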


The fact that quantum communication complexity is cheaper than randomized communication complexity extends to the one-way communication model under product distributions. Moreover, by definition of distributional communication complexity, the deterministic version of Fact~\ref{fact:Rpub-RpubAB} (see also~{\cite[Exercise~4.2.1]{KNisan96}}) naturally extends to distributional communication complexity under product distribution. In addition, the one-way rectangle bound over product distributions serves as an upper bound on $R^{A\to B, []}_{2\epsilon}(f)$ and a lower bound on $Q^{A\to B, []}_{\epsilon}(f)$ under certain conditions. 

\begin{fact}\label{fact:QAB-RAB}
Let $f: \mathcal X \times \mathcal Y \rightarrow \{0, 1\}$. Then, $Q^{A\rightarrow B, []}_\epsilon(f)\leq R^{A\rightarrow B, []}_\epsilon(f)$. 
\end{fact}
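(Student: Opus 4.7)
The plan is to show that any classical one-way randomized protocol can be simulated by a one-way quantum protocol with the same communication cost and the same error profile on every input $(x,y)$, which in particular preserves distributional error under any product distribution.

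Given an optimal one-way randomized protocol $\pi$ witnessing $R^{A \rightarrow B, []}_{\epsilon}(f)$ under a product distribution $\mu = \mu_{\mathcal X} \times \mu_{\mathcal Y}$, with cost $c$ bits and error at most $\epsilon$, I would construct a one-way quantum protocol $\pi'$ as follows. Alice, on input $x$, samples a message $m \in \{0,1\}^c$ according to the distribution prescribed by $\pi$ (using her private randomness, or the shared public randomness if applicable). She then prepares the computational basis state $\ket{m}$ on $c$ qubits and sends it to Bob. Bob measures the received register in the computational basis, recovers $m$ exactly, and outputs $\pi$'s output on $(m,y)$.

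Since computational-basis preparation followed by computational-basis measurement is noiseless, $\pi'$ induces exactly the same output distribution as $\pi$ on every input $(x,y)$. In particular, the distributional error of $\pi'$ under $\mu$ is identical to that of $\pi$, namely at most $\epsilon$, while the communication cost is $c$ qubits. Since this holds for every product distribution $\mu$, taking the worst case over such $\mu$ gives $Q^{A \rightarrow B, []}_{\epsilon}(f) \leq R^{A \rightarrow B, []}_{\epsilon}(f)$.

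There is no real obstacle here; the argument is a direct embedding of classical bits into the computational basis. The only point meriting care is consistency in the type of randomness allowed on each side: if the randomized model is distributional (in which case one may assume $\pi$ is deterministic by an averaging argument over $\mu$), the simulation is immediate; if public coins are present, they are preserved verbatim on the quantum side, as quantum protocols with prior entanglement can certainly share classical public randomness.
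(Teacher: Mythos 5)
Your proposal is correct and follows essentially the same route as the paper: a direct simulation of the classical one-way protocol by a quantum one, encoding the classical message (and the randomness, obtained by locally preparing and measuring a state such as $\sum_i \sqrt{p(i)}\ket{i}$) in computational basis states, which preserves the cost and the error on every input and hence under every product distribution. Your added remark that one may pass to a deterministic protocol for each fixed product distribution is a harmless refinement of the same argument.
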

\begin{proof}
In order to simulate a private coin randomized protocol with a quantum protocol, Alice and Bob each prepare the state $\displaystyle\sum_i \sqrt{p(i)}\ket{i}$ and measure it locally. The rest of the protocol proceeds by replacing classical transformations with their quantum counterparts. 
\end{proof}

\begin{fact}\label{fact:Dmu-DmuAB_prod}
Let $f: \mathcal X \times \mathcal Y \rightarrow \{0, 1\}$. We have $D^{A \rightarrow B, []}_\epsilon(f) \geq \log D^{[]}_\epsilon(f)$. 
\end{fact}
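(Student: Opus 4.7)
The plan is to derive the stated inequality from the trivial syntactic embedding of one-way protocols into two-way protocols, from which an even stronger bound (without the logarithm) already follows. Specifically, I will first argue that $D^{[]}_\epsilon(f) \leq D^{A\rightarrow B,[]}_\epsilon(f)$ and then invoke the elementary inequality $\log x \leq x$ for $x \geq 1$ to conclude $\log D^{[]}_\epsilon(f) \leq D^{A\rightarrow B,[]}_\epsilon(f)$, which is exactly the claim.

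First, I would fix an arbitrary product distribution $\mu = \mu_{\mathcal X}\times \mu_{\mathcal Y}$ on $\mathcal X \times \mathcal Y$ and take an optimal one-way deterministic protocol $\pi^{A\rightarrow B}$ that computes $f$ with distributional error at most $\epsilon$ under $\mu$ and with communication cost $D^{A\rightarrow B,\mu}_\epsilon(f)$. Since a one-way protocol is, by definition, a two-way protocol in which only Alice speaks (Bob sending zero-length messages), $\pi^{A\rightarrow B}$ is itself a valid two-way deterministic protocol achieving the same distributional error under $\mu$ with the same cost. Therefore $D^{\mu}_\epsilon(f) \leq D^{A\rightarrow B,\mu}_\epsilon(f)$.

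Second, since $D^{[]}_\epsilon(f) = \max_{\mu\text{ product}} D^{\mu}_\epsilon(f)$ and $D^{A\rightarrow B,[]}_\epsilon(f) = \max_{\mu\text{ product}} D^{A\rightarrow B,\mu}_\epsilon(f)$, taking the maximum over product distributions on both sides of the pointwise inequality above preserves it, yielding
\begin{equation*}
D^{[]}_\epsilon(f) \;\leq\; D^{A\rightarrow B,[]}_\epsilon(f).
\end{equation*}
Applying the monotonicity of $\log$ together with $\log x \leq x$ for $x \geq 1$ (which holds in the non-degenerate regime where the protocols communicate at least one bit) gives $\log D^{[]}_\epsilon(f) \leq D^{[]}_\epsilon(f) \leq D^{A\rightarrow B,[]}_\epsilon(f)$, as required.

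There is no genuine obstacle in this argument, as it rests entirely on the inclusion of one-way protocols into two-way protocols combined with a trivial bound on the logarithm. The statement is recorded in the excerpt principally to justify the arrow from $D^{[]}_\epsilon(f)$ to $D^{A\rightarrow B,[]}_\epsilon(f)$ in Figure~\ref{fig:C5}, since combined with the chain of facts establishing $D^{A\rightarrow B,[]}_\epsilon(f) \rightarrow \cdots \rightarrow D^{[]}_\epsilon(f)$ around Class 5, it completes the constant-equivalence of these two measures.
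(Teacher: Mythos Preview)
Your argument is correct and is in fact the natural one: since every one-way deterministic protocol is in particular a two-way deterministic protocol, one has $D^{[]}_\epsilon(f)\le D^{A\to B,[]}_\epsilon(f)$ for every product distribution, and the stated inequality with the logarithm is a strictly weaker consequence via $\log x\le x$.

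The paper does not spell out a proof for this fact; it merely remarks that ``the deterministic version of Fact~\ref{fact:Rpub-RpubAB} \ldots\ naturally extends to distributional communication complexity under product distribution.'' Note, however, that Fact~\ref{fact:Rpub-RpubAB} states $R^{\operatorname{pub}}_\epsilon(f)\ge \log R^{A\to B,\operatorname{pub}}_\epsilon(f)$, which runs in the \emph{opposite} direction (two-way lower bounds the logarithm of one-way), so the cross-reference is at best loose. Your direct containment argument is both simpler and tighter than what the paper's sentence suggests, since it yields the linear inequality $D^{[]}_\epsilon(f)\le D^{A\to B,[]}_\epsilon(f)$ without the extraneous logarithm.
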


\begin{fact}[{\cite[Theorem~4]{Jain2009}}]\label{fact:rec-QAB}
Let $f:\mathcal X \times \mathcal Y \rightarrow \mathcal Z$ be a total function and let $\epsilon\in(0, 1/2)$. Also, let $\operatorname{rec}^{A\to B, []}_\epsilon(f) > 2\log(1/\epsilon)$. Then, $Q^{A\to B, []}_{\epsilon^{3/8}}(f) \geq \Omega(\operatorname{rec}^{A\to B, []}_\epsilon(f))$. 
\end{fact}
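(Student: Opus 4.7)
My plan is to prove this lower bound via a quantum substate compression argument adapted to product distributions, in the spirit of Jain--Radhakrishnan--Sen. Start with an optimal one-way quantum protocol $\pi$ of cost $c = Q^{A\rightarrow B,[]}_{\epsilon^{3/8}}(f)$ that solves $f$ to error at most $\epsilon^{3/8}$ under a product distribution $\mu = \mu_X \times \mu_Y$ witnessing the product-distribution quantum complexity. On input $x$, Alice sends a $c$-qubit state $\rho_x$; Bob then applies a $y$-dependent POVM $\{E_y^z\}_{z \in \mathcal Z}$ to produce the output.

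The first main step is to average Alice's state to obtain $\rho = \mathbb{E}_{x \sim \mu_X}[\rho_x]$, a fixed $c$-qubit state. By Holevo's theorem, $\mathbb{E}_{x \sim \mu_X}[S(\rho_x \,\|\, \rho)] \leq c$. Applying the quantum substate theorem to the pair $(\rho_x, \rho)$ yields, for each $x$ and each parameter $\delta$, a state $\sigma_x^\delta$ within trace distance $\delta$ of $\rho_x$ that satisfies $\sigma_x^\delta \preceq 2^{O(S(\rho_x\|\rho)/\delta^2)}\rho$. Inverting this through a Markov argument over $x \sim \mu_X$, for a carefully chosen $\delta$ there exists a set $S \subseteq \mathcal X$ of $\mu_X$-measure $\mu_X(S) \geq 2^{-O(c/\delta^2 + \log(1/\delta))}$ on which the averaged message $\bar\rho_S = \mathbb{E}_{x' \in S}[\rho_{x'}]$ is operationally $\delta$-indistinguishable from $\rho_x$ under any POVM for typical $x \in S$.

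The second step is to turn this into a one-way rectangle witnessing the rectangle bound. Fix $R = S \times \mathcal Y$; since Bob's measurement on $\bar\rho_S$ does not depend on $x$, its output distribution is a fixed $\nu_y$ on $\mathcal Z$ for each $y$. Combining the protocol error $\epsilon^{3/8}$ with the substate-approximation loss and applying Markov's inequality simultaneously over $y \sim \mu_Y$ and $x \in S$, one concludes that for the choice $\delta = \Theta(\epsilon^{1/4})$ a single output rule $z^*(y)$ agrees with $f(x,y)$ with probability at least $1 - \epsilon$ under $(x,y)$ drawn from $\mu$ restricted to $R$. Hence $R$ is an $\epsilon$-monochromatic one-way rectangle of $\mu$-measure $\geq 2^{-O(c)}$, giving $\operatorname{rec}^{A \rightarrow B, []}_\epsilon(f) \leq O(c)$ after taking the maximum over product distributions $\mu$, which is the desired inequality up to constants.

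The main technical obstacle is balancing the three error regimes: the substate theorem's trace-distance loss $\delta$, the protocol error $\epsilon^{3/8}$, and the target rectangle error $\epsilon$. The exponent $3/8$ is precisely what arises from simultaneously absorbing two Markov passes (one to pass from average substate quality to a set $S$ of non-negligible measure, one to pass from average correctness to per-input correctness on $R$), while the hypothesis $\operatorname{rec}^{A \rightarrow B, []}_\epsilon(f) > 2\log(1/\epsilon)$ ensures that the $O(\log 1/\epsilon)$ additive slack generated by these steps and by the substate rounding is absorbed into the multiplicative $\Omega(\cdot)$ rather than dominating the bound.
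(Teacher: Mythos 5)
The paper itself does not prove this statement; it is imported directly as Theorem~4 of the cited Jain--Zhang reference, so your argument has to stand on its own. Its skeleton (Holevo bound, quantum substate theorem, Markov selections, and using productness of $\mu$ so that restricting to a set $S\subseteq\mathcal X$ keeps the $y$-marginal equal to $\mu_Y$ and makes $S\times\mathcal Y$ a legitimate one-way rectangle) is in the right spirit, but your second step contains a genuine gap. The substate theorem only yields the one-sided, \emph{multiplicative} statement $\sigma_x^\delta \preceq 2^{O(S(\rho_x\|\rho)/\delta^2)}\rho$, i.e.\ any outcome that is likely when Bob measures $\rho_x$ retains probability at least $2^{-O(c/\delta^2)}$ when he measures $\rho$. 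It does not give additive, ``operational under any POVM'' (i.e.\ trace-distance) closeness of $\rho_x$ to any fixed state, and certainly not to the average $\bar\rho_S$ over a set of measure $2^{-O(c/\delta^2+\log(1/\delta))}$. Worse, the clustering claim you need is false on its own terms: since $\mathbb{E}_{x}\left[S(\rho_x\|\rho)\right]=S(\rho)-\mathbb{E}_x\left[S(\rho_x)\right]\leq c$ holds automatically for \emph{every} family of $c$-qubit messages, your step 2 (if it followed) would be a purely geometric fact about arbitrary families of states in dimension $2^c$; taking the $\rho_x$ to be Haar-random pure states, a trace-distance-$\delta$ ball has measure about $2^{-\Theta(2^c\log(1/\delta))}$, so no subset of measure $2^{-O(c/\delta^2)}$ can be $\delta$-indistinguishable from its own average. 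This is exactly the obstruction that separates the quantum case from the classical proof, where one can pigeonhole on the at most $2^c$ classical messages.

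Because of this, the single output rule $z^*(y)$ read off $\bar\rho_S$ in your third step is unsupported, and the claimed $\epsilon$-monochromatic rectangle of measure $2^{-O(c)}$ does not follow. A correct argument along these lines (as in the cited source) must use the substate domination multiplicatively: transfer the protocol's per-input success probability on $\rho_x$ to a $2^{-O(c/\delta^2)}$-scaled quantity involving a fixed state or a one-message-like subdistribution with bounded relative min-entropy, and extract the nearly monochromatic one-way rectangle from that transfer, rather than from additive indistinguishability of an averaged message. Until that central step is replaced, the parameter bookkeeping you describe (the choice $\delta=\Theta(\epsilon^{1/4})$, the role of the hypothesis $\operatorname{rec}^{A\to B,[]}_\epsilon(f)>2\log(1/\epsilon)$, and the provenance of the exponent $3/8$) cannot be meaningfully checked. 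Two smaller points: you should run the argument for an arbitrary (or the rectangle-bound-maximizing) product distribution rather than the one ``witnessing the product-distribution quantum complexity''; and with prior entanglement the Holevo-type bound degrades to $I(X;M)\leq 2c$, which is harmless but should be stated.
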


\begin{fact}[~\cite{Jain2009, Jain2008}]\label{fact:RAB-rec}
Let $\epsilon\in(0, 1/4)$, $R^{A\to B, []}_{2\epsilon}(f) = O(\operatorname{rec}_\epsilon^{A\to B, []}(f))$.
\end{fact}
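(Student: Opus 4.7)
The plan is to reduce to distributional complexity via Yao's principle and then use a greedy covering by $\epsilon$-monochromatic one-way rectangles. First, I would observe that it suffices to show, for every product distribution $\mu = \mu_A \times \mu_B$ over $\mathcal X \times \mathcal Y$, that $D^{A \to B, \mu}_{2\epsilon}(f) = O(\operatorname{rec}^{A\to B, []}_\epsilon(f))$. Set $k = \operatorname{rec}^{A\to B, []}_\epsilon(f)$.

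Next, I would iteratively apply the defining guarantee of the rectangle bound. By definition, for any product distribution $\nu$ there exists an $\epsilon$-monochromatic one-way rectangle $S \times \mathcal Y$ whose $\nu$-mass, namely $\nu_A(S)$, is at least $2^{-k}$. Starting from $\mu$, I pick such a rectangle $R_1 = S_1 \times \mathcal Y$ together with its label $z_1$, then condition $\mu$ on $(\mathcal X \setminus S_1) \times \mathcal Y$. Because we condition on a one-way cylinder, the resulting measure is again a product distribution $\mu^{(1)}$, so the rectangle-bound guarantee still applies and yields an $R_2 = S_2 \times \mathcal Y$ covering a $2^{-k}$ fraction of the leftover $\mu_A$-mass. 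Repeating for $t = \lceil 2^k \ln(1/\epsilon) \rceil$ steps, the surviving $\mu_A$-mass is at most $(1 - 2^{-k})^t \leq \epsilon$.

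The protocol is then immediate: Alice sends $\lceil \log(t+1) \rceil = O(k)$ bits encoding the smallest $i \leq t$ with $x \in S_i$, or a special symbol $\bot$ if none exists, and Bob outputs $z_i$ upon receiving $i$ and any fixed guess on $\bot$. The error contribution from inputs falling into $R_i$ is at most $\epsilon$ by the monochromaticity of $R_i$ under $\mu^{(i-1)}$, and the contribution from uncovered inputs is at most $\epsilon$ by the covering bound, yielding total distributional error at most $2\epsilon$.

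The main obstacle is verifying that the $\epsilon$-monochromaticity certified against the conditional distribution $\mu^{(i-1)}$ translates into a matching error guarantee when the protocol is ultimately evaluated against $\mu$. The key observation is that, for a product distribution conditioned on a one-way cylinder $(\mathcal X \setminus \bigcup_{j<i} S_j) \times \mathcal Y$, the mass of $R_i$ under $\mu$ is proportional to its mass under $\mu^{(i-1)}$ via a factor that only depends on the shrinking $\mu_A$-normalization; hence the per-rectangle error $\Pr_\mu[f(x,y) \neq z_i \mid (x,y) \in R_i]$ equals the conditional error certified at stage $i$. A small secondary point is that the rectangle bound for each $\mu^{(i-1)}$ is automatically $\leq k$ because $\operatorname{rec}^{A\to B, []}_\epsilon(f)$ is a supremum over all product distributions, which ensures the iteration can actually be carried out for the full $t$ steps.
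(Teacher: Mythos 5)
Your argument is essentially the standard proof of this fact from the cited works (the paper itself does not prove it, it simply cites Jain--Zhang and Jain--Klauck--Nayak): reduce to distributional one-way complexity over product distributions via Fact~\ref{fact:RAB-DAB}, greedily peel off one-way rectangles of $\mu_A$-mass at least $2^{-k}$ guaranteed by $\operatorname{rec}^{A\to B,[]}_\epsilon(f)=k$, use the key observation that conditioning a product distribution on a cylinder $(\mathcal X\setminus\bigcup_{j<i}S_j)\times\mathcal Y$ keeps it product so the bound keeps applying, and let Alice send the $O(k)$-bit index of the first $S_i$ containing $x$; the error accounting ($\epsilon$ inside the covered part plus $\epsilon$ of uncovered mass) is exactly right. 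The one correction needed is your use of a fixed label $z_i$: for the one-way rectangle bound, $\epsilon$-monochromaticity of the conditional distribution on $S_i\times\mathcal Y$ means there is a function $g_i:\mathcal Y\rightarrow\mathcal Z$ with $\Pr[(x,y,g_i(y))\in f]\geq 1-\epsilon$, not that $f$ is nearly constant on the rectangle; a large constant-label rectangle does not follow from the definition. Since Bob knows $y$ and the received index $i$, simply have him output $g_i(y)$; every estimate in your proof, including the per-index error bound under the conditional product distribution, goes through verbatim with this substitution.
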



Now, we introduce more complexity measure related to learning theory. In learning theory, a \emph{concept class} $\mathtt C$ is any set of functions $f:\mathcal X\times \mathcal Y \rightarrow \{-1, 1\}$ for a finte set $\mathcal X \times \mathcal Y$. Ths \emph{statistical query (SQ) dimension} under a given distribution $\mu$, denoted as $\operatorname{sqdim}_\mu(\mathtt C)$, is defined to be the largest $d\in\mathbb Z_+$ for which there are functions $f_1, f_2, \cdots, f_d\in\mathtt C$ and a probability distribution $\mu$ on $\mathcal X\times\mathcal Y$ such that $\left\vert \mathbb E_{(x, y)\sim \mu} \left[f_i(x, y)f_j(x, y)\right]\right\vert\leq\frac{1}{d}$, for all $i\neq j$. We denote $\operatorname{sqdim}(\mathtt C)\coloneqq \max_{\mu} \operatorname{sqdim}_{\mu}(\mathtt C)$~\cite{Sherstov2008}. 

We identify $\mathtt C$ with the sign matrix $M$, whose rows are indexed by the functions of $\mathcal C$ and columns indexed by the inputs $(x, y)\in \mathcal X\times \mathcal Y$, and entries given by $M_{f, (x, y)} = f(x, y)$. For a matrix $M\in\{-1, 1\}^{n\times n}$, we define its SQ dimension, $\operatorname{sqdim}(M)$, to be the SQ dimension of the rows of $M$ viewed as functions $\{1, 2,\cdots, n\}\rightarrow \{-1, 1\}$. The \emph{statistical query complexity} $\operatorname{sq}(M)$ of $M$ as the SQ dimension of its rows, i.e. $\operatorname{sq}(M) \coloneqq \operatorname{sqdim}(\{f_1,...,f_n\})$, where $f_1, \cdots, f_n$ are the rows of $M$.

Sherstov~\cite{Sherstov2008} showed that statical query complexity characterizes the discrepancy. 
\begin{fact}[{\cite[Theorem~7.1]{Sherstov2008}}]\label{fact:sq-disc}
Let $M\in\{-1, 1\}^{m \times n}$. Then, $\sqrt{\frac{1}{2}\operatorname{sq}(M)} < \frac{1}{\operatorname{disc}^{[]} (M)} < 8 \operatorname{sq}(M)^2$. 
\end{fact}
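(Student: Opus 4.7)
Both $\operatorname{sq}(M)$ and $1/\operatorname{disc}^{[]}(M)$ measure the extent to which the rows of $M$ behave as near-orthonormal vectors in a product-weighted $\ell_2$ space, and the two inequalities reflect this correspondence in opposite directions.

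For the lower bound $\sqrt{\operatorname{sq}(M)/2} < 1/\operatorname{disc}^{[]}(M)$, I would exploit a witness of $\operatorname{sq}(M) = d$: rows $f_1, \ldots, f_d$ of $M$ and a distribution $\mu$ on columns such that the weighted Gram matrix $G_{ii'} = \mathbb{E}_{j \sim \mu}[f_i(j)f_{i'}(j)]$ has $G_{ii}=1$ and $|G_{ii'}| \leq 1/d$ for $i \neq i'$. Gershgorin then forces all eigenvalues of $G$ into $[0,2]$, so the $d \times |\mathcal{Y}|$ matrix $F$ with $F_{ij} = \sqrt{\mu_j}\,f_i(j)$ satisfies $\|F\|_{\mathrm{op}}^2 \leq 2$. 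Define the product distribution $\lambda = \alpha \times \mu$ on rows and columns where $\alpha$ is uniform over $\{f_1,\ldots,f_d\}$ and zero elsewhere. For any rectangle $A \times B$, setting $\tilde{\mathbf{1}}_B(j) = \sqrt{\mu_j}\,\mathbf{1}_B(j)$ and $v_i = (F\tilde{\mathbf{1}}_B)_i$, Cauchy--Schwarz gives
\[
\left|\sum_{i \in A,\, j \in B}\alpha_i\mu_j M_{ij}\right| = \frac{1}{d}\left|\sum_{i \in A} v_i\right| \leq \frac{\sqrt{|A|}}{d}\sqrt{\textstyle\sum_i v_i^2} \leq \frac{\sqrt{d}}{d}\sqrt{2\mu(B)} \leq \sqrt{2/d}.
\]
Since $A \cap \{f_1,\ldots,f_d\}$ can replace $A$ without changing the inner sum, this bound holds for every rectangle, giving $\operatorname{disc}_\lambda(M) \leq \sqrt{2/d}$ and hence the claim.

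For the upper bound $1/\operatorname{disc}^{[]}(M) < 8\operatorname{sq}(M)^2$, I would argue spectrally. A near-optimal product distribution $\alpha \times \beta$ witnessing $\delta = \operatorname{disc}^{[]}(M)$ yields, by LP duality for discrepancy together with the $\gamma_2^\infty$ characterization of Fact~\ref{fact:gammaInf-disc} (adapted to the product-distribution setting), a factorization of the reweighted matrix $\widetilde{M}_{ij} = \sqrt{\alpha_i \beta_j}\,M_{ij}$ whose row and column norms are of order $1/\delta$. From such a factorization I would extract $\Omega(1/\delta^2)$ rows of $M$ whose weighted Gram matrix under $\beta$ has off-diagonal entries of magnitude $O(\delta)$; a greedy Gram--Schmidt under the $\beta$-weighted inner product, rejecting rows whose projection onto the span of previously selected rows is large, will exhibit the required near-orthogonal family. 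Rescaling the orthogonality threshold to be $1/d$ for $d$ equal to the extracted count produces a valid SQ-dimension witness and gives $\operatorname{sq}(M) \geq \Omega(1/\delta^2)$, i.e., the desired inequality with constant $8$.

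The main obstacle is the upper bound: the quadratic---rather than linear---relation between $1/\operatorname{disc}^{[]}$ and $\operatorname{sq}$ is intrinsic and reflects that discrepancy is sensitive to the entire singular-value profile of the reweighted matrix while SQ dimension is governed mainly by the top singular value. Making the extraction quantitatively tight, with the constant $8$ and the precise orthogonality tolerance $1/d$, requires careful volumetric control (e.g., via restricted invertibility or a Cauchy--Schwarz-style bound on the residual norms in the greedy step); this is where the technical bulk of the proof lies.
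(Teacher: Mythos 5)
Your argument for the left inequality is correct and is essentially the proof in the cited source (Sherstov): take an SQ witness $(f_1,\dots,f_d;\mu)$, observe via Gershgorin that the $\mu$-weighted Gram matrix has spectral norm below $2$, and apply Cauchy--Schwarz to an arbitrary rectangle under the product distribution (uniform on $\{f_1,\dots,f_d\}$) $\times\,\mu$, giving $\operatorname{disc}^{[]}(M)\le\sqrt{2/d}$. No complaints there.

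The right inequality is where the proposal has a genuine gap, in two places. First, the endgame is quantitatively inconsistent with the definition of $\operatorname{sq}$: a family of $N=\Omega(1/\delta^{2})$ rows whose pairwise $\beta$-correlations are only $O(\delta)$ is \emph{not} an SQ-dimension witness for $N$, because the definition requires pairwise correlations at most $1/N=O(\delta^{2})$; ``rescaling the orthogonality threshold to $1/d$'' forces you to trim the family to $O(1/\delta)$ rows, so even if your extraction worked you would be asserting a linear relation $1/\operatorname{disc}^{[]}\le O(\operatorname{sq})$, which your argument does not establish and which is not what the fact claims. Second, the extraction itself leans on a ``product-distribution adaptation'' of the LP-duality/$\gamma_2^\infty$ characterization of discrepancy (Fact~\ref{fact:gammaInf-disc}); no such factorization-norm characterization is available for $\operatorname{disc}^{[]}$ (indeed, the lack of one is precisely why $\operatorname{disc}$ and $\operatorname{disc}^{[]}$ sit in different classes in this paper), so the claimed factorization with row/column norms $O(1/\delta)$ and the subsequent greedy Gram--Schmidt step are unsupported. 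The actual proof is far more elementary and purely combinatorial: fix any product distribution $\alpha\times\beta$ and let $d=\operatorname{sq}(M)$, so the rows have $\operatorname{sqdim}_\beta\le d$. Take a maximal set $S$ of rows with pairwise $\beta$-correlations at most $1/(d+1)$; then $|S|\le d$, and by maximality every row of $M$ has correlation of magnitude greater than $1/(d+1)$ with some $h\in S$ (rows of $S$ with themselves). Bucketing rows by the pair (witness $h$, sign of the correlation) gives at most $2d$ buckets, so some bucket $A$ has $\alpha$-mass at least $1/(2d)$, whence $\sum_{i\in A}\alpha_i\,\sigma\langle M_i,h\rangle_\beta>\tfrac{1}{2d}\cdot\tfrac{1}{d+1}$; splitting $h$ into the column sets where it is $+1$ and $-1$ yields a rectangle of correlation at least $\tfrac{1}{4d(d+1)}\ge\tfrac{1}{8d^{2}}$ under $\alpha\times\beta$. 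Since $\alpha\times\beta$ was arbitrary, $\operatorname{disc}^{[]}(M)\ge 1/(8\operatorname{sq}(M)^{2})$, which is the right inequality with the stated constant.
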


A combinatorial quantity that captures the learning complexity of a concept class is the \emph{Vapnik-Chervonenkis} (VC) dimension. Let $f: \mathcal X \times \mathcal Y\rightarrow \mathcal Z$ be a function whose communication complexity we are interested in. For any $x\in X$, let $f_x: \mathcal Y\rightarrow \mathcal Z$ be such that for every $y\in \mathcal Y$, $f_x(y) \coloneqq f(x, y)$. Define the set $f_{\mathcal X} = \{f_x\vert x\in \mathcal X\}$. Similarly, for $y\in \mathcal Y$, let $f_y(x)\coloneqq f(x, y)$ for all $x\in \mathcal X$ and $f_Y = \{f_y\vert y\in \mathcal Y\}$. In other words, the function class  $f_{\mathcal X}$ (resp. $f_{\mathcal Y}$), is defined by the set of rows (resp. columns) of the communication matrix associated with $f$. Conversely, given a concept/function class $f_{\mathcal X}$, an element $f_x\in f_{\mathcal X}$ is a feature vector of dimension $\vert \mathcal Y\vert$. This corresponds to the geometric representation of boolean matrices as points and half spaces in the context of learning theory.  The function $f(x, y)$ evaluates to 1 if and only if $f_x$ belongs to the half-space $\{h: \langle h, y\rangle >0\}$ and evaluates to 0 otherwise. This extends to the case of sign vectors and matrices~\cite{kremer1999randomized}. The VC dimension  of a matrix $M\in\{-1, 1\}^{n\times n}$, denoted as $VC(M)$, is the largest $d\in\mathbb Z_+$ such that $M$ features a $2^d\times d$ submatrix whose rows are the distinct elements of $\{-1, 1\}^d$. Results from~\cite{Sherstov2010} and~\cite{reyzin2020statistical} show that statistical query complexity can be upper and lower bounded by VC dimension. 

\begin{fact}[{\cite[Lemma~4.1] {Sherstov2010}}]\label{fact:sq-vc}
Let $\mathtt C$ be a concept class. Then, $\operatorname{sq}(\mathtt C)\leq 2^{O(VC(\mathtt C))}$. 
\end{fact}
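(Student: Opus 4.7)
The plan is to combine a Sauer--Shelah projection argument with a probabilistic distinguishability step. Let $d = VC(\mathtt C)$ and set $k = \operatorname{sq}(\mathtt C)$, so by definition there exist a distribution $\mu$ on $\mathcal X \times \mathcal Y$ and functions $f_1,\ldots,f_k \in \mathtt C$ with $\bigl|\mathbb{E}_{\mu}[f_i f_j]\bigr| \leq 1/k$ for all $i \neq j$. Since the $f_i$ take values in $\{-1,1\}$, near-orthogonality immediately translates into a pairwise agreement bound: $\Pr_{\mu}[f_i(x,y)=f_j(x,y)] = (1+\mathbb{E}_{\mu}[f_i f_j])/2 \leq 1/2 + 1/(2k)$.

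Next I would draw an i.i.d.\ sample $S=(s_1,\ldots,s_m)$ from $\mu^{\otimes m}$ and ask how large $m$ must be for the restrictions $f_1|_S,\ldots,f_k|_S$ to be pairwise distinct. By independence of the samples, for any fixed $i\neq j$ we have $\Pr[f_i|_S = f_j|_S] \leq (1/2+1/(2k))^m \leq (3/4)^m$ for $k\geq 2$. A union bound over the at most $k^2$ pairs shows that $m = C\log k$ suffices (for a suitable absolute constant $C$) to guarantee that with positive probability all $k$ restrictions are distinct; fix such a realization of $S$.

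On the other hand, Sauer--Shelah gives a projection bound: the number of distinct restrictions $\{f|_S : f \in \mathtt C\}$ is at most $\sum_{i=0}^{d}\binom{m}{i} \leq (em/d)^d$ whenever $m\geq d$. Combining with the previous step yields
\begin{equation*}
k \;\leq\; (em/d)^{d} \;=\; \bigl(eC\log k/d\bigr)^{d}.
\end{equation*}
Taking logarithms and setting $u = (\log k)/d$, this becomes $u \leq \log(eCu)$, which forces $u = O(1)$ and hence $k \leq 2^{O(d)}$, as desired.

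The routine bookkeeping is the Sauer--Shelah estimate and the union bound; the mildly delicate step I expect to be the main obstacle is the implicit inequality $k \leq (eC\log k/d)^{d}$, since one must argue carefully that despite the $\log k$ on the right-hand side the solution is genuinely $k \leq 2^{O(d)}$ with the constant absorbed into the $O(\cdot)$, independently of $d$. Once that self-referential inequality is dispatched via the substitution $u = (\log k)/d$, the claimed bound $\operatorname{sq}(\mathtt C) \leq 2^{O(VC(\mathtt C))}$ follows.
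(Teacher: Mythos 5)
Your argument is correct, and it is worth noting that the paper itself does not prove this fact at all: it is imported as a black box from Sherstov's work on communication complexity under product distributions (Lemma~4.1 there), so there is no in-paper proof to match against. What you have done is give a self-contained proof of the underlying packing phenomenon: the SQ witnesses $f_1,\dots,f_k$ are pairwise nearly uncorrelated under $\mu$, hence pairwise agree with probability at most $1/2+1/(2k)\le 3/4$, and any family of functions that is pairwise $\Omega(1)$-separated under some distribution has size $2^{O(d)}$ when the class has VC dimension $d$. The standard route in the literature (and essentially the one behind Sherstov's lemma) is to quote a packing bound for VC classes such as Haussler's, which gives the cleaner estimate $O(1/\epsilon)^{d}$ directly; your sampling-plus-Sauer--Shelah argument reproves a slightly weaker packing bound (with the self-referential $k\le (eC\log k/d)^d$ step), but that weaker form is entirely sufficient for the $2^{O(d)}$ conclusion, and your resolution of the self-referential inequality via $u=(\log k)/d$ is sound. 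Two small bookkeeping points you should make explicit: the Sauer--Shelah estimate $\sum_{i\le d}\binom{m}{i}\le (em/d)^d$ needs $m\ge d$, so either take $m=\max(d,C\log k)$ or note that $C\log k<d$ already gives $k\le 2^{d/C}$; and the case $k\le 1$ (and the use of $k\ge 2$ in the $3/4$ bound) is trivial. With those caveats, your proof is a valid, more elementary alternative to citing the packing lemma.
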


\begin{fact}[{\cite[Observation~20] {reyzin2020statistical}}]\label{fact:vc-sq}
    For a concept class $\mathcal C$, $\operatorname{sq}(\mathtt C) = \Omega(VC(\mathtt C))$. 
\end{fact}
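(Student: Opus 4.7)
The plan is to exhibit, given $d = VC(\mathtt C)$, an explicit collection of $m = \Omega(d)$ concepts in $\mathtt C$ and a distribution $\mu$ under which they are exactly pairwise orthogonal. Since orthogonal vectors trivially have correlation $0 \leq 1/m$, this will certify $\operatorname{sqdim}_\mu(\mathtt C) \geq m$ and hence $\operatorname{sq}(\mathtt C) \geq m = \Omega(d)$. The tool that makes orthogonality cheap is a Hadamard matrix, together with the fact that shattering lets us realize any prescribed sign pattern on the shattered coordinates as a concept in $\mathtt C$.

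First, let $S = \{y_1, \dots, y_d\} \subseteq \mathcal Y$ be a set of $d$ shattered columns, so that for every $z \in \{-1,1\}^d$ there is some $f_z \in \mathtt C$ with $f_z(y_i) = z_i$ for all $i \in [d]$. Next, set $m = 2^{\lfloor \log_2 d\rfloor}$, so $d/2 < m \leq d$, and let $H_m \in \{-1,1\}^{m \times m}$ be the Sylvester Hadamard matrix of order $m$, whose rows $h^{(1)}, \dots, h^{(m)}$ satisfy $\langle h^{(i)}, h^{(j)} \rangle = 0$ for $i \neq j$. For each $i \in [m]$, apply shattering to the sub-pattern of length $m$ given by $h^{(i)}$ (extending arbitrarily on the remaining coordinates of $S$) to obtain a concept $f_i \in \mathtt C$ with $f_i(y_k) = h^{(i)}_k$ for every $k \in [m]$.

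Now take $\mu$ to be the uniform distribution on $\{y_1, \dots, y_m\}$. For any $i \neq j$ in $[m]$,
\[
\mathbb E_{y \sim \mu}[f_i(y) f_j(y)] \;=\; \frac{1}{m}\sum_{k=1}^{m} h^{(i)}_k h^{(j)}_k \;=\; \frac{1}{m}\langle h^{(i)}, h^{(j)}\rangle \;=\; 0,
\]
so in particular $|\mathbb E_\mu[f_i f_j]| \leq 1/m$. By the definition of statistical query dimension, this exhibits $m$ pairwise $1/m$-orthogonal concepts in $\mathtt C$, so $\operatorname{sq}(\mathtt C) = \operatorname{sqdim}(\mathtt C) \geq \operatorname{sqdim}_\mu(\mathtt C) \geq m \geq d/2 = VC(\mathtt C)/2$, which gives the claimed $\operatorname{sq}(\mathtt C) = \Omega(VC(\mathtt C))$.

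The only place where any care is needed is the choice of $m$: Sylvester's construction only provides Hadamard matrices of order a power of $2$, which costs at most a factor of $2$ in the VC parameter and is absorbed into the $\Omega(\cdot)$. One could tighten the constant by invoking Paley-type Hadamard constructions whenever available, but this is not necessary for the asymptotic statement. No subtler argument (e.g.\ probabilistic selection, which would only give $\Omega(\sqrt{d/\log d})$) is needed, because shattering lets us pick the orthogonal patterns deterministically.
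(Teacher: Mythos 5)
Your argument is correct: shattering lets you realize the $m$ rows of a Sylvester Hadamard matrix ($d/2 < m \le d$) as distinct concepts, and under the uniform distribution on those $m$ shattered points they are exactly orthogonal, which certifies $\operatorname{sqdim}_\mu(\mathtt C)\ge m \ge VC(\mathtt C)/2$ as required by the paper's definition of $\operatorname{sq}$. The paper itself gives no proof (it cites Reyzin's Observation~20), and your construction is essentially the standard argument behind that citation, so there is nothing to flag beyond the minor observation that the distinctness of the chosen concepts (needed implicitly by the definition) follows from the Hadamard rows being distinct patterns on the shattered set.
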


Yao's principle~\cite{Yao77} holds for one-way communication under product distribution. However, Sherstov~\cite{Sherstov2010} showed that the max relationship between $R^{ []}_\epsilon(f)$ and $D^{[]}_\epsilon(f)$ does not hold for the two-way model. In particular, they give a function $f:\{0, 1\}^n\times \{0, 1\}^n\rightarrow \{0, 1\}$ for which $\max_{[]} D^{[]}_\epsilon(f) = \Theta(1)$ but $R^{pub}_\epsilon (f) = \Theta(n)$. 

\begin{fact}[{\cite{yao1983lower},\cite[Theorem~2.2]{Kremer1999}}]\label{fact:RAB-DAB}
For every function $f: \mathcal X \times \mathcal Y \rightarrow \{0, 1\}$ and for every $0 < \epsilon < 1$, we have $R^{A\rightarrow B, []}_\epsilon(f) = \displaystyle\max_{[]} D^{A\rightarrow B, []}_\epsilon(f)$. 
\end{fact}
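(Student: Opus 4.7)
The claimed equality splits into two inequalities, which I would establish separately.

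The $\geq$ direction is a routine averaging argument, the easy half of Yao's principle. Given any randomized one-way protocol $\Pi$ of cost $c$ with worst-case error at most $\epsilon$, and any product distribution $\mu=\mu_X\times\mu_Y$, the expected error under $\mu$ is at most $\epsilon$ by linearity of expectation. Hence there is a fixing of the public coins giving a deterministic one-way protocol of cost at most $c$ whose error under $\mu$ is at most $\epsilon$, so $D^{A\rightarrow B, \mu}_\epsilon(f)\leq c$ for every product $\mu$. Taking the maximum over product $\mu$ yields the claim.

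For the $\leq$ direction, I would combine two ingredients. First, Yao's minimax principle~\cite{yao1983lower}, applied to the zero-sum game in which the row player picks a deterministic one-way protocol of cost at most $c$ and the column player picks an input distribution on $\mathcal X\times\mathcal Y$, yields the identity $R^{A\rightarrow B}_\epsilon(f) = \max_\mu D^{A\rightarrow B, \mu}_\epsilon(f)$ with the maximum taken over \emph{all} distributions. Second, I would invoke~\cite[Theorem~2.2]{Kremer1999} to restrict this maximum to product distributions. The underlying structural fact is that in a deterministic one-way protocol $(g,h)$ Alice's message $g(x)$ depends only on $x$, which allows one to decouple the adversary's choice of $x$ from its choice of $y$ and replace any worst-case distribution by a product distribution of the same (or greater) hardness.

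The main obstacle is precisely this Kremer--Nisan--Ron step. Because the set of product distributions is not convex, von Neumann's or Sion's minimax theorem does not apply directly to the restricted game and duality cannot be used as a black box. Instead the reduction is constructive: starting from an optimal non-product adversary $\mu$ with marginals $\mu_X, \mu_Y$, one builds a product distribution (typically $\mu_X\times\mu_Y'$ for a $\mu_Y'$ derived from the protocol's message structure) on which every one-way protocol of cost at most $c$ still errs with probability strictly greater than $\epsilon$. Chaining this reduction with Yao's minimax yields the claimed equality; as the surrounding text already flags, this product-distribution restriction is a one-way-specific phenomenon that fails in the two-way model by Sherstov's counterexample.
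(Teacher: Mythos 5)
You have misread what the left-hand side measures, and that misreading is what forces you into an impossible step. The superscript $[]$ on $R^{A\rightarrow B,[]}_\epsilon(f)$ means that the randomized complexity itself is taken under (the worst) product distribution, i.e.\ $R^{A\rightarrow B,[]}_\epsilon(f)=\max_{\mu\ \mathrm{product}} R^{A\rightarrow B,\mu}_\epsilon(f)$, where a protocol only needs error at most $\epsilon$ averaged over $\mu$ and the coins. Under that reading the fact is just the ``fix the randomness'' half of Yao's principle applied distribution by distribution: for every fixed product $\mu$ we have $R^{A\rightarrow B,\mu}_\epsilon(f)\leq D^{A\rightarrow B,\mu}_\epsilon(f)$ because deterministic protocols are a special case, and $D^{A\rightarrow B,\mu}_\epsilon(f)\leq R^{A\rightarrow B,\mu}_\epsilon(f)$ by averaging over the coins and fixing a good coin string; taking the maximum over product $\mu$ gives the equality. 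No minimax theorem and no passage from general to product distributions is needed, and nothing here is specific to one-way protocols. (The paper itself supplies no proof, only the citations to Yao and Kremer--Nisan--Ron, so this simple argument is the only content to be supplied.) Your $\geq$ direction is essentially this averaging step and is fine, except that it should be run with the error of the given randomized protocol measured against the fixed product $\mu$ rather than in the worst case.

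Your $\leq$ direction, however, tries to prove the stronger statement $R^{A\rightarrow B,\operatorname{pub}}_\epsilon(f)=\max_{[]}D^{A\rightarrow B,[]}_\epsilon(f)$ for the worst-case randomized complexity, and the ``structural fact'' it rests on --- that for one-way protocols any hard correlated distribution can be replaced by a product distribution of comparable hardness --- is false. Take the Greater-Than function $GT$: its rows form a chain, so $VC(GT)=1$ and, by Fact~\ref{fact:vc-RAB}, $\max_{[]}D^{A\rightarrow B,[]}_{1/3}(GT)=O(1)$; yet $R^{A\rightarrow B,\operatorname{pub}}_{1/3}(GT)=\Omega(n)$ (e.g.\ via the standard reduction from Augmented Index, where Bob compares $x$ with $y=x_1\cdots x_{i-1}10^{n-i}$), so by the unrestricted Yao principle $\max_{\mu}D^{A\rightarrow B,\mu}_{1/3}(GT)=\Omega(n)$. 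Thus the maximum over all distributions and the maximum over product distributions can be unboundedly far apart already in the one-way model; the hardness of $GT$ for one-way protocols lives precisely in correlated distributions (with $y$ concentrated near $x$), and no product distribution of the form $\mu_X\times\mu_Y'$ recovers it. So the Kremer--Nisan--Ron step you appeal to does not exist, and under your reading of the left-hand side the statement you are proving would itself be false --- which is the clearest sign that the intended reading is the distributional one, for which the short argument above suffices. Sherstov's example shows a failure of the analogous comparison in the two-way model, but it is not the case that the comparison with worst-case randomized complexity holds one-way.
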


The two results show the equivalence between $VC(f)$, $R^{A\to B, []}_{\epsilon}(f)$, $\operatorname{sub}^{A\rightarrow B, []}_{\mathcal Y, \epsilon}(f)$ and $\operatorname{rec}^{A \rightarrow B, []}_\epsilon(f)$. 

\begin{fact}[{\cite[Theorem~3.2]{Kremer1999}}]\label{fact:vc-RAB}
For every function $f: \mathcal X \times \mathcal Y \rightarrow \{0, 1\}$ and for every $\epsilon \leq 1/8$ ,$R^{A\to B, []}_{1/3}(f)=\Theta(VC(f))$. 
\end{fact}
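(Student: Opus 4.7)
The plan is to prove the two directions of $R^{A\to B, []}_{1/3}(f) = \Theta(VC(f))$ separately, in both cases first reducing to distributional one-way complexity under a product distribution via Fact~\ref{fact:RAB-DAB}, which equates $R^{A\to B, []}_\epsilon(f)$ with $\max_{\mu \text{ product}} D^{A\to B, \mu}_\epsilon(f)$. Throughout, I identify $x \in \mathcal X$ with the concept $f_x := f(x, \cdot) : \mathcal Y \to \{0,1\}$, and work with the concept class $f_{\mathcal X}$ whose VC-dimension is $VC(f)$.

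For the upper bound $R^{A\to B, []}_{1/3}(f) = O(VC(f))$, fix an arbitrary product distribution $\mu = \mu_{\mathcal X} \times \mu_{\mathcal Y}$. The idea is the classical VC-based sample-compression trick. Using shared randomness, Alice and Bob draw i.i.d.\ samples $y_1,\dots,y_m \sim \mu_{\mathcal Y}$ with $m = \Theta(VC(f))$ (chosen so that the uniform convergence bound for VC-classes gives $\epsilon$-approximation with probability $\ge 1 - 1/10$). Alice sends the $m$-bit string $\bigl(f(x,y_1),\dots,f(x,y_m)\bigr)$. Bob, using his knowledge of $y_1,\dots,y_m$ and $y$, picks any $x' \in \mathcal X$ consistent with Alice's labels on the sample and outputs $f_{x'}(y)$. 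By the fundamental theorem of VC theory, any two hypotheses in $f_{\mathcal X}$ that agree on the $m$ samples agree on a $(1-\epsilon)$-fraction of $\mu_{\mathcal Y}$, so $\Pr_{y\sim\mu_{\mathcal Y}}[f_{x'}(y) \neq f_x(y)] \le \epsilon$ except with small probability over the sample. This yields a protocol with total error $\le 1/3$ and communication $O(VC(f))$ under any product distribution. Derandomizing via an averaging argument and applying Fact~\ref{fact:RAB-DAB} gives $R^{A\to B, []}_{1/3}(f) = O(VC(f))$.

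For the lower bound $R^{A\to B, []}_{1/3}(f) = \Omega(VC(f))$, let $d = VC(f)$ and let $S \subseteq \mathcal X$, $T \subseteq \mathcal Y$ with $|S| = 2^d$, $|T| = d$ witness the shattering, so the $2^d \times d$ submatrix of $f$ indexed by $(S,T)$ has all $2^d$ binary strings as its rows. Consider the product distribution $\mu$ that is uniform on $S$ for Alice and uniform on $T$ for Bob. Any deterministic one-way protocol in which Alice sends $c$ bits partitions $S$ into at most $2^c$ classes, and for each bit-string $a \in \{0,1\}^c$ Bob's output is a fixed function $g_a : T \to \{0,1\}$. Hence the protocol effectively maps each row of the shattered submatrix to a single guess function in a family of at most $2^c$ candidates from $\{0,1\}^T$. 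For the error under $\mu$ to be at most $1/3$, every row $r \in \{0,1\}^d$ must be within Hamming distance $d/3$ of some $g_a$. A standard volume bound gives $2^d \le 2^c \cdot \mathrm{Vol}(d, d/3)$, where $\mathrm{Vol}(d, d/3) \le 2^{H(1/3)\,d}$ by the entropy estimate for binomial sums; since $H(1/3) < 1$, this forces $c = \Omega(d) = \Omega(VC(f))$. The main subtlety is making the error–versus–distance conversion precise (the $\epsilon \le 1/8$ assumption in the statement controls the entropy constant), and then invoking Fact~\ref{fact:RAB-DAB} once more to pass from the distributional lower bound to $R^{A\to B, []}_{1/3}(f)$.
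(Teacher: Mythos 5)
The paper does not prove this statement at all: it is imported verbatim as a citation to Kremer--Nisan--Ron (Theorem~3.2 of \cite{Kremer1999}), so there is no in-paper argument to compare against. What you have written is essentially a reconstruction of the KNR proof itself, and it is sound in outline: the upper bound via Fact~\ref{fact:RAB-DAB}, public-coin sampling of $m=O(VC(f))$ points from $\mu_{\mathcal Y}$, Alice sending the labels, and Bob answering with any consistent concept is exactly the standard argument (one small precision: the statement you need is not uniform convergence per se but that, with probability $\ge 9/10$, the sample is an $\epsilon$-net for the symmetric-difference class $\{f_x\,\triangle\, f_{x'}\}$, whose VC dimension is $O(VC(f))$; and $m=O(d)$ suffices only because $\epsilon$ is a constant), and the lower bound via a shattered $2^d\times d$ submatrix under the uniform product distribution is the standard route.

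The one step that would fail as literally written is in the lower bound: distributional error at most $1/3$ under $\mu$ does \emph{not} force every row $r\in\{0,1\}^d$ to lie within Hamming distance $d/3$ of some $g_a$; it only bounds the \emph{average} relative distance of rows to their assigned $g_a$ by $1/3$. The per-row claim is what your volume bound uses, and without it the argument does not close; moreover, if you repair it naively by allowing distance $2d/3$, the Hamming balls cover more than half the cube and the counting argument collapses. The correct repair is a Markov step with a threshold strictly below $1/2$: for instance, at least a $1/5$ fraction of the $2^d$ rows must lie within relative distance $5/12$ of some $g_a$, so $2^d/5 \le 2^c\cdot 2^{H(5/12)d}$ and $H(5/12)<1$ still gives $c=\Omega(d)$. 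You flagged the error-versus-distance conversion as the main subtlety, so this is a fixable imprecision rather than a wrong approach, but the fix (averaging plus a sub-$1/2$ radius, not radius $d/3$ for every row) should be spelled out; alternatively, the same lower bound follows from a Fano-type information argument on the uniform row, which is closer to how KNR phrase it.
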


\begin{fact}[{\cite[Theorem~4.5]{Jain2008}}]\label{fact:RAB-subAB_prod}
Let $f \subseteq \mathcal X \times \mathcal Y \times \mathcal Z$ be a relation and let $0 \leq \epsilon \leq 1/6$. There are universal constants $c_1, c_2$ such that $\operatorname{sub}^{A \rightarrow B, []}_{\mathcal Y, \epsilon} (f), -1 \leq c_1 \cdot R^{A \rightarrow B, []}_\epsilon(f) \leq c_2 \cdot \left[\operatorname{sub}^{A\rightarrow B, []}_{\mathcal Y, \epsilon}(f) + \log \frac{1}{\epsilon} + 2\right]$. 
\end{fact}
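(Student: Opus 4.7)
My plan is to adapt the proof of the analogous non-product statement (Fact~\ref{fact:sub1}), showing that the entire argument goes through when one restricts attention to product distributions, so that the worst-case product distribution $\mu = \mu_{\mathcal X} \times \mu_{\mathcal Y}$ witnesses both sides. The key observation is that under a product input distribution, one-message-likeness of $\lambda$ for $\mu$ with respect to $\mathcal X$ reduces to the requirement that $\lambda_{\mathcal Y}(y \mid x) = \mu_{\mathcal Y}(y)$, i.e.\ $\lambda(x,y) = \lambda_{\mathcal X}(x)\,\mu_{\mathcal Y}(y)$. In other words, such $\lambda$ are exactly the product distributions on $\mathcal X \times \mathcal Y$ whose $\mathcal Y$-marginal equals $\mu_{\mathcal Y}$; this makes the two sides naturally compatible with restricting to product $\mu$.

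For the lower bound $\operatorname{sub}^{A \rightarrow B, []}_{\mathcal Y, \epsilon}(f) - 1 \leq c_1 R^{A \rightarrow B, []}_\epsilon(f)$, I would fix a product $\mu$ that witnesses $\operatorname{sub}^{A \rightarrow B, []}_{\mathcal Y, \epsilon}(f)$ and take an optimal public-coin one-way protocol $\pi$ for $f$ of cost $c = R^{A \rightarrow B, []}_\epsilon(f)$ with error at most $\epsilon$ under $\mu$. Fixing the public randomness to an optimal value keeps error at most $\epsilon$ over $\mu$ and yields a deterministic one-way protocol whose message $M(x)$ depends only on Alice's input. For each message $m$, define $\lambda_m$ to be the distribution of $(X,Y) \sim \mu$ conditioned on $M(X) = m$; since $M$ is a function of $X$ alone and $\mu$ is product, $\lambda_m(y \mid x) = \mu_{\mathcal Y}(y)$, so $\lambda_m$ is one-message-like for $\mu$. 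Bob's output rule on receiving $m$ certifies that $\lambda_m$ is $\epsilon_m$-monochromatic, where $\mathbb{E}_m[\epsilon_m] \leq \epsilon$. A Markov-style averaging argument selects an $m$ with $\epsilon_m \leq \epsilon$ and $\Pr[M(X)=m] \geq 2^{-c-1}$, so $S_\infty(\lambda_m \| \mu) \leq c+1$, yielding the lower bound.

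For the upper bound, I take a one-message-like, one-way $\epsilon$-monochromatic $\lambda$ minimizing $S_\infty(\lambda\|\mu)$; call this value $k$. The protocol uses shared public randomness to sample an i.i.d.\ sequence $x_1,x_2,\ldots$ from $\mu_{\mathcal X}$ together with i.i.d.\ uniform thresholds $u_1,u_2,\ldots \in [0,1]$. Alice, holding her true input $x$, performs rejection sampling: she locates the first index $i^\star$ with $x_i = x$ and $u_i \leq \lambda_{\mathcal X}(x)/(2^k \mu_{\mathcal X}(x))$, which by the definition of $k$ is always a valid rejection-sampling weight in $[0,1]$. She sends $i^\star$, expressed using $O(k + \log(1/\epsilon))$ bits with high probability (truncate after $O(2^k \log(1/\epsilon))$ trials, using that the success probability per trial is $\mu_{\mathcal X}(x) \cdot \lambda_{\mathcal X}(x)/(2^k \mu_{\mathcal X}(x)) = \lambda_{\mathcal X}(x)/2^k$ and averaging over $x \sim \lambda_{\mathcal X}$). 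Bob, using one-message-likeness under product $\mu$ which gives $Y \sim \mu_{\mathcal Y} = \lambda_{\mathcal Y \mid X=x}$, reads off $x_{i^\star}$ and outputs $h(x_{i^\star}, y)$ for the decoding function $h$ witnessing monochromaticity of $\lambda$. Correctness follows because the joint distribution of $(x_{i^\star}, y)$ matches $\lambda$ exactly (conditional on non-truncation), so the error is at most $\epsilon + O(\epsilon)$ after rescaling constants.

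The main obstacle will be the truncation-and-error bookkeeping that keeps the communication bounded by $O(k + \log(1/\epsilon) + 2)$ rather than blowing up with the worst-case input-length; this is handled exactly as in~\cite[Theorem~4.4]{Jain2008} by carrying two independent rejection-sampling passes and boosting success via a union bound. The delicate point unique to the product case is verifying that nothing in either reduction disturbs the product structure of $\mu$: in the lower bound the conditional $\lambda_m$ retains $\mu_{\mathcal Y}$ as its $\mathcal Y$-marginal precisely because $\mu$ is product and $M$ depends only on $x$, and in the upper bound $\mu_{\mathcal X}$ is sampled obliviously of Bob, so neither direction requires maximizing over non-product distributions, justifying Fact~\ref{fact:RAB-subAB_prod} with the same universal constants $c_1, c_2$ as in Fact~\ref{fact:sub1}.
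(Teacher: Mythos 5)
You should first note that the paper offers no proof of this fact: it is imported verbatim as Theorem~4.5 of Jain--Klauck--Nayak, so your argument has to stand on its own against the known proof strategy. Your lower-bound direction is essentially that strategy and is fine up to error bookkeeping: fixing the public coins and conditioning on a message $m$ does give a one-message-like $\lambda_m$ (the product structure of $\mu$ is used exactly where you say), but Markov only yields a message with $\epsilon_m\le 2\epsilon$ \emph{and} $\Pr[M=m]\ge 2^{-c-1}$, not $\epsilon_m\le\epsilon$, so you obtain a bound on $\operatorname{sub}^{A\rightarrow B,[]}_{\mathcal Y,2\epsilon}$ and must absorb the error degradation into the constants, as the original reference does. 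That is minor. (Also, by the paper's definition of $\epsilon$-monochromatic, Bob's decoding function depends on $y$ alone, not on $(x_{i^\star},y)$.)

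The upper-bound direction, however, has a genuine gap. A single witness $\lambda$ of the subdistribution bound is only guaranteed to be $\epsilon$-monochromatic \emph{under $\lambda$}, whereas your protocol must be correct under $\mu$. The constraint $S_\infty(\lambda\Vert\mu)\le k$ only gives the upper bound $\lambda_{\mathcal X}(x)\le 2^k\mu_{\mathcal X}(x)$; it gives no lower bound, so $\lambda_{\mathcal X}$ may vanish on all but a $2^{-k}$ fraction (in $\mu$-mass) of Alice's inputs, and on those inputs your rejection sampling never accepts. Even on $\operatorname{supp}(\lambda_{\mathcal X})$ the expected number of trials is $2^k/\lambda_{\mathcal X}(x)$, which is governed by $1/\mu_{\mathcal X}(x)$ --- you are trying to hit Alice's actual input among shared samples from $\mu_{\mathcal X}$, which costs roughly $\log(1/\mu_{\mathcal X}(x))$ bits, not $O(k)$; averaging ``over $x\sim\lambda_{\mathcal X}$'' is also illegitimate because the real input is drawn from $\mu_{\mathcal X}$. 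Correspondingly, the claim that $(x_{i^\star},y)$ is distributed according to $\lambda$ is false: $x_{i^\star}$ is Alice's input, distributed according to $\mu_{\mathcal X}$. The missing idea is an iteration/covering step: apply the hypothesis to $\mu$, pass from $\lambda$ to the set $B_1$ of rows on which the monochromatic function $g_1$ has pointwise error at most $2\epsilon$ under $\mu_{\mathcal Y}$ (by Markov, $\lambda_{\mathcal X}(B_1)\ge 1/2$, hence $\mu_{\mathcal X}(B_1)\ge 2^{-k-1}$), peel $B_1$ off, observe that the conditioned distribution is still a product distribution so the bound $k$ applies again, and repeat; after $O(2^{k}\log(1/\epsilon))$ rounds all but an $\epsilon$ fraction of the row mass is covered, Alice sends only the index of the first set containing her row, and the cost is $k+O(\log\log(1/\epsilon))$, within the claimed $c_2(\operatorname{sub}+\log(1/\epsilon)+2)$. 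This repeated application of the product subdistribution bound to successively conditioned product distributions is the heart of the upper bound and is absent from your proposal.
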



Moreover, the distributional complexity under product distributions is upper-bounded by VC dimension in both the one-way and two-way communication model. 

\begin{fact}[{\cite[Theorem~3.2] {Kremer1999}},~{\cite[Theorem~3.1]{Sherstov2010}}]
Let $M$ be a sign matrix and let $0 < \epsilon \leq 1/3$. Then, $D^{A\rightarrow B, []}_\epsilon(M)\leq O\left(\frac{1}{\epsilon}VC(M)\log\frac{1}{\epsilon}\right)$. 
\end{fact}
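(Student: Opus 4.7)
The plan is to exhibit a public-coin one-way protocol of cost $O\!\left(\tfrac{1}{\epsilon}\,VC(M)\,\log\tfrac{1}{\epsilon}\right)$ that errs with probability at most $\epsilon$ on every input, and then transfer this bound to the distributional setting under every product distribution via Yao's principle for the one-way model under product distributions (Fact~\ref{fact:RAB-DAB}). Throughout, I identify $M$ with the concept class $f_{\mathcal X} = \{f_x : x \in \mathcal X\} \subseteq \{-1,1\}^{\mathcal Y}$, where $f_x(y) = M_{x,y}$, so that $d := VC(M)$ is the VC dimension of $f_{\mathcal X}$ (viewing rows as functions of the column index).

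Fix any product distribution $\mu = \mu_{\mathcal X} \otimes \mu_{\mathcal Y}$, and set $m = C\cdot \tfrac{d}{\epsilon}\log\tfrac{1}{\epsilon}$ for a sufficiently large absolute constant $C$. The protocol is as follows. Using public coins (shared before inputs are seen), Alice and Bob sample a sequence $S = (y_1, \ldots, y_m)$ of $m$ i.i.d.\ draws from $\mu_{\mathcal Y}$. Alice, on input $x$, sends Bob the $m$-bit string $(f(x,y_1), \ldots, f(x, y_m))$. Bob then selects any $x' \in \mathcal X$ for which $f(x', y_i) = f(x, y_i)$ holds for every $i \in [m]$ (such an $x'$ exists, since $x$ itself works), and outputs $f(x', y)$. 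The total communication is $m$ bits.

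Correctness hinges on the classical Vapnik-Chervonenkis $\epsilon$-sample theorem applied to the symmetric-difference class $\mathcal H_\triangle := \{f_x \triangle f_{x'} : x, x' \in \mathcal X\}$, whose VC dimension is $O(d)$. The theorem guarantees that, for $m = \Omega\!\left(\tfrac{d}{\epsilon}\log\tfrac{1}{\epsilon}\right)$, with probability at least $1 - \epsilon/2$ over the draw of $S$, every $h \in \mathcal H_\triangle$ satisfies
\begin{equation*}
\bigl|\Pr_{y\sim\mu_{\mathcal Y}}[h(y)=1] - \tfrac{1}{m}|\{i : h(y_i)=1\}|\bigr| \leq \epsilon/2.
\end{equation*}
Conditioning on this event, Bob's chosen $x'$ has $f_{x'} \triangle f_x$ vanishing on the sample, so its $\mu_{\mathcal Y}$-measure is at most $\epsilon/2$; hence $\Pr_{y\sim\mu_{\mathcal Y}}[f(x',y)\neq f(x,y)] \leq \epsilon/2$. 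A union bound over the two sources of error (bad sample, bad $y$) gives total error at most $\epsilon$ on every input $x$, and in particular under $\mu$. This shows $R^{A\to B,\operatorname{pub}}_\epsilon(f) \leq m$ restricted to product distributions; invoking Fact~\ref{fact:RAB-DAB} yields the claimed distributional bound $D^{A\to B, []}_\epsilon(M) \leq O\!\left(\tfrac{1}{\epsilon}\,VC(M)\,\log\tfrac{1}{\epsilon}\right)$.

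The main obstacle is the quantitative $\epsilon$-sample step: one must verify that consistency on $S$ together with the uniform convergence bound on $\mathcal H_\triangle$ really forces global $\epsilon/2$-closeness in $\mu_{\mathcal Y}$, and that the VC dimension of $\mathcal H_\triangle$ is controlled by $d$ (which follows from Sauer-Shelah and a standard dual-shattering argument, losing only a constant factor). Once the VC sample-complexity bound is in hand, the remainder of the argument — the protocol construction, the message length, and the reduction from randomized to distributional complexity via Fact~\ref{fact:RAB-DAB} — is entirely routine.
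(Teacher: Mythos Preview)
The paper does not supply its own proof; the fact is simply attributed to \cite{Kremer1999} and \cite{Sherstov2010}. Your argument is the Kremer--Nisan--Ron protocol and is essentially correct, but one quantitative step is misstated. You invoke the uniform-convergence (``$\epsilon$-sample'') theorem with $m = O\!\bigl(\tfrac{d}{\epsilon}\log\tfrac{1}{\epsilon}\bigr)$ samples, yet uniform convergence to accuracy $\epsilon/2$ on a class of VC dimension $O(d)$ requires $\Theta(d/\epsilon^{2})$ samples, not $O(d/\epsilon)$. What you actually need, and what the stated $m$ does buy, is the weaker $\epsilon$-net guarantee (equivalently, the realizable PAC sample bound): if $S$ is an $(\epsilon/2)$-net for $\mathcal H_\triangle$, then any $f_{x}\triangle f_{x'}$ with empirical measure zero on $S$ has true $\mu_{\mathcal Y}$-measure at most $\epsilon/2$. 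That is precisely the conclusion you use, so replacing ``$\epsilon$-sample theorem'' by ``$\epsilon$-net theorem'' (or citing the realizable PAC bound directly, which avoids the detour through $\mathcal H_\triangle$) fixes the argument. A minor secondary point: the appeal to Fact~\ref{fact:RAB-DAB} is unnecessary, since your protocol already depends on $\mu_{\mathcal Y}$; you have directly built, for each product $\mu$, a public-coin protocol with average error at most $\epsilon$ under $\mu$, and an averaging step over the coins gives the deterministic protocol witnessing $D^{A\to B,\mu}_\epsilon(M)\le m$.
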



The result below show that one-way quantum communication complexity under the product distribution can be lower bounded by the statistical query complexity and the rectangle bound. 

\begin{fact}[{~\cite[Fact~2.8]{klauck2001lower}}]\label{fact:sq-QAB}
    Let $\mathcal C\subseteq \{c: \{0, 1\}^n\rightarrow \{0, 1\}\}$. For every $\gamma>0$, we have $\Omega(\log(\gamma\cdot \sqrt{\operatorname{sq}(\mathtt C)}))\leq Q^{A\rightarrow B, []}_{1/2+\gamma}(\mathtt C)$. 
\end{fact}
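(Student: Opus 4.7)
The plan is to give a quantum Yao-type correlation bound powered by the almost-orthogonal family guaranteed by the SQ dimension. Set $d = \operatorname{sq}(\mathtt C)$ and fix the witnesses: concepts $f_1,\dots,f_d \in \mathtt C$ and a product distribution $\mu = \mu_A \otimes \mu_B$ on $\{0,1\}^n$ (one can restrict to product distributions in this one-way, distribution-dependent setting) such that $|\mathbb E_{y\sim\mu_B}[f_i(y)f_j(y)]| \leq 1/d$ for all $i\neq j$. I treat the concept class as a communication problem: Alice receives an index $i\in[d]$ (equivalently a description of $f_i$), Bob receives $y\sim\mu_B$, and Bob must output $f_i(y)$ with bias $\gamma$ on average. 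A one-way quantum protocol of cost $q$ then gives, for every $i$, a $q$-qubit message state $\rho_i$ and a $y$-dependent POVM $\{E_y, I - E_y\}$ on Bob's side such that $\sum_i \mathbb E_{y\sim\mu_B}[(2\mathbf 1[f_i(y)=1]-1)(2\operatorname{Tr}(E_y \rho_i)-1)] \geq 2\gamma d$.

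The next step is to pass from this average correlation to a bound on $2^q$ via the almost-orthogonal structure. Define the $d\times d$ correlation matrix $T$ by $T_{ij} = \mathbb E_{y\sim\mu_B}[f_i(y)\cdot(2\operatorname{Tr}(E_y\rho_j)-1)]$. The hypothesis of success gives $\sum_i T_{ii} \geq 2\gamma d$, hence $\operatorname{Tr}(T) \geq 2\gamma d$. On the other hand, each column of $T$ is the expectation (under $\mu_B$) of $f(y)\cdot \operatorname{sign}$-of-a-POVM-expectation-on-$\rho_j$, so writing the $f_i$'s as nearly orthonormal unit vectors in $L^2(\mu_B)$ (up to the $1/d$ off-diagonal perturbation) one can bound the Frobenius norm of $T$ by a constant multiple of $\|\rho_j\|$-type quantities over the message space. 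Concretely, using that any mixed state on $q$ qubits lives in a $2^q$-dimensional space and that $\operatorname{Tr}(T)^2 \leq d\cdot \|T\|_F^2$, an operator-norm argument in the spirit of Klauck~\cite{klauck2001lower} yields $(2\gamma d)^2 \leq O(d\cdot 2^{O(q)})$, i.e., $2^{q} \geq \Omega(\gamma^2 d)$, giving $q = \Omega(\log(\gamma\sqrt d))$.

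The main obstacle is the second step: converting the $1/d$-almost-orthogonality into a clean spectral bound on $T$ that is compatible with the quantum message. A direct Holevo bound on Bob's accessible information would cost a $\log d$ factor that is too lossy when $\gamma$ is small, so the argument must instead exploit the Cauchy--Schwarz/Gram-matrix structure: the off-diagonal mass of $T$ is controlled by the Gram matrix of $\{f_i\}$ in $L^2(\mu_B)$, which is $I + E$ with $\|E\|_{\max} \leq 1/d$, while the diagonal mass is charged to a $2^q$-dimensional operator image. This is exactly the point at which the $\sqrt d$ (rather than $d$) appears, and where the $\gamma$ factor enters the logarithm — the rest of the proof is routine bookkeeping.
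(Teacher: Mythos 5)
The paper itself gives no proof of this fact (it is quoted from Klauck), so I am judging your sketch against the standard argument it is implicitly invoking. Your overall plan is the right shape: pass to the hard product distribution given by the SQ witnesses (uniform over $f_1,\dots,f_d$ on Alice's side, $\mu$ on Bob's side), form the correlation matrix $T_{ij}=\mathbb{E}_{y\sim\mu}[f_i(y)(2\operatorname{Tr}(E_y\rho_j)-1)]$, use $\operatorname{Tr}(T)\geq 2\gamma d$, and aim for $2^{q}\geq\Omega(\gamma^2 d)$. But the crucial second step fails as you describe it. If you bound $\|T\|_F$ the way you indicate (off-diagonal mass via the Gram matrix $I+E$ of the $f_i$ in $L^2(\mu)$, with $\|E\|\leq 1$), you get $\|T\|_F^2=\sum_j\sum_i\langle f_i,g_j\rangle_\mu^2\leq \|I+E\|\sum_j\|g_j\|_\mu^2\leq 2d$, where $g_j(y)=\operatorname{Tr}((2E_y-I)\rho_j)$. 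Plugging into $\operatorname{Tr}(T)^2\leq d\|T\|_F^2$ then yields only $\gamma\leq O(1)$, with no dependence on $q$ at all. Moreover, no bound of the form $\|T\|_F^2\leq 2^{O(q)}$ can hold: take all $\rho_j$ identical and let Bob output $f_1(y)$; then $q=O(1)$ but $\|T\|_F^2\geq d$. So the inequality $(2\gamma d)^2\leq O(d\cdot 2^{O(q)})$ that you assert is the desired conclusion, not something your Cauchy--Schwarz-on-$T$ route delivers; the Frobenius norm of $T$ is simply not controlled by the message dimension.

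The missing idea is a factorization/duality step in which $2^q$ actually enters. Write $T=NB^{\top}$ with $N_{iy}=f_i(y)\sqrt{\mu(y)}$ and $B_{jy}=\sqrt{\mu(y)}\operatorname{Tr}((2E_y-I)\rho_j)$. Near-orthogonality gives $\|N\|\leq\sqrt{2}$ (spectral norm, since $NN^{\top}=I+E$ with entries of $E$ at most $1/d$). The protocol structure gives $B=RD$, where $R$ is the $d\times 4^{q}$ matrix of the states $\rho_j$ written in an orthonormal Hermitian operator basis (rows of Hilbert--Schmidt norm at most $1$, so $\|R\|_F\leq\sqrt{d}$) and $D$ has columns $\sqrt{\mu(y)}(2E_y-I)$ (Hilbert--Schmidt norm at most $\sqrt{\mu(y)}\,2^{q/2}$, so $\|D\|_F\leq 2^{q/2}$); hence $\|B\|_{tr}\leq\|R\|_F\|D\|_F\leq\sqrt{d\,2^{q}}$. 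Then $2\gamma d\leq\operatorname{Tr}(T)=\langle N,B\rangle\leq\|N\|\cdot\|B\|_{tr}\leq\sqrt{2d\,2^{q}}$, which gives $2^{q}\geq 2\gamma^2 d$ and so $q\geq\Omega(\log(\gamma\sqrt{\operatorname{sq}(\mathtt{C})}))$. This Kremer--Yao/Hilbert--Schmidt decomposition (spectral norm of the weighted sign matrix paired against the nuclear-type norm of the protocol's bias matrix) is exactly what your phrase ``operator-norm argument in the spirit of Klauck'' needs to be, and it is where both the $\sqrt{d}$ and the $\gamma$ enter. A related slip: mixed states on $q$ qubits live in the $4^{q}$-dimensional real space of Hermitian operators, not a $2^{q}$-dimensional space; the $2^{q/2}$ in the bound comes from $\|2E_y-I\|_{HS}\leq 2^{q/2}$, and this bookkeeping is precisely what your sketch leaves unspecified.
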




Last but not least, we state results related to information complexity. 

\begin{fact}[{\cite[Proposition~3.12]{Braverman12}}]\label{fact:pi_int_ext_prod}
For any protocol $\pi$, $IC^{[]}(\pi) = IC^{\operatorname{ext}, []}(\pi)$. 
\end{fact}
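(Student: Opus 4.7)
The plan is to reduce the claim to the standard \emph{rectangle property} of communication protocols: for any product distribution $\mu = \mu_X \times \mu_Y$ and any (classical) protocol $\pi$, Alice's and Bob's inputs remain conditionally independent given the transcript, i.e.\ $X \perp Y \mid T$. I would first establish this by induction on the rounds. At round $i$, the next message $M_i$ is a function of the sender's input (say Alice's, $X$) and the previous transcript $T_{<i}$, plus private randomness independent of everything else; so $\Pr[X=x, Y=y \mid T_{\le i} = t]$ factors as $\Pr[X=x \mid T_{\le i} = t]\cdot \Pr[Y=y \mid T_{\le i} = t]$ whenever it factored at round $i-1$. The base case $i=0$ holds by the assumption that $\mu$ is a product distribution.

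Equipped with $X \perp Y$ (from $\mu$ being product) and $X \perp Y \mid T$ (from the rectangle property), the rest is a short entropy computation. First,
\begin{align*}
I(XY;T) \;=\; H(XY) - H(XY\mid T) \;=\; [H(X)+H(Y)] - [H(X\mid T)+H(Y\mid T)] \;=\; I(X;T)+I(Y;T),
\end{align*}
where $H(XY) = H(X)+H(Y)$ uses $X\perp Y$ and $H(XY\mid T) = H(X\mid T)+H(Y\mid T)$ uses $X \perp Y \mid T$.

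Next, I would show that each conditional-information term equals an unconditional one. Using $X \perp Y$:
\begin{align*}
I(T;X\mid Y) \;=\; H(X\mid Y) - H(X\mid TY) \;=\; H(X) - H(X\mid TY),
\end{align*}
and using $X \perp Y \mid T$ one has $H(X\mid TY) = H(X\mid T)$, so $I(T;X\mid Y) = I(X;T)$. By the symmetric argument $I(T;Y\mid X) = I(Y;T)$. Adding these and applying the previous display yields
\begin{align*}
IC^{[]}(\pi) \;=\; I(T;X\mid Y) + I(T;Y\mid X) \;=\; I(X;T) + I(Y;T) \;=\; I(XY;T) \;=\; IC^{\operatorname{ext},[]}(\pi),
\end{align*}
as desired. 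The only non-routine ingredient is the rectangle property; once that is in hand, the identity is a one-line consequence of the chain rule together with the two independence facts.
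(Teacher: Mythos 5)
Your proof is correct: the rectangle/factorization property of protocol transcripts gives $X \perp Y \mid T$ under a product distribution, and your entropy manipulations (using both $X\perp Y$ and $X\perp Y\mid T$) validly yield $I(T;X\mid Y)+I(T;Y\mid X)=I(XY;T)$. The paper itself offers no proof, citing Braverman's Proposition~3.12, and your argument is essentially the standard one given there, so there is nothing to add.
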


\begin{fact}[\cite{Braverman12}]\label{fact:IC_int_ext}
Let $f: \mathcal X \times\mathcal Y \rightarrow \{0, 1\}$, then we have $IC^{\operatorname{ext}, []}_\epsilon(f) = IC^{[]}_\epsilon(f)$,   $IC^{\operatorname{ext}}_{D, \epsilon}(f) = IC_{D, \epsilon}(f)$, $IC^{\operatorname{ext}}_\epsilon(f) = IC_\epsilon(f)$. 
\end{fact}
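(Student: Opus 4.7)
My plan is to derive all three equalities from a single protocol-level identity: for every protocol $\pi$ and every product distribution $\mu = \mu_X \otimes \mu_Y$ on $\mathcal X \times \mathcal Y$, we have $IC^\mu(\pi) = IC^{\operatorname{ext},\mu}(\pi)$. This is the equality case of Braverman's Proposition~3.12, of which only the inequality direction is recorded as Fact~\ref{fact:3.12}; the sharper equality is exactly what the present Fact needs.

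To prove the protocol-level identity, let $T$ denote the random variable of the transcript together with any public coins, so that $T$ is a function of $(X,Y)$ and the public randomness. I would start from the general mutual-information identity
\[
I(T;X\mid Y) + I(T;Y\mid X) \;=\; I(T;XY) + I(X;Y\mid T) - I(X;Y),
\]
which holds for any random variables by two applications of the chain rule. Under a product prior one has $I(X;Y)=0$. The rectangle property of two-party protocols then gives $I(X;Y\mid T)=0$: after conditioning on any transcript/coins value $t$, the set of consistent inputs is a combinatorial rectangle $A_t \times B_t$, and the posterior is proportional to $\mu_X(x)\mu_Y(y)\mathbf{1}[(x,y)\in A_t\times B_t]$, which is again a product measure on that rectangle. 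Substituting both vanishings collapses the right-hand side to $I(T;XY)$, which is exactly $IC^\mu(\pi) = IC^{\operatorname{ext},\mu}(\pi)$.

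With the protocol-level equality in hand, each of the three function-level equalities follows by pushing through the infima and maxima in their definitions. For $IC^{[]}_\epsilon(f) = IC^{\operatorname{ext},[]}_\epsilon(f)$, I fix any product $\mu$, take an infimum over protocols of $\mu$-error at most $\epsilon$ on both sides (Equation~(\ref{def:3.2}) and its external analogue~(\ref{def:3.13})), and then take the maximum over product $\mu$. For the $D$-version and the general version, the maxima in Equations~(\ref{def:3.3}) and~(\ref{def:3.4}) (respectively their external analogues~(\ref{def:3.14}) and~(\ref{def:3.15})) are read, in the Class-5 context in which this Fact is used, as running over product distributions, and the same reduction applies: the $\geq$ direction is automatic from Fact~\ref{fact:ext-int}, and the $\leq$ direction is supplied by the pointwise protocol-level identity after passing to the appropriate $\inf_\pi$ and $\max_\mu$.

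The main obstacle I expect is a careful treatment of the rectangle property when the protocol mixes private and public coins. The cleanest way to handle it is to fold the public randomness into $T$ (so that for each fixed $T=t$ the protocol is deterministic and rectangles are literal) and to push private coins into each player's input with an independent coordinate; then the conditional distribution of $(X,Y)$ given $T$ is a convex combination of products on rectangles, which still yields $I(X;Y\mid T)=0$ for a product prior. Once this is verified, the rest of the argument is a purely definitional unwinding.
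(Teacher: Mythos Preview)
Your approach is correct and is precisely the standard argument behind Braverman's Proposition~3.12, which the paper cites rather than reproves; indeed the paper records the protocol-level equality separately as Fact~\ref{fact:pi_int_ext_prod} and treats the function-level statement as an immediate consequence. Your caveat about reading the maxima in the last two equalities as ranging over product distributions is well taken: as literally written (over all $\mu$) those identities are not what Braverman establishes, and only the product-distribution version is actually used in the Class-5 arguments.
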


\section{Separation}\label{sec:separation}
In this subsection, we justify the separation between different classes. This is done by either mentioning known results, or finding a function $f$ such that for a real function $g: \mathbb R \rightarrow \mathbb R$, complexity measures $\mathcal C$ from Class $i$ and $\mathcal C'$ from Class $i+1$, we have $\mathcal C(f) \geq g(n)$ and $\mathcal C'(f) = O(1)$, where $i\in \{1, 2, 3, 4\}$. 

\paragraph{Separation between Class 1 and Class 2}
Consider the Equality problem  defined as $EQ:\{0,1\}^n \times \{0,1\}^n \to \{0, 1\}$  such that for all $(x,y)\in\{0,1\}^n \times \{0,1\}^n$, 
\begin{align*}
    EQ(x,y) = \begin{cases} 1, \text{ if } x=y\\ 0, \text{ otherwise.}\end{cases}
\end{align*}
The Equality problem provides a polynomial separation between Class 1 and Class 2. More specifically, $D(EQ) = n+1$ by~{\cite[Example~1.21]{KNisan96}}. On the other hand, observe that $\gamma_2(EQ) = 1$ since the communication matrix $M_{EQ}$ of $EQ$ is the identity matrix $I$ and $\gamma_2(I)$=1.   

\begin{remark}
We note that separation in the other direction is not possible as deterministic communication complexity is always at least randomized communication complexity, even in the public coin and exact error case. 
\end{remark}

\paragraph{Separation between Class 2 and Class 3}
Cheung et al..~\cite{cheung2023separation} proved exponential separation between Class 2 and Class 3. However, to the best of our knowledge, an unbounded separation is not known to exist. 
\begin{fact}[Corollary 6 in\cite{cheung2023separation}]
    There is a Boolean function $f: \{0, 1\}^n\times \{0, 1\}^n\rightarrow \{0, 1\}$ with communication matrix $M_f$, such that $\gamma_2(M_f)\geq 2^{n/43}$ and $\gamma_2^\alpha(M_f)\leq O(\operatorname{poly}(n))$. 
\end{fact}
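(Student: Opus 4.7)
The plan is to exhibit an explicit Boolean function $f$ for which the exact $\gamma_2$ norm blows up exponentially while the approximation-tolerant version $\gamma_2^\alpha$ stays polynomial. A natural strategy is to obtain $f$ as the sign pattern of a bounded real matrix whose $\gamma_2$ is small but whose entries lie close enough to $\pm 1$ that rounding to a genuine $\pm 1$ matrix dramatically increases the norm.

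First, I would construct a bounded matrix $B \in [-1,1]^{N\times N}$ with $N=2^n$ that satisfies two balanced properties: (i) $\gamma_2(B) \leq \mathrm{poly}(n)$, and (ii) every entry $B(x,y)$ has absolute value in a controlled range $[1/\alpha, 1]$ up to a rescaling. Define $f$ by $M_f(x,y) = \mathrm{sgn}(B(x,y))$. The candidate $B$ would likely come from a randomized or coding-theoretic construction (e.g., taking $B$ to be a scaled inner-product or XOR-lifted small gadget, or a suitable block-tensor construction) so that structure is retained while signs are concentrated. The upper bound $\gamma_2^\alpha(M_f) \leq \mathrm{poly}(n)$ is then immediate: the matrix $\alpha \cdot B$, after rescaling, satisfies $1 \leq M_f(x,y)\cdot (\alpha B)(x,y) \leq \alpha$ pointwise, so $\gamma_2^\alpha(M_f) \leq \alpha \cdot \gamma_2(B) \leq \mathrm{poly}(n)$.

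Second, I would lower bound $\gamma_2(M_f) \geq 2^{n/43}$. The natural tool is the dual formulation
\[
\gamma_2(M) = \max_{\|u\|_2=\|v\|_2=1} \|uv^T \circ M\|_{\mathrm{tr}},
\]
so it suffices to exhibit vectors $u,v$ and analyze the trace norm of the resulting Hadamard product. Equivalently, one can use lower bounds via the approximate rank or discrepancy of $M_f$, or by amplification: if the base gadget $g$ used to build $f$ satisfies $\gamma_2(M_g)\geq c>1$, then an appropriate $k$-fold product/XOR composition yields $\gamma_2(M_f)\geq c^{\Omega(k)}$, and tuning $k = \Theta(n)$ gives the desired $2^{n/43}$ bound (the constant $1/43$ reflects the precise $c$ and the accounting in the composition argument).

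The main obstacle is maintaining both properties simultaneously under a single construction: the matrix $B$ must remain ``rounded enough'' that its sign pattern corresponds to an exponentially $\gamma_2$-complex Boolean function, yet be ``smooth enough'' that its own $\gamma_2$ is only polynomial. This tension is the technical heart of the Cheung \emph{et al.} argument and is navigated by a careful two-scale construction—typically a hard low-dimensional gadget lifted through a structured outer combiner whose approximate factorization can be written explicitly, combined with a lower bound argument (via the dual norm or via nuclear norm estimates on $uv^T \circ M_f$) that survives the sign-rounding precisely because $M_f$ differs from $B$ only in magnitude, not in orientation.
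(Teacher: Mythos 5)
There is a genuine gap: your proposal is a plan for a proof rather than a proof, and at its crucial step it is circular. Note first that the paper does not prove this statement at all --- it imports it verbatim as Corollary~6 of Cheung, Hatami, Hosseini and Shirley, so the only thing to assess is whether your sketch could stand on its own. It cannot, for two reasons. First, no function is ever exhibited: the matrix $B$ is described only as ``a randomized or coding-theoretic construction,'' and the entire content of the theorem --- producing an explicit $B$ with $\gamma_2(B)\leq \operatorname{poly}(n)$, entries bounded away from $0$, and whose \emph{sign matrix} has $\gamma_2 \geq 2^{\Omega(n)}$ --- is exactly what you defer to ``the technical heart of the Cheung et al.\ argument.'' Saying that the tension is resolved by ``a careful two-scale construction'' restates the theorem; it does not prove it. (Your upper-bound step is fine as far as it goes: if $|B(x,y)|\in[1/\alpha,1]$ and $\operatorname{sgn}B=M_f$, then $1\leq M_f(x,y)\cdot\alpha B(x,y)\leq\alpha$ pointwise and $\gamma_2^\alpha(M_f)\leq\alpha\gamma_2(B)$ --- but this is conditional on a construction you never supply.)

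Second, the one concrete lower-bound mechanism you do name would fail. Tensor/XOR amplification of a gadget $g$ with $\gamma_2(M_g)=c>1$ indeed gives $\gamma_2(M_{g^{\oplus k}})=c^{k}$ by multiplicativity of $\gamma_2$, but it gives no control whatsoever on $\gamma_2^\alpha$ of the composed function --- and typically $\gamma_2^\alpha$ blows up as well (Inner Product, the XOR-composition of a two-bit gadget, has $\gamma_2^\alpha = 2^{\Omega(n)}$ via discrepancy). Since $\log\gamma_2^\alpha$ lower-bounds randomized communication, any construction with polynomial $\gamma_2^\alpha$ must admit an $O(\log n)$-bit public-coin protocol; naive composition destroys this property, which is precisely why the actual argument of Cheung et al.\ builds a specific function with very cheap randomized protocols and then proves the $\gamma_2$ lower bound for that function directly. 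Your alternative suggestion --- ``exhibit vectors $u,v$ and analyze $\|uv^T\circ M_f\|_{\mathrm{tr}}$'' --- is again only the statement of the dual method, with no candidate witness and no argument for why it ``survives the sign-rounding.'' As written, neither the construction nor the lower bound exists in the proposal, so it does not establish the fact.
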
 

 In the lemmas below, we prove the unbounded separation between Class 2 and Class 3. Define the Hamming Distance problem as $HD_1: \{0,1\}^n \times \{0,1\}^n \to \{0, 1\}$  such that for all $(x,y)\in\{0,1\}^n \times \{0,1\}^n$, 
$$HD_1(x,y) = \begin{cases} 1, \text{ if } d(x,y)= 1\\ 0, \text{ otherwise,}\end{cases}$$
where $d(x,y)$ denotes the Hamming distance between $x$ and $y$. We show the following lemma. 

\begin{lem}
$R^{\operatorname{pub}}_\epsilon(HD_1) = O(1)$. 
\end{lem}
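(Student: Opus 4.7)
The plan is to decompose $HD_1$ as a conjunction of two simpler predicates, each admitting a constant-cost public-coin protocol, and then combine them by running each with error $\epsilon/3$ and accepting iff both accept. Observe that $HD_1(x,y)=1$ iff $x\neq y$ \emph{and} $d(x,y)\leq 1$. The first conjunct is just the negation of equality, for which a constant-cost public-coin protocol of arbitrary constant error is classical. Hence it suffices to exhibit such a protocol for the predicate $HD_{\leq 1}(x,y)\coloneqq [d(x,y)\leq 1]$.

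For $HD_{\leq 1}$ I would use a hash-then-test protocol. Fix a constant $m$ to be chosen later. Using public randomness, sample a hash function $h:[n]\to[m]$ uniformly at random. For each bucket $j\in[m]$, Alice and Bob run a standard public-coin equality subprotocol of error $\epsilon/(6m)$ on the restrictions $x|_{h^{-1}(j)}$ and $y|_{h^{-1}(j)}$; since $h$ is common knowledge, both parties know the relevant index sets, and each subprotocol costs $O(\log(m/\epsilon))=O(1)$ bits. They accept iff at most one of the $m$ subprotocols reports ``unequal''. The total communication is $O(m)=O(1)$.

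For correctness, let $D\coloneqq\{i\in[n]:x_i\neq y_i\}$. If $|D|\leq 1$ then at most one bucket actually contains a difference, so a union bound over the $m$ subprotocol errors shows the protocol accepts with probability at least $1-\epsilon/6$. If $|D|\geq 2$, the event ``$h$ sends all of $D$ to a single bucket'' has probability exactly
\[
m\cdot\Bigl(\tfrac{1}{m}\Bigr)^{|D|}=m^{1-|D|}\leq \tfrac{1}{m};
\]
outside this event the hash splits $D$ among at least two nonempty buckets, so at least two subprotocols should report inequality (which happens with probability at least $1-\epsilon/6$ by a union bound), and the protocol rejects. Choosing $m=\lceil 6/\epsilon\rceil$ drives both error contributions below $\epsilon/3$, so that the overall protocol errs with probability at most $\epsilon$ on every input.

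The main obstacle is this hash-collision bound for $|D|\geq 2$: a priori one might worry that several differing coordinates clumping into one bucket could cause the protocol to mistake a large Hamming distance for distance exactly $1$. The elementary calculation above shows that this confusion event has probability at most $1/m$, which can be made arbitrarily small by choosing a sufficiently large constant $m$. Everything else is standard error-boosting and AND-composition of randomized protocols.
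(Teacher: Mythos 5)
Your proposal is correct. Both arguments reduce $HD_1$ to constant-cost public-coin equality tests on randomly chosen subsets of the coordinates, but the route is genuinely different from the paper's. The paper first runs one equality test to rule out $d(x,y)=0$, and then repeatedly subsamples roughly half of the coordinates with public coins and tests equality of the restrictions, exploiting the fact that the restrictions agree with probability $2^{-d(x,y)}$; it distinguishes $d=1$ (agreement frequency about $1/2$) from $d\ge 2$ (at most about $1/4$) by repeating the subsampling about a hundred times and thresholding the empirical count, which implicitly requires a concentration argument that the paper does not spell out in detail. You instead write $HD_1=\neg EQ \wedge [d(x,y)\le 1]$ and compute the second conjunct in one shot: hash the $n$ coordinates into $m=\lceil 6/\epsilon\rceil$ buckets with public randomness, run a low-error equality test per bucket, and accept iff at most one bucket reports a mismatch, with the only probabilistic event being that two or more differing coordinates collide into a single bucket, which happens with probability at most $1/m$. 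Your version buys a cleaner error analysis (a union bound plus one exact collision computation, no estimation or amplification step) and makes the constants transparent; the paper's version is a simpler protocol to state and its subsampling idea extends naturally to estimating $2^{-d}$ for larger distances. One cosmetic remark: your per-bucket error $\epsilon/(6m)$ with $m=\Theta(1/\epsilon)$ makes each equality test cost $O(\log(1/\epsilon))$ bits and the whole protocol $O\!\left(\frac{1}{\epsilon}\log\frac{1}{\epsilon}\right)$ bits, which is indeed $O(1)$ for constant $\epsilon$, exactly as claimed.
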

\begin{proof} 
Assume that the following Equality Testing protocol costs $O(1)$ and has negligible error $\epsilon>0$: 

$$\operatorname{EQTEST}_\epsilon(x,y) = \begin{cases} 1 \text{, with probability  }1 - \epsilon \text{, if } x=y\\0 \text{, with probability  }1-\epsilon\text{, if} x\neq y.\end{cases}$$

The randomized public coin protocol with constant error for the $HD_1$ protocol goes as follows: 
\begin{enumerate}
    \item Perform $\operatorname{EQTEST}_\epsilon$ on $(x,y)$. 
    \item If $\operatorname{EQTEST}_\epsilon(x,y)=1$, output 0 for $HD_1$ and terminate the protocol. Otherwise, remove half of the entries of $x,y$ using a public random coin. Denote the shortened strings as $x', y'$ respectively. Perform $\operatorname{EQTEST}_\epsilon$ on $(x^\prime, y^\prime)$ and accept as per the Equality Test.
    \item Repeat Step 2 for $100$ times and accept if $\operatorname{EQTEST}$ outputs 0 for at least 33 times.
\end{enumerate}

Now, we analyze the correctness of the protocol. There are two cases: 
\begin{enumerate}[(i)]
    \item If $d(x,y)=0$, then $\operatorname{EQTEST}_\epsilon$ returns 1 with probability at least $1- \epsilon$. The protocol is terminated and we get $HD_1(x, y) = 0$  with probability at least $1 - \epsilon$. 
    \item If $d(x,y)\neq 0,$, 
    \begin{enumerate}[(a)]
        \item If  $d(x, y) = 1$, then $\operatorname{EQTEST}(x, y)=1$ with probability $\epsilon$. Moreover, $d(x^\prime, y^\prime)$ is either 0 or 1 with probability 1/2 each.  In the former case, $\operatorname{EQTEST}_\epsilon(x', y')=1$ with probability $1 - \epsilon$, while in the latter  case, with probability $\epsilon$. So the total probability that $\operatorname{EQTEST}$ outputs 1 on $x,y$ is at least
        \begin{align*}
            \epsilon + \frac{1}{2}\cdot \left(1 - \epsilon\right) + \frac{1}{2}\cdot \epsilon = \frac{1}{2} + \epsilon.
        \end{align*}
    \item If $d(x,y)=2$, $\operatorname{EQTEST}(x, y)=1$ with probability $\epsilon$. Furthermore, $d(x^\prime, y^\prime) = 0$ with probability 1/4 and $1\leq d(x^\prime, y^\prime) \leq 2$ with probability 3/4. In this case, the total probability that $\operatorname{EQTEST}$ outputs 1 on $x, y$ is at most
\begin{align*}
    \epsilon + \frac{1}{4}\cdot (1 - \epsilon) + \frac{3}{4}\cdot \epsilon  \approx \frac{1}{4} + \epsilon. 
\end{align*}

By a similar argument, if $d(x,y)=k$ for some $k>2$, $\operatorname{EQTEST}(x, y)=1$ with probability $\epsilon$. Also, $x'=y'$ with probability $1/2^k$ and hence the total probability that $\operatorname{EQTEST}$ outputs 1 on $x, y$ is approximately $\frac{1}{2^k}+\epsilon$. Hence, the error of the above protocols depends on the Hamming distance of $x,y$.

    \end{enumerate}
\end{enumerate}

This gives a standard randomised public coin protocol with constant communication.

\end{proof}

We are now interested in the lower bound on $\gamma_2(HD_1)$. Notice that $HD_1$ corresponds to a special case of the threshold function defined in the work of Hambardzymyan et al.~\cite{hambardzumyan2023dimension}. Namely, the threshold function is given by $\bar{\operatorname{thr}}_k: \{0, 1\}^n \rightarrow \{0, 1\}$ such that 
\begin{align*}
    \bar{\operatorname{thr}}_k(x) = 1 \quad \Leftrightarrow \quad \sum_{i=1}^n x_i < k . 
\end{align*}
$HD_1$ corresponds to the case when $k = 2$. The authors showed the following result. 
\begin{fact}[{\cite[Lemma~2.15]{hambardzumyan2023dimension}}]\label{fact:shown}
    $\gamma_2(HD_1) = \Theta(\sqrt n)$. 
\end{fact}
Nevertheless, we give a simpler proof for their result, avoids the heavy machinery of Fourier analysis. We consider the Boolean variant of the communication matrix and show the following lemma. 
\begin{lem}\label{lem:lb_HD_1}
    $\gamma_2(HD_1) \geq \Omega(\sqrt{n})$.  
\end{lem}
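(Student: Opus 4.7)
The plan is to invoke the dual characterization of the $\gamma_2$ norm stated in the preliminaries,
$$\gamma_2(M) \;=\; \max_{\|u\|_2=\|v\|_2=1}\,\|uv^T\circ M\|_{tr},$$
and then pick a single convenient pair $(u,v)$ to reduce the claim to a trace-norm computation on the Boolean hypercube adjacency matrix. Setting $N=2^n$ and choosing $u=v=\mathbf{1}/\sqrt{N}$ (both unit vectors), we have $uv^T=J/N$, where $J$ denotes the all-ones matrix, so $uv^T\circ M_{HD_1}=M_{HD_1}/N$. This immediately yields
$$\gamma_2(M_{HD_1}) \;\geq\; \|M_{HD_1}\|_{tr}/N,$$
and it remains to show that $\|M_{HD_1}\|_{tr}=\Omega(N\sqrt{n})$.

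Next, I would observe that $M_{HD_1}$ is precisely the adjacency matrix $A$ of the Boolean $n$-cube $Q_n$, since $HD_1(x,y)=1$ iff $x$ and $y$ differ in exactly one coordinate. Its spectrum is classical: decomposing $A=\sum_{i=1}^n X_i$, where $X_i$ flips the $i$-th bit and acts as the identity on the remaining coordinates, the $X_i$'s pairwise commute and each has eigenvalues $\pm 1$ with eigenvectors $(1,\pm 1)/\sqrt{2}$. Taking tensor products, one obtains a simultaneous eigenbasis with eigenvalues $\sum_i s_i$ indexed by $s\in\{\pm 1\}^n$, giving eigenvalue $n-2k$ with multiplicity $\binom{n}{k}$ for each $k=0,1,\dots,n$. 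Since $A$ is symmetric, its singular values are the absolute eigenvalues, so
$$\|M_{HD_1}\|_{tr} \;=\; \sum_{k=0}^n |n-2k|\binom{n}{k}.$$

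The final step is to estimate this sum. Rewriting it as $N\cdot\mathbb{E}\,|2X-n|$ for $X\sim\operatorname{Bin}(n,1/2)$, we recognise it as $N$ times the mean absolute deviation of a symmetric binomial. A standard bound---either via the closed-form identity $\mathbb{E}\,|2X-n|=n\binom{n-1}{\lfloor (n-1)/2\rfloor}/2^{n-1}$ combined with Stirling, or via the central limit theorem, which yields $\mathbb{E}|2X-n|/\sqrt{n}\to\sqrt{2/\pi}$---shows that $\mathbb{E}\,|2X-n|=\Theta(\sqrt{n})$. Substituting back into the bound of the first paragraph gives $\gamma_2(M_{HD_1})\geq\sqrt{2n/\pi}\,(1-o(1))$, which is the required $\Omega(\sqrt{n})$ estimate.

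The only mildly technical step is the binomial absolute-moment estimate, and even that is standard; the spectrum of the hypercube and the dual formulation of $\gamma_2$ are off-the-shelf, so no new machinery is needed. I do not anticipate any genuine obstacle.
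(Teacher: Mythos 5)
Your proposal is correct, and it starts exactly as the paper does: the dual formulation $\gamma_2(M)=\max_{\|u\|_2=\|v\|_2=1}\|uv^T\circ M\|_{tr}$ with the uniform witness $u=v=\mathbf{1}/\sqrt{2^n}$, reducing the claim to a lower bound on $\|2^{-n}M_{HD_1}\|_{tr}$. Where you diverge is the key estimate. The paper shortcuts by bounding the trace norm below by the Frobenius norm and writes $\|2^{-n}J\circ M_n\|_F=\sqrt{2^n\cdot n\cdot 2^{-n}}=\sqrt n$; but the entries of $2^{-n}M_n$ are $2^{-n}$, so their squares are $2^{-2n}$ and the Frobenius norm is actually $\sqrt{n/2^n}$, which is exponentially small --- the Frobenius relaxation is too weak to give the stated bound, and the paper's displayed computation contains a slip. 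Your route avoids this entirely: you identify $M_{HD_1}$ with the adjacency matrix of the hypercube $Q_n$, read off its spectrum $\{n-2k \text{ with multiplicity } \binom{n}{k}\}$ from the tensor decomposition into commuting bit-flips, and evaluate the trace norm exactly as $2^n\,\mathbb{E}|2X-n|$ for $X\sim\operatorname{Bin}(n,1/2)$, whose mean absolute deviation is $\Theta(\sqrt n)$ (your closed form and the CLT limit $\sqrt{2/\pi}$ are both standard and correct). This gives $\gamma_2(HD_1)\geq\sqrt{2n/\pi}\,(1-o(1))$, so your argument is not only complete but in fact repairs the estimate the paper's proof needs, at the modest cost of invoking the hypercube spectrum and a binomial moment bound.
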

\begin{proof}
    Note that we have by an alternative definition of $\gamma_2$ that 
and the above observation that 
\begin{align*}
    \gamma_2(M_n) = \max_{u, v:\Vert u\Vert_2 = \Vert v\Vert_2 = 1} \Vert uv^T \circ M_n\Vert_{tr}. 
\end{align*}
By picking $u = v = (1/\sqrt {2^n}, \cdots, 1/\sqrt{2^n})^T\in\mathbb R^{2^n}$, we get by the structure of $M_n$, that 
\begin{align*}
        & \gamma_2(M_n) \\
        & \geq \left\Vert \frac{1}{2^n}J \circ M_n\right\Vert_{tr} \geq  \left\Vert \frac{1}{2^n}J \circ M_n\right\Vert_{F} = \sqrt{2^n\cdot n\cdot \frac{1}{2^n}} = \sqrt n, 
    \end{align*}
    where $\Vert\cdot\Vert_{tr}$ and $\Vert\cdot \Vert_F$ denotes the nuclear and Frobenius norm respectively, and $J$ is the all ones matrix. This completes the proof. 
\end{proof} 

For completeness, we give an upper bound on $\gamma_2(HD_1)$. 
\begin{lem}\label{lem:ub_HD_1}
    $\gamma_2(HD_1) \leq O(\sqrt{n})$.  
\end{lem}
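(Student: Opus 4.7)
The plan is to exhibit a $\gamma_2$-factorization of the communication matrix $M$ of $HD_1$ by exploiting its Fourier diagonalization as the adjacency matrix of the Boolean hypercube. Identify each $x \in \{0,1\}^n$ with its support, and for each $S \subseteq [n]$ let $\psi_S \in \mathbb{R}^{2^n}$ be the normalized character with entries $\psi_S(x) = 2^{-n/2}(-1)^{|S \cap x|}$. The $\psi_S$ form an orthonormal basis, and summing over the $n$ weight-one neighbours of any $x$ shows $M\psi_S = \lambda_S\psi_S$ with $\lambda_S = n - 2|S|$. Hence $M = \sum_{S\subseteq[n]} \lambda_S\,\psi_S\psi_S^T$.

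I would then convert this spectral decomposition into an explicit rectangular factorization $M = EF$. Let $E$ be the $2^n \times 2^n$ matrix whose $S$-th column is $\sqrt{|\lambda_S|}\,\psi_S$, and let $F$ be the matrix whose $S$-th row is $\mathrm{sign}(\lambda_S)\sqrt{|\lambda_S|}\,\psi_S^T$; then $(EF)_{x,y} = \sum_S \lambda_S \psi_S(x)\psi_S(y) = M_{x,y}$. The key observation is the ``flatness'' identity $\psi_S(x)^2 = 2^{-n}$, which is independent of both $S$ and $x$. Therefore the squared $\ell_2$-norm of every row of $E$ (and of every column of $F$) equals
\[
\sum_{S\subseteq[n]} |\lambda_S|\,\psi_S(x)^2 \;=\; \frac{1}{2^n}\sum_{S\subseteq[n]} \bigl|n-2|S|\bigr|,
\]
and consequently
\[
\gamma_2(M) \;\leq\; \|E\|_r\,\|F\|_c \;=\; \frac{1}{2^n}\sum_{S\subseteq[n]} \bigl|n-2|S|\bigr| \;=\; \mathbb{E}_{X\sim\mathrm{Bin}(n,1/2)}\bigl|n-2X\bigr|.
\]

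Finally, since $n-2X$ is a sum of $n$ independent Rademacher variables, Cauchy--Schwarz gives $\mathbb{E}|n-2X| \leq \sqrt{\mathbb{E}(n-2X)^2} = \sqrt{\mathrm{Var}(n-2X)} = \sqrt{n}$, so $\gamma_2(HD_1) \leq \sqrt{n} = O(\sqrt{n})$, matching the lower bound of Lemma~\ref{lem:lb_HD_1}. The argument is essentially mechanical once the spectral picture is in hand: the only nontrivial ingredient is the flatness identity, which collapses the row- and column-norm computation into the first absolute moment of a symmetric binomial; there is no serious obstacle beyond this.
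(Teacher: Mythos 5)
Your proof is correct, but it takes a genuinely different route from the paper. The paper's argument is a one-liner: it writes the trivial factorization $M_n = I\,M_n$ and bounds $\gamma_2(M_n)\leq \lVert I\rVert_{2\to\infty}\lVert M_n\rVert_{1\to 2}=\sqrt{n}$, using only the fact that every column of the hypercube adjacency matrix has exactly $n$ ones. You instead diagonalize $M_n$ in the Fourier basis, split the eigendecomposition into a rectangular factorization $M=EF$ weighted by $\sqrt{\lvert\lambda_S\rvert}$, and use the flatness of the characters ($\psi_S(x)^2=2^{-n}$) to collapse the row/column norms to $\frac{1}{2^n}\sum_S\lvert n-2\lvert S\rvert\rvert=\mathbb{E}\lvert n-2X\rvert\leq\sqrt{n}$; every step checks out (the eigenvalues $n-2\lvert S\rvert$, the identity $(EF)_{x,y}=M_{x,y}$, and the moment bound via Cauchy--Schwarz are all right). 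What your approach buys is generality and tightness: it shows that for any matrix diagonalized by flat eigenvectors (e.g.\ any $\mathbb{F}_2^n$-invariant matrix) one has $\gamma_2(M)\leq \lVert M\rVert_{tr}/2^n$, which is exactly the quantity the paper's lower-bound lemma (Lemma~\ref{lem:lb_HD_1}) bounds from below with the uniform vectors $u,v$ — so your factorization actually pins down $\gamma_2(HD_1)$ as the normalized trace norm, up to the binomial first-moment estimate. What the paper's approach buys is brevity: no spectral information is needed, just the column sparsity of $M_n$, at the cost of giving no structural insight beyond the bound itself.
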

\begin{proof}
    The communication matrix of $HD_1$ for strings length $n$ is a $2^n\times 2^n$ Boolean matrix $M_n$. When $n=1$, 
$$M_1 = \begin{bmatrix}
0 & 1\\
1 & 0\\
\end{bmatrix}.$$ For the case of $n=2$, 
$$M_2 = \begin{bmatrix}
0 & 1 & 1 & 0\\
1 & 0 & 0 & 1\\
1 & 0 & 0 & 1\\
0 & 1 & 1 & 0\\
\end{bmatrix}.$$ Hence, $M_n$ can be described recursively as follows: 
\[
\renewcommand\arraystretch{1.3}
M_n=\mleft[
\begin{array}{c |c}
  M_{n-1} & I \\
  \hline
  I & M_{n-1}\\
\end{array}
\mright] 
\]
   
By its trivial decomposition, we can write $M_n = I M_n $. Since every row/column of $M_n$ contains $n$ ones, we have 
\begin{align*}
    \gamma_2(M_n) = \gamma_2(I M_n) \leq \lVert I \rVert_{2\to\infty} \lVert M_n\rVert_{1\to 2}=1\cdot\sqrt{n},
\end{align*}
where $\Vert M_n\Vert_{p\rightarrow q}\coloneqq \sup_{v\neq 0}\frac{\Vert M_n v\Vert_q}{\Vert v\Vert_p}$. 
\end{proof}

Combining Lemmas~\ref{lem:lb_HD_1} and~\ref{lem:ub_HD_1} give $\gamma_2(HD_1) = \Theta(\sqrt n)$.  

\begin{remark}
Separation in the other direction is not possible since $\gamma_2(f) \geq \gamma_2^\alpha(f)$ for all $f$ and $\alpha > 1$ by Fact~\ref{fact:gamma2inf-gamma2alpha}. 
\end{remark}

\paragraph{Separation between Class 3 and Class 4}

Let $C\approx 2^{n/3}$. Alice gets the vector $x\in [-C, C]^3$ and Bob gets the vector $y\in [-C, C]^3$. Define 
\begin{align*}
    f(x, y) = \operatorname{sign}\langle x, y\rangle, 
\end{align*}
where $\operatorname{sign}:\mathbb R\rightarrow \{-1, 1\}$ is the sign function, mapping positive inputs to 1 and negative inputs to -1; and $\langle \cdot,\cdot\rangle$, denotes the inner product between two vectors. Hatami Hosseini and Lovett~\cite{hatami2020sign} showed that the above function provides separation between Class 3 and Class 4. 
\begin{fact}[{\cite[Theorem~4]{hatami2020sign}}]
    The function $f$ defined above satisfies  $\operatorname{signrank}(f) = 3$ and $\operatorname{disc}(f) = 2^{-\Omega(n)}$. In particular, $R^{\operatorname{pub}}_\epsilon(f) = \Omega(n)$. 
\end{fact}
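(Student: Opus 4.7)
The statement bundles three claims about $f(x,y) = \operatorname{sign}\langle x, y \rangle$ with $x, y \in [-C, C]^3$ and $C \approx 2^{n/3}$: the sign-rank is exactly $3$, the discrepancy is $2^{-\Omega(n)}$, and the public-coin randomized complexity is $\Omega(n)$. My plan is to prove the first two claims independently and then extract the third as a one-line corollary of facts already available in the paper.

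The sign-rank upper bound $\operatorname{signrank}(f)\leq 3$ is by construction: the matrix $(\langle x, y\rangle)_{x,y}$ factors through $\mathbb{R}^3$ and therefore has rank at most $3$, so $M_f$ is the sign pattern of a rank-$3$ matrix. For the matching lower bound $\operatorname{signrank}(f)\geq 3$ I would argue by contradiction. Assume there exist $u_x, v_y \in \mathbb{R}^2$ with $\operatorname{sign}\langle u_x, v_y\rangle = \operatorname{sign}\langle x, y\rangle$ for all inputs; then the columns of $M_f$, viewed as sign patterns of a family of half-planes in $\mathbb{R}^2$, come from a VC-dimension-$3$ concept class. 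A suitably rich collection of three-dimensional lattice vectors (for example, points placed near three mutually transverse great circles of the sphere) realizes strictly more distinct column sign patterns than any planar half-space arrangement can, giving the contradiction for large enough $n$.

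The discrepancy bound is the real work. I would fix the uniform distribution $\mu$ on $[-C, C]^3 \times [-C, C]^3$ and use the standard spectral reduction of discrepancy to the top singular value of $M_f$: for any rectangle $A \times B$,
\begin{align*}
    \left\lvert \sum_{x \in A,\, y \in B} M_f(x,y) \right\rvert \;\leq\; \sigma_1(M_f)\sqrt{|A|\,|B|},
\end{align*}
which gives $\operatorname{disc}_\mu(f) \leq \sigma_1(M_f)/\sqrt{|\mathcal X|\,|\mathcal Y|}$. It then suffices to prove $\sigma_1(M_f) \leq 2^{-\Omega(n)}\sqrt{|\mathcal X|\,|\mathcal Y|}$. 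The natural attack is a Fourier decomposition of $\operatorname{sign}\langle x, y\rangle$ along the three coordinate directions: on a one-dimensional interval of length $C$ the relevant ``sign of linear form'' has Fourier tail sufficiently flat that it correlates exponentially poorly with any rank-one test vector, and tensoring the three coordinates amplifies the cancellation multiplicatively across the whole matrix. An alternative route is to establish a $2^{-\Omega(n)}$ upper bound on the margin of $M_f$ and translate it into a discrepancy bound via Forster-style machinery.

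Once $\operatorname{disc}(f) = 2^{-\Omega(n)}$ is in hand, the implication $R^{\operatorname{pub}}_\epsilon(f) = \Omega(n)$ is immediate: combining Yao's principle (Fact~\ref{fact:Dmu-Rpub}) with the discrepancy-to-distributional-complexity bound (Fact~\ref{fact:disc-Dmu}), for any constant $\epsilon < 1/2$ one gets
\begin{align*}
    R^{\operatorname{pub}}_\epsilon(f) \;\geq\; D^{\mu}_\epsilon(f) \;\geq\; \log\!\left(\frac{1-2\epsilon}{\operatorname{disc}_\mu(f)}\right) \;=\; \Omega(n).
\end{align*}
The principal obstacle is clearly the discrepancy bound: the sign-rank calculation is essentially geometric and the final implication is black-box, but controlling the cancellation of $\operatorname{sign}\langle x, y\rangle$ across arbitrary rectangles of three-dimensional integer vectors requires either a careful Fourier analysis on $\mathbb{Z}^3$ or a sharp margin bound, and this is where the substantive content of the Hatami--Hosseini--Lovett argument actually sits.
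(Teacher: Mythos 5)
The paper itself does not prove this statement; it is imported wholesale as a citation to Hatami--Hosseini--Lovett, so your attempt is being measured against their argument rather than anything internal to this paper. Your sign-rank upper bound and the final black-box step (combining Fact~\ref{fact:Dmu-Rpub} with Fact~\ref{fact:disc-Dmu} applied to the hard distribution) are fine, and the sign-rank lower bound, while sketchy, is a minor and essentially dispensable point for the separation. The genuine gap is in the discrepancy bound, which is the entire content of the theorem, and the specific plan you propose cannot work. You fix $\mu$ to be the uniform distribution on $[-C,C]^3\times[-C,C]^3$, but under the uniform distribution the discrepancy of $f$ is $\Omega(1)$: take $A=B=\{z: z_1,z_2,z_3>0\}$, a rectangle of constant measure on which $\langle x,y\rangle>0$ identically, so $\operatorname{disc}_{\mathrm{unif}}(A\times B,f)\geq 2^{-6}$ up to negligible corrections. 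For the same reason your intended spectral estimate is false: since $M_f$ has constant correlation with the rank-one indicator of this quadrant rectangle, $\sigma_1(M_f)=\Theta\bigl(\sqrt{|\mathcal X||\mathcal Y|}\bigr)$, so no bound of the form $\sigma_1(M_f)\leq 2^{-\Omega(n)}\sqrt{|\mathcal X||\mathcal Y|}$ can hold, and the Fourier heuristic of "tensoring the three coordinates" does not apply because $\operatorname{sign}(x_1y_1+x_2y_2+x_3y_3)$ does not factor across coordinates.

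The definition $\operatorname{disc}(f)=\min_\mu\operatorname{disc}_\mu(f)$ means you must \emph{construct} a distribution under which every rectangle has exponentially small correlation; in Hatami--Hosseini--Lovett this is the heart of the proof, and the hard distribution is far from uniform (it is concentrated on the hard instances, essentially pairs with small $|\langle x,y\rangle|$, where the sign is not decided by any single dominant coordinate), with the rectangle bound then proved by a dedicated pseudorandomness/exponential-sum analysis. Your alternative remark about bounding the margin is equivalent to the goal rather than a route to it (by Fact~\ref{fact:mc}, margin complexity and inverse discrepancy agree up to constants), and Forster-type machinery goes in the wrong direction, as it lower-bounds sign-rank from spectral data rather than upper-bounding discrepancy. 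So as written the proposal establishes only the easy parts, and the claimed $2^{-\Omega(n)}$ discrepancy, hence the $\Omega(n)$ lower bound on $R^{\operatorname{pub}}_\epsilon(f)$, remains unproved.
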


\begin{remark}
Separation in the opposite direction is not possible since the unbounded-error communication complexity of a function $f: \mathcal X \times \mathcal Y \rightarrow \{0, 1\}$ is never much more than its complexity in the other standard models. For instance, $UPP(f) \leq O(R^{\operatorname{pub}}_{1/3}(f )+ \log \log (|\mathcal X| + |\mathcal Y|))$ and $UPP(f) \leq O(Q^*_{1/3}(f )+ \log \log (|\mathcal X| + |\mathcal Y|))$~{\cite[Section~2.1]{sherstov2011unbounded}}. 
\end{remark}

\paragraph{Separation between Class 4 and Class 5}
We first define the notion of a maximum class. Let $C \subseteq \{-1, 1\}^N$ be a class with VC dimension $d$. Then, $C$ is called a \emph{maximum class} if it meets the Sauer-Shelah's bound~\cite{sauer1972density} with equality, i.e. $\vert C\vert = \displaystyle\sum_{i=0}^d \binom{N}{i}$. 

Alon et al.~\cite{alon2016sign} showed a polynomial separation between $\operatorname{signrank}(f)$ and $VC(f)$~\cite{alon2016sign} via a maximum class of low VC dimension and high signrank. Specifically, let $P$ be the set of points in a projective plane of order $n$ and let $L$ be the set of lines in it. Let $N = N_{n, 2} = \vert P\vert = \vert L\vert$, where $N_{n, 2}\coloneqq n^2 + n  +1 = \frac{n^3 - 1}{n-1}$. For every $\ell\in L$, fix some linear order on the points in $\ell$. A set $T\subset P$ is called an \emph{interval} if $T\subseteq\ell$ for some line $\ell\in L$, and $T$ forms an interval with respect to the order fixed on $\ell$.  

\begin{fact}[Theorem 8 in \cite{alon2016sign}]
    The class $Q$ of all intervals is a maximum class of VC dimension 2. Moreover, there exists a choice of linear orders for the lines in $L$ such that the resulting $Q$ has signrank $\Omega(n^{1/2}/\log n)$. 
\end{fact}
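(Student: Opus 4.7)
My plan is to prove the two halves of the statement in sequence: first the combinatorial claim that $Q$ is a maximum class of VC dimension $2$ via a direct Sauer--Shelah count, then the signrank lower bound via a probabilistic choice of the orderings combined with a counting or spectral argument.

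For the maximum-class half I would count intervals directly by size. There is one empty interval, $N = n^2+n+1$ singletons (one per point), and for size at least $2$ the intervals are in bijection with pairs of points: any two distinct points determine a unique line by the projective-plane axiom, and a pair of endpoints together with the fixed order on that line determines a unique contiguous segment. Equivalently, each of the $N$ lines contains $\binom{n+1}{2}$ size-$\geq 2$ intervals and these are globally distinct, so there are $N\cdot\binom{n+1}{2} = (n^2+n+1)\cdot\tfrac{n(n+1)}{2} = \binom{N}{2}$ of them. Hence $\lvert Q \rvert = 1 + N + \binom{N}{2} = \sum_{i=0}^{2}\binom{N}{i}$, saturating Sauer--Shelah for $d=2$. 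To pin down the VC dimension I would then verify (i) any pair $\{p,q\}$ is shattered since both singletons and the empty set are intervals and the contiguous segment on the joining line has trace $\{p,q\}$, and (ii) no triple is shattered: in the collinear case the middle point cannot be omitted from any interval containing the two extreme points on the common line (and that line is unique), and in the non-collinear case no interval contains all three since intervals live on a single line. Combined with the saturation count, this gives $VC(Q) = 2$ and that $Q$ is maximum.

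For the signrank lower bound I would draw the linear orders $\sigma_\ell$ on the lines independently and uniformly at random, and argue existence by comparing two counts: on one hand, the number of $\lvert Q \rvert \times N$ sign matrices realisable with signrank at most $r$, which by a Warren--Milnor-style bound on sign patterns of polynomial families is at most $(\lvert Q\rvert N)^{O(r(\lvert Q\rvert + N))}$; on the other hand, the number of genuinely distinct sign matrices arising from the $\prod_\ell (n+1)!$ possible orderings. Showing that distinct orderings typically produce distinct matrices --- using that each pair of points lies on a unique line, so reordering a line affects only the intervals on that line --- lower-bounds the second count, and pigeonhole extracts an ordering whose signrank is at least $\Omega(n^{1/2}/\log n)$. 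Equivalently, one could extract a structured submatrix (for instance indexed by intervals on a single line together with the points of that line) and apply Forster's bound $\operatorname{signrank}(M) \geq \sqrt{mk}/\lVert M\rVert$, using that the projective plane's incidence bipartite graph has girth $6$ and hence small second singular value.

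The main obstacle will be the signrank step: the maximum-class half is a routine Sauer--Shelah verification, whereas a naive Forster bound on the full class-indicator matrix falls short of $\Omega(\sqrt n)$. The improvement has to come from either the probabilistic averaging (which requires calibrating the counting so that the $\prod_\ell (n+1)!$ factor really reflects distinct matrices, rather than being absorbed into matrix-level symmetries) or from a careful submatrix selection that isolates the projective plane's incidence structure; either way, the logarithmic slack in $\Omega(n^{1/2}/\log n)$ is the signature of such an existence-versus-counting argument rather than an explicit construction.
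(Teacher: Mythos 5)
First, note that the paper does not prove this statement at all: it is quoted as Theorem~8 of Alon, Moran and Yehudayoff, so the relevant comparison is with the proof in that reference. Your first half is correct and is essentially the standard argument: the bijection between size-$\geq 2$ intervals and pairs of points (two points determine a unique line, and endpoints determine the segment) gives $\vert Q\vert = 1 + N + \binom{N}{2}$, and the shattering/no-shattering verification gives $VC(Q)=2$, so $Q$ is maximum.

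The sign-rank half, however, has a genuine quantitative gap. You propose to compare (a) the Warren-type count of $\vert Q\vert \times N$ sign matrices of sign rank at most $r$ with (b) the number of matrices arising from the $\prod_\ell (n+1)!$ orderings. With $\vert Q\vert = \Theta(N^2)$, the count in (a) is $2^{O(rN^2\log N)}$, while (b) is at most $((n+1)!)^N = 2^{O(N^{3/2}\log N)}$ since $n\approx\sqrt N$. So (a) dwarfs (b) for every $r\geq 1$ and no pigeonhole is possible; sharpening ``distinct orderings give distinct matrices'' cannot help, because the deficit is on the other side of the inequality. The missing idea is to run the counting on a small derived submatrix rather than on all of $M_Q$. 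For example, for each line $\ell$ pick an arbitrary subset $S_\ell\subset\ell$ of size $\lceil (n+1)/2\rceil$ and order $\ell$ so that $S_\ell$ is a prefix; the $N\times N$ matrix whose row $\ell$ is the indicator of $S_\ell$ is then a submatrix of $M_Q$ for that ordering. There are $\binom{n+1}{\lceil (n+1)/2\rceil}^{N} = 2^{\Theta(nN)}$ distinct such matrices, while Warren bounds the number of $N\times N$ sign matrices of sign rank at most $r$ by $2^{O(rN\log N)}$; hence some choice of orderings forces sign rank $\Omega(n/\log n)\geq \Omega(\sqrt n/\log n)$, and sign rank is monotone under passing to submatrices. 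This restriction to an $N\times N$ window (one interval per line) is exactly what makes the counting argument of the cited paper go through, and it is the step your sketch lacks.

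Your fallback suggestions also fail as stated. The submatrix ``intervals of a single line against the points of that line'' is just the matrix of one-dimensional intervals, which has sign rank $O(1)$, so it cannot certify a growing lower bound. And Forster's theorem requires the operator norm of the $\pm 1$ matrix, not its second singular value; for these sparse, unbalanced incidence-type matrices the norm is dominated by the all-ones component (of order $N$), so the bound is vacuous unless one first applies a Razborov--Sherstov-style correction --- which is precisely why the reference relies on counting rather than a spectral argument.
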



\begin{remark}
Separation between Class 3 and 4 in the opposite direction is not possible since $\operatorname{signrank}(f) \geq VC(f)$~\cite{alon2016sign}.   
\end{remark}

\section{Conclusion and open problems}\label{sec:conclusion}
We outline a hierarchy for constant communication complexity. by categorizing communication complexity measures into five classes. In each class, one complexity measure is related by a function of another complexity measure, independent of the input size. We show separation between consecutive classes. Such separations are only meaningful in one direction. 


Our classification is mainly based on known results in the literature. Although we have made every attempt to comprehensively include all relevant complexity measures, some measures may have been inadvertently omitted. There is also a small number of measures that we are unable to classify. For example, it is natural for one to include $QMA(f)$ in Class 3 since both $MA(f)$ and $Q^*_\epsilon(f)$ are also in that class. In particular, having $MA(f) \leq R^{\operatorname{pub}}_\epsilon(f)$, $QMA(f) \leq Q^*_\epsilon(f)$ should hold analogously. In his paper, Klauck proved that $QMA(f)$ is lower bounded by the one-sided smooth discrepancy, $\operatorname{sdisc}^1_{\delta}(f)$~{\cite[Theorem~2]{klauck2011arthur}}. To the best of our knowledge, no lower bound on $\operatorname{sdisc}^1_{\delta}(f)$ is known. If one could prove a lower bound on  $\operatorname{sdisc}^1_{\delta}(f)$ in terms of any complexity measure in Class 3 and independent of the input size, then both $QMA(f)$ and $\operatorname{sdisc}^1_{\delta}(f)$ can be added to Class 3. 

\begin{openproblem}
Let $f: \mathcal X \times \mathcal Y \rightarrow \{0, 1\}$ be a Boolean function. Which of the complexity measures in Class 3 lower bounds $\operatorname{sdisc}^1_{\delta}(f)$, without dependency on the input size? 
\end{openproblem}

\begin{conjecture}
$QMA(f)$ and $\operatorname{sdisc}^1_{\delta}(f)$ belong to Class 3. 
\end{conjecture}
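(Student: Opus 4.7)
The plan is to establish the conjecture by proving constant-equivalence of $QMA(f)$ and $\operatorname{sdisc}^1_\delta(f)$ with measures already placed in Class 3, ideally bridging the two statements through Klauck's inequality $QMA(f) \geq \Omega(\operatorname{sdisc}^1_\delta(f))$ so that a single technical lemma settles both. For $QMA(f)$, the easy direction is $QMA(f) \leq Q^*_\epsilon(f) + O(1)$, since any entangled quantum protocol is a $QMA$ protocol in which the verifier ignores Merlin's message; hence any Class 3 measure being constant already forces $QMA(f)$ to be constant. For $\operatorname{sdisc}^1_\delta(f)$, the analogous easy direction uses that one-sided smoothing ranges over a subfamily of the approximations allowed in the two-sided definition (namely those $g$ with $f^{-1}(1)\subseteq g^{-1}(1)$), so $\operatorname{sdisc}^1_\delta(f)\leq \operatorname{sdisc}_\delta(f)$ up to constants; together with Fact~\ref{fact:QICD-sdisc} (which places $\operatorname{sdisc}_\delta$ in Class 3), this yields the direction ``Class 3 measure constant $\Rightarrow \operatorname{sdisc}^1_\delta$ constant.''

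What remains, and what is the crux of the conjecture, is the opposite direction: show that $\operatorname{sdisc}^1_\delta(f) = O(1)$ implies some Class 3 measure is $O(1)$. Via Klauck's bound this simultaneously handles $QMA(f)$, so both halves of the conjecture collapse to a single obstacle: extracting a constant upper bound on a Class 3 measure from boundedness of the one-sided smooth discrepancy. The natural strategy I would pursue is symmetrization: either (i) show that a bounded one-sided smoothing of $f$ combined with a bounded one-sided smoothing of $\neg f$ glues into a genuine two-sided smooth discrepancy bound, thereby reducing to $\operatorname{sdisc}_\delta \in$ Class~3; or (ii) construct an $MA$-type protocol for $f$ in which Merlin's proof specifies the witnessing $g$ together with a monochromatic rectangle guaranteed by its small discrepancy on the $1$-inputs, and then apply the corruption-style argument underlying Fact~\ref{fact:MApub-Rpub} to convert this into an $MA$ protocol of cost $\operatorname{polylog}(\operatorname{sdisc}^1_\delta(f))$, placing the result in Class~3 through $MA \leftrightarrow R^{\operatorname{pub}}_\epsilon$.

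The hard part will be route (i), because the one-sided definition is fundamentally asymmetric: one may inflate $f^{-1}(1)$ arbitrarily without altering $f^{-1}(0)$, so independent one-sided extensions of $f$ and $\neg f$ need not be compatible as a two-sided approximation of a common target. A promising technical route is through the LP in Definition~\ref{def:sdisc_LP}: take an optimal dual solution $(\mu_{x,y},\phi_{x,y})$ witnessing $\operatorname{sdisc}^1_\delta(f) = O(1)$ and attempt a complementary dual solution on $f^{-1}(0)$ via an averaging/rounding scheme, losing at most a constant factor. If such a reduction succeeds, then constant one-sided smooth discrepancy forces constant two-sided smooth discrepancy, closing the chain and placing both $QMA(f)$ and $\operatorname{sdisc}^1_\delta(f)$ in Class~3. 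Failing that, it would be worth seeking a direct lower bound of the form $\operatorname{sdisc}^1_\delta(f) \geq \Omega\bigl(\operatorname{pprt}^\mu_\epsilon(f)\bigr)$ or $\operatorname{sdisc}^1_\delta(f) \geq \Omega\bigl(\operatorname{rec}^z_\epsilon(f)\bigr)$ by adapting the proof of Fact~\ref{fact:recz-disc} to the one-sided regime; any such inequality would suffice.
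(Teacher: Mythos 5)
This statement is a \emph{conjecture} in the paper, not a theorem: the authors themselves point out that Klauck's inequality gives only one direction ($QMA(f)$ is lower bounded by the one-sided smooth discrepancy), that to their knowledge no lower bound on $\operatorname{sdisc}^1_{\delta}(f)$ in terms of any Class~3 measure is known, and they isolate exactly that missing inequality as their Open Problem. Your proposal does not close this. The two ``easy directions'' you give are fine in spirit (a verifier may ignore Merlin, so $QMA(f)\leq Q^*_\epsilon(f)+O(1)$; and, under the definitional convention in which the one-sided smoothing ranges over a subfamily of admissible $g$'s, $\operatorname{sdisc}^1_\delta(f)\leq\operatorname{sdisc}_\delta(f)$, so constancy of the Class~3 measures propagates downward), but they only establish ``Class~3 constant $\Rightarrow$ both measures constant.'' The converse direction, which is the entire content of the conjecture, is precisely the paper's open problem, and your text explicitly leaves it unresolved: both route (i) and route (ii) are phrased as ``if such a reduction succeeds'' rather than as arguments. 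So what you have is a correct reduction of the conjecture to the known bottleneck, not a proof.

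Moreover, the two speculative routes deserve a caveat about difficulty. Route (i) founders on exactly the asymmetry you name: an optimal dual witness for the one-sided LP controls correlations only against rectangles weighted toward $f^{-1}(1)$, and independent one-sided smoothings of $f$ and $\neg f$ correspond to different target functions $g$, so there is no obvious way to ``average'' the two dual solutions into a feasible two-sided witness without losing more than a constant factor; no such symmetrization is known. Route (ii) would yield $MA_\epsilon(f)\leq\operatorname{polylog}(\operatorname{sdisc}^1_\delta(f))$, i.e.\ a polynomial characterization of Merlin--Arthur communication by the one-sided smooth discrepancy; this is a strictly stronger statement than the converse-corruption-type questions (e.g.\ whether rectangle/partition-type bounds polynomially upper bound randomized or $MA$ communication) that are themselves open, so it cannot be obtained by ``adapting the proof of Fact~\ref{fact:recz-disc},'' which only converts discrepancy-type witnesses into rectangle-bound witnesses, i.e.\ goes in the lower-bound direction. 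In short: the proposal correctly maps the terrain and matches the paper's own assessment of what is missing, but it contains no new argument for the missing step, so the conjecture remains exactly as open as the paper states.
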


On the other hand, it is surprising to see that $AM$ and $MA$ complexities belong to different classes despite both models being almost similar. This is due to the fact that $AM_\epsilon(f) \leq O(SBP(f) + \log n)$~\cite{goos2016communication}. 

Other unclassified measures include the (two-way) Las Vegas communication complexity, $LV(f)$.
Jain and Klauck introduced the Las Vegas partition bound and proved that it is at most exponential in $LV(f)$~{\cite[Lemma~7]{JK10}}. {However, it is not clear if $LV(f)$ is lower bounded by $R^{\operatorname{pub}}_\epsilon(f)$ in Class 3 or upper bounded by $D(f)$ in Class 1.

With regards to information complexity, Braverman showed that $IC_0(f) = IC_{D, 0}(f)$~{\cite[Theorem~3.6]{braverman2012interactive}} and $IC^{\operatorname{ext}}_0(f) = IC^{\operatorname{ext}}_{D,0}(f)$~{\cite[Theorem,~3.17]{braverman2012interactive}}. Since internal information complexity is upper bounded by external information complexity, which is always upper bounded by communication complexity, this implies that $IC_0(f)$ and $IC_{D,0}(f)$ are respectively upper bounded by $IC^{\operatorname{ext}}_0(f)$ and $IC^{\operatorname{ext}}_{D,0}(f)$, which are both upper bounded by $R^{\operatorname{pub}}_0(f)$. As there are limited results on exact information complexity, it is not clear whether  $IC_0(f)$ can be lower bounded by any of the elements in Class 1, without dependency on the input size.

\begin{openproblem}
Let $f: \mathcal X \times \mathcal Y \rightarrow \{0, 1\}$ be a Boolean function. Which of the complexity measures in Class 1 lower bounds $IC_0(f)$, without dependency on the input size? 
\end{openproblem}

\begin{conjecture}\label{conj:zero_error_IC}
$IC_0(f), IC_{D,0}(f), IC^{\operatorname{ext}}_0(f)$ and $IC^{\operatorname{ext}}_{D,0}(f)$ belong to Class 1. 
\end{conjecture}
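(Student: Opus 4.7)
The plan is to establish the two directions of constant-equivalence with $D(f)$. Since the excerpt already cites Braverman's equalities $IC_0(f) = IC_{D,0}(f)$ and $IC^{\operatorname{ext}}_0(f) = IC^{\operatorname{ext}}_{D,0}(f)$, and since Fact~\ref{fact:ext-int} gives $IC^{\operatorname{ext}}_0(f) \geq IC_0(f)$, it suffices to prove: (i) $D(f) = O(1) \Rightarrow IC_0(f) = O(1)$, and (ii) $IC_0(f) = O(1) \Rightarrow D(f) = O(1)$.

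Direction (i) is essentially immediate: by the zero-error analogues of Facts~\ref{fact:ICext-CC} and~\ref{fact:ext-int}, $IC_0(f) \leq IC^{\operatorname{ext}}_0(f) \leq R^{\operatorname{pub}}_0(f)$, and $R^{\operatorname{pub}}_0(f) = \Theta(D(f))$ by Fact~\ref{fact:Rpub0-D}. For direction (ii), I would first establish the \emph{external} version $IC^{\operatorname{ext}}_0(f) = O(1) \Rightarrow D(f) = O(1)$ as a warm-up. For a deterministic zero-error protocol $\pi$, the transcript $T$ is a function of $(X,Y)$, so $IC^{\operatorname{ext},\mu}(\pi) = H_\mu(T)$; picking $\mu$ to place equal mass on one representative input per reachable leaf makes $T$ uniform, so $\log k \leq c$ where $k$ is the number of reachable leaves, giving $C^P(f) \leq 2^c$ and hence $D(f) = O(1)$ via Fact~\ref{fact:prop_2.2}. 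For a randomized zero-error $\pi$ with public coins, writing $IC^{\operatorname{ext},\mu}(\pi) = \mathbb{E}_r[H_\mu(T_r)]$ and applying pigeonhole-style arguments over choices of $\mu$ should let us extract a single deterministic branch $\pi_r$ with $2^{O(c)}$ leaves, reducing to the previous case; private coins are handled by derandomization, since zero-error remains zero-error after fixing the coin tosses.

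The main obstacle is extending the external bound to the internal one. Since $IC_0(f)$ can in principle be strictly smaller than $IC^{\operatorname{ext}}_0(f)$, bounding the internal cost does not immediately bound the external one, so the warm-up argument above does not apply. Two natural routes suggest themselves: (a) develop a zero-error analogue of Braverman's internal-information compression, producing a zero-error protocol with communication $O(IC_0(f))$; or (b) give a direct combinatorial lower bound $IC_0(f) \geq \Omega(\log C^P(f))$ by exploiting the fact that the transcript of a zero-error protocol carves out a monochromatic rectangle cover of $\mathcal X \times \mathcal Y$. Both routes face genuine difficulties: existing compressions for $\epsilon$-error protocols incur polynomial overhead (which would place the bound in a class strictly above Class~1), and it is not a priori clear why internal information, which measures information revealed to the participants rather than leaf structure, should tightly control the partition or cover number. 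A successful proof would likely need either a compression theorem tailored to exact protocols, or a delicate structural argument specific to zero-error transcripts.
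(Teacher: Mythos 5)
This statement is a \emph{conjecture} in the paper, not a theorem: the authors prove nothing beyond the easy containment $IC_0(f)\leq IC^{\operatorname{ext}}_0(f)\leq R^{\operatorname{pub}}_0(f)$ (your direction (i), which with Fact~\ref{fact:Rpub0-D} shows these measures are constant whenever the Class~1 measures are), and they explicitly pose the converse --- whether any Class~1 measure lower bounds $IC_0(f)$ independently of the input size --- as an open problem immediately preceding Conjecture~\ref{conj:zero_error_IC}. Your direction (ii) is exactly that open problem, and your proposal does not close it; you acknowledge this yourself. What you have is the one-sided argument the paper already records, plus two candidate routes (a compression theorem for exact protocols, or a direct bound $IC_0(f)\geq\Omega(\log C^P(f))$) whose feasibility is precisely what is unknown. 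So as a proof of the statement there is a genuine gap, not a repairable technicality: the internal zero-error information cost could, for all anyone currently knows, be bounded while $D(f)$ grows.

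Two smaller points about the parts you do sketch. First, even the ``warm-up'' external argument for randomized zero-error protocols has a quantifier problem: the hypothesis $\max_\mu IC^{\operatorname{ext},\mu}(\pi)\leq c$ gives, for each \emph{fixed} $\mu$, a bound of the form $\mathbb{E}_r[H_\mu(T_r)]\leq c$, but the distribution you want (uniform over one representative input per reachable leaf) depends on the branch $r$, so you cannot choose $\mu$ after $r$ and then pigeonhole over branches; for a fixed $\mu$ a branch may have many leaves yet tiny transcript entropy. Second, after bounding the number of leaves you need an inequality of the form $D(f)\leq\operatorname{poly}(\log C^P(f))$ to conclude $D(f)=O(1)$; Fact~\ref{fact:prop_2.2} goes in the opposite direction ($C^P(f)\leq 2^{D(f)}$), so you would have to invoke the standard tree-balancing/partition-number result instead. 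Neither point affects the verdict: the decisive missing piece is the internal lower bound, which the paper leaves open.
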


Furthermore, the equivalence between the internal zero-error information complexity of a function $f$ and its amortized communication complexity, does not hold in the case of amortized zero-error communication complexity. In fact, it was conjectured that the (average case) amortized zero-error communication complexity of a function is exactly characterized by its external information complexity~\cite{braverman2013information, braverman2012coding}. If this conjecture is proven to be true, then the average case amortized zero-error communication complexity belongs to the same class as the complexity measures mentioned in Conjecture~\ref{conj:zero_error_IC}. 

Touchette showed that $AQCC^{\mu}_\epsilon(f) = QIC^{\mu}_\epsilon(f)$ for any function $f$ and error $\epsilon  > 0$~{\cite[Theorem~2]{touchette2015quantum}}. While one can draw the relation $Q^{*, \mu}_\epsilon(f) \rightarrow QIC^{\mu}_\epsilon(f) \rightarrow AQCC^{\mu}_\epsilon(f) \rightarrow AC^{\mu}_\epsilon(f) \rightarrow IC^{\mu}_\epsilon(f)$ in Class 3, the lower bound on $Q^{*, \mu}_\epsilon(f)$ is not known. In fact, this was posed as an open problem in the work of Jain and Zhang~\cite{Jain2009}. 

\begin{openproblem}
Let $f: \mathcal X \times \mathcal Y \rightarrow \{0, 1\}$ be a Boolean function. Which of the complexity measures in Class 3 lower bounds $Q^{*, \mu}_\epsilon(f)$, without dependency on the input size?    
\end{openproblem}

\begin{conjecture}
$Q^{*, \mu}_\epsilon(f), QIC^{\mu}_\epsilon(f)$ and $AQCC^{\mu}_\epsilon(f)$ belong to Class 3.
\end{conjecture}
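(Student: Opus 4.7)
The plan is to reduce the conjecture to existing Class 3 inclusions by combining Touchette's equality with the $\sup$-over-$\mu$ machinery (the same idiom the paper already uses to include $D^\mu_\epsilon(f)$ and $IC^\mu_\epsilon(f)$ in Class 3 via $\max_\mu D^\mu_\epsilon(f) = R^{\operatorname{pub}}_\epsilon(f)$). First, Touchette's theorem (Fact~\ref{fact:AQCC-QCC}) gives $QIC^\mu_\epsilon(f) = AQCC^\mu_\epsilon(f)$ pointwise in $\mu$, settling the equivalence between those two measures in one line. Next, by definition $\sup_\mu QIC^\mu_\epsilon(f) = QIC_{D,\epsilon}(f)$, and Facts~\ref{fact:QIC-QICD} and~\ref{fact:QIC-QICD-alpha} establish $QIC_{D,\epsilon}(f) \leftrightarrow QIC_\epsilon(f)$ up to the usual error-parameter rescaling, placing the $\sup$-over-$\mu$ versions of $QIC^\mu_\epsilon$ and $AQCC^\mu_\epsilon$ into Class 3 through $QIC_\epsilon(f) \in$ Class 3 (Theorem~\ref{thm:3}).

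For $Q^{*,\mu}_\epsilon(f)$, the forward inequality $Q^{*,\mu}_\epsilon(f) \geq QIC^\mu_\epsilon(f)$ holds for every $\mu$, since, on any fixed entanglement-assisted protocol, the quantum information cost never exceeds the communication cost (the distributional refinement of Fact~\ref{fact:QIC-QCC}). The matching direction requires a quantum analogue of Yao's principle, $\sup_\mu Q^{*,\mu}_\epsilon(f) = Q^*_\epsilon(f)$; once this is in hand, $Q^*_\epsilon(f) \in$ Class 3 by Theorem~\ref{thm:3} closes the cycle
\[
\textstyle\sup_\mu Q^{*,\mu}_\epsilon \;\to\; \sup_\mu QIC^\mu_\epsilon \;=\; \sup_\mu AQCC^\mu_\epsilon \;=\; QIC_{D,\epsilon} \;\leftrightarrow\; QIC_\epsilon \;\leq\; Q^*_\epsilon \;=\; \sup_\mu Q^{*,\mu}_\epsilon,
\]
establishing the conjecture.

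The main obstacle is justifying the quantum Yao's principle with prior shared entanglement. The classical argument uses LP duality on a finite set of deterministic protocols; here one must show (i) the set of acceptance matrices realized by cost-$c$ entanglement-assisted protocols is convex, which follows by encoding a shared random bit inside additional EPR pairs and branching locally on its measurement outcome, and (ii) the payoff $(\pi,\mu) \mapsto \Pr_{(x,y)\sim\mu}[\pi \text{ errs on }(x,y)]$ is bilinear, which is immediate. Applying von Neumann's minimax to this bilinear pairing then yields $\sup_\mu Q^{*,\mu}_\epsilon(f) = Q^*_\epsilon(f)$ up to a vanishing error adjustment absorbed into the constant-equivalence slack. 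A stronger route, which would resolve the preceding Open Problem directly and avoid the $\sup$-over-$\mu$ detour altogether, would be a pointwise lower bound of the form $Q^{*,\mu}_\epsilon(f) \geq \Omega\bigl(\log(1/\operatorname{disc}^\mu(f))\bigr)$ obtained by adapting the Anshu–Jain–Touchette–Yao quantum discrepancy argument (Fact~\ref{fact:Qstar-disc}) to a fixed distribution $\mu$; proving such a distributional quantum discrepancy bound is where the real technical difficulty lies, since the existing proof critically passes through the worst-case discrepancy rather than $\operatorname{disc}^\mu$.
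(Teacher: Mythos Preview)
The paper does not prove this statement; it is explicitly a conjecture, preceded by an open problem asking which Class~3 measure lower-bounds $Q^{*,\mu}_\epsilon(f)$. There is therefore no paper proof to compare against---you are proposing a resolution of something the authors deliberately left open.

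Two remarks on your proposal. First, the portion concerning $QIC^\mu_\epsilon$ and $AQCC^\mu_\epsilon$ is, under the paper's own $\max_\mu$ convention (the one used to include $D^\mu_\epsilon$ via Fact~\ref{fact:Dmu-Rpub}), already settled: by definition $\max_\mu QIC^\mu_\epsilon(f)=QIC_{D,\epsilon}(f)$, which Theorem~\ref{thm:3} already places in Class~3, and Touchette's equality carries $AQCC^\mu_\epsilon$ along. So the genuine content of the conjecture is $Q^{*,\mu}_\epsilon$. Second, your minimax sketch has a gap: von Neumann's theorem is for finite bilinear games, and convexity plus bilinearity alone do not yield minimax. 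You need Sion's theorem, and while compactness of the distribution simplex suffices for one side, you must still verify that $\pi\mapsto\operatorname{err}(\pi,\mu)$ is lower semicontinuous on cost-$c$ entanglement-assisted protocols---equivalently, that the set of achievable acceptance matrices is closed. With unbounded entanglement dimension this is not automatic and deserves an argument, not an appeal to von Neumann.

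Your ``stronger route'' is in fact more tractable than you suggest and bypasses minimax entirely. The Linial--Shraibman $\gamma_2$ proof of the quantum discrepancy bound is already distributional: any cost-$c$ entanglement-assisted protocol has bias matrix $B$ with $\gamma_2(B)\leq 2^{O(c)}$; average error at most $\epsilon$ under $\mu$ gives $\langle \mu\circ M_f, B\rangle\geq 1-2\epsilon$; and $\gamma_2/\gamma_2^*$ duality together with $\gamma_2^*(\mu\circ M_f)=\Theta(\operatorname{disc}^\mu(f))$ then yields $Q^{*,\mu}_\epsilon(f)\geq\Omega\bigl(\log((1-2\epsilon)/\operatorname{disc}^\mu(f))\bigr)$ pointwise in $\mu$. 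This supplies the missing arrow $\operatorname{disc}^\mu(f)\to Q^{*,\mu}_\epsilon(f)$ directly and closes the cycle through measures already in Class~3, resolving both the conjecture and the preceding open problem without any compactness issue.
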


Under product distribution over inputs, we have a similar open problem and conjecture.  
\begin{conjecture}
Let $f: \mathcal X \times \mathcal Y \rightarrow \{0, 1\}$ be a Boolean function. Which of the complexity measures in Class 5 lower bounds $Q^{*, []}_\epsilon(f)$, without dependency on the input size?     
\end{conjecture}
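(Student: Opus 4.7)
The plan is to attempt to place $Q^{*,[]}_\epsilon(f)$ (along with $QIC^{[]}_\epsilon(f)$ and $AQCC^{[]}_\epsilon(f)$) into Class 5 by producing a cycle of constant-equivalences that mirrors the Class 3 situation, with the candidate measures being $QIC^{[]}_\epsilon(f)$, $AQCC^{[]}_\epsilon(f)$, $IC^{[]}_\epsilon(f)$, $\operatorname{disc}^{[]}(f)$, and ultimately $D^{[]}_\epsilon(f)$. The sought-for picture is $Q^{*,[]}_\epsilon(f) \geq QIC^{[]}_\epsilon(f) = AQCC^{[]}_\epsilon(f)$ on the lower side, and $Q^{*,[]}_\epsilon(f) \leq R^{[],\operatorname{pub}}_\epsilon(f) = D^{[]}_\epsilon(f)$ on the upper side via Yao's principle restricted to product distributions.

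First I would establish the lower bound $Q^{*,[]}_\epsilon(f) \geq QIC^{[]}_\epsilon(f)$ by adapting Touchette's argument: for any product distribution $\mu$, the quantum information cost of a protocol is bounded by the number of qubits communicated, because a quantum message of $k$ qubits cannot convey more than $k$ qubits of information about any reference register. Combined with Fact~\ref{fact:AQCC-QCC_prod}, this immediately shows that constant $Q^{*,[]}_\epsilon(f)$ forces constant $QIC^{[]}_\epsilon(f)$ and $AQCC^{[]}_\epsilon(f)$. For the upper side, I would use $Q^{*,[]}_\epsilon(f) \leq R^{[],\operatorname{pub}}_\epsilon(f)$ (Fact~\ref{fact:q-r} restricted to product distributions) combined with $R^{[],\operatorname{pub}}_\epsilon(f) = \max_{\mu \text{ product}} D^\mu_\epsilon(f) = D^{[]}_\epsilon(f)$ (Yao's principle, Fact~\ref{fact:Dmu-Rpub}); this settles that $D^{[]}_\epsilon(f) = O(1)$ implies $Q^{*,[]}_\epsilon(f) = O(1)$.

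Next I would try to close the cycle through a classical Class 5 measure. The cleanest route is via the discrepancy bound under product distributions: adapt Fact~\ref{fact:Qstar-disc} by restricting the underlying hard distribution to be a product distribution, yielding $Q^{*,[]}_\epsilon(f) = \Omega(\log(1/\operatorname{disc}^{[]}(f)))$. Then, since $\operatorname{disc}^{[]}(f) \leftrightarrow IC^{[]}_\epsilon(f)$ already holds inside Class 5 via Fact~\ref{fact:disc_IC_prod} and Facts~\ref{fact:ICpi_Dmu_prod} and~\ref{fact:Dmu-DmuAB_prod}, and since $D^{[]}_\epsilon(f)$ itself is related to $\operatorname{disc}^{[]}(f)$ by Fact~\ref{fact:disc-Dmu_prod}, one could then chain everything together to conclude that $Q^{*,[]}_\epsilon(f), QIC^{[]}_\epsilon(f)$, and $AQCC^{[]}_\epsilon(f)$ all live in Class 5.

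The hard part will be the direct lower bound step, because the standard proof of $Q^*_\epsilon(f) = \Omega(\log(1/\operatorname{disc}(f)))$ does not obviously respect a product-distribution restriction: it maximizes over all distributions, and truncating to product distributions may lose polynomial factors or may fail entirely for functions where the worst-case hard distribution is non-product. The alternative route through $\gamma_2^\alpha$ (Fact~\ref{fact:gamma2alpha-Qstar}) does not help directly because $\gamma_2^\alpha(f)$ is a worst-case quantity and currently sits in Class 3 rather than Class 5, so it would need a ``product-distribution factorization norm'' variant that, to our knowledge, has not been developed. The main obstacle is therefore either to prove such a restricted lower bound or to exhibit a Boolean function $f$ with $Q^{*,[]}_\epsilon(f) = \omega(1)$ while every classical Class 5 measure remains $O(1)$—which would separate $Q^{*,[]}$ from Class 5 and force it into its own intermediate sub-class. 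Either resolution would be an interesting result and would clarify whether entanglement-assisted quantum protocols under product inputs add genuinely new power over classical ones at constant cost.
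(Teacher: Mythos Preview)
The paper does not contain a proof of this statement: it is explicitly posed as an open problem (mislabeled in the source as a \texttt{conjecture} environment, but phrased as a question), immediately followed by the genuine conjecture that $Q^{*,[]}_\epsilon(f)$, $QIC^{[]}_\epsilon(f)$ and $AQCC^{[]}_\epsilon(f)$ belong to Class~5. There is therefore no paper-side proof to compare against; the authors simply do not know the answer.

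Your proposal is not a proof but a research plan, and you are honest about this. The upper-side chain $Q^{*,[]}_\epsilon(f)\le R^{[],\operatorname{pub}}_\epsilon(f)=D^{[]}_\epsilon(f)$ is fine, and the link $Q^{*,[]}_\epsilon(f)\ge QIC^{[]}_\epsilon(f)=AQCC^{[]}_\epsilon(f)$ is also standard. The real content is your attempt to close the cycle via a product-distribution discrepancy bound $Q^{*,[]}_\epsilon(f)=\Omega(\log(1/\operatorname{disc}^{[]}(f)))$, and here you correctly flag the obstruction: the known proofs of $Q^*_\epsilon(f)=\Omega(\log(1/\operatorname{disc}(f)))$ go through $\gamma_2^\infty$ or through a duality argument that selects an adversarial distribution which need not be product. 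Restricting that argument to product $\mu$ is exactly the missing step, and no one currently knows how to do it (or whether it is even true). Your alternative suggestion of a ``product $\gamma_2^\alpha$'' is natural but, as you note, undeveloped. In short, you have reconstructed precisely the state of the question as the paper leaves it; resolving it either way would be new.
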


\begin{conjecture}
$Q^{*, []}_\epsilon(f), QIC^{[]}_\epsilon(f)$ and $AQCC^{[]}_\epsilon(f)$ belong to Class 5.
\end{conjecture}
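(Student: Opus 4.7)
The goal is to show that $Q^{*, []}_\epsilon(f)$, $QIC^{[]}_\epsilon(f)$, and $AQCC^{[]}_\epsilon(f)$ all become $O(1)$ on exactly the same families of total Boolean functions as the measures already placed in Class 5. The plan is to split this into two tasks: establishing the internal equivalences among the three quantum measures, and then bridging the resulting cluster to a Class 5 anchor such as $AC^{[]}_\epsilon(f)$, $D^{[]}_\epsilon(f)$, or $\operatorname{disc}^{[]}(f)$.

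For the internal equivalences, Fact~\ref{fact:AQCC-QCC_prod} (Touchette's identity under product inputs) immediately yields $AQCC^{[]}_\epsilon(f) = QIC^{[]}_\epsilon(f)$ exactly, so these two switch constancy together. Moreover, $QIC^{[]}_\epsilon(f) \leq Q^{*, []}_\epsilon(f)$ is free because quantum information cost is bounded by the number of qubits communicated. For the reverse direction, the plan is to import a product-distribution analogue of Fact~\ref{fact:QCC-QIC}, which would give $Q^{*, []}_\epsilon(f) \leq 2^{O(QIC^{[]}_\epsilon(f))}$ and hence preserve constancy in both directions.

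For the bridge to Class 5, the upper half is routine: $Q^{*, []}_\epsilon(f) \leq D^{[]}_\epsilon(f)$ by direct simulation and $AQCC^{[]}_\epsilon(f) \leq AC^{[]}_\epsilon(f)$ by Fact~\ref{fact:AQCC-AC_prod}, so constancy of any Class 5 measure forces constancy of the entire cluster. The lower half is precisely the open problem flagged just before the conjecture; the natural attempt is to chain $Q^{*, []}_\epsilon(f) \geq QIC^{[]}_\epsilon(f)$ with a hoped-for product-distribution quantum analogue of Fact~\ref{fact:disc_IC_prod} of the form $QIC^{[]}_\epsilon(f) \geq \Omega(\log(1/\operatorname{disc}^{[]}(f)))$, or alternatively with a product-distribution variant of Fact~\ref{fact:QICD-sdisc} extended to non-distributional information complexity via Fact~\ref{fact:QIC-QICD-alpha}.

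The hard part will be this quantum-to-classical discrepancy bridge. The classical Braverman--Weinstein bound that underlies Fact~\ref{fact:disc_IC_prod} leans essentially on the rectangular transcript structure of classical protocols, which is destroyed once the parties exchange quantum messages whose measurements correlate the two views non-classically. Progress will likely require either a compression scheme that converts any quantum protocol under product inputs into a classical protocol with only constant overhead (thereby inheriting the classical Class 5 lower bounds wholesale), or a genuinely quantum proof that $QIC^{[]}_\epsilon$ is at least logarithmic in $1/\operatorname{disc}^{[]}$, exploiting the rigidity of product inputs to limit how usefully shared entanglement can distribute information about $(X,Y)$ across the two registers.
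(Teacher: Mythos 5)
The statement you are addressing is a \emph{conjecture} in the paper, not a theorem: the authors explicitly pose the missing ingredient as an open problem (which of the Class 5 measures lower bounds $Q^{*, []}_\epsilon(f)$ without dependence on the input size, echoing the open question of Jain and Zhang) and offer no proof. Your proposal does not close this gap; it reproduces it. The pieces you assemble are indeed the known one-sided relations --- $AQCC^{[]}_\epsilon(f) = QIC^{[]}_\epsilon(f)$ by Touchette's identity, $QIC^{[]}_\epsilon(f) \leq Q^{*,[]}_\epsilon(f)$, $AQCC^{[]}_\epsilon(f) \leq AC^{[]}_\epsilon(f)$, and $Q^{*,[]}_\epsilon(f) \leq D^{[]}_\epsilon(f)$ --- and these show only that constancy of the Class 5 measures propagates \emph{to} the quantum cluster. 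Membership in Class 5 requires the converse as well: constancy of $Q^{*,[]}_\epsilon$, $QIC^{[]}_\epsilon$, or $AQCC^{[]}_\epsilon$ must force constancy of some measure already in Class 5. You correctly identify this as ``the open problem flagged just before the conjecture,'' but then only sketch hoped-for lemmas (a product-distribution quantum analogue of the Braverman--Weinstein bound, or a compression of product-distribution quantum protocols) without proving any of them. A plan that names the missing lemma is not a proof of the conjecture.

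A second unproven step is your internal equivalence $Q^{*,[]}_\epsilon(f) \leq 2^{O(QIC^{[]}_\epsilon(f))}$, which you propose to ``import'' as a product-distribution analogue of the compression result behind Fact~\ref{fact:QCC-QIC}. That fact is stated for the worst-case (non-distributional) quantum information complexity; no distributional version under a fixed product distribution is established in the paper or cited, and it is not obviously true, since distributional compression of quantum protocols is itself a difficult open question. So even the claim that the three quantum measures are constant-equivalent \emph{to each other} is not fully justified by your argument: you have $QIC^{[]}_\epsilon = AQCC^{[]}_\epsilon \leq Q^{*,[]}_\epsilon$ unconditionally, but the reverse containment rests on an unavailable lemma. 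In short, both the bridge to Class 5 and part of the internal clustering remain conjectural, which is exactly the status the paper assigns to this statement.
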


\section{Acknowledgments}
DL specially thanks Alexander Belov, Srijita Kundu, Jo\~{a}o F. Doriguello, Dmitry Gavinsky and Uma Girish for helpful discussions.  
AA and DL acknowledge funding from the Latvian Quantum Initiative under EU Recovery and Resilience Facility under project no.\ 2.3.1.1.i.0/1/22/I/CFLA/001. DL and HK thanks funding from the Singapore Ministry of Education (partly through the Academic Research
Fund Tier 3 MOE2012-T3-1-009) and by the Singapore National Research Foundation. DL and HK is also  supported by Majulab UMI 3654.





\bibliographystyle{elsarticle-num-names}
\bibliography{references}{}

@inproceedings{JK21,
    title = {A direct product theorem for quantum communication complexity with applications to device-independent {QKD}},
    year = {2022},
    booktitle = {2021 IEEE 62nd Annual Symposium on Foundations of Computer Science (FOCS)},
    author = {Jain, Rahul and Kundu, Srijita},
    pages = {1285--1295},
    doi = {10.1109/FOCS52979.2021.00125}
}

@book{KNisan96,
    title = {{Communication Complexity}},
    year = {1996},
    author = {Kushilevitz, Eyal and Nisan, Noam},
    publisher = {Cambridge University Press}
}

@inproceedings{buhrman1998quantum,
  title={Quantum vs. classical communication and computation},
  author={Buhrman, Harry and Cleve, Richard and Wigderson, Avi},
  booktitle={Proceedings of the thirtieth annual ACM symposium on Theory of computing},
  pages={63--68},
  year={1998}
}

@article{Buhrman2001,
    title = {{Communication complexity lower bounds by polynomials}},
    year = {2001},
    journal = {Proceedings of the Annual IEEE Conference on Computational Complexity},
    author = {Buhrman, Harry and de Wolf, Ronald},
    pages = {120--130},
    doi = {10.1109/ccc.2001.933879},
    issn = {10930159},
    arxivId = {cs/9910010}
}

@article{beasley2009real,
  title={Real rank versus nonnegative rank},
  author={Beasley, LeRoy B and Laffey, Thomas J},
  journal={Linear Algebra and its applications},
  volume={431},
  number={12},
  pages={2330--2335},
  year={2009},
  publisher={Elsevier}
}

@article{Sherstov2010,
    title = {{Communication complexity under product and nonproduct distributions}},
    year = {2010},
    journal = {Computational Complexity},
    author = {Sherstov, Alexander A.},
    number = {1},
    pages = {135--150},
    volume = {19},
    doi = {10.1007/s00037-009-0285-1},
    issn = {10163328},
    keywords = {Product and nonproduct distributions, Randomized and distributional communication comple, Yao's minimax principle}
}

@article{Shachar2016,
    title = {{Communication is bounded by root of rank}},
    year = {2016},
    journal = {Journal of the ACM},
    author = {Shachar, Lovett},
    number = {1},
    pages = {1--9},
    volume = {63},
    doi = {10.1145/2724704},
    issn = {1557735X},
    arxivId = {1306.1877},
    keywords = {F.1.2 [Computation by abstract devices], Log-rank conjecture, Modes of Computation - Interactive and reactive co, Theory}
}

@article{klauck2007one,
  title={One-way communication complexity and the Ne{\v{c}}iporuk lower bound on formula size},
  author={Klauck, Hartmut},
  journal={SIAM Journal on Computing},
  volume={37},
  number={2},
  pages={552--583},
  year={2007},
  publisher={SIAM}
}

@inproceedings{laplante2012classical,
  title={Classical and quantum partition bound and detector inefficiency},
  author={Laplante, Sophie and Lerays, Virginie and Roland, J{\'e}r{\'e}mie},
  booktitle={International Colloquium on Automata, Languages, and Programming},
  pages={617--628},
  year={2012},
  organization={Springer}
}

@article{hromkovivc2001power,
  title={On the power of Las Vegas for one-way communication complexity, OBDDs, and finite automata},
  author={Hromkovi{\v{c}}, Juraj and Schnitger, Georg},
  journal={Information and Computation},
  volume={169},
  number={2},
  pages={284--296},
  year={2001},
  publisher={Elsevier}
}

@inproceedings{anshu2017exponential,
  title={Exponential separation of quantum communication and classical information},
  author={Anshu, Anurag and Touchette, Dave and Yao, Penghui and Yu, Nengkun},
  booktitle={Proceedings of the 49th Annual ACM SIGACT Symposium on Theory of Computing},
  pages={277--288},
  year={2017}
}

@article{cleve1997substituting,
  title={Substituting quantum entanglement for communication},
  author={Cleve, Richard and Buhrman, Harry},
  journal={Physical Review A},
  volume={56},
  number={2},
  pages={1201},
  year={1997},
  publisher={APS}
}

@article{braverman2018near,
  title={Near-optimal bounds on the bounded-round quantum communication complexity of disjointness},
  author={Braverman, Mark and Garg, Ankit and Ko, Young Kun and Mao, Jieming and Touchette, Dave},
  journal={SIAM Journal on Computing},
  volume={47},
  number={6},
  pages={2277--2314},
  year={2018},
  publisher={SIAM}
}

@article{yard2009optimal,
  title={Optimal quantum source coding with quantum side information at the encoder and decoder},
  author={Yard, Jon T and Devetak, Igor},
  journal={IEEE Transactions on Information Theory},
  volume={55},
  number={11},
  pages={5339--5351},
  year={2009},
  publisher={IEEE}
}

@article{devetak2008exact,
  title={Exact cost of redistributing multipartite quantum states},
  author={Devetak, Igor and Yard, Jon},
  journal={Physical Review Letters},
  volume={100},
  number={23},
  pages={230501},
  year={2008},
  publisher={APS}
}

@inproceedings{touchette2015quantum,
  title={Quantum information complexity},
  author={Touchette, Dave},
  booktitle={Proceedings of the forty-seventh annual ACM symposium on Theory of computing},
  pages={317--326},
  year={2015}
}

@inproceedings{mande2021one,
  title={One-way communication complexity and non-adaptive decision trees},
  author={Mande, Nikhil S and Sanyal, Swagato and Sherif, Suhail},
  booktitle =	{39th International Symposium on Theoretical Aspects of Computer Science (STACS 2022)},
  pages =	{49:1--49:24},
  series =	{Leibniz International Proceedings in Informatics (LIPIcs)},
  year =	{2022},
  volume =	{219},
  publisher =	{Schloss Dagstuhl -- Leibniz-Zentrum f{\"u}r Informatik}
}

@inproceedings{klauck2000quantum,
  title={On quantum and probabilistic communication: Las Vegas and one-way protocols},
  author={Klauck, Hartmut},
  booktitle={Proceedings of the thirty-second annual ACM symposium on Theory of computing},
  pages={644--651},
  year={2000}
}

@article{goos2016communication,
  title={Communication complexity of set-disjointness for all probabilities},
  author={G{\"o}{\"o}s, Mika and Watson, Thomas},
  journal={Theory of Computing},
  volume={12},
  number={1},
  pages={1--23},
  year={2016},
  publisher={Theory of Computing Exchange}
}

@inproceedings{yao1993quantum,
  title={Quantum circuit complexity},
  author={Yao, Andrew Chi-Chih},
  booktitle={Proceedings of 1993 IEEE 34th Annual Foundations of Computer Science},
  pages={352--361},
  year={1993},
  organization={IEEE}
}

@inproceedings{klauck2001lower,
  title={Lower bounds for quantum communication complexity},
  author={Klauck, Hartmut},
  booktitle={Proceedings 42nd IEEE Symposium on Foundations of Computer Science},
  pages={288--297},
  year={2001},
  organization={IEEE}
}

@inproceedings{edmonds2020power,
  title={The power of factorization mechanisms in local and central differential privacy},
  author={Edmonds, Alexander and Nikolov, Aleksandar and Ullman, Jonathan},
  booktitle={Proceedings of the 52nd Annual ACM SIGACT Symposium on Theory of Computing},
  pages={425--438},
  year={2020}
}

@article{henzinger2022constant,
  title={Constant matters: Fine-grained complexity of differentially private continual observation using completely bounded norms},
  author={Henzinger, Monika and Upadhyay, Jalaj},
  journal={Cryptology ePrint Archive},
  year={2022}
}

@inproceedings{muthukrishnan2012optimal,
  title={Optimal private halfspace counting via discrepancy},
  author={Muthukrishnan, Shanmugavelayutham and Nikolov, Aleksandar},
  booktitle={Proceedings of the forty-fourth annual ACM symposium on Theory of computing},
  pages={1285--1292},
  year={2012}
}

@article{matouvsek2020factorization,
  title={Factorization norms and hereditary discrepancy},
  author={Matou{\v{s}}ek, Ji{\v{r}}{\'\i} and Nikolov, Aleksandar and Talwar, Kunal},
  journal={International Mathematics Research Notices},
  volume={2020},
  number={3},
  pages={751--780},
  year={2020},
  publisher={Oxford University Press}
}

@article{Jain2008,
    title = {{Direct Product Theorems for Classical Communication Complexity via Subdistribution Bounds}},
    year = {2008},
    journal = {Proceedings of the Annual ACM Symposium on Theory of Computing},
    author = {Jain, Rahul and Klauck, Hartmut and Nayak, Ashwin},
    pages = {599--608},
    isbn = {9781605580470},
    doi = {10.1145/1374376.1374462},
    issn = {07378017},
    keywords = {Communication complexity, Direct product, Information theory, Rectangle bounds, Subdistribution bounds}
}

@article{Krause1996,
    title = {{Geometric arguments yield better bounds for threshold circuits and distributed computing}},
    year = {1996},
    journal = {Theoretical Computer Science},
    author = {Krause, Matthias},
    number = {1-2},
    pages = {99--117},
    volume = {156},
    doi = {10.1016/0304-3975(95)00005-4},
    issn = {03043975}
}

@article{Sherstov2008,
    title = {{Halfspace matrices}},
    year = {2008},
    journal = {Computational Complexity},
    author = {Sherstov, Alexander A.},
    number = {2},
    pages = {149--178},
    volume = {17},
    doi = {10.1007/s00037-008-0242-4},
    issn = {10163328},
    keywords = {Communication complexity, Complexity measures of sign matrices, Complexity of learning, Linear threshold functions}
}

@inproceedings{BarakBCR10,
    title = {{How to compress interactive communication}},
    year = {2010},
    booktitle = {STOC '10: Proceedings of the 42nd ACM Symposium on Theory of Computing},
    author = {Barak, Boaz and Braverman, Mark and Chen, Xi and Rao, Anup},
    pages = {67--76},
    publisher = {ACM},
    url = {http://doi.acm.org/10.1145/1806689.1806701},
    address = {New York, USA},
    isbn = {978-1-4503-0050-6},
    doi = {http://doi.acm.org/10.1145/1806689.1806701},
    keywords = {communication complexity, compression, direct sum, information theory}
}

@article{bennett1993teleporting,
  title={Teleporting an unknown quantum state via dual classical and Einstein-Podolsky-Rosen channels},
  author={Bennett, Charles H and Brassard, Gilles and Cr{\'e}peau, Claude and Jozsa, Richard and Peres, Asher and Wootters, William K},
  journal={Physical review letters},
  volume={70},
  number={13},
  pages={1895},
  year={1993},
  publisher={APS}
}

@article{jain2010strong,
  title={Strong direct product conjecture holds for all relations in public coin randomized one-way communication complexity},
  author={Jain, Rahul},
  journal={arXiv preprint arXiv:1010.0522},
  year={2010}
}

@article{braverman2016discrepancy,
  title={A discrepancy lower bound for information complexity},
  author={Braverman, Mark and Weinstein, Omri},
  journal={Algorithmica},
  volume={76},
  pages={846--864},
  year={2016},
  publisher={Springer}
}

@inproceedings{Braverman12,
    title = {{Interactive information complexity}},
    year = {2012},
    booktitle = {Proceedings of the 44th annual ACM Symposium on Theory of Computing},
    author = {Braverman, Mark},
    pages = {505--524},
    series = {STOC '12},
    publisher = {ACM},
    address = {New York, NY, USA},
    isbn = {978-1-4503-1245-5}
}

@article{Deaett2012,
    title = {{Linear algebraic methods in communication complexity}},
    year = {2012},
    journal = {Linear Algebra and Its Applications},
    author = {Deaett, Louis and Srinivasan, Venkatesh},
    number = {12},
    pages = {4459--4472},
    volume = {436},
    publisher = {Elsevier Inc.},
    url = {http://dx.doi.org/10.1016/j.laa.2011.07.008},
    doi = {10.1016/j.laa.2011.07.008},
    issn = {00243795},
    keywords = {Communication complexity, Matrix rank, Minimum rank, Randomization, Sign rank}
}

@article{Klauck2007,
    title = {{Lower bounds for quantum communication complexity}},
    year = {2007},
    journal = {SIAM Journal on Computing},
    author = {Klauck, Hartmut},
    number = {1},
    pages = {20--46},
    volume = {37},
    doi = {10.1137/S0097539702405620},
    issn = {00975397},
    arxivId = {quant-ph/0106160},
    keywords = {Communication complexity, Computational complexity, Lower bounds, Quantum computing}
}

@article{Lee2007a,
    title = {{Lower bounds in communication complexity}},
    year = {2007},
    journal = {Foundations and Trends in Theoretical Computer Science},
    author = {Lee, Troy and Shraibman, Adi},
    number = {4},
    pages = {263--399},
    volume = {3},
    doi = {10.1561/0400000040},
    issn = {1551305X},
    keywords = {banach spaces, communication complexity, factorization norms, lower bounds}
}

@inproceedings{Braverman15,
    title = {{Near-Optimal Bounds on Bounded-Round Quantum Communication Complexity of Disjointness}},
    year = {2015},
    booktitle = {Foundations of Computer Science (FOCS), 2015 IEEE 56th Annual Symposium on},
    author = {Braverman, Mark and Garg, Ankit and Ko, Young Kun and Mao, Jieming and Touchette, Dave},
    month = {10},
    pages = {773--791},
    doi = {10.1109/FOCS.2015.53},
    issn = {0272-5428}
}

@article{Jain2009,
    title = {{New bounds on classical and quantum one-way communication complexity}},
    year = {2009},
    journal = {Theoretical Computer Science},
    author = {Jain, Rahul and Zhang, Shengyu},
    number = {26},
    pages = {2463--2477},
    volume = {410},
    publisher = {Elsevier B.V.},
    url = {http://dx.doi.org/10.1016/j.tcs.2008.10.014},
    doi = {10.1016/j.tcs.2008.10.014},
    issn = {03043975},
    keywords = {Communication complexity, Information theory, Quantum, Rectangle bound}
}

@inproceedings{yao1983lower,
  title={Lower bounds by probabilistic arguments},
  author={Yao, Andrew Chi-Chih},
  booktitle={24th Annual Symposium on Foundations of Computer Science (FOCS 1983)},
  pages={420--428},
  year={1983},
  organization={IEEE}
}

@article{Kremer1999,
    title = {{On randomized one-round communication complexity}},
    year = {1999},
    journal = {Computational Complexity},
    author = {Kremer, Ilan and Nisan, Noam and Ron, Dana},
    number = {1},
    pages = {21--49},
    volume = {8},
    doi = {10.1007/s000370050018},
    issn = {10163328},
    keywords = {Communication complexity, One-round and simultaneous protocols, VC-dimension}
}

@article{Paturi1984,
    title = {{Probabilistic communication complexity}},
    year = {1984},
    journal = {Proceedings - Annual IEEE Symposium on Foundations of Computer Science, FOCS},
    author = {Paturi, Ramamohan and Simon, Janos},
    pages = {118--126},
    volume = {1984-Octob},
    isbn = {081860591X},
    doi = {10.1109/sfcs.1984.715908},
    issn = {02725428}
}

@inproceedings{Yao77,
    title = {{Probabilistic computations: Toward a unified measure of complexity}},
    year = {1977},
    booktitle = {Proceedings of the 18th IEEE Symposium on Foundations of Computer Science (FOCS 1977)},
    author = {Yao, Andrew Chi Chih},
    pages = {222--227},
    doi = {10.1109/SFCS.1977.24}
}

@article{hatami2024structure,
  title={Structure in Communication Complexity and Constant-Cost Complexity Classes},
  author = {Hatami, Hamed and Hatami, Pooya}, 
  year = {2024},  
   publisher = {Association for Computing Machinery}, address = {New York, NY, USA}, volume = {55}, number = {1}, 
  journal = {SIGACT News},
  pages = {67–93}
}

@phdthesis{Kre95,
    title = {{Quantum communication}},
    year = {1995},
    author = {Kremer, Ilan},
    url = {www.cs.huji.ac.il/~noam/kremer-thesis.ps},
    school = {Hebrew University of Jerusalem}
}

@inproceedings{Klauck2003,
    title = {{Rectangle size bounds and threshold covers in communication complexity}},
    year = {2003},
    booktitle = {18th IEEE Annual Conference on Computational Complexity, 2003. Proceedings.},
    author = {Klauck, Hartmut},
    pages = {118--134},
    publisher = {IEEE},
    isbn = {0769518796}
}

@article{Anshu2017,
    title = {{Separating quantum communication and approximate rank}},
    year = {2017},
    journal = {Leibniz International Proceedings in Informatics, LIPIcs},
    author = {Anshu, Anurag and Ben-David, Shalev and Garg, Ankit and Jain, Rahul and Kothari, Robin and Lee, Troy},
    volume = {79},
    isbn = {9783959770408},
    doi = {10.4230/LIPIcs.CCC.2017.24},
    issn = {18688969},
    arxivId = {1611.05754},
    keywords = {Communication Complexity, Logrank, Lower Bounds, Quantum Computing, Quantum Information}
}

@article{jain2014quadratically,
  title={A quadratically tight partition bound for classical communication complexity and query complexity},
  author={Jain, Rahul and Lee, Troy and Vishnoi, Nisheeth K},
  journal={arXiv preprint arXiv:1401.4512},
  year={2014}
}

@phdthesis{nolin2020communication,
  title={Communication complexity: large output functions, partition bounds, and quantum nonlocality},
  author={Nolin, Alexandre},
  year={2020},
  school={Universit{\'e} Paris Cit{\'e}}
}

@inproceedings{prabhakaran2015r,
 author       = {Manoj M. Prabhakaran and
                  Vinod M. Prabhakaran},
  title        = {R\'{e}nyi Information Complexity and an Information Theoretic
                  Characterization of the Partition Bound},
  booktitle    = {43rd International Colloquium on Automata, Languages, and Programming,
                  {ICALP}},
   publisher    = {Schloss Dagstuhl - Leibniz-Zentrum f{\"{u}}r Informatik},
  series       = {LIPIcs},
  volume       = {55},
  pages        = {88:1--88:14},
  year         = {2016}
}

@article{ganor2016exponential,
  title={Exponential separation of information and communication for boolean functions},
  author={Ganor, Anat and Kol, Gillat and Raz, Ran},
  journal={Journal of the ACM (JACM)},
  volume={63},
  number={5},
  pages={1--31},
  year={2016},
  publisher={ACM New York, NY, USA}
}

@article{bar2004information,
  title={An information statistics approach to data stream and communication complexity},
  author={Bar-Yossef, Ziv and Jayram, Thathachar S and Kumar, Ravi and Sivakumar, D},
  journal={Journal of Computer and System Sciences},
  volume={68},
  number={4},
  pages={702--732},
  year={2004},
  publisher={Elsevier}
}

@inproceedings{chattopadhyay2017lower,
  title={Lower bounds for elimination via weak regularity},
  author={Chattopadhyay, Arkadev and Dvo{\v{r}}{\'a}k, Pavel and Kouck{\`y}, Michal and Loff, Bruno and Mukhopadhyay, Sagnik},
  booktitle={Leibniz International Proceedings in Informatics, LIPIcs},
  volume={66},
  pages={21},
  year={2017},
  organization={Schloss Dagstuhl--Leibniz-Zentrum fuer Informatik}
}

@article{fontes2016relative,
  title={Relative discrepancy does not separate information and communication complexity},
  author={Fontes, Lila and Jain, Rahul and Kerenidis, Iordanis and Laplante, Sophie and Lauri{\`e}re, Mathieu and Roland, J{\'e}r{\'e}mie},
  journal={ACM Transactions on Computation Theory (TOCT)},
  volume={9},
  number={1},
  pages={1--15},
  year={2016},
  publisher={ACM New York, NY, USA}
}

@article{braverman2014information,
  title={Information equals amortized communication},
  author={Braverman, Mark and Rao, Anup},
  journal={IEEE Transactions on Information Theory},
  volume={60},
  number={10},
  pages={6058--6069},
  year={2014},
  publisher={IEEE}
}

@article{linial2007complexity,
  title={Complexity measures of sign matrices},
  author={Linial, Nati and Mendelson, Shahar and Schechtman, Gideon and Shraibman, Adi},
  journal={Combinatorica},
  volume={27},
  pages={439--463},
  year={2007},
  publisher={Springer}
}

@inproceedings{hatami2020sign,
  title={Sign rank vs discrepancy},
  author={Hatami, Hamed and Hosseini, Kaave and Lovett, Shachar},
  booktitle={35th Computational Complexity Conference (CCC 2020)},
  pages =	{18:1--18:14},
  series =	{Leibniz International Proceedings in Informatics (LIPIcs)},
  year =	{2020},
  volume =	{169},
  publisher =	{Schloss Dagstuhl -- Leibniz-Zentrum f{\"u}r Informatik},
}

@article{linial2009learning,
  title={Learning complexity vs communication complexity},
  author={Linial, Nati and Shraibman, Adi},
  journal={Combinatorics, Probability and Computing},
  volume={18},
  number={1-2},
  pages={227--245},
  year={2009},
  publisher={Cambridge University Press}
}

@inproceedings{cheung2023separation,
  title={Separation of the factorization norm and randomized communication complexity},
  author={Cheung, Tsun-Ming and Hatami, Hamed and Hosseini, Kaave and Shirley, Morgan},
  booktitle={38th Computational Complexity Conference (CCC 2023)},
  year={2023},
  pages =	{1:1--1:16},
  series =	{Leibniz International Proceedings in Informatics (LIPIcs)},
  volume =	{264},
  publisher =	{Schloss Dagstuhl -- Leibniz-Zentrum f{\"u}r Informatik},
}

@article{hambardzumyan2023dimension,
  title={Dimension-free bounds and structural results in communication complexity},
  author={Hambardzumyan, Lianna and Hatami, Hamed and Hatami, Pooya},
  journal={Israel Journal of Mathematics},
  volume={253},
  number={2},
  pages={555--616},
  year={2023},
  publisher={Springer}
}

@inproceedings{harms2022randomized,
  title={Randomized communication and implicit graph representations},
  author={Harms, Nathaniel and Wild, Sebastian and Zamaraev, Viktor},
  booktitle={Proceedings of the 54th Annual ACM SIGACT Symposium on Theory of Computing},
  pages={1220--1233},
  year={2022}
}

@inproceedings{ahmed2023communication,
  title={Communication complexity of half-plane membership},
  author={Ahmed, Manasseh and Cheung, Tsun-Ming and Hatami, Hamed and Sareen, Kusha},
  booktitle={Electron. Colloquium Comput. Complex},
  year={2023}, 
  pages = {}
}

@inproceedings{hatami2022lower,
   author =	{Hatami, Hamed and Hatami, Pooya and Pires, William and Tao, Ran and Zhao, Rosie},
  title =	{{Lower Bound Methods for Sign-Rank and Their Limitations}},
  booktitle =	{Approximation, Randomization, and Combinatorial Optimization. Algorithms and Techniques (APPROX/RANDOM 2022)},
  pages =	{22:1--22:24},
  series =	{Leibniz International Proceedings in Informatics (LIPIcs)},
  year =	{2022},
  volume =	{245},
  publisher =	{Schloss Dagstuhl -- Leibniz-Zentrum f{\"u}r Informatik},
}

@inproceedings{esperet2022sketching,
  author =	{Esperet, Louis and Harms, Nathaniel and Kupavskii, Andrey},
  title =	{{Sketching Distances in Monotone Graph Classes}},
  booktitle =	{Approximation, Randomization, and Combinatorial Optimization. Algorithms and Techniques (APPROX/RANDOM 2022)},
  pages =	{18:1--18:23},
  series =	{Leibniz International Proceedings in Informatics (LIPIcs)},
  year =	{2022},
  volume =	{245},
  publisher =	{Schloss Dagstuhl -- Leibniz-Zentrum f{\"u}r Informatik},
}

@inproceedings{harms2019universal,
  title={Universal communication, universal graphs, and graph labeling},
  author={Harms, Nathaniel},
  booktitle =	{11th Innovations in Theoretical Computer Science Conference (ITCS 2020)},
  pages =	{33:1--33:27},
  series =	{Leibniz International Proceedings in Informatics (LIPIcs)},
  year =	{2020},
  volume =	{151},
  publisher =	{Schloss Dagstuhl -- Leibniz-Zentrum f{\"u}r Informatik},
}

@inproceedings{bhrushundi2014property,
  title={Property testing bounds for linear and quadratic functions via parity decision trees},
  author={Bhrushundi, Abhishek and Chakraborty, Sourav and Kulkarni, Raghav},
  booktitle={Computer Science-Theory and Applications: 9th International Computer Science Symposium in Russia, CSR 2014, Moscow, Russia, June 7-11, 2014. Proceedings 9},
  pages={97--110},
  year={2014},
  organization={Springer}
}

@inproceedings{fang2024no,
  title={No complete problem for constant-cost randomized communication},
  author={Fang, Yuting and Hambardzumyan, Lianna and Harms, Nathaniel and Hatami, Pooya},
  booktitle={Proceedings of the 56th Annual ACM Symposium on Theory of Computing},
  pages={1287--1298},
  year={2024}
}

@inproceedings{harms2024randomized,
  title={Randomized communication and implicit representations for matrices and graphs of small sign-rank},
  author={Harms, Nathaniel and Zamaraev, Viktor},
  booktitle={Proceedings of the 2024 Annual ACM-SIAM Symposium on Discrete Algorithms (SODA)},
  pages={1810--1833},
  year={2024},
  organization={SIAM}
}

@inproceedings{babai1986complexity,
  title={Complexity classes in communication complexity theory},
  author={Babai, L{\'a}szl{\'o} and Frankl, Peter and Simon, Janos},
  booktitle={27th Annual Symposium on Foundations of Computer Science (FOCS 1986)},
  pages={337--347},
  year={1986},
  organization={IEEE}
}

@article{landscape,
    title = {{The landscape of communication complexity classes}},
    year = {2018},
    journal = {Computational Complexity},
    author = {G{\"{o}}{\"{o}}s, Mika and Pitassi, Toniann and Watson, Thomas},
    number = {2},
    pages = {245--304},
    volume = {27},
    publisher = {Springer},
    issn = {1420-8954}
}

@inproceedings{JK10,
    title = {{The Partition Bound for Classical Communication Complexity and Query Complexity}},
    year = {2010},
    booktitle = {Proceedings of the 2010 IEEE 25th Annual Conference on Computational Complexity},
    author = {Jain, Rahul and Klauck, Hartmut},
    pages = {247--258},
    series = {CCC '10},
    isbn = {978-0-7695-4060-3},
    doi = {10.1109/CCC.2010.31},
    keywords = {Communication Complexity, Linear Programming, Lower Bounds, Partition Bound, Query Complexity}
}

@article{Ambainis2003,
    title = {{The quantum communication complexity of sampling}},
    year = {2003},
    journal = {SIAM Journal on Computing},
    author = {Ambainis, Andris and Schulman, Leonard J. and Ta-Shma, Amnon and Vazirani, Umesh and Wigderson, Avi},
    number = {6},
    pages = {1570--1585},
    volume = {32},
    doi = {10.1137/S009753979935476},
    issn = {00975397},
    keywords = {Communication complexity, Quantum communication complexity, Quantum information theory, Set-disjointness, The log-rank conjecture in communication complexit}
}

@inproceedings{halldorsson2012streaming,
  title={Streaming and communication complexity of clique approximation},
  author={Halld{\'o}rsson, Magn{\'u}s M and Sun, Xiaoming and Szegedy, Mario and Wang, Chengu},
  booktitle={International Colloquium on Automata, Languages, and Programming},
  pages={449--460},
  year={2012},
  organization={Springer}
}

@article{Gal2021,
    title = {{Upper Bounds on Communication in Terms of Approximate Rank}},
    year = {2021},
    journal = {Lecture Notes in Computer Science (including subseries Lecture Notes in Artificial Intelligence and Lecture Notes in Bioinformatics)},
    author = {G{\'{a}}l, Anna and Syed, Ridwan},
    number = {6},
    pages = {116--130},
    volume = {12730 LNCS},
    isbn = {9783030794156},
    doi = {10.1007/978-3-030-79416-3{\_}7},
    issn = {16113349},
    keywords = {Approximate rank, Communication complexity, Quantum communication}
}

@incollection{orlitsky1988communication,
  title={Communication complexity},
  author={Alon Orlitsky and Aly El Gamal},
  booktitle={Complexity in information theory},
  pages={16--61},
  year={1988},
  publisher={Springer}
}

@article{ablayev1996lower,
  title={Lower bounds for one-way probabilistic communication complexity and their application to space complexity},
  author={Ablayev, Farid},
  journal={Theoretical Computer Science},
  volume={157},
  number={2},
  pages={139--159},
  year={1996},
  publisher={Elsevier}
}

@article{babai1999superpolynomial,
  title={Superpolynomial lower bounds for monotone span programs},
  author={Babai, L{\'a}szl{\'o} and G{\'a}l, Anna and Wigderson, Avi},
  journal={Combinatorica},
  volume={19},
  number={3},
  pages={301--319},
  year={1999},
  publisher={Springer}
}

@inproceedings{yao1979some,
  title={Some complexity questions related to distributive computing (preliminary report)},
  author={Yao, Andrew Chi-Chih},
  booktitle={Proceedings of the eleventh annual ACM symposium on Theory of computing},
  pages={209--213},
  year={1979}
}

@article{alon1988meanders,
  title={Meanders and their applications in lower bounds arguments},
  author={Alon, Noga and Maasst, Wolfgang},
  journal={Journal of Computer and System Sciences},
  volume={37},
  number={2},
  pages={118--129},
  year={1988},
  publisher={Elsevier}
}

@article{chomsky1956three,
  title={Three models for the description of language},
  author={Chomsky, Noam},
  journal={IRE Transactions on information theory},
  volume={2},
  number={3},
  pages={113--124},
  year={1956},
  publisher={IEEE}
}

@inproceedings{miltersen1994lower,
  title={Lower bounds for union-split-find related problems on random access machines},
  author={Miltersen, Peter Bro},
  booktitle={Proceedings of the twenty-sixth annual ACM symposium on Theory of computing},
  pages={625--634},
  year={1994}
}

@article{razborov1990applications,
  title={Applications of matrix methods to the theory of lower bounds in computational complexity},
  author={Razborov, Alexander A.},
  journal={Combinatorica},
  volume={10},
  number={1},
  pages={81--93},
  year={1990},
  publisher={Springer}
}

@inproceedings{chandra1983multi,
  title={Multi-party protocols},
  author={Chandra, Ashok K and Furst, Merrick L and Lipton, Richard J},
  booktitle={Proceedings of the fifteenth annual ACM symposium on Theory of computing},
  pages={94--99},
  year={1983}
}

@book{siu1995discrete,
  title={Discrete neural computation: A theoretical foundation},
  author={Siu, Kai-Yeung and Roychowdhury, Vwani and Kailath, Thomas},
  year={1995},
  publisher={Prentice-Hall, Inc.}
}

@inproceedings{groger1991linear,
  title={On linear decision trees computing Boolean functions},
  author={Gr{\"o}ger, Hans Dietmar and Tur{\'a}n, Gy{\"o}rgy},
  booktitle={International Colloquium on Automata, Languages, and Programming},
  pages={707--718},
  year={1991},
  organization={Springer}
}

@incollection{goldmann1994communication,
  title={Communication complexity and lower bounds for threshold circuits},
  author={Goldmann, Mikael},
  booktitle={Theoretical advances in neural computation and learning},
  pages={85--125},
  year={1994},
  publisher={Springer}
}

@article{nisan1993communication,
  title={The communication complexity of threshold gates},
  author={Nisan, Noam},
  journal={Combinatorics, Paul Erdos is Eighty},
  volume={1},
  number={301-315},
  pages={6},
  year={1993}
}

@article{roychowdhury1994lower,
  title={Lower bounds on threshold and related circuits via communication complexity},
  author={Roychowdhury, Vwani P and Orlitsky, Alon and Siu, Kai-Yeung},
  journal={IEEE Transactions on Information Theory},
  volume={40},
  number={2},
  pages={467--474},
  year={1994},
  publisher={IEEE}
}

@article{babai1990lower,
  title={Lower bounds to the complexity of symmetric Boolean functions},
  author={Babai, L{\'a}szl{\'o} and Pudl{\'a}k, Pavel and R{\"o}dl, Vojtech and Szemer{\'e}di, Endre},
  journal={Theoretical Computer Science},
  volume={74},
  number={3},
  pages={313--323},
  year={1990},
  publisher={Elsevier}
}

@article{babai1992multiparty,
  title={Multiparty protocols, pseudorandom generators for logspace, and time-space trade-offs},
  author={Babai, L{\'a}szl{\'o} and Nisant, Noam and Szegedy, M{\'a}ri{\'o}},
  journal={Journal of Computer and System Sciences},
  volume={45},
  number={2},
  pages={204--232},
  year={1992},
  publisher={Elsevier}
}

@incollection{lengauer1990vlsi,
  title= {{VLSI} Theory},
  author={Lengauer, Thomas},
  booktitle={Algorithms and Complexity},
  pages={835--868},
  year={1990},
  publisher={Elsevier}
}

@incollection{lovasz1989communication,
  title={Communication complexity: A survey},
  author={Lov{\'a}sz, L{\'a}szl{\'o}},
  year={1990},
  booktitle = { Paths, Flows, and VLSI Layout}, 
  publisher={Springer-Verlag}, 
  pages = {235–265}
}

@inproceedings{kol2014approximate,
  title={Approximate nonnegative rank is equivalent to the smooth rectangle bound},
  author={Kol, Gillat and Moran, Shay and Shpilka, Amir and Yehudayoff, Amir},
  booktitle={Automata, Languages, and Programming: 41st International Colloquium, ICALP 2014, Copenhagen, Denmark, July 8-11, 2014, Proceedings, Part I 41},
  pages={701--712},
  year={2014},
  organization={Springer}
}

@inproceedings{braverman2012coding,
  title={Coding for interactive computation: progress and challenges},
  author={Braverman, Mark},
  booktitle={2012 50th Annual Allerton Conference on Communication, Control, and Computing (Allerton)},
  pages={1914--1921},
  year={2012},
  organization={IEEE}
}

@inproceedings{braverman2013information,
  title={From information to exact communication},
  author={Braverman, Mark and Garg, Ankit and Pankratov, Denis and Weinstein, Omri},
  booktitle={Proceedings of the forty-fifth annual ACM symposium on Theory of computing},
  pages={151--160},
  year={2013}
}

@article{sherstov2011unbounded,
  title={The unbounded-error communication complexity of symmetric functions},
  author={Sherstov, Alexander A},
  journal={Combinatorica},
  volume={31},
  number={5},
  pages={583--614},
  year={2011},
  publisher={Springer}
}

@inproceedings{alon2016sign,
  title={Sign rank versus VC dimension},
  author={Alon, Noga and Moran, Shay and Yehudayoff, Amir},
  booktitle={Conference on Learning Theory},
  pages={47--80},
  year={2016},
  organization={PMLR}
}

@inproceedings{goos2015zero,
  title={Zero-information protocols and unambiguity in arthur-merlin communication},
  author={G{\"o}{\"o}s, Mika and Pitassi, Toniann and Watson, Thomas},
  booktitle={Proceedings of the 2015 Conference on Innovations in Theoretical Computer Science},
  pages={113--122},
  year={2015}
}

@article{kremer1999randomized,
  title={On randomized one-round communication complexity},
  author={Kremer, Ilan and Nisan, Noam and Ron, Dana},
  journal={Computational Complexity},
  volume={8},
  pages={21--49},
  year={1999},
  publisher={Springer}
}

@phdthesis{mande2018communication,
  title={Communication complexity of xor functions},
  author={Mande, Nikhil Shekhar},
  year={2018},
  school={PhD thesis, Tata Institute of Fundamental Research Mumbai}
}

@article{sauer1972density,
  title={On the density of families of sets},
  author={Sauer, Norbert},
  journal={Journal of Combinatorial Theory, Series A},
  volume={13},
  number={1},
  pages={145--147},
  year={1972},
  publisher={Elsevier}
}

@article{davis2022public,
  title={On public-coin zero-error randomized communication complexity},
  author={Davis, Ben and Hatami, Hamed and Pires, William and Tao, Ran and Usmani, Hamza},
  journal={Information Processing Letters},
  volume={178},
  pages={106293},
  year={2022},
  publisher={Elsevier}
}

@article{ambainis2003quantum,
  title={The quantum communication complexity of sampling},
  author={Ambainis, Andris and Schulman, Leonard J and Ta-Shma, Amnon and Vazirani, Umesh and Wigderson, Avi},
  journal={SIAM Journal on Computing},
  volume={32},
  number={6},
  pages={1570--1585},
  year={2003},
  publisher={SIAM}
}

@inproceedings{jain2008direct,
  title={Direct product theorems for classical communication complexity via subdistribution bounds},
  author={Jain, Rahul and Klauck, Hartmut and Nayak, Ashwin},
  booktitle={Proceedings of the fortieth annual ACM symposium on Theory of computing},
  pages={599--608},
  year={2008}
}

@inproceedings{yannakakis1988expressing,
  title={Expressing combinatorial optimization problems by linear programs},
  author={Yannakakis, Mihalis},
  booktitle={Proceedings of the twentieth annual ACM symposium on Theory of computing},
  pages={223--228},
  year={1988}
}

@article{lee2009lower,
  title={Lower bounds in communication complexity},
  author={Lee, Troy and Shraibman, Adi and others},
  journal={Foundations and Trends{\textregistered} in Theoretical Computer Science},
  volume={3},
  number={4},
  pages={263--399},
  year={2009},
  publisher={Now Publishers, Inc.}
}

@article{reyzin2020statistical,
  title={Statistical queries and statistical algorithms: Foundations and applications},
  author={Reyzin, Lev},
  journal={arXiv preprint arXiv:2004.00557},
  year={2020}
}

@inproceedings{papadimitriou1982two,
  title={Two remarks on the power of counting},
  author={Papadimitriou, Christos H and Zachos, Stathis K},
  booktitle={Theoretical Computer Science: 6th Gl-Conference Dortmund, January 5--7, 1983},
  pages={269--275},
  year={1982},
  organization={Springer}
}

@article{chor1988unbiased,
  title={Unbiased bits from sources of weak randomness and probabilistic communication complexity},
  author={Chor, Benny and Goldreich, Oded},
  journal={SIAM Journal on Computing},
  volume={17},
  number={2},
  pages={230--261},
  year={1988},
  publisher={SIAM}
}

@article{forster2002linear,
  title={A linear lower bound on the unbounded error probabilistic communication complexity},
  author={Forster, J{\"u}rgen},
  journal={Journal of Computer and System Sciences},
  volume={65},
  number={4},
  pages={612--625},
  year={2002},
  publisher={Elsevier}
}

@inproceedings{linial2007lower,
  title={Lower bounds in communication complexity based on factorization norms},
  author={Linial, Nati and Shraibman, Adi},
  booktitle={Proceedings of the thirty-ninth annual ACM symposium on Theory of computing},
  pages={699--708},
  year={2007}
}

@inproceedings{chakrabarti2001informational,
  title={Informational complexity and the direct sum problem for simultaneous message complexity},
  author={Chakrabarti, Amit and Shi, Yaoyun and Wirth, Anthony and Yao, Andrew},
  booktitle={Proceedings 42nd IEEE Symposium on Foundations of Computer Science},
  pages={270--278},
  year={2001},
  organization={IEEE}
}

@article{kerenidis2015lower,
  title={Lower bounds on information complexity via zero-communication protocols and applications},
  author={Kerenidis, Iordanis and Laplante, Sophie and Lerays, Virginie and Roland, J{\'e}r{\'e}mie and Xiao, David},
  journal={SIAM Journal on Computing},
  volume={44},
  number={5},
  pages={1550--1572},
  year={2015},
  publisher={SIAM}
}

@article{ma2011some,
  title={Some results on distributed source coding for interactive function computation},
  author={Ma, Nan and Ishwar, Prakash},
  journal={IEEE Transactions on Information Theory},
  volume={57},
  number={9},
  pages={6180--6195},
  year={2011},
  publisher={IEEE}
}

@article{orlitsky2002worst,
  title={Worst-case interactive communication. II. two messages are not optimal},
  author={Orlitsky, Alon},
  journal={IEEE Transactions on Information Theory},
  volume={37},
  number={4},
  pages={995--1005},
  year={2002},
  publisher={IEEE}
}

@article{chakrabarti2015verifiable,
  title={Verifiable stream computation and Arthur-Merlin communication},
  author={Chakrabarti, Amit and Cormode, Graham and McGregor, Andrew and Thaler, Justin and Venkatasubramanian, Suresh},
  journal={Leibniz international proceedings in informatics (LIPIcs)},
  pages={217--243},
  year={2015},
  publisher={Schloss Dagstuhl; Leibniz-Zentrum fuer Informatik}
}

@article{lokam2001spectral,
  title={Spectral methods for matrix rigidity with applications to size--depth trade-offs and communication complexity},
  author={Lokam, Satyanarayana V},
  journal={Journal of Computer and System Sciences},
  volume={63},
  number={3},
  pages={449--473},
  year={2001},
  publisher={Elsevier}
}

@inproceedings{klauck2001interaction,
  title={Interaction in quantum communication and the complexity of set disjointness},
  author={Klauck, Hartmut and Nayak, Ashwin and Ta-Shma, Amnon and Zuckerman, David},
  booktitle={Proceedings of the thirty-third annual ACM symposium on Theory of computing},
  pages={124--133},
  year={2001}
}

@inproceedings{klauck2011arthur,
  title={On Arthur Merlin games in communication complexity},
  author={Klauck, Hartmut},
  booktitle={2011 IEEE 26th Annual Conference on Computational Complexity},
  pages={189--199},
  year={2011},
  organization={IEEE}
}

@phdthesis{Nayakthesis,
     author = {Nayak, Ashwin V},
    title = {Lower bounds for quantum computation and communication},
    school = {University of California, Berkeley}, 
    year = {1999}, 
}

@inproceedings{braverman2012interactive,
  title={Interactive information complexity},
  author={Braverman, Mark},
  booktitle={Proceedings of the forty-fourth annual ACM symposium on Theory of computing},
  pages={505--524},
  year={2012}
}

\end{document}